\documentclass[a4paper,10pt]{article}
\pdfoutput=1 

\usepackage{jheppub} 

\usepackage[T1]{fontenc} 
\usepackage{macros_JHEP}
\usepackage{ytableau}
\Yboxdim{6pt}

\numberwithin{equation}{subsection}

\graphicspath{{./Figures/}}

\title{\boldmath Gauge origami and quiver W-algebras IV: Pandharipande--Thomas $qq$-characters}



\author[a]{Taro Kimura,}
\author[b]{Go Noshita}



\affiliation[a]{ Institut de Mathématiques de Bourgogne,\it Université Bourgogne Europe, CNRS, Dijon, France }
\affiliation[b]{Department of Physics, Institute of Science Tokyo, Tokyo, Japan }

\emailAdd{taro.kimura@ube.fr }
\emailAdd{gnoshita969hep@gmail.com}


\vfill

\abstract{We develop a contour integral formalism for computing the K-theoretic equivariant 3-vertex. Within the Jeffrey--Kirwan (JK) residue framework, we show that, by an appropriate choice of the reference vector, both the equivariant Donaldson--Thomas (DT) and Pandharipande--Thomas (PT) 3-vertices can be extracted from the same integrand. We analyze three distinct limits of the PT 3-vertex, recovering the unrefined topological vertex, the refined topological vertex, and the Macdonald refined topological vertex. Higher-rank extensions of PT counting and the DT/PT correspondence are also explored. From a quantum algebraic perspective, we construct an operator version of the equivariant PT 3-vertex and term it the Pandharipande--Thomas $qq$-character. We then discuss its connection with the quantum toroidal $\mathfrak{gl}_{1}$.}


\makeatletter
\gdef\@fpheader{\phantom{}}
\makeatother

\preprint{TIT/HEP-706}

\begin{document} 
\maketitle
\flushbottom

\section{Introduction and summary}
Gauge origami, introduced by Nekrasov \cite{Nekrasov:2015wsu,Nekrasov:2016qym,Nekrasov:2016ydq,Nekrasov:2017rqy,Nekrasov:2017gzb}, describes a new class of supersymmetric gauge theories engineered from intersecting D-branes in type II string theory. In this framework, multiple stacks of branes wrap distinct but intersecting complex surfaces inside a higher-dimensional ambient space, giving rise to coupled gauge sectors localized on each brane worldvolume. The low-energy effective theory admits an $\Omega$-background deformation, which localizes the path integral to a finite-dimensional contour integral over the moduli space of Bogomolny-Prasad-Sommerfield (BPS) configurations, which eventually gives the \textit{gauge origami partition functions}.

In particular, if one starts from type IIA string theory on $\mathbb{R}\times\mathbb{S}^{1}\times\mathbb{C}^{4}$, one can place D$(2p)$-branes wrapping $\mathbb{S}^{1}\times \mathcal{C}$, where $\mathcal{C}$ are complex $p$-cycles in $\mathbb{C}^{4}$. The twisted Witten index of the D0-branes wrapping $\mathbb{S}^{1}$ and probing the D$(2p)$-branes reduces to a finite dimensional contour integral and the so-called Jeffrey--Kirwan (JK) residue formalism \cite{Jeffrey1993LocalizationFN,Szenes2003ToricRA,Brion1999ArrangementOH} (see also \cite{Benini:2013xpa,Benini:2013nda,Hori:2014tda,Hwang:2014uwa,Cordova:2014oxa}) provides a way to evaluate this contour integral. Spiked instantons \cite{Nekrasov:2016gud,Nekrasov:2016qym,Rapcak:2018nsl,Rapcak:2020ueh}, tetrahedron instantons \cite{Pomoni:2021hkn,Pomoni:2023nlf,Fasola:2023ypx}, and the magnificent four \cite{Nekrasov:2017cih,Nekrasov:2018xsb,Billo:2021xzh,Kool:2025qou} are examples of this setup.
Moreover, a particularly rich structure emerges from toric Calabi--Yau fourfolds (CY4) replacing the $\mathbb{C}^4$ in the aforementioned setup\footnote{See also \cite{Bao:2025hfu,Bao:2024ygr,Kimura:2023bxy,Szabo:2023ixw,Bonelli:2020gku,Benini:2018hjy} for papers discussing the quiver formalism.} \cite{Cao:2014bca,Borisov:2017GT,Cao:2017swr,Cao:2019tnw,Cao:2019tvv,Oh:2020rnj,Monavari:2022rtf}. In this case, the gauge theory partition function encodes enumerative invariants of the underlying toric geometry, generalizing the well-known correspondence between instanton sums and Donaldson--Thomas (DT) invariants in the Calabi--Yau threefold (CY3) case \cite{Maulik:2003rzb,Maulik:2004txy,Okounkov:2003sp,Ooguri:2009ijd}. The CY4 background naturally organizes the computation into local building blocks associated with toric vertices, leading to an extended DT vertex formalism \cite{Monavari:2022rtf,Nekrasov:2023nai,Piazzalunga:2023qik,DelZotto:2021gzy}. This formalism generalizes the topological vertex \cite{Iqbal:2007ii,Aganagic:2003db,Awata:2008ed,Nekrasov:2014nea} familiar from the study of topological string theory on CY3, providing a combinatorial framework to assemble partition functions from local contributions at each toric vertex.

From a physical perspective, equivariant DT4 invariants count BPS bound states of D8-D6-D4-D2-D0 branes and the vertex formalism encodes their interaction structure in a way that is both computable and manifestly compatible with the toric symmetries. The traditional DT3 invariants are obtained by considering the geometry to be toric CY3 $\times $ $\mathbb{C}$ and wrapping D-branes inside the CY3. Physically, they count the D6-D2-D0 bound states of the CY3. The complete vertex formalism accompanies factors coming from the edges and faces of the associated toric diagram, but in this paper, we will be focusing only on the vertex contribution. Moreover, we will be only considering vertex contributions coming from D6-branes and D8-branes will be not discussed.

Combinatorially, the equivariant DT3 vertex is a generating function for plane partitions with Young diagrams as boundary conditions at the three-legs (see for example Fig.~\ref{fig:minimal-pp}). However, it is not just a generating function but we also have nontrivial weights. This already appears when the boundary conditions are all trivial. For example, the D6-D0 partition function with a D6-brane wrapping a $\mathbb{C}^{3}$ subspace of $\mathbb{C}^{4}$ takes the form as
\bea
M[\fq,q_{1,2,3,4}]=\sum_{\pi}\fq^{|\pi|}\mathcal{Z}^{\D6}[\pi],
\eea
where the sum is taken over possible plane partitions, and $\mathcal{Z}^{\D6}[\pi]$ is some function depending on the $\Omega$-deformation parameters $q_{1,2,3,4}$.  After taking a special limit, the nontrivial weight becomes $1$ and we simply obtain the MacMahon function, which we denote as $M[\fq]$. For the equivariant DT3 vertex, we schematically have
\bea
\mathsf{DT}_{\lambda\mu\nu}[\fq,q_{1,2,3,4}]=\sum_{\pi}\fq^{|\pi|}\mathcal{Z}^{\DT}_{\lambda\mu\nu}[\pi]
\eea
where the sum is taken over plane partitions with boundary Young diagrams $(\lambda,\mu,\nu)$ and the nontrivial weight $\mathcal{Z}_{\lambda\mu\nu}[\pi]$ depends on the parameters $q_{1,2,3,4}$. Under a suitable limit of the parameters, we have $\mathcal{Z}^{\DT}_{\lambda\mu\nu}[\pi]\rightarrow 1$, which reduces to the generating function of the possible plane partitions with nontrivial Young diagrams, which we denote as $\mathcal{Z}_{\lambda\mu\nu}[\fq]$. Such kind of counting problem is usually dubbed as DT counting.

In parallel, Pandharipande--Thomas (PT) theory \cite{Pandharipande:2007sq,Pandharipande:2007kc} provides an alternative stability condition for counting essentially the same class of BPS states. Such kind of counting problem is dubbed as PT counting, and the corresponding vertex contribution is called the equivariant PT3 vertex, taking the form as
\bea
\mathsf{PT}_{\lambda\mu\nu}[\fq,q_{1,2,3,4}]=\sum_{\pi}\fq^{|\pi|}\mathcal{Z}^{\PT}_{\lambda\mu\nu}[\pi]
\eea
where the sum is taken over all possible PT configurations in this case. Although the combinatorial rule is much more complicated and determining each weight $\mathcal{Z}^{\PT}_{\lambda\mu\nu}[\pi]$ is a nontrivial task compared to the DT count, the PT invariant shows better analytic behaviors and numerical efficiency. Similar to the DT invariant, under a special limit, the PT vertex reproduces the topological vertex \cite{Aganagic:2003db,Nekrasov:2014nea}.

In the CY3 context, DT and PT generating functions are related by a universal wall-crossing factor, reflecting the appearance or disappearance of ``free'' D0-brane contributions as one moves between stability chambers. This kind of correspondence is called the DT/PT correspondence \cite{Pandharipande:2007sq,Pandharipande:2007kc,Toda:2008ASPM,Toda:2010JAMS,Bridgeland:2011JAMS,Stoppa:2011BSMF,Toda2016HallAI,Kononov:2019fni,Jenne2020TheCP,Jenne:2021irh,Kuhn:2023koa}:
\bea
\mathcal{Z}_{\lambda\mu\nu}[\fq]=M[\fq]C_{\lambda\mu\nu}(\fq)
\eea
where $C_{\lambda\mu\nu}(\fq)$ is the topological vertex.\footnote{Strictly speaking, one needs to normalize some factors of the topological vertex to get the correspondence, but here we are only being schematic.} Furthermore, we also have
\bea
\mathsf{DT}_{\lambda\mu\nu}[\fq,q_{1,2,3,4}]=M[\fq,q_{1,2,3,4}]\mathsf{PT}_{\lambda\mu\nu}[\fq,q_{1,2,3,4}]
\eea
where the equality holds even with the nontrivial weight factors.

An interesting property is that the equivariant DT vertex can be computed by considering contour integral formulas and applying the JK-residue formalism. However, to the author's knowledge, derivation of the equivariant PT vertex from the JK-residue formalism is not known. In the first part of this paper, we will show that once the contour integrand is determined properly, both the equivariant DT and PT vertices are obtained from the same integrand by choosing appropriate poles. Recall that, within the JK-residue formalism, we need to specify an additional data called the reference vector $\eta$ to classify the poles. We will see that choosing the standard reference vector $\eta=(1,\ldots,1)$ gives the DT vertex, while choosing $\eta=(-1,\ldots,-1)$ gives the PT vertex. The dependence of this reference vector already implies a wall-crossing phenomenon. In this paper, we will not discuss the wall-crossing from the JK-residue viewpoint but rather focus on how to determine the equivariant PT vertex using the JK formalism. We also discuss the relations with the well-known topological vertices. An advantage of using this JK-residue formalism is that it gives a natural prescription to discuss higher rank generalizations. We discuss such higher rank versions of the PT vertex and examine the DT/PT correspondence in the corresponding setups.

Another interesting aspect of the gauge origami is the non-perturbative Dyson--Schwinger equations arising from the compatibility of creating and annihilating the BPS particles \cite{Nekrasov:2012xe,Nekrasov:2013xda,Nekrasov:2015wsu,Kim:2016qqs,Kanno:2012hk,Kanno:2013aha,Bourgine:2015szm,Bourgine:2016vsq,Kimura:2015rgi}. Such a property implies an alternative dual description based on an infinite-dimensional quantum algebra, which is called the BPS/CFT correspondence \cite{Alday:2009aq,Nekrasov:2015wsu,Awata:2009ur,Awata:2010yy}. The fundamental objects in this BPS/CFT correspondence are the $qq$-characters \cite{Kimura:2015rgi,Nekrasov:2016ydq,Kim:2016qqs}. Roughly speaking, they are operator lift-ups of BPS partition functions and they are physical observables characterizing the non-perturbative Dyson--Schwinger equations.

The $qq$-characters associated with the gauge origami system of $\mathbb{C}^{4}$ were systematically constructed in \cite{Kimura:2023bxy}. Moreover, the $qq$-characters of the equivariant DT vertices were also constructed in \cite{Kimura:2024osv}. From this point of view, it is natural to consider an operator lift-up of the equivariant PT vertices. In the second half of this paper, we will show that one can indeed introduce such kind of $qq$-characters and we call them the Pandharipande--Thomas (PT) $qq$-characters. We will see new properties specific to the PT theory compared with the DT theory. We study the $qq$-characters from various perspectives based on the contour integral formulas, the screening charges, and the quantum toroidal $\mathfrak{gl}_{1}$ \cite{Ginzburg,ding1997generalization,miki2007q,FFJMM1,Feigin2011,Feigin2011plane}.




\paragraph{Organization of the paper}

The organization and the summary of this paper is as follows. In section~\ref{sec:SQM-Wittenindex}, we review how to determine the Witten index of the system from the effective field theory of the D0-branes, which is a one-dimensional $\mathcal{N}=2$ supersymmetric quiver quantum mechanics. We also review how to evaluate the contour integral using the Jeffrey--Kirwan (JK) residue formalism. Depending on the reference vector, the integral will be evaluated with different poles, and such a difference will plays an essential role in understanding the DT and PT counting. The review of the gauge origami system and its partition functions are also provided. In particular, the tetrahedron instanton setup coming from D6-D0 bound states are mentioned here. 

Section~\ref{sec:DT-PT-counting} is devoted to introducing contour integral formulas to obtain the K-theoretic equivariant DT and PT vertices. 
In section~\ref{sec:DT3-JK}, we review how to engineer the contour integral formulas for the DT3 vertex or the DT3 partition function. From the quiver quantum mechanics point of view, the choice of the framing node is the nontrivial part and we will give two approaches to determine them. The framed quiver and superpotential are briefly reviewed in section~\ref{sec:DT-framedquiver} and the infinite product realization is given in section~\ref{sec:infinite-product-reg}. We then will see that by choosing the typical reference vector $\eta=(1,\ldots,1)$, the poles are indeed classified by the boxes that one can add to the minimal plane partition with nontrivial boundary conditions. In section~\ref{sec:PTrule-coord}, we review the box-counting rules of the PT3 counting for the boundary conditions with one, two, and three-legs, respectively. We review both the counting rules of the original paper \cite{Pandharipande:2007kc,Pandharipande:2007sq} and the recent ones proposed by Gaiotto--Rap\v{c}\'{a}k \cite{Gaiotto:2020dsq}. We will see that, for the case with the three-legs of the plane partition, the counting rules are involved compared to the one-leg and two-legs cases. Section~\ref{sec:PT3counting-etavector} is one of the main parts of this section. We will see that flipping the sign of the reference vector to $\eta=(-1,\ldots,-1)$, while keeping the contour integrand to be the same, the poles picked up from the JK-residue instead correspond to the PT configurations. We explicitly perform the JK-residue formalism for the one-leg in section~\ref{sec:PToneleg}, two-legs in section~\ref{sec:PTtwolegs}, and three-legs in section~\ref{sec:PTthreelegs}, with at most one box for each leg. For the one-leg and two-legs cases, during the JK-residue process, the poles are always of first order. On the other hand, for the three-legs case, \textit{second order poles} appear in the integrand. Such a property makes the counting rule much more complicated. We also give a different realization of the PT3 counting by changing the counting integrand and keeping the reference vector $\eta=(1,\ldots,1)$ in section~\ref{sec:PT3counting-conjugate}. The relations between the PT vertex and the conventional topological vertices are discussed in section~\ref{sec:top-vertex-JK}. We first study three particular limits of the D6-D0 partition function in section~\ref{sec:general-MacMahon}: unrefined limit, refined limit, and the Macdonald refined limit. Under these limits, we obtain the MacMahon function, the refined MacMahon function, and the Macdonald refined MacMahon function, respectively. In this sense, the D6-D0 partition function is a \textit{mother function} producing generalized MacMahon functions. Because of this property, it is natural to expect that the PT3 vertex obtained from the JK-residue also gives the known topological vertices after taking special limits. We will see that this is indeed the case and that under the three limits, we reproduce the unrefined topological vertex (section~\ref{sec:unrefined-vertex}), refined topological vertex (section~\ref{sec:refined-vertex}), and the Macdonald refined topological vertex (section~\ref{sec:Macdonald-vertex}). We explicitly consider the limits and check this correspondence for the cases with one box at most for each leg. The DT/PT correspondence is discussed in section~\ref{sec:DTPTcorrespondence}. We also give a short discussion on the wall-crossing phenomenon in terms of contour integrals at level one. Higher rank generalizations of DT and PT counting are discussed in section~\ref{sec:higherrankDTPT}. We will demonstrate three generalizations: multiple parallel D6-branes (section~\ref{sec:rankNDTPT}), tetrahedron instantons generalizations (section~\ref{sec:tetrahedronDTPT}), and the supergroup generalizations (section~\ref{sec:supergroupDTPT}). We propose new DT/PT correspondences and explicitly check them for low levels.

Section~\ref{sec:BPS/CFT-PT3qq} is the second main part of this paper. We initiate the study of the PT $qq$-character, which was conjectured in \cite{Kimura:2024osv}. We follow the strategy discussed in \cite{Kimura:2023bxy,Kimura:2024xpr,Kimura:2024osv}. We first introduce vertex operators associated with D-branes in section~\ref{sec:vertexop-def}. We then discuss the free field realizations of the two contour integral formulas of section~\ref{sec:PT3counting-etavector} and section~\ref{sec:PT3counting-conjugate} in section~\ref{sec:freefield-contourintegral}. The contour integral formula of section~\ref{sec:PT3counting-etavector} has the same free field realization with the DT $qq$-characters in \cite{Kimura:2024osv}, while the contour integral formula of section~\ref{sec:PT3counting-conjugate} has a different free field realization. Such free field realizations lead to operator lift ups of the PT3 vertex, which is the PT3 $qq$-characters. For the three-legs case, the existence of the second-order pole introduces derivatives of the vertex operators, which is a new feature of this $qq$-character. In section~\ref{sec:PT3qq-screeningcharge}, we discuss the algebraic properties of the PT $qq$-characters. In particular, we study the commutativity with the screening charges. Since the integrand in section~\ref{sec:PT3counting-etavector} is the same with the one for the DT counting, the conventional screening charge does not give the PT $qq$-character but instead the DT $qq$-character. Thus, we need to consider a new type of screening charge, which will be discussed in this section. In other words, keeping the same highest weight, we can produce both the DT and PT $qq$-characters by using different screening charges. For the one-leg and two-legs cases, we explicitly check the commutativity, whereas for the three-legs case, the commutativity is difficult to confirm due to the second-order poles. We instead use the fusion procedure and collision limit to obtain the three-legs PT $qq$-character. Finally, in section~\ref{sec:QTgl1}, we discuss the relation with the PT module of the quantum toroidal $\mathfrak{gl}_1$, which was initiated in \cite{Gaiotto:2020dsq}.

Supplementary materials are given in Appendix. In Appendix~\ref{app:top-vertex-symm-funct}, we summarize formulas for the topological vertices and symmetric functions. Explicit examples of the PT3 vertices are given in Appendix~\ref{app:sec-PT3vertex-examples}. The $qq$-character of $A_{1}$ theory and its quantum algebraic properties are reviewed in Appendix~\ref{app-sec:A1qqcharacter-collisionlimit}.


\section{Supersymmetric quantum mechanics and Witten index}\label{sec:SQM-Wittenindex}
In this section, we review the supersymmetric quantum mechanics (SQM) and Witten index of the gauge origami system. We review the formulas for the 1d $\mathcal{N}=2, 4$ SQM in section~\ref{sec:SQM}. The Jeffrey--Kirwan (JK) residue formalism to evaluate the contour integral formulas is reviewed in section~\ref{sec:JK-residue}. In section~\ref{sec:gaugeorigami-index}, we apply it to the gauge origami system.

\subsection{Supersymmetric quantum mechanics}\label{sec:SQM}
Let us review the formulas for the supersymmetric quantum mechanics and the Witten index we will use in this paper. The Witten index of the D$(2p)$--D0 branes system in type IIA string theory is obtained by performing the supersymmetric localization of the effective $\mathcal{N}=4$ or $\mathcal{N}=2$ supersymmetric quantum mechanics (SQM) of the D0-branes \cite{Benini:2013xpa,Benini:2013nda,Hori:2014tda,Ohta:2014ria,Cordova:2014oxa}. Let us first review the fundamental multiplets appearing in the supersymmetric quantum mechanics.
\paragraph{$\mathcal{N}=4$ SQM}
The 1d $\mathcal{N}=4$ SQM is obtained from the dimensional reduction of the 4d $\mathcal{N}=1$ gauge theory. In this paper, the gauge group will always be unitary gauge groups $U(N)$. We have two basic multiplets: vector multiplets $V$ and chiral multiplets $\Phi$. The superpotential $W$ is a holomorphic function of the chiral superfields:
\bea
\delta L=\int d^{2}\theta  W(\Phi)+\text{c.c.}
\eea
Integrating out the auxiliary fields, the $F$-term condition is
\bea
\frac{\partial W(\phi)}{\partial \phi}=0.
\eea

\paragraph{$\mathcal{N}=2$ SQM}
The 1d $\mathcal{N}=2$ SQM is obtained from the dimensional reduction of the 2d $\mathcal{N}=(0,2)$ gauge theory. We have three basic multiplets: vector multiplet $V$, chiral multiplet $\Phi$, and Fermi multiplet $\Lambda$. We additionally have superpotential terms which are usually called the $J$-term and the $E$-term. The $E$-term is a holomorphic function of the chiral superfields included in the Fermi superfield as $\overline{\mathcal{D}}\Lambda=E(\Phi)$, where $\overline{\mathcal{D}}$ is the gauge covariant superderivative. Similarly, the $J$-term is a holomorphic function of the chiral fields and the corresponding interaction term is added to the Lagrangian as
\bea
\delta L=\int d\theta \,\,\Lambda\, J(\Phi)|_{\bar{\theta}=0}+\text{c.c}.
\eea
For the system to have two supersymmetries, we need the traceless condition
\bea\label{eq:2susytraceless}
\Tr(J(\Phi)\cdot E(\Phi))=0.
\eea
The vacuum moduli space comes from
\bea
E_{\alpha}(\phi)=J_{\alpha}(\phi)=0.
\eea
which are usually called the $E,J$-term conditions.


The 1d $\mathcal{N}=4$ multiplets can be decomposed into the 1d $\mathcal{N}=2$ multiplets as follows:
\bea\label{eq:4susy2susycorrespondence}
\renewcommand{\arraystretch}{1.05}
    \begin{tabular}{|c|c|}\hline
     $\mathcal{N}=4$    & $\mathcal{N}=2$  \\\hline
      vector $V_{\mathcal{N}=4}$  &  vector $V_{\mathcal{N}=2}$ \\
        & chiral $\Sigma_{\mathcal{N}=2}$ \\ \hline
        chiral $\Phi_{\mathcal{N}=4}$  &  chiral $\Phi_{\mathcal{N}=2}$\\
         & Fermi $\Lambda_{\mathcal{N}=2}$\\
          & $E$-term $E=\Sigma_{\mathcal{N}=2}\cdot \Phi_{\mathcal{N}=2}$\\ \hline
          superpotential $W$  & $J$-term  $J=\partial W$\\\hline
     \end{tabular}
\eea

\paragraph{Quiver diagram}The quantum mechanics have a quiver theory generalization. Let $\overbar{Q}=(\overbar{Q}_{0},\overbar{Q}_{1})$ be a quiver where $\overbar{Q}_{0}$ is a set of nodes and $\overbar{Q}_{1}$ is a set of edges, the information of the gauge symmetry and matter fields are read as follows.
\begin{itemize}[topsep=0pt, partopsep=0pt, itemsep=0pt]
    \item For each node $a$, a group $\U(N_{a})$ is assigned.
    \item Edges connect the nodes and there are two sets of oriented edges colored in black and red. 
    \item A black oriented edge from a source node $a$ to a target node $b$ represents a chiral superfield $\Phi_{ba}=\Phi_{b\leftarrow a}$ transforming in the bifundamental representation $(\overline{N}_{a},N_{b})$ of $\U(N_{a})\times \U(N_{b})$.
    \item A red oriented edge connecting two nodes $a,b$ represents\footnote{Usually, the Fermi superfields are denoted as \textit{unoriented} red edges due to the equivalence of the $J$-term and $E$-term. In this paper, we assume that the $E$-term and $J$-term are determined and use the $E$-terms to determine the orientations.} a Fermi superfield $\Lambda_{ba}=\Lambda_{b\leftarrow a}$ transforming in the bifundamental representation $(\overline{N}_{a},N_{b})$ of $\U(N_{a})\times \U(N_{b})$.
\end{itemize}
We note that as in usual quiver gauge theories, circle nodes give gauge symmetries while square nodes give the flavor symmetries. Moreover, fields corresponding to self-loop arrows transform in the adjoint representation. 

We can further consider quiver gauge theories with four supersymmetries. Let $Q=(Q_{0},Q_{1})$ be a quiver, where $Q_{0}$ is a set of nodes and $Q_{1}$ is a set of edges. The matter fields are described as follows.
\begin{itemize}[topsep=0pt, partopsep=0pt, itemsep=0pt]
    \item For each node $a$, a group $\U(N_{a})$ is assigned.
    \item There is only one type of edges connecting the nodes. 
    \item An oriented edge from a source node $a$ to a target node $b$ represents a $\mathcal{N}=4$ chiral superfield $\Phi_{ba}=\Phi_{b\leftarrow a}$ transforming in the bifundamental representation $(\overline{N}_{a},N_{b})$ of $\U(N_{a})\times \U(N_{b})$.
\end{itemize}
Similar to the $\mathcal{N}=2$ quiver gauge theory, circle nodes give gauge symmetries while square nodes give flavor symmetries.
The $\mathcal{N}=4\rightarrow \mathcal{N}=2$ decomposition \eqref{eq:4susy2susycorrespondence} is then described as
\bea
\begin{tikzpicture}[decoration={markings,mark=at position \arrowHeadPosition with {\arrow{latex}}}]
 \tikzset{
        box/.style={draw, minimum width=0.6cm, minimum height=0.6cm, text centered,thick},
        ->-/.style={decoration={
        markings,mark=at position #1 with {\arrow[scale=1.5]{>}}},postaction={decorate},line width=0.5mm},
        -<-/.style={decoration={
        markings,
        mark=at position #1 with {\arrow[scale=1.5]{<}}},postaction={decorate},line width=0.5mm}    
    }
\begin{scope}{xshift=0cm}
    \draw[fill=black!20!white,thick](0,0) circle(0.3cm);
    \node[left]at (-0.3,0){$V_{\mathcal{N}=4}$};
    \node[scale=2.0] at (1.4,0){$\rightsquigarrow$};
\end{scope}
\begin{scope}[xshift=3cm]
   
    \draw[fill=black!20!white,thick](0,0) circle(0.3cm);
    \node[right] at (0.3,0.0){$V_{\mathcal{N}=2}$};
    \node[right] at (0.3,1.0){$\Sigma_{\mathcal{N}=2}$};
    \chiralarc[postaction={decorate},thick](0,0.5)(-45:225:0.3:0.8)
    
\end{scope}

\begin{scope}{}
\draw[fill=black!20!white,thick](0.3,-2) circle(0.3cm);
\draw[fill=black!20!white,thick](-1.7,-2) circle(0.3cm);
\draw[postaction={decorate}, thick](-1.4,-2)--(0,-2);
\node[above] at (-0.6,-1.8){$\Phi_{\mathcal{N}=4}$};
    \node[scale=2.0] at (1.4,-2){$\rightsquigarrow$};
\end{scope}

\begin{scope}{xshift=3cm}
\draw[fill=black!20!white,thick](4.7,-2) circle(0.3cm);
\draw[fill=black!20!white,thick](2.7,-2) circle(0.3cm);
\node[above] at (3.8,-1.8){$\Phi_{\mathcal{N}=2}$};
\node[below] at (3.8,-2.2){$\Lambda_{\mathcal{N}=2}$};
\draw[postaction={decorate}, thick](3,-1.9)--(4.4,-1.9);
\draw[postaction={decorate}, thick, red](3,-2.1)--(4.4,-2.1);
\end{scope}
\end{tikzpicture}
\eea


\paragraph{Witten index and contour integral formula}
Given the $\mathcal{N}=2$ supersymmetric quantum mechanics, the Witten index is given by the following contour integral formula
\bea\label{eq:Wittenindex-integral}
\mathcal{Z}(y)=\frac{1}{|W_{G}|}\oint _{\text{JK}}Z_{\text{V}}\prod_{i}Z_{\Phi_{i}}\prod_{\alpha}Z_{\Lambda_{\alpha}}
\eea
where
\bea\label{eq:2susylocalization}
Z_{\text{V}}&=\prod_{I}\frac{d\phi_{I}}{2\pi i}\prod_{\alpha\in G}\sh\left( \alpha\cdot \phi\right),\quad Z_{\Phi_{i}}=\prod_{\rho\in V_{\text{chiral}}}\frac{1}{\sh(\rho\cdot \mu)},\quad Z_{\Lambda_{\alpha}}=\prod_{\rho\in V_{\text{Fermi}}}\sh(\rho\cdot \mu),
\eea
$G$ is the set of roots associated with the gauge group, $V_{\text{chiral}}, V_{\text{Fermi}}$ are the sets of the corresponding weights and $|W_{G}|$ is the order of the Weyl group of $G$. We also denoted
\bea
\sh(x)=2\sinh\left(\frac{x}{2}\right)=e^{x/2}-e^{-x/2}=[e^{x}]
\eea
and for later use, we also introduce 
\bea\label{eq:cosh-def}
\ch(x)=2\cosh\left(\frac{x}{2}\right)=e^{x/2}+e^{-x/2}.
\eea
The parameter $\phi$ parametrizes the Cartan subalgebra of $G$ and $\mu$ contains $\phi$ and the fugacities of the flavor nodes. Note that if the system has $\mathcal{N}=4$ supersymmetries, we can rewrite the index in a way that the number of numerators and the denominators are the same. 

In particular, we are interested in the Witten index of 1d $\mathcal{N}=2$ quiver gauge theories with unitary groups. Let $\overline{Q}=(\overline{Q}_{0},\overline{Q}_{1})$ be the 2 SUSY quiver diagram and $\overline{Q}_{0},\overline{Q}_{1}$ denoting the set of nodes and edges. For use in the gauge origami setup, there is a flavor $\U(1)^{3}$ symmetry and we turn on the fugacities $q_{i}=e^{\epsilon_{i}}$. We denote the $\U(1)^{3}$ charges of the chiral, Fermi superfields as $q(\Phi)=e^{\epsilon(\Phi)}$, $q(\Lambda)=e^{\epsilon(\Lambda)}$. From the D$(2p)$-branes viewpoint, these parameters play the roles of the $\Omega$-deformation parameters.


The basic factors appearing in the Witten index of an $\mathcal{N}=2$ quiver quantum mechanics are then given as follows.
\begin{itemize}[topsep=0pt, partopsep=0pt, itemsep=0pt]
    \item For each gauge group (circle node) of the theory, we have 
    \bea
\adjustbox{valign=c}{\begin{tikzpicture}[decoration={markings,mark=at position \arrowHeadPosition with {\arrow{latex}}}]
 \tikzset{
        box/.style={draw, minimum width=0.6cm, minimum height=0.6cm, text centered,thick},
        ->-/.style={decoration={
        markings,mark=at position #1 with {\arrow[scale=1.5]{>}}},postaction={decorate},line width=0.5mm},
        -<-/.style={decoration={
        markings,
        mark=at position #1 with {\arrow[scale=1.5]{<}}},postaction={decorate},line width=0.5mm}    
    }
\begin{scope}{xshift=0cm}
    \draw[fill=black!10!white,thick](0,0) circle(0.4cm);
    \node at (0,0){$N_{a}$};
\end{scope}
\end{tikzpicture}}\quad {\rightsquigarrow} \quad \prod_{I=1}^{N_{a}}\frac{d\phi_{I}^{(a)}}{2\pi i } \prod_{I\neq J}\sh(\phi_{I}^{(a)}-\phi_{J}^{(a)}).
    \eea
\item The chiral superfield corresponding to an arrow from $\U(N_{a})$ to $\U(N_{b})$ gives the contribution
    \bea
   \adjustbox{valign=c}{ \begin{tikzpicture}[decoration={markings,mark=at position \arrowHeadPosition with {\arrow{latex}}}]
 \tikzset{
        box/.style={draw, minimum width=0.6cm, minimum height=0.6cm, text centered,thick},
        ->-/.style={decoration={
        markings,mark=at position #1 with {\arrow[scale=1.5]{>}}},postaction={decorate},line width=0.5mm},
        -<-/.style={decoration={
        markings,
        mark=at position #1 with {\arrow[scale=1.5]{<}}},postaction={decorate},line width=0.5mm}    
    }

\begin{scope}{}
\draw[fill=black!10!white,thick](4.7,-2) circle(0.4cm);
\node at (4.7,-2){$N_{b}$};
\draw[fill=black!10!white,thick](2.1,-2) circle(0.4cm);
\node at (2.1,-2){$N_{a}$};
\draw[postaction={decorate}, thick](2.5,-2)--(4.3,-2);
\node[above] at (3.4,-2){$q(\Phi_{a\rightarrow b})$};
\end{scope}
\end{tikzpicture}}\quad \rightsquigarrow\quad \prod_{I=1}^{N_{a}}\prod_{J=1}^{N_{b}}\frac{1}{\sh(\phi_{J}^{(b)}-\phi_{I}^{(a)}-\epsilon(\Phi_{a\rightarrow b}))}.
    \eea
    For chiral superfields in the adjoint representation, we set $a=b$. For square nodes, we simply replace $\{\phi_{I}^{(a)}\}_{I=1}^{N_{a}}$ with the flavor fugacities of the associated flavor symmetry, which we denote $\{\mathfrak{a}_{\alpha}^{(a)}\}_{\alpha=1}^{N_{a}}$, and do not take the integral over them. 
    \item The Fermi superfield corresponding to an arrow from $\U(N_{a})$ to $\U(N_{b})$ gives the contribution 
    \bea
   \adjustbox{valign=c}{ \begin{tikzpicture}[decoration={markings,mark=at position \arrowHeadPosition with {\arrow{latex}}}]
 \tikzset{
        box/.style={draw, minimum width=0.6cm, minimum height=0.6cm, text centered,thick},
        ->-/.style={decoration={
        markings,mark=at position #1 with {\arrow[scale=1.5]{>}}},postaction={decorate},line width=0.5mm},
        -<-/.style={decoration={
        markings,
        mark=at position #1 with {\arrow[scale=1.5]{<}}},postaction={decorate},line width=0.5mm}    
    }

\begin{scope}{}
\draw[fill=black!10!white,thick](4.7,-2) circle(0.4cm);
\node at (4.7,-2){$N_{b}$};
\draw[fill=black!10!white,thick](2.1,-2) circle(0.4cm);
\node at (2.1,-2){$N_{a}$};
\draw[red, postaction={decorate}, thick](2.5,-2)--(4.3,-2);
\node[above] at (3.4,-2){\textcolor{red}{$q(\Lambda_{a\rightarrow b})$}};
\end{scope}
\end{tikzpicture}}\quad \rightsquigarrow\quad \prod_{I=1}^{N_{a}}\prod_{J=1}^{N_{b}}\sh(\phi_{J}^{(b)}-\phi_{I}^{(a)}-\epsilon(\Lambda_{a\rightarrow b})).
    \eea
    Similarly, we set $a=b$ for Fermi superfields in the adjoint representation, and for square nodes we simply replace $\{\phi_{I}^{(a)}\}_{I=1}^{N_{a}}$ with $\{\mathfrak{a}_{\alpha}^{(a)}\}_{\alpha=1}^{N_{a}}$ and do not perform the contour integral. We note that since the $J$-term and the $E$-term are conjugate with each other, the JK-residue contribution differs only up to extra $-1$ factors. However, such extra sign factors can be included in the redefinition of topological terms in instanton computations and thus they are non-essential.
\end{itemize}


\subsection{Jeffrey--Kirwan (JK) residue formalism}\label{sec:JK-residue}
Given the Witten index \eqref{eq:Wittenindex-integral}, we need to evaluate the contour integral. The correct way to evaluate it is known to be the Jeffrey--Kirwan residue formalism. In this section, we review how to perform it. See \cite{Jeffrey1993LocalizationFN,Szenes2003ToricRA,Brion1999ArrangementOH,Benini:2013xpa,Benini:2013nda,Hori:2014tda,Hwang:2014uwa} for references. We basically follow the Appendices of \cite{Nawata:2023aoq,Kim:2024vci}.

Let us consider the following contour integral
\bea
\mathcal{Z}=\oint_{\JK}\prod_{i=1}^{k}\frac{d\phi_i}{2\pi i }\mathcal{Z}(\phi)
\eea
where $\phi_i$ takes values in the Cartan subalgebra $\mathfrak{h}$ of the gauge group. The integrand takes form of ratios in hyperbolic sine functions and the poles come from the zeros of them in the denominator given by the hyperplanes in $\mathfrak{h}$ characterized by the charge vector $Q_{i}\in\mathfrak{h}^{\ast}$:
\bea
H_i : \ Q_{i}(\phi)+f_{i}(\fra,\eps_{a})=0,\quad i=1,\ldots n
\eea
where $f_{i}$ is some function dependent on the parameters of the theory. 
The integrand is singular at $\mathcal{M}_{\text{sing}}=\cup_i H_{i}$ and we denote $\mathcal{M}^{\ast}_{\text{sing}}$ to be the set of isolated points where $n\geq k$ linearly independent singular hyperplanes meet. When the number of hyperplanes meeting a point $\phi_{\ast}\in\mathcal{M}^{\ast}_{\text{sing}}$ is exactly $k$, the intersection is called \textit{non-degenerate}, while when $n\geq k$ hyperplanes meet it, the intersection is called \textit{degenerate}.

The JK-prescription picks up poles in $\mathcal{M}^{\ast}_{\text{sing}}$ with an additional ingredient $\eta\in\mathfrak{h}^{\ast}$, which is called the reference vector. To explicitly show the reference vector dependence, we denote the contour integral as
\bea
\oint_{\eta}\prod_{i=1}^{k}\frac{d\phi_i}{2\pi i }\mathcal{Z}(\phi).
\eea
If $\phi_{\ast}$ is a non-degenerate pole associated with the charge vectors $Q_{1},\ldots,Q_{k}$, the JK-residue is defined as the following residue:
\bea
\underset{\phi=\phi_{\ast}}{\text{JK-}\text{Res}}(Q(\phi_{\ast}),\eta)\mathcal{Z}(\phi)=\delta(Q,\eta)\frac{1}{|\det Q|}\underset{\delta_{k}=0}{\Res}\cdots \underset{\delta_{1}=0}{\Res}\left.\mathcal{Z}(\phi)\right|_{Q_{i}(\phi)+f_{i}(\fra,\eps_{a})=\delta_{i}}
\eea
where
\bea
\delta(Q,\eta)=\begin{dcases}
    1,\,\,\eta\in \text{Cone}(Q_{1},\ldots,Q_{k})\\
    0,\,\, \text{otherwise}
\end{dcases},\quad 
\text{Cone}(Q_{1},\ldots,Q_{k})=\left\{\sum_{i=1}^{k}\lambda_{i}Q_{i}=\eta\mid \lambda_{i}>0\right\}.
\eea

For a degenerate pole, we can associate a set of charge vectors as $Q_{\ast}=\{Q_{1},\ldots,Q_{n}\}$ with $n>k$. We then choose a sequence of $k$ linearly independent charge vectors $Q_{j_{1}},\ldots Q_{j_{k}}$ from $Q_{\ast}$ and construct a flag $F$:
\bea
\{0\}\subset F_{1}\subset \cdots \subset F_{k}=\mathbb{R}^{k},\quad F_{i}=\text{span}\{Q_{j_{1}},\ldots,Q_{j_{k}}\}
\eea
When different sequences give the same flag, we only consider one of them. The sequence $Q_{j_{1}},\ldots,Q_{j_{k}}$ is called the basis $\mathcal{B}(F,Q_{\ast})$ of the flag. The basis is not unique generally, but for a given flag we choose one of them. From the flag $F$ and its basis $\mathcal{B}(F,Q_{\ast})$, we construct a sequence of vectors defined as
\bea
\kappa(F,Q_{\ast})=(\kappa_1,\ldots,\kappa_{k}),\quad \kappa_{i}=\sum_{\substack{Q\in Q_{\ast}\\ Q\in F_{i}}} Q.
\eea
The JK residue is then given as
\bea\label{eq:JKresidue-iterative}
\underset{\phi=\phi_{\ast}}{\text{JK-}\text{Res}}(Q(\phi_{\ast}),\eta)\mathcal{Z}(\phi)=\sum_{F}\delta(F,\eta)\frac{\text{sign}\det(\kappa(F,Q_{\ast}))}{\det \mathcal{B}(F,Q_{\ast})}\underset{\delta_{k}=0}{\Res}\cdots\underset{\delta_{1}=0}{\Res}\left.\mathcal{Z}(\phi)\right|_{Q_{i}(\phi)+f_{i}(\fra,\eps_a)=\delta_{i}}
\eea
where $\delta(F,\eta)$ is $1$ when the reference vector $\eta$ is included in the closed cone spanned by $\kappa(F,Q_{\ast})$ and $0$ otherwise and the residue is performed in the order $\delta_{1},\ldots,\delta_{k}$. Note also that the sum is taken over all possible flags. One can also confirm that the iterative residue given here reduces back to the definition for the non-degenerate poles discussed above. 

Therefore, for a generic reference vector, the contour integral is evaluated as
\bea
\oint _{\eta}\prod_{i=1}^{k}\frac{d\phi_i}{2\pi i}\mathcal{Z}(\phi)=\sum_{\phi_{\ast}\in \mathcal{M}_{\text{sing}}^{\ast}}\underset{\phi=\phi_{\ast}}{\text{JK-Res}}(Q(\phi_{\ast}),\eta)\mathcal{Z}(\phi).
\eea
In this paper, we will mainly use two choices of reference vectors, which we define as
\bea
\eta_{0}=(1,\ldots,1),\quad \tilde{\eta}_{0}=(-1,\ldots,-1).
\eea

Let us explicitly perform the JK-residue formalism for some examples.
\paragraph{Example 1}
We have
\bea
\mathcal{Z}=\oint_{\eta_{0}} \frac{d\phi}{2\pi i}\frac{1}{\sh(\phi-\fra)}=\underset{\phi=\fra}{\Res}\frac{1}{\sh(\phi-\fra)}=+1
\eea
where the middle term is evaluated by the residue.
\paragraph{Example 2}
Similarly, for 
\bea\label{eq:JKresidue-sign}
\mathcal{Z}=\oint_{\tilde{\eta}_{0}} \frac{d\phi}{2\pi i}\frac{1}{\sh(-\phi+\fra)}&=\underset{\delta=0}{\Res}\frac{1}{\sh(\delta)}=-\underset{\phi=\fra}{\Res}\frac{1}{\sh(-\phi+\fra)}=+1
\eea
Note that when performing the residue using the variable $\phi$, an extra sign appears.

\paragraph{Example 3}
In this paper, we also need to deal with cases when higher order poles appear:
\bea
\mathcal{Z}=\oint_{\eta_{0}} \frac{d\phi}{2\pi i}\frac{\sh(\phi-\mathfrak{b})}{\sh(\phi-\fra)^{2}}=\,\underset{\phi=\fra}{\Res}\frac{\sh(\phi-\mathfrak{b})}{\sh(\phi-\fra)^{2}}=\left.\frac{\partial}{\partial \phi}(\sh(\phi-\mathfrak{b}))\right|_{\phi=\fra}=\frac{1}{2}\ch(\mathfrak{b}-\fra)
\eea

\paragraph{Example 4}Let us also introduce an example with degenerate poles: 
\bea
\mathcal{Z}=\oint_{\eta_{0}} \frac{d\phi_{1}}{2\pi i} \frac{d\phi_{2}}{2\pi i}\frac{f(\phi_{1},\phi_{2})}{\sh(\phi_{1}-\fra)\sh(\phi_{2}-\fra-\eps)\sh(-\phi_2+\phi_1+\eps)},
\eea
where $\eps$ is some parameter. We are also assuming that $f(\phi_1,\phi_2)$ further contains no poles nor zeros at $(\phi_{1},\phi_{2})=(\fra,\fra+\eps)$. Both of the charge vectors give the pole $(\phi_{1},\phi_{2})=(\fra,\fra+\eps)$. The charge vectors are
\bea
Q_{\ast}=\left\{\vectxy{1}{0},\vectxy{0}{1},\vectxy{1}{-1}\right\}
\eea
and it is a \textit{degenerate pole}. We can choose three sets of linear independent charge vectors
\bea
\{(1,0),(0,1)\},\quad \{(1,0),(1,-1)\},\quad \{(0,1),(1,-1)\}.
\eea

For the first set of charge vectors, we can construct two flags
\bea
\{0\}\subset \text{span}\{(1,0)\} \subset \text{span}\{(1,0),(0,1)\}= \mathbb{R}^{2},\\
\{0\}\subset\text{span}\{(0,1)\} \subset \text{span}\{(1,0),(0,1)\}= \mathbb{R}^{2},
\eea
where the basis is the ordered sequence $\{(1,0),(0,1)\}$ and $\{(0,1),(1,0)\}$, respectively. The corresponding $\kappa$ vectors for each basis is
\bea
\kappa_{F}=\left(\vectxy{1}{0},\vectxy{2}{0}\right),\quad \kappa_F=\left(\vectxy{0}{1},\vectxy{2}{0}\right).
\eea
Note that since the $\kappa$ vectors are the sum of the charge vectors contained in the span of the flag, the extra charge vector not used in the construction of the flag also appears:
\bea
\vectxy{1}{0}+\vectxy{0}{1}+\vectxy{1}{-1}=\vectxy{2}{0}.
\eea

For the second choice of linear independent charge vectors, we have
\bea
\kappa_{F}=\left(\vectxy{1}{0},\vectxy{2}{0}\right),\quad \kappa_{F}=\left(\vectxy{1}{-1},\vectxy{2}{0}\right).
\eea

For the third choice, we have
\bea
\kappa_{F}=\left(\vectxy{0}{1},\vectxy{2}{0}\right),\quad \kappa_{F}=\left(\vectxy{1}{-1},\vectxy{2}{0}\right).
\eea

Only when $\kappa_{F}=((0,1),(2,0))$, the reference vector is contained in the cone. Thus, the allowed flags are
\bea
\text{span}\{(0,1)\} \subset \text{span}\{(1,0),(0,1)\}=\text{span}\{(1,-1),(0,1)\}.
\eea
Since that if there are identical representatives for a flag, we simply choose one of them, the residue is performed as 
\bea
\underset{\phi_{1}=\fra}{\Res}\underset{\phi_{2}=\fra+\eps}{\Res}\frac{f(\phi_{1},\phi_{2})}{\sh(\phi_{1}-\fra)\sh(\phi_{2}-\fra-\eps)\sh(-\phi_2+\phi_1+\eps)}.
\eea
Namely, to take the residue we always have to take the residue at the pole $\phi_{2}=\fra+\eps$ first and then take the residue at the pole $\phi_{1}=\fra$. The opposite way to take the iterative residue is forbidden from the above construction. In other words, the flag construction give a natural ordering in how the poles are evaluated.

\paragraph{Example 5}Assume that the integrand $\mathcal{Z}(\phi)$ have poles coming from the hyperplanes whose corresponding charge vectors are $\{\pm e_{I}\}_{I=1}^{k},$ and $\{e_{I}-e_{J}\}_{I,J=1}^{k}$, where $e_{i}$ is the unit vector pointing the $i$-th direction in $\mathbb{R}^{k}$.

Let us choose the reference vector to be $\eta=\eta_{0}=(1,\ldots,1)$ and classify the possible choices of the charge vectors such that $\eta\in \text{Cone}(Q_{1},\ldots,Q_{k})$. Let us focus on the case when the pole is non-degenerate. Generalization to the case when we have degenerate poles can be done similarly but we omit it. The charge vectors then obey the following properties. See \cite[section 3.1]{Hwang:2014uwa} for a derivation.
\begin{itemize}
    \item Elements of $\{-e_{i}\}$ are not contained in the charge vectors.
    \item The charge vectors form an oriented tree structure, which is determined as follows. 
    \begin{itemize}
    \item We have $k$ vertices labeled by $\{e_{I}\}_{I=1}^{k}$ and arrows connecting them. For each $\{e_{I}\}_{I=1}^{k}$, there is only one vertex.
    \item The root vertex labeled by $e_{J}$ correspond to the charge vector $+e_{J}$. 
    \item The tree grows by adding vertices $e_{K}$ to $e_{J}$ with an arrow $J\rightarrow K$. For such kind of arrow, we associate the charge vector $e_{K}-e_{J}$. 
    
    \end{itemize}
\end{itemize}

Similarly, if we choose the reference vector to be $\tilde{\eta}_0$, the charge vectors associated with the vertices and the arrows of the tree are the ones whose signs are flipped compared to the $\eta_{0}$ case. Namely, for a root vertex labeled by $e_{J}$, the charge vector is $-e_{J}$, and for an arrow $J\rightarrow K$, the charge vector is $e_{J}-e_{K}$.

The iterative residue is performed by taking the residue for each hyperplane along the oriented tree. Let the reference vector to be $\eta=\eta_0$ and an example for the oriented tree and the corresponding charge vectors is 
\bea
\adjustbox{valign=c}{\includegraphics[width=4cm]{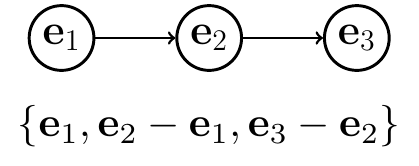}}
\eea
For this case, the iterative residue is taken in the order $\underset{\phi_{3}}{\Res}\,\underset{\phi_{2}}{\Res}\,\underset{\phi_{1}}{\Res}$. 

This can be also seen using the flag construction. We have six possible ways to order the charge vectors and define the corresponding $\kappa$ matrix. Only when the charge vectors are ordered in the way the oriented tree is formed, the reference vector is contained in the cone obtained from the $\kappa$-matrix. For example, for the order $\{e_{1},e_{2}-e_{1},e_{3}-e_{2}\}$, we have
\bea
\kappa_{F}=\left(\vectxyz{1}{0}{0},\vectxyz{0}{1}{0},\vectxyz{0}{0}{1}\right)
\eea
while for the order $\{e_{2}-e_{1},e_{1},e_{3}-e_{2}\}$, we have
\bea
\kappa_F=\left(\vectxyz{-1}{1}{0},\vectxyz{0}{1}{0},\vectxyz{0}{0}{1}\right)
\eea
and only the first one includes the $\eta_0$.

Generalization to cases when we have degenerate poles can be done similarly but we omit it.

\subsection{Gauge Origami and instanton partition functions}\label{sec:gaugeorigami-index}
\paragraph{Notations}
Let us review the gauge origami setup and the notations we use in this paper. The ten-dimensional space-time is $\mathbb{C}^{4}\times\mathbb{R}^{1,1}$. We denote the real coordinates as $\{x^{i}\}_{i=0}^{9}$ and the four complex coordinates as $\{z_{a}\}_{a=1}^{4}$ with $z_{a}=x^{2a-1}+ix^{2a}$. There are four complex one-planes and three-planes which we write as $\mathbb{C}_{a},\,\, \mathbb{C}_{\bar{a}}^{3}$ for $a\in\four,\bar{a}\in\four^{\vee}$, where
\bea
 \four=\{1,2,3,4\},\quad \four^{\vee}=\{123,124,134,234\}.
\eea
In later sections, we frequently use the quadrality symmetry between the four variables $1\leftrightarrow 2 \leftrightarrow 3 \leftrightarrow 4$. The complement $\bar{A}$ of $A\in\six=\{12,13,14,23,24,34\}$ is denoted for example as $A=12,\,\, \bar{A}=34$. Note that we have $\four\simeq \four^{\vee}$ under the map $a\in\four\leftrightarrow\bar{a}\in\four^{\vee} $. 

\paragraph{Tetrahedron instanton}
The tetrahedron instanton setup \cite{Pomoni:2021hkn,Pomoni:2023nlf,Fasola:2023ypx} is a gauge origami setup where the D6-branes wrap $\mathbb{C}^{3}_{\bar{a}}$ for $a\in\four$. The brane configuration is summarized as
\bea\label{eq:2Btetrahedroninstanton}
\renewcommand{\arraystretch}{1.05}
\begin{tabular}{|c|c|c|c|c|c|c|c|c|c|c|}
\hline
& \multicolumn{2}{c|}{$\mathbb{C}_{1}$} & \multicolumn{2}{c|}{$\mathbb{C}_{2}$} & \multicolumn{2}{c|}{$\mathbb{C}_{3}$} & \multicolumn{2}{c|}{$\mathbb{C}_{4}$} & \multicolumn{2}{c|}{$\mathbb{R}^{1,1}$} \\
\cline{2-11}  & 1 & 2 & 3 & 4& 5 & 6 & 7 & 8 & 9& 0\\
\hline D0& $\bullet$ & $\bullet$  & $\bullet$  & $\bullet$  & $\bullet$  & $\bullet$   & $\bullet$  & $\bullet$  & $\bullet$   & $-$\\
\hline
$\D6_{123} $& $-$ & $-$ & $-$ & $-$ & $-$ & $-$ & $\bullet$ & $\bullet$ & $\bullet$ & $-$ \\
\hline
$\D6_{124} $& $-$ & $-$& $-$ & $-$  & $\bullet$ & $\bullet$ & $-$ & $-$ & $\bullet$ & $-$ \\
\hline
$\D6_{134} $& $-$ & $-$  & $\bullet$ & $\bullet$ & $-$ & $-$& $-$ & $-$ & $\bullet$ & $-$ \\
\hline $\D6_{234}$ & $\bullet$ & $\bullet$ & $-$ & $-$ & $-$ & $-$ & $-$ & $-$ & $\bullet$ & $-$ \\
\hline
\end{tabular}
\eea

With suitable $B$-field to preserve supersymmetry \cite{Witten:2000mf}, the low energy theory on the D0-brane is described by a 1d $\mathcal{N}=2$ SQM and the instanton partition function is the Witten index of the D0-branes. Let us focus on the case when there are only $n_{\bar{4}}$ D6$_{\bar{4}}$-branes. Note that this system preserves $\mathcal{N}=4$ supersymmetries. The corresponding quiver diagram is
\bea\label{eq:4SUSYD7setup}
\adjustbox{valign=c}{
\begin{tikzpicture}[decoration={markings,mark=at position \arrowHeadPosition with {\arrow{latex}}}]
 \tikzset{
        box/.style={draw, minimum width=0.7cm, minimum height=0.7cm, text centered,thick},
        ->-/.style={decoration={
        markings,mark=at position #1 with {\arrow[scale=1.5]{>}}},postaction={decorate},line width=0.5mm},
        -<-/.style={decoration={
        markings,
        mark=at position #1 with {\arrow[scale=1.5]{<}}},postaction={decorate},line width=0.5mm}    
    }
\begin{scope}[xshift=4cm]
    \node[box,fill=black!10!white] at (0,1.6) {$n_{\bar{4}}$};
    \draw[postaction={decorate},thick] (0,1.25)--(0,0.4);
    \foreach \ang in {125,180,235} {
    \begin{scope}[rotate=\ang]
        \chiralarc[postaction={decorate},thick](0,0.5)(-45:225:0.22:0.65)
    \end{scope}
    }
    \node[right] at (0,0.8) {$\mathsf{I}$};
    \node[] at (-1.5,-0.9) {$\mathsf{B}_{1}$};
    \node[below] at (0,-1.6) {$\mathsf{B}_{2}$};
    \node[ ] at (1.5,-0.9){$\mathsf{B}_{3}$};
    \draw[fill=black!10!white,thick](0,0) circle(0.4cm);
    \node at (0,0) {$k$};
\end{scope}
\end{tikzpicture}}\qquad \adjustbox{valign=c}{
\begin{tikzpicture}[decoration={markings,mark=at position \arrowHeadPosition with {\arrow{latex}}}]
 \tikzset{
        box/.style={draw, minimum width=0.7cm, minimum height=0.7cm, text centered,thick},
        ->-/.style={decoration={
        markings,mark=at position #1 with {\arrow[scale=1.5]{>}}},postaction={decorate},line width=0.5mm},
        -<-/.style={decoration={
        markings,
        mark=at position #1 with {\arrow[scale=1.5]{<}}},postaction={decorate},line width=0.5mm}    
    }
\begin{scope}[xshift=4cm]
    \draw[fill=black!10!white,thick](0,0) circle(0.4cm);
    \node at (0,0) {$k$};
    \node[box,fill=black!10!white] at (0,1.6) {$n_{\bar{4}}$};
    \draw[postaction={decorate},thick] (-0.1,1.25)--(-0.1,0.4);
    \draw[postaction={decorate},red,thick] (0.1,1.25)--(0.1,0.4);
    \foreach \ang in {90,145,215,270} {
    \begin{scope}[rotate=\ang]
        \chiralarc[postaction={decorate},thick](0,0.5)(-45:225:0.22:0.65)
    \end{scope}
    }
    \foreach \ang in {90,145,270} {
    \begin{scope}[rotate=\ang]
    \fermiarc[postaction={decorate},thick](0,0.5)(-45:225:0.1:0.5)
    \end{scope}
    \node[left] at (-0.1,0.8) {$\mathsf{I}$};
    \node[right] at (0.1,0.8) {$\textcolor{red}{\Lambda_{\mathsf{I}}}$};
    \node[left] at (-1.5,0) {$\mathsf{B}_{2},\textcolor{red}{\Lambda_{2}}$};
    \node[right] at (1.6,0) {$\mathsf{B}_{1},\textcolor{red}{\Lambda_{1}}$};
    \node[below left] at (-0.9,-1){$\mathsf{B}_{3},\textcolor{red}{\Lambda_{3}}$};
    \node[below right] at (0.9,-1){$\mathsf{B}_{4}$};
    \draw[fill=black!10!white,thick](0,0) circle(0.4cm);
    \node at (0,0) {$k$};
    
    }
\end{scope}
\end{tikzpicture}}
\eea
where $\mathsf{B}_{a}\,(a\in\four),\mathsf{I}$ are the chiral superfields and $\Lambda_{\mathsf{I}},\Lambda_{1,2,3}$ are the Fermi superfields. We identify the chiral superfields with the first component field of the superfield. The superpotential $W$ in the 4 susy notation and the $J, E$-terms in the 2 susy notation are given as
\bea\label{eq:2susyD7onestack}
\qquad  
\mathsf{W}_{0}=\Tr\left(\mathsf{B}_{1}[\mathsf{B}_{2},\mathsf{B}_{3}]\right)
\quad E_{i}=[\mathsf{B}_{4},\mathsf{B}_{i}],\quad E_{\mathsf{I}}=\mathsf{B}_{4}\,\mathsf{I}   \quad J_{i}=\partial \mathsf{W}_{0}/\partial \mathsf{B}_{i}=\frac{1}{2}\varepsilon_{ijk4}[\mathsf{B}_{j},\mathsf{B}_{k}]
\eea

Assuming the following $U(1)$ action on the chiral fields
\bea
\mathsf{B}_{a}\rightarrow q_{a}\mathsf{B}_{a},\quad q_{a}=e^{\epsilon_{a}},\quad a\in\four
\eea
the chiral superfields and Fermi superfields have the following global symmetry charges:
\bea
\begin{tabular}{|c||c|}\hline
    Fields & $\U(1)^{3}$-charges  \\
\hline \hline $\mathsf{I}_{\bar{a}},\,(a\in\four)$     &  1\\
 \hline $\mathsf{B}_{a}$, $(a\in\four)$     &  $q_{a}$\\
\hline $\mathsf{\Lambda}_{1,2,3}$     &  $q_{4}q_{1,2,3}$ \\
\hline $\mathsf{\Lambda}_{\bar{a}}$ &  $q_{a}$  \\\hline
\end{tabular}
\eea
with the condition 
\bea
q_{1}q_{2}q_{3}q_{4}=1.
\eea
For later use, for a set $\mathcal{S}$ of elements of $\four$, we shortly write
\bea
q_{\mathcal{S}}=\prod_{a\in\mathcal{S}}q_{a},\quad \eps_{\mathcal{S}}=\sum_{a\in\mathcal{S}}\eps_{a}.
\eea

\begin{theorem}\label{thm:D6vacuum}
The Witten index of this setup is then given as
\bea
\mathcal{Z}^{\D6}_{\text{inst.}}=\sum_{k=0}^{\infty}\mathcal{Z}^{\D6}[k],\quad \mathcal{Z}^{\D6}[k]=\frac{1}{k!}\left(\frac{\sh(-\epsilon_{14,24,34})}{\sh(-\epsilon_{1,2,3,4})}\right)^{k}\oint_{\eta_0}\prod_{I=1}^{k}\frac{d\phi_{I}}{2\pi i }\mathcal{Z}^{\D6\tbar\D0}(\{\fra_{\alpha}\},\{\phi_{I}\})\prod_{I<J}\mathcal{Z}^{\D0\tbar\D0}(\phi_{I},\phi_{J})
\eea
where
\bea
\mathcal{Z}^{\D6\tbar\D0}(\{\fra_{\alpha}\},\phi_{I})&=\prod_{\alpha=1}^{n_{\bar{4}}}\mathcal{Z}^{\D6_{\bar{4}}\tbar\D0}(\fra_{\alpha},\phi_{I}),\quad \mathcal{Z}^{\D6_{\bar{4}}\tbar\D0}(\fra,\phi_{I})=\prod_{I=1}^{k}\frac{\sh(\phi_{I}-\fra-\epsilon_{4})}{\sh(\phi_{I}-\fra)},\\
\mathcal{Z}^{\D0\tbar\D0}(\phi_{I},\phi_{J})&=\frac{\sh(\phi_{I}-\phi_{J})\sh(\phi_{I}-\phi_{J}-\epsilon_{14,24,34})}{\sh(\phi_{I}-\phi_{J}-\epsilon_{1,2,3,4})}\frac{\sh(\phi_{J}-\phi_{I})\sh(\phi_{J}-\phi_{I}-\epsilon_{14,24,34})}{\sh(\phi_{J}-\phi_{I}-\epsilon_{1,2,3,4})}.
\eea
We chose the typical reference vector $\eta=\eta_0$ to specify the contour integral.
\end{theorem}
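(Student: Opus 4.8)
The statement follows by directly evaluating the localization formula \eqref{eq:Wittenindex-integral}--\eqref{eq:2susylocalization} on the single-stack quiver data \eqref{eq:4SUSYD7setup}--\eqref{eq:2susyD7onestack} and then reorganizing the result, so the plan is computational rather than conceptual. First I would fix the field-content dictionary: the gauge node is $\U(k)$, so $|W_G| = k!$; the chirals $\mathsf{B}_a$ ($a\in\four$) are adjoints of weight $q_a$ (with $\mathsf{B}_4 = \Sigma_{\mathcal{N}=2}$ supplied by the vector multiplet); the chiral $\mathsf{I}$ is a fundamental of weight $1$; the Fermis $\Lambda_{1,2,3}$ are adjoints whose weights $q_4 q_{1,2,3}$ I read off from $E_i = [\mathsf{B}_4, \mathsf{B}_i]$; and $\Lambda_{\mathsf{I}}$ is a fundamental of weight $q_4$, read from $E_{\mathsf{I}} = \mathsf{B}_4\,\mathsf{I}$.

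With this dictionary \eqref{eq:2susylocalization} assigns $\prod_{I\neq J}\sh(\phi_I - \phi_J)$ to the vector, $\prod_{I,J}\sh(\phi_J - \phi_I - \epsilon_a)^{-1}$ to each adjoint chiral $\mathsf{B}_a$, $\prod_{I,J}\sh(\phi_J - \phi_I - \epsilon_{i4})$ to each adjoint Fermi $\Lambda_i$ (writing $\epsilon_{i4} = \epsilon_i + \epsilon_4$), and to the framing pair $(\mathsf{I}, \Lambda_{\mathsf{I}})$ the factor $\prod_{\alpha, I}\sh(\phi_I - \fra_\alpha - \epsilon_4)/\sh(\phi_I - \fra_\alpha)$, which is already $\mathcal{Z}^{\D6\tbar\D0}$. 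The key organizing move is to split the adjoint products into diagonal ($I = J$) and off-diagonal ($I \neq J$) parts. On the diagonal the root $\phi_I - \phi_J$ vanishes, the $\phi$-dependence disappears, and the adjoint chirals and Fermis contribute the $\phi$-independent constant $\left(\prod_i \sh(-\epsilon_{i4})/\prod_a \sh(-\epsilon_a)\right)^k = \left(\sh(-\epsilon_{14,24,34})/\sh(-\epsilon_{1,2,3,4})\right)^k$, exactly the prefactor standing in front of the integral.

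It then remains to show the off-diagonal pieces assemble into $\prod_{I<J}\mathcal{Z}^{\D0\tbar\D0}$. For a fixed unordered pair $\{I, J\}$ I would set $x = \phi_I - \phi_J$ and combine the two orderings: the vector gives $\sh(x)\sh(-x)$, the chirals give $\prod_a [\sh(x - \epsilon_a)\sh(-x - \epsilon_a)]^{-1}$, and the Fermis give $\prod_i \sh(x - \epsilon_{i4})\sh(-x - \epsilon_{i4})$. Grouping the $+x$ and $-x$ factors separately reproduces the two-factor form of $\mathcal{Z}^{\D0\tbar\D0}(\phi_I, \phi_J)$ verbatim. Together with the $1/k!$ from $|W_{\U(k)}|$ and the contour reference vector $\eta_0$ fixed by the localization prescription, this is the claimed expression.

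There is no deep obstacle here; the genuinely delicate points are the sign conventions inside the $\sh$ arguments — as Example~2 illustrates, taking a residue in $\phi$ rather than in the shifted coordinate introduces an extra sign — and correctly interpreting the compact notation $\epsilon_{14,24,34}$ and $\epsilon_{1,2,3,4}$ as products of the individual $\sh$ factors. As a consistency check I would verify that $\mathcal{Z}^{\D0\tbar\D0}$ has equally many $\sh$ factors in its numerator and denominator (eight each), which is the hallmark of the enhanced $\mathcal{N}=4$ supersymmetry enjoyed by the single-stack configuration.
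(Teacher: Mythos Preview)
Your proposal is correct and matches the paper's approach: the theorem is obtained by directly applying the localization rules of \eqref{eq:2susylocalization} to the quiver \eqref{eq:4SUSYD7setup}, and the paper's brief commentary after the statement confirms the same multiplet-by-multiplet assembly you describe. The only presentational difference is that the paper groups the contributions by $\mathcal{N}=4$ multiplets (pairing $\mathsf{B}_4$ with the vector, and each $\mathsf{B}_i$ with its Fermi $\Lambda_i$) rather than splitting diagonal versus off-diagonal as you do; your diagonal/off-diagonal split makes the origin of the prefactor $\bigl(\sh(-\epsilon_{14,24,34})/\sh(-\epsilon_{1,2,3,4})\bigr)^k$ more transparent, which the paper leaves implicit.
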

From a geometric point of view, the D6-D0 moduli space described by the quiver shown in \eqref{eq:4SUSYD7setup} is given by the Quot scheme of degree $k$ of rank $n_{\bar{4}}$ for $\mathbb{C}^3$ denoted by $\operatorname{Quot}_{\mathbb{C}^3}(\mathcal{O}^{\oplus n_{\bar{4}}},k)$. The $k$ \D0-brane contribution to the partition function $\mathcal{Z}^{\D6}[k]$ is then given by the equivariant Euler characteristics $\chi_{\mathsf{T}}(\operatorname{Quot}_{\mathbb{C}^3}(\mathcal{O}^{\oplus n_{\bar{4}}},k),\widehat{\mathcal{O}}^\text{vir})$ with the twisted virtual structure sheaf $\widehat{\mathcal{O}}^\text{vir}$~\cite{Nekrasov:2014nea}, and the total partition function $\mathcal{Z}^{\D6}_{\text{inst.}}$ is the corresponding generating function.

Note that since this system is an $\mathcal{N}=4$ system, the $\mathcal{N}=4$ vector multiplet comes from the $\mathcal{N}=2$ vector multiplet and the $\mathcal{N}=2$ adjoint chiral multiplet $\mathsf{B}_{4}$, which gives
\bea
\frac{\sh(\phi_I-\phi_J)}{\sh(\phi_I-\phi_J-\eps_4)}
\frac{\sh(\phi_J-\phi_I)}{\sh(\phi_J-\phi_I-\eps_4)}.
\eea
Moreover, the $\mathcal{N}=4$ chiral multiplets $\mathsf{B}_{i}\,(i=1,2,3)$ come from the $\mathcal{N}=2$ chiral multiplet $\mathsf{B}_{i}$ and the $\mathcal{N}=2$ Fermi multiplet $\Lambda_{i}$, which gives
\bea
\frac{\sh(\phi_I-\phi_J-\eps_{i}-\eps_{4})}{\sh(\phi_I-\phi_J-\eps_{i})}\frac{\sh(\phi_J-\phi_I-\eps_{i}-\eps_{4})}{\sh(\phi_J-\phi_I-\eps_{i})}
\eea
while the $\mathcal{N}=4$ chiral multiplet $\mathsf{I}$ comes from the $\mathcal{N}=2$ chiral multiplet $\mathsf{I}$ and the $\mathcal{N}=2$ Fermi multiplet, which gives
\bea
\frac{\sh(\phi_I-\fra_{\alpha}-\eps_4)}{\sh(\phi_I-\fra_{\alpha})}.
\eea
An observation is that for the $\mathcal{N}=4$ vector multiplet, the denominator is shifted by $-\eps_4$ compared to the numerator, while for the $\mathcal{N}=4$ chiral multiplet, the numerator is shifted by $-\eps_4$ compared to the denominator. The parameter $\eps_{4}$ is identified with the R-symmetry charge and such shift is a general phenomenon for the $\mathcal{N}=4$ system.

Using the quadrality symmetry, we also define
\bea
\mathcal{Z}^{\D6_{\bar{a}}\tbar\D0}(\fra,\phi_{I})=\prod_{I=1}^{k}\frac{\sh(\phi_{I}-\fra-\epsilon_{a})}{\sh(\phi_{I}-\fra)}.
\eea
To include other types of D6-branes wrapping different $\mathbb{C}^{3}_{\bar{a}}$-planes, we just need to modify the framing node contribution as
\bea
\mathcal{Z}^{\D6\tbar\D0}(\{\fra_{\bar{a},\alpha}\},\phi_{I})=\prod_{a\in\four}\prod_{\alpha=1}^{n_{\bar{a}}}\mathcal{Z}^{\D6_{\bar{a}}\tbar\D0}(\fra_{\bar{a},\alpha},\phi_{I}).
\eea

We also can include anti D6-brane contributions, which play the roles of adding 4 susy anti-fundamental chiral superfields to the quantum mechanics:
\bea\label{eq:D6antidef}
\mathcal{Z}^{\overline{\D6}_{\bar{a}}\tbar\D0}(\fra,\phi_{I})=\frac{\sh(\fra-\phi_{I})}{\sh(\fra+\epsilon_{a}-\phi_{I})}.
\eea
Although the physical interpretation of such term is yet to be understood, inclusion of such terms and application to $qq$-characters were discussed in \cite{Kimura:2023bxy}.

\paragraph{JK-residue and plane partitions}
Let us focus on the case with a single D6$_{\bar{4}}$-brane. The integrand takes the form as
\bea
\prod_{I=1}^{k}\frac{\sh(\phi_{I}-\fra-\epsilon_{4})}{\sh(\phi_{I}-\fra)}\prod_{I\neq J}\frac{\sh(\phi_{I}-\phi_{J})\sh(\phi_{I}-\phi_{J}-\epsilon_{14,24,34})}{\sh(\phi_{I}-\phi_{J}-\epsilon_{1,2,3,4})}.
\eea
The charge vectors are $\{e_{i},\,e_{i}-e_{j}\}$ where $e_{i}\in \mathbb{R}^{k}$ is the unit vector pointing the $i$-th direction. We associate the charge vectors as
\bea\label{eq:chargevector-corresp}
e_{I}\longleftrightarrow \phi_I-\fra,\quad e_{I}-e_{J} \longleftrightarrow \phi_{I}-\phi_{J}-\eps_{1,2,3,4}
\eea

We choose the reference vector to be $\eta=\eta_{0}=\sum_{i=1}^{k}e_{i}$ and explicitly evaluate low levels. Although, the JK-residue formalism gives the poles one needs to take the residues, not all of them give non-zero JK-residue. 

For level one, we only have the pole at $\phi_1=\fra$ corresponding to the charge vector $e_{1}$. For level two, using the Weyl invariance, the possible charge vectors are $\{e_{1},e_{2}\}$ and $\{e_{1},e_{2}-e_{1}\}$ giving
\bea
\{e_{1},e_{2}\},&\quad \phi_1-\fra=0,\,\,\phi_2-\fra=0,\\
\{e_{1},e_{2}-e_{1}\},&\quad \phi_1-\fra=0,\,\,\phi_{2}-\phi_1-\eps_{1,2,3,4}=0.
\eea
Note that we also have $\{e_{2},e_{1}-e_{2}\}$ from the Weyl invariance.

The first candidate is canceled by the numerator $\sh(\phi_{1}-\phi_{2})$ and the residue is zero. Similarly, the pole $\phi_{2}=\phi_{1}+\epsilon_{4}$ is canceled by the numerator $\sh(\phi_{2}-\fra-\epsilon_{4})$ and thus only the poles $\phi_{2}=\phi_{1}+\epsilon_{1,2,3}$ remain. Note that as mentioned in Example~5 in section~\ref{sec:JK-residue}, the iterative residue needs to be performed as
\bea
\underset{\phi_2=\fra+\eps_{i}}{\Res}
\underset{\phi_1=\fra}{\Res},\qquad (i=1,2,3).
\eea

Further studying higher levels, one can see that the poles giving the non-zero JK-residues are classified by a plane partition, which is a stack of cubes in a three-dimensional corner. Moreover, the iterative residue corresponds to how one can stack boxes to obtain the plane partition. 

We denote a plane partition $\pi$ as a two-dimensional sequence of non-negative integers with a non-increasing condition:
\bea
\pi=\{\pi_{i,j}\},\quad \pi_{i,j}\geq \pi_{i+1,j},\quad \pi_{i,j}\geq \pi_{i,j+1}.
\eea
The size of the plane partition is the number of boxes $|\pi|=\sum_{i,j}\pi_{i,j}$ and the set of them is denoted as $\mathcal{PP}$. The coordinates of the boxes in the plane partition are assigned $(i,j,k)\in\mathbb{Z}_{\geq 1}^{3}$ and 
\bea
(i,j,k)\in \pi \Leftrightarrow 1\leq k\leq \pi_{i,j}.
\eea
An interesting property of the plane partition is that boxes in it obey a \textit{melting rule}.
\begin{condition}\label{cond:DTmeltingrule}
     If any of 
    \bea
    (i+1,j,k),\quad (i,j+1,k),\quad (i,j,k+1)
    \eea
    is contained in $\pi$, then $(i,j,k)\in\pi$.
\end{condition}
We assign two types of coordinates: additive and multiplicative
\bea\label{eq:qcoordinates}
\text{additive:}&\quad c_{abc,\fra}(\cube)=\fra+(i-1)\epsilon_{a}+(j-1)\epsilon_{b}+(k-1)\epsilon_{c}\\
\text{multiplicative:}&\quad \chi_{abc,v}(\cube)=e^{c_{abc,\fra}(\scube)}=vq_{a}^{i-1}q_{b}^{j-1}q_{c}^{k-1},\quad v=e^{\fra}
\eea
which we call $\epsilon$-coordinates and $q$-coordinates, respectively.

\begin{theorem}\label{thm:tetraJKpoles}
Choosing the reference vector as $\eta=\eta_{0}=(1,\ldots,1)$, the poles are classified by a plane partition $\pi$ 
    \bea
    \{\phi_{I}\}_{I=1}^{k}\rightarrow \{c_{\bar{4},\fra}(\cube)\mid \cube=(x_{1},x_{2},x_{3})\in \pi\}
    \eea
   with $k=|\pi|$. The D6 partition function is then given as
   \bea
\mathcal{Z}^{\D6}_{\text{inst.}}=\sum_{\pi\in\mathcal{PP}}\mathfrak{q}^{|\pi|}\mathcal{Z}^{\D6}_{\bar{4}}[\pi],\quad \mathcal{Z}^{\D6}_{\bar{4}}[\pi]=\underset{\phi=\phi_{\pi}}{\Res} \mathcal{Z}^{\D6_{\bar{4}}\tbar\D0}(\fra,\phi_{I})\prod_{I<J}\mathcal{Z}^{\D0\tbar\D0}(\phi_{I},\phi_{J})   ,\quad k=|\pi|
\eea
where
\bea
\underset{\phi=\phi_{\pi}}{\Res}=\underset{\phi_{k}=\phi_{k\ast}}{\Res}\cdots \underset{\phi_{1}=\phi_{1\ast}}{\Res}.
\eea
The set $\{\phi_{I\ast}\}$ is the set of $\eps$-coordinates of boxes in the plane partition and the iterative residue is performed in a way such that for each step the melting rule in Cond.~\ref{cond:DTmeltingrule} is obeyed.
\end{theorem}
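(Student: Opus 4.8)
The plan is to reduce the statement to the tree combinatorics of Example~5 and then decide which of the resulting poles survive. By the dictionary \eqref{eq:chargevector-corresp} the singular hyperplanes of the integrand carry charge vectors drawn from $\{e_I\}$ (the framing poles $\phi_I=\fra$) and from $\{e_I-e_J\}$ (the $\D0\tbar\D0$ poles $\phi_I-\phi_J=\eps_a$, $a\in\four$). First I would invoke Example~5: for $\eta=\eta_0=(1,\dots,1)$ the only cones containing $\eta_0$ are spanned by one root charge $+e_{I_0}$ together with $k-1$ edge charges $e_K-e_J$ forming a rooted oriented tree on $\{e_I\}$. Hence every contributing pole sets $\phi_{I_0}=\fra$ and propagates $\phi_K=\phi_J+\eps_a$ along each edge, so its location is $\phi_I=\fra+\sum(\text{edge shifts from the root})$, a sum of $\eps_a$'s. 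Degenerate intersections, where several edges meet at one point, are handled by the flag formula \eqref{eq:JKresidue-iterative}, using Examples~4 and~5 to fix the admissible ordering of the iterated residue.

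Next I would establish the forward map $\mathcal{PP}\to\{\text{poles}\}$. By Cond.~\ref{cond:DTmeltingrule} a plane partition $\pi$ is an order ideal of $\mathbb{Z}_{\geq1}^3$, so it admits a linear extension in which each box $(i,j,k)$ is added only after one of $(i-1,j,k)$, $(i,j-1,k)$, $(i,j,k-1)$ is present. Reading that box as the parent and the three lattice directions as the shifts $\eps_1,\eps_2,\eps_3$, one assembles a rooted tree whose pole is exactly $\{c_{\bar{4},\fra}(\cube)\mid\cube\in\pi\}$ with $k=|\pi|$; thus each melting-consistent ordering is an admissible order of iterated residues in the sense of Example~5. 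The nonvanishing of the associated residue for generic $\eps_a$ I would read off from the geometry recorded after Theorem~\ref{thm:D6vacuum}: $\mathcal{Z}^{\D6}_{\bar{4}}[\pi]$ is the equivariant Euler contribution of an isolated torus-fixed point of $\operatorname{Quot}_{\mathbb{C}^3}(\mathcal{O},|\pi|)$, finite and nonzero away from the resonance walls of the $\eps_a$.

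The main obstacle is the converse, i.e.\ showing that every surviving pole is a plane partition, so that trees using the fourth direction or producing coincident or below-the-floor boxes carry vanishing residue. Here the numerators do the work: a step $\phi_K=\phi_J+\eps_4$ out of the root is killed by the framing numerator $\sh(\phi_I-\fra-\eps_4)$, a step $\phi_K=\phi_J+\eps_4$ out of a box at relative position $\eps_a$ ($a\in\{1,2,3\}$) from an ancestor puts $K$ at relative position $\eps_{a4}$ from it, where the $\D0\tbar\D0$ numerator $\sh(\phi_I-\phi_J-\eps_{14,24,34})$ vanishes, and two boxes with the same $\eps$-coordinate are annihilated by $\sh(\phi_I-\phi_J)$; the constraint $\eps_1+\eps_2+\eps_3+\eps_4=0$ is what rewrites the remaining $\eps_4$-steps into these patterns. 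The genuinely delicate part is that such cancellations depend on which neighbouring boxes are present, so they cannot be checked locally: one must prove they cover \emph{all} non-ideal trees and that the numerator zeros are not spuriously restored by the derivatives arising from the higher-order poles produced when a box has several admissible parents (handled as in Example~3), while confirming that the flag sum \eqref{eq:JKresidue-iterative} collapses to the single melting-ordered residue rather than overcounting. I would control this by induction on $|\pi|$, adding one box at a time and verifying—exactly as in the level-one and level-two computations preceding the statement—that the unique pole with nonzero $\eta_0$-JK-residue is the one respecting the melting rule.

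Finally I would assemble the sum. Collecting the JK-residues over all admissible trees, the $k!$ Weyl images of a fixed plane partition coincide and cancel the $1/k!$ of Theorem~\ref{thm:D6vacuum}, while the remaining factor $(\sh(-\eps_{14,24,34})/\sh(-\eps_{1,2,3,4}))^k$ is the per-box normalization absorbed into the weight; tracking it through is a routine bookkeeping check. What is left is precisely $\sum_{\pi\in\mathcal{PP}}\fq^{|\pi|}\,\underset{\phi=\phi_\pi}{\Res}\,\mathcal{Z}^{\D6_{\bar{4}}\tbar\D0}(\fra,\phi_I)\prod_{I<J}\mathcal{Z}^{\D0\tbar\D0}(\phi_I,\phi_J)$, the asserted formula, with $\fq^{|\pi|}$ tracking the D0-charge $k=|\pi|$.
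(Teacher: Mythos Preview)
Your proposal is essentially correct and follows the same line of reasoning the paper uses. In fact, the paper does not give a self-contained proof of this statement: it performs the level-one and level-two checks explicitly (the paragraph preceding the theorem), states the general result, and defers the full argument to the cited literature. Your sketch spells out what the paper leaves implicit—the tree combinatorics of Example~5, the forward map from plane partitions to poles via melting-compatible orderings, and the inductive elimination of non-ideal trees using the numerator zeros $\sh(\phi_I-\phi_J)$, $\sh(\phi_I-\fra-\eps_4)$, $\sh(\phi_I-\phi_J-\eps_{a4})$—so in that sense you supply more than the paper does.

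Two small remarks. First, your mechanism for killing a $+\eps_4$ step via the $\D0\tbar\D0$ numerator relies on the parent of $J$ having been reached by an $\eps_a$ step with $a\in\{1,2,3\}$; this is exactly what your induction guarantees, but it is worth stating explicitly that the argument is inductive and not local, since a first $\eps_4$ step out of the root is handled by the framing numerator while subsequent ones are handled by the $\D0\tbar\D0$ numerator only once one knows the earlier steps were all of type $1,2,3$. Second, your appeal to the Quot-scheme geometry for nonvanishing is correct in spirit but somewhat external to the contour-integral logic; within the paper's framework one would simply observe that at a plane-partition pole the remaining factors in the integrand are generically nonzero because the $\eps$-coordinates of distinct boxes in a plane partition never differ by one of the numerator shifts $0,\eps_{14},\eps_{24},\eps_{34}$ (for generic $\eps_a$).
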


Generalizations to the higher rank case and the tetrahedron instantons are straightforward \cite{Pomoni:2021hkn,Awata:2009dd,Benini:2018hjy}. The poles will be simply classified by the coordinates of the boxes in multiple plane partitions with different orientations. 

\paragraph{Plethystic exponential (PE) formula}
 The tetrahedron instanton partition function with $n_{\bar{a}}$ D6$_{\bar{a}}$-branes has a plethystic exponential (PE) formula \cite{Pomoni:2021hkn,Pomoni:2023nlf,Fasola:2023ypx,Nekrasov:2017cih,Nekrasov:2018xsb}:
\bea\label{eq:tetrahedronPE}
\mathcal{Z}^{\D6}_{n_{\bar{1}},n_{\bar{2}},n_{\bar{3}},n_{\bar{4}}}[\fq,q_{1,2,3,4}]=\PE\left[\frac{[q_{14}][q_{24}][q_{34}]}{[q_{1}][q_{2}][q_{3}][q_{4}]}\frac{\left[\prod_{a\in\four}q_{a}^{n_{\bar{a}}}\right]}{\left[\fq \prod_{a\in\four}q_{a}^{-n_{\bar{a}}/2}\right]\left[\fq \prod_{a\in\four}q_{a}^{n_{\bar{a}}/2} \right]}\right]
\eea
where the PE is defined as
\bea
\PE\left[f(x_{1},\ldots,x_{n})\right]=\exp\left(\sum_{l=1}^{\infty}\frac{1}{l}f(x_{1}^{l},\ldots ,x_{n}^{l})\right).
\eea
For the $n_{\bar{4}}=1,n_{\bar{b}}=0\,(b\neq 4)$ case, we have
\bea\label{eq:D6U1PEformula}
\mathcal{Z}^{\D6}_{\bar{4}}[\fq,q_{1,2,3,4}]=\PE\left[\frac{[q_{14}][q_{24}][q_{34}]}{[q_{1}][q_{2}][q_{3}]}\frac{1}{[\fq q_{4}^{-1/2}][\fq q_{4}^{1/2}]}\right].
\eea

\section{DT3 and PT3 counting}\label{sec:DT-PT-counting}
In this section, we discuss the contour integral formalism of DT3 and PT3 counting. We first review the supersymmetric quantum mechanics associated with the DT3 counting in section~\ref{sec:DT3-JK} and then introduce the contour integral fomulas. We then review the box counting rules for PT3 counting and then propose contour integral formulas reproducing them. We show that the JK-residue of them indeed produces the PT3 invariants. We further discuss higher rank versions of them. We also discuss the DT/PT correspondence and relation with the refined topological vertices.

\subsection{DT3 counting and JK-residue}\label{sec:DT3-JK}
Let us review the supersymmetric quantum mechanics associated with the DT3 counting. See \cite[Sec.~6.1]{Galakhov:2021xum} for a reference. Physically, we are considering a setup where we have a D6-brane with D2--D0 states bound to it. Using the triality, we can consider the counting of D2--D0 bound states on only one D6$_{\bar{4}}$-brane. Generalizations to the cases with multiple D6-branes are given in section~\ref{sec:higherrankDTPT}.

\begin{figure}[t]
    \centering
    \includegraphics[width=0.3\linewidth]{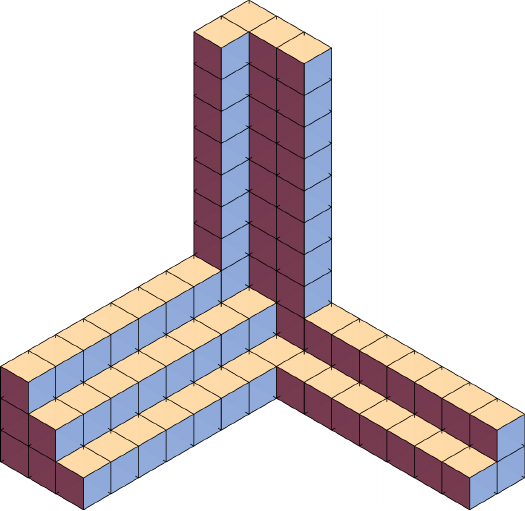}
    \hspace{2.5cm}
    \includegraphics[width=0.35\linewidth]{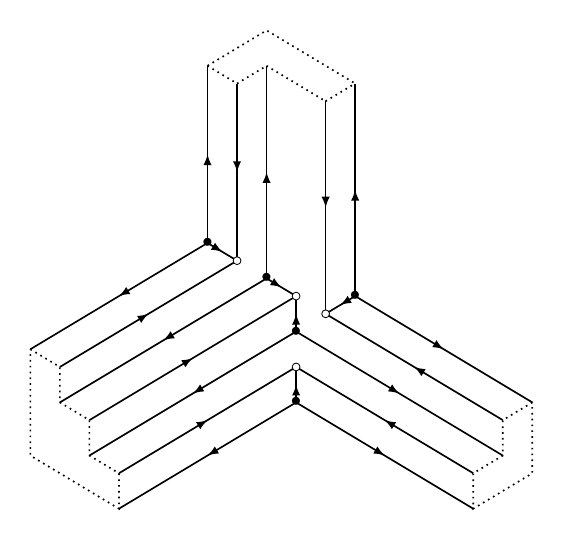}
    \caption{Left: Minimal plane partition with three boundary Young diagrams $\lambda=\{3,2,1\}$, $\mu=\{2,1\}$, and $\nu=\{3,1\}$. Right: The corresponding 2d lattice of the minimal plane partition.}
    \label{fig:minimal-pp}
\end{figure}
Combinatorially, we are considering plane partitions with nontrivial Young diagrams at the three axes. Let us derive the corresponding superpotential and contour integral formulas. We denote the three Young diagrams at the three axes $1,2,3$ as $\overrightarrow{Y}=(\lambda,\mu,\nu)$, respectively. The minimal plane partition is the smallest plane partition obeying this condition (see Fig.~\ref{fig:minimal-pp}). The orientation of the Young diagrams are chosen as in Fig.~\ref{fig:minimal-pp} and we denote the set of the boxes of the three-legs as
\bea
\mathcal{B}_{\lambda}&=\left\{(a,b,c)\mid a=1,\ldots,\infty,\,b=1,\ldots, \ell(\lambda),\,c=1,\ldots,\lambda_{b}\right\},\\
\mathcal{B}_{\mu}&=\left\{ (a,b,c) \mid b=1,\ldots,\infty,\,a=1,\ldots, \mu_{c},\,c=1,\ldots,\ell(\nu)  \right\},\\
\mathcal{B}_{\nu}&=\left\{(a,b,c) \mid c=1,\ldots,\infty,\, a=1,\ldots, \ell(\nu),\, b=1,\ldots, \nu_{a}   \right\}.
\eea
The length of the Young diagram is denoted as $\ell(\lambda)$. For later use, we also introduce
\bea
\,&\mathcal{B}_{\lambda\cap \mu}=\mathcal{B}_{\lambda}\cap \mathcal{B}_{\mu},\quad \mathcal{B}_{\lambda\cap \nu}=\mathcal{B}_{\lambda}\cap \mathcal{B}_{\nu},\quad \mathcal{B}_{\mu\cap \nu}=\mathcal{B}_{\mu}\cap \mathcal{B}_{\nu},\quad \mathcal{B}_{\lambda\cap\mu\cap\nu}=\mathcal{B}_{\lambda}\cap\mathcal{B}_{\mu}\cap \mathcal{B}_{\nu},\\
\,&\mathcal{B}_{\lambda\mu\nu}\coloneqq (\mathcal{B}_{\lambda}+\mathcal{B}_{\mu}+\mathcal{B}_{\nu})- (\mathcal{B}_{\lambda\cap \mu}+\mathcal{B}_{\lambda\cap \nu}+\mathcal{B}_{\mu\cap \nu})+\mathcal{B}_{\lambda\cap\mu\cap\nu}.
\eea

\subsubsection{Framed quiver and superpotential}\label{sec:DT-framedquiver}
The framed quiver and the superpotential corresponding with the plane partition with $\vec{Y}$ were given in \cite{Galakhov:2021xum}. Let us briefly review them.

We first project the minimal plane partition to a 2d lattice, which we denote by $\mathcal{C}_{\vec{Y}}$. After this projection, the vertices of the 2d lattice correspond to the convex and concave corners of the minimal plane partition. The procedure to determine the graph $\mathcal{C}_{\vec{Y}}$ is given as follows.

\begin{enumerate}
    \item We first define a coordinate system on the 2d lattice using three vectors $\vec{e}_{1,2,3}$ obeying $\vec{e}_{1}+\vec{e}_{2}+\vec{e}_{3}=0$, whose orientations are given as
    \bea
    \begin{tikzpicture}
        \draw[->, thick] (0,0)--(0,0.5);
        \node[above] at (0,0.5){$\vec{e}_{3}$};
        \draw[->, thick] (0,0)--(-30:0.5);
        \node[right] at (-30:0.5){$\vec{e}_{2}$};
        \draw[->,thick](0,0)--(210:0.5);
        \node[left] at (210:0.5){$\vec{e}_{1}$};
    \end{tikzpicture}
    \eea
    The coordinate of each vertex in the 2d lattice is defined by 
    \bea
    \vec{x}(\mathsf{v})=\sum_{i=1}^{3}n_{i}\vec{e}_{i},\quad n_{i}>0.
    \eea
    \item For a given set of Young diagrams $\vec{Y}$, we can convert the shape of the Young diagrams to three groups of semi-infinite arrows along the three directions $\vec{e}_{1,2,3}$ in the 2d lattice. Withing each group, the directions of the external lines alternate pointing the positive and negative directions of $\vec{e}_{i}$, while both ends are outgoing arrows. For example, the Young diagram $\nu=\{3,1\}$ is converted to
    \bea\label{eq:minimal-pp-1-leg}
    \adjustbox{valign=c}{\includegraphics[width=1.5cm]{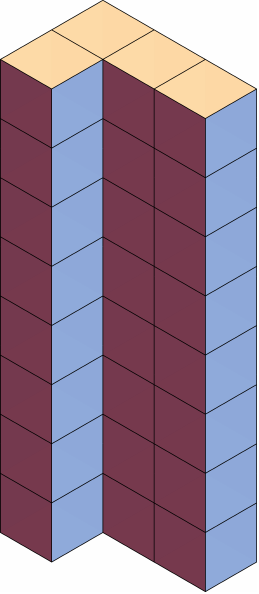}}\hspace{3cm} \adjustbox{valign=c,scale=0.9}{\begin{tikzpicture}[decoration={markings,mark=at position 0.5 with {\arrow{latex}}}]
 \tikzset{
        box/.style={draw, minimum width=0.6cm, minimum height=0.6cm, text centered,thick},
        ->-/.style={decoration={
        markings,mark=at position #1 with {\arrow[scale=1.5]{>}}},postaction={decorate},line width=0.5mm},
        -<-/.style={decoration={
        markings,
        mark=at position #1 with {\arrow[scale=1.5]{<}}},postaction={decorate},line width=0.5mm}    
    }
        \draw[postaction={decorate},thick] (0,0)--(0,3);
        \draw[thick,dotted] (0,3)--(0.5,2.7);
        \draw[postaction={decorate},thick] (0.5,2.7)--(0.5,0);
        \draw[thick,dotted] (0.5,2.7)--(1,3);
        \draw[postaction={decorate},thick] (1,0)--(1,3);
        \draw[thick,dotted] (1,3)--(2,2.4);
        \draw[postaction={decorate},thick] (2,2.4)--(2,0);
        \draw[thick,dotted] (2,2.4)--(2.5,2.7);
        \draw[postaction={decorate},thick] (2.5,0)--(2.5,2.7);
        \draw[thick,dotted](0,3)--(1,3.6)--(2.5,2.7);
    \end{tikzpicture}}
    \eea
    The distance between the external lines depends on the lengths of the edges of the boundaries of the Young diagrams. 

    \item The three semi-infinite lines in the 2d lattice intersect and the intersection determines the minimal plane partition configuration. So that the vertices in the 2d lattice matches with the convex and concave corners of the 3d plane partition, the lines in the 2d lattice need to obey the following two conditions. First, the arrows are either all incoming or all outgoing for each vertex. Second, all vertices are either valence-3 or valence-6. Additional internal lines are added to the external lines so that these conditions are satisfied.\\
    The corners and the vertices in $\mathcal{C}_{\vec{Y}}$ have the following correspondences.
    \begin{itemize}
        \item Concave corner $\Longleftrightarrow$ outgoing valence 3-vertex (black vertex)
        \bea
    \begin{tikzpicture}
        \draw[->, thick] (0,0)--(0,0.5);
        \draw[->, thick] (0,0)--(-30:0.5);
        \draw[->,thick](0,0)--(210:0.5);
        \node at (0,0){\textbullet};
    \end{tikzpicture}
    \eea
    We denote the set of boxes (concave corner boxes) corresponding to these black vertices in the three minimal plane partition as $s(\vec{Y})$.
    \item Half-convex corner $\Longleftrightarrow$ incoming valence 3-vertex (white vertex)\\
    There are six possible configurations:
    \bea
    \begin{tikzpicture}[decoration={markings,mark=at position 0.7 with {\arrow{latex}}}]
        \draw[postaction={decorate}] (90:0.5)--(0,0);
        \draw[postaction={decorate}] (150:0.5)--(0,0);
        \draw[postaction={decorate}](210:0.5)--(0,0);
        \node[circle,draw,fill=white,scale=0.4] at (0,0){};
    \end{tikzpicture}\qquad \quad 
    \begin{tikzpicture}[decoration={markings,mark=at position 0.7 with {\arrow{latex}}}]
        \draw[postaction={decorate}] (150:0.5)--(0,0);
        \draw[postaction={decorate}] (210:0.5)--(0,0);
        \draw[postaction={decorate}](270:0.5)--(0,0);
        \node[circle,draw,fill=white,scale=0.4] at (0,0){};
    \end{tikzpicture}\qquad \quad 
    \begin{tikzpicture}[decoration={markings,mark=at position 0.7 with {\arrow{latex}}}]
        \draw[postaction={decorate}] (210:0.5)--(0,0);
        \draw[postaction={decorate}] (270:0.5)--(0,0);
        \draw[postaction={decorate}](330:0.5)--(0,0);
        \node[circle,draw,fill=white,scale=0.4] at (0,0){};
    \end{tikzpicture}\qquad \quad 
    \begin{tikzpicture}[decoration={markings,mark=at position 0.7 with {\arrow{latex}}}]
        \draw[postaction={decorate}] (270:0.5)--(0,0);
        \draw[postaction={decorate}] (330:0.5)--(0,0);
        \draw[postaction={decorate}](390:0.5)--(0,0);
        \node[circle,draw,fill=white,scale=0.4] at (0,0){};
    \end{tikzpicture}\qquad \quad 
    \begin{tikzpicture}[decoration={markings,mark=at position 0.7 with {\arrow{latex}}}]
        \draw[postaction={decorate}] (330:0.5)--(0,0);
        \draw[postaction={decorate}] (390:0.5)--(0,0);
        \draw[postaction={decorate}](450:0.5)--(0,0);
        \node[circle,draw,fill=white,scale=0.4] at (0,0){};
    \end{tikzpicture}\qquad \quad 
    \begin{tikzpicture}[decoration={markings,mark=at position 0.7 with {\arrow{latex}}}]
        \draw[postaction={decorate}] (30:0.5)--(0,0);
        \draw[postaction={decorate}] (90:0.5)--(0,0);
        \draw[postaction={decorate}](150:0.5)--(0,0);
        \node[circle,draw,fill=white,scale=0.4] at (0,0){};
    \end{tikzpicture}
    \eea
    We denote the set of half-convex corners corresponding to these incoming valence 3-vertices as $p_{1}(\vec{Y})$.
    \item Full-convex corner $\Leftrightarrow$ incoming valence 6-vertex (white vertex)
    \bea
    \begin{tikzpicture}[decoration={markings,mark=at position 0.7 with {\arrow{latex}}}]
        \draw[postaction={decorate}] (30:0.5)--(0,0);
        \draw[postaction={decorate}] (90:0.5)--(0,0);
        \draw[postaction={decorate}](150:0.5)--(0,0);
        \draw[postaction={decorate}](210:0.5)--(0,0);
        \draw[postaction={decorate}](270:0.5)--(0,0);
        \draw[postaction={decorate}](330:0.5)--(0,0);
        \node[circle,draw,fill=white,scale=0.4] at (0,0){};
    \end{tikzpicture}
    \eea
    We denote the set of full-convex corners corresponding to the incoming valence 6-vertices as $p_{2}(\vec{Y})$.
    \end{itemize}
    For the example in Fig.~\ref{fig:minimal-pp}, the vertices are
    \bea
    s(\vec{Y})&=\{(3,4,1),(2,3,2),(2,2,3),(1,4,3),(3,1,4)\},\\
    p_{1}(\vec{Y})&=\{(3,4,2),(2,3,3),(2,4,3),(3,2,4)\},\\
    p_{2}(\vec{Y})&=\{\}.
    \eea

\end{enumerate}

\paragraph{Framed quiver and superpotential}
Using the information of the 2d lattice and the corners of the minimal plane partition, we can construct the framed quiver and superpotential. For each element $z\in s(\vec{Y})$, we introduce a fundamental chiral multiplet $\mathsf{I}_{z}$. For each element $w\in p_{1}(\vec{Y})$, we introduce an anti-fundamental chiral multiplet $\mathsf{J}_{w}$ and for each $w\in p_{2}(\vec{Y})$, we introduce a pair of anti-fundamental chiral multiplets $\tilde{\mathsf{J}}_{w,\alpha}$\,($\alpha=1,2$). The framed quiver is 
\bea\label{eq:4SUSYD6D2D0setup}
\adjustbox{valign=c}{
\begin{tikzpicture}[decoration={markings,mark=at position 0.9 with {\arrow{latex}}}]
 \tikzset{
        box/.style={draw, minimum width=0.7cm, minimum height=0.7cm, text centered,thick},
        ->-/.style={decoration={
        markings,mark=at position #1 with {\arrow[scale=1.5]{>}}},postaction={decorate},line width=0.5mm},
        -<-/.style={decoration={
        markings,
        mark=at position #1 with {\arrow[scale=1.5]{<}}},postaction={decorate},line width=0.5mm}    
    }
\begin{scope}[xshift=4cm]
    \node[box,fill=black!10!white] at (0,1.6) {$$};
    \draw[] (-0.1,1.25)--(-0.1,0.3);
     \draw[] (0.1,1.25)--(0.1,0.3);
    \draw[->,>=Stealth] (-0.1,1.25)--(-0.1,0.4);
    \draw[->,>=Stealth] (-0.1,1.25)--(-0.1,0.6);
    \draw[->,>=Stealth] (0.1,0.3)--(0.1,1.25);
    \draw[->,>=Stealth] (0.1,0.3)--(0.1,1.);
    \foreach \ang in {125,180,235} {
    \begin{scope}[rotate=\ang]
        \chiralarc[postaction={decorate},thick](0,0.5)(-45:225:0.22:0.65)
    \end{scope}
    }
    \node[left] at (-0.1,0.8) {$\mathsf{I}_{z}$};
    \node[right] at (0.1,0.8) {$\mathsf{J}_{w},\tilde{\mathsf{J}}_{w,\alpha}$};
    \node[] at (-1.5,-0.9) {$\mathsf{B}_{1}$};
    \node[below] at (0,-1.6) {$\mathsf{B}_{2}$};
    \node[ ] at (1.5,-0.9){$\mathsf{B}_{3}$};
    \draw[fill=black!10!white,thick](0,0) circle(0.4cm);
    \node at (0,0) {$$};
\end{scope}
\end{tikzpicture}}
\eea
The superpotential is modified as
\bea
\mathsf{W}=\mathsf{W}_{0}+\sum_{w\in p_{1,2}(\vec{Y})}\delta \mathsf{W}_{w},\quad \mathsf{W}_{0}=\Tr\mathsf{B}_{1}[\mathsf{B}_{2},\mathsf{B}_{3}],
\eea
and the $\U(1)^{3}$ charges are
\bea\label{eq:DTflavorcharges}
[\,\mathsf{I}_{z}\,]&=q_{1}^{i-1}q_{2}^{j-1}q_{3}^{k-1},\quad z=(i,j,k)\in s(\vec{Y}),\\
[\,\mathsf{J}_{w}\,]&=[\,\tilde{\mathsf{J}}_{w,\alpha}\,]=q_{4}^{-1}q_{1}^{-i+1}q_{2}^{-j+1}q_{3}^{-k+1},\quad w=(i,j,k)\in p_{1,2}(\vec{Y}).
\eea

The modification of the superpotential is determined as follows.
\begin{itemize}
    \item For $w\in p_{1}(\vec{Y})$, there are two situations. Each superpotential contribution from the white vertex $w\in p_{1}(\vec{Y})$ comes from the nearest two black vertices $z_{1,2}\in s(\vec{Y})$.
    \bea
    \adjustbox{valign=c}{\begin{tikzpicture}[decoration={markings,mark=at position 0.7 with {\arrow{latex}}}]
        \node at (30:1){\textbullet};
        \node at (150:1){\textbullet};
        \node[below] at (0,0) {$w$};
        \draw[postaction={decorate}] (90:1)--(0,0);
        \draw[postaction={decorate}] (150:1)--(0,0);
        \draw[postaction={decorate}] (30:1)--(0,0);
          \node[circle,draw,fill=white,scale=0.4] at (0,0){};
        \node[right] at (30:1){$z_{1}$};
        \node[left] at (150:1){$z_{2}$};
        \node[below,right] at (10:0.4){$i \vec{e}_{1}$};
        
        \node[below,left] at (170:0.4){$j \vec{e}_{2}$};
    \end{tikzpicture}}\qquad\qquad   &\delta \mathsf{W}_{w}=\mathsf{J}_{w}\left(\mathsf{B}_{1}^{i}\mathsf{I}_{z_{1}}-\mathsf{B}_{2}^{j}\mathsf{I}_{z_{2}}\right)\\
   \adjustbox{valign=c}{ \begin{tikzpicture}[decoration={markings,mark=at position 0.7 with {\arrow{latex}}}]
   \draw[postaction={decorate}] (90:1)--(0,0){};
   \draw[postaction={decorate}](150:1)--(0,0){};
   \draw [postaction={decorate}] (210:1)--(0,0){};
   \node at (150:1){\textbullet};
   \node[left] at (150:1){$z_{1}$};
   \node[below] at (150:0.7){$j \vec{e}_{2}$};

   \node at (350:1){\textbullet};
    \node[right] at (350:1){$z_{2}$};
    \draw[dashed,postaction={decorate}] (350:1)--(1,1.732/3);
    \node [right] at (1,1.732/3){$k \vec{e}_{3}$};
    \node[above] at (30:0.6){$i\vec{e}_{1}$};
    \draw [dashed,postaction={decorate}] (1,1.732/3)--(0,0);
   \node[circle,draw,fill=white,scale=0.4] at (0,0){};
   \node[below] at (0,0) {$w$};
    \end{tikzpicture}}\qquad \qquad &\delta \mathsf{W}_{w}=\mathsf{J}_{w}\left(\mathsf{B}_{2}^{j}\mathsf{I}_{z_{1}}-\mathsf{B}_{1}^{i}\mathsf{B}_{3}^{k}\mathsf{I}_{z_{2}}\right)
    \eea
    Other cases are simply obtained by the triality symmetry.
    \item For $w\in p_{2}(\vec{Y})$, the superpotential term looks like 
    \bea
    \adjustbox{valign=c}{\begin{tikzpicture}[decoration={markings,mark=at position 0.7 with {\arrow{latex}}}]
        \draw[postaction={decorate}] (30:1)--(0,0);
        \draw[postaction={decorate}] (90:1)--(0,0);
        \draw[postaction={decorate}](150:1)--(0,0);
        \draw[postaction={decorate}](210:1)--(0,0);
        \draw[postaction={decorate}](270:1)--(0,0);
        \draw[postaction={decorate}](330:1)--(0,0);
        \node at (30:1){\textbullet};
        \node [right] at (30:1){$z_{1}$};
        \node[above] at (32:0.6){$i\vec{e}_{1}$};
        \node at (150:1){\textbullet};
        \node [left] at (150:1){$z_{2}$};
        \node [above]at (148:0.6) {$j\vec{e_{2}}$};
        \node at (270:1){\textbullet};
        \node[below] at (270:1){$z_{3}$};
        \node[right] at (270:0.8){$k \vec{e}_{3}$};
        \node[circle,draw,fill=white,scale=0.4] at (0,0){};
    \end{tikzpicture}}\qquad\delta \mathsf{W}_{w}=\tilde{\mathsf{J}}_{w,1}\left(\mathsf{B}_{1}^{i}\mathsf{I}_{z_{1}}-\mathsf{B}_{2}^{j}\mathsf{I}_{z_{2}}\right)+\tilde{\mathsf{J}}_{w,2}\left(\mathsf{B}_{1}^{i}\mathsf{I}_{z_{1}}-\mathsf{B}_{3}^{k}\mathsf{I}_{z_{3}}\right).
    \eea
\end{itemize}

\subsubsection{Witten index and DT partition function}\label{eq:DT-Wittenindex}
Given the superpotential and the $\U(1)^{3}$-charges in \eqref{eq:DTflavorcharges}, and using \eqref{eq:D6antidef}, the chiral fields $\mathsf{I}_{z},\mathsf{J}_{w},\tilde{\mathsf{J}}_{w,\alpha}$ give the following contributions,
\bea
\mathsf{I}_{z}\,\,(z\in s(\vec{Y})):&\qquad  \frac{\sh(\phi_{I}-c_{\bar{4},\fra_{\alpha}}(\scube_{z})-\epsilon_{4})}{\sh(\phi_{I}-c_{\bar{4},\fra_{\alpha}}(\scube_{z}))}=\mathcal{Z}^{\D6_{\bar{4}}\tbar\D0}(c_{\bar{4},\fra_{\alpha}}(\scube_{z}),\phi_{I}),\\
\mathsf{J}_{w}\,\, (w\in p_{1}(\vec{Y})):&\qquad  \frac{\sh(c_{\bar{4},\fra_{\alpha}}(\scube_{w})-\phi_{I})}{\sh(\epsilon_{4}+c_{\bar{4},\fra_{\alpha}}(\scube_{w})-\phi_{I})}=\mathcal{Z}^{\overline{\D6}_{\bar{4}}\tbar \D0}(c_{\bar{4},\fra_{\alpha}}(\scube_{w}),\phi_{I}),\\
\tilde{\mathsf{J}}_{w,1,2}\, (w\in p_{2}(\vec{Y})):&\qquad \left(\frac{\sh(c_{\bar{4},\fra_{\alpha}}(\scube_{w})-\phi_{I})}{\sh(\epsilon_{4}+c_{\bar{4},\fra_{\alpha}}(\scube_{w})-\phi_{I})}\right)^{2}=\mathcal{Z}^{\overline{\D6}_{\bar{4}}\tbar \D0}(c_{\bar{4},\fra_{\alpha}}(\scube_{w}),\phi_{I})^{2},
\eea
where $\fra_{\alpha}, \phi_{I}$ are the Cartan generators of the gauge group of the D6 and D0-branes, respectively. Since the contribution from $\tilde{\mathsf{J}}_{w,1,2}$ always appear in pairs, the contribution from them are always squared.

We then can write down the general formula for the DT partition function with boundary conditions. Focusing on the case with only one D6-brane, we have the following claim.

\begin{definition}\label{def:DT3vertex-rank1-JKintegral}
The equivariant DT3 vertex is 
\bea
\,&\mathcal{Z}^{\DT\tbar\JK}_{\bar{4};\lambda\mu\nu}[\mathfrak{q},q_{1,2,3,4}]=\sum_{k=0}^{\infty}\mathfrak{q}^{k}\mathcal{Z}^{\DT\tbar\JK}_{\bar{4};\lambda\mu\nu}[k],
\eea
where the $k$-instanton sector of the D6$_{\bar{4}}$--D2--D0 partition function with one D6-brane is given as
\bea\label{eq:DT3contourint}
\,&\mathcal{Z}^{\DT\tbar\JK}_{\bar{4};\lambda\mu\nu}[k]=\frac{1}{k!}\left(\frac{\sh(-\epsilon_{14,24,34})}{\sh(-\epsilon_{1,2,3,4})}\right)^{k}\oint_{\eta_{0}} \prod_{I=1}^{k}\frac{d\phi_{I}}{2\pi i}\prod_{I=1}^{k}\mathcal{Z}^{\D6_{\bar{4}}\tbar\D2\tbar\D0}_{\DT;\lambda\mu\nu}(\fra,\phi_{I})\prod_{I<J}^{k}\mathcal{Z}^{\D0\tbar\D0}(\phi_{I},\phi_{J})
\eea
with
\bea\label{eq:DTflavornode-def}
\mathcal{Z}^{\D6_{\bar{4}}\tbar\D2\tbar\D0}_{\DT;\lambda\mu\nu}(\fra,\phi_{I})&=\prod_{\scube\in s(\vec{Y})}\mathcal{Z}^{\D6_{\bar{4}}\tbar\D0}(c_{\bar{4},\fra}(\cube),\phi_{I})\prod_{\scube\in p_{1}(\vec{Y})}\mathcal{Z}^{\overline{\D6}_{\bar{4}}\tbar\D0}(c_{\bar{4},\fra}(\cube),\phi_{I})\prod_{\scube\in p_{2}(\vec{Y})}\mathcal{Z}^{\overline{\D6}_{\bar{4}}\tbar\D0}(c_{\bar{4},\fra}(\cube),\phi_{I})^{2},
\eea
and the reference vector is $\eta_{0}$ and the sets $s(\vec{Y})$, $p_{1}(\vec{Y})$, and $p_{2}(\vec{Y})$ are the sets of the concave corners, half-convex corners, and the full-convex corners.
\end{definition}


\paragraph{Examples}
In the following sections of this paper, we will frequently use the following three simple examples:
\bea
(\lambda,\mu,\nu)=\begin{dcases}(\varnothing,\varnothing,\Bbox)\\
(\Bbox,\Bbox,\varnothing)\\
(\Bbox,\Bbox,\Bbox)
\end{dcases}
\eea
Each case is the most simple example when there are one-leg, two-legs, three-legs, respectively.

\begin{itemize}
\item {\textbf{One-leg}} 
\bea\label{eq:DTexample1}
\adjustbox{valign=c}{\begin{tikzpicture}[decoration={markings,mark=at position 0.7 with {\arrow{latex}}}]
    \draw[postaction={decorate}](30:0.5)--(0,0);
    \draw[postaction={decorate}](150:0.5)--(0,0);
    \node at (30:0.5){\textbullet};
    \node at (150:0.5){\textbullet};
    \draw [postaction={decorate}] (30:0.5)--++(-30:1.5);
    \draw [postaction={decorate}] (150:0.5)--++(210:1.5);
    \draw [postaction={decorate}] (30:0.5)--++(90:1.5);
    \draw [postaction={decorate}] (90:1.7)--(0,0);
    \draw [postaction={decorate}] (150:0.5)--++(90:1.5);
    \node[circle,draw,fill=white,scale=0.4] at (0,0){};
\end{tikzpicture}}\qquad \begin{dcases}
    s(\vec{Y})=\{(2,1,1),(1,2,1)\},\quad p_{1}(\vec{Y})=\{(2,2,1)\},\\
    \delta \mathsf{W}=\mathsf{J}_{(2,2,1)}(\mathsf{B}_{1}\,\mathsf{I}_{(1,2,1)}-\mathsf{B}_{2}\,\mathsf{I}_{(2,1,1)}),\\
    \mathcal{Z}^{\D6_{\bar{4}}\tbar\D2\tbar\D0}_{\DT;\varnothing\varnothing\,\Cbox}(\fra,\phi_{I})=\frac{\sh(\phi_{I}-\fra-\epsilon_{14,24})}{\sh(\phi_{I}-\fra-\epsilon_{1,2})}\frac{\sh(\fra-\phi_{I}-\epsilon_{34})}{\sh(\fra-\phi_{I}-\epsilon_{3})}.
\end{dcases}
\eea

\item {\textbf{ Two-legs}}
\bea\label{eq:DTexample2}
\adjustbox{valign=c}{\begin{tikzpicture}[decoration={markings,mark=at position 0.7 with {\arrow{latex}}}]
    \draw[postaction={decorate}](0,0)--(-30:1.5);
    \draw[postaction={decorate}](0,0)--(210:1.5);
    \draw [postaction={decorate}] (0,0)--(90:0.5);

    \draw [postaction={decorate}] (-30:1.5)+(0,0.5)--(90:0.5);
    \draw [postaction={decorate}] (210:1.5)+(0,0.5)--(90:0.5);

    \draw [postaction={decorate}] (0,1)--++(210:1.5);
    \draw [postaction={decorate}] (0,1)--++(-30:1.5);
     \draw [postaction={decorate}] (0,1)--++(90:1);
     
    \node[] at (0,0){\textbullet};
    \node[] at (0,1){\textbullet};
    \node[circle,draw,fill=white,scale=0.4] at (90:0.5){};
\end{tikzpicture}}\qquad \begin{dcases}
    s(\vec{Y})=\{(1,1,2),(2,2,1)\},\quad p_{1}(\vec{Y})=\{(2,2,2)\},\\
    \delta \mathsf{W}=\mathsf{J}_{(2,2,2)}(\mathsf{B}_{3}\,\mathsf{I}_{(2,2,1)}-\mathsf{B}_{1}\mathsf{B}_{2}\,\mathsf{I}_{(1,1,2)}),\\
    \mathcal{Z}^{\D6_{\bar{4}}\tbar\D2\tbar\D0}_{\DT;\,\Cbox\,\Cbox\,\varnothing}(\fra,\phi_{I})=\frac{\sh(\phi_{I}-\fra-\epsilon_{124})\sh(\phi_{I}-\fra-\epsilon_{34})}{\sh(\phi_{I}-\fra-\epsilon_{12})\sh(\phi_{I}-\fra-\epsilon_{3})}\frac{\sh(\fra-\phi_{I}-\epsilon_{4})}{\sh(\fra-\phi_{I})}.
\end{dcases}
\eea

\item {\textbf{Three-legs}}
\bea\label{eq:DTexample3}
\adjustbox{valign=c}{\begin{tikzpicture}[decoration={markings,mark=at position 0.7 with {\arrow{latex}}}]
    \draw[postaction={decorate}](0,0)--(-30:1.5);
    \draw[postaction={decorate}](0,0)--(210:1.5);
    \draw [postaction={decorate}] (0,0)--(90:0.5);
\draw [postaction={decorate}] (90:2)--(0,0.5);
    \draw [postaction={decorate}] (-30:1.5)+(0,0.5)--(90:0.5);
    \draw [postaction={decorate}] (210:1.5)+(0,0.5)--(90:0.5);

    \draw [postaction={decorate}] (1.732/4,1/4+0.5)--++(-30:1.5);
    \draw [postaction={decorate}] (1.732/4,1/4+0.5)--++(90:1.3);
    \draw [postaction={decorate}] (1.732/4,1/4+0.5)--++(210:0.5);

\draw [postaction={decorate}] (-1.732/4,1/4+0.5)--++(-30:0.5);
    \draw [postaction={decorate}] (-1.732/4,1/4+0.5)--++(90:1.3);
    \draw [postaction={decorate}] (-1.732/4,1/4+0.5)--++(210:1.5);
    

     \node[] at (1.732/4,1/4+0.5){\textbullet};
     \node[] at (-1.732/4,1/4+0.5){\textbullet};
    \node[] at (0,0){\textbullet};
    
    \node[circle,draw,fill=white,scale=0.4] at (90:0.5){};
\end{tikzpicture}}\qquad \begin{dcases}
    s(\vec{Y})=\{(2,1,2),(2,2,1),(1,2,2)\},\quad p_{2}(\vec{Y})=\{(2,2,2)\},\\
    \delta \mathsf{W}=\tilde{\mathsf{J}}_{(2,2,2),1}(\mathsf{B}_{3}\,\mathsf{I}_{(2,2,1)}-\mathsf{B}_{2}\,\mathsf{I}_{(2,1,2)})+\tilde{\mathsf{J}}_{(2,2,2),2}(\mathsf{B}_{3}\mathsf{I}_{(2,2,1)}-\mathsf{B}_{1}\mathsf{I}_{(1,2,2)}),\\
    \mathcal{Z}^{\D6_{\bar{4}}\tbar\D2\tbar\D0}_{\DT;\,\Cbox\,\Cbox\,\Cbox}(\fra,\phi_{I})=\frac{\sh(\phi_{I}-\fra-\epsilon_{124,134,234})}{\sh(\phi_{I}-\fra-\epsilon_{12,23,13})}\left(\frac{\sh(\fra-\phi_{I}-\epsilon_{4})}{\sh(\fra-\phi_{I})}\right)^{2}.
\end{dcases}
\eea
\end{itemize}

\begin{figure}
    \centering
    \includegraphics[width=5cm]{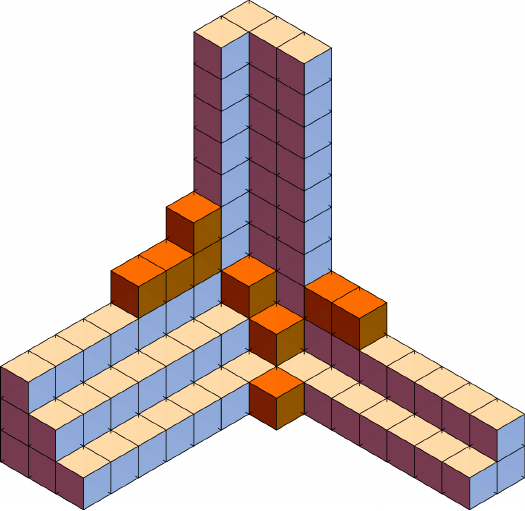}
    \caption{Plane partitions with boundary conditions.}
    \label{fig:DTconfig}
\end{figure}
\paragraph{JK-residue and classifications of poles} 
Evaluating the contour integral \eqref{eq:DT3contourint}, the poles are classified by possible plane partitions with boundary conditions (see Figure \ref{fig:DTconfig}). We denote the set of possible plane partitions with the boundary Young diagrams $(\lambda,\mu,\nu)$ as $\mathcal{DT}_{\lambda\mu\nu}$.
\begin{theorem}\label{thm:DTvertex-JKresidue}
    The DT3 partition function is
    \bea
    &\mathcal{Z}_{\bar{4};\lambda\mu\nu}^{\DT\tbar\JK}[\fq,q_{1,2,3,4}]=\sum_{\pi\in \mathcal{DT}_{\lambda\mu\nu}}\fq^{|\pi|}\mathcal{Z}^{\DT\tbar\JK}_{\bar{4};\lambda\mu\nu}[\pi],\\
    &\mathcal{Z}^{\DT\tbar\JK}_{\bar{4};\lambda\mu\nu}[\pi]=\underset{\phi=\phi_{\pi}}{\Res} \prod_{I=1}^{k}\mathcal{Z}^{\D6_{\bar{4}}\tbar\D0}_{\DT;\lambda\mu\nu}(\fra,\phi_{I})\prod_{I<J}\mathcal{Z}^{\D0\tbar\D0}(\phi_{I},\phi_{J}) 
    \eea
    where $\pi$ is the set of boxes that can be placed on the minimal plane partition without breaking the plane partition condition and $|\pi|$ is the number of such boxes (the orange boxes in Fig.~\ref{fig:DTconfig}).
\end{theorem}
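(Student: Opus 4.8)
The plan is to build on Theorem~\ref{thm:tetraJKpoles}, which already proves the statement for trivial boundary ($\lambda=\mu=\nu=\varnothing$, where $s(\vec{Y})=\{(1,1,1)\}$ and $p_1(\vec{Y})=p_2(\vec{Y})=\varnothing$), and to track how the enriched flavor node $\mathcal{Z}^{\D6_{\bar{4}}\tbar\D2\tbar\D0}_{\DT;\lambda\mu\nu}$ deforms the pole structure once legs are switched on. The D0--D0 factor $\mathcal{Z}^{\D0\tbar\D0}$ is unchanged, so the charge vectors are still $\{e_I,\,e_I-e_J\}$ and the reference vector $\eta_0$ selects exactly the oriented-tree flags of Example~5. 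Hence the elementary growth step — a root pole from a framing denominator, followed by branching poles $\phi_J=\phi_I+\epsilon_{1,2,3}$ from the D0--D0 denominators, the $\epsilon_4$ direction being excluded by the numerators exactly as in the level-two analysis preceding Theorem~\ref{thm:tetraJKpoles} — is inherited verbatim, and the whole argument reduces to identifying the admissible roots and controlling the new degenerate intersections forced by the boundary.

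First I would fix the seed poles. Each concave-corner framing $\mathsf{I}_z$, $z\in s(\vec{Y})$, contributes a denominator $\sh(\phi_I-c_{\bar{4},\fra}(\scube_z))$ and hence a candidate root charge vector $+e_I$ with pole at $\phi_I=c_{\bar{4},\fra}(\scube_z)$; these are precisely the sites where boxes first abut the minimal plane partition, so the single-rooted tree of Example~5 is replaced by a forest whose roots sit at the concave corners. Iterating the melting step from these roots then reproduces, through the same flag/$\kappa$-vector bookkeeping as in Examples~4 and~5, every stacking of boxes on the minimal configuration compatible with Condition~\ref{cond:DTmeltingrule}, with the iterated-residue order $\underset{\phi=\phi_\pi}{\Res}$ dictated by the forest and the $1/k!$ cancelling the $k!$ Weyl relabelings of the boxes.

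The genuinely new ingredient is the convex-corner contribution, which regularizes the residues rather than forbidding boxes. At a half-convex corner $w\in p_1(\vec{Y})$ two growth branches collide and the integrand acquires an elevated- (generically second-) order pole at $\phi_I=c_{\bar{4},\fra}(\scube_w)$; the anti-framing $\mathcal{Z}^{\overline{\D6}_{\bar{4}}\tbar\D0}(c_{\bar{4},\fra}(\scube_w),\phi_I)$ carries a single numerator zero at exactly this point, lowering the order to one and leaving a finite, generically nonzero residue. At a full-convex corner $w\in p_2(\vec{Y})$ three branches meet and the pole order is higher still; the squared anti-framing $\mathcal{Z}^{\overline{\D6}_{\bar{4}}\tbar\D0}(\cdot)^2$ then supplies the double zero required to cut it back to a simple pole. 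This is exactly the pattern exhibited by the flavor nodes in \eqref{eq:DTexample1}--\eqref{eq:DTexample3}.

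Assembling these pieces into an induction on the level $k$, I would show that if the level-$(k-1)$ poles are enumerated by $\pi\in\mathcal{DT}_{\lambda\mu\nu}$ with $k-1$ added boxes, then the next JK residue adjoins exactly one admissible box while preserving the melting rule; summing over $\pi$ and reading off the residue yields the asserted formula. The hard part will be the degenerate-pole accounting at the convex corners: checking that the numerator-zero multiplicities (one for $p_1$, two for $p_2$) match the enhanced pole orders produced by the colliding branches for \emph{every} boundary shape, and that the flag/$\kappa$-vector analysis of Example~4 continues to single out the melting-compatible iteration order once these cancellations are active. This is most delicate for three legs, where a full-convex vertex couples all three Young diagrams at once.
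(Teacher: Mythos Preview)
The paper does not prove this theorem in generality; after stating it, the text only illustrates levels one and two for the simplest one-leg case $(\varnothing,\varnothing,\Bbox)$, checking by direct inspection which candidate poles survive and which are killed by numerators. Your proposal is considerably more ambitious: it sketches a structural argument in which Example~5's single tree becomes a forest rooted at the concave corners $s(\vec{Y})$, and the convex-corner anti-framings act as the mechanism enforcing the melting rule at branch collisions. That reading is correct. The anti-framing numerator at $w\in p_1(\vec{Y})$ contributes a single zero at $\phi_I=c_{\bar 4,\fra}(\scube_w)$, so when only one branch has reached $w$ the simple D0--D0 pole is cancelled outright (box forbidden, as the melting rule demands), while when both branches are present the double pole is cut to simple (box allowed); the squared factor at $p_2(\vec{Y})$ handles three colliding branches the same way. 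This is indeed the pattern visible in \eqref{eq:DTexample1}--\eqref{eq:DTexample3} and in the paper's level-two check, where $\phi_2=\fra+\epsilon_{12}$ is cancelled because only one support $(2,1,1)$ has been placed.

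Two small refinements. First, you should note explicitly that the anti-framing \emph{denominators} $\sh(c_{\bar 4,\fra}(\scube_w)+\epsilon_4-\phi_I)$ carry charge $-e_I$ and are therefore never selected by $\eta_0$, so the $p_{1,2}$ insertions contribute only zeros and never new roots; your phrase ``the charge vectors are still $\{e_I,\,e_I-e_J\}$'' is true for the \emph{selected} set but should be justified by this observation. Second, the part you flag as hard --- matching zero and pole multiplicities at every convex corner for arbitrary $(\lambda,\mu,\nu)$ --- is genuinely delicate because a convex corner need not be adjacent to the concave corners and may be reached only after a chain of D0--D0 steps, so the relevant pole order is built up incrementally. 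The paper does not attempt this accounting; it leans on the framed-quiver construction of \cite{Galakhov:2021xum} for the general statement and verifies consistency only on the simplest examples.
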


Let us see it explicitly for low levels for \eqref{eq:DTexample1}. Choosing the reference vector $\eta_{0}$ picks
\bea
\phi_{I}=\fra+\epsilon_{1,2},\quad \phi_{I}-\phi_{J}=+\epsilon_{1,2,3,4}.
\eea
For level one, we have
\bea
\phi_{1}=\fra+\epsilon_{1,2}
\eea
which corresponds to the addable boxes at $(2,1,1),(1,2,1)$. 

For the next level, assume we have a box at $\phi_1=\fra+\eps_1$. The poles picked by the JK-residue formalism are
\bea
\phi_2=\fra+\eps_{1,2},\quad \phi_2=\phi_1+\eps_{1,2,3,4}.
\eea
The first one comes from the charge vector $\{e_{1},e_{2}\}$ and the second one comes from the charge vectors $\{e_{1},e_{2}-e_{1}\}$.

The pole coming from $\phi_2=\fra+\eps_{1}=\phi_1$ is canceled from the numerator part $\sh(\phi_1-\phi_2)$ coming from the numerator part of the D0-D0 term. The poles $\phi_2=\phi_1+\eps_4=\fra+\eps_{14}$ and $\phi_2=\phi_1+\eps_2=\fra+\eps_{12}=\fra-\eps_{34}$ are canceled by the numerator of \eqref{eq:DTexample1}. A similar analysis can be done when $\phi_1=\fra+\eps_2$ and the poles giving non-zero JK-residues are
\bea
(\phi_1,\phi_2)=(\fra+\eps_{1},\fra+\eps_{2}),\quad (\fra+\eps_{1},\fra+2\eps_{1}),\quad (\fra+\eps_{1}+\eps_{3}),\quad (\fra+\eps_{2},\fra+2\eps_2),\quad (\fra+\eps_2,\fra+\eps_2+\eps_{3}).
\eea
Note that when performing the iterative residue, the ordering of the poles need to obey melting rule for each level. Only for the first pole, we can perform the residue in either order $(\fra+\eps_1,\fra+\eps_2)$ or $(\fra+\eps_2,\fra+\eps_1)$.


\subsection{Infinite product and regularizations}\label{sec:infinite-product-reg}
In this section, we give a different way to derive the framing node contribution $\mathcal{Z}^{\D6_{\bar{4}}\tbar\D2\tbar\D0}_{\DT;\lambda\mu\nu}(\fra,\phi_I)$. Starting from an empty plane partition, whenever a box is added at $c_{\bar{4},\fra}(\cube)$ and the pole is evaluated, the contribution $\mathcal{Z}^{\D0\tbar\D0}(c_{\bar{4},\fra}(\cube),\phi_I)$ appears and modify the contour integrand. In particular, assume that we have a plane partition $\pi$ and have evaluated the poles corresponding to the boxes of the plane partition. The contour integrand then becomes
\bea
\prod_{I}\mathcal{Z}^{\D6_{\bar{4}}\tbar\D0}(\fra,\phi_I)\prod_{\scube\in\pi}\mathcal{Z}^{\D0\tbar\D0}(c_{123,\fra}(\cube),\phi_{I}).
\eea
The poles picked up from the JK-residue formalism will be the addable boxes of the plane partition and the plane partition grows after the evaluation.

For a general minimal plane partition with boundary conditions $(\lambda,\mu,\nu)$, the framing node contribution becomes
\bea
\mathcal{Z}^{\D6_{\bar{4}}\tbar\D2\tbar\D0}_{\DT;\lambda\mu\nu}(\fra,\phi_{I})&=\mathcal{Z}^{\D6_{\bar{4}}\tbar\D0}(\fra,\phi_{I})\prod_{\scube\in\mathcal{B}_{\lambda\mu\nu}}\mathcal{Z}^{\D0\tbar\D0}(c_{123,\fra}(\cube),\phi_{I}).
\eea
However, since we have infinite number of boxes in $\mathcal{B}_{\lambda\mu\nu}$, we need to regularize it. 

To make the discussion concrete, let us consider the boundary condition $(\varnothing,\varnothing,\Bbox)$. The boundary boxes give
\bea
\prod_{k=1}^{\infty}\mathcal{Z}^{\D0\tbar\D0}(\fra+\epsilon_{3}(k-1),\phi_{I})&=\prod_{k=1}^{\infty}\frac{\sh(\fra+\eps_3(k-1)-\phi_I)\sh(\fra+\eps_3(k-1)-\phi_I-\eps_{14,24,34})}{\sh(\fra+\eps_3(k-1)-\phi_I-\eps_{1,2,3,4})}\\
&\times \prod_{k=1}^{\infty}\frac{\sh(\phi_I-\fra-\eps_3(k-1))\sh(\phi_I-\fra-\eps_3(k-1)-\eps_{14,24,34})}{\sh(\phi_I-\fra-\eps_3(k-1)-\eps_{1,2,3,4})}\\
&=\prod_{k=1}^{\infty}\frac{\sh(\phi_I-\fra-(k-1)\eps_3)\sh(\phi_i-\fra-(k-1)\eps_3)}{\sh(\phi_I-\fra-\eps_3(k-2))\sh(\phi_I-\fra-k\eps_3)}\\
&\times \prod_{k=1}^{\infty}\frac{\sh(\phi_I-\fra-k\eps_3-\eps_{1,2,4})}{\sh(\phi_I-\fra-\eps_3(k-1)-\eps_{1,2,4})}\frac{\sh(\phi_I-\fra-\eps_{3}(k-2)+\eps_{1,2,4})}{\sh(\phi_I-\fra-\eps_{3}(k-1)+\eps_{1,2,4})}\\
&=\frac{\sh(\phi_I-\fra)}{\sh(\phi_I-\fra+\eps_3)}\frac{\sh(\phi_I-\fra+\eps_{31,32,34})}{\sh(\phi_I-\fra-\eps_{1,2,4})}\\
&\eqqcolon \mathcal{Z}^{\overline{\D2}_{3}\tbar\D0}(\fra,\phi_I)
\eea
where we used $\sh(x)=-\sh(-x)$ and $\sum_{a\in\four}\eps_{a}=0$. Since this physically corresponds to adding D2-branes to the system, we denote it as\footnote{The reason why we denote it as the anti D2-branes comes from the quantum algebra. Adding such D2-branes to the boundary conditions correspond to adding the inverse of the D2 vertex operators.} $\mathcal{Z}^{\overline{\D2}_{3}\tbar\D0}(\fra,\phi_I)$. Using the quadrality, we can also introduce 
\bea
\mathcal{Z}^{\overline{\D2}_{a}\tbar\D0}(\fra,\phi_I)=\frac{\sh(\phi_I-\fra)}{\sh(\phi_I-\fra+\eps_a)}\prod_{i\in\bar{a}}\frac{\sh(\phi_I-\fra+\eps_{a}+\eps_{i})}{\sh(\phi_I-\fra-\eps_i)}.
\eea
Using this, the framing node contribution is then given as
\bea
\mathcal{Z}^{\D6_{\bar{4}}\tbar\D2\tbar\D0}_{\DT;\,\varnothing\varnothing\,\Bbox}(\fra,\phi_{I})&=\mathcal{Z}^{\D6_{\bar{4}}\tbar\D0}(\fra,\phi_{I})\mathcal{Z}^{\overline{\D2}_{3}\tbar\D0}(\fra,\phi_I)\\
&=\frac{\sh(\phi_{I}-\fra-\epsilon_{14,24})\sh(\phi_{I}-\fra+\epsilon_{34})}{\sh(\phi_{I}-\fra-\epsilon_{1,2})\sh(\phi_{I}-\fra+\epsilon_{3})}.
\eea

Combinatorially, including the contribution $\mathcal{Z}^{\overline{\D2}_{a}\tbar\D0}(\frb,\phi_I)$ corresponds to adding a sequence of boxes extending semi-infinitely in the $a$-axis starting from the coordinate $\frb$ (we call this a rod). For general $\mathcal{B}_{\lambda\mu\nu}$, we can decompose it into multiple rods extending in the 1,2,3 directions without intersections. The framing node contribution then can be obtained by including the corresponding D2-brane contributions. For example, for $(\Bbox,\Bbox,\varnothing)$, the boundary boxes can be decomposed into a D2$_{1}$-brane at the origin $\fra$ and a D2$_{2}$-brane at $\fra+\eps_2$:
\bea
\mathcal{Z}^{\D6_{\bar{4}}\tbar\overline{\D2}\tbar\D0}_{\DT;\,\Bbox\,\Bbox\,\varnothing}(\fra,\phi_{I})&=\mathcal{Z}^{\D6_{\bar{4}}\tbar\D0}(\fra,\phi_{I})\mathcal{Z}^{\overline{\D2}_{1}\tbar\D0}(\fra,\phi_{I})\mathcal{Z}^{\overline{\D2}_{2}\tbar\D0}(\fra+\epsilon_{2},\phi_{I})\\
&=\frac{\sh(\phi_{I}-\fra-\epsilon_{34})\sh(\phi_{I}-\fra+\epsilon_{3})\sh(\phi_{I}-\fra+\epsilon_{4})}{\sh(\phi_{I}-\fra-\epsilon_{3})\sh(\phi_{I}-\fra+\epsilon_{34})\sh(\phi_{I}-\fra)}.
\eea
For the $(\Bbox,\Bbox,\Bbox)$ case, we can place a D2$_{3}$-brane at the origin $\fra$, a D2$_{1}$-brane at $\fra+\eps_1$, and a D2$_2$-brane at $\fra+\eps_2$:
\bea
\mathcal{Z}^{\D6_{\bar{4}}\tbar\overline{\D2}\tbar\D0}_{\DT;\,\Bbox\,\Bbox\,\Bbox}(\fra,\phi_{I})&=\mathcal{Z}^{\D6_{\bar{4}}\tbar\D0}(\fra,\phi_{I})\mathcal{Z}^{\overline{\D2}_{3}\tbar\D0}(\fra,\phi_{I})\mathcal{Z}^{\overline{\D2}_{1}\tbar\D0}(\fra+\epsilon_{1},\phi_{I})\mathcal{Z}^{\overline{\D2}_{2}\tbar\D0}(\fra+\epsilon_{2},\phi_{I})\\
&=\frac{\sh(\phi_{I}-\fra-\epsilon_{124,134,234})}{\sh(\phi_{I}-\fra-\epsilon_{12,23,13})}\left(\frac{\sh(\phi_{I}-\fra+\epsilon_{4})}{\sh(\phi_{I}-\fra)}\right)^{2}.
\eea

Although the above derivation provides a way to give the factors appearing in the framing node contribution, it does not tell which factor correspond to the fundamental chiral or the anti-fundamental chiral multiplets. Such information is important because they determine how one evaluates the contour integral. Let us give a way to determine this. We first start by rewriting the D2$_{1,2,3}$ contributions using the D6$_{\bar{4}}$, $\overline{\D6}_{\bar{4}}$ contributions:
\bea
\mathcal{Z}^{\overline{\D2}_{\bar{3}}\tbar\D0}(\fra,\phi_I)=\mathcal{Z}^{\overline{\D6}_{\bar{4}}\tbar\D0}(\fra,\phi_{I})\mathcal{Z}^{\overline{\D6}_{\bar{4}}\tbar\D0}(\fra-\eps_{34},\phi_{I})\mathcal{Z}^{\D6_{\bar{4}}\tbar\D0}(\fra+\eps_{1,2},\phi_{I})
\eea
and others are obtained by the triality. We then use
\bea
\mathcal{Z}^{\D6_{\bar{4}}\tbar\D0}(\fra,\phi_{I})\mathcal{Z}^{\overline{\D6}_{\bar{4}}\tbar\D0}(\fra,\phi_{I})=1
\eea
and cancel the extra factors. For example, for the case $(\varnothing,\varnothing,\Bbox)$, we have
\bea
\mathcal{Z}^{\D6_{\bar{4}}\tbar\overline{\D2}\tbar\D0}_{\DT;\,\Bbox\,\Bbox\,\varnothing}(\fra,\phi_{I})&=\mathcal{Z}^{\D6_{\bar{4}}\tbar\D0}(\fra,\phi_{I})\mathcal{Z}^{\overline{\D2}_{3}\tbar\D0}(\fra,\phi_I)\\
&=\mathcal{Z}^{\D6_{\bar{4}}\tbar\D0}(\fra,\phi_{I})\mathcal{Z}^{\overline{\D6}_{\bar{4}}\tbar\D0}(\fra,\phi_{I})\mathcal{Z}^{\overline{\D6}_{\bar{4}}\tbar\D0}(\fra-\eps_{34},\phi_{I})\mathcal{Z}^{\D6_{\bar{4}}\tbar\D0}(\fra+\eps_{1,2},\phi_{I})\\
&=\mathcal{Z}^{\overline{\D6}_{\bar{4}}\tbar\D0}(\fra-\eps_{34},\phi_{I})\mathcal{Z}^{\D6_{\bar{4}}\tbar\D0}(\fra+\eps_{1,2},\phi_{I})
\eea
which indeed matches with \eqref{eq:DTexample1}. Other examples can be done similarly.

\subsection{PT3 counting rules and coordinate system}\label{sec:PTrule-coord}
Let us review the PT box counting rules for each case when we have one, two, and three legs, respectively.

\subsubsection{One-leg}
\begin{figure}
\begin{minipage}{0.15\columnwidth}
    \centering
    \includegraphics[width=2cm]{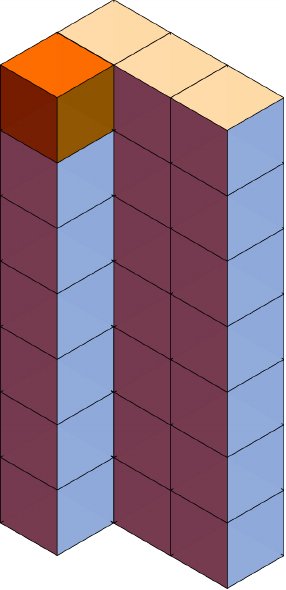}
\end{minipage}
\begin{minipage}{0.15\columnwidth}
    \centering
    \includegraphics[width=2cm]{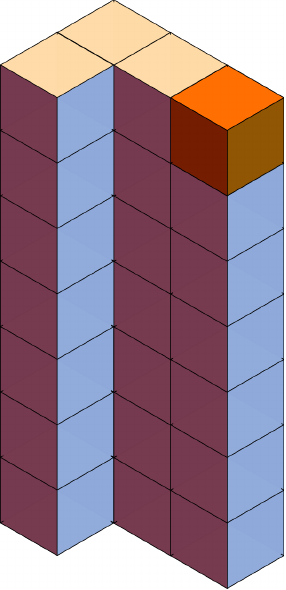}
\end{minipage}
\begin{minipage}{0.15\columnwidth}
    \centering
    \includegraphics[width=2cm]{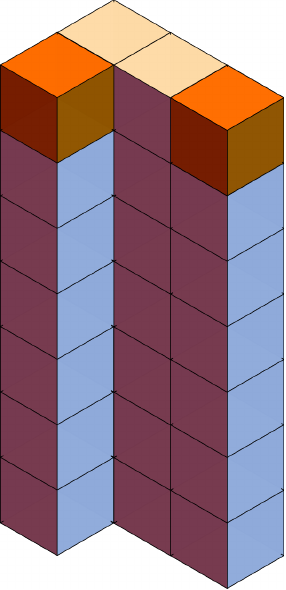}
\end{minipage}
\begin{minipage}{0.15\columnwidth}
    \centering
    \includegraphics[width=2cm]{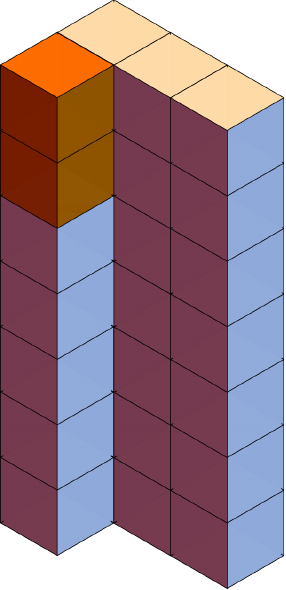}
\end{minipage}
\begin{minipage}{0.15\columnwidth}
    \centering
    \includegraphics[width=2cm]{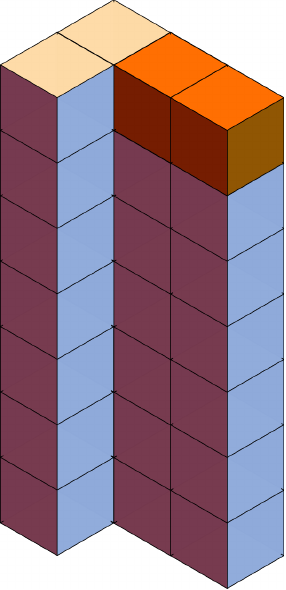}
\end{minipage}
\begin{minipage}{0.15\columnwidth}
    \centering
    \includegraphics[width=2cm]{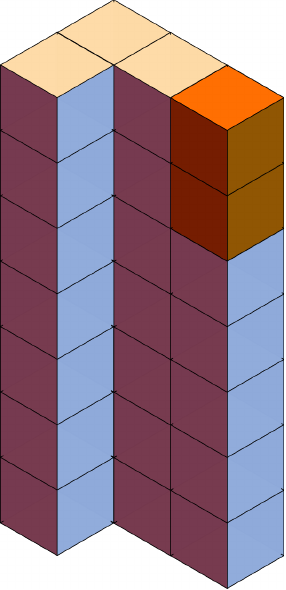}
\end{minipage}
\caption{ PT configurations for $\lambda=\mu=\varnothing$ and $\nu=\{3,1\}$ with one and two boxes. The coordinates of the set of orange boxes are $(\epsilon_{1}-\epsilon_{3})$, $(2\epsilon_{2}-\epsilon_{3})$, $(\epsilon_{1}-\epsilon_{3},2\epsilon_{2}-\epsilon_{3})$, $(\epsilon_{1}-\epsilon_{3},\epsilon_{1}-2\epsilon_{3})$, $(\epsilon_{2}-\epsilon_{3},2\epsilon_{2}-\epsilon_{3})$, and $(2\epsilon_{2}-\epsilon_{3},2\epsilon_{2}-2\epsilon_{3})$, respectively.}
\label{fig:PT3oneleg}
\end{figure}

Starting from the minimal plane partition \eqref{eq:minimal-pp-1-leg} with legs $(\varnothing,\varnothing,\nu)$, we extend the half cylinder of shape $\nu$ in the negative direction of the third coordinate. We then remove the boxes originally present in the standard box counting. For the one-leg case, we end up with the same configuration as shown in Fig.~\ref{fig:PT3oneleg}. However, they now represent a half cylinder starting at the origin and going further in the negative direction. We then can think of the resulting half cylinder as a hollow structure in which the boxes are stacked. The difference with the ordinary box counting is that the gravity is now pointing towards the $(1,1,1)$ direction. 
\begin{condition}\label{cond:PTrule-oneleg}
    Let $(i,j,k)$ belong to the one-leg half cylinder extending in the negative direction. If any of 
    \bea
    (i-1,j,k),\quad (i,j-1,k),\quad (i,j,k-1)
    \eea
    are contained in the box configuration, then $(i,j,k)$ is also included.
\end{condition}
Possible positions where we can add the boxes for the $\nu=\{3,1\}$ case are given in Fig.~\ref{fig:PT3oneleg}.

\subsubsection{Two-legs}
\begin{figure}[tbp]
  \hspace{0.04\columnwidth} 
  \begin{minipage}[b]{0.48\columnwidth}
    \centering
    \includegraphics[width=5cm]{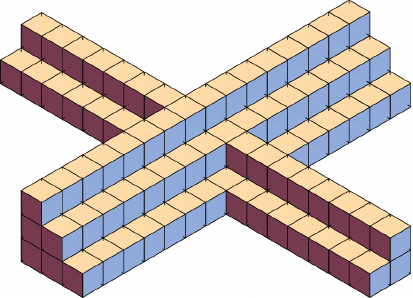}
  \end{minipage}
  \begin{minipage}[b]{0.48\columnwidth}
    \centering
    \includegraphics[width=6cm]{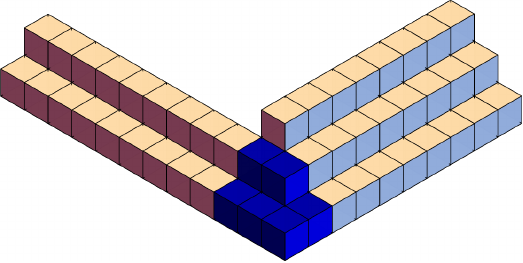}
  \end{minipage}
   \caption{Construction of the PT box-counting with two nontrivial legs: $\lambda=\{3,2,1\}$, $\mu=\{2,1\}$, and $\nu=\varnothing$. Left: Configuration with asymptotic cylinders extended in both directions obtained from the minimal plane partition. Right: Configuration with boxes in the positive directions removed and boxes in the intersection colored. The blue boxes are the boxes belonging to the intersection of the two legs and they are kept.}
   \label{fig:PT3twoleg}
\end{figure}
For the two-legs case, the situation is more complicated. Similar to the one-leg case, we start from the minimal plane partition with two nontrivial legs (see Fig.~\ref{fig:PT3twoleg}) and extend the two legs in the negative directions of $x_{1}, x_{2}$. We then remove all the boxes with coordinates $i,j,k\geq 0$ except the intersection of the two cylinders (the boxes $\mathcal{B}_{\lambda\cap \mu}$). They are colored in blue in the figure. The resulting half cylinder is then understood as a hollow structure and boxes are placed inside it. Again, the gravity is pointing the positive direction $(1,1,1)$.
\begin{condition}\label{cond:PTrule-twolegs}
    Let $(i,j,k)$ belong to the two-legs half cylinder extending in the negative directions. If any of 
    \bea
    (i-1,j,k),\quad (i,j-1,k),\quad (i,j,k-1)
    \eea
    are contained in the box configuration, then $(i,j,k)$ is also included.
\end{condition}

For the situation in Fig.~\ref{fig:PT3twoleg}, there are three allowed configurations for level one:
\bea
\adjustbox{valign=c}{\includegraphics[width=3cm]{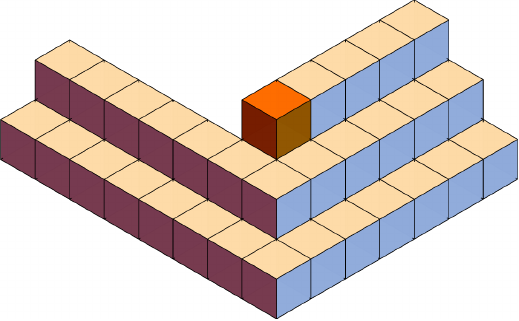}}\qquad \adjustbox{valign=c}{\includegraphics[width=3cm]{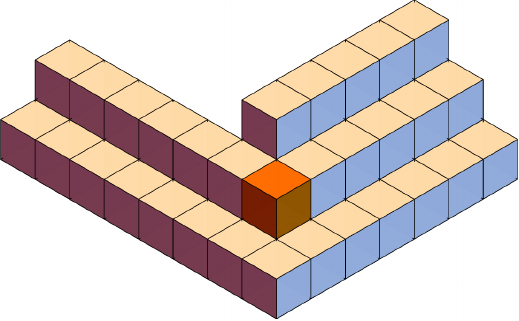}}\qquad \adjustbox{valign=c}{\includegraphics[width=3cm]{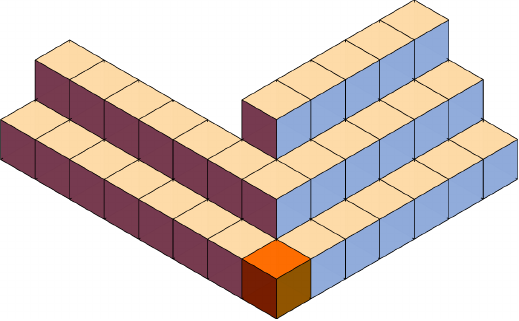}}
\eea
and eight allowed configurations for level two:
\bea
\adjustbox{valign=c}{\includegraphics[width=3cm]{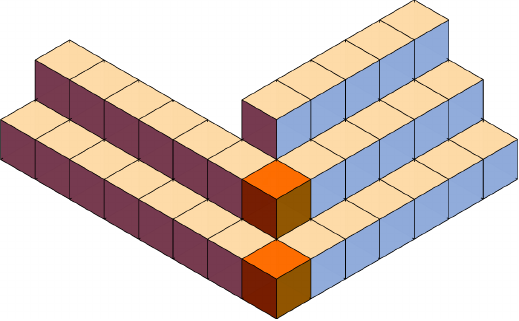}}\quad \adjustbox{valign=c}{\includegraphics[width=3cm]{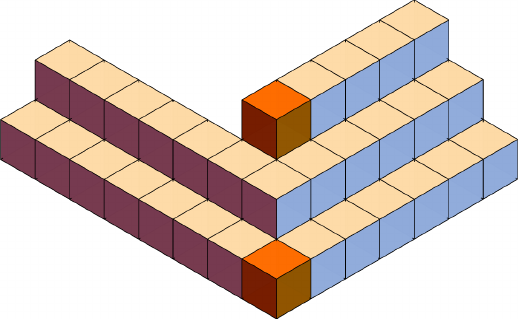}}\quad \adjustbox{valign=c}{\includegraphics[width=3cm]{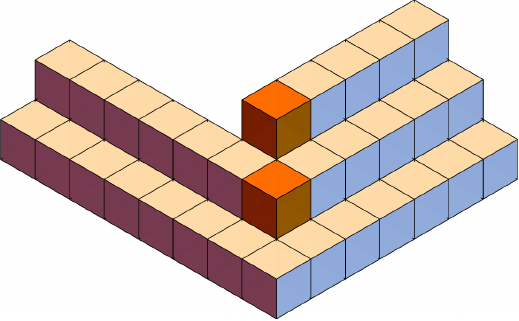}}\quad \adjustbox{valign=c}{\includegraphics[width=3cm]{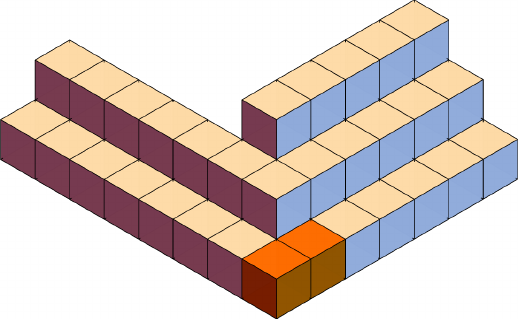}}\\
\adjustbox{valign=c}{\includegraphics[width=3cm]{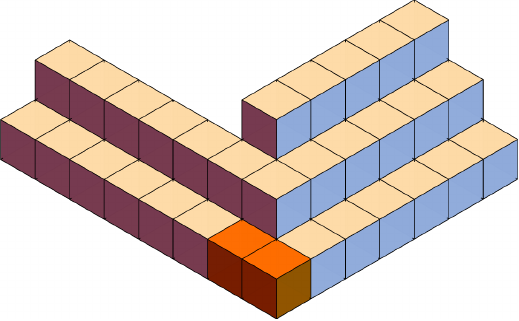}}\quad \adjustbox{valign=c}{\includegraphics[width=3cm]{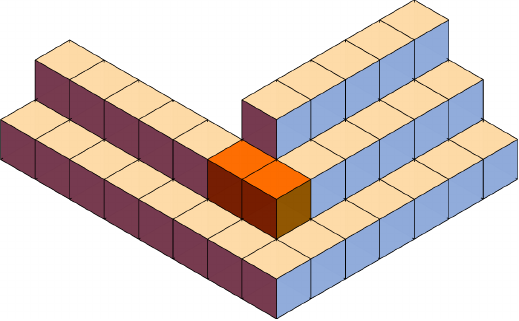}}\quad \adjustbox{valign=c}{\includegraphics[width=3cm]{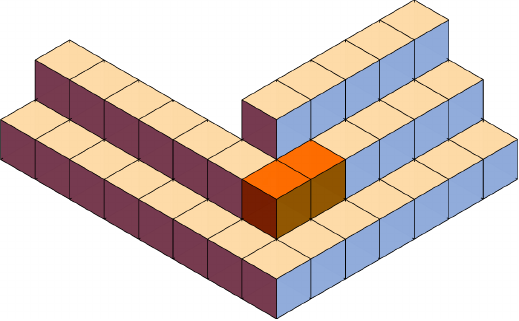}}\quad \adjustbox{valign=c}{\includegraphics[width=3cm]{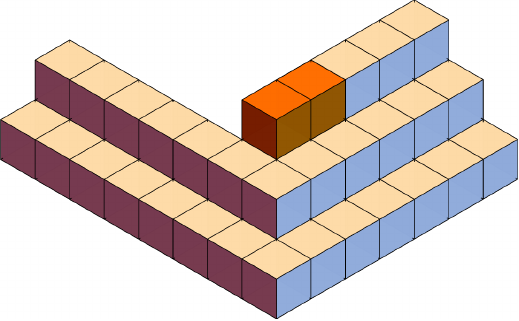}}
\eea

\subsubsection{Three-legs}
\begin{figure}[tbp]
  \hspace{0.04\columnwidth} 
  \begin{minipage}[b]{0.48\columnwidth}
    \centering
    \includegraphics[draft=false,width=4cm]{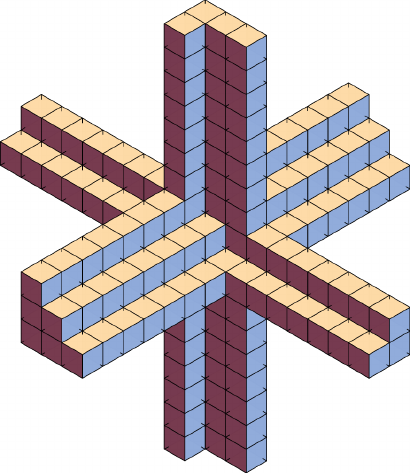}
  \end{minipage}
  \begin{minipage}[b]{0.48\columnwidth}
    \centering
    \includegraphics[draft=false,width=5cm]{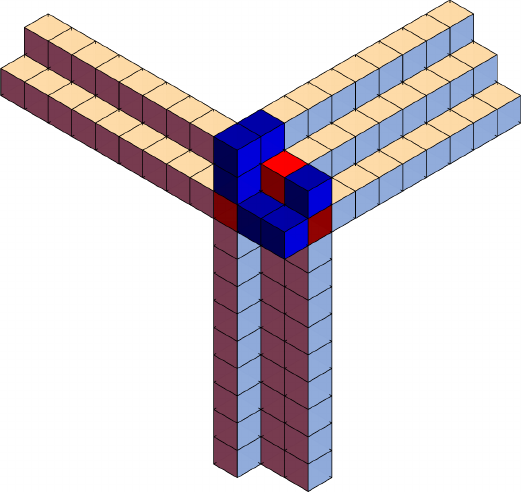}
  \end{minipage}
   \caption{Construction of the PT box-counting with three trivial legs: $\lambda=\{3,2,1\}$, $\mu=\{2,1\}$, and $\nu=\{3,1\}$. Left: Configuration with asymptotic cylinders extended in both directions obtained from the minimal plane partition. Right: Configuration with boxes in the positive directions removed and boxes in the intersection colored. Red boxes are boxes belonging to the three legs. }
   \label{fig:PT3threeleg}
\end{figure}
Let us then move on to the PT counting when there are three nontrivial legs. We review both the descriptions in \cite{Pandharipande:2007kc,Pandharipande:2007sq} and \cite{Gaiotto:2020dsq}. Similarly, we start from the minimal plane partition Fig.~\ref{fig:minimal-pp}. We first extend the three legs in the negative direction as in Fig.~\ref{fig:PT3threeleg}. We then remove all the boxes in the positive quadrant while keeping the boxes in the intersection of the three legs. The boxes that belong to the intersection of exactly two legs, i.e. $\mathcal{B}_{\lambda\cap\mu},\mathcal{B}_{\mu\cap\nu},\mathcal{B}_{\lambda\cap\nu}$, are colored in blue. The boxes belonging to all of the three legs, i.e. $\mathcal{B}_{\lambda\cap\mu\cap \nu}$, are colored in red. See the right of Fig.~\ref{fig:PT3threeleg}.

\paragraph{Pandharipande--Thomas (PT) box counting rules}

Let us first review the allowed box configurations for PT box counting following \cite{Pandharipande:2007kc,Pandharipande:2007sq,Jenne2020TheCP,Jenne:2021irh}. We start from classifying the boxes in the hollow structure into three sets I$^{-}$, II, III.
\begin{itemize}
    \item I$^{-}$ consists of boxes belonging to the negative quadrant of the extended minimal plane partition.
    \item II consists of all boxes lying in exactly 2 cylinders, i.e. the blue positions. We further denote $\text{II}_{\bar{a}}\,(a=1,2,3)$ as the boxes lying in exactly 2 cylinders extending in the $b,c$ axes, such that $a\neq b\neq c$.
    \item III consists of all boxes lying in exactly 3 cylinders, i.e the red positions.
\end{itemize}
 The PT configurations are labeled box configurations in $\text{II}\cup\text{ III} \cup\text{ I}^{-}$. A labeled box configuration is a finite set of boxes in $\text{II}\cup\text{ III} \cup\text{ I}^{-}$ with the boxes in III labeled.\footnote{In this paper, we use the following labels $\{-1,0,1,2,3\}$.} The rules of the labeling are given as follows.
 \begin{condition}[\textbf{PT counting rules}]\label{cond:PTrules}
 The possible PT configurations for the three-legs case obey the following rules.
     \begin{enumerate}
         \item For a box $(i,j,k)\in \text{I}^{-}$, if any of $(i-1,j,k),\,(i,j-1,k),\, (i,j,k-1)$
         are included in the box configuration, then $(i,j,k)$ is also included. In other words, the box $(i,j,k)\in \text{I}^{-}$ supports the boxes $(i-1,j,k),\,(i,j-1,k),\,(i,j,k-1)$ from the positive direction. The box $(i,j,k)$ is an unlabeled box.
         \item If $(i,j,k)\in \text{II}_{\bar{a}}$ for $a=1,2,3$ and if any of 
         $(i-1,j,k),\,(i,j-1,k),\, (i,j,k-1)$ is not a type III box labeled by $a$ that is included in the configuration, then $(i,j,k)$ is also included. In other words, the boxes $(i-1,j,k),\,(i,j-1,k),\,(i,j,k-1)$ that are not a type III box with label $a$ needs to be supported by the box $(i,j,k)\in \text{II}_{\bar{a}}$.

         \item If $(i,j,k)\in\text{III}$ and the boxes $(i-1,j,k),\,(i,j-1,k),\,(i,j,k-1)$ support other boxes from the positive direction, then $(i,j,k)$ is also included. If the supported boxes of $(i,j,k)$ belong to the cylinder extending in only one direction $a$, then $(i,j,k)$ is labeled by $a$. If the supported boxes belong to two cylinders, then $(i,j,k)$ must be unlabeled.

        
         

        \item If $(i,j,k)\in\text{III}$ admits any labeling without breaking the conditions above, it is called freely labeled. Such situation gives a two states degeneracy.
     \end{enumerate}
 \end{condition}

In this paper, to distinguish the type III unlabeled boxes with other unlabeled boxes, we label it with $-1$. For the case when the type III box is freely labeled, we label it with $0$.

When counting the size of the PT configuration, the unlabeled box (the box with label $-1$) contributes as $2$, while other boxes contribute as $1$. On the other hand, when counting the possible PT configurations, the configuration with a type III freely labeled box is counted twice.


\paragraph{Gaiotto--Rap\v{c}\'{a}k (GR) box counting rules}
Gaiotto--Rap\v{c}\'{a}k proposed a generalization of the PT box counting rules in \cite{Gaiotto:2020dsq} and gave transition rules between the box configurations.  
Again, the box configuration grows in the negative direction inside the hollow structure, and for the red positions, we have three kinds of boxes we can place.
\begin{enumerate}
    \item A \textbf{light box} that can support other light boxes.
    \item A \textbf{heavy box} that can support other heavy boxes or boxes at uncolored positions.
    \item An \textbf{ultra-heavy box} that can support boxes in all directions. Only the ultra-heavy box contributes two to the size of the PT configuration.
\end{enumerate}

\begin{condition}[\textbf{GR counting rules}]\label{cond:GRrules}The rules so that we can add boxes at the red positions are given as follows. \begin{enumerate} 
    \item All of the boxes placed at the red position, except of the heavy box, needs to be supported from the positive direction. Such heavy box at the red position does not have to be supported by boxes at the blue positions. If the heavy box is not supported by a box in the blue position belonging to the two legs $i,j\,(i\neq j)$, we then label it by $k $ such that $i\neq  j\neq k$. 
\item 
A heavy box placed at the red position can support heavy boxes and boxes at uncolored positions if the resulting configuration admits a consistent labeling. In particular, all the heavy boxes supported by it can be labeled by either 1, 2 or 3 in a way that on top of a red position labeled by $i$, there are only boxes at uncolored locations inside the cylinder along the axis $i$ or another heavy box labeled by $i$.

\item  A family of adjacent red boxes can support light boxes if they admit at least two consistent labels described above.
\end{enumerate}

\end{condition}

Additionally, we have the transition rules between GR box configurations.

\begin{condition}[\textbf{GR transition rules}]\label{cond:GRtransitionrule}

The transition rules between the GR box configurations are as follows. \begin{itemize}
\item  The light box and the heavy box can change into the ultra-heavy box or vice versa if the resulting configuration is allowed.
\item When a box at an uncolored location on top of the heavy box is removed and the resulting configuration admits multiple consistent labels, only the light box will be generated. 

\item  If an unsupported box is not supporting other boxes and becomes supported, only the light box is generated. In the opposite process, the heavy box can become unsupported but not the light one.
\item  Whenever the resulting configuration contains heavy boxes at red locations admitting two or more consistent labeling, both heavy boxes and light boxes can be generated.
\end{itemize}

\end{condition}

\paragraph{Relation between the two box configurations}
The PT box configurations and the GR box configurations are identified by the following correspondence:
\bea\label{eq:PT-GR-correspondence}
\text{type III freely labeled box (label 0)}\quad &\longleftrightarrow \quad \text{a pair of heavy and light box}, \\
\text{type III labeled box (label 1,2,3) }\quad &\longleftrightarrow \quad \text{heavy box},\\
\text{type III unlabeled box}\quad &\longleftrightarrow\quad  \text{ultra-heavy box}.
\eea
Note that in the PT box configurations, the light box will always appear in pair with a heavy box. The GR box counting rules include additional transition rules between the box configurations which is not obvious in the PT box counting rules. In particular, generation of light boxes is the nontrivial part. However, after proper identifications of the boxes, the total number of the box configurations for each level is the same.


\paragraph{Coordinate system}
The $\eps$-coordinates and $q$-coordinates of the boxes in the PT configurations are assigned by extending the coordinate system of the original plane partition given in \eqref{eq:qcoordinates}. For example, for the most left configuration of Fig.~\ref{fig:PT3oneleg}, the coordinate of the orange box is $(2,1,0)$ and the $\eps$- and $q$-coordinates are $\epsilon_{1}-\epsilon_{3}$ and $q_{1}q_{3}^{-1}$, where we set the coordinate of the origin to be trivial for simplicity. 

We denote the set of box configurations obeying the PT-box counting rules in Cond.~\ref{cond:PTrules} with this coordinate system as $\mathcal{PT}_{\lambda\mu\nu}$. On the other hand, we denote the set of box configurations obeying the GR-box counting rules in Cond.~\ref{cond:GRrules} as $\mathcal{GR}_{\lambda\mu\nu}$. An element of them is denoted as $\pi$ following the notation of the plane partition.

Note that the box configurations $\mathcal{PT}_{\lambda\mu\nu}$ and $\mathcal{GR}_{\lambda\mu\nu}$ are identified under the correspondence \eqref{eq:PT-GR-correspondence}. In particular, a PT box configuration $\pi$ including a freely labeled type III box is understood as a pair of GR box configurations $\Lambda,\Lambda'$, where $\Lambda$ ($\Lambda'$) contains only the light (heavy) box, with other boxes being the same. Namely, we have $\pi=(\Lambda,\Lambda')$.

\paragraph{A different description and coordinate system}
For later use, we introduce a different coordinate system by flipping the $q$-parameters and $\eps$-parameters as $q_{a}\rightarrow q_{a}^{-1}$, $\eps_a\rightarrow -\eps_a$ (see Fig.~\ref{fig:PTreverse}).  To distinguish with the previous coordinate system, we denote the set of PT, GR box configurations obeying this coordinate system as $\widetilde{\mathcal{PT}}_{\lambda\mu\nu}, \widetilde{\mathcal{GR}}_{\lambda\mu\nu}$, respectively. We also denote the set of the DT counting in this coordinate system as $\widetilde{\mathcal{DT}}_{\lambda\mu\nu}$.

\begin{figure}
    \centering
    \includegraphics[width=6cm]{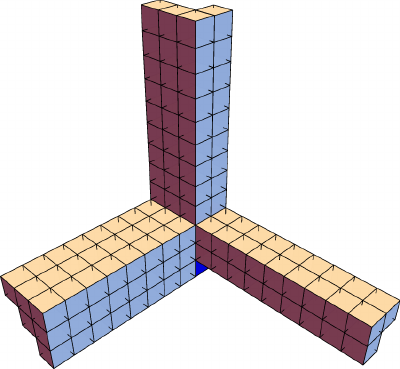}
    \caption{PT box counting hollow structure with the coordinates all reversed.}
    \label{fig:PTreverse}
\end{figure}


\subsection{PT3 counting and \texorpdfstring{$\eta$}{eta} vector}\label{sec:PT3counting-etavector}
The main claim of this section is that changing the reference vector to $\eta=\tilde{\eta}_{0}=(-1,-1,\ldots,-1)$, we obtain the equivariant PT3 vertex \cite{Pandharipande:2007sq,Pandharipande:2007kc}. Namely, the poles classified by the JK residue formalism are the PT3 configurations (in particular $\mathcal{PT}_{\lambda\mu\nu}$) discussed in the previous section. In particular, we will focus on the three examples \eqref{eq:DTexample1}, \eqref{eq:DTexample2}, \eqref{eq:DTexample3} in section~\ref{sec:PToneleg}, \ref{sec:PTtwolegs}, and \ref{sec:PTthreelegs}, respectively.

\begin{definition}\label{def:PTvertex-JKresidue}
The PT3 partition function with boundary conditions is given by
\bea
\,&\mathcal{Z}^{\PT\tbar\JK}_{\bar{4};\lambda\mu\nu}[\mathfrak{q},q_{1,2,3,4}]=\sum_{k=0}^{\infty}\mathfrak{q}^{k}\mathcal{Z}^{\PT\tbar\JK}_{\bar{4};\lambda\mu\nu}[k],
\eea
where
\bea
\,&\mathcal{Z}^{\PT\tbar\JK}_{\bar{4};\lambda\mu\nu}[k]=\frac{1}{k!}\left(\frac{\sh(-\epsilon_{14,24,34})}{\sh(-\epsilon_{1,2,3,4})}\right)^{k}\oint_{\tilde{\eta}_{0}} \prod_{I=1}^{k}\frac{d\phi_{I}}{2\pi i}\prod_{I=1}^{k}\mathcal{Z}^{\D6_{\bar{4}}\tbar\D2\tbar\D0}_{\DT;\lambda\mu\nu}(\fra,\phi_{I})\prod_{I<J}^{k}\mathcal{Z}^{\D0\tbar\D0}(\phi_{I},\phi_{J}).
\eea
The integrand is the same but the reference vector is $\tilde{\eta}_{0}=-\eta_{0}$.
\end{definition}

\begin{proposition}\label{prop:PTJK-pole}
    When at least one of the three-legs is trivial, during the JK-residue procedure, only single order pole appears. When the three-legs are nontrivial, second order poles appear.
\end{proposition}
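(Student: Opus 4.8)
The plan is to follow the iterated JK residue prescribed by $\tilde\eta_0$ in Definition~\ref{def:PTvertex-JKresidue} step by step, and at each step to bound the order of the pole that is actually selected. The integrand has poles from two sources: the framing node $\mathcal{Z}^{\D6_{\bar 4}\tbar\D2\tbar\D0}_{\DT;\lambda\mu\nu}(\fra,\phi_I)$, which depends only on $\phi_I-\fra$, and the pairwise factors $\mathcal{Z}^{\D0\tbar\D0}(\phi_I,\phi_J)$. Since $\tilde\eta_0=-\eta_0$, Examples~2 and~5 of Section~\ref{sec:JK-residue} tell us that the selected framing poles are exactly those carried by the factors of the form $1/\sh(\fra-\phi_I-\cdots)$ (charge vector $-e_I$), and not those of the form $1/\sh(\phi_I-\fra-\cdots)$ (charge vector $+e_I$) that the DT chamber picks.

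First I would settle the framing-node contribution. By the construction of Section~\ref{sec:DT-framedquiver} the framing node is a product of the simple ratios $\mathcal{Z}^{\D6_{\bar 4}\tbar\D0}$ over $s(\vec Y)$ and $\mathcal{Z}^{\overline{\D6}_{\bar 4}\tbar\D0}$ over $p_1(\vec Y)$, together with the \emph{squared} factors $\mathcal{Z}^{\overline{\D6}_{\bar 4}\tbar\D0}(\,\cdot\,,\phi_I)^2$ over the full-convex corners $p_2(\vec Y)$. The $\tilde\eta_0$-relevant denominator of $\mathcal{Z}^{\overline{\D6}_{\bar 4}\tbar\D0}(\frb,\phi_I)$ is the single factor $\sh(\frb+\epsilon_4-\phi_I)$, so each corner in $p_1(\vec Y)$ produces a \emph{simple} pole while each corner in $p_2(\vec Y)$ produces a \emph{double} pole. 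The decisive combinatorial input is that $p_2(\vec Y)=\varnothing$ whenever at least one leg is trivial: a full-convex corner is an incoming valence-$6$ vertex of $\mathcal C_{\vec Y}$, which can only sit where all three asymptotic cylinders overlap, hence requires $\lambda,\mu,\nu$ all nonempty. Thus for one and two legs every framing pole selected by $\tilde\eta_0$ is simple, whereas for three nontrivial legs the squared factor is present; this is already visible in \eqref{eq:DTexample3}, where $p_2(\vec Y)=\{(2,2,2)\}$ and the relevant factor is $\big(\sh(\fra-\phi_I-\epsilon_4)/\sh(\fra-\phi_I)\big)^2$.

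The main obstacle is then to show that, in the one- and two-leg cases, the iterated residue cannot manufacture a higher-order pole out of simple ones through the $\mathcal{Z}^{\D0\tbar\D0}$ factors. The poles that $\tilde\eta_0$ selects are precisely the reversed hollow-structure box configurations obeying the melting rules of Cond.~\ref{cond:PTrule-oneleg} and Cond.~\ref{cond:PTrule-twolegs}, which are ordinary plane-partition melting rules read in the $-(1,1,1)$ direction. When a newly integrated variable is forced to a position supported by several already-fixed boxes $\phi_{J_1},\phi_{J_2},\dots$, each neighbour contributes a pole of $\mathcal{Z}^{\D0\tbar\D0}$ at the same point; the apparent coincidence is reduced to first order by the numerator double zero $\sh(\phi_I-\phi_J)\sh(\phi_J-\phi_I)$ at coincidence together with the $\epsilon_{14,24,34}$-shifted numerators, which is exactly the cancellation responsible for the simplicity of the DT plane-partition residues in Thm.~\ref{thm:tetraJKpoles} and Thm.~\ref{thm:DTvertex-JKresidue}. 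The work here is to transfer that cancellation verbatim to the $\tilde\eta_0$ chamber; since the one- and two-leg PT configurations are combinatorially reversed plane partitions, the identical local identities apply and all poles stay of first order.

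Finally, for the three-leg direction I would exhibit the second-order pole explicitly on the minimal example $(\Bbox,\Bbox,\Bbox)$ of \eqref{eq:DTexample3}. The full-convex corner $(2,2,2)$ has $\epsilon$-coordinate $\fra+\epsilon_1+\epsilon_2+\epsilon_3=\fra-\epsilon_4$, so the framing node carries $\big(\sh(\fra-\phi_1-\epsilon_4)/\sh(\fra-\phi_1)\big)^2$, a double pole at $\phi_1=\fra$. This pole lies on the $-e_1$ hyperplane, so it is selected by $\tilde\eta_0$ (since $\tilde\eta_0\in\mathrm{Cone}(-e_1)$) although it is \emph{not} selected by $\eta_0$ in the DT chamber; the order is genuinely two because the numerator there equals $\sh(-\epsilon_4)^2\neq0$ while the remaining factors $\sh(-\epsilon_{124})\sh(-\epsilon_{134})\sh(-\epsilon_{234})$ and the denominators $\sh(-\epsilon_{12})\sh(-\epsilon_{23})\sh(-\epsilon_{13})$ are finite and nonvanishing. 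This double pole is the type-III origin box, whose freely-labelled two-state degeneracy in Cond.~\ref{cond:PTrules} is precisely what the second order encodes, which completes the argument.
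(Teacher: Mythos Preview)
The paper does not give a standalone proof of this proposition; it establishes the statement by explicit computation on the minimal examples in Sections~\ref{sec:PToneleg}--\ref{sec:PTthreelegs} and the further examples of Appendix~\ref{app:sec-PT3vertex-examples}. Your approach is more structural and in several respects goes beyond what the paper does, but there is a genuine gap in the three-leg direction.

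For the one/two-leg direction your key observation is correct and is the heart of the matter: full-convex corners require all three cylinders to intersect, so $p_2(\vec Y)=\varnothing$ whenever one leg is trivial, and the framing node carries only simple $\tilde\eta_0$-selected poles. Your D0--D0 argument, however, is only a sketch. Saying that the one/two-leg PT configurations are ``reversed plane partitions'' and therefore the DT cancellations transfer verbatim is not quite enough: you should state precisely which numerator zeros cancel which degenerate coincidences (the relevant zeros are the $\sh(\phi_I-\phi_J-\epsilon_{ab})$ factors with $a,b\in\{1,2,3\}$ matching the pair of supporting directions, not the coincidence zero $\sh(\phi_I-\phi_J)$ you cite). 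The cleanest way to phrase it is that in the absence of type~III boxes the combinatorics of Cond.~\ref{cond:PTrule-oneleg}--\ref{cond:PTrule-twolegs} is literally a plane-partition melting rule, so the pole analysis is identical to Thm.~\ref{thm:tetraJKpoles}.

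For the three-leg direction your argument is incomplete. You attribute the second-order pole to the squared $p_2$ factor in the framing node and exhibit it on $(\Bbox,\Bbox,\Bbox)$. But $p_2(\vec Y)$ can be empty even when all three legs are nontrivial: the paper's own Fig.~\ref{fig:minimal-pp} example with $\lambda=\{3,2,1\},\mu=\{2,1\},\nu=\{3,1\}$ has $p_2(\vec Y)=\{\}$. In such cases the second-order pole arises by a different mechanism, visible in the Appendix examples (e.g.\ $(\Bbox,\{1,1\},\Bbox)$ in \S\ref{app:sec-PT3vertex-examples}): at a type~III position the framing hyperplane $\fra-\phi_I=0$ and a D0--D0 hyperplane $\phi_J-\phi_I-\epsilon_a=0$ coincide after an adjacent blue box at $\fra+\epsilon_a$ has been placed, producing a genuinely second-order degenerate pole with no squared factor in sight. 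The correct invariant distinguishing the two regimes is therefore not $p_2(\vec Y)$ but the set $\mathcal{B}_{\lambda\cap\mu\cap\nu}$ of type~III boxes, which is nonempty if and only if all three legs are nontrivial. Your single example establishes existence but not the universal claim; to close the gap you would need to argue that a type~III box always forces a second-order coincidence in the $\tilde\eta_0$ iterated residue, whether through $p_2$ or through the framing/D0--D0 degeneracy just described.
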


\begin{theorem}\label{thm:PTvertex-expansion}
    The poles are classified by the PT box configurations $\mathcal{PT}_{\lambda\mu\nu}$ with nontrivial equivariant weights:
    \bea
    \mathcal{Z}^{\PT\tbar\JK}_{\bar{4};\lambda\mu\nu}[\fq,q_{1,2,3,4}]=\sum_{\pi\in \mathcal{PT}_{\lambda\mu\nu} }\fq^{|\pi|}\mathcal{Z}^{\PT\tbar\JK}_{\bar{4};\lambda\mu\nu}[\pi].
    \eea
    
\end{theorem}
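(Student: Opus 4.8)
The plan is to track how flipping the reference vector from $\eta_0$ to $\tilde\eta_0=-\eta_0$ changes the poles selected by the JK prescription, with the integrand of Definition~\ref{def:PTvertex-JKresidue} held fixed. The backbone is the tree characterization of admissible charge-vector configurations recalled in Example~5 of section~\ref{sec:JK-residue}: for $\eta_0$ each tree has a root $+e_J$ and edges $e_K-e_J$, while for $\tilde\eta_0$ every sign flips, so the roots become $-e_J$ and the edges become $e_J-e_K$. Through the dictionary \eqref{eq:chargevector-corresp}, the root $+e_J$ is the fundamental pole $\phi_J=\fra+\cdots$ selected in the DT count (Theorem~\ref{thm:DTvertex-JKresidue}), which grows boxes into the positive octant. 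The sign flip replaces these by roots $-e_J$, which are supplied precisely by the anti-fundamental ($\overline{\D6}$) denominators of the framing node \eqref{eq:DTflavornode-def} — exactly the factors that stayed inert for $\eta_0$. First I would verify that in \eqref{eq:DTflavornode-def} the charge-$(-e_J)$ denominators are those arising from the $\mathsf J_w$ and $\tilde{\mathsf J}_{w,\alpha}$ multiplets, i.e. from the hollow-cylinder (D2-rod) data, so that the root poles now sit at box positions on the negative side of the origin.

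The second step is to convert this into a bijection between the selected poles and $\mathcal{PT}_{\lambda\mu\nu}$, together with the correct $\fq$-grading. Reading the edges $e_J-e_K$ through \eqref{eq:chargevector-corresp}, each arrow $J\to K$ imposes $\phi_K=\phi_J-\epsilon_{1,2,3,4}$; after the $\epsilon_4$ pole is cancelled by the numerator of $\mathcal{Z}^{\D0\tbar\D0}$ exactly as in the DT analysis, the surviving shifts are $-\epsilon_{1,2,3}$, so the oriented JK tree grows the configuration into the three negative directions starting from the cylinder boxes — the geometric content of the PT hollow-structure growth. I would then match, position by position, the cancellation pattern of the remaining numerators of \eqref{eq:DTflavornode-def} and $\mathcal{Z}^{\D0\tbar\D0}$ against the support conditions of Condition~\ref{cond:PTrules}: a pole survives the iterated residue if and only if the corresponding box is addable under the PT rule, and the flag ordering dictated by $\kappa(F,Q_\ast)$ reproduces the admissible orders in which boxes may be stacked. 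The residue at each surviving pole is then by definition the weight $\mathcal{Z}^{\PT\tbar\JK}_{\bar4;\lambda\mu\nu}[\pi]$, and the exponent of $\fq$ equals the number of integration variables, i.e. $|\pi|$. For the one-leg and two-legs cases all poles are simple by Proposition~\ref{prop:PTJK-pole}, so this reduces to the direct bookkeeping carried out in sections~\ref{sec:PToneleg} and \ref{sec:PTtwolegs}.

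The hard part will be the three-legs case, where Proposition~\ref{prop:PTJK-pole} forces second-order poles: the squared anti-fundamental factor $\bigl(\sh(\fra-\phi_I-\epsilon_4)/\sh(\fra-\phi_I)\bigr)^2$ in \eqref{eq:DTexample3} produces a double pole at the type~III (red) positions, whose residue requires a $\phi$-derivative as in Example~3. I expect the crux to be the precise correspondence between this derivative structure and the labelling data of Condition~\ref{cond:PTrules}: an unlabeled type~III box (contributing $2$ to $|\pi|$) should match the full second-order residue, a box with label $1,2,3$ one of the two simple factors, and a freely labeled box the two-state degeneracy. Because a naive iterated-residue reading is obstructed by the coincident poles, I would resolve the double pole by a small generic deformation, evaluate the two resulting simple residues, and then take the colliding limit, so that the two-fold degeneracy and the factor-of-two in the size count emerge controllably rather than by hand. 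Showing that this limit reproduces exactly the multiplicities and weights prescribed by $\mathcal{PT}_{\lambda\mu\nu}$ is the main technical obstacle; once it is established, summing the surviving residues over all admissible $\tilde\eta_0$-trees assembles the claimed generating function $\sum_{\pi\in\mathcal{PT}_{\lambda\mu\nu}}\fq^{|\pi|}\mathcal{Z}^{\PT\tbar\JK}_{\bar4;\lambda\mu\nu}[\pi]$.
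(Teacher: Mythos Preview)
Your overall strategy---flipping the tree characterization of Example~5 under $\eta_0\to\tilde\eta_0$, identifying the anti-fundamental $\overline{\D6}$ denominators as the new roots, and tracking which $-\epsilon_{1,2,3,4}$ shifts survive cancellation---is exactly what the paper does, though the paper proceeds by explicit case-by-case computation (sections~\ref{sec:PToneleg}--\ref{sec:PTthreelegs}) rather than by a general argument.

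Your handling of the three-legs case needs correction, however. The paper does \emph{not} deform the double pole into two simple poles; it evaluates the second-order residue directly via the derivative formula (as in Example~3). More importantly, your proposed dictionary between the second-order structure and the PT labelling is misaligned. At level one, the single second-order residue at $\phi_1=\fra$ produces \emph{one} term whose unrefined limit is $2$---this is the freely-labeled box $\ket{0^0}$, with the two-state degeneracy visible only after $q_4\to 1$. The unlabeled box $\ket{0^{-1}}$ (size $2$, ultra-heavy) does \emph{not} arise from the second-order residue at level one; it appears at level \emph{two} when two integration variables $(\phi_1,\phi_2)$ both land at $\fra$, i.e.\ the pole configuration $(0,0)$, which survives precisely because the usual $\sh(\phi_1-\phi_2)$ cancellation is neutralized by the derivative. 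The paper is explicit (Theorem~\ref{thm:GRvertex-expansion} and the remark following it) that the JK residue cannot separate the light and heavy contributions---they always come paired---so a deformation that tries to resolve them into two independent simple-pole contributions would not reproduce the PT expansion as stated. If you pursue the deformation route, you would need to show that the colliding limit re-bundles the two deformed residues into the single freely-labeled weight, which is more work than the direct derivative evaluation the paper uses.
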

 Note here that when counting the size of the configuration $|\pi|$, the unlabeled box is counted as~$2$.

We also have an expression in the GR box configurations $\mathcal{GR}_{\lambda\mu\nu}$ after the identifications \eqref{eq:PT-GR-correspondence}.
\begin{theorem}\label{thm:GRvertex-expansion}
    The poles are classified by the GR box configurations $\mathcal{GR}_{\lambda\mu\nu}$ with nontrivial equivariant weights after the identifications \eqref{eq:PT-GR-correspondence}.
    \bea
    \mathcal{Z}^{\PT\tbar\JK}_{\bar{4};\lambda\mu\nu}[\fq,q_{1,2,3,4}]=\sum_{\pi\in \mathcal{GR}_{\lambda\mu\nu} }\fq^{|\pi|}\mathcal{Z}^{\PT\tbar\JK}_{\bar{4};\lambda\mu\nu}[\pi].
    \eea
    When identifying the GR configurations with the poles, the light boxes will always appear in pair with the heavy boxes. Namely, when $\Lambda$ contains a light box, we have a configuration with a heavy box $\Lambda'$ where other boxes are the same and the GR configuration is understood as this pair $\pi=(\Lambda,\Lambda')$. In other words, we cannot determine the partition function of a configuration containing only the light box, but the appearing partition function correspond to the configuration of the pair of the heavy and light boxes.

\end{theorem}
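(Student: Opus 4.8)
The plan is to obtain Theorem~\ref{thm:GRvertex-expansion} as a direct consequence of Theorem~\ref{thm:PTvertex-expansion}, reorganizing the established expansion over $\mathcal{PT}_{\lambda\mu\nu}$ into one over $\mathcal{GR}_{\lambda\mu\nu}$ by means of the dictionary \eqref{eq:PT-GR-correspondence}. Since Theorem~\ref{thm:PTvertex-expansion} already identifies the poles picked up by the JK residue with reference vector $\tilde{\eta}_{0}$ with the PT box configurations and fixes their equivariant weights $\mathcal{Z}^{\PT\tbar\JK}_{\bar{4};\lambda\mu\nu}[\pi]$, the only remaining work is combinatorial: to exhibit the correspondence \eqref{eq:PT-GR-correspondence} as a weight- and size-preserving reindexing of the same set of residues.

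First I would dispose of the two straightforward lines of \eqref{eq:PT-GR-correspondence}. A type III box with a genuine label $a\in\{1,2,3\}$ is sent to a heavy box and a type III unlabeled box (label $-1$) is sent to an ultra-heavy box, while all type $\mathrm{I}^{-}$ and type II boxes are carried over unchanged; by Cond.~\ref{cond:PTrules} and Cond.~\ref{cond:GRrules} this is a bijection between the configurations not containing a freely-labeled box. The size function is preserved because the unlabeled box and the ultra-heavy box each contribute $2$ to $|\pi|$ whereas every other box contributes $1$, so the powers of $\fq$ agree. The weight is preserved as well, because a GR configuration and its PT preimage specify literally the same pole $\phi_{\pi}$ of the integrand in Def.~\ref{def:PTvertex-JKresidue}, hence the same iterated residue. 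These terms therefore match one-to-one between the two sums.

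The substantive case is the freely-labeled type III box (label $0$), which \eqref{eq:PT-GR-correspondence} instructs us to identify with a light--heavy pair. Here I would send the single PT datum to the ordered pair $(\Lambda,\Lambda')$ of GR configurations obtained by replacing the red box with a light box and with a heavy box respectively, all other boxes fixed; the light, heavy, and label-$0$ boxes each count as $1$ towards the size, so $|\Lambda|=|\Lambda'|=|\pi|$. The task is then to show that the contribution of this pair to the GR sum reproduces the contribution of the freely-labeled configuration to the PT sum, i.e.\ that the two-state degeneracy of Cond.~\ref{cond:PTrules} is exactly matched by the splitting into $\Lambda$ and $\Lambda'$, and that the weight attached to the pair is the common value $\mathcal{Z}^{\PT\tbar\JK}_{\bar{4};\lambda\mu\nu}[\pi]$ dictated by Theorem~\ref{thm:PTvertex-expansion}. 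I would also check, using the GR transition rules of Cond.~\ref{cond:GRtransitionrule}, that this pairing is internally consistent across all levels, so that the count-matching asserted after \eqref{eq:PT-GR-correspondence} holds.

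The hard part will be the weight bookkeeping for the light box. Because the three-legs integrand develops second-order poles (Prop.~\ref{prop:PTJK-pole}), the freely-labeled red box is tied to a single, genuinely higher-order pole of the integrand rather than to two independent simple poles: the heavy configuration $\Lambda'$ corresponds to an honest residue, but the pure light configuration $\Lambda$ carries no residue of its own. The subtle claim to establish---and the one recorded as a caveat in the statement---is that only the combined contribution of the pair $(\Lambda,\Lambda')$ is extractable from the JK residue, so that the partition function is well defined at the level of pairs even though $\mathcal{Z}^{\PT\tbar\JK}_{\bar{4};\lambda\mu\nu}[\Lambda]$ by itself is not. I would make this precise by tracing, through the iterated-residue prescription, how the freely-labeled pole arises from the collision of charge hyperplanes, and by arguing that the derivative appearing in the higher-order residue is precisely what encodes the two-fold degeneracy; matching this against the multiplicity in Cond.~\ref{cond:PTrules} then closes the identification with $\mathcal{GR}_{\lambda\mu\nu}$ and completes the proof.
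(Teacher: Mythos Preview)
Your approach is correct and matches the paper's treatment: the paper presents Theorem~\ref{thm:GRvertex-expansion} as an immediate reformulation of Theorem~\ref{thm:PTvertex-expansion} under the dictionary \eqref{eq:PT-GR-correspondence}, with the light--heavy pairing and the indecomposability of the freely-labeled contribution verified through the explicit three-legs computations in section~\ref{sec:PTthreelegs} (see the ``Comparison with PT and GR rules'' discussion there). Your write-up is somewhat more systematic about the size and weight bookkeeping than the paper, but the logic is the same.
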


When performing the JK-residue formalism, as discussed in Example 4 and 5 of section~\ref{sec:JK-residue}, the flag construction provides a way to perform the iterative residue. In the D6-D0 case, the possible ways to perform the iterative residue correspond to how one can stack boxes to form a plane partition. In other words, we can define some \textit{transition rules} between non-zero JK-residue configurations from the flag construction. This procedure is also applicable to the PT counting. We will not encounter nontrivial examples of the transition rules in the main text, but they are discussed in Appendix~\ref{app:sec-PT3vertex-examples}.

\begin{proposition}
    Consider two PT box configurations $\pi_{1},\pi_{2}$ whose corresponding poles are denoted as $\phi_{\pi_1}$ and $\phi_{\pi_2}$. Each of them are the set of the $\eps$-coordinates of the boxes in the two PT configurations and the JK-residue evaluated at $\phi_{\pi_{1,2}}$ are non-zero. Assume that $\phi_{\pi_2}=\phi_{\pi_1}\cup\{\phi_{\ast}\}$ and $|\pi_2|=|\pi_1|+1=k+1$. Namely, the PT configuration $\pi_2$ is a configuration with one box added to $\pi_1$. 

    If we can construct a flag whose $\kappa$-matrix includes the reference vector and the iterative residue can be performed in the order
    \bea
        \underset{\phi_{k+1}=\phi_\ast}{\Res}\,\,
        \underset{\phi_{\pi_1}}{\Res},
    \eea
    then the transition $\pi_1\rightarrow \pi_2$ is allowed. In other words, the box at $\phi_\ast$ is an addable box of the PT configuration $\pi_1$.
    
\end{proposition}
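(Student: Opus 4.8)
The plan is to reduce the statement to the iterative structure of the JK-residue established in Section~\ref{sec:JK-residue} together with the box-adding mechanism of Section~\ref{sec:infinite-product-reg}. The content of the proposition is that a flag ordering which peels off $\phi_\ast$ last is exactly the certificate that $\phi_\ast$ labels an addable box of the PT configuration $\pi_1$. First I would fix the charge vectors of the integrand: by \eqref{eq:chargevector-corresp} and the structure of $\mathcal{Z}^{\D6_{\bar{4}}\tbar\D2\tbar\D0}_{\DT;\lambda\mu\nu}$, the poles lie on hyperplanes with charge vectors among $\{\pm e_I,\,e_I-e_J\}$. For $\eta=\tilde{\eta}_{0}=(-1,\dots,-1)$ the analysis of Example~5 in Section~\ref{sec:JK-residue} shows that the surviving charge vectors organize into an oriented tree with sign-flipped labels (root vertices carrying $-e_J$, arrows $J\to K$ carrying $e_J-e_K$), and that a flag whose $\kappa$-matrix contains $\tilde{\eta}_{0}$ corresponds to a compatible linear ordering of the iterative residues along this tree.

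Second, I would use the hypothesis $\phi_{\pi_2}=\phi_{\pi_1}\cup\{\phi_\ast\}$ with $|\pi_2|=|\pi_1|+1=k+1$ together with the assumed ordering $\underset{\phi_{k+1}=\phi_\ast}{\Res}\,\underset{\phi_{\pi_1}}{\Res}$. This ordering means the flag for $\pi_2$ restricts on its first $k$ steps to a flag supported in the subspace spanned by the $\pi_1$-charge vectors, with the charge vector for $\phi_\ast$ appended as the final leaf. Consequently the iterative residue factorizes: applying $\underset{\phi_{\pi_1}}{\Res}$ first reproduces, up to the overall normalization, the weight $\mathcal{Z}^{\PT\tbar\JK}_{\bar{4};\lambda\mu\nu}[\pi_1]$ times a residual function of $\phi_{k+1}$, which is precisely the integrand obtained by inserting the factors $\prod_{\scube\in\pi_1}\mathcal{Z}^{\D0\tbar\D0}(c_{123,\fra}(\cube),\phi_{k+1})$ exactly as in Section~\ref{sec:infinite-product-reg}. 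Taking $\underset{\phi_{k+1}=\phi_\ast}{\Res}$ of this residual function then extracts the contribution of the new box.

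Third, the assumed flag has $\delta(F,\tilde{\eta}_{0})=1$, i.e. $\tilde{\eta}_{0}\in\text{Cone}(\kappa(F,Q_\ast))$ with final entry $\kappa_{k+1}=\sum_{Q\in Q_\ast}Q$. Combined with the factorization above, this organizes the residue at $\phi_{\pi_2}$ as the weight of $\pi_1$ followed by the residue that extracts the box at $\phi_\ast$; since the residue at $\phi_{\pi_2}$ is non-zero by hypothesis, the final residue at $\phi_\ast$ is a genuine (non-cancelling) contribution, which is exactly the statement that $\phi_\ast$ is an addable box of $\pi_1$. Invoking Theorem~\ref{thm:PTvertex-expansion}, which identifies the complete set of non-vanishing JK poles with $\mathcal{PT}_{\lambda\mu\nu}$, it then follows that $\pi_2\in\mathcal{PT}_{\lambda\mu\nu}$ is reached from $\pi_1$ by adding the single box at $\phi_\ast$, so the transition $\pi_1\to\pi_2$ is allowed.

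The main obstacle is the three-legs case, where by Proposition~\ref{prop:PTJK-pole} second order poles appear. There the final residue $\underset{\phi_{k+1}=\phi_\ast}{\Res}$ is taken at a double pole and therefore involves a derivative of the residual integrand, as in the example $\underset{\phi=\fra}{\Res}\,\sh(\phi-\mathfrak{b})/\sh(\phi-\fra)^{2}=\tfrac{1}{2}\ch(\mathfrak{b}-\fra)$ of Section~\ref{sec:JK-residue}. I would have to check that the iterative residue still factorizes cleanly through the degenerate intersection, i.e. that the $\delta(F,\tilde{\eta}_{0})$ cone condition at the degenerate point selects precisely the ordering compatible with peeling $\phi_\ast$ last, and that the derivative does not spuriously vanish when the new box sits at a colored (type III) position. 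Handling this degenerate, second-order situation, rather than the first-order peeling used for the one- and two-legs cases, is the delicate part of the argument.
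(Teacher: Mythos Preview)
The paper does not supply a proof for this proposition. It is stated immediately after the paragraph recalling Examples~4 and~5 of Section~\ref{sec:JK-residue} and is treated essentially as a definition: the phrase ``the transition $\pi_1\to\pi_2$ is allowed'' is being \emph{defined} to mean that a flag with the stated ordering exists, and ``addable box'' in the JK sense is then identified with this notion. The paper proceeds directly to the one-leg examples without further argument.

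Your writeup therefore goes beyond what the paper provides, and the line of reasoning you sketch is the natural one and is consistent with how the paper uses the proposition in the worked examples. One caution: you lean on Theorem~\ref{thm:PTvertex-expansion} to conclude $\pi_2\in\mathcal{PT}_{\lambda\mu\nu}$, but in the paper's logic the proposition \emph{precedes} the verification of that theorem in the examples, so invoking it here is circular if you intend the proposition as an independent lemma. If instead you read the proposition as definitional (which is how the paper uses it), then your factorization argument is really an explanation of why this definition is sensible, and the appeal to Theorem~\ref{thm:PTvertex-expansion} is unnecessary. Your remarks on the three-legs second-order case are apt: the paper itself acknowledges that the degenerate-pole situation is subtle and handles it only through explicit low-level computation rather than a general argument.
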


\subsubsection{One-leg}\label{sec:PToneleg}
Let us first use the PT box counting rule in Cond.~\ref{cond:PTrule-oneleg} given in section~\ref{sec:PTrule-coord}. We first start from the minimal plane partition with the boundaries $(\varnothing,\varnothing,\Bbox)$ and then extend it in the negative direction. Removing the boxes from the positive quadrant gives the following hollow structure extending in the negative direction. The possible configurations for low levels are then given as
\bea\label{eq:PToneleg-figure}
\adjustbox{valign=c}{\includegraphics[width=1cm]{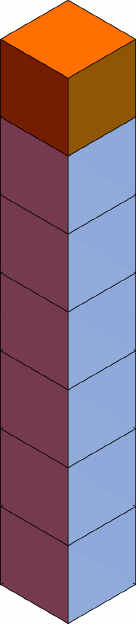}}\qquad \quad \quad \adjustbox{valign=c}{\includegraphics[width=1cm]{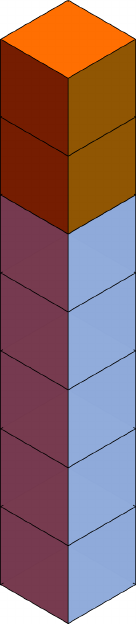}}\qquad \quad \quad \adjustbox{valign=c}{\includegraphics[width=1cm]{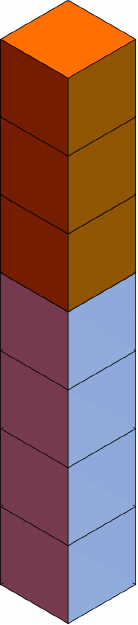}}
\eea
Since we omitted the positive quadrant from the setup, the $\epsilon$-coordinates of the boxes are given as $\fra-\epsilon_{3}, \fra-2\epsilon_{3},\ldots$ and they are decreasing by $\epsilon_{3}$ units.

Let us derive this PT3 counting rule using the DT integral formula \eqref{eq:DTexample1}:
\bea
&\prod_{I=1}^{k}\mathcal{Z}^{\D6_{\bar{4}}\tbar\D2\tbar\D0}_{\DT;\varnothing\varnothing\,\Cbox}(\fra,\phi_{I})\prod_{I< J}\mathcal{Z}^{\D0\tbar\D0}(\phi_{I},\phi_{J})\\
=&\prod_{I=1}^{k}\frac{\sh(\phi_{I}-\fra-\epsilon_{14,24})}{\sh(\phi_{I}-\fra-\epsilon_{1,2})}\frac{\sh(\fra-\phi_{I}-\epsilon_{34})}{\sh(\fra-\phi_{I}-\epsilon_{3})}\prod_{I\neq J}\frac{\sh(\phi_{I}-\phi_{J})\sh(\phi_{I}-\phi_{J}-\epsilon_{14,24,34})}{\sh(\phi_{I}-\phi_{J}-\epsilon_{1,2,3,4})}.
\eea
Choosing $\eta=(-1,\ldots,-1)$, the poles picked up and the corresponding charge vectors are (see Example~5 of section~\ref{sec:JK-residue})
\bea
-e_{I},&\leftrightarrow -\phi_1+\fra-\eps_3=0,\\
e_{I}-e_{J},&\leftrightarrow \phi_1-\phi_J=\eps_{1,2,3,4}.
\eea
Using the Weyl invariance, we may assume that the poles are picked up in the order $\phi_{1},\phi_{2},\ldots,\phi_{k}$. So that the reference vector $\tilde{\eta}_0$ is included in the cone of the charge vectors, the charge vectors need to be chosen in a way such as $\{-e_{1},e_{1}-e_{2},\ldots\}$. Namely, we have $\phi_{I}=\fra-\epsilon_{3}$ and $\phi_{I}=\phi_{J}-\epsilon_{1,2,3,4}$ for $I>J$. 

For level one, the possible pole is
\bea
\phi_1=\fra-\eps_3.
\eea
For the second level, when evaluating the second pole $\phi_{2}$, we have two choices $\phi_{2}=\fra-\epsilon_{3}$ or $\phi_{2}=\phi_{1}-\epsilon_{1,2,3,4}$. For the case $\phi_{2}=\fra-\epsilon_{3}$, the pole is canceled with the numerator $\sh(\phi_{1}-\phi_{2})$ and thus the residue is zero. The poles $\phi_{2}=\phi_{1}-\epsilon_{1,2,4}=\fra-\epsilon_{31,32,34}$ are canceled by $\sh(\phi_{2}-\fra-\epsilon_{14,24})$ and $\sh(-\phi_{2}+\fra-\epsilon_{34})$, respectively after using $\sum_{a\in\four}\epsilon_{a}=0$. Doing this procedure recursively, one can see that the non-zero residues come from the poles
\bea
\phi_{I}=\fra-\epsilon_{3}I,\quad  I=1,\ldots,k.
\eea
The fact that the poles are evaluated in the order $\phi_{1},\ldots,\phi_{k}$ correspond to the PT3 box counting rule explained above.

The JK-residue is obtained as
\bea\label{eq:PToneleg-JK-partfunct}
\mathcal{Z}^{\PT\tbar\JK}_{\bar{4};\varnothing\varnothing\,\Cbox}[k]&=(-1)^{k}\left(\frac{\sh(-\epsilon_{14,24,34})}{\sh(-\epsilon_{1,2,3,4})}\right)^{k}\underset{\phi_{k}=\fra-k\epsilon_{3}}{\Res}\cdots \underset{\phi_{2}=\fra-2\epsilon_{3}}{\Res}\underset{\phi_{1}=\fra-\epsilon_{3}}{\Res}\prod_{I=1}^{k}\mathcal{Z}^{\D6_{\bar{4}}\tbar\D2\tbar\D0}_{\DT;\varnothing\varnothing\,\Cbox}(\fra,\phi_{I})\prod_{I<J}^{k}\mathcal{Z}^{\D0\tbar\D0}(\phi_{I},\phi_{J})\\
&=\prod_{i=1}^{k}\frac{\sh(\epsilon_{4}+i\epsilon_{3})}{\sh(i\epsilon_{3})}=\prod_{i=1}^{k}\frac{[q_{4}q_{3}^{i}]}{[q_{3}^{i}]}
\eea
which gives the PT3 partition function
\bea
\mathcal{Z}^{\PT\tbar\JK}_{\bar{4};\varnothing\varnothing\,
\Cbox}[\mathfrak{q},q_{1,2,3,4}]=\sum_{k=0}^{\infty}\mathfrak{q}^{k}\mathcal{Z}^{\PT\tbar\JK}_{\bar{4};\varnothing\varnothing\,\Cbox}[k]=\sum_{k=0}^{\infty}\mathfrak{q}^{k}\prod_{i=1}^{k}\frac{[q_{4}q_{3}^{i}]}{[q_{3}^{i}]}=\PE\left[\mathfrak{q}\frac{[q_{34}]}{[q_{3}]}\right]
\eea
where in the last equation we used the fact that we have a nice PE formula for this case \cite{Nekrasov:2009JJM}. Note that the overall factor $(-1)^{k}$ comes from the JK-residue computation \eqref{eq:JKresidue-sign}.

\paragraph{General case}
Let us briefly discuss what will happen for the general one-leg case $(\varnothing,\varnothing,\nu)$ (see Appendix~\ref{app:sec-PT3vertex-examples} for explicit examples). For this case, the pole structure of the framing node contribution takes the form as
\bea
\mathcal{Z}^{\D6_{\bar{4}}\tbar\D2\tbar\D0}_{\DT;\varnothing\varnothing\nu}(\fra,\phi_{I})&=\prod_{(i,j)\in A(\nu)} \frac{\sh(\phi_I-\fra-(i-1)\eps_1-(j-1)\eps_2-\eps_4)}{\sh(\phi_I-\fra-(i-1)\eps_1-(j-1)\eps_2)}\\
&\times \prod\limits_{(i,j)\in R(\nu)}\frac{\sh(\fra+i\eps_1+j\eps_2-\phi_I)}{\sh(\fra+(i-1)\eps_1+(j-1)\eps_2-\eps_3-\phi_I)}.
\eea
Taking the reference vector to be $\eta=\eta_0$, the poles picked up at the one-instanton level are
\bea
\phi_1=\fra+(i-1)\eps_1+(j-1)\eps_2,\quad (i,j)\in A(\nu)
\eea
and they correspond to the boxes we can add to the minimal plane partition.

On the other hand, choosing the reference vector to be $\eta=\tilde{\eta}_0$, the poles picked up are 
\bea
\phi_1=\fra-\eps_3+(i-1)\eps_1+(j-1)\eps_2,\quad (i,j)\in R(\nu).
\eea
These poles correspond to the removable boxes of the Young diagram at the $\fra-\eps_3$ plane and they indeed correspond to the poles that one can place at the first level of the PT3 counting.

\subsubsection{Two-legs}\label{sec:PTtwolegs}
Let us explicitly study the case for $(\Bbox,\Bbox,\varnothing)$. Other examples are given in Appendix~\ref{app:sec-PT3vertex-examples}. Again, we start from the minimal plane partition and extend it in the negative direction. Removing boxes from the positive quadrant gives the following hollow structure:
\bea
\adjustbox{valign=c}{\includegraphics[width=5cm]{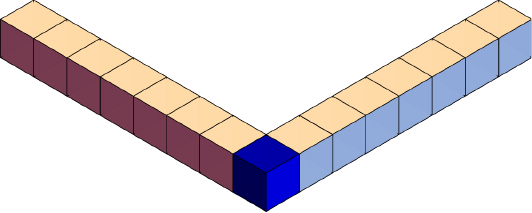}}
\eea
where the blue box has the coordinate $\fra$. Generic PT3 configurations look like
\bea\label{eq:PTtwolegs-figure}
\adjustbox{valign=c}{\includegraphics[width=5cm]{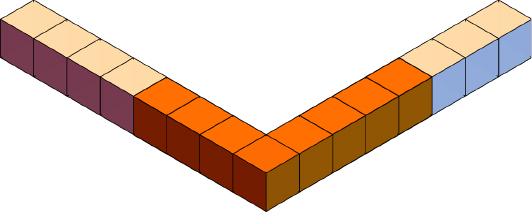}}
\eea
and we can label them by two non-negative integers $[m,n]$ $(m,n\in\mathbb{Z}_{\geq 0})$. The integer $m$ ($n$) counts the number of boxes except the origin box in the first (second) axis. The configuration $[0,0]$ corresponds to the configuration with a box only at the blue position. For example, the configuration above is $[4,3]$.

Let us derive the PT rules using the JK-residue formalism. First, the contour integrand is
\bea
&\prod_{I}\mathcal{Z}^{\D6_{\bar{4}}\tbar\D2\tbar\D0}_{\DT;\,\Cbox\,\Cbox\varnothing}(\fra,\phi_{I})\prod_{I<J}\mathcal{Z}^{\D0\tbar\D0}(\phi_{I},\phi_{J})\\
=&\prod_{I}\frac{\sh(\phi_{I}-\fra-\epsilon_{124})\sh(\phi_{I}-\fra-\epsilon_{34})}{\sh(\phi_{I}-\fra-\epsilon_{12})\sh(\phi_{I}-\fra-\epsilon_{3})}\frac{\sh(\fra-\phi_{I}-\epsilon_{4})}{\sh(\fra-\phi_{I})}\prod_{I\neq J}\frac{\sh(\phi_{I}-\phi_{J})\sh(\phi_{I}-\phi_{J}-\epsilon_{14,24,34})}{\sh(\phi_{I}-\phi_{J}-\epsilon_{1,2,3,4})}.
\eea
Choosing $\eta=(-1,\ldots,-1)$, the poles picked up are
\bea\label{eq:twoleg1box-pole}
\fra-\phi_{I}=0,\quad \phi_{I}-\phi_{J}=+\epsilon_{1,2,3,4}.
\eea
Let us show that the poles are classified as
\bea
\fra, \quad \fra-m\epsilon_{1},\quad \fra-n\epsilon_{2},\quad m,n\in\mathbb{Z}_{\geq 1}.
\eea

For level one, we choose
\bea
\phi_{1}=\fra
\eea
where we used the Weyl invariance again. Let us next consider level two. The possible poles are $\phi_{2}=\fra$ and $\phi_{2}=\phi_{1}-\epsilon_{1,2,3,4}$. The former one is canceled by the numerator $\sh(\phi_{I}-\phi_{J})$. The poles $\phi_{2}=\fra-\epsilon_{3,4}$ are canceled by
\bea
\prod_{I}\sh(\phi_{I}-\fra+\epsilon_{3})\sh(\fra-\phi_{I}-\epsilon_{4})
\eea
and the poles $\phi_{2}=\fra-\epsilon_{1,2}$ remain. For the generic level $k=1+m+n$, assume that we have evaluated the poles at $\{\fra,\fra-i\epsilon_{1},\fra-j\epsilon_{2}\}_{i=1,\ldots,m}^{j=1,\ldots,n}$ and consider the pole at $\phi_{k+1}$. The poles coming from the JK-residue are
\bea
\phi_{k+1}=\fra,\quad \phi_{k+1}=\phi_{I}-\epsilon_{1,2,3,4},\quad (I=1,\ldots,k).
\eea
Again the first pole is canceled by $\sh(\phi_{k+1}-\phi_{1})$. For the second poles, most of them are canceled and it is enough to consider the surface contribution when $\phi_{k+1}=\fra-m\epsilon_{1}-\epsilon_{1,2,3,4}, \fra-n\epsilon_{2}-\epsilon_{1,2,3,4}$. Using the symmetry, we can focus on $\phi_{k+1}=\fra-m\epsilon_{1}-\epsilon_{1,2,3,4}$. If we have a box at $(-m,1,1)$, we also have a box at $(-m+1,1,1)$, and thus for some $J$ the pole is $\phi_{J}=\fra-(m-1)\epsilon_{1}$. Using this, we have
\bea
\phi_{k+1}=\fra-m\epsilon_{1}-\epsilon_{2,3,4}=\phi_{J}-\epsilon_{12,13,14}
\eea
which is canceled by the numerator
\bea
\sh(\phi_{k+1}-\phi_{J}+\epsilon_{12,13,14})\in \prod_{I\neq J}\sh(\phi_{I}-\phi_{J}-\epsilon_{14,24,34}).
\eea
Therefore, the poles are indeed classified as \eqref{eq:twoleg1box-pole}.

The JK-residue for each PT configuration is evaluated as
\bea\label{eq:PTtwoleg-JK-partfunct}
\mathcal{Z}_{\bar{4};\,\Cbox\,\Cbox\varnothing}^{\PT\tbar\JK}[[0,0]]&=-\left(\frac{\sh(-\epsilon_{14,24,34})}{\sh(-\epsilon_{1,2,3,4})}\right)\underset{\phi_{1}=\fra}{\Res}\mathcal{Z}^{\D6_{\bar{4}}\tbar\D2\tbar\D0}_{\DT;\,\Cbox\,\Cbox\varnothing}(\fra,\phi_{1})=-\frac{[q_{12}][q_{13}][q_{23}]}{[q_{1}][q_{2}][q_{3}]}\\
\mathcal{Z}_{\bar{4};\,\Cbox\,\Cbox\varnothing}^{\PT\tbar\JK}[[m,n]]&=(-1)^{m+n+1}\left(\frac{\sh(-\epsilon_{14,24,34})}{\sh(-\epsilon_{1,2,3,4})}\right)^{m+n+1}\underset{\phi=\phi_{[m,n]}}{\Res}\prod_{I}\mathcal{Z}^{\D6_{\bar{4}}\tbar\D2\tbar\D0}_{\DT;\,\Cbox\,\Cbox\varnothing}(\fra,\phi_{I})\prod_{I<J}\mathcal{Z}^{\D0\tbar\D0}(\phi_{I},\phi_{J})\\
&=\frac{[q_{34}]}{[q_{3}]}\times\frac{[q_{3}q_{1}^{m+1}q_{2}^{-n}][q_{4}q_{1}^{m+1}q_{2}^{-n}]}{[q_{1}^{m+1}q_{2}^{-n}][q_{34}q_{1}^{m+1}q_{2}^{-n}]}\times 
\prod_{i=1}^{m}\frac{[q_{4}q_{1}^{i}]}{[q_{1}^{i}]}\prod_{j=1}^{n}\frac{[q_{4}q_{2}^{j}]}{[q_{2}^{j}]}
\eea
where we simply denoted the iterative residue as
\bea
\underset{\phi=\phi_{[m,n]}}{\Res}=\underset{\phi_{m+n+1}=\fra-n\epsilon_{2}}{\Res}\cdots\underset{\phi_{m+1}=\fra-\epsilon_{2}}{\Res}\underset{\phi_{m}=\fra-m\epsilon_{1}}{\Res}\cdots\underset{\phi_{1}=\fra}{\Res}.
\eea
Note that the second equation actually also reproduces the $[0,0]$ configuration. We finally have
\bea
\mathcal{Z}^{\PT\tbar\JK}_{\bar{4};\,
\Cbox\,\Cbox\varnothing}[\mathfrak{q},q_{1,2,3,4}]=\sum_{k=0}^{\infty}\mathfrak{q}^{k}\mathcal{Z}^{\PT\tbar\JK}_{\bar{4};\,
\Cbox\,\Cbox\varnothing}[k]=1+\sum_{m,n=0}^{\infty}\mathfrak{q}^{m+n+1}\mathcal{Z}_{\bar{4};\,\Cbox\,\Cbox\varnothing}^{\PT\tbar\JK}[[m,n]].
\eea

\subsubsection{Three-legs}\label{sec:PTthreelegs}
In this section, we apply the JK-residue formalism to the example \eqref{eq:PTexample3} and discuss the relation with the PT and GR box counting rules. For other examples, see Appendix~\ref{app:sec-PT3vertex-examples}.

\begin{figure}[ht]
    \centering
    \includegraphics[width=6cm]{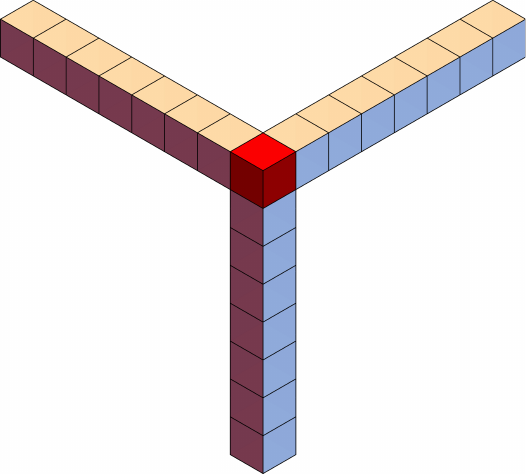}
    \caption{The PT box counting setup for $\lambda=\mu=\nu={1}$.}
    \label{fig:PT3leg-onebox}
\end{figure}

\paragraph{PT box-counting rules}Let us see the possible PT configurations following Cond.~\ref{cond:PTrules}. At level zero, we have the empty configuration $\ket{\text{vac}}$. 

At level one, the origin box is freely labeled and thus we have
\bea
\ket{0^0}
\eea
and it gives a two-states degeneracy.

For the second level, we have four possible configurations:
\bea
\ket{0^{1},-\eps_1},\quad \ket{0^2,-\eps_2},\quad \ket{0^3,-\eps_3},\quad \ket{0^{-1}}.
\eea
For the configurations $\ket{0^{a},-\eps_a}$, since the box at $-\eps_a$ is supported by the origin box, we may label it as $a$ following the third condition of Cond.~\ref{cond:PTrules}. The configuration $\ket{0^{-1}}$ also does not break the PT rules.

For the third level, we have 6 configurations:
\bea
\ket{0^{-1},-\eps_1},\quad \ket{0^{-1},-\eps_2},\quad \ket{0^{-1},-\eps_3},\quad \ket{0^{1},-\eps_1,-2\eps_1},\quad \ket{0^2,-\eps_2,-2\eps_2},\quad \ket{0^3,-\eps_3,-2\eps_3}.
\eea
The possible configurations decompose into two classes of configurations. If we have boxes at three distinct positions $0,\eps_a,-2\eps_a$, since the box at the red position supports boxes in the cylinder $a$, it is labeled $a$. The other class is the configurations with a box with weight 2 in the origin $0^{-1}$ and the remaining box can be placed in any of the three directions.

For the fourth level, we have 9 possible configurations:
\bea
\ket{0^{-1},-\eps_i,-\eps_j}\,(i\neq j),\quad \ket{0^{i},-\eps_i,-2\eps_i,-3\eps_i}\,(i=1,2,3).
\eea
Since the supported boxes extends in two different cylinders, the $0$ for the configuration $\ket{0^{-1},-\eps_i,-\eps_j}$ needs to be unlabeled (labeled by $-1$). For the other configurations, the discussion is similar to the level three and the origin box is labeled depending on the direction the supported boxes extend.

For generic level $k$, the possible configurations are given as
\bea
&\ket{0^{-1},-\epsilon_{1},\ldots,-n_{1}\epsilon_{1},-\epsilon_{2},\ldots,-n_{2}\epsilon_{2},-\epsilon_{3},\ldots,-n_{3}\epsilon_{3}},\quad n_{1}+n_{2}+n_{3}=k-2 \,(n_{i}\geq 0),\\
&\ket{0^{i},-\epsilon_{i},\ldots,-(k-1)\epsilon_{i}}.
\eea

\paragraph{GR box-counting rules}
At level zero, we have the empty configuration $\ket{\text{vac}}$. At level one, we have two possible configurations where we can place a heavy box and a light box at the origin:
\bea
\ket{\text{vac}}\rightarrow \ket{0_{H}}+\ket{0_{L}}.
\eea
This is because the red position admits any labeling.

For the next level, we have the following transitions
\bea
\ket{0_{H}}&\rightarrow \ket{0_{U}}+\ket{0_{H},-\epsilon_{1}}+\ket{0_{H},-\epsilon_{2}}+\ket{0_{H},-\epsilon_{3}}\\
\ket{0_{L}}&\rightarrow \ket{0_{U}}.
\eea
Both the heavy and light boxes can transform into an ultra-heavy box. For the configuration with a heavy box $\ket{0_{H}}$, we need to assign a consistent labeling. For each $\ket{0_{H},-\epsilon_{i}}$, since a box is added to an uncolored position in the cylinder extending in the direction $i$, we can label the heavy box as $i$, which is a consistent configuration. Note that the ultra-heavy box is counted as weight two and thus such configuration appears at level two. 

For level three, we have six states:
\bea
\ket{0_{U}}&\rightarrow \ket{0_{U},-\epsilon_{1}}+\ket{0_{U},-\epsilon_{2}}+\ket{0_{U},-\epsilon_{3}}\\
\ket{0_{H},-\epsilon_{1}}&\rightarrow \ket{0_{H},-\epsilon_{1},-2\epsilon_{1}}+\ket{0_{U},-\epsilon_{1}}\\
\ket{0_{H},-\epsilon_{2}}&\rightarrow \ket{0_{H},-\epsilon_{2},-2\epsilon_{2}}+\ket{0_{U},-\epsilon_{2}}\\
\ket{0_{H},-\epsilon_{3}}&\rightarrow \ket{0_{H},-\epsilon_{3},-2\epsilon_{3}}+\ket{0_{U},-\epsilon_{3}}.
\eea
We can place any boxes on top of the ultra-heavy box. For the configuration $\ket{0_{H},-\epsilon_{i}}$, we can transform the heavy box to an ultra-heavy box or add a box in the cylinder of direction $i$, since the heavy box has the label $i$. Note that we cannot add a box at $-\epsilon_{j}$ for $i\neq j$ since the heavy box at the red position will not have a consistent labeling because there are boxes belonging to different cylinders in two different directions.

Let us also consider the fourth level. In this case, we have nine possible states:
\bea
\ket{0_{U},-\epsilon_{i}}&\rightarrow \ket{0_{U},-\epsilon_{i},-2\epsilon_{i}}+\ket{0_{U},-\epsilon_{i},-\epsilon_{j}}\quad i=1,2,3,\,\,j\neq i\\
\ket{0_{H},-\epsilon_{i},-2\epsilon_{i}}&\rightarrow \ket{0_{H},-\epsilon_{i},-2\epsilon_{i},-3\epsilon_{i}}+\ket{0_{U},-\epsilon_{i},-2\epsilon_{i}}.
\eea
For the configuration $\ket{0_{U},-\epsilon_{i}}$, we can stack boxes in the cylinders as long as they obey the melting rule with gravity in the $(1,1,1)$-direction since the ultra-heavy box supports boxes in all directions. For the configuration $\ket{0_{H},-\epsilon_{i},-2\epsilon_{i}}$, similar to the previous levels, we can change the heavy box to an ultra-heavy box, or add a box in $-3\epsilon_{i}$. Again, we cannot add a box at $-\epsilon_{j}$ for $j\neq i$ because the heavy box will not admit a consistent labeling.

Generally, we have $3+\binom{k}{2}$ possible states for level $k\geq 2$
\bea
&\ket{0_{U},-\epsilon_{1},\ldots,-n_{1}\epsilon_{1},-\epsilon_{2},\ldots,-n_{2}\epsilon_{2},-\epsilon_{3},\ldots,-n_{3}\epsilon_{3}},\quad n_{1}+n_{2}+n_{3}=k-2 \,(n_{i}\geq 0),\\
&\ket{0_{H},-\epsilon_{i},\ldots,-(k-1)\epsilon_{i}}.
\eea

\paragraph{Level one}
Let us see next what will happen for the JK-residue formalism. The contour integrand is
\bea
&\prod_{I}\mathcal{Z}^{\D6_{\bar{4}}\tbar\D2\tbar\D0}_{\DT;\,\Cbox\,\Cbox\,\Cbox}(\fra,\phi_{I})\prod_{I< J}\mathcal{Z}^{\D0\tbar\D0}(\phi_{I},\phi_{J})\\
=&\prod_{I}\frac{\sh(\phi_{I}-\fra-\epsilon_{124,134,234})}{\sh(\phi_{I}-\fra-\epsilon_{12,23,13})}\left(\frac{\sh(\fra-\phi_{I}-\epsilon_{4})}{\sh(\fra-\phi_{I})}\right)^{2}\prod_{I\neq J}\frac{\sh(\phi_{I}-\phi_{J})\sh(\phi_{I}-\phi_{J}-\epsilon_{14,24,34})}{\sh(\phi_{I}-\phi_{J}-\epsilon_{1,2,3,4})}.
\eea
For level one, the integrand is
\bea
\mathcal{Z}^{\D6_{\bar{4}}\tbar\D2\tbar\D0}_{\DT;\,\Cbox\,\Cbox\,\Cbox}(\fra,\phi_{1})=\frac{\sh(\phi_{1}-\fra-\epsilon_{124,134,234})}{\sh(\phi_{1}-\fra-\epsilon_{12,23,13})}\left(\frac{\sh(\fra-\phi_{1}-\epsilon_{4})}{\sh(\fra-\phi_{1})}\right)^{2}.
\eea
Choosing $\eta=(-1,\ldots,-1)$, the pole picked up is
\bea
\phi_{1}=\fra.
\eea
The transition rule is given as
\bea
\boxed{\text{vac}\longrightarrow (0)}
\eea
where we omit the $\fra$ to denote the PT configuration.\footnote{For a general box configuration with coordinates $x,y,z,\ldots$, we denote the configuration as $(x,y,z,\ldots)$ when discussing the JK-formalism. When writing $(x,y,z,\ldots)$, the order of the variables inside is not specified. The transition rules determine how the PT configurations are generated from the vacuum.}

Since this pole is a second-order pole, the residue is evaluated as
\bea
&\underset{\phi_{1}=\fra}{\Res}\mathcal{Z}^{\D6_{\bar{4}}\tbar\D2\tbar\D0}_{\DT;\,\Cbox\,\Cbox\,\Cbox}(\fra,\phi_{1})\\
=&\lim_{\phi_{1}\rightarrow \fra}\frac{\partial }{\partial \phi_{1}}\left((\phi_{1}-\fra)^{2}\frac{\sh(\phi_{1}-\fra-\epsilon_{124,134,234})}{\sh(\phi_{1}-\fra-\epsilon_{12,23,13})}\left(\frac{\sh(\fra-\phi_{1}-\epsilon_{4})}{\sh(\fra-\phi_{1})}\right)^{2}\right)\\
=&\lim_{\phi_{1}\rightarrow \fra}\frac{\partial }{\partial \phi_{1}}\left(\frac{\sh(\fra-\phi_{1}-\epsilon_{4})^{2}\sh(\phi_{1}-\fra-\epsilon_{124,134,234})}{\sh(\phi_{1}-\fra-\epsilon_{12,23,13})}\right)
\eea
where in the second line, we used $\sh(\phi_{1}-\fra)\sim (\phi_{1}-\fra)$ at the limit $\phi_{1}\rightarrow \fra$. Finally we have
\bea\label{eq:3leglevel1}
\mathcal{Z}^{\PT\tbar\JK}_{\bar{4};\,\Cbox\,\Cbox\,\Cbox}[1]&=-\left(\frac{\sh(-\epsilon_{14,24,34})}{\sh(-\epsilon_{1,2,3,4})}\right)\times \underset{\phi_{1}=\fra}{\Res}\mathcal{Z}^{\D6_{\bar{4}}\tbar\D2\tbar\D0}_{\DT;\,\Cbox\,\Cbox\,\Cbox}(\fra,\phi_{1})\\
=&\ch\left(\epsilon _1+\epsilon
   _2+\epsilon _3\right)\\
   &-\frac{1}{2} \sh\left(\epsilon _1+\epsilon _2+\epsilon _3\right) \left(\frac{\ch\left(\epsilon
   _1\right)}{\sh\left(\epsilon _1\right)}+\frac{\ch\left(\epsilon
   _2\right)}{\sh\left(\epsilon _2\right)}+\frac{\ch\left(\epsilon
   _3\right)}{\sh\left(\epsilon _3\right)}+\frac{\ch\left(\epsilon _1+\epsilon
   _2\right)}{\sh\left(\epsilon _1+\epsilon _2\right)}+\frac{\ch\left(\epsilon _1+\epsilon
   _3\right)}{\sh\left(\epsilon _1+\epsilon _3\right)}+\frac{\ch\left(\epsilon _2+\epsilon
   _3\right)}{\sh\left(\epsilon _2+\epsilon _3\right)}\right).
\eea

\paragraph{Level two}
Let us next consider the second level. Again, choosing $\eta=(-1,\ldots,-1)$, the poles picked up come from 
\bea
-\phi_{I}+\fra=0,\quad \phi_{I}-\phi_{J}=+\epsilon_{1,2,3,4}.
\eea
Using the Weyl invariance, let us focus on the following cases 
\bea
(\phi_{1},\phi_{2})&=(\fra,\fra),\quad -\phi_1+\fra=0,\quad -\phi_2+\fra=0,\\
(\phi_1,\phi_2)&=(\fra,\fra-\epsilon_{1,2,3,4}),\quad -\phi_1+\fra=0,\quad -\phi_1+\phi_2=\eps_{1,2,3,4}.
\eea
Note that these poles are all non-degenerate. The corresponding charge vectors are $\{(-1,0),(0,-1)\}$ and $\{(-1,0),(1,-1)\}$, respectively, and the reference vector is contained in the cone constructed from these charge vectors. We thus can perform the iterative residue as
\bea
\underset{\phi_2}{\Res}\,\,\underset{\phi_1}{\Res}.
\eea

Not all pole contribution gives non-zero JK-residues. Let us see which poles remain after evaluating the JK-residue. The residue at $\phi_{1}=\fra$ involves a second order pole:
\bea
&\underset{\phi_{1}=\fra}{\Res}\prod_{I=1}^{2}\mathcal{Z}^{\D6_{\bar{4}}\tbar\D2\tbar\D0}_{\DT;\,\Cbox\,\Cbox\,\Cbox}(\fra,\phi_{I})\mathcal{Z}^{\D0\tbar\D0}(\phi_{1},\phi_{2})\\
=& \lim_{\phi_{1}\rightarrow \fra}\frac{\partial }{\partial \phi_{1}}\left((\phi_{1}-\fra)^{2} \prod_{I=1}^{2}\mathcal{Z}^{\D6_{\bar{4}}\tbar\D2\tbar\D0}_{\DT;\,\Cbox\,\Cbox\,\Cbox}(\fra,\phi_{I})\mathcal{Z}^{\D0\tbar\D0}(\phi_{1},\phi_{2})\right)
\eea

We first start from the pole $(\phi_{1},\phi_{2})=(\fra,\fra)$. In usual instanton computations as the two-legs case, this contribution disappears because of the numerator $\prod_{I\neq J}\sh(\phi_{I}-\phi_{J})$. One would expect that a similar cancellation happens here but this is not the case. The contour integrand can be schematically written as
\bea\label{eq:3leg-level2-1-def}
\prod_{I=1}^{2}\mathcal{Z}^{\D6_{\bar{4}}\tbar\D2\tbar\D0}_{\DT;\,\Cbox\,\Cbox\,\Cbox}(\fra,\phi_{I})\mathcal{Z}^{\D0\tbar\D0}(\phi_{1},\phi_{2})=\frac{\sh(\phi_{1}-\phi_{2})^{2}f_{0}(\phi_{1},\phi_{2})}{\sh(\phi_{1}-\fra)^{2}\sh(\phi_{2}-\fra)^{2}}
\eea
where $f_{0}(\phi_{1},\phi_{2})$ is a function with no zeros nor poles at $\phi_{1}=\phi_{2}=\fra$. The residue is then computed as
\bea\label{eq:3leg-level2-1}
\underset{\phi_{1}=\fra}{\Res}\prod_{I=1}^{2}\mathcal{Z}^{\D6_{\bar{4}}\tbar\D2\tbar\D0}_{\DT;\,\Cbox\,\Cbox\,\Cbox}(\fra,\phi_{I})\mathcal{Z}^{\D0\tbar\D0}(\phi_{1},\phi_{2})&=\left.\partial_{\phi_{1}}\left(\frac{\sh(\phi_{1}-\phi_{2})^{2}f_{0}(\phi_{1},\phi_{2})}{\sh(\phi_{2}-\fra)^{2}}\right)\right|_{\phi_{1}=\fra}\\
&=\partial_{\phi_{1}}f_{0}(\phi_{1},\phi_{2})|_{\phi_{1}=\fra}-\frac{f_{0}(\fra,\phi_{2})\ch(\fra-\phi_{2})}{\sh(\phi_{2}-\fra)}\\
&\xrightarrow{\phi_{2}\simeq \fra} -\frac{2f_{0}(\fra,\fra)}{\sh(\phi_{2}-\fra)}
\eea
where we used $\sh(x)'=\ch(x)/2$ and $\ch(0)=2$. Obviously, we still have a pole at $\phi_{2}=\fra$ and thus the residue is non-zero.

Let us next consider the poles at $(\phi_{1},\phi_{2})=(\fra,\fra-\epsilon_{1,2,3})$:
\bea\label{eq:3leg-level2-2-def}
&\prod_{I=1}^{2}\mathcal{Z}^{\D6_{\bar{4}}\tbar\D2\tbar\D0}_{\DT;\,\Cbox\,\Cbox\,\Cbox}(\fra,\phi_{I})\mathcal{Z}^{\D0\tbar\D0}(\phi_{1},\phi_{2})\\
=&\frac{f_{1}(\phi_{1},\phi_{2})}{\sh(\phi_{1}-\fra)^{2}}\frac{\sh(\phi_{2}-\fra+\epsilon_{1})}{\sh(\phi_{2}-\phi_{1}+\epsilon_{1})}=\frac{f_{2}(\phi_{1},\phi_{2})}{\sh(\phi_{1}-\fra)^{2}}\frac{\sh(\phi_{2}-\fra+\epsilon_{2})}{\sh(\phi_{2}-\phi_{1}+\epsilon_{2})}=\frac{f_{3}(\phi_{1},\phi_{2})}{\sh(\phi_{1}-\fra)^{2}}\frac{\sh(\phi_{2}-\fra+\epsilon_{3})}{\sh(\phi_{2}-\phi_{1}+\epsilon_{3})}
\eea
where $f_{1,2,3}(\phi_{1},\phi_{2})$ are functions with no zeros nor poles at $\phi_{2}=\fra-\epsilon_{1,2,3}$, respectively. Using the triality, let us focus on the $\epsilon_{3}$ case. Looking at the integrand, since there is $\sh(\phi_{2}-\fra+\epsilon_{3})$, one might expect that the residue at $\phi_{2}=\fra-\epsilon_{3}$ vanishes. Again, this is not the case because when evaluating the residue at $\phi_{1}=\fra$, derivatives of the integrand appears:
\bea\label{eq:3leg-level2-2}
\underset{\phi_{1}=\fra}{\Res}\prod_{I=1}^{2}\mathcal{Z}^{\D6_{\bar{4}}\tbar\D2\tbar\D0}_{\DT;\,\Cbox\,\Cbox\,\Cbox}(\fra,\phi_{I})\mathcal{Z}^{\D0\tbar\D0}(\phi_{1},\phi_{2})&=\left.\partial_{\phi_{1}}\left(f_{3}(\phi_{1},\phi_{2})\frac{\sh(\phi_{2}-\fra+\epsilon_{3})}{\sh(\phi_{2}-\phi_{1}+\epsilon_{3})}\right)\right|_{\phi_{1}=\fra}\\
&=\frac{f_{3}(\fra,\phi_{2})}{\sh(\phi_{2}-\fra+\epsilon_{3})}+\partial_{\phi_{1}}f_{3}(\fra,\phi_{2}).
\eea
We still have a single pole appearing in the denominator and thus this pole gives a non-zero residue.

Finally, let us consider the poles at $(\phi_{1},\phi_{2})=(\fra,\fra-\epsilon_{4})$:
\bea
&\prod_{I=1}^{2}\mathcal{Z}^{\D6_{\bar{4}}\tbar\D2\tbar\D0}_{\DT;\,\Cbox\,\Cbox\,\Cbox}(\fra,\phi_{I})\mathcal{Z}^{\D0\tbar\D0}(\phi_{1},\phi_{2})=\frac{f_{4}(\phi_{1},\phi_{2})\sh(\phi_{2}-\fra+\epsilon_{4})^{2}}{\sh(\phi_{1}-\fra)^{2}\sh(-\phi_{1}+\phi_{2}+\epsilon_{4})}
\eea
where $f_{4}(\phi_{1},\phi_{2})$ is a function with no zeros nor poles. The residue at $\phi_{1}=\fra$ is 
\bea
\underset{\phi_{1}=\fra}{\Res}\prod_{I=1}^{2}\mathcal{Z}^{\D6_{\bar{4}}\tbar\D2\tbar\D0}_{\DT;\,\Cbox\,\Cbox\,\Cbox}(\fra,\phi_{I})\mathcal{Z}^{\D0\tbar\D0}(\phi_{1},\phi_{2})=\sh(\phi_{2}-\fra+\epsilon_{4})\partial_{\phi_{1}}f_{4}(\fra,\phi_2)+f_{4}(\fra,\phi_{2}).
\eea
Thus, this gives no nontrivial residue. This situation is similar to other cases where the crystal structure does not grow in the fourth direction.

The transition rules can be deduced as
\bea
\boxed{(0)\longrightarrow (0,0)+(0,-\eps_1)+(0,-\eps_2)+(0,-\eps_3)}
\eea

Denoting the residue of the contour integrand without the Weyl group factor as
\bea
\mathcal{Z}_{(x,y)}\coloneqq\underset{\phi_{2}=\fra+y}{\Res}\,\,\underset{\phi_{1}=\fra+x}{\Res}\left(\frac{\sh(-\epsilon_{14,24,34})}{\sh(-\epsilon_{1,2,3,4})}\right)^{2}\prod_{I=1}^{2}\mathcal{Z}^{\D6_{\bar{4}}\tbar\D2\tbar\D0}_{\DT;\,\Cbox\,\Cbox\,\Cbox}(\fra,\phi_{I})\mathcal{Z}^{\D0\tbar\D0}(\phi_{1},\phi_{2})
\eea
the PT partition function at two-instanton level is
\bea\label{eq:3leglevel2}
\mathcal{Z}_{\bar{4};\,\Cbox\,\Cbox\,\Cbox}^{\PT\tbar\JK}[2]&=\frac{1}{2}\left(\mathcal{Z}_{(0,0)}+2\sum_{i=1}^{3}\mathcal{Z}_{(0,-\epsilon_{i})}\right),\\
\frac{1}{2}\mathcal{Z}_{(0,0)}&=\left(\frac{\sh\left(\epsilon _1+\epsilon _2\right) \sh\left(\epsilon _1+\epsilon _3\right)
   \sh\left(\epsilon _2+\epsilon _3\right)}{\sh\left(\epsilon _1\right)
   \sh\left(\epsilon _2\right) \sh\left(\epsilon _3\right)}\right)^{2},\\
\mathcal{Z}_{(0,-\epsilon_{1})}&=-\frac{\sh\left(-\epsilon _1+\epsilon _2+\epsilon _3\right) \sh\left(2 \epsilon _1+\epsilon
   _2+\epsilon _3\right)}{ \sh\left(\epsilon _1\right) \sh\left(2 \epsilon _1\right)},\\
\mathcal{Z}_{(0,-\epsilon_{2})}&=-\frac{\sh\left(\epsilon _1-\epsilon _2+\epsilon _3\right) \sh\left(\epsilon _1+2 \epsilon
   _2+\epsilon _3\right)}{ \sh\left(\epsilon _2\right) \sh\left(2 \epsilon _2\right)},\\
\mathcal{Z}_{(0,-\epsilon_{3})}&=\frac{\sh\left(-\epsilon _1-\epsilon _2+\epsilon _3\right) \sh\left(\epsilon _1+\epsilon _2+2
   \epsilon _3\right)}{ \sh\left(\epsilon _3\right) \sh\left(2 \epsilon _3\right)}.
\eea
The overall factor is the Weyl group factor $1/2$. For the configurations coming from the poles $(\fra,\fra-\epsilon_{1,2,3})$, we have two situations in taking the residue $(\phi_{1},\phi_{2})=(\fra,\fra-\epsilon_{i}),\,(\fra-\epsilon_{i},\fra)$ and the Weyl group factor is canceled out by such multiplicity.

\paragraph{Level three}
For the next level, let us assume $(\phi_{1},\phi_{2})=(\fra,\fra),(\fra,\fra-\epsilon_{1,2,3})$ and evaluate the pole at $\phi_{3}$. For the case $(\phi_{1},\phi_{2})=(\fra,\fra)$, using \eqref{eq:3leg-level2-1-def} and \eqref{eq:3leg-level2-1}, we have
\bea
&\underset{\phi_{2}=\fra}{\Res}\,\underset{\phi_{1}=\fra}{\Res}\prod_{I=1}^{3}\mathcal{Z}^{\D6_{\bar{4}}\tbar\D2\tbar\D0}_{\DT;\,\Cbox\,\Cbox\,\Cbox}(\fra,\phi_{I})\prod_{I<J}\mathcal{Z}^{\D0\tbar\D0}(\phi_{I},\phi_{J})\\
=&\underset{\phi_{2}=\fra}{\Res}\left(\mathcal{Z}^{\D6_{\bar{4}}\tbar\D2\tbar\D0}_{\DT;\,\Cbox\,\Cbox\,\Cbox}(\fra,\phi_{3})\mathcal{Z}^{\D0\tbar\D0}(\phi_{2},\phi_{3})\partial_{\phi_{1}}\left(f_{0}(\phi_{1},\phi_{2})\mathcal{Z}^{\D0\tbar\D0}(\phi_{1},\phi_{3})\right)\right.\\
-&\left.\left.\frac{f_{0}(\fra,\phi_{2})\ch(\fra-\phi_{2})}{\sh(\phi_{2}-\fra)}\mathcal{Z}^{\D6_{\bar{4}}\tbar\D2\tbar\D0}_{\DT;\,\Cbox\,\Cbox\,\Cbox}(\fra,\phi_{3}) \mathcal{Z}^{\D0\tbar\D0}(\phi_{1},\phi_{3})\mathcal{Z}^{\D0\tbar\D0}(\phi_{2},\phi_{3})
  \right)\right|_{\phi_{1}=\fra}\\
  =&-2f_{0}(\fra,\fra)\mathcal{Z}^{\D6_{\bar{4}}\tbar\D2\tbar\D0}_{\DT;\,\Cbox\,\Cbox\,\Cbox}(\fra,\phi_{3})\mathcal{Z}^{\D0\tbar\D0}(\fra,\phi_{3})^{2}.
\eea
Using
\bea
\mathcal{Z}^{\D6_{\bar{4}}\tbar\D2\tbar\D0}_{\DT;\,\Cbox\,\Cbox\,\Cbox}(\fra,\phi_{3})\mathcal{Z}^{\D0\tbar\D0}(\fra,\phi_{3})^{2}=\left(\frac{\sh(\phi_{3}-\fra)\sh(\phi_{3}-\fra-\epsilon_{14,24,34})}{\sh(\phi_{3}-\fra-\epsilon_{1,2,3,4})}\right)^{2}\frac{\sh(\fra-\phi_{3}-\epsilon_{14,24,34})}{\sh(\fra-\phi_{3}-\epsilon_{1,2,3})}
\eea
the poles with non-zero residues are 
\bea
(\fra,\fra,\fra-\epsilon_{1,2,3}).
\eea

Using the triality, we can focus on $(\phi_{1},\phi_{2})=(\fra,\fra-\epsilon_{3})$. From JK-formalism, the candidates are
\bea
\phi_{3}=\phi_{2}-\epsilon_{1,2,3,4}=\fra-2\epsilon_{3},\fra-\epsilon_{31,32,34}\quad \text{or} \quad \phi_3=\fra.
\eea
From \eqref{eq:3leg-level2-2-def} and \eqref{eq:3leg-level2-2} and attributing
\bea
f_{3}(\phi_{1},\phi_{2})\longrightarrow f_{3}(\phi_{1},\phi_{2})\mathcal{Z}^{\D6_{\bar{4}}\tbar\D2\tbar\D0}_{\DT;\,\Cbox\,\Cbox\,\Cbox}(\fra,\phi_{3})\mathcal{Z}^{\D0\tbar\D0}(\phi_{1},\phi_{3})\mathcal{Z}^{\D0\tbar\D0}(\phi_{2},\phi_{3})
\eea
we have
\bea
&\underset{\phi_{2}=\fra-\epsilon_{3}}{\Res}\,\underset{\phi_{1}=\fra}{\Res}\prod_{I=1}^{3}\mathcal{Z}^{\D6_{\bar{4}}\tbar\D2\tbar\D0}_{\DT;\,\Cbox\,\Cbox\,\Cbox}(\fra,\phi_{I})\prod_{I<J}\mathcal{Z}^{\D0\tbar\D0}(\phi_{I},\phi_{J})\\
=&f_{3}(\fra,\fra-\epsilon_{3})\mathcal{Z}^{\D6_{\bar{4}}\tbar\D2\tbar\D0}_{\DT;\,\Cbox\,\Cbox\,\Cbox}(\fra,\phi_{3})\mathcal{Z}^{\D0\tbar\D0}(\fra,\phi_{3})\mathcal{Z}^{\D0\tbar\D0}(\fra-\epsilon_{3},\phi_{3}).
\eea
The poles $\phi_{3}=\fra-\epsilon_{31,32,34}$ all cancel with the numerator of $\mathcal{Z}^{\D0\tbar\D0}(\phi_{1},\phi_{3})|_{\phi_{1}=\fra}$, while the pole $\phi_{3}=\fra-2\epsilon_{3}$ is non-zero. Moreover, the pole at $\phi_3=\fra$ also gives non-zero contribution.

Summarizing, for the level three, the non-zero contributions come from the poles
\bea
(\fra,\fra,\fra-\epsilon_{i}),\quad (\fra,\fra-\epsilon_{i},\fra-2\epsilon_{i}),\quad i=1,2,3.
\eea
The transition rules from level 2 to level 3 are
    \begin{empheq}[box=\fbox]{align}
\begin{split}
(0,0)&\longrightarrow (0,0,-\eps_1)+(0,0,-\eps_2)+(0,0,-\eps_3)\\
(0,-\eps_1)&\longrightarrow (0,-\eps_1,-2\eps_1)+(0,0,-\eps_1)\\
(0,-\eps_2)&\longrightarrow (0,-\eps_2,-2\eps_2)+(0,0,-\eps_2)\\
(0,-\eps_3)&\longrightarrow (0,-\eps_3,-2\eps_3)+(0,0,-\eps_3)
\end{split}
\end{empheq}
Note that since the iterative residue can also be performed in the order $\underset{\phi_3=\fra}{\Res}\,\underset{\phi_2=\fra-\eps_i}{\Res}\,\underset{\phi_1=\fra}{\Res}$, the transition $(0,-\eps_i)\rightarrow (0,0,-\eps_i)$ is also allowed.

Denoting the residue as
\bea
\mathcal{Z}_{(x,y,z)}\coloneqq (-1)^{3}\underset{\phi_{3}=\fra+z}{\Res}\,\underset{\phi_{2}=\fra+y}{\Res}\,\,\underset{\phi_{1}=\fra+x}{\Res}\left(\frac{\sh(-\epsilon_{14,24,34})}{\sh(-\epsilon_{1,2,3,4})}\right)^{3}\prod_{I=1}^{3}\mathcal{Z}^{\D6_{\bar{4}}\tbar\D2\tbar\D0}_{\DT;\,\Cbox\,\Cbox\,\Cbox}(\fra,\phi_{I})\prod_{I<J}\mathcal{Z}^{\D0\tbar\D0}(\phi_{I},\phi_{J})
\eea
we have
\bea\label{eq:3leglevel3}
\mathcal{Z}_{\bar{4};\,\Cbox\,\Cbox\,\Cbox}^{\PT\tbar\JK}[3]&=\frac{1}{6}\left(3\sum_{i=1}^{3}\mathcal{Z}_{(0,0,-\epsilon_{i})}+6\sum_{i=1}^{3}\mathcal{Z}_{(0,-\epsilon_{i},-2\epsilon_{i})}   \right),\\
\frac{1}{2}\mathcal{Z}_{(0,0,-\epsilon_{1})}&= -\frac{\sh\left(\epsilon _1+\epsilon _2\right) \sh\left(2 \epsilon _1+\epsilon _2\right)
   \sh\left(\epsilon _1+\epsilon _3\right) \sh\left(2 \epsilon _1+\epsilon _3\right)
   \sh\left(\epsilon _2+\epsilon _3\right) \sh\left(-\epsilon _1+\epsilon _2+\epsilon
   _3\right)^2}{ \sh\left(\epsilon _1\right) \sh\left(2 \epsilon _1\right)^2
   \sh\left(\epsilon _2\right) \sh\left(\epsilon _2-\epsilon _1\right) \sh\left(\epsilon
   _3\right) \sh\left(\epsilon _3-\epsilon _1\right)},\\
   \frac{1}{2}\mathcal{Z}_{(0,0,-\epsilon_{2})}&= \frac{\sh\left(\epsilon _1+\epsilon _2\right) \sh\left(\epsilon _1+2 \epsilon _2\right)
   \sh\left(\epsilon _1+\epsilon _3\right) \sh\left(\epsilon _1-\epsilon _2+\epsilon
   _3\right)^2 \sh\left(\epsilon _2+\epsilon _3\right) \sh\left(2 \epsilon _2+\epsilon
   _3\right)}{ \sh\left(\epsilon _1\right) \sh\left(\epsilon _2\right) \sh\left(2
   \epsilon _2\right)^2 \sh\left(\epsilon _2-\epsilon _1\right) \sh\left(\epsilon _3\right)
   \sh\left(\epsilon _3-\epsilon _2\right)},\\
   \frac{1}{2}\mathcal{Z}_{(0,0,-\epsilon_{3})}&=-\frac{\sh\left(\epsilon _1+\epsilon _2\right) \sh\left(\epsilon _1+\epsilon _3\right)
   \sh\left(-\epsilon _1-\epsilon _2+\epsilon _3\right)^2 \sh\left(\epsilon _2+\epsilon
   _3\right) \sh\left(\epsilon _1+2 \epsilon _3\right) \sh\left(\epsilon _2+2 \epsilon
   _3\right)}{ \sh\left(\epsilon _1\right) \sh\left(\epsilon _2\right)
   \sh\left(\epsilon _3\right) \sh\left(2 \epsilon _3\right)^2 \sh\left(\epsilon
   _3-\epsilon _1\right) \sh\left(\epsilon _3-\epsilon _2\right)},\\
   \mathcal{Z}_{(0,-\epsilon_{1},-2\epsilon_{1})}&=\frac{\sh\left(-2 \epsilon _1+\epsilon _2+\epsilon _3\right) \sh\left(-\epsilon _1+\epsilon
   _2+\epsilon _3\right) \sh\left(3 \epsilon _1+\epsilon _2+\epsilon _3\right)}{ \sh\left(2
   \epsilon _1\right)^2 \sh\left(3 \epsilon _1\right)},\\
   \mathcal{Z}_{(0,-\epsilon_{2},-2\epsilon_{2})}&=\frac{\sh\left(\epsilon _1-2 \epsilon _2+\epsilon _3\right) \sh\left(\epsilon _1-\epsilon
   _2+\epsilon _3\right) \sh\left(\epsilon _1+3 \epsilon _2+\epsilon _3\right)}{ \sh\left(2
   \epsilon _2\right)^2 \sh\left(3 \epsilon _2\right)},\\
   \mathcal{Z}_{(0,-\epsilon_{3},-2\epsilon_{3})}&=\frac{\sh\left(-\epsilon _1-\epsilon _2+\epsilon _3\right) \sh\left(-\epsilon _1-\epsilon _2+2
   \epsilon _3\right) \sh\left(\epsilon _1+\epsilon _2+3 \epsilon _3\right)}{ \sh\left(2
   \epsilon _3\right)^2 \sh\left(3 \epsilon _3\right)}.
\eea

\paragraph{Generic level $k\geq 2$}
Two classes of poles only remain:
\bea
(0,-\epsilon_{i},\ldots, -(k-1)\epsilon_{i})
\eea
and
\bea
(0,0,-\epsilon_{1},\ldots,-n_{1}\epsilon_{1},-\epsilon_{2},\ldots,-n_{2}\epsilon_{2},\epsilon_{3},\ldots,-n_{3}\epsilon_{3}),\quad n_{1}+n_{2}+n_{3}=k-2.
\eea
We thus have $\binom{k}{2}+3$ possible configurations. The transition rules can be obtained similarly but we omit the discussion. The JK-residue schematically takes the form as
\bea\label{eq:3leglevelgeneric}
\mathcal{Z}^{\PT\tbar\JK}_{\bar{4};\,\Cbox\,\Cbox\,\Cbox}[k]&=\frac{1}{k!}\left(k!\sum_{i=1}^{3}\mathcal{Z}_{(0,-\epsilon_{i},\ldots,-(k-1)\epsilon_{i})}\right.\\
&\left.+\frac{k!}{2}\sum_{n_{1}+n_{2}+n_{3}=k-2}\mathcal{Z}_{(0,0,-\epsilon_{1},\ldots,-n_{1}\epsilon_{1},-\epsilon_{2},\ldots,-n_{2}\epsilon_{2},\epsilon_{3},\ldots,-n_{3}\epsilon_{3})}\right),
\eea
where each term is understood as the JK-residue of the contour integrand without the Weyl group factor for each configuration. The multiplicity in front of each terms comes from the Weyl invariance. The configuration without the ultra heavy box has the multiplicity $k!$ because of the reordering of the variables $\phi_{1},\ldots,\phi_{k}$. For the configuration with the ultra heavy box, since two out of the $k$ variables needs to pick up the same pole, we have $k!/2$ multiplicity.

\paragraph{Comparison with PT and GR rules} The ultra heavy box at the origin $\ket{0_{U}}$ corresponds to $\ket{0^{-1}}$ in the PT box counting rules, and $(0,0)$ in the JK-formalism. The reason why such ultra heavy box is counted twice in the size is because we are evaluating the residue at the same position twice. 

The two-state degeneracy coming from the heavy box and light box $\ket{0_{H}}+\ket{0_{L}}$, correspond to $0^{0}$ in the PT box counting rules and the pole at $0$ in the JK-residue formalism. One cannot see the two-state degeneracy by just looking at the poles. To see such two-state degeneracy, one needs to take the unrefined limit $\eps_4\rightarrow 0$ ($q_{4}\rightarrow 1$) and see the limit of the residue. For example, for the one-instanton level, we have
\bea
\mathcal{Z}_{(0)}\xrightarrow{q_{4}\rightarrow 1} 2.
\eea
The factor $2$ here corresponds to the two-state degeneracy. See section~\ref{sec:unrefined-vertex} for a detailed discussion. Moreover, in the JK-residue formalism, we cannot decompose the partition function into the contributions of the light and heavy boxes:
\bea
\mathcal{Z}_{(0)}=\mathcal{Z}[0^{0}]=\mathcal{Z}[\ket{0_L}+\ket{0_H}].
\eea
In other words, in the JK-formalism, the light box will always appear in pairs with the heavy box just as in the PT box counting rules.

We note also that since in the JK-residue formalism, the light boxes always appear in pairs with the heavy boxes and the transition rules connect different configurations associated with the poles, the GR transition rules can not be directly observed. In particular, transition  rules such as the generation of light boxes cannot be seen. Instead the JK transition rules are obtained by the GR transition rules after projecting out some of the light boxes. Such kind of phenomenon occurs also for other examples (see Appendix~\ref{app:sec-PT3vertex-examples}). For the moment, we do not know how to derive the GR transition rules directly from the JK-residue formalism and we leave it for future work.

\subsection{Conjugate map and PT3 counting}\label{sec:PT3counting-conjugate}
In this section, we propose a different way to evaluate PT3 partition functions. For each fundamental chiral superfield $\Phi_{\text{f}}$ with the flavor charge $q(\Phi_{\text{f}})=q_{1}^{i-1}q_{2}^{j-1}q_{3}^{k-1}=e^{\epsilon(\Phi_{\text{f}})}$, the contribution to the Witten index is
\bea
\frac{\sh(\phi_{I}-\fra-\epsilon(\Phi_{\text{f}})-\epsilon_{4})}{\sh(\phi_{I}-\fra-\epsilon(\Phi_{\text{f}}))}=\mathcal{Z}^{\D6_{\bar{4}}\tbar\D0}(\fra+\epsilon(\Phi_{\text{f}}),\phi_{I}).
\eea

We introduce a conjugation map which flips the fundamental to anti-fundamental while keeping the flavor charges:
\bea
\frac{\sh(\fra-\phi_{I}-\epsilon(\Phi_{\text{f}})-\epsilon_{4})}{\sh(\fra-\phi_{I}-\epsilon(\Phi_{\text{f}}))}=\mathcal{Z}^{\overline{\D6}_{\bar{4}}\tbar\D0}(\fra-\epsilon(\Phi_{\text{f}})-\epsilon_{4},\phi_{I}),
\eea
where we used \eqref{eq:D6antidef}. Oppositely, the anti-fundamental chiral superfield $\Phi_{\text{af}}$ with the flavor charge $q(\phi_{\text{af}})=e^{\epsilon(\Phi_{\text{af}})}$ transforms as
\bea
\,&\frac{\sh(\fra-\phi_{I}-\epsilon(\Phi_{\text{af}})-\epsilon_{4})}{\sh(\fra-\phi_{I}-\epsilon(\Phi_{\text{af}}))}=\mathcal{Z}^{\overline{\D6}_{\bar{4}}\tbar\D0}(\fra-\epsilon(\Phi_{\text{af}})-\epsilon_{4},\phi_{I})\\
&\longrightarrow \frac{\sh(\phi_{I}-\fra-\epsilon(\Phi_{\text{f}})-\epsilon_{4})}{\sh(\phi_{I}-\fra-\epsilon(\Phi_{\text{f}}))}=\mathcal{Z}^{\D6_{\bar{4}}\tbar\D0}(\fra+\epsilon(\Phi_{\text{f}}),\phi_{I}).
\eea

Under this conjugation map, the contour integrand \eqref{eq:DT3contourint} becomes
\bea
\mathcal{Z}^{\D6_{\bar{4}}\tbar\D2\tbar\D0}_{\DT;\lambda\mu\nu}(\fra,\phi_{I})\longrightarrow \mathcal{Z}^{\D6_{\bar{4}}\tbar\D2\tbar\D0}_{\PT;\lambda\mu\nu}(\fra,\phi_{I})
\eea
where
\bea\label{eq:PTflavornode-def}
\mathcal{Z}^{\D6_{\bar{4}}\tbar\D2\tbar\D0}_{\PT;\lambda\mu\nu}(\fra,\phi_{I})&=\prod_{\scube\in s(\vec{Y})}\mathcal{Z}^{\overline{\D6}_{\bar{4}}\tbar\D0}(\fra-c_{\bar{4},0}(\cube)-\epsilon_{4},\phi_{I})\prod_{\scube\in p_{1}(\vec{Y})}\mathcal{Z}^{\D6_{\bar{4}}\tbar\D0}(\fra-c_{\bar{4},0}(\cube)-\epsilon_{4},\phi_{I})\\
&\times \prod_{\scube\in p_{2}(\vec{Y})}\mathcal{Z}^{\D6_{\bar{4}}\tbar\D0}(\fra-c_{\bar{4},0}(\cube)-\epsilon_{4},\phi_{I})^{2}.
\eea
For example, for the three examples \eqref{eq:DTexample1}, \eqref{eq:DTexample2}, and \eqref{eq:DTexample3}, we have
\begin{align}
\mathcal{Z}^{\D6_{\bar{4}}\tbar\D2\tbar\D0}_{\PT;\varnothing\varnothing\,\Cbox}(\fra,\phi_{I})&=\frac{\sh(\phi_{I}-\fra+\epsilon_{12})}{\sh(\phi_{I}-\fra-\epsilon_{3})}\frac{\sh(-\phi_{I}+\fra-\epsilon_{14,24})}{\sh(-\phi_{I}+\fra-\epsilon_{1,2})},\label{eq:PTexample1}\\
\mathcal{Z}^{\D6_{\bar{4}}\tbar\D2\tbar\D0}_{\PT;\,\Cbox\,\Cbox\varnothing}(\fra,\phi_{I})&=\frac{\sh(\phi_{I}-\fra-\epsilon_{4})\sh(-\phi_{I}+\fra+\epsilon_{3})\sh(-\phi_{I}+\fra-\epsilon_{34})}{\sh(\phi_{I}-\fra)\sh(-\phi_{I}+\fra+\epsilon_{34})\sh(-\phi_{I}+\fra-\epsilon_{3})},\label{eq:PTexample2}\\
\mathcal{Z}^{\D6_{\bar{4}}\tbar\D2\tbar\D0}_{\PT;\,\Cbox\,\Cbox\,\Cbox}(\fra,\phi_{I})&=\frac{\sh(\phi_{I}-\fra-\epsilon_{4})^{2}}{\sh(\phi_{I}-\fra)^{2}}\prod_{i=1}^{3}\frac{\sh(-\phi_{I}+\fra+\epsilon_{i})}{\sh(-\phi_{I}+\fra+\epsilon_{4}+\epsilon_{i})}.\label{eq:PTexample3}
\end{align}

The contour integral formulas for the PT invariants are then given as follows.
\begin{theorem}\label{thm:conjugate-PT3vertex}
    The PT3 partition function can be also computed by
    \bea
    \mathcal{Z}_{\bar{4};\lambda\mu\nu}^{\PT\tbar\JK}[\mathfrak{q},q_{1,2,3,4}]=\sum_{k=0}^{\infty}\mathfrak{q}^{k}\mathcal{Z}_{\bar{4};\lambda\mu\nu}^{\PT\tbar\JK}[k],
    \eea
    where
    \bea
\,&\mathcal{Z}^{\PT\tbar\JK}_{\bar{4};\lambda\mu\nu}[k]=\frac{1}{k!}\left(\frac{\sh(-\epsilon_{14,24,34})}{\sh(-\epsilon_{1,2,3,4})}\right)^{k}\oint_{\eta_{0}} \prod_{I=1}^{k}\frac{d\phi_{I}}{2\pi i}\prod_{I=1}^{k}\mathcal{Z}^{\D6_{\bar{4}}\tbar\D2\tbar\D0}_{\PT;\lambda\mu\nu}(\fra,\phi_{I})\prod_{I<J}^{k}\mathcal{Z}^{\D0\tbar\D0}(\phi_{I},\phi_{J})
\eea
where the reference vector is the conventional $\eta_{0}$.
\end{theorem}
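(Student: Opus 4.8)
The plan is to realize the conjugation map of Section~\ref{sec:PT3counting-conjugate} as a single affine reflection of the integration variables and to track how this reflection acts on the Jeffrey--Kirwan data. Concretely, I would introduce the change of variables $\phi_I \mapsto 2\fra - \phi_I$ for all $I=1,\ldots,k$ simultaneously and show that, under it, the contour integral appearing in Theorem~\ref{thm:conjugate-PT3vertex} (the conjugated integrand with reference vector $\eta_0$) is carried exactly into the contour integral of Definition~\ref{def:PTvertex-JKresidue} (the DT integrand with reference vector $\tilde{\eta}_0$). Since the latter already computes the PT3 partition function by Theorem~\ref{thm:PTvertex-expansion}, this identification finishes the proof.

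First I would verify that the reflection reproduces the conjugation map at the level of the integrand. Using the definition \eqref{eq:D6antidef} together with $\sum_{a\in\four}\eps_a=0$, a direct factor-by-factor check gives
\begin{equation}
\mathcal{Z}^{\D6_{\bar{4}}\tbar\D2\tbar\D0}_{\DT;\lambda\mu\nu}(\fra,2\fra-\phi_I)=\mathcal{Z}^{\D6_{\bar{4}}\tbar\D2\tbar\D0}_{\PT;\lambda\mu\nu}(\fra,\phi_I),
\end{equation}
box by box: a fundamental factor $\mathcal{Z}^{\D6_{\bar{4}}\tbar\D0}(\fra+c_{\bar{4},0}(\cube),\phi_I)$ is sent to the anti-fundamental factor $\mathcal{Z}^{\overline{\D6}_{\bar{4}}\tbar\D0}(\fra-c_{\bar{4},0}(\cube)-\eps_4,\phi_I)$, and an anti-fundamental factor is sent back to a fundamental one, which is precisely the content of \eqref{eq:PTflavornode-def}; the examples \eqref{eq:PTexample1}--\eqref{eq:PTexample3} are recovered in this way. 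Moreover $\mathcal{Z}^{\D0\tbar\D0}(\phi_I,\phi_J)$ depends only on $\phi_I-\phi_J$ and is manifestly symmetric under $\phi_I\leftrightarrow\phi_J$, hence invariant under the simultaneous reflection. Therefore the full PT integrand is the reflection of the full DT integrand, with uniform reflection center $\fra$.

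Next I would address the reference vector and the residue prescription. Under $\phi\mapsto 2\fra-\phi$ every charge vector $Q_i$ of a singular hyperplane is sent to $-Q_i$, so the cone condition $\eta_0\in\text{Cone}(Q_1,\ldots,Q_k)$ for the reflected (DT) integrand is equivalent to $\tilde{\eta}_0\in\text{Cone}(Q_1,\ldots,Q_k)$ for the original one; thus exactly the same poles are selected by the two prescriptions. For the residue values I would evaluate everything in the straightened coordinates $\delta_i=Q_i(\phi)+f_i$ of \eqref{eq:JKresidue-iterative}: the reflected integrand's straightened coordinate $\hat\delta_i(\phi)$ satisfies $\hat\delta_i(\phi)=\delta_i(2\fra-\phi)$, so the iterated residues $\Res_{\delta_k=0}\cdots\Res_{\delta_1=0}$ literally coincide, while $|\det Q|=|\det(-Q)|$ and, for degenerate intersections, $\text{sign}\det(-\kappa)/\det(-\mathcal{B})=\text{sign}\det(\kappa)/\det\mathcal{B}$ since $\kappa$ and $\mathcal{B}$ each acquire a factor $(-1)^k$. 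The net statement is the reflection lemma
\begin{equation}
\oint_{\eta}\prod_{I=1}^{k}\frac{d\phi_I}{2\pi i}\,\mathcal{Z}(\phi)=\oint_{-\eta}\prod_{I=1}^{k}\frac{d\phi_I}{2\pi i}\,\mathcal{Z}(2\fra-\phi),
\end{equation}
whose one-variable instance is exactly Examples~1 and~2 of Section~\ref{sec:JK-residue}. Applying it with $\eta=\tilde{\eta}_0$ and $\mathcal{Z}=\mathcal{Z}^{\D6_{\bar{4}}\tbar\D2\tbar\D0}_{\DT;\lambda\mu\nu}\prod\mathcal{Z}^{\D0\tbar\D0}$, and using the integrand identity above, converts Definition~\ref{def:PTvertex-JKresidue} into the expression of Theorem~\ref{thm:conjugate-PT3vertex} level by level in $k$ (the common prefactor $\tfrac{1}{k!}(\sh(-\eps_{14,24,34})/\sh(-\eps_{1,2,3,4}))^{k}$ is $\phi$-independent and rides along unchanged).

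I expect the main obstacle to be precisely this sign bookkeeping: one must confirm that the Jacobian $(-1)^k$ of the measure, the sign produced by performing each residue in $\phi$ rather than in the straightened variable $\delta$, and the orientation sign $\text{sign}\det\kappa$ in the degenerate flag formula all conspire so that no spurious overall sign survives. The cleanest route is to carry out every residue in the $\delta_i$ coordinates from the outset, where the equality $\hat\delta_i(\phi)=\delta_i(2\fra-\phi)$ makes the two iterated residues identical on the nose; the elementary check in Examples~1 and~2 then certifies the absence of an extra sign. Granting the reflection lemma in this form, the theorem follows by identifying the common value of the two contour integrals with the PT3 partition function via Theorem~\ref{thm:PTvertex-expansion}.
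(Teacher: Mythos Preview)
Your argument is correct and is in fact more complete than what the paper supplies. The paper does not really prove Theorem~\ref{thm:conjugate-PT3vertex}: it defines the conjugation map on the integrand, states the theorem, and then only remarks (for the one-leg example) that the resulting poles are those of section~\ref{sec:PToneleg} with $\eps_a\to-\eps_a$ and that ``the PT partition function actually becomes the same,'' without explaining why the residue \emph{values} agree. Your reflection $\phi_I\mapsto 2\fra-\phi_I$ makes this mechanism explicit: it is the single change of variables that simultaneously (i) realizes the conjugation map $\mathcal{Z}^{\D6_{\bar{4}}\tbar\D2\tbar\D0}_{\DT;\lambda\mu\nu}\to\mathcal{Z}^{\D6_{\bar{4}}\tbar\D2\tbar\D0}_{\PT;\lambda\mu\nu}$ factor by factor, (ii) leaves the $\mathcal{Z}^{\D0\tbar\D0}$ factors invariant since they are even functions of $\phi_I-\phi_J$, and (iii) sends every charge vector $Q_i\to -Q_i$, hence exchanges the JK prescriptions with $\eta_0$ and $\tilde\eta_0$. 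Working in the straightened variables $\delta_i$ is indeed the clean way to avoid the sign bookkeeping, and the paper's own Examples~1 and~2 serve as the base case. The only cosmetic comment is that your reflection lemma also underlies the later remark in section~\ref{sec:PT3counting-conjugate} that the coordinate system is the one of Fig.~\ref{fig:PTreverse}: the fixed points $\phi_*$ on the two sides differ exactly by $\phi_*\mapsto 2\fra-\phi_*$, which is the $\eps_a\to-\eps_a$ relabeling the paper alludes to.
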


This time, the poles are classified by the PT3 configurations but the coordinates are assigned as Fig.~\ref{fig:PTreverse}. For example, for the case \eqref{eq:PTexample1}, using the Weyl invariance, the poles picked by the JK prescription are
\bea
\phi_{1}=\fra+\epsilon_{3},\quad \phi_{I}-\phi_{J}=+\epsilon_{1,2,3,4}\,\,I>J.
\eea
A similar discussion as section~\ref{sec:PToneleg} gives the poles
\bea
\phi_{I}=\fra+\epsilon_{3}I,\quad I=1,\ldots,k.
\eea 
The positions of the poles are the ones in section~\ref{sec:PToneleg} after changing $\eps_a\rightarrow -\eps_a$. Even so, the PT partition function actually becomes the same.

\begin{theorem}
    The poles are classified by the PT3 configurations with nontrivial equivariant weight:
    \bea
    \mathcal{Z}^{\PT\tbar\JK}_{\bar{4};\lambda\mu\nu}[\fq,q_{1,2,3,4}]=\sum_{\pi\in \widetilde{\mathcal{PT}}_{\lambda\mu\nu} }\fq^{|\pi|}\mathcal{Z}^{\PT\tbar\JK}_{\bar{4};\lambda\mu\nu}[\pi]
    \eea
    Note that the PT partition function actually matches with the previous case so we are using the same notation in the left hand side.
\end{theorem}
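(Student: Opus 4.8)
The plan is to reduce the statement to Theorem~\ref{thm:PTvertex-expansion} by a single change of integration variables, so that no new residue computation is needed. The central observation is that the conjugation map of section~\ref{sec:PT3counting-conjugate} is nothing but the reflection $\phi_I-\fra\mapsto-(\phi_I-\fra)$ acting on the framing factor. Concretely, comparing \eqref{eq:DTflavornode-def} with \eqref{eq:PTflavornode-def} and repeatedly using $\sum_{a\in\four}\eps_a=0$, I would first establish the identity
\[
\mathcal{Z}^{\D6_{\bar{4}}\tbar\D2\tbar\D0}_{\PT;\lambda\mu\nu}(\fra,\phi_I)=\mathcal{Z}^{\D6_{\bar{4}}\tbar\D2\tbar\D0}_{\DT;\lambda\mu\nu}(\fra,\,2\fra-\phi_I),
\]
which one checks directly on the three building blocks \eqref{eq:PTexample1}--\eqref{eq:PTexample3}; for instance $\eps_{124}=-\eps_3$ turns $\sh(\fra-\phi_I-\eps_{124})$ into $\sh(-\phi_I+\fra+\eps_3)$, matching the reflected DT factor term by term. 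The general verification is factor-by-factor from the definitions and is routine.

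Next I would note that $\mathcal{Z}^{\D0\tbar\D0}(\phi_I,\phi_J)$ depends only on $\phi_I-\phi_J$ and is symmetric under $\phi_I-\phi_J\mapsto-(\phi_I-\phi_J)$, hence is invariant under the simultaneous reflection $\phi_I\mapsto 2\fra-\phi_I$. Combining this with the framing identity, the full conjugate integrand $F_{\PT}(\phi)$ of Theorem~\ref{thm:conjugate-PT3vertex} satisfies $F_{\PT}(\phi)=F_{\DT}(2\fra-\phi)$, where $F_{\DT}$ is the integrand of Definition~\ref{def:PTvertex-JKresidue}; the overall prefactor $\tfrac{1}{k!}(\sh(-\epsilon_{14,24,34})/\sh(-\epsilon_{1,2,3,4}))^{k}$ is identical in both and is untouched by the substitution. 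Performing the change of variables $\psi_I=2\fra-\phi_I$ then carries the $\phi$-integral into the $\psi$-integral with the DT integrand. Under this linear reflection every charge vector $Q$ is sent to $-Q$, so the cone condition $\eta\in\text{Cone}(Q_1,\dots,Q_k)$ becomes $-\eta\in\text{Cone}(-Q_1,\dots,-Q_k)$; the Jacobian sign $(-1)^k$ from the measure together with the orientation reversal of each residue conspire (exactly as in Example~2 of section~\ref{sec:JK-residue}) so that the reference vector $\eta_0$ is traded for $\tilde{\eta}_0$ while the residues agree. This identifies the conjugate computation, order by order in $\fq$, with that of section~\ref{sec:PT3counting-etavector}.

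Having matched the two integrals, the classification of poles follows at once. Theorem~\ref{thm:PTvertex-expansion} asserts that the $\tilde{\eta}_0$-prescription applied to $F_{\DT}$ has poles indexed by $\mathcal{PT}_{\lambda\mu\nu}$, and the reflection $\psi\mapsto 2\fra-\psi$ is a bijection between those poles and the ones of the conjugate integral. Since the reflection also realizes the coordinate flip $\eps_a\to-\eps_a$ on the box positions — the level-one one-leg poles $\psi_I=\fra-\eps_3 I$ map to $\phi_I=\fra+\eps_3 I$ — the poles of the conjugate integral are precisely indexed by $\widetilde{\mathcal{PT}}_{\lambda\mu\nu}$, with each weight $\mathcal{Z}^{\PT\tbar\JK}_{\bar{4};\lambda\mu\nu}[\pi]$ equal to the weight of the reflected configuration in $\mathcal{PT}_{\lambda\mu\nu}$. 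Summing over poles yields the stated expansion, and the term-by-term equality of the two integrals shows that the total partition function coincides with the one computed previously, which is what justifies reusing the same symbol on the left-hand side.

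The step I expect to be the main obstacle is the three-legs case, where Proposition~\ref{prop:PTJK-pole} guarantees genuine second-order poles. There the JK residue involves a derivative $\partial_{\phi_I}$, which picks up an extra sign under $\phi_I=2\fra-\psi_I$, and the iterated-residue ordering dictated by the flag construction must be shown to be compatible with the reflection. I would handle this by verifying that the admissible orderings (equivalently, the transition rules) established in section~\ref{sec:PTthreelegs} for $\tilde{\eta}_0$ map under reflection to admissible orderings for $\eta_0$; because the derivative sign combines with the measure sign $(-1)^k$ uniformly across all terms, the resulting rational functions agree, as one can check explicitly against \eqref{eq:3leglevel1}--\eqref{eq:3leglevel3}.
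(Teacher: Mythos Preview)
Your approach is correct and is in fact cleaner than what the paper does. In the paper this theorem is not really proved: the surrounding text performs a direct JK computation on the conjugate integrand in the one-leg example, observes that the surviving poles sit at $\phi_I=\fra+\eps_3 I$ (the $\eps_a\to-\eps_a$ images of the section~\ref{sec:PToneleg} poles), and then simply asserts that ``the PT partition function actually becomes the same''. No argument is given for the matching of the weights, nor for the general case beyond the remark that ``a similar discussion'' applies.

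Your reflection $\phi_I\mapsto 2\fra-\phi_I$ makes all of this transparent. The framing identity
\[
\mathcal{Z}^{\D6_{\bar{4}}\tbar\D2\tbar\D0}_{\PT;\lambda\mu\nu}(\fra,\phi_I)=\mathcal{Z}^{\D6_{\bar{4}}\tbar\D2\tbar\D0}_{\DT;\lambda\mu\nu}(\fra,2\fra-\phi_I)
\]
is exactly right (it follows factor by factor from \eqref{eq:DTflavornode-def}, \eqref{eq:PTflavornode-def} using $\sum_a\eps_a=0$, and your check on \eqref{eq:PTexample1}--\eqref{eq:PTexample3} is the right sanity test), the $\D0\tbar\D0$ block is manifestly reflection-invariant, and the cone condition for $\eta_0$ transforms into the cone condition for $\tilde\eta_0$. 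The $(-1)^k$ from the Jacobian precisely absorbs the $(-1)^k$ that the paper inserts by hand in \eqref{eq:PToneleg-JK-partfunct} and \eqref{eq:PTtwoleg-JK-partfunct} when evaluating with $\tilde\eta_0$. So the equality of partition functions, which the paper only states, is immediate from your change of variables together with Theorem~\ref{thm:PTvertex-expansion}.

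Your caveat about the three-legs case is the right place to be careful, but the mechanism you sketch is sound: under $\phi_I=2\fra-\psi_I$ the second-order residue $\partial_{\phi_I}[(\phi_I-\phi_*)^2(\cdots)]$ becomes $-\partial_{\psi_I}[(\psi_I-\psi_*)^2(\cdots)]$, and this extra sign is again absorbed into the overall $(-1)^k$; the flag/ordering data in \eqref{eq:JKresidue-iterative} transforms covariantly because the $\kappa$-vectors just flip sign. Verifying this once against \eqref{eq:3leglevel1}--\eqref{eq:3leglevel3}, as you propose, is enough to nail down the signs.
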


\subsection{Relation with topological vertices}\label{sec:top-vertex-JK}
In this section, we will study various limits of the equivariant PT3 vertices given in the previous section. We first start by studying the D6 partition function in \eqref{eq:D6U1PEformula} and show that limits of it produces generalized MacMahon functions. In particular, we study three limits: the unrefined limit, the refined limit, and the Macdonald refined limit. We then move on to the limits of the PT vertex and show that they match with the unrefined topological vertex, refined topological vertex, and the Macdonald refined topological vertex. We show this explicitly for the cases when each leg has one box at most, but it has been confirmed for other cases too. See Appendix~\ref{app:sec-PT3vertex-examples} for formulas for the topological vertices for other examples.

\subsubsection{MacMahon function and generalized MacMahon functions}\label{sec:general-MacMahon}
Let us first review the MacMahon function and generalized MacMahon functions. The \textbf{MacMahon function} is defined as the generating function of the plane partition:
\bea\label{eq:MacMahon-funct}
M(\fq)=\sum_{\pi\in\mathcal{PP}}\fq^{|\pi|}=\prod_{n=1}^{\infty}\frac{1}{(1-\fq^{n})^{n}}=\PE\left[\frac{\fq}{(1-\fq)^{2}}\right],\qquad|\fq|<1.
\eea
Namely, for each plane partition $\pi$, there is only the topological term $\fq^{|\pi|}$ and no additional weight.

We further can introduce a refinement of the MacMahon function, which is called the \textbf{refined MacMahon function}:
\bea\label{eq:refined-MacMahon-funct}
M(q,t)=\sum_{\pi\in\mathcal{PP}}t^{\sum_{i=1}^{\infty}|\pi(-i)|}q^{\sum_{j=1}^{\infty}|\pi(j-1)|}=\prod_{i,j=1}^{\infty}\frac{1}{(1-q^{i}t^{j-1})}=\PE\left[\frac{q}{(1-q)(1-t)}\right].
\eea

The MacMahon function has a further generalization introduced by Vuleti\'c 
     \cite{Vuletic2007AGO}:
 \bea
    M(x;q,t)= \sum_{\pi}F_{\pi}(q,t)x^{|\pi|}=\prod_{\substack{i=1\\n=0}}^{\infty}\left(\frac{1-x^{i}q^{n}t}{1-x^{i}q^{n}}\right)^{i}=\PE\left[\frac{1-t}{1-q}\frac{x}{(1-x)^{2}}\right]
     \eea
     where
     \bea
\,&F_{\pi}(q,t)=\prod_{(i,j)\in\pi}F_{\pi}(i,j)(q,t)
,\quad  
f(n,m)=\begin{dcases}
    \prod_{i=0}^{n-1}\frac{1-q^{i}t^{m+1}}{1-q^{i+1}t^{m}},\quad n\geq 1\\
    1,\quad n=0
\end{dcases}\\
& F_{\pi}(i,j)(q,t)=\prod_{m=0}^{\infty}\frac{f(\pi^{0}_{1}-\pi^{+}_{m+1},m)f(\pi_{1}^{0}-\pi^{-}_{m+1},m)}{f(\pi^{0}_{1}-\pi^{0}_{m+1},m)f(\pi^{0}_{1}-\pi^{0}_{m+2},m)},
\eea
and 
\bea
\pi^{0}=\{\pi_{i,j},\pi_{i+1,j+1},\ldots\},\quad \pi^{+}=\{\pi_{i+1,j},\pi_{i+2,j+1},\ldots\},\quad  \pi^{-}=\{\pi_{i,j+1},\pi_{i+1,j+2},\ldots\}
\eea
The $(i,j)\in\pi$ here is the support of the plane partition. The parameters $x,q,t$ are all independent parameters. The difference from the MacMahon and refined MacMahon function is that there are nontrivial weights for each plane partition $\pi$ in addition to the topological term. Since the right hand side is directly related with the Macdonald kernel function, it is called the Macdonald refinement of MacMahon function. Description using vertex operators were studied in \cite{Cai2015TheVO}.

We further can introduce a refinement in the topological term as
\bea\label{eq:Macd-refined-MacMahon-funct}
M(x,y;q,t)&=\sum_{\pi}F_{\pi}(q,t)x^{\sum_{j=1}^{\infty}|\pi(j-1)|}y^{\sum_{i=1}^{\infty}|\pi(-i)|}\\
&=\prod_{i,j=1}^{\infty}\prod_{n=0}^{\infty}\left(\frac{1-tx^{i}y^{j-1}q^{n}}{1-x^{i}y^{j-1}q^{n}}\right)=\PE\left[\frac{1-t}{1-q}\frac{x}{(1-x)(1-y)}\right],
\eea
which was studied by Foda--Wu \cite{Foda:2017tnv} (see also \cite{Foda:2018jwz}). Since the factors inside the plethystic exponential resembles the Macdonald kernel function, we call this the \textbf{Macdonald refined MacMahon function}. Note that the four parameters $x,y,q,t$ are all independent. Taking the limit $x=y$ gives Vuleti\'c's generalized MacMahon function. Taking the limit $q=t$ gives the refined MacMahon function $M(x,y)$. Further imposing $x=y$ gives the MacMahon function $M(x)$.


\paragraph{Unrefined limit}
Let us study various limits of the D6 partition function:
\bea\label{eq:D6partitionfunct}
\mathcal{Z}^{\D6}_{\bar{4}}[\fq;q_{1,2,3,4}]=\sum_{\pi}\fq^{|\pi|}\mathcal{Z}_{\bar{4}}^{\D4}[\pi]=\PE\left[\frac{[q_{14}][q_{24}][q_{34}]}{[q_{1}][q_{2}][q_{3}]}\frac{\fq}{(1-\fq q_{4}^{-1/2})(1-\fq q_{4}^{1/2})}\right].
\eea
The unrefined limit $q_{4}\rightarrow 1$ of the PE formula is
\bea
\frac{[q_{14}][q_{24}][q_{34}]}{[q_{1}][q_{2}][q_{3}]}\frac{\fq}{(1-\fq q_{4}^{-1/2})(1-\fq q_{4}^{1/2})}\xrightarrow{q_{4}\rightarrow 1} \frac{\fq}{(1-\fq)^{2}}.
\eea
The summand becomes trivial in this limit and
\bea
\mathcal{Z}_{\bar{4}}^{\D6}[\pi]\xrightarrow{q_{4}\rightarrow 1} 1
\eea
which gives
\bea
\mathcal{Z}^{\D6}_{\bar{4}}[\fq;q_{1,2,3,4}]\xrightarrow{q_{4}\rightarrow 1} M(\fq). 
\eea

\paragraph{Refined limit}
To relate the PE formula of the D6 partition function with the PE formula of the refined MacMahon function, we first rescale the equivariant parameters as
\bea\label{eq:parameters-refinedlimit}
(q_{1},q_{2},q_{3},q_{4})\rightarrow (z^{r_{1}}q_{1},z^{r_{2}}q_{2},z^{r_{3}}q_{3},q_{4}),\quad r_{1}+r_{2}+r_{3}=0
\eea
and then take the limit $z\rightarrow 0$. This limit was studied in \cite{Nekrasov:2014nea}. In this paper, we focus on the following four limits
\bea
 r_{1}\gg r_{3}>0 \gg  r_{2},\quad r_{1}\gg 0>r_{3} \gg  r_{2},\quad r_{2}\gg r_{3}>0 \gg  r_{1},\quad r_{2}\gg 0>r_{3} \gg  r_{1}
\eea
The third direction $q_{3}$, which is the middle parameter after rescaling is a special direction and it is called the \textit{preferred direction}.

For the first case, we have
\bea
\frac{[q_{14}][q_{24}][q_{34}]}{[q_{1}][q_{2}][q_{3}]}\frac{\fq}{(1-\fq q_{4}^{-1/2})(1-\fq q_{4}^{1/2})}&\rightarrow \frac{[z^{r_{1}}q_{14}][z^{r_{2}}q_{24}][z^{r_{3}}q_{34}]}{[z^{r_{1}}q_{1}][z^{r_{2}}q_{2}][z^{r_{3}}q_{3}]}\frac{\fq}{(1-\fq q_{4}^{-1/2})(1-\fq q_{4}^{1/2})}\\
&\xrightarrow{z\rightarrow 0}\frac{q_{4}^{-1/2}\fq}{(1-\fq q_{4}^{-1/2})(1-\fq q_{4}^{1/2})}=\frac{t}{(1-q)(1-t)}
\eea
where we used
\bea
\frac{[z^{r}q_{i}q_{4}]}{[z^{r}q_{i}]}=\frac{(z^{r/2}q_{i4}^{1/2}-z^{-r/2}q_{i4}^{-1/2})}{(z^{r/2}q_{i}^{1/2}-z^{-r/2}q_{i}^{-1/2})}=\begin{dcases}
    q_{4}^{-1/2},\quad r>0\\
    q_{4}^{1/2},\quad\,\, r<0
\end{dcases}
\eea
and the parameters correspondences are
\bea\label{eq:refined-parameter}
q=\fq q_{4}^{1/2},\quad t=\fq q_{4}^{-1/2}.
\eea
In this limit, one can also confirm
\bea
\fq^{|\pi|}\mathcal{Z}_{\bar{4}}^{\D6}[\pi]\longrightarrow q^{\sum_{i=1}^{\infty}|\pi(-i)|}t^{\sum_{j=1}^{\infty}|\pi(j-1)|}=\fq^{|\pi|}q_{4}^{\frac{1}{2}\sum_{i=1}^{\infty}(|\pi(-i)|-|\pi(i-1)|)}.
\eea

For other limits of parameters, we have
\bea
\mathcal{Z}_{\bar{4}}^{\D6}[\fq;q_{1,2,3,4}]\longrightarrow \begin{dcases}
   M(t,q),\quad  r_{1}\gg r_{3}>0\gg r_{2}\\
   M(q,t),\quad  r_{1}\gg 0>r_{3}\gg r_{2}\\
   M(t,q),\quad r_{2}\gg r_{3}>0 \gg  r_{1}\\
   M(q,t),\quad r_{2}\gg 0>r_{3} \gg  r_{1}
\end{dcases}
\eea
where we fixed the parameter correspondence \eqref{eq:refined-parameter}. 

In the unrefined limit, we have
\bea
\fq=q=t
\eea
and so when discussing the unrefined limit, we do not distinguish these parameters.



\paragraph{Macdonald refined limit}
To relate with the D6 partition function, we instead take the following limit
\bea\label{eq:parameters-Macreflimit}
(q_{1},q_{2},q_{3},q_{4})\rightarrow (z^{r}q_{1},z^{-r}q_{2},q_{3},q_{4}),
\eea
while keeping the parameter $q_{3}$ fixed. Similarly to the previous case, we call this direction the preferred direction and keep it to be the third direction. We have two cases $r>0$ and $r<0$. For example, for $r>0$, we have
\bea
\frac{[q_{14}][q_{24}][q_{34}]}{[q_{1}][q_{2}][q_{3}]}\frac{\fq}{(1-\fq q_{4}^{-1/2})(1-\fq q_{4}^{1/2})}&\rightarrow \frac{[z^{r}q_{14}][z^{-r}q_{24}][q_{34}]}{[q_{1}][q_{2}][q_{3}]}\frac{\fq}{(1-\fq q_{4}^{-1/2})(1-\fq q_{4}^{1/2})}\\
&\xrightarrow{z\rightarrow 0} \frac{[q_{34}]}{[q_{3}]}\frac{\fq}{(1-\fq q_{4}^{-1/2})(1-\fq q_{4}^{1/2})}\\
&=\frac{(1-q_{34})}{(1-q_{3})}\frac{\fq q_{4}^{-1/2}}{(1-\fq q_{4}^{-1/2})(1-\fq q_{4}^{1/2})}\\
&=\frac{1-t}{1-q}\frac{x}{(1-x)(1-y)}
\eea
where we used the parameter correspondence:
\bea\label{eq:Macref-parameter}
x=\fq q_{4}^{-1/2},\quad y=\fq q_{4}^{1/2},\quad q=q_{3},\quad t=q_{34}
\eea
with the constraint
\bea\label{eq:Macref-constr}
x/y=q/t.
\eea
Actually we also have the same partition function for $r<0$. 

Starting from the Macdonald refined MacMahon function and imposing the condition \eqref{eq:Macref-constr} gives the limit of the D6 partition function. One can also directly confirm the correspondence
\bea
\fq^{|\pi|}\mathcal{Z}^{\D6}_{\bar{4}}[\pi]\longrightarrow \left.F_{\pi}(q,t)x^{\sum_{j=1}^{\infty}|\pi(j-1)|}y^{\sum_{i=1}^{\infty}|\pi(-i)|}\right|_{\eqref{eq:Macref-parameter}}
\eea

Summarizing, we have
\bea
\mathcal{Z}_{\bar{4}}^{\D6}[\fq;q_{1,2,3,4}]\longrightarrow
    M(x,y;q,t)\quad r\neq 0
\eea
where we are assuming \eqref{eq:Macref-constr} for the right hand side.

\subsubsection{Unrefined topological vertex}\label{sec:unrefined-vertex}
\begin{definition}[\cite{Aganagic:2003db,Okounkov:2003sp}]
    The \textbf{unrefined topological vertex} is defined as
\bea
C_{\lambda\mu\nu}(q)=q^{\kappa(\mu)/2}s_{\nu^{\rmT}}(q^{-\rho})\sum_{\eta}s_{\lambda^{\rmT}/\eta}(q^{-\rho-\nu})s_{\mu/\eta}(q^{-\rho-\nu^{\rmT}})
\eea
where $\kappa (\mu)=||\mu||^{2}-||\mu^{\rmT}||^{2}$, $||\mu||^{2}=\sum_{i=1}^{\ell(\mu)}\mu_{i}^{2}$ (see Appendix~\ref{app:top-vertex-symm-funct} for the notations), 
\bea
     \rho=(-\frac{1}{2},-\frac{3}{2},-\frac{5}{2},\ldots),\quad s_{\lambda}(1,q,q^{2},\cdots)=q^{n(\lambda)}\prod_{\Abox\in \lambda}(1-q^{h_{\lambda}(\Abox)})^{-1}
\eea
and $n(\lambda)=\sum_{i}(i-1)\lambda_{i}$ and $h_{\lambda}(\Bbox)=a_{\lambda}(\Bbox)+\ell_{\lambda}(\Bbox)+1$. The $a_{\lambda}(\Bbox),\ell_{\lambda}(\Bbox)$ are the arm and leg length. 

\end{definition}
We denote the normalized topological vertex as
\bea
\wtC_{\lambda\mu\nu}(q)=1+\cdots 
\eea
where after expanding in series of $q$, the first term will be $1$.

The main claim of this section is the following theorem.
\begin{theorem}\label{thm:PT3vertex-unref-corresp}
    In the unrefined limit $q_{4}\rightarrow 1$, the PT partition function becomes the unrefined topological vertex:
    \bea
\mathcal{Z}^{\PT\tbar\JK}_{\bar{4};\lambda\mu\nu}[\fq,q_{1,2,3,4}]\longrightarrow \wtC_{\lambda^{\rmT}\mu^{\rmT}\nu^{\rmT}}(\fq).
    \eea
\end{theorem}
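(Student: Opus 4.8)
The plan is to evaluate the limit $q_4\to 1$ directly on the configuration-sum expansion of Theorem~\ref{thm:PTvertex-expansion} (equivalently Theorem~\ref{thm:GRvertex-expansion}). The crucial structural input is that the constraint $q_1q_2q_3q_4=1$ turns $q_4\to 1$ into the Calabi--Yau specialization $q_1q_2q_3\to 1$ (i.e.\ $\epsilon_1+\epsilon_2+\epsilon_3\to 0$), so the fourth direction decouples and the weighted counting collapses to a genuine CY$_3$ box-counting problem. As a sanity anchor I would first invoke the mother-function computation of section~\ref{sec:general-MacMahon}: for $\lambda=\mu=\nu=\varnothing$ the weights trivialize, $\mathcal{Z}^{\D6}_{\bar4}[\pi]\to 1$, and the generating series becomes the MacMahon function $M(\fq)$ of \eqref{eq:MacMahon-funct}. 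The goal is then to show that turning on the asymptotic legs decorates the empty counting by exactly the normalized transposed vertex $\wtC_{\lambda^{\rmT}\mu^{\rmT}\nu^{\rmT}}(\fq)$.

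First I would compute $\lim_{q_4\to 1}\mathcal{Z}^{\PT\tbar\JK}_{\bar4;\lambda\mu\nu}[\pi]$ configuration by configuration and show that each equivariant weight degenerates to a pure integer multiplicity: the ordinary (single-order-pole) boxes contribute $1$, while the freely labeled / ultra-heavy boxes contribute the two-state degeneracy already visible in $\mathcal{Z}_{(0)}\xrightarrow{q_4\to 1}2$. Once every weight is a constant, the sum over $\mathcal{PT}_{\lambda\mu\nu}$ reduces to a purely combinatorial generating function in $\fq$ counting PT configurations with the prescribed multiplicities and with the unlabeled boxes counted with size $2$. Second, I would identify this combinatorial series with the topological vertex via the Okounkov--Reshetikhin--Vafa / Aganagic--Klemm--Mari\~{n}o--Vafa crystal description: $\wtC_{\lambda\mu\nu}(q)$ is precisely the MacMahon-normalized generating function of three-dimensional partitions with asymptotic legs $\lambda,\mu,\nu$. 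The appearance of the transposes $\lambda^{\rmT}\mu^{\rmT}\nu^{\rmT}$ and the orientation bookkeeping is supplied by the reversed coordinate conventions of section~\ref{sec:PT3counting-conjugate} (Fig.~\ref{fig:PTreverse}), under which the PT hollow structure maps onto the ORV crystal.

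A complementary route, useful as a cross-check, is the DT/PT correspondence $\mathsf{DT}_{\lambda\mu\nu}=M[\fq,q]\,\mathsf{PT}_{\lambda\mu\nu}$: since $M[\fq,q]\to M(\fq)$ in the unrefined limit, proving the theorem is equivalent to establishing the classical statement that the unrefined DT vertex equals $M(\fq)$ times the topological vertex. One could therefore import the limit on the DT side (where the poles are only of first order and the weights are products of simple factors, hence easier to control) and transport it to the PT side by dividing out the MacMahon factor.

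The hard part will be the three-legs case. By Proposition~\ref{prop:PTJK-pole} second-order poles occur there, so the weights are genuine residues of derivatives (as in \eqref{eq:3leglevel1}--\eqref{eq:3leglevel3}) rather than products of elementary factors, and the $q_4\to 1$ limit of these derivative expressions must be shown to collapse \emph{uniformly} to the ORV coefficients, together with the correct factor-$2$ degeneracies for the freely labeled boxes. A direct residue-by-residue computation only settles low levels (one box per leg); a uniform proof would most naturally proceed through the free-field/vertex-operator realization of section~\ref{sec:freefield-contourintegral}, where the collision limit responsible for the second-order pole can be resolved once and for all, or through a plethystic generating-function identity matching section~\ref{sec:general-MacMahon}. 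One must also track the JK sign factors $(-1)^{|\pi|}$ of \eqref{eq:JKresidue-sign} and verify that they are absorbed into the standard normalization defining $\wtC$.
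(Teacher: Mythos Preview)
Your overall strategy matches the paper's: take the $q_4\to 1$ limit configuration by configuration, observe that each equivariant weight collapses to an integer multiplicity (ordinary boxes $\to 1$, freely labeled / ultra-heavy type-III boxes $\to 2$), and then identify the resulting generating function with the normalized topological vertex. The paper, however, does not give a uniform proof: it only carries out explicit checks for the one-box-per-leg examples \eqref{eq:DTexample1}--\eqref{eq:DTexample3} and refers to further case-by-case verifications in Appendix~\ref{app:sec-PT3vertex-examples}. So your proposal is, if anything, more ambitious than what the paper actually does.

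One point to sharpen. You write that after the weights trivialize, the PT sum should be identified with the ORV/AKMV crystal generating function. But the ORV crystal counts DT-type plane partitions with asymptotic legs, not PT configurations; the PT hollow-structure boxes of section~\ref{sec:PTrule-coord} are a genuinely different combinatorial object. The statement that the PT box count (with the labeling rules of Cond.~\ref{cond:PTrules}, unlabeled boxes weighted by $2$, freely labeled boxes counted twice) equals $\wtC_{\lambda^{\rmT}\mu^{\rmT}\nu^{\rmT}}(\fq)$ is itself a nontrivial combinatorial theorem from \cite{Pandharipande:2007sq,Pandharipande:2007kc} and its refinements \cite{Jenne2020TheCP,Jenne:2021irh}; it is not obtained by a coordinate flip onto the ORV crystal. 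Your alternative route via the DT/PT correspondence of Theorem~\ref{thm:DTPT-correspondence} is cleaner and avoids this issue: once the weights trivialize on both the DT and PT sides, the equivariant correspondence reduces to the classical unrefined identity $\mathcal{Z}_{\lambda\mu\nu}(\fq)=M(\fq)\,\wtC_{\lambda^{\rmT}\mu^{\rmT}\nu^{\rmT}}(\fq)$, and dividing by $M(\fq)$ gives the claim. This also sidesteps the second-order-pole analysis you correctly flag as the hard part.
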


Note that the orientation of \cite{Okounkov:2003sp} is opposite from the definition here and the transpose of the Young diagrams on the topological vertex side is the consequence of it. We have confirmed the above theorem for the examples in Appendix~\ref{app:sec-PT3vertex-examples}. Let us check this explicitly for the cases \eqref{eq:DTexample1}, \eqref{eq:DTexample2}, \eqref{eq:DTexample3}.

\paragraph{One-leg case}
The one leg PT3 partition function becomes
\bea
\mathcal{Z}^{\PT\tbar\JK}_{\bar{4};\,\varnothing\varnothing\,\Cbox}[\fq,q_{1,2,3,4}]=\sum_{k=0}^{\infty}\fq^{k}\prod_{i=1}^{k}\frac{[q_{4}q_{3}^{i}]}{[q_{3}^{i}]}\longrightarrow\wtC_{\varnothing\varnothing\,\Cbox}(\fq)= \sum_{k=0}^{\infty}\fq^{k}=\frac{1}{1-\fq}
\eea
\paragraph{Two-leg case}The two leg PT3 partition function becomes
\bea
\mathcal{Z}^{\PT\tbar\JK}_{\bar{4};\,\Cbox\,\Cbox\,\varnothing}[\fq,q_{1,2,3,4}]=1+\sum_{m,n=0}^{\infty}\mathcal{Z}^{\PT\tbar\JK}_{\bar{4};\,\Cbox\,\Cbox\,\varnothing}[[m,n]]\longrightarrow \wtC_{\Cbox\,\Cbox\varnothing}(\fq)=1+\sum_{m,n=0}^{\infty}\fq^{m+n+1} =\frac{1-\fq+\fq^{2}}{(1-\fq)^{2}} 
\eea
where we used
\bea
\mathcal{Z}^{\PT\tbar\JK}_{\bar{4};\,\Cbox\,\Cbox\,\varnothing}[[m,n]]\rightarrow 1.
\eea

\paragraph{Three-leg case}
For the three-leg case, taking the unrefined limit $\epsilon_{1}+\epsilon_{2}+\epsilon_{3}\rightarrow 0$, we have for level one \eqref{eq:3leglevel1}
\bea
\mathcal{Z}^{\PT\tbar\JK}_{\bar{4};\,\Cbox\,\Cbox\,\Cbox}[1] \rightarrow 2
\eea
where we used $\sh(0)=0$ and $\ch(0)=2$. The origin of the two state degeneracy is this $\ch$ factor.

For level two
\bea
\mathcal{Z}_{(0,0)}\rightarrow 2,\quad \mathcal{Z}_{(0,-\epsilon_{1,2,3})}\rightarrow 1
\eea
which gives
\bea
\mathcal{Z}^{\PT\tbar\JK}_{\bar{4};\,\Cbox\,\Cbox\,\Cbox}[2]\rightarrow 4
\eea
For level three, we have
\bea
\mathcal{Z}_{(0,0,-\epsilon_{1,2,3})}\rightarrow 2,\quad \mathcal{Z}_{(0,-\epsilon_{i},-2\epsilon_{i})}\rightarrow 1
\eea
giving 
\bea
\mathcal{Z}^{\PT\tbar\JK}_{\bar{4};\,\Cbox\,\Cbox\,\Cbox}[3]\rightarrow 6.
\eea
For the generic level \eqref{eq:3leglevelgeneric}, the residues of the configurations with the ultra-heavy box become
\bea
\mathcal{Z}_{(0,0,-\epsilon_{1},\ldots,-n_{1}\epsilon_{1},-\epsilon_{2},\ldots,-n_{2}\epsilon_{2},\epsilon_{3},\ldots,-n_{3}\epsilon_{3})}\longrightarrow 2
\eea
and for those without the ultra heavy box become
\bea
\mathcal{Z}_{(0,-\epsilon_{i},\ldots,-(k-1)\epsilon_{i})}\longrightarrow 1.
\eea
Taking care of the multiplicity and the Weyl group factor, we finally have
\bea
\mathcal{Z}^{\PT\tbar\JK}_{\bar{4};\,\Cbox\,\Cbox\,\Cbox}[k]&\longrightarrow \sum_{i=1}^{3}1+\sum_{n_{1}+n_{2}+n_{3}=k-2}1=3+\binom{k}{2}.
\eea
Therefore, 
\bea
\mathcal{Z}^{\PT\tbar\JK}_{\bar{4};\Cbox\Cbox\Cbox}[\fq,q_{1,2,3,4}]\longrightarrow &\,\,\wtC_{\Cbox\Cbox\Cbox}(\fq)=1+2\mathfrak{q}+\sum_{k\geq 2}\left(\binom{k}{2}+3\right)\mathfrak{q}^{k}=\frac{1-\mathfrak{q}+\mathfrak{q}^{2}-\mathfrak{q}^{3}+\mathfrak{q}^{4}}{(1-\mathfrak{q})^{3}}\\
&\qquad \,\quad=1+2\mathfrak{q}+4\fq^{2}+6\fq^{3}+9\fq^{4}+13\fq^{5}\cdots .
\eea
Note that the two states degeneracy only appear at level one.

\subsubsection{Refined topological vertex}\label{sec:refined-vertex}
\begin{definition}[\cite{Iqbal:2007ii}]
    The \textbf{refined topological vertex}\footnote{For explicit computations of the refined topological vertex and the unrefined topological vertex, we used a \texttt{Mathematica} program \href{https://github.com/kantohm11/SchurFs?tab=readme-ov-file}{SchurFs} accompanied with \cite{Hayashi:2017jze}. } is defined as
    \bea
    C_{\lambda\mu\nu}(t,q)
    &=q^{\frac{||\mu||^{2}+||\nu||^{2}}{2}}t^{-\frac{||\mu^{\rmT}||^{2}}{2}}\widetilde{Z}_{\nu}(t,q)\sum_{\eta}\left(\frac{q}{t}\right)^{\frac{|\eta|+|\lambda|-|\mu|}{2}}s_{\lambda^{\rmT}/\eta}(t^{-\rho}q^{-\nu})s_{\mu/\eta}(t^{-\nu^{\rmT}}q^{-\rho})
\eea
where 
\bea
     \rho=(-\frac{1}{2},-\frac{3}{2},-\frac{5}{2},\cdots),\quad
    \widetilde{Z}_{\nu}(t,q)&=\prod_{(i,j)\in\nu}(1-t^{\ell_{\nu}(i,j)+1}q^{a_{\nu}(i,j)})^{-1}.
\eea
\end{definition}
Note that the unrefined topological vertex is obtained from the unrefined limit $q=t$:
\bea
C_{\lambda\mu\nu}(q,q)=C_{\lambda\mu\nu}(q).
\eea
The refined topological vertex can be expanded in series of $q,t$. We normalize the refined topological vertex and make the first term to start from $1$ and denote it as
\bea
\widetilde{C}_{\lambda\mu\nu}(t,q)=1+\cdots.
\eea

Depending on how we take the limits of the equivariant parameters, we have the following correspondence.
\begin{theorem}\label{thm:PT3vertex-ref-corresp}
    Keeping the third direction to be the preferred direction, the PT3 vertex under the limit \eqref{eq:parameters-refinedlimit} discussed in becomes the refined topological vertex:
    \bea
    \mathcal{Z}^{\PT\tbar\JK}_{\bar{4};\lambda\mu\nu}[\mathfrak{q},q_{1,2,3,4}]\longrightarrow  \begin{dcases}
    \wtC_{\lambda^{\rmT}\mu^{\rmT}\nu^{\rmT}}(t,q),\quad r_{1}\gg r_{3}>0\gg r_{2}\\
    \wtC_{\mu\lambda\nu}(q,t),\quad r_{1}\gg 0>r_{3}\gg r_{2}\\
    \wtC_{\mu\lambda\nu}(t,q),\quad r_{2}\gg r_{3}>0\gg r_{1}\\
    \wtC_{\lambda^{\rmT}\mu^{\rmT}\nu^{\rmT}}(q,t),\quad r_{1}\gg 0>r_{3}\gg r_{2}
    \end{dcases}
    \eea
    with the parameter correspondence
    \bea
    q=\fq q_{4}^{1/2},\quad t=\fq q_{4}^{-1/2}
    \eea
\end{theorem}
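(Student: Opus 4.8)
The plan is to mirror the proof of the unrefined correspondence (Theorem~\ref{thm:PT3vertex-unref-corresp}), now retaining the finer $z$-dependence that the rescaling \eqref{eq:parameters-refinedlimit} introduces into each equivariant weight. First I would start from the box-configuration expansion of Theorem~\ref{thm:PTvertex-expansion},
\[
\mathcal{Z}^{\PT\tbar\JK}_{\bar{4};\lambda\mu\nu}[\fq,q_{1,2,3,4}]=\sum_{\pi\in\mathcal{PT}_{\lambda\mu\nu}}\fq^{|\pi|}\,\mathcal{Z}^{\PT\tbar\JK}_{\bar{4};\lambda\mu\nu}[\pi],
\]
so that the limit may be taken weight by weight. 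Each weight $\mathcal{Z}^{\PT\tbar\JK}_{\bar{4};\lambda\mu\nu}[\pi]$ is a finite product of factors $\sh(\cdot)$, and under $q_i\mapsto z^{r_i}q_i$ every factor $[z^{r}q]=z^{r/2}q^{1/2}-z^{-r/2}q^{-1/2}$ degenerates as $z\to0$ to $z^{\mp r/2}$ times the monomial $\mp q^{\pm1/2}$, following the sign of $r$ exactly as in the computation of $[z^{r}q_iq_4]/[z^{r}q_i]$ recorded in section~\ref{sec:general-MacMahon}.

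Second, I would prove a degeneration lemma: in each chamber (for instance $r_1\gg r_3>0\gg r_2$) the net power of $z$ in $\mathcal{Z}^{\PT\tbar\JK}_{\bar{4};\lambda\mu\nu}[\pi]$ cancels, so the limit is finite and nonzero, with the surviving value a monomial in $q_4^{1/2}$ whose exponent records the row/column content of $\pi$ relative to the preferred third direction. This is the refined analogue of the statement that the summand becomes $1$ at $q_4\to1$. The sign of $r_3$ decides whether the preferred leg contributes $q_4^{-1/2}$ or $q_4^{+1/2}$, which is exactly what interchanges $q=\fq q_4^{1/2}$ and $t=\fq q_4^{-1/2}$ and so produces the $(t,q)$ versus $(q,t)$ alternatives; the relative order of $r_1$ and $r_2$ fixes which asymptotic leg is read as a row and which as a column, yielding the $\lambda\leftrightarrow\mu$ exchange and the transposition $\lambda\mapsto\lambda^{\rmT}$.

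Third, having reduced each configuration to an explicit monomial weight, I would match the resulting generating function with the skew-Schur expansion defining $C_{\lambda\mu\nu}(t,q)$. The principal specializations $t^{-\rho}$, $q^{-\rho}$ should arise from the geometric tower along the preferred axis, the diagonal slicing of the half-cylinder should reproduce the gluing sum over $\eta$, and the prefactor $q^{(||\mu||^{2}+||\nu||^{2})/2}t^{-||\mu^{\rmT}||^{2}/2}\widetilde{Z}_\nu(t,q)$ should appear as the refined limit of the trivial-boundary contribution along $\nu$, consistent with the mother-function limit $M(q,t)$ of section~\ref{sec:general-MacMahon} which fixes the overall normalization. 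As in the unrefined case I would carry this out explicitly for at most one box per leg, comparing the refined limits of \eqref{eq:PToneleg-JK-partfunct}, \eqref{eq:PTtwoleg-JK-partfunct}, and \eqref{eq:3leglevel1}--\eqref{eq:3leglevelgeneric} with the series expansion of $\wtC_{\lambda^{\rmT}\mu^{\rmT}\nu^{\rmT}}(t,q)$ computed from the definition under the dictionary \eqref{eq:refined-parameter}, treating each of the four chambers separately; the diagonal specialization $q=t$ must recover Theorem~\ref{thm:PT3vertex-unref-corresp} as a consistency check.

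The hard part will be the three-leg case, where Proposition~\ref{prop:PTJK-pole} forces genuine second-order poles and the residues acquire the $\ch$-derivative contributions of \eqref{eq:3leg-level2-1}--\eqref{eq:3leg-level2-2}. Tracking how these derivative terms behave under $z\to0$ is delicate, because the derivative may act on $z$-dependent factors and spoil the naive leading-power counting; one must check in particular that the two-state degeneracy at the origin survives as the correct refined coefficient rather than collapsing in the limit. Confirming that the derivative terms do not alter the leading power, and that the surviving weights assemble into the $\eta$-sum of the refined vertex with the predicted transposition pattern, is where the genuine difficulty lies.
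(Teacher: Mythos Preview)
Your proposal is broadly aligned with what the paper actually does, but you are planning a considerably more general argument than the paper attempts. The paper does not prove the theorem in full generality: after stating it, the text simply says ``Let us see this explicitly for the examples \eqref{eq:DTexample1}, \eqref{eq:DTexample2}, \eqref{eq:DTexample3} for the limit $r_{1}\gg r_{3}>0\gg r_{2}$'' and then computes both sides term by term for $(\varnothing,\varnothing,\Bbox)$, $(\Bbox,\Bbox,\varnothing)$, and $(\Bbox,\Bbox,\Bbox)$ up to level three, checking agreement numerically. Larger examples are relegated to Appendix~\ref{app:sec-PT3vertex-examples}. There is no degeneration lemma, no argument that the $z$-powers cancel in general, and no attempt to identify the limit with the skew-Schur sum defining $C_{\lambda\mu\nu}(t,q)$; the theorem is supported by evidence rather than by a proof.

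Your step~4 (explicit one-box-per-leg checks) therefore reproduces exactly what the paper does, and is the only part the paper carries out. Steps~1--3 of your plan---the weight-by-weight limit, the cancellation lemma, and the skew-Schur matching---are a sensible outline for a genuine proof, but you should be aware that the paper does not execute them and so cannot be used as a guide. In particular, your third step (identifying the diagonal slicing of the PT half-cylinder with the $\eta$-sum in the refined vertex) is essentially the content of the Iqbal--Kozcaz--Vafa/Okounkov--Reshetikhin correspondence and would be the real work; your caution about the three-leg derivative terms is well placed, and the paper handles that case only by direct computation at low level.
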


Let us see this explicitly for the examples in \eqref{eq:DTexample1}, \eqref{eq:DTexample2}, and \eqref{eq:DTexample3} for the limit $r_{1}\gg r_{3}>0\gg r_{2}$. 
\paragraph{One-leg case}
The refined topological vertex is given as
\bea
\wtC_{\Cbox\,\varnothing\varnothing}(t,q)=\frac{1}{1-t},\quad \wtC_{\varnothing\,\Cbox\,\varnothing}(t,q)=\frac{1}{1-q},\quad \wtC_{\varnothing\varnothing\,\Cbox}(t,q)=\frac{1}{1-t}.
\eea
The PT partition functions are
\bea
\mathcal{Z}^{\PT\tbar\JK}_{\bar{4};\,\Cbox\,\varnothing\varnothing}[\mathfrak{q},q_{1,2,3,4}]=\sum_{k=0}^{\infty}\mathfrak{q}^{k}\prod_{i=1}^{k}\frac{[q_{4}q_{1}^{i}]}{[q_{1}^{i}]}=\PE\left[\mathfrak{q}\frac{[q_{14}]}{[q_{1}]}\right],\\
\mathcal{Z}^{\PT\tbar\JK}_{\bar{4};\varnothing\,\Cbox\,\varnothing}[\mathfrak{q},q_{1,2,3,4}]=\sum_{k=0}^{\infty}\mathfrak{q}^{k}\prod_{i=1}^{k}\frac{[q_{4}q_{2}^{i}]}{[q_{2}^{i}]}=\PE\left[\mathfrak{q}\frac{[q_{24}]}{[q_{2}]}\right],\\
\mathcal{Z}^{\PT\tbar\JK}_{\bar{4};\varnothing\varnothing\,
\Cbox}[\mathfrak{q},q_{1,2,3,4}]=\sum_{k=0}^{\infty}\mathfrak{q}^{k}\prod_{i=1}^{k}\frac{[q_{4}q_{3}^{i}]}{[q_{3}^{i}]}=\PE\left[\mathfrak{q}\frac{[q_{34}]}{[q_{3}]}\right].
\eea
Focusing on $r_{1}\gg r_{3}>0\gg r_{2}$ and taking the limit of the equivariant parameters, we have
\bea
\mathcal{Z}^{\PT\tbar\JK}_{\bar{4};\,\Cbox\,\varnothing\varnothing}[\mathfrak{q},q_{1,2,3,4}]&\longrightarrow \PE[\fq q_{4}^{-1/2}]=\PE[t]=\wtC_{\Cbox\,\varnothing\varnothing}(t,q),\\
\mathcal{Z}^{\PT\tbar\JK}_{\bar{4};\varnothing\,\Cbox\,\varnothing}[\mathfrak{q},q_{1,2,3,4}]&\longrightarrow \PE[\fq q_{4}^{1/2}]=\PE[q]=\wtC_{\varnothing\,\Cbox\,\varnothing}(t,q),\\
\mathcal{Z}^{\PT\tbar\JK}_{\bar{4};\varnothing\varnothing\,
\Cbox}[\mathfrak{q},q_{1,2,3,4}]&\longrightarrow \PE[\fq q_{4}^{-1/2}]=\PE[t]=\wtC_{\varnothing\varnothing\,\Cbox}(t,q).
\eea


\paragraph{Two-leg case}
The refined topological vertices are 
\bea
\wtC_{\Cbox\,\Cbox\varnothing}(t,q)=\frac{1-q+qt}{(1-t)(1-q)},\quad \wtC_{\varnothing\,\Cbox\,\Cbox}(t,q)=\frac{1-q+qt}{(1-t)(1-q)},\quad \wtC_{\Cbox\,\varnothing\,\Cbox}(t,q)=\frac{1-t+qt}{(1-t)^{2}}.
\eea
Under the refined limit with the slope $r_{1}\gg r_{3}>0\gg r_{2}$, we have
\bea
\fq^{m+n+1}\mathcal{Z}_{\bar{4};\,\Cbox\,\Cbox\varnothing}^{\PT\tbar\JK}[[m,n]]\longrightarrow \fq^{m+n+1}q_{4}^{-1/2} q_{4}^{-m/2}q_{4}^{n/2}=t^{m+1}q^{n}
\eea
and thus
\bea
\mathcal{Z}^{\PT\tbar\JK}_{\bar{4};\,
\Cbox\,\Cbox\,\varnothing}[\mathfrak{q},q_{1,2,3,4}]\longrightarrow 1+\sum_{m,n=0}^{\infty}t^{m+1}q^{n}=1+\frac{t}{(1-q)(1-t)}=\wtC_{\Cbox\,\Cbox\,\varnothing}(t,q).
\eea
For the other cases, we have 
\bea
\fq^{m+n+1}\mathcal{Z}_{\bar{4};\,\Cbox\,\varnothing\,\Cbox}^{\PT\tbar\JK}[[m,n]]&\longrightarrow \fq^{m+n+1}\begin{dcases}
    q_{4}^{1/2}q_{4}^{-m/2}q_{4}^{-n/2}\\
    q_{4}^{-m/2}q_{4}^{-1/2}
\end{dcases} =\begin{dcases}
    qt^{m+n},\quad n\neq 0\\
    t^{m+1},\quad n=0
\end{dcases}\\
\fq^{m+n+1}\mathcal{Z}_{\bar{4};\varnothing\,\Cbox\,\Cbox}^{\PT\tbar\JK}[[m,n]]&\longrightarrow\fq^{m+n+1}q_{4}^{-1/2}q_{4}^{m/2}q_{4}^{-n/2}=t^{n+1}q^{m}
\eea 
where $m,n$ are the number of boxes extending in $3,1$ and $2,3$ respectively. Therefore, we have
\bea
\mathcal{Z}^{\PT\tbar\JK}_{\bar{4};\,\Cbox\,\varnothing\,\Cbox}[\mathfrak{q},q_{1,2,3,4}]&\longrightarrow \sum_{n=0}^{\infty}t^{n}+q\sum_{k=1}^{\infty}kt^{k}=\frac{1}{1-t}+q\frac{t}{(1-t)^{2}}=\wtC_{\Cbox\,\varnothing\,\Cbox}(t,q)\\
\mathcal{Z}^{\PT\tbar\JK}_{\bar{4};\varnothing\,\Cbox\,\Cbox}[\mathfrak{q},q_{1,2,3,4}]&\longrightarrow 1+\sum_{m,n=0}^{\infty} q^{m}t^{n+1}=1+\frac{t}{(1-q)(1-t)}=\wtC_{\varnothing\,\Cbox\,\Cbox}(t,q)
\eea

\paragraph{Three-leg case}
The refined topological vertex is expanded as
\bea
\wtC_{\Cbox\,\Cbox\,\Cbox}(t,q)=\frac{1-q^{2}-t+2qt-qt^{2}+q^{2}t^{2}}{(1-q)(1-t)^{2}}=1 + q + t + 3 q t + q^2 t + t^2 + 4 q t^2 + t^3+\cdots.
\eea
For level one, we have
\bea
\mathcal{Z}^{\PT\tbar\JK}_{\bar{4};\,\Cbox\,\Cbox\,\Cbox}[1]\longrightarrow q_{4}^{1/2}+q_{4}^{-1/2}.
\eea
For level two, we have
\bea
\frac{1}{2}\mathcal{Z}_{(0,0)}&\longrightarrow  q_{123}, \quad \mathcal{Z}_{(0,-\epsilon_{1})}\longrightarrow 1,\quad \mathcal{Z}_{(0,-\epsilon_{2})}\longrightarrow 1 ,\quad 
\mathcal{Z}_{(0,-\epsilon_{3})}\longrightarrow 1.
\eea
For level three, we have
\bea
\frac{1}{2}\mathcal{Z}_{(0,0,-\epsilon_{1})}\rightarrow q_{4}^{-1/2} ,\quad \frac{1}{2}\mathcal{Z}_{(0,0,-\epsilon_{2})}\rightarrow q_{4}^{-1/2},\quad \frac{1}{2}\mathcal{Z}_{(0,0,-\epsilon_{3})}\rightarrow q_{4}^{-3/2}\\
 \mathcal{Z}_{(0,-\epsilon_{1},-2\epsilon_{1})}\rightarrow q_{4}^{-1/2},\quad \mathcal{Z}_{(0,-\epsilon_{2},-2\epsilon_{2})}\rightarrow q_{4}^{1/2},\quad \mathcal{Z}_{(0,-\epsilon_{3},-2\epsilon_{3})}\rightarrow q_{4}^{-1/2}.
\eea
Totally, we have
\bea
\mathcal{Z}^{\PT\tbar\JK}_{\bar{4};\,
\Cbox\,\Cbox\,\Cbox}[\mathfrak{q},q_{1,2,3,4}]&\rightarrow 1+\fq (q_{4}^{1/2}+q_{4}^{-1/2})+\fq^{2}(3+q_{4}^{-1})+\fq^{3}(4 q_{4}^{-1/2}+q_{4}^{1/2}+q_{4}^{-3/2})+\cdots\\
&=1+q+t+3qt+t^{2}+4qt^{2}+q^{2}t+t^{3}+\cdots=\wtC_{\Cbox\,\Cbox\,\Cbox}(t,q)
\eea
which matches with the refined topological vertex up to three level. One can also confirm this for higher levels but we omit the discussion.

\subsubsection{Macdonald refined topological vertex}\label{sec:Macdonald-vertex}
\begin{definition}[\cite{Foda:2017tnv}]
The \textbf{Macdonald refined topological vertex}\footnote{For explicit computations of the Macdonald refined topological vertex, we wrote a \texttt{Sage Math} program by using basic \href{https://more-sagemath-tutorials.readthedocs.io/en/latest/tutorial-symmetric-functions.html}{symmetric functions}. For example, one can use the scalar product, which is already implemented, to introduce skew Macdonald symmetric functions. Converting to the power sum and substituting the variables such as $x^{-\rho+1/2}y^{-\nu}$, one can evaluate the explicit Macdonald vertex. See Appendix~\ref{app:sec-PT3vertex-examples} for examples.} (shortly Macdonald vertex) is defined as
\bea
\mathcal{M}_{\lambda\mu\nu}(x,y;q,t)=\prod_{n=0}^{\infty}\left(\prod_{\Abox\in\nu}\frac{1-tq^{n}x^{\ell_{\nu}(\Abox)+1}y^{a_{\nu}(\Abox)}}{1-q^{n}x^{\ell_{\nu}(\Abox)+1}y^{a_{\nu}(\Abox)}}\right)\sum_{\eta}P_{\mu/\eta}(y^{-\rho-1/2}x^{-\nu^{\rmT}};q,t)Q_{\lambda^{\rmT}/\eta}(x^{-\rho+1/2}y^{-\nu};q,t)
\eea
where $P_{\lambda}(x;q,t),Q_{\mu}(x;q,t)$ are skew Macdonald symmetric functions.     
\end{definition}
Note that the parameters $x,y,q,t$ are all independent. The notation used here is slightly different from the original paper \cite{Foda:2017tnv}. The three Young diagrams $(Y_{1},Y_{2},Y_{3})$ there are $(\mu,\lambda^{\rmT},\nu)$ here. The normalized Macdonald vertex is denoted as
\bea
\widetilde{\mathcal{M}}_{\lambda\mu\nu}(x,y;q,t)=1+\cdots
\eea
after expanding $\mathcal{M}_{\lambda\mu\nu}(x,y)$ in series of $x,y$ and dividing it by the first term with the lowest degree of $x,y$.

Taking limits of the normalized Macdonald vertex, one can recover the normalized unrefined and refined topological vertices:
\bea
\wtM_{\lambda\mu\nu}(x,y;q,t)=\begin{dcases}
    \wtC_{\lambda\mu\nu}(x,y),\quad q=t\\
    \wtC_{\lambda\mu\nu}(x),\quad y=x,\,\,t=q
\end{dcases}
\eea

\begin{theorem}\label{thm:PT3vertex-Macdref-corresp}
Keeping the third direction to be the preferred direction and taking the limit \eqref{eq:parameters-Macreflimit}, the PT3 vertex becomes the Macdonald vertex
    \bea
    \mathcal{Z}^{\PT\tbar\JK}_{\bar{4};\lambda\mu\nu}[\mathfrak{q},q_{1,2,3,4}] \longrightarrow \begin{dcases}
    \wtM_{\lambda^{\rmT}\mu^{\rmT}\nu^{\rmT}}(x,y;q,t),\quad r>0 \\
    \wtM_{\mu\lambda\nu}(x,y;q,t),\quad  r<0
    \end{dcases}
    \eea
    where the parameters of the Macdonald vertex are specialized as \eqref{eq:Macref-parameter}
    \bea
x=\fq q_{4}^{-1/2},\quad y=\fq q_{4}^{1/2},\quad q=q_{3},\quad t=q_{34}.
    \eea
\end{theorem}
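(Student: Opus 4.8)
The plan is to follow the route already used for the refined correspondence in Theorem~\ref{thm:PT3vertex-ref-corresp}, retaining now the full dependence on the preferred parameter $q_{3}$ and on $q_{4}$. By Theorem~\ref{thm:PTvertex-expansion} the PT3 vertex is a sum over box configurations $\pi\in\mathcal{PT}_{\lambda\mu\nu}$ weighted by $\fq^{|\pi|}\mathcal{Z}^{\PT\tbar\JK}_{\bar{4};\lambda\mu\nu}[\pi]$, so it suffices to control the $z\to 0$ limit of each individual weight together with the refinement of the $\fq$-grading under the rescaling \eqref{eq:parameters-Macreflimit}. First I would fix the refined grading exactly as in the empty-leg computation of section~\ref{sec:general-MacMahon}: setting $x=\fq q_{4}^{-1/2}$, $y=\fq q_{4}^{1/2}$ and splitting $|\pi|$ into $\sum_{j}|\pi(j-1)|$ and $\sum_{i}|\pi(-i)|$ according to the position of each box relative to the preferred ($q_{3}$) axis. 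Because this grading does not see the leg data, it transfers verbatim from the MacMahon case to nontrivial $(\lambda,\mu,\nu)$.

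The second step is the asymptotic analysis of $\mathcal{Z}^{\PT\tbar\JK}_{\bar{4};\lambda\mu\nu}[\pi]$, which is a product of $\sh$-ratios. Under $(q_{1},q_{2})\to(z^{r}q_{1},z^{-r}q_{2})$ with $z\to 0$ I would apply the elementary limit $[z^{\pm r}q_{i}q_{4}]/[z^{\pm r}q_{i}]\to q_{4}^{\mp1/2}$ already used for the MacMahon reduction. Its effect is to collapse every factor carrying a nonzero net power of $q_{1}$ or $q_{2}$ into a pure power of $q_{4}^{\pm1/2}$, which is absorbed into the $x,y$-grading, while the factors depending only on $q_{3}$ and $q_{4}$ survive. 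I expect these survivors to assemble into the Vuleti\'c weight $F_{\pi}(q,t)$ with $q=q_{3}$, $t=q_{34}$, upgrading the empty-leg identification $\fq^{|\pi|}\mathcal{Z}^{\D6}_{\bar{4}}[\pi]\to F_{\pi}(q,t)\,x^{\sum_{j}|\pi(j-1)|}y^{\sum_{i}|\pi(-i)|}$ of section~\ref{sec:general-MacMahon} to one dressed by the leg contributions.

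The third step is to match the resulting configuration sum with the Macdonald vertex. Its definition factorizes into a preferred-direction prefactor $\prod_{n}\prod_{\Abox\in\nu}(1-tq^{n}x^{\ell+1}y^{a})/(1-q^{n}x^{\ell+1}y^{a})$ and a sum $\sum_{\eta}P_{\mu/\eta}(y^{-\rho-1/2}x^{-\nu^{\rmT}};q,t)\,Q_{\lambda^{\rmT}/\eta}(x^{-\rho+1/2}y^{-\nu};q,t)$ over an intermediate partition $\eta$. I would show that the $\nu$-leg half-cylinder in the preferred direction produces the prefactor, while the interleaving of the $\lambda$- and $\mu$-legs across the slices transverse to the $q_{3}$-axis reproduces the two skew Macdonald functions, with $\eta$ the partition interpolating between the asymptotic legs at the central slice. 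The two cases $r>0$ and $r<0$ differ only by $q_{1}\leftrightarrow q_{2}$, which exchanges the $\lambda$- and $\mu$-legs and accounts for $\wtM_{\lambda^{\rmT}\mu^{\rmT}\nu^{\rmT}}$ versus $\wtM_{\mu\lambda\nu}$, exactly paralleling the four sub-cases of Theorem~\ref{thm:PT3vertex-ref-corresp}. Comparing lowest-order terms then fixes the overall normalization and promotes the identity to the normalized vertices $\wtM$.

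The hardest part will be this third step when all three legs are nontrivial, where by Proposition~\ref{prop:PTJK-pole} the second-order poles make the weights contain derivatives of $\sh$ factors, hence $\ch$-type contributions. I would need a weight-preserving bijection, slice by slice along the preferred direction, between the PT/GR configurations (including the light/heavy/ultra-heavy data of Cond.~\ref{cond:GRrules}) and the pairs of tableaux indexing the skew Macdonald expansion, and I would have to check that the $\ch$-terms generated by the double poles are precisely those required to build the Macdonald---rather than Hall--Littlewood or Schur---structure functions. Establishing this bijection and the cancellation of spurious terms in the limit is the technical crux; away from the triple-leg locus the single poles reduce the argument to a direct $q_{3}$-refinement of the refined-vertex computation, which I would present first as the base case.
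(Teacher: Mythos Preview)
Your proposal outlines a genuine general argument, whereas the paper does \emph{not} give a general proof of this theorem at all. Immediately after stating the theorem the paper explains that ``practically, to confirm this correspondence, one starts from the Macdonald vertex and expand it in the series of $x,y$\ldots Comparing with the limits of the JK-residue computation, one obtains the above correspondence,'' and then simply carries out explicit low-level checks for the one-leg, two-leg, and three-leg cases with at most one box per leg (with a few further examples in Appendix~\ref{app:sec-PT3vertex-examples}). So your strategy is strictly more ambitious than what the paper attempts.

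That said, there are real gaps in your plan. Your step~2 is stated too optimistically: the claim that every factor with a nonzero net power of $q_{1}$ or $q_{2}$ collapses to a pure power of $q_{4}^{\pm 1/2}$ is only literally true factor by factor for the elementary ratio $[z^{\pm r}q_{i}q_{4}]/[z^{\pm r}q_{i}]$. In the actual PT weights the $q_{1},q_{2}$-dependence can sit in more complicated combinations, and the limit is case-dependent on the configuration; for instance the paper's own computation of $\mathcal{Z}_{\bar{4};\,\Cbox\,\varnothing\,\Cbox}^{\PT\tbar\JK}[[m,n]]$ in section~\ref{sec:Macdonald-vertex} splits into two distinct formulas according to whether $n=0$ or $n\neq 0$, with different surviving $q_{3},q_{4}$-products. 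So the passage from the raw $\sh$-ratios to the Vuleti\'c weight $F_{\pi}(q,t)$ is not a uniform mechanical substitution and needs a more careful case analysis than you indicate.

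Your step~3 is, as you yourself acknowledge, the crux, and here the proposal is only a statement of intent. The slice-by-slice transfer-matrix picture that produces skew Macdonald functions is well understood for ordinary (skew) plane partitions, but PT configurations are not of that type: the boxes live in half-cylinders extending in the \emph{negative} directions, the two-leg intersection boxes are retained with special status, and in the three-leg case the type~III labelling (or equivalently the light/heavy/ultra-heavy structure of Cond.~\ref{cond:GRrules}) has no obvious counterpart in the tableau combinatorics of $P_{\mu/\eta}\,Q_{\lambda^{\rmT}/\eta}$. Constructing a weight-preserving bijection here is a substantial open problem, not a routine verification, and the $\ch$-terms from the double poles do not in any evident way organize themselves into Macdonald structure constants. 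Absent that bijection, your argument does not close, and at present the theorem rests on the same footing as in the paper: explicit verification in examples.
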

Note that strictly speaking, the Macdonald vertex in the right hand side is not the full Macdonald vertex but a specialization of it. Namely, the nontrivial constraint $xt=qy$ is imposed. Practically, to confirm this correspondence, one starts from the Macdonald vertex and expand it in the series of $x,y$. Then, one can insert the parameter correspondence \eqref{eq:Macref-parameter}. Comparing with the limits of the JK-residue computation, one obtains the above correspondence.

Let us focus on the $r>0$ case and confirm it explicitly when there is at most one box for each leg.
\paragraph{One-leg case}
The Macdonald vertex is 
\bea
\wtM_{\Cbox\,\varnothing\varnothing}(x,y;q,t)=\frac{1}{1-x},\quad \wtM_{\varnothing\,\Cbox\,\varnothing}(x,y;q,t)=\frac{1}{1-y},\quad \wtM_{\varnothing\varnothing\,\Cbox}(x,y;q,t)=\frac{(tx;q)_{\infty}}{(x;q)_{\infty}}
\eea
Let us focus on the slope $r>0$. For the $(\Bbox,\varnothing,\varnothing)$ and $(\varnothing,\Bbox,\varnothing)$ cases, the computation is the same:
\bea
\mathcal{Z}^{\PT\tbar\JK}_{\bar{4};\,\Cbox\,\varnothing\varnothing}[\mathfrak{q},q_{1,2,3,4}]&\longrightarrow \PE[\fq q_{4}^{-1/2}]=\PE[x]=\wtM_{\Cbox\,\varnothing\varnothing}(x,y;q,t),\\
\mathcal{Z}^{\PT\tbar\JK}_{\bar{4};\varnothing\,\Cbox\,\varnothing}[\mathfrak{q},q_{1,2,3,4}]&\longrightarrow \PE[\fq q_{4}^{1/2}]=\PE[y]=\wtM_{\varnothing\,\Cbox\,\varnothing}(x,y;q,t).
\eea
For the $(\varnothing,\varnothing,\Bbox)$ case, the limit does not affect the partition function and so we have
\bea
\mathcal{Z}^{\PT\tbar\JK}_{\bar{4};\varnothing\varnothing\,
\Cbox}[\mathfrak{q},q_{1,2,3,4}]=\PE\left[\mathfrak{q}\frac{[q_{34}]}{[q_{3}]}\right]\longrightarrow\PE\left[\frac{1-t}{1-q}x\right]=\wtM_{\varnothing\varnothing\,\Cbox}(x,y;q,t)
\eea

\paragraph{Two-leg case}
The Macdonald vertices are given as
\bea
\wtM_{\Cbox\,\Cbox\varnothing}(x,y;q,t)&=\frac{(qxy - qx + tx - qy - xy + q + y - 1)}{(q - 1)(x - 1)(y - 1)},\\
\wtM_{\varnothing\,\Cbox\,\Cbox}(x,y;q,t)&=\frac{(tx;q)_{\infty}}{(x;q)_{\infty}}\frac{1-y+xy}{1-y},\\
\wtM_{\Cbox\,\varnothing\,\Cbox}(x,y;q,t)&=\frac{(tx;q)_{\infty}}{(x;q)_{\infty}}\frac{1-x+xy}{1-x}.
\eea
Let us focus first on the case $(\Bbox,\Bbox,\varnothing)$. The Macdonald vertex under the condition $xt=qy$ is rewritten as
\bea
\wtM_{\Cbox\,\Cbox\varnothing}(x,y;q,t)&= \frac{1-y-q+xy+qx-qxy}{(1-q)(1-x)(1-y)}=1+\frac{x(1-t)}{(1-q)(1-x)(1-y)}
\eea
For the PT partition function, taking the limit \eqref{eq:parameters-Macreflimit} gives
\bea
\fq^{m+n+1}\mathcal{Z}_{\bar{4};\,\Cbox\,\Cbox\varnothing}^{\PT\tbar\JK}[[m,n]]&\longrightarrow \fq^{m+n+1}\frac{[q_{34}]}{[q_{3}]}q_{4}^{-m/2}q_{4}^{n/2}=\frac{1-t}{1-q}x^{m+1}y^{n}
\eea
and thus
\bea
\mathcal{Z}^{\PT\tbar\JK}_{\bar{4};\,
\Cbox\,\Cbox\varnothing}[\mathfrak{q},q_{1,2,3,4}]\longrightarrow 1+\sum_{m,n=0}^{\infty}\frac{1-t}{1-q}x^{m+1}y^{n}=1+\frac{(1-t)x}{(1-q)(1-x)(1-y)}=\wtM_{\Cbox\,\Cbox\varnothing}(x,y;q,t).
\eea

For the case $(\varnothing,\Bbox,\Bbox)$, the limit of the PT partition function is
\bea
\mathcal{Z}_{\bar{4};\varnothing\,\Cbox\,\Cbox}^{\PT\tbar\JK}[[m,n]]\longrightarrow \begin{dcases}
    \prod_{j=1}^{n+1}\frac{[q_{4}q_{3}^{j}]}{[q_{3}^{j}]},\quad m=0\\
    q_{4}^{\frac{m-1}{2}}\prod_{j=1}^{n}\frac{[q_{4}q_{3}^{j}]}{[q_{3}^{j}]},\quad m\neq 0
\end{dcases}
\eea
which gives
\bea
\mathcal{Z}^{\PT\tbar\JK}_{\bar{4};\varnothing\,\Cbox\,\Cbox}[\mathfrak{q},q_{1,2,3,4}]&\longrightarrow \left(1+\sum_{n=0}^{\infty}\fq^{n+1}\prod_{j=1}^{n+1}\frac{[q_{4}q_{3}^{j}]}{[q_{3}^{j}]}\right)+\sum_{m=1}^{\infty}\sum_{n=0}^{\infty}\fq^{m+n+1}q_{4}^{\frac{m-1}{2}}\prod_{j=1}^{n}\frac{[q_{4}q_{3}^{j}]}{[q_{3}^{j}]}\\
&=\sum_{n=0}^{\infty}\prod_{j=1}^{n}\frac{1-tq^{j-1}}{1-q^{j}}x^{n}+\left(\sum_{n=0}^{\infty}x^{n}\prod_{j=1}^{n}\frac{1-tq^{j-1}}{1-q^{j}}\right)\left(x\sum_{m=1}^{\infty}y^{m}\right)\\
&=\frac{(tx;q)_{\infty}}{(x;q)_{\infty}}+\frac{(tx;q)_{\infty}}{(x;q)_{\infty}}\frac{xy}{1-y}=\wtM_{\varnothing\,\Cbox\,\Cbox}(x,y;q,t)
\eea
where we used
\bea
\frac{(tx;q)_{\infty}}{(x;q)_{\infty}}=\sum_{n=0}^{\infty}\frac{(t;q)_{n}}{(q;q)_{n}}x^{n},\quad (x;q)_{n}=\prod_{i=0}^{n-1}(1-xq^{i}).
\eea

Similarly for $(\Bbox,\varnothing,\Bbox)$, we have
\bea
\mathcal{Z}_{\bar{4};\,\Cbox\,\varnothing\,\Cbox}^{\PT\tbar\JK}[[m,n]]\longrightarrow \begin{dcases}
    \prod_{j=1}^{m+1}\frac{[q_{4}q_{3}^{j}]}{[q_{3}^{j}]},\quad n=0\\
    q_{4}^{\frac{-n+1}{2}}\prod_{j=1}^{m}\frac{[q_{4}q_{3}^{j}]}{[q_{3}^{j}]},\quad n\neq 0
\end{dcases}
\eea
and
\bea
\mathcal{Z}_{\bar{4};\,\Cbox\,\varnothing\,\Cbox}^{\PT\tbar\JK}[\fq,q_{1,2,3,4}]\longrightarrow \frac{(tx;q)_{\infty}}{(x;q)_{\infty}}\left(1+\frac{xy}{1-x}\right)=
\wtM_{\Cbox\,\varnothing\,\Cbox}(x,y;q,t).
\eea

\paragraph{Three-leg case}
The Macdonald vertex is
\bea
&\wtM_{\Cbox\,\Cbox\,\Cbox}(x,y;q,t)\\
=&\frac{(tx;q)_{\infty}}{(x;q)_{\infty}}\frac{(tx^2y^2 - tx^2y + qxy^2 - txy^2 - x^2y^2 - qxy + 3txy + x^2y - qy^2 - tx + qy - ty - 2xy + y^2 + t + x - 1)}{(t - 1)(x - 1)(y - 1)}.
\eea
Expanding in series of $x,y$, we have
\bea
\wtM_{\Cbox\,\Cbox\,\Cbox}(x,y;q,t)
&=1+\frac{(t-1)^{2}x+(q-1)^{2}y}{(q-1)(t-1)}+\left(3xy+\frac{(1-t)(1-qt)}{(1-q)(1-q^{2})}x^{2}\right)\\
&+\left(\frac{(1-t)(1-qt)(1-q^{2}t)}{(1-q)(1-q^{2})(1-q^{3})}x^{3}+\frac{4-2t+2q-3qt-q^{2}}{1-q^{2}}x^{2}y+xy^{2}\right)+\cdots
\eea

For the level one, the limit is
\bea
\fq\mathcal{Z}^{\PT\tbar\JK}_{\bar{4};\,\Cbox\,\Cbox\,\Cbox}[1]\rightarrow \fq\frac{-q_1^2 q_3 q_2^2-q_1 \left(q_3^2-4 q_3+1\right) q_2-q_3}{\sqrt{q_1} \sqrt{q_2}
   \left(q_1 q_2-1\right) \left(q_3-1\right) \sqrt{q_3}}
\eea
which matches with the first level of $\wtM_{\Cbox\,\Cbox\,\Cbox}(x,y;q,t)$ after inserting \eqref{eq:Macref-parameter}.

For level two, we have
\bea
\frac{1}{2}\mathcal{Z}_{(0,0)}&\rightarrow \frac{\left(q_1 q_2-1\right)^2 q_3}{q_1 q_2 \left(q_3-1\right)^2},\quad \mathcal{Z}_{(0,-\epsilon_{1})}\rightarrow 1,\quad
\mathcal{Z}_{(0,-\epsilon_{2})}\rightarrow 1,\quad
\mathcal{Z}_{(0,-\epsilon_{3})}\rightarrow -\frac{\left(q_1 q_2-q_3\right) \left(q_1 q_2 q_3^2-1\right)}{q_1 q_2
   \left(q_3-1\right)^2 \left(q_3+1\right)}
   \eea
which gives
\bea
\fq^{2}\mathcal{Z}^{\PT\tbar\JK}_{\bar{4};\,\Cbox\,\Cbox\,\Cbox}[2]\rightarrow \fq^{2}\left(3+\frac{\left(q_1 q_2-1\right) \left(q_1 q_2-q_3\right) q_3}{q_1 q_2
   \left(1-q_3^{2}\right) \left(1-q_3\right)}\right)=3xy+\frac{(1-t)(1-qt)}{(1-q)(1-q^{2})}x^{2}.
\eea
For level three, the limit gives
\bea
&\frac{1}{2}\mathcal{Z}_{(0,0,-\epsilon_{1})}\rightarrow \frac{\left(1-q_1 q_2\right) \sqrt{q_3}}{\sqrt{q_1} \sqrt{q_2} \left(q_3-1\right)} ,\quad \frac{1}{2}\mathcal{Z}_{(0,0,-\epsilon_{2})}\rightarrow \frac{\left(1-q_1 q_2\right) \sqrt{q_3}}{\sqrt{q_1} \sqrt{q_2} \left(q_3-1\right)} ,\\
&\frac{1}{2}\mathcal{Z}_{(0,0,-\epsilon_{3})}\rightarrow -\frac{\left(q_1 q_2-1\right) q_3^{3/2} \left(q_3-q_1 q_2\right)^2}{q_1^{3/2}
   q_2^{3/2} \left(q_3-1\right)^3 \left(q_3+1\right)^2},\quad  \mathcal{Z}_{(0,-\epsilon_{1},-2\epsilon_{1})}\rightarrow q_{4}^{-1/2},\quad \mathcal{Z}_{(0,-\epsilon_{2},-2\epsilon_{2})}\rightarrow q_{4}^{1/2},\\
& \mathcal{Z}_{(0,-\epsilon_{3},-2\epsilon_{3})}\rightarrow \frac{\sqrt{q_3} \left(q_1^3 q_2^3 q_3^3-q_3^3+q_1 q_2 \left(q_3^5+q_3+1\right)
   q_3-q_1^2 q_2^2 \left(q_3^5+q_3^4+1\right)\right)}{q_1^{3/2} q_2^{3/2}
   \left(q_3-1\right)^3 \left(q_3+1\right)^2 \left(q_3^2+q_3+1\right)} 
\eea
and indeed the sum matches with the third order of the Macdonald refined vertex. One can do the similar comparison for generic level, but we omit the discussion.

\subsection{DT/PT correspondence}\label{sec:DTPTcorrespondence}
The DT partition functions and PT partition functions are actually not independent and they are related by the so-called DT/PT correspondence \cite{Pandharipande:2007sq,Pandharipande:2007kc,Toda:2008ASPM,Toda:2010JAMS,Bridgeland:2011JAMS,Stoppa:2011BSMF,Toda2016HallAI,Kononov:2019fni,Jenne2020TheCP,Jenne:2021irh,Kuhn:2023koa}. For the rank-one K-theoretic case, the DT/PT correspondence was recently proven in \cite{Kuhn:2023koa}.

\begin{theorem}\label{thm:DTPT-correspondence}
In our notation, we have the following DT/PT correspondence:
\bea
\mathcal{Z}^{\DT\tbar\JK}_{\bar{4};\lambda\mu\nu}[\mathfrak{q},q_{1,2,3,4}]=\mathcal{Z}^{\DT\tbar\JK}_{\bar{4};\varnothing\varnothing\varnothing}[\mathfrak{q},q_{1,2,3,4}]\mathcal{Z}^{\PT\tbar\JK}_{\bar{4};\lambda\mu\nu}[\mathfrak{q},q_{1,2,3,4}].
\eea
Note that
\bea
\mathcal{Z}^{\DT\tbar\JK}_{\bar{4};\varnothing\varnothing\varnothing}[\mathfrak{q},q_{1,2,3,4}]=\mathcal{Z}^{\D6}_{\bar{4}}[\fq;q_{1,2,3,4}]
\eea
where the right hand side is \eqref{eq:D6U1PEformula}, \eqref{eq:D6partitionfunct}.
\end{theorem}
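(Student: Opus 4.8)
The plan is to separate the two logically distinct assertions in the statement. The identity $\mathcal{Z}^{\DT\tbar\JK}_{\bar{4};\varnothing\varnothing\varnothing}=\mathcal{Z}^{\D6}_{\bar{4}}$ requires no new argument: for trivial boundary conditions the framing node \eqref{eq:DTflavornode-def} reduces to $\mathcal{Z}^{\D6_{\bar{4}}\tbar\D0}$, so Definition~\ref{def:DT3vertex-rank1-JKintegral} literally becomes the $k$-instanton expansion of Theorem~\ref{thm:D6vacuum}, whose poles are the plane partitions of Theorem~\ref{thm:tetraJKpoles} and whose closed form is the plethystic exponential \eqref{eq:D6U1PEformula}. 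Hence the entire content is the factorization $\mathcal{Z}^{\DT\tbar\JK}_{\bar{4};\lambda\mu\nu}=\mathcal{Z}^{\D6}_{\bar{4}}\,\mathcal{Z}^{\PT\tbar\JK}_{\bar{4};\lambda\mu\nu}$ for nontrivial legs, which I would establish as an identity of formal power series in $\fq$.

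My primary route is to import the geometric correspondence. As recorded after Theorem~\ref{thm:D6vacuum}, the JK integrand computes the equivariant K-theoretic Euler characteristic of a twisted virtual structure sheaf; with the framing nodes of \eqref{eq:DTflavornode-def} and reference vector $\eta_0$ (resp.\ $\tilde{\eta}_0$), Theorems~\ref{thm:DTvertex-JKresidue} and \ref{thm:PTvertex-expansion} identify the fixed loci with, respectively, the DT configurations and the PT stable-pair configurations carrying the induced virtual weights. Thus $\mathcal{Z}^{\DT\tbar\JK}$ and $\mathcal{Z}^{\PT\tbar\JK}$ are the rank-one K-theoretic DT and PT vertices, and the sought identity is exactly the rank-one K-theoretic DT/PT correspondence proven in \cite{Kuhn:2023koa}, with the degree-zero series $\mathcal{Z}^{\DT\tbar\JK}_{\bar{4};\varnothing\varnothing\varnothing}$ playing the role of the universal ``free D0'' factor. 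The only point to verify carefully is that the JK prescriptions used here reproduce the virtual classes used there; this is a matching of fixed-point weights rather than a new geometric input.

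To keep the argument self-contained I would also give a direct check at the level of plethystic characters. Writing $Z=\PE[\chi]$, the factorization is equivalent to the additivity $\chi^{\DT}_{\lambda\mu\nu}=\chi^{\D6}+\chi^{\PT}_{\lambda\mu\nu}$ of single-particle indices. For the one-leg family this is immediate: both the D6 function \eqref{eq:D6U1PEformula} and the PT function $\PE[\fq\,[q_{34}]/[q_3]]$ of Section~\ref{sec:PToneleg} are genuine plethystic exponentials, so their product is again a $\PE$ whose argument I would match against the DT one-leg index using $\sum_{a\in\four}\epsilon_a=0$. For two and three legs the PT series is not a pure $\PE$, and there I would instead verify the factorization order by order in $\fq$, feeding in the explicit residues \eqref{eq:PTtwoleg-JK-partfunct}, \eqref{eq:3leglevel1}, \eqref{eq:3leglevel2}, \eqref{eq:3leglevel3} against the DT residues of Section~\ref{eq:DT-Wittenindex}; as a cross-check, applying the limit Theorems~\ref{thm:PT3vertex-unref-corresp}--\ref{thm:PT3vertex-Macdref-corresp} reduces both sides to the classical statement that the (refined or Macdonald) topological vertex times the corresponding MacMahon function equals the DT vertex.

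The main obstacle is the three-leg sector. The second-order poles of Proposition~\ref{prop:PTJK-pole} make the PT residues genuinely different in character from the DT ones, the freely-labeled boxes produce the two-state degeneracies that obstruct a naive box-by-box bijection between DT and $\D6\times\PT$ fixed points, and the infinite-product regularization of Section~\ref{sec:infinite-product-reg} must be controlled uniformly in the instanton number. For these reasons I do not expect a clean combinatorial bijection $\pi_{\DT}\leftrightarrow(\sigma_{\D6},\rho_{\PT})$ with multiplicative weights; the proof should rest on the generating-function identity, with the correspondence of \cite{Kuhn:2023koa} supplying the structural input that the order-by-order check only confirms.
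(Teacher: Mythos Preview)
Your proposal is correct and follows essentially the same approach as the paper: cite \cite{Kuhn:2023koa} for the rigorous rank-one K-theoretic result and accompany it with explicit low-order checks. The paper is in fact even more minimal than your outline---it simply states the theorem, points to \cite{Kuhn:2023koa}, records that the identity has been verified through $\fq^3$ for $|\lambda|+|\mu|+|\nu|\le 5$, and then writes out the one-instanton comparison for the three basic examples; your additional plethystic-character discussion for the one-leg case and your diagnosis of the three-leg obstacles are extra commentary rather than a different argument.
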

We have confirmed this up to three instantons $\mathfrak{q}^{3}$ for $|\lambda|+|\mu|+|\nu|\leq 5$. Taking the unrefined (section~\ref{sec:unrefined-vertex}), refined (section~\ref{sec:refined-vertex}), and the Macdonald refined limit (section~\ref{sec:Macdonald-vertex}), we still have the DT/PT correspondence and the right hand side becomes the generalized MacMahon functions (see \eqref{eq:MacMahon-funct}, \eqref{eq:refined-MacMahon-funct}, \eqref{eq:Macd-refined-MacMahon-funct}) times the unrefined, refined, Macdonald refined topological vertices (see Thm.~\ref{thm:PT3vertex-unref-corresp}, \ref{thm:PT3vertex-ref-corresp}, \ref{thm:PT3vertex-Macdref-corresp}).

Let us see this explicitly for the one-instanton level.
\paragraph{One-leg}
For the example $(\lambda,\mu,\nu)=(\varnothing,\varnothing,\Bbox)$, the configurations for the DT side is $\fra+\eps_1,\fra+\eps_2$, which gives
\bea
\mathcal{Z}^{\DT\tbar\JK}_{\bar{4};\,\varnothing\varnothing\,\Bbox}[1]=\frac{\left(q_1 q_2-1\right) \left(q_1^2 q_3-1\right) \left(q_2 q_3-1\right)}{\left(q_1-1\right) \sqrt{q_1}
   \left(q_1-q_2\right) \sqrt{q_2} \left(q_3-1\right) \sqrt{q_3}}-\frac{\left(q_1 q_2-1\right) \left(q_1
   q_3-1\right) \left(q_2^2 q_3-1\right)}{\sqrt{q_1} \left(q_1-q_2\right) \left(q_2-1\right) \sqrt{q_2}
   \left(q_3-1\right) \sqrt{q_3}},
\eea
where the first term and second term correspond to the configurations $\fra+\eps_1$, $\fra+\eps_2$, respectively.

Using 
\bea\label{eq:rank1-DT3_1instanton}
\mathcal{Z}^{\DT\tbar\JK}_{\bar{4};\,\varnothing\,\varnothing\,\varnothing}[1]=-\frac{\left(q_1 q_2-1\right) \left(q_1 q_3-1\right) \left(q_2 q_3-1\right)}{\left(q_1-1\right) \sqrt{q_1}
   \left(q_2-1\right) \sqrt{q_2} \left(q_3-1\right) \sqrt{q_3}}
\eea
and \eqref{eq:PToneleg-JK-partfunct}, we indeed have
\bea
\mathcal{Z}^{\DT\tbar\JK}_{\bar{4};\,\varnothing\,\varnothing\,\Bbox}[1]=\mathcal{Z}^{\DT\tbar\JK}_{\bar{4};\,\varnothing\,\varnothing\,\varnothing}[1]+\mathcal{Z}^{\PT\tbar\JK}_{\bar{4};\,\varnothing\,\varnothing\,\Bbox}[1].
\eea

\paragraph{Two-legs}
For the example $(\lambda,\mu,\nu)=(\Bbox,\Bbox,\varnothing)$, the DT side is computed as
\bea
\mathcal{Z}^{\DT\tbar\JK}_{\bar{4};\,\Bbox\,\Bbox\,\varnothing}[1]&=\frac{\left(q_1^2 q_2^2-1\right) \left(q_1 q_3-1\right) \left(q_2 q_3-1\right)}{\left(q_1-1\right)
   \sqrt{q_1} \left(q_2-1\right) \sqrt{q_2} \left(q_1 q_2-q_3\right) \sqrt{q_3}}\\
   &-\frac{\left(q_1
   q_2-1\right){}^2 \left(q_3+1\right) \left(q_1 q_3-1\right) \left(q_2 q_3-1\right)}{\left(q_1-1\right)
   \sqrt{q_1} \left(q_2-1\right) \sqrt{q_2} \left(q_1 q_2-q_3\right) \left(q_3-1\right) \sqrt{q_3}}
\eea
where the first term corresponds to the configuration with a box at $\fra+\eps_{1}+\eps_2$, while the second one corresponds to the case with a box at $\fra+\eps_3$. Using \eqref{eq:PTtwoleg-JK-partfunct}, we indeed have
\bea
\mathcal{Z}^{\DT\tbar\JK}_{\bar{4};\,\Bbox\,\Bbox\,\varnothing}[1]=\mathcal{Z}^{\DT\tbar\JK}_{\bar{4};\,\varnothing\,\varnothing\,\varnothing}[1]+\mathcal{Z}^{\PT\tbar\JK}_{\bar{4};\,\Bbox\,\Bbox\,\varnothing}[1].
\eea

\paragraph{Three-legs}
For the example $(\lambda,\mu,\nu)=(\Bbox,\Bbox,\Bbox)$, the DT side is computed as
\bea
\mathcal{Z}^{\DT\tbar\JK}_{\bar{4};\,\Bbox\,\Bbox\,\Bbox}[1]&=-\frac{\left(q_1^2 q_2-1\right) \left(q_1 q_2^2-1\right) \left(q_3-1\right) \left(q_1 q_3-1\right) \left(q_2
   q_3-1\right)}{\left(q_1-1\right) \sqrt{q_1} \left(q_2-1\right) \sqrt{q_2} \left(q_1 q_2-1\right)
   \left(q_1-q_3\right) \left(q_2-q_3\right) \sqrt{q_3}}\\
   &+\frac{\left(q_2-1\right) \left(q_1 q_2-1\right)
   \left(q_1^2 q_3-1\right) \left(q_1 q_3^2-1\right) \left(q_2 q_3-1\right)}{\left(q_1-1\right) \sqrt{q_1}
   \left(q_1-q_2\right) \sqrt{q_2} \left(q_2-q_3\right) \left(q_3-1\right) \sqrt{q_3} \left(q_1
   q_3-1\right)}\\
   &-\frac{\left(q_1-1\right) \left(q_1 q_2-1\right) \left(q_1 q_3-1\right) \left(q_2^2
   q_3-1\right) \left(q_2 q_3^2-1\right)}{\sqrt{q_1} \left(q_1-q_2\right) \left(q_2-1\right) \sqrt{q_2}
   \left(q_1-q_3\right) \left(q_3-1\right) \sqrt{q_3} \left(q_2 q_3-1\right)}
\eea
where each term corresponds to the configuration with a box at $\fra+\eps_{12}$, $\fra+\eps_{13}$, and $\fra+\eps_{23}$, respectively. Using \eqref{eq:3leglevel1}, we indeed have
\bea
\mathcal{Z}^{\DT\tbar\JK}_{\bar{4};\,\Bbox\,\Bbox\,\Bbox}[1]=\mathcal{Z}^{\DT\tbar\JK}_{\bar{4};\,\varnothing\,\varnothing\,\varnothing}[1]+\mathcal{Z}^{\PT\tbar\JK}_{\bar{4};\,\Bbox\,\Bbox\,\Bbox}[1].
\eea

\paragraph{Wall-crossing}The fact that the result differs whether we use $\eta_{0}$ or $\tilde{\eta}_{0}$ to evaluate the contour integral is a consequence of the wall-crossing phenomenon. Although we postpone a detailed discussion of this for future work, let us briefly discuss how to see that this occurs using the contour integrals. The origin of the wall-crossing is the existence of nontrivial residues at the \textit{asymptotics} \cite{Hori:2014tda}. In other words, the residues at $\phi_{i}\rightarrow \pm \infty$ give the nontrivial wall-crossing factors. Conversely, if we do not have nontrivial residues at the asymptotics, no wall-crossing occurs.

To be concrete, let us consider the one-instanton contribution for the one-leg case. For the one-instanton case, the D0-D0 term does not appear and the contour integral is just an integration with respect to one variable. It will be convenient to use the multiplicative notation. The contour integral is written as
\bea
\oint \frac{dz}{2\pi i z}g(z),\quad g(z)=\mathcal{G}\times q_{4}^{-1/2}\frac{(z-q_{14})(z-q_{24})(z-q_{34}^{-1})}{(z-q_{1})(z-q_{2})(z-q_{3}^{-1})}
\eea
where we denoted $z=e^{\phi_1-\fra}$ and
\bea
\mathcal{G}=\frac{(1-q_{12})(1-q_{13})(1-q_{23})}{(1-q_{1})(1-q_{2})(1-q_{3})(1-q_{4}^{-1})}.
\eea
We can take the contour to be the unit circle traversed counter-clockwise. Assuming the analytic region $|q_{1,2,3}|<1$, we can take the residues of the poles $z=0,q_{1},q_{2}$ inside the region $|z|<1$:
\bea
\oint_{|z|=1} \frac{dz}{2\pi i z}g(z)&=\underset{z=0}{\Res}\left[g(z)\frac{dz}{z}\right]+\underset{z=q_{1}}{\Res}\left[g(z)\frac{dz}{z}\right]+\underset{z=q_{2}}{\Res}\left[g(z)\frac{dz}{z}\right]\\
&=\underset{z=0}{\Res}\left[g(z)\frac{dz}{z}\right]+\oint_{\eta_0} \frac{dz}{2\pi i z}g(z)
\eea
where the second and third term of the first line correspond to the JK-residue with the reference vector $\eta=\eta_{0}$.

Instead, we can also take the residues of the poles in the region $|z|>1$:
\bea
\oint_{|z|=1} \frac{dz}{2\pi i z}g(z)&=-\underset{z=\infty}{\Res}\left[g(z)\frac{dz}{z}\right]-\underset{z=q_{3}^{-1}}{\Res}\left[g(z)\frac{dz}{z}\right]\\
&=-\underset{z=\infty}{\Res}\left[g(z)\frac{dz}{z}\right]+\oint_{\tilde{\eta}_0} \frac{dz}{2\pi i z}g(z).
\eea
Since the result should be the same, we obtain
\bea
\oint_{\eta_0} \frac{dz}{2\pi i z}g(z)-\oint_{\tilde{\eta}_0} \frac{dz}{2\pi i z}g(z)=-\underset{z=\infty}{\Res}\left[g(z)\frac{dz}{z}\right]-\underset{z=0}{\Res}\left[g(z)\frac{dz}{z}\right].
\eea
The difference between the DT-side and the PT-side comes from the difference between the residues at the asymptotics $z=0,\infty$. The residues are computed as
\bea
\underset{z=0}{\Res}\left[g(z)\frac{dz}{z}\right]&=\lim_{z\rightarrow 0}g(z)=\mathcal{G}\times q_{4}^{+1/2},\\
\underset{z=\infty}{\Res}\left[g(z)\frac{dz}{z}\right]&=-\underset{w=0}{\Res}\left[g(w^{-1})\frac{dw}{w}\right]=-\lim_{w\rightarrow 0}g(w^{-1})=-\mathcal{G}\times q_{4}^{-1/2},
\eea
where we used $z=w^{-1}$ to evaluate the residue at $z=\infty$ and then consider $w=0$. We then have
\bea
\oint_{\eta_0} \frac{dz}{2\pi i z}g(z)-\oint_{\tilde{\eta}_0} \frac{dz}{2\pi i z}g(z)&=\lim_{z\rightarrow \infty}g(z)-\lim_{z\rightarrow 0}g(z)\\
&=-q_{4}^{1/2}\frac{(1-q_{12})(1-q_{13})(1-q_{23})}{(1-q_{1})(1-q_{2})(1-q_{3})}
\eea
which is indeed the one-instanton contribution of the D6-D0 partition function \eqref{eq:rank1-DT3_1instanton}.

Let us do the similar analysis for general boundary conditions $\vec{Y}=(\lambda,\mu,\nu)$. The asymptotics at $\phi_1\rightarrow \pm \infty$ are only necessary to derive the wall-crossing formula. Since this setup preserves four supersymmetries, it is enough to study the asymptotics of the combinations $\mathcal{Z}^{{\D6}_{\bar{4}}\tbar\D0}(\#,\phi_1)$ and $\mathcal{Z}^{\overline{\D6}_{\bar{4}}\tbar\D0}(\#,\phi_{I})$
\bea
&\mathcal{Z}^{\D6_{\bar{4}}\tbar\D0}(\#,\phi_1)\xrightarrow{\phi_1\rightarrow \infty} q_{4}^{-1/2},\quad 
\mathcal{Z}^{\overline{\D6}_{\bar{4}}\tbar\D0}(\#,\phi_1)\xrightarrow{\phi_1\rightarrow \infty} q_{4}^{+1/2} \\
&\mathcal{Z}^{\D6_{\bar{4}}\tbar\D0}(\#,\phi_1)\xrightarrow{\phi_1\rightarrow -\infty} q_{4}^{+1/2},\quad \mathcal{Z}^{\overline{\D6}_{\bar{4}}\tbar\D0}(\#,\phi_1)\xrightarrow{\phi_1\rightarrow -\infty} q_{4}^{-1/2}.
\eea
The asymptotics of the framing node contribution \eqref{eq:DTflavornode-def} are obtained as
\bea
\lim_{\phi_1\rightarrow -\infty}\mathcal{Z}^{\D6_{\bar{4}}\tbar\D2\tbar\D0}_{\DT;\lambda\mu\nu}(\fra,\phi_1)&=q_{4}^{\frac{1}{2}\left(|s(\vec{Y})|-|p_{1}(\vec{Y})|-2|p_{2}(\vec{Y})|\right)}=q_{4}^{1/2},\\
\lim_{\phi_1\rightarrow +\infty}\mathcal{Z}^{\D6_{\bar{4}}\tbar\D2\tbar\D0}_{\DT;\lambda\mu\nu}(\fra,\phi_1)&=q_{4}^{-\frac{1}{2}\left(|s(\vec{Y})|-|p_{1}(\vec{Y})|-2|p_{2}(\vec{Y})|\right)}=q_{4}^{-1/2}
\eea
where in the last line, we used\footnote{We will not give a proof of this relation, but one can easily confirm it for examples.}
\bea
|s(\vec{Y})|-|p_{1}(\vec{Y})|-2|p_{2}(\vec{Y})|=1.
\eea
Thus, the wall-crossing formula for the one-instanton level is
\bea
\mathcal{G}\times \left(\lim_{\phi_1\rightarrow \infty}\mathcal{Z}^{\D6_{\bar{4}}\tbar\D2\tbar\D0}_{\DT;\lambda\mu\nu}(\fra,\phi_1)-\lim_{\phi_1\rightarrow -\infty}\mathcal{Z}^{\D6_{\bar{4}}\tbar\D2\tbar\D0}_{\DT;\lambda\mu\nu}(\fra,\phi_1)\right)=-q_{4}^{1/2}\frac{(1-q_{12})(1-q_{13})(1-q_{23})}{(1-q_{1})(1-q_{2})(1-q_{3})}.
\eea

For higher levels, we need to deal with multi-dimensional contour integrals, we postpone a detailed analysis for future work.

\subsection{Higher rank DT3 and PT3 counting}\label{sec:higherrankDTPT}
Up to the previous section, we only focused on the case with a single D6$_{\bar{4}}$-brane. Including other D6-branes gives generalizations of the DT and PT counting. In this section, we discuss such generalizations and we will see that the DT/PT correspondence also have natural generalizations. See \cite{Toda2016HallAI,Kool:2016opl,Gholampour:2017lkc,Jardim2025HigherRD} for mathematical literatures discussing higher rank DT/PT countings.

We mainly discuss the following three different generalizations.
\begin{enumerate}
    \item Multiple D6-branes spanning the same $\mathbb{C}^{3}$ space (section~\ref{sec:rankNDTPT}). In particular, we consider the setup with $n$ D6$_{\bar{4}}$-branes. For this case, the gauge theory on the D6-branes will be a $U(n)$ gauge theory and we obtain the rank $n$ DT/PT partition functions. 
    \item Tetrahedron instanton generalizations (section~\ref{sec:tetrahedronDTPT}). We can also include D6-branes spanning different subspaces. We consider $n_{\bar{a}}$ D6$_{\bar{a}}$-branes with boundary conditions.
    \item Supergroup generalizations (section~\ref{sec:supergroupDTPT}). As discussed in \cite{Kimura:2023bxy}, we can include D6-branes giving anti-fundamental chiral multiplets, which we denote $\overline{\D6}$-branes, in the supersymmetric quantum mechanics. In particular, we consider $n$ D6$_{\bar{4}}$-branes and $m$ $\overline{\text{D6}}_{\bar{4}}$-branes giving a $U(n|m)$ gauge theory. We also propose tetrahedron instantons generalizations. 
\end{enumerate}

In the following discussion, the following function 
\bea
\text{MF}[\fq,\mu;q_{1,2,3,4}]=\PE\left[\frac{[q_{14}][q_{24}][q_{34}]}{[q_{1}][q_{2}][q_{3}][q_{4}]}\frac{[\mu]}{[\fq\mu^{-1/2}][\fq \mu^{1/2}]}\right]
\eea
will play a significant role. We shortly write this as $\text{MF}[\mu]$ because the $\mu$-dependence is only important. Physically, this is the so-called magnificent four partition function \cite{Nekrasov:2017cih,Nekrasov:2018xsb}. Note that the tetrahedron instanton partition function \eqref{eq:tetrahedronPE} is just a specialization of it.

All of the conjectures in this section have been confirmed by explicit computations for various examples at low levels.

\subsubsection{Parallel D6-branes}\label{sec:rankNDTPT}
Let us first consider the setup with $n$ D6$_{\bar{4}}$-branes (see also Thm.~\ref{thm:D6vacuum}). The D6 partition function is obtained by changing the framing node contribution to 
\bea
\prod_{\alpha=1}^{n}\mathcal{Z}^{\D6_{\bar{4}}\tbar\D0}(\fra_{\alpha},\phi_I).
\eea
The partition function is
\bea
\mathcal{Z}^{\D6}_{\bar{4},n}[\fq,q_{1,2,3,4}]&=\sum_{k=0}^{\infty}\fq^{k}\mathcal{Z}^{\D6}_{\bar{4},n}[k],
\eea
where the $k$-instanton sector is
\bea
\mathcal{Z}^{\D6}_{\bar{4},n}[k]&=\frac{1}{k!}\left(\frac{\sh(-\epsilon_{14,24,34})}{\sh(-\epsilon_{1,2,3,4})}\right)^{k}\oint_{\eta_{0}}\prod_{I=1}^{k}\frac{d\phi_{I}}{2\pi i }\prod_{I=1}^{k}\prod_{\alpha=1}^{n}\mathcal{Z}^{\D6_{\bar{4}}\tbar\D0}(\fra_{\alpha},\phi_I)\prod_{I<J}\mathcal{Z}^{\D0\tbar\D0}(\phi_{I},\phi_{J}).
\eea
After the JK-residue prescription, the poles picked up are multiple plane partitions $\vec{\pi}=(\pi^{(\alpha)})_{\alpha=1}^{n}$ where the origin is $\fra_{\alpha}$ respectively:
\bea
\{\phi_I\}_{I=1}^{k}\longrightarrow \{c_{\bar{4},\fra_{\alpha}}(\cube)\mid \cube \in \pi^{(\alpha)}\}.
\eea
The rank $n$ D6 partition function has a PE-formula (see \eqref{eq:tetrahedronPE})
\bea
\mathcal{Z}^{\D6}_{\bar{4},n}[\fq,q_{1,2,3,4}]=\PE\left[\frac{[q_{14}][q_{24}][q_{34}]}{[q_{1}][q_{2}][q_{3}][q_{4}]}\frac{[q_{4}^{n}]}{[\fq q_{4}^{-n/2}][\fq q_{4}^{n/2}]}\right]
\eea
and it does not depend on the Coulomb branch parameters $\fra_{\alpha}$.

Let us next consider the setup with boundary conditions coming from the D2$_{1,2,3}$-branes. We denote the boundary conditions as $(\vec{\lambda},\vec{\mu},\vec{\nu})$, where $\vec{\lambda}=(\lambda^{(\alpha)})_{\alpha=1}^{n}$, $\vec{\mu}=(\mu^{(\alpha)})_{\alpha=1}^{n}$, $\vec{\nu}=(\nu^{(\alpha)})_{\alpha=1}^{n}$. The rank $n$ DT3 partition function is naturally defined as (see also Def.~\ref{def:DT3vertex-rank1-JKintegral}, \ref{thm:DTvertex-JKresidue})
\bea
\,&\mathcal{Z}^{\DT\tbar\JK}_{\bar{4};\vec{\lambda}\vec{\mu}\vec{\nu}}[\mathfrak{q},q_{1,2,3,4}]=\sum_{k=0}^{\infty}\mathfrak{q}^{k}\mathcal{Z}^{\DT\tbar\JK}_{\bar{4};\vec{\lambda}\vec{\mu}\vec{\nu}}[k],
\eea
where the $k$-instanton sector is given as
\bea\label{eq:rankN-DT3contourint}
\,&\mathcal{Z}^{\DT\tbar\JK}_{\bar{4};\vec{\lambda}\vec{\mu}\vec{\nu}}[k]=\frac{1}{k!}\left(\frac{\sh(-\epsilon_{14,24,34})}{\sh(-\epsilon_{1,2,3,4})}\right)^{k}\oint_{\eta_{0}} \prod_{I=1}^{k}\frac{d\phi_{I}}{2\pi i}\prod_{I=1}^{k}\prod_{\alpha=1}^{n}\mathcal{Z}^{\D6_{\bar{4}}\tbar\D2\tbar\D0}_{\DT;\lambda^{(\alpha)}\mu^{(\alpha)}\nu^{(\alpha)}}(\fra_{\alpha},\phi_{I})\prod_{I<J}^{k}\mathcal{Z}^{\D0\tbar\D0}(\phi_{I},\phi_{J})
\eea
and $\vec{v}=(v_{\alpha})_{\alpha=1}^{n}$, $v_{\alpha}=e^{\fra_{\alpha}}$. The poles are classified by $n$-tuple plane partitions with the boundary conditions $(\lambda^{(\alpha)},\mu^{(\alpha)},\nu^{(\alpha)})$ (elements of $\mathcal{DT}_{\lambda^{(\alpha)}\mu^{(\alpha)}\nu^{(\alpha)}}$) and the poles correspond to the boxes that one can place. Again, the origin of the plane partitions are $\fra_{\alpha}$. Extra bifundamental contributions arise in the partition function and generally it depends on the Coulomb branch paramters $\vec{v}$.

Note that when all of the boundary conditions are trivial, we have
\bea
\mathcal{Z}^{\DT\tbar\JK}_{\bar{4};\vec{\varnothing}\vec{\varnothing}\vec{\varnothing}}[\mathfrak{q},q_{1,2,3,4}]=\mathcal{Z}^{\D6}_{\bar{4},n}[\fq,q_{1,2,3,4}].
\eea

The rank $n$ PT3 partition function can be defined similarly by changing the reference vector from $\eta_{0}$ to $\tilde{\eta}_{0}$ while keeping the integrand (see also Def.~\ref{def:PTvertex-JKresidue}, \ref{thm:PTvertex-expansion}):
\bea
\,&\mathcal{Z}^{\PT\tbar\JK}_{\bar{4};\vec{\lambda}\vec{\mu}\vec{\nu}}[\mathfrak{q},q_{1,2,3,4}]=\sum_{k=0}^{\infty}\mathfrak{q}^{k}\mathcal{Z}^{\PT\tbar\JK}_{\bar{4};\vec{\lambda}\vec{\mu}\vec{\nu}}[k],
\eea
where the $k$-instanton sector is given as
\bea\label{eq:rankN-PT3contourint}
\,&\mathcal{Z}^{\PT\tbar\JK}_{\bar{4};\vec{\lambda}\vec{\mu}\vec{\nu}}[k]=\frac{1}{k!}\left(\frac{\sh(-\epsilon_{14,24,34})}{\sh(-\epsilon_{1,2,3,4})}\right)^{k}\oint_{\tilde{\eta}_{0}} \prod_{I=1}^{k}\frac{d\phi_{I}}{2\pi i}\prod_{I=1}^{k}\prod_{\alpha=1}^{n}\mathcal{Z}^{\D6_{\bar{4}}\tbar\D2\tbar\D0}_{\DT;\lambda^{(\alpha)}\mu^{(\alpha)}\nu^{(\alpha)}}(\fra_{\alpha},\phi_{I})\prod_{I<J}^{k}\mathcal{Z}^{\D0\tbar\D0}(\phi_{I},\phi_{J}).
\eea
The poles give higher rank PT3 box configurations. However, one should note that when all of the three-legs are nontrivial, higher order poles appear and when taking the residue, derivatives act on the integrand. Since the derivatives also act on the contributions coming from the other D6-branes, the partition function becomes much more complicated.\footnote{We omit explicit computations in this paper, but one can easily apply the formalism discussed in this paper and study the pole structure.} Moreover, generally, the partition function depends on the Coulomb branch parameters $\vec{v}$.

The DT, PT partition functions are connected by the DT/PT correspondence (see Thm.~\ref{thm:DTPT-correspondence}). 
\begin{conjecture}\label{thm:rankn-DTPT}
    The rank $n$ DT/PT correspondence is
\bea
\mathcal{Z}^{\DT\tbar\JK}_{\bar{4};\vec{\lambda}\vec{\mu}\vec{\nu}}[\mathfrak{q},q_{1,2,3,4}]&=\mathcal{Z}^{\DT\tbar\JK}_{\bar{4};\vec{\varnothing}\vec{\varnothing}\vec{\varnothing}}[\mathfrak{q},q_{1,2,3,4}]\mathcal{Z}^{\PT\tbar\JK}_{\bar{4};\vec{\lambda}\vec{\mu}\vec{\nu}}[\mathfrak{q},q_{1,2,3,4}]\\
&=\mathcal{Z}^{\D6}_{\bar{4},n}[\fq;q_{1,2,3,4}]\mathcal{Z}^{\PT\tbar\JK}_{\bar{4};\vec{\lambda}\vec{\mu}\vec{\nu}}[\mathfrak{q},q_{1,2,3,4}]\\
&=\text{MF}[q_{4}^{n}]\mathcal{Z}^{\PT\tbar\JK}_{\bar{4};\vec{\lambda}\vec{\mu}\vec{\nu}}[\mathfrak{q},q_{1,2,3,4}].
\eea    
\end{conjecture}

\subsubsection{Tetrahedron instantons}\label{sec:tetrahedronDTPT}
Let us next include other D6-branes with different supports. As discussed in section~\ref{sec:gaugeorigami-index}, to obtain the tetrahedron instanton, we need to modify the flavor node contribution to
\bea
\prod_{a\in\four}\prod_{\alpha=1}^{n_{\bar{a}}}\mathcal{Z}^{\D6_{\bar{a}}\tbar\D0}(\fra_{\bar{a},\alpha},\phi_I).
\eea
The tetrahedron instanton partition function is (see also Thm.~\ref{thm:D6vacuum}):
\bea
\mathcal{Z}^{\D6}_{n_{\bar{1}},n_{\bar{2}},n_{\bar{3}},n_{\bar{4}}}[\fq,q_{1,2,3,4}]&=\sum_{k=0}^{\infty}\fq^{k}\mathcal{Z}^{\D6}_{n_{\bar{1}},n_{\bar{2}},n_{\bar{3}},n_{\bar{4}}}[k],
\eea
where
\bea
\mathcal{Z}^{\D6}_{n_{\bar{1}},n_{\bar{2}},n_{\bar{3}},n_{\bar{4}}}[k]&=\frac{1}{k!}\left(\frac{\sh(-\epsilon_{14,24,34})}{\sh(-\epsilon_{1,2,3,4})}\right)^{k}\oint_{\eta_{0}}\prod_{I=1}^{k}\frac{d\phi_{I}}{2\pi i }\prod_{I=1}^{k}\prod_{a\in\four}\prod_{\alpha=1}^{n_{\bar{a}}}\mathcal{Z}^{\D6_{\bar{a}}\tbar\D0}(\fra_{\bar{a},\alpha},\phi_I)\prod_{I<J}\mathcal{Z}^{\D0\tbar\D0}(\phi_{I},\phi_{J}).
\eea
The poles are classified by multiple plane partitions $\underline{\vec{\pi}}=(\vec{\pi}_{a})_{a\in\four}=(\pi^{(\alpha)}_{\bar{a}})_{a\in\four}^{\alpha=1,\ldots,n_{\bar{a}}}$:
\bea
\{\phi_I\}_{I=1}^{k}\longrightarrow \{c_{\bar{a},\fra_{\bar{a},\alpha}}(\cube)\mid \cube\in\pi_{\bar{a}}^{(\alpha)}\}.
\eea
Note that the plane partitions now span different three-dimensional subspaces. We choose an orientation such that the three coordinates are $(x_{a+1},x_{a+2},x_{a+3})$ for $\bar{a}$, where the indices are understood modulo $4$. The partition function does not depend on the Coulomb branch parameters and has a PE-formula \eqref{eq:tetrahedronPE}.

The tetrahedron DT partition functions with boundary conditions are straightforward to generalize. We denote the boundary conditions for the three-legs of the $n_{\bar{a}}$ D6$_{\bar{a}}$-branes as $(\vec{\lambda}_{\bar{a}},\vec{\mu}_{\bar{a}},\vec{\nu}_{\bar{a}})$, where $\vec{\lambda}_{\bar{a}}=(\lambda_{\bar{a}}^{(\alpha)})_{\alpha=1}^{n}$, $\vec{\mu}_{\bar{a}}=(\mu_{\bar{a}}^{(\alpha)})_{\alpha=1}^{n}$, $\vec{\nu}_{\bar{a}}=(\nu_{\bar{a}}^{(\alpha)})_{\alpha=1}^{n}$ and the orientations are chosen as Fig.~\ref{fig:minimal-pp}. Namely, the arm of $\lambda_{\bar{a},\alpha},\mu_{\bar{a},\alpha},\nu_{\bar{a},\alpha}$ extends in the $a+3,a+1,a+2$-axes, respectively. We collectively denote the boundary Young diagrams as $(\underline{\vec{\lambda}},\underline{\vec{\mu}},\underline{\vec{\nu}})$ where $\underline{\vec{\lambda}}=(\vec{\lambda}_{a})_{a\in\four},\underline{\vec{\mu}}=(\vec{\mu}_{a})_{a\in\four},\underline{\vec{\nu}}=(\vec{\nu}_{a})_{a\in\four}$. 

The tetrahedron DT partition function is defined as
\bea
\,&\mathcal{Z}^{\DT\tbar\JK}_{\underline{\vec{\lambda}}\,\underline{\vec{\mu}}\,\underline{\vec{\nu}}}[\mathfrak{q},q_{1,2,3,4}]=\sum_{k=0}^{\infty}\mathfrak{q}^{k}\mathcal{Z}^{\DT\tbar\JK}_{\underline{\vec{\lambda}}\,\underline{\vec{\mu}}\,\underline{\vec{\nu}}}[k],
\eea
where the $k$-instanton sector is given as
\bea\label{eq:tetrahedron-DT3contourint}
\,&\mathcal{Z}^{\DT\tbar\JK}_{\underline{\vec{\lambda}}\,\underline{\vec{\mu}}\,\underline{\vec{\nu}}}[k]=\frac{1}{k!}\left(\frac{\sh(-\epsilon_{14,24,34})}{\sh(-\epsilon_{1,2,3,4})}\right)^{k}\oint_{\eta_{0}} \prod_{I=1}^{k}\frac{d\phi_{I}}{2\pi i}\prod_{I=1}^{k}\prod_{a\in\four}\prod_{\alpha=1}^{n_{\bar{a}}}\mathcal{Z}^{\D6_{\bar{a}}\tbar\D2\tbar\D0}_{\DT;\lambda_{\bar{a}}^{(\alpha)}\mu_{\bar{a}}^{(\alpha)}\nu_{\bar{a}}^{(\alpha)}}(\fra_{\bar{a},\alpha},\phi_{I})\prod_{I<J}^{k}\mathcal{Z}^{\D0\tbar\D0}(\phi_{I},\phi_{J})
\eea
and $\underline{\vec{v}}=(v_{\bar{a},\alpha})$, $v_{\bar{a},\alpha}=e^{\fra_{\bar{a},\alpha}}$. The framing node contribution $\mathcal{Z}^{\D6_{\bar{a}}\tbar\D2\tbar\D0}_{\DT;\lambda_{\bar{a}}^{(\alpha)}\mu_{\bar{a}}^{(\alpha)}\nu_{\bar{a}}^{(\alpha)}}(\fra_{\bar{a},\alpha},\phi_{I})$ is obtained by the permutation of the parameters $\eps_1\rightarrow \eps_2\rightarrow \eps_3\rightarrow \eps_4\rightarrow \eps_1$. 

The poles are classified by multiple plane partitions with boundary conditions, but this time, we have multiple directions. We note that similar to the previous section, the partition function generally depends on the Coulomb branch parameters. Again, when the boundary conditions are all trivial, we simply obtain the tetrahedron instanton partition function
\bea
\mathcal{Z}^{\DT\tbar\JK}_{\underline{\vec{\varnothing}}\,\underline{\vec{\varnothing}}\,\underline{\vec{\varnothing}}}[\mathfrak{q},q_{1,2,3,4}]=\mathcal{Z}^{\D6}_{n_{\bar{1}},n_{\bar{2}},n_{\bar{3}},n_{\bar{4}}}[\fq,q_{1,2,3,4}].
\eea

For the tetrehedron PT partition function, we simply need to flip the sign of the reference vector. We then have
\bea
\,&\mathcal{Z}^{\PT\tbar\JK}_{\underline{\vec{\lambda}}\,\underline{\vec{\mu}}\,\underline{\vec{\nu}}}[\mathfrak{q},q_{1,2,3,4}]=\sum_{k=0}^{\infty}\mathfrak{q}^{k}\mathcal{Z}^{\PT\tbar\JK}_{\underline{\vec{\lambda}}\,\underline{\vec{\mu}}\,\underline{\vec{\nu}}}[k],
\eea
where the $k$-instanton sector is given as
\bea\label{eq:tetrahedron-PT3contourint}
\,&\mathcal{Z}^{\PT\tbar\JK}_{\underline{\vec{\lambda}}\,\underline{\vec{\mu}}\,\underline{\vec{\nu}}}[k]=\frac{1}{k!}\left(\frac{\sh(-\epsilon_{14,24,34})}{\sh(-\epsilon_{1,2,3,4})}\right)^{k}\oint_{\tilde{\eta}_{0}} \prod_{I=1}^{k}\frac{d\phi_{I}}{2\pi i}\prod_{I=1}^{k}\prod_{a\in\four}\prod_{\alpha=1}^{n_{\bar{a}}}\mathcal{Z}^{\D6_{\bar{a}}\tbar\D2\tbar\D0}_{\DT;\lambda_{\bar{a}}^{(\alpha)}\mu_{\bar{a}}^{(\alpha)}\nu_{\bar{a}}^{(\alpha)}}(\fra_{\bar{a},\alpha},\phi_{I})\prod_{I<J}^{k}\mathcal{Z}^{\D0\tbar\D0}(\phi_{I},\phi_{J}).
\eea
The poles give tetrahedron generalizations of the PT3 configurations.

Even for this tetrahedron instanton generalization, we have the DT/PT correspondence.
\begin{conjecture}\label{thm:tetrahedron-DTPT}
The tetrahedron DT/PT correspondence is 
\bea
\mathcal{Z}^{\DT\tbar\JK}_{\underline{\vec{\lambda}}\,\underline{\vec{\mu}}\,\underline{\vec{\nu}}}[\mathfrak{q},q_{1,2,3,4}]&=
\mathcal{Z}^{\DT\tbar\JK}_{\underline{\vec{\varnothing}}\,\underline{\vec{\varnothing}}\,\underline{\vec{\varnothing}}}[\mathfrak{q},q_{1,2,3,4}]\mathcal{Z}^{\PT\tbar\JK}_{\underline{\vec{\lambda}}\,\underline{\vec{\mu}}\,\underline{\vec{\nu}}}[\mathfrak{q},q_{1,2,3,4}]\\
&=\mathcal{Z}^{\D6}_{n_{\bar{1}},n_{\bar{2}},n_{\bar{3}},n_{\bar{4}}}[\fq,q_{1,2,3,4}]\mathcal{Z}^{\PT\tbar\JK}_{\underline{\vec{\lambda}}\,\underline{\vec{\mu}}\,\underline{\vec{\nu}}}[\mathfrak{q},q_{1,2,3,4}]\\
&=\text{MF}\left[\prod_{a}q_{a}^{n_{\bar{a}}}\right]\mathcal{Z}^{\PT\tbar\JK}_{\underline{\vec{\lambda}}\,\underline{\vec{\mu}}\,\underline{\vec{\nu}}}[\mathfrak{q},q_{1,2,3,4}].
\eea
    
\end{conjecture}

\subsubsection{Supergroup generalizations}\label{sec:supergroupDTPT}
As one can observe from the previous sections, the PT counting and the DT/PT correspondence are phenomena coming from the flip of the sign of the reference vector, i.e., the wall-crossing phenomena. At the level of the supersymmetric quantum mechanics, the fundamental/anti-fundamental characteristic of the chiral multiplets play an important role. The introduction of additional anti D6-branes induces extra anti-fundamental chiral multiplets and the flip of the reference vector will pick such contributions. Thus, generalizations of the partition functions and the DT/PT correspondence should naturally occur. In this section, we show that generalizations of DT/PT correspondence holds.

\paragraph{Pure supergroup gauge theory }
Let us start from the setup with $n$ D6$_{\bar{4}}$-branes and $m$ $\overline{\D6}_{\bar{4}}$-branes which we call the pure $U(n|m)$ gauge theory. The framing node contribution is modified to
\bea
\prod_{\alpha=1}^{n}\mathcal{Z}^{\D6_{\bar{4}}\tbar\D0}(\fra_{\alpha},\phi_I)
\prod_{\beta=1}^{m}\mathcal{Z}^{\overline{\D6}_{\bar{4}}\tbar\D0}(\frb_{\beta}-\eps_4,\phi_I),
\eea
where we shifted the Coulomb branch parameters of the anti D6-branes for convenience.

The partition function is
\bea
\mathcal{Z}^{\D6,+}_{\bar{4},n|m}[\fq,q_{1,2,3,4}]&=\sum_{k=0}^{\infty}\fq^{k}\mathcal{Z}^{\D6,+}_{\bar{4},n|m}[k],
\eea
where the $k$-instanton sector is
\bea
\mathcal{Z}^{\D6,+}_{\bar{4},n|m}[k]&=\frac{1}{k!}\left(\frac{\sh(-\epsilon_{14,24,34})}{\sh(-\epsilon_{1,2,3,4})}\right)^{k}\oint_{\eta_{0}}\prod_{I=1}^{k}\frac{d\phi_{I}}{2\pi i }\prod_{I=1}^{k}\prod_{\alpha=1}^{n}\mathcal{Z}^{\D6_{\bar{4}}\tbar\D0}(\fra_{\alpha},\phi_I)\\
&\times\prod_{\beta=1}^{m}\mathcal{Z}^{\overline{\D6}_{\bar{4}}\tbar\D0}(\frb_{\beta}-\eps_4,\phi_I)\prod_{I<J}\mathcal{Z}^{\D0\tbar\D0}(\phi_{I},\phi_{J}).
\eea
The sign $+$ represents the sign of the reference vector $\eta_{0}$. Choosing the reference vector $\eta_{0}$ picks poles corresponding to plane partitions extending in the 123-plane whose origins are the $\fra_{\alpha}$. In other words, the pole structure is just the rank $n$ computation given in section~\ref{sec:rankNDTPT}, but the existence of the $m$ $\overline{\D6}_{\bar{4}}$-branes play the roles of extra matter contributions. We note that generally, this partition function depends on the Coulomb branch parameters $\fra_{\alpha},\frb_{\beta}$ when $n\neq 0, m\neq 0$, but we did not write the dependence on them explicitly.

Let us consider a different partition function which is defined as
\bea
\mathcal{Z}^{\D6,-}_{\bar{4},n|m}[\fq,q_{1,2,3,4}]&=\sum_{k=0}^{\infty}\fq^{k}\mathcal{Z}^{\D6,-}_{\bar{4},n|m}[k],
\eea
where the $k$-instanton sector is
\bea
\mathcal{Z}^{\D6,-}_{\bar{4},n|m}[k]&=\frac{1}{k!}\left(\frac{\sh(-\epsilon_{14,24,34})}{\sh(-\epsilon_{1,2,3,4})}\right)^{k}\oint_{\tilde{\eta}_{0}}\prod_{I=1}^{k}\frac{d\phi_{I}}{2\pi i }\prod_{I=1}^{k}\prod_{\alpha=1}^{n}\mathcal{Z}^{\D6_{\bar{4}}\tbar\D0}(\fra_{\alpha},\phi_I)\\
&\times \prod_{\beta=1}^{m}\mathcal{Z}^{\overline{\D6}_{\bar{4}}\tbar\D0}(\frb_{\beta}-\eps_4,\phi_I)\prod_{I<J}\mathcal{Z}^{\D0\tbar\D0}(\phi_{I},\phi_{J}).
\eea
This time the reference vector is $\eta=\tilde{\eta}_{0}$ and the sign $-$ denotes this dependence. Although we will not discuss in detail, the poles will be classified by $m$-tuples of plane partitions whose origin is $\frb_{\beta}$ but the coordinates of the boxes are $\frb_{\beta}-(i-1)\eps_1-(j-1)\eps_2-(k-1)\eps_3$ as in the case of the supergroup instanton counting~\cite{Kimura:2019msw,Noshita:2022dxv}. Namely, the plane partition grows in the negative direction of the three-axes. This is not difficult to observe because the framing node contribution is
\bea
\mathcal{Z}^{\overline{\D6}_{\bar{4}}\tbar\D0}(\frb-\eps_4,\phi_I)=\frac{\sh(\frb-\eps_4-\phi_I)}{\sh(\frb-\phi_I)},
\eea
which is the same with the D6-brane contribution after changing the fundamental chiral multiplets to anti-fundamental chiral multiplets and changing the sign of the flavor fugacities $\eps_a\rightarrow -\eps_a$. Choosing $\eta=\tilde{\eta}_0$ picks such poles and the hyperplanes
\bea
\phi_I-\phi_J=-\eps_{1,2,3,4}
\eea
and thus the crystal grows in the negative directions. For this case, the $n$ D6$_{\bar{4}}$-branes will play the roles of extra matter contributions.

Studying the partition functions, we generally have the following relation.
\begin{proposition}
    Let $v_{\alpha}=e^{\fra_{\alpha}}\,(\alpha=1,\ldots,n)$ and $w_{\beta}=e^{\frb_{\beta}}\,(\beta=1,\ldots,m)$ be the flavor fugacities of the theory. We then have 
    \bea
\mathcal{Z}^{\D6,+}_{\bar{4},n|m}[\fq,q_{1,2,3,4};\vec{v},\vec{w}]=\mathcal{Z}^{\D6,-}_{\bar{4},m|n}[\fq,q^{-1}_{1,2,3,4};\vec{w},\vec{v}],
    \eea
    where we explicitly denoted the flavor fugacities dependence in the partition functions.
\end{proposition}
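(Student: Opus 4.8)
The plan is to prove the identity order by order in $\fq$, that is, to show $\mathcal{Z}^{\D6,+}_{\bar{4},n|m}[k]=\mathcal{Z}^{\D6,-}_{\bar{4},m|n}[k]$ (with the stated inversion $q_a\to q_a^{-1}$ and the fugacity exchange $\vec v\leftrightarrow\vec w$) for each instanton number $k$. The first step is the observation that the $k$-instanton integrand of the $(m|n)$ theory, evaluated at inverted equivariant parameters and with exchanged fugacities, coincides term by term with that of the $(n|m)$ theory. Indeed, from the definitions of $\mathcal{Z}^{\D6_{\bar4}\tbar\D0}$ and \eqref{eq:D6antidef} together with $\sh(-x)=-\sh(x)$, and replacing $\eps_a\to-\eps_a$ everywhere (including inside the brane positions), one has
\bea
\mathcal{Z}^{\D6_{\bar4}\tbar\D0}(\frb,\phi_I)\big|_{\eps\to-\eps}=\mathcal{Z}^{\overline{\D6}_{\bar4}\tbar\D0}(\frb-\eps_4,\phi_I),\qquad \mathcal{Z}^{\overline{\D6}_{\bar4}\tbar\D0}(\fra-\eps_4,\phi_I)\big|_{\eps\to-\eps}=\mathcal{Z}^{\D6_{\bar4}\tbar\D0}(\fra,\phi_I),
\eea
so inversion of the parameters interchanges fundamental and anti-fundamental contributions while preserving their fugacities. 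Moreover $\mathcal{Z}^{\D0\tbar\D0}(\phi_I,\phi_J)$ is invariant under $\eps_a\to-\eps_a$ (each of its two half-factors maps to the one with $\phi_I\leftrightarrow\phi_J$, using $\sum_a\eps_a=0$), while the prefactor obeys $\tfrac{\sh(-\epsilon_{14,24,34})}{\sh(-\epsilon_{1,2,3,4})}\big|_{\eps\to-\eps}=-\tfrac{\sh(-\epsilon_{14,24,34})}{\sh(-\epsilon_{1,2,3,4})}$. Hence the full $(m|n)$ integrand at $(-\eps;\vec w,\vec v)$ equals the $(n|m)$ integrand at $(\eps;\vec v,\vec w)$ as a meromorphic function, and $\mathcal{Z}^{\D6,-}_{\bar{4},m|n}[k]$ differs from the same integral of the $(n|m)$ integrand only by the overall factor $(-1)^k$ from the prefactor and by the replacement of the reference vector $\tilde{\eta}_0\to$ (the contour to be compared).

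It remains to compare the two JK prescriptions on the common integrand. On the left, $\eta_0$ together with the fundamental branes at $\fra_\alpha$, whose denominators $\sh(\phi_I-\fra_\alpha)$ carry charge $+e_I$, selects by the rule of Example~5 exactly the plane partitions grown from the origins $\fra_\alpha$ in the positive octant; the anti-branes at $\frb_\beta$ (charges $-e_I$, incompatible with $\eta_0$) contribute only matter factors. On the right, $\tilde{\eta}_0$ selects the anti-branes at $\fra_\alpha$ (denominators $\sh(\fra_\alpha-\phi_I)$ of charge $-e_I$), and after the inversion the D0--D0 growth hyperplanes place the boxes again at $\fra_\alpha+(i-1)\eps_1+(j-1)\eps_2+(\ell-1)\eps_3$, with the fundamental branes at $\frb_\beta$ now serving as matter. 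Thus both sides sum over literally the same set of box configurations, and at each configuration the tree of charges used on the right is the sign-reversal of the one used on the left (roots $+e_J\leftrightarrow-e_J$, growth arrows $e_K-e_J\leftrightarrow e_J-e_K$), consistent with the integrand identity and with the $\eta_0\leftrightarrow\tilde{\eta}_0$ rule of Example~5.

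The final ingredient is a sign lemma: if a pole is selected by $(\{Q_i\},\eta)$, then it is also selected by $(\{-Q_i\},-\eta)$ and the JK residues obey $\text{JK-Res}(\{-Q_i\},-\eta)=(-1)^k\,\text{JK-Res}(\{Q_i\},\eta)$. For a non-degenerate pole this is immediate from the definition, since $\delta(-Q,-\eta)=\delta(Q,\eta)$, $|\det(-Q)|=|\det Q|$, and each nested residue $\Res_{\delta_i=0}$ acquires a sign under $\delta_i\mapsto-\delta_i$ exactly as in Example~2, giving $(-1)^k$. Combining everything, the $(-1)^k$ carried by the prefactor cancels the $(-1)^k$ produced by the sign lemma, configuration by configuration, so that $\mathcal{Z}^{\D6,+}_{\bar{4},n|m}[k]=\mathcal{Z}^{\D6,-}_{\bar{4},m|n}[k]$; summing over $k$ yields the Proposition.

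The main obstacle is the sign lemma for degenerate poles, where one must track the combinatorial factor $\text{sign}\det\kappa(F,Q_{\ast})/\det\mathcal{B}(F,Q_{\ast})$ of \eqref{eq:JKresidue-iterative} under $Q\to-Q$: here both $\det\mathcal{B}$ and $\det\kappa$ pick up $(-1)^k$, so this prefactor is invariant and the net sign again comes entirely from the $k$ nested residues, but making this rigorous requires verifying that the flags contributing to the $\tilde{\eta}_0$ sum are precisely the sign-reversed images of those contributing to the $\eta_0$ sum. A secondary point is to confirm that the addability (melting) rules governing the two contours map to one another under $\eps_a\to-\eps_a$, so that the configuration-by-configuration bijection invoked above is genuine at every order, and not merely a matching of the set of pole locations.
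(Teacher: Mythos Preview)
Your proof is correct and in fact goes beyond what the paper does: the paper does not prove this Proposition at all, merely stating it as an identity observed ``studying the partition functions'' (i.e.\ verified in examples). Your argument---matching the integrands as rational functions under $\eps_a\to-\eps_a$ and then invoking a sign-reversal lemma for the JK residue---is a genuine proof.

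Both obstacles you flagged are resolvable. For the sign lemma in the degenerate case, the key point is that a flag $\{0\}\subset F_1\subset\cdots\subset F_k$ is determined by its subspaces, and $\text{span}\{Q_{j_1},\ldots,Q_{j_i}\}=\text{span}\{-Q_{j_1},\ldots,-Q_{j_i}\}$; hence the set of flags is literally unchanged under $Q_*\to -Q_*$. For each flag, $\kappa_i\to -\kappa_i$, so $\eta\in\text{Cone}(\kappa_1,\ldots,\kappa_k)$ if and only if $-\eta\in\text{Cone}(-\kappa_1,\ldots,-\kappa_k)$, whence $\delta(F,-\eta)\big|_{-Q_*}=\delta(F,\eta)\big|_{Q_*}$ and the contributing flags match exactly. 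As you already noted, $\text{sign}\det\kappa/\det\mathcal{B}$ is invariant and the $k$ nested residues supply the $(-1)^k$, so the lemma holds verbatim in the degenerate case.

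For your second point, the bijection on box configurations is in fact the identity map. On the right, the root charge $-e_J$ selects $\phi_J=\fra_\alpha$ from the anti-brane denominator $\sh(\fra_\alpha-\phi_J)$, and the arrow $J\to K$ with charge $e_J-e_K$ selects the D0--D0 hyperplane $\phi_J-\phi_K+\eps_a=0$ (the D0--D0 denominator after inversion is $\sh(\phi_I-\phi_J+\eps_a)$), giving $\phi_K=\phi_J+\eps_a$ with the \emph{original} $\eps_a$. The $\eps_4$-direction is killed by the anti-brane numerator $\sh(\fra_\alpha+\eps_4-\phi_K)$, exactly as on the left. So both sides produce identical plane partitions at identical box coordinates, and the melting rules coincide tautologically rather than merely via a parameter-dependent bijection.
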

Using this and the fact that the PE-formula \eqref{eq:tetrahedronPE} is invariant under $q_{a}\rightarrow q_{a}^{-1}$, one can also show that when $n=0$, the partition function has a PE-formula and it coincides with the rank $m$ D6$_{\bar{4}}$ partition function.
\begin{corollary}
 We have the following identity: 
 \bea
 \mathcal{Z}^{\D6,-}_{\bar{4},0|m}[\fq,q_{1,2,3,4}]=\mathcal{Z}^{\D6}_{\bar{4},m}[\fq,q_{1,2,3,4}]
 \eea
\end{corollary}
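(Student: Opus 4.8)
The plan is to obtain the identity as a direct consequence of the preceding Proposition together with the $q_a \to q_a^{-1}$ invariance of the plethystic exponential formula \eqref{eq:tetrahedronPE}; no new JK-residue computation is required.

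First I would specialize the preceding Proposition to the case in which the right-hand side carries no fundamental D6$_{\bar 4}$-branes. Taking the original ranks to be $0$ fundamental and $m$ anti-fundamental branes, so that $(n|m)\to(m|0)$ on the left and $(m|n)\to(0|m)$ on the right, the Proposition reads
\bea\label{eq:cor-step1}
\mathcal{Z}^{\D6,+}_{\bar{4},m|0}[\fq,q_{1,2,3,4};\vec{v},\emptyset]=\mathcal{Z}^{\D6,-}_{\bar{4},0|m}[\fq,q^{-1}_{1,2,3,4};\emptyset,\vec{v}].
\eea
The left-hand side is, by construction, the rank $m$ D6$_{\bar 4}$ partition function of section~\ref{sec:rankNDTPT}: with $m$ D6$_{\bar 4}$-branes, no anti-branes, and reference vector $\eta_0$, the framing node reduces to $\prod_{\alpha=1}^{m}\mathcal{Z}^{\D6_{\bar{4}}\tbar\D0}(\fra_{\alpha},\phi_I)$ and the integrand coincides identically with the one defining $\mathcal{Z}^{\D6}_{\bar{4},m}$. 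Hence $\mathcal{Z}^{\D6,+}_{\bar{4},m|0}=\mathcal{Z}^{\D6}_{\bar{4},m}$, and \eqref{eq:cor-step1} becomes
\bea\label{eq:cor-step2}
\mathcal{Z}^{\D6}_{\bar{4},m}[\fq,q_{1,2,3,4}]=\mathcal{Z}^{\D6,-}_{\bar{4},0|m}[\fq,q^{-1}_{1,2,3,4}],
\eea
where the Coulomb-parameter dependence drops out because both members are pure (single-signature) theories.

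Finally I would remove the parameter inversion on the right of \eqref{eq:cor-step2}. Substituting $q_a \to q_a^{-1}$ for all $a\in\four$ gives $\mathcal{Z}^{\D6}_{\bar{4},m}[\fq,q^{-1}_{1,2,3,4}]=\mathcal{Z}^{\D6,-}_{\bar{4},0|m}[\fq,q_{1,2,3,4}]$. Since $\mathcal{Z}^{\D6}_{\bar{4},m}$ is given by the PE-formula \eqref{eq:tetrahedronPE}, whose argument is invariant under $q_a \to q_a^{-1}$ — using $[x^{-1}]=-[x]$ and $q_1q_2q_3q_4=1$, the prefactor $[q_{14}][q_{24}][q_{34}]/[q_1][q_2][q_3][q_4]$ and the factor $[q_4^m]/([\fq q_4^{-m/2}][\fq q_4^{m/2}])$ each pick up a single overall sign which cancel — the left-hand side equals $\mathcal{Z}^{\D6}_{\bar{4},m}[\fq,q_{1,2,3,4}]$, and the claim follows.

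The only nontrivial input is the preceding Proposition itself, which I am assuming. Its justification is where the real work lies: one must show that for the $\tilde{\eta}_0$-contour the JK-residue selects $m$-tuples of plane partitions growing into the negative octant with origins $\frb_\beta$ (the supergroup-type crystals of \cite{Kimura:2019msw,Noshita:2022dxv}), and that the equivariant weight of each such configuration is mapped to the weight of the corresponding $\eta_0$-configuration of the conjugate theory precisely under $\vec{v}\leftrightarrow\vec{w}$ and $q_a\to q_a^{-1}$. By comparison, the definitional identification $\mathcal{Z}^{\D6,+}_{\bar{4},m|0}=\mathcal{Z}^{\D6}_{\bar{4},m}$ and the PE-invariance check are both routine.
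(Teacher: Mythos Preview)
Your argument is correct and matches the paper's own justification exactly: specialize the preceding Proposition to $(n,m)=(m,0)$, identify $\mathcal{Z}^{\D6,+}_{\bar{4},m|0}$ with $\mathcal{Z}^{\D6}_{\bar{4},m}$ by definition, and then remove the $q_a\to q_a^{-1}$ using the invariance of the PE-formula \eqref{eq:tetrahedronPE}. The only cosmetic issue is that your sentence ``Taking the original ranks to be $0$ fundamental and $m$ anti-fundamental branes, so that $(n|m)\to(m|0)$ on the left'' reads oddly (the left-hand side $\mathcal{Z}^{\D6,+}_{\bar{4},m|0}$ has $m$ fundamental and $0$ anti-fundamental branes), but the displayed equation you write is the correct specialization.
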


An interesting property is that the partition functions $\mathcal{Z}^{\D6,\pm}_{\bar{4},n|m}[\fq,q_{1,2,3,4}]$ are connected in a similar way as the DT/PT correspondence. 
\begin{conjecture}\label{thm:puresupergroupDTPT}
    The partition function of the pure supergroup gauge theory coming from $n$ D6$_{\bar{4}}$-branes and $m$ $\overline{\D6}_{\bar{4}}$-branes evaluated with the reference vector $\eta_{0},\tilde{\eta}_{0}$ obeys the wall-crossing formula
    \bea
     \mathcal{Z}^{\D6,+}_{\bar{4},n|m}[\fq,q_{1,2,3,4}]=\text{MF}[q_{4}^{n-m}]\mathcal{Z}^{\D6,-}_{\bar{4},n|m}[\fq,q_{1,2,3,4}].
    \eea
    
    In particular, when $n=m$, there is no wall-crossing
     \bea
     \mathcal{Z}^{\D6,+}_{\bar{4},n|n}[\fq,q_{1,2,3,4}]=\mathcal{Z}^{\D6,-}_{\bar{4},n|n}[\fq,q_{1,2,3,4}].
    \eea
    
\end{conjecture}
An interesting fact is that although the partition function generally depends on the Coulomb branch parameters, the difference between them factorizes into a Coulomb branch parameter independent function.

The fact that no wall-crossing occurs when $n=m$ can be also confirmed by studying the asymptotics of the contour integrand. For the limit $\phi_I\rightarrow \infty$, we have
\bea
\prod_{\alpha=1}^{n}\mathcal{Z}^{\D6_{\bar{4}}\tbar\D0}(\fra_{\alpha},\phi_I)\prod_{\beta=1}^{m}\mathcal{Z}^{\overline{\D6}_{\bar{4}}\tbar\D0}(\frb_{\beta}-\eps_4,\phi_I)\rightarrow q_{4}^{-\frac{1}{2}(n-m)}
\eea
and for the limit $\phi_I\rightarrow -\infty$, we have
\bea
\prod_{\alpha=1}^{n}\mathcal{Z}^{\D6_{\bar{4}}\tbar\D0}(\fra_{\alpha},\phi_I)\prod_{\beta=1}^{m}\mathcal{Z}^{\overline{\D6}_{\bar{4}}\tbar\D0}(\frb_{\beta}-\eps_4,\phi_I)\rightarrow q_{4}^{+\frac{1}{2}(n-m)}.
\eea
Therefore, when $n=m$, the difference between the asymptotics vanish and we have no wall-crossing.

\paragraph{Supergroup generalizations of DT and PT counting}
Let us now introduce the D2-branes giving nontrivial boundary conditions. Since we also have the $\overline{\D6}_{\bar{4}}$-branes, we also can introduce boundary conditions on them. We introduce $n$ $\D6_{\bar{4}}$-branes and $m$ $\overline{\D6}_{\bar{4}}$-branes to the system and denote the boundary conditions as $(\vec{\lambda}_{+},\vec{\mu}_{+},\vec{\nu}_{+})$ for the D6-branes and $(\vec{\lambda}_{-},\vec{\mu}_{-},\vec{\nu}_{-})$ for the $\overline{\D6}$-branes, where
\bea
\vec{\lambda}_{+}=(\lambda_{+}^{(\alpha)})_{\alpha=1}^{n},\quad \vec{\mu}_{+}=(\mu_{+}^{(\alpha)})_{\alpha=1}^{n},\quad \vec{\nu}_{+}=(\nu_{+}^{(\alpha)})_{\alpha=1}^{n},\\
\vec{\lambda}_{-}=(\lambda_{-}^{(\alpha)})_{\alpha=1}^{m},\quad \vec{\mu}_{-}=(\mu_{-}^{(\alpha)})_{\alpha=1}^{m},\quad \vec{\nu}_{-}=(\nu_{-}^{(\alpha)})_{\alpha=1}^{m},\\
\eea
The orientation of the Young diagrams for the $\overline{\D6}$-brane part is the one given in Fig.~\ref{fig:PTreverse}. 

The framing node contribution becomes 
\bea
\prod_{\alpha=1}^{n}\mathcal{Z}^{\D6_{\bar{4}}\tbar\D2\tbar\D0}_{\DT;\lambda_{+}^{(\alpha)}\mu_{+}^{(\alpha)}\nu_{+}^{(\alpha)}}(\fra_{\alpha},\phi_{I})\prod_{\beta=1}^{m}\mathcal{Z}^{\D6_{\bar{4}}\tbar\D2\tbar\D0}_{\PT;\lambda_{-}^{(\beta)}\mu_{-}^{(\beta)}\nu_{-}^{(\beta)}}(\frb_{\beta},\phi_{I}).
\eea
Note that the contribution $\mathcal{Z}^{\D6_{\bar{4}}\tbar\D2\tbar\D0}_{\PT;\lambda_{-}^{(\beta)}\mu_{-}^{(\beta)}\nu_{-}^{(\beta)}}(\frb_{\beta},\phi_{I})$ generalizes the $\overline{\D6}_{\bar{4}}$-brane contribution to include the leg boundary conditions. One will see that if the boundary conditions are trivial, it indeed reproduces the $\overline{\D6}_{\bar{4}}$ contribution.

We then define the generalized partition function as
\bea
\,&\mathcal{Z}^{+}_{\bar{4};\,\vec{\lambda}_{+}\vec{\mu}_{+}\vec{\nu}_{+}|\vec{\lambda}_{-}\vec{\mu}_{-}\vec{\nu}_{-}}[\mathfrak{q},q_{1,2,3,4}]=\sum_{k=0}^{\infty}\mathfrak{q}^{k}\mathcal{Z}^{+}_{\bar{4};\,\vec{\lambda}_{+}\vec{\mu}_{+}\vec{\nu}_{+}|\vec{\lambda}_{-}\vec{\mu}_{-}\vec{\nu}_{-}}[k],
\eea
where the $k$-instanton sector is
\bea\label{eq:DTPTvertex-positive}
\,\mathcal{Z}^{+}_{\bar{4};\,\vec{\lambda}_{+}\vec{\mu}_{+}\vec{\nu}_{+}|\vec{\lambda}_{-}\vec{\mu}_{-}\vec{\nu}_{-}}[k]&=\frac{1}{k!}\left(\frac{\sh(-\epsilon_{14,24,34})}{\sh(-\epsilon_{1,2,3,4})}\right)^{k}\oint_{\eta_{0}} \prod_{I=1}^{k}\frac{d\phi_{I}}{2\pi i}\prod_{I=1}^{k}\prod_{\alpha=1}^{n}\mathcal{Z}^{\D6_{\bar{4}}\tbar\D2\tbar\D0}_{\DT;\lambda_{+}^{(\alpha)}\mu_{+}^{(\alpha)}\nu_{+}^{(\alpha)}}(\fra_{\alpha},\phi_{I})\\
& \times \prod_{I=1}^{k}\prod_{\beta=1}^{m}\mathcal{Z}^{\D6_{\bar{4}}\tbar\D2\tbar\D0}_{\PT;\lambda_{-}^{(\beta)}\mu_{-}^{(\beta)}\nu_{-}^{(\beta)}}(\frb_{\beta},\phi_{I}) \prod_{I<J}^{k}\mathcal{Z}^{\D0\tbar\D0}(\phi_{I},\phi_{J})
\eea
where we used the \eqref{eq:DTflavornode-def} and \eqref{eq:PTflavornode-def} (see section~\ref{sec:PT3counting-conjugate}). Again, the sign $+$ denotes the fact that we are choosing the reference vector to be $\eta=\eta_0$.

The pole structure is much more complicated compared to the previous cases and we will get a mix of DT and PT counting. For the contributions from the D6$_{\bar{4}}$-branes, we obtain the usual DT3 counting. On the other hand, for the contributions from the $\overline{\D6}_{\bar{4}}$-branes, we have the PT3 counting.\footnote{Strictly speaking, the conjugate PT3 counting.} Since it is a mix of DT and PT counting, we will call this partition function the \textbf{(DT|PT) vertex} or the \textbf{(DT|PT) partition function}. 

Choosing the reference vector to be $\eta=\tilde{\eta}_0$, we can define a different partition function:
\bea
\,&\mathcal{Z}^{-}_{\bar{4};\,\vec{\lambda}_{+},\vec{\mu}_{+}\vec{\nu}_{+}|\vec{\lambda}_{-}\vec{\mu}_{-}\vec{\nu}_{-}}[\mathfrak{q},q_{1,2,3,4}]=\sum_{k=0}^{\infty}\mathfrak{q}^{k}\mathcal{Z}^{-}_{\bar{4};\,\vec{\lambda}_{+}\vec{\mu}_{+}\vec{\nu}_{+}|\vec{\lambda}_{-}\vec{\mu}_{-}\vec{\nu}_{-}}[k],
\eea
where the $k$-instanton sector is given as
\bea\label{eq:DTPTvertex-negative}
\,\mathcal{Z}^{-}_{\bar{4};\,\vec{\lambda}_{+}\vec{\mu}_{+}\vec{\nu}_{+}|\vec{\lambda}_{-}\vec{\mu}_{-}\vec{\nu}_{-}}[k]&=\frac{1}{k!}\left(\frac{\sh(-\epsilon_{14,24,34})}{\sh(-\epsilon_{1,2,3,4})}\right)^{k}\oint_{\tilde{\eta}_{0}} \prod_{I=1}^{k}\frac{d\phi_{I}}{2\pi i}\prod_{I=1}^{k}\prod_{\alpha=1}^{n}\mathcal{Z}^{\D6_{\bar{4}}\tbar\D2\tbar\D0}_{\DT;\lambda_{+}^{(\alpha)}\mu_{+}^{(\alpha)}\nu_{+}^{(\alpha)}}(\fra_{\alpha},\phi_{I})\\
& \times \prod_{I=1}^{k}\prod_{\beta=1}^{m}\mathcal{Z}^{\D6_{\bar{4}}\tbar\D2\tbar\D0}_{\PT;\lambda_{-}^{(\beta)}\mu_{-}^{(\beta)}\nu_{-}^{(\beta)}}(\frb_{\beta},\phi_{I}) \prod_{I<J}^{k}\mathcal{Z}^{\D0\tbar\D0}(\phi_{I},\phi_{J}).
\eea
This time, for the contributions from the $\D6_{\bar{4}}$-branes, we have the PT3 counting, while for the contributions from the $\overline{\D6}_{\bar{4}}$-branes, we have the DT3 counting.\footnote{Strictly speaking, this is the conjugate version of the DT3 counting. The conjugate DT3 counting is the DT3 counting with all of the $\eps_a$ parameters flipped to $-\eps_a$.} Since the DT, PT properties are flipped, we call it the \textbf{(PT|DT) vertex } or the \textbf{(PT|DT) partition function}.

The (DT|PT) vertex and the (PT|DT) vertex are connected with each other similar to other DT/PT correspondence.
\begin{conjecture}
    The DT/PT correspondence of the (DT|PT) vertex and the (PT|DT) vertex is 
    \bea
\mathcal{Z}^{+}_{\bar{4};\,\vec{\lambda}_{+}\vec{\mu}_{+}\vec{\nu}_{+}|\vec{\lambda}_{-}\vec{\mu}_{-}\vec{\nu}_{-}}[\mathfrak{q},q_{1,2,3,4}]=\text{MF}[q_{4}^{n-m}]\,\mathcal{Z}^{-}_{\bar{4};\,\vec{\lambda}_{+}\vec{\mu}_{+}\vec{\nu}_{+}|\vec{\lambda}_{-}\vec{\mu}_{-}\vec{\nu}_{-}}[\mathfrak{q},q_{1,2,3,4}]
    \eea
    where $n$ is the number of the D6$_{\bar{4}}$-branes and $m$ is the number of the $\overline{\D6}_{\bar{4}}$-branes.

    In particular, if $n=m$, the (DT|PT) vertex and the (PT|DT) vertex are the same.
\end{conjecture}
Note that Conj.~\ref{thm:puresupergroupDTPT} is just a special case by setting all of the boundary conditions to be trivial. The higher rank DT/PT correspondence in Conj.~\ref{thm:rankn-DTPT} comes by setting $m=0$.

\paragraph{Supergroup tetrahedron instantons }
Let us generalize the situation when we also have different orientation of D6-branes and $\overline{\D6}$-branes. Let us first start when we do not have any D2-branes and boundary conditions.

We consider the setup with $n_{\bar{a}}$ D6$_{\bar{a}}$-branes and $m_{\bar{a}}$ $\overline{\D6}_{\bar{a}}$-branes. The framing node contribution is modified to
\bea
\prod_{a\in\four}\prod_{\alpha=1}^{n_{\bar{a}}}\mathcal{Z}^{\D6_{\bar{a}}\tbar\D0}(\fra_{\bar{a},\alpha},\phi_I)
\prod_{\beta=1}^{m_{\bar{a}}}\mathcal{Z}^{\overline{\D6}_{\bar{a}}\tbar\D0}(\frb_{\bar{a},\beta}-\eps_a,\phi_I).
\eea
The partition function is
\bea
\mathcal{Z}^{\D6,+}_{\substack{(n_{\bar{1}},n_{\bar{2}},n_{\bar{3}},n_{\bar{4}})\\(m_{\bar{1}},m_{\bar{2}},m_{\bar{3}},m_{\bar{4}})}}[\fq,q_{1,2,3,4}]&=\sum_{k=0}^{\infty}\fq^{k}\mathcal{Z}^{\D6,+}_{\substack{(n_{\bar{1}},n_{\bar{2}},n_{\bar{3}},n_{\bar{4}})\\(m_{\bar{1}},m_{\bar{2}},m_{\bar{3}},m_{\bar{4}})}}[k],
\eea
where
\bea
\mathcal{Z}^{\D6,+}_{\substack{(n_{\bar{1}},n_{\bar{2}},n_{\bar{3}},n_{\bar{4}})\\(m_{\bar{1}},m_{\bar{2}},m_{\bar{3}},m_{\bar{4}})}}[k]&=\frac{1}{k!}\left(\frac{\sh(-\epsilon_{14,24,34})}{\sh(-\epsilon_{1,2,3,4})}\right)^{k}\oint_{\eta_{0}}\prod_{I=1}^{k}\frac{d\phi_{I}}{2\pi i }\prod_{I=1}^{k}\prod_{a\in\four}\prod_{\alpha=1}^{n_{\bar{a}}}\mathcal{Z}^{\D6_{\bar{a}}\tbar\D0}(\fra_{\bar{a},\alpha},\phi_I)\\
&\times\prod_{a\in\four}\prod_{\beta=1}^{m_{\bar{a}}}\mathcal{Z}^{\overline{\D6}_{\bar{a}}\tbar\D0}(\frb_{\bar{a},\beta},\phi_I) \prod_{I<J}\mathcal{Z}^{\D0\tbar\D0}(\phi_{I},\phi_{J}).
\eea
Note again that the sign $+$ denotes the choice of the reference vector $\eta=\eta_0$. The pole structure is similar to the tetehedron instanton case and we have multiple plane partitions in different orientations. The anti D6-brane contributions just play the roles of extra matter contributions.

Again, we can consider a different partition function by choosing the reference vector $\eta=\tilde{\eta}_0$:
\bea
\mathcal{Z}^{\D6,-}_{\substack{(n_{\bar{1}},n_{\bar{2}},n_{\bar{3}},n_{\bar{4}})\\(m_{\bar{1}},m_{\bar{2}},m_{\bar{3}},m_{\bar{4}})}}[\fq,q_{1,2,3,4}]&=\sum_{k=0}^{\infty}\fq^{k}\mathcal{Z}^{\D6,-}_{\substack{(n_{\bar{1}},n_{\bar{2}},n_{\bar{3}},n_{\bar{4}})\\(m_{\bar{1}},m_{\bar{2}},m_{\bar{3}},m_{\bar{4}})}}[k],
\eea
where
\bea
\mathcal{Z}^{\D6,-}_{\substack{(n_{\bar{1}},n_{\bar{2}},n_{\bar{3}},n_{\bar{4}})\\(m_{\bar{1}},m_{\bar{2}},m_{\bar{3}},m_{\bar{4}})}}[k]&=\frac{1}{k!}\left(\frac{\sh(-\epsilon_{14,24,34})}{\sh(-\epsilon_{1,2,3,4})}\right)^{k}\oint_{\tilde{\eta}_{0}}\prod_{I=1}^{k}\frac{d\phi_{I}}{2\pi i }\prod_{I=1}^{k}\prod_{a\in\four}\prod_{\alpha=1}^{n_{\bar{a}}}\mathcal{Z}^{\D6_{\bar{a}}\tbar\D0}(\fra_{\bar{a},\alpha},\phi_I)\\
&\times\prod_{a\in\four}\prod_{\beta=1}^{m_{\bar{a}}}\mathcal{Z}^{\overline{\D6}_{\bar{a}}\tbar\D0}(\frb_{\bar{a},\beta},\phi_I) \prod_{I<J}\mathcal{Z}^{\D0\tbar\D0}(\phi_{I},\phi_{J}).
\eea
In this case, the poles picked up correspond to plane partitions in different orientations and extending in the negative axes.

The ``DT/PT correspondence" of this theory is given as follows.
\begin{conjecture}\label{thm:tetrahedronsupergroupDTPT}
 The partition functions of the supergroup tetrahedron instantons obey the identity
    \bea
\mathcal{Z}^{\D6,+}_{\substack{(n_{\bar{1}},n_{\bar{2}},n_{\bar{3}},n_{\bar{4}})\\(m_{\bar{1}},m_{\bar{2}},m_{\bar{3}},m_{\bar{4}})}}[\fq,q_{1,2,3,4}]=\text{MF}\left[\prod_{a\in\four}q_{a}^{n_{\bar{a}}-m_{\bar{a}}}\right]\mathcal{Z}^{\D6,-}_{\substack{(n_{\bar{1}},n_{\bar{2}},n_{\bar{3}},n_{\bar{4}})\\(m_{\bar{1}},m_{\bar{2}},m_{\bar{3}},m_{\bar{4}})}}[\fq,q_{1,2,3,4}].
    \eea
    If $\prod_{a\in\four}q_{a}^{n_{\bar{a}}-m_{\bar{a}}}=1$, which is $n_{\bar{a}}-m_{\bar{a}}=n_{\bar{b}}-m_{\bar{b}}$ for $a\neq b$, then no wall crossing occurs.
\end{conjecture}

\paragraph{Tetrahedron (DT|PT) vertex}
Generalizations to include the D2-branes as boundary conditions are straightforward. We denote the boundary conditions for the three-legs of the $n_{\bar{a}}$ D6$_{\bar{a}}$-branes as $(\vec{\lambda}_{+,\bar{a}},\vec{\mu}_{+,\bar{a}},\vec{\nu}_{+,\bar{a}})$ and for the $m_{\bar{a}}$ $\overline{\D6}_{\bar{a}}$-branes as $(\vec{\lambda}_{-,\bar{a}},\vec{\mu}_{-,\bar{a}},\vec{\nu}_{-,\bar{a}})$. We collectively denote them as $(\underline{\vec{\lambda}}_{+},\underline{\vec{\mu}}_{+},\underline{\vec{\nu}}_{+})$ and $(\underline{\vec{\lambda}}_{-},\underline{\vec{\mu}}_{-},\underline{\vec{\nu}}_-)$.

The framing node contribution is
\bea
\prod_{a\in\four}\prod_{\alpha=1}^{n_{\bar{a}}}\mathcal{Z}^{\D6_{\bar{a}}\tbar\D2\tbar\D0}_{\DT;\lambda_{+,\bar{a}}^{(\alpha)}\mu_{+,\bar{a}}^{(\alpha)}\nu_{+,\bar{a}}^{(\alpha)}}(\fra_{\bar{a},\alpha},\phi_{I})\prod_{\beta=1}^{m_{\bar{a}}}\mathcal{Z}^{\D6_{\bar{a}}\tbar\D2\tbar\D0}_{\PT;\lambda_{-,\bar{a}}^{(\beta)}\mu_{-,\bar{a}}^{(\beta)}\nu_{-,\bar{a}}^{(\beta)}}(\frb_{\bar{a},\beta},\phi_{I}).
\eea

The tetrahedron (DT|PT) vertex is defined as
\bea
\,&\mathcal{Z}^{+}_{\,\underline{\vec{\lambda}}_{+}\underline{\vec{\mu}}_{+}\underline{\vec{\nu}}_{+}|\underline{\vec{\lambda}}_{-}\underline{\vec{\mu}}_{-}\underline{\vec{\nu}}_{-}}[\mathfrak{q},q_{1,2,3,4}]=\sum_{k=0}^{\infty}\mathfrak{q}^{k}\mathcal{Z}^{+}_{\,\underline{\vec{\lambda}}_{+}\underline{\vec{\mu}}_{+}\underline{\vec{\nu}}_{+}|\underline{\vec{\lambda}}_{-}\underline{\vec{\mu}}_{-}\underline{\vec{\nu}}_{-}}[k],
\eea
where the $k$-instanton sector is
\bea\label{eq:tetrehedronDTPTvertex-positive}
\,\mathcal{Z}^{+}_{\,\underline{\vec{\lambda}}_{+}\underline{\vec{\mu}}_{+}\underline{\vec{\nu}}_{+}|\underline{\vec{\lambda}}_{-}\underline{\vec{\mu}}_{-}\underline{\vec{\nu}}_{-}}[k]&=\frac{1}{k!}\left(\frac{\sh(-\epsilon_{14,24,34})}{\sh(-\epsilon_{1,2,3,4})}\right)^{k}\oint_{\eta_{0}} \prod_{I=1}^{k}\frac{d\phi_{I}}{2\pi i}\prod_{I=1}^{k}\prod_{a\in\four}\prod_{\alpha=1}^{n_{\bar{a}}}\mathcal{Z}^{\D6_{\bar{a}}\tbar\D2\tbar\D0}_{\DT;\lambda_{+,\bar{a}}^{(\alpha)}\mu_{+,\bar{a}}^{(\alpha)}\nu_{+,\bar{a}}^{(\alpha)}}(\fra_{\bar{a},\alpha},\phi_{I})\\
& \times \prod_{I=1}^{k}\prod_{a\in\four}\prod_{\beta=1}^{m_{\bar{a}}}\mathcal{Z}^{\D6_{\bar{a}}\tbar\D2\tbar\D0}_{\PT;\lambda_{-,\bar{a}}^{(\beta)}\mu_{-,\bar{a}}^{(\beta)}\nu_{-,\bar{a}}^{(\beta)}}(\frb_{\bar{a},\beta},\phi_{I}) \prod_{I<J}^{k}\mathcal{Z}^{\D0\tbar\D0}(\phi_{I},\phi_{J}).
\eea
Choosing the reference vector $\eta=\eta_{0}$ gives the tetrahedron DT counting of the D6-branes and the tetrahedron PT counting of the $\overline{\D6}$-branes.

The tetrahedron (PT|DT) vertex is defined as
\bea
\,&\mathcal{Z}^{-}_{\,\underline{\vec{\lambda}}_{+}\underline{\vec{\mu}}_{+}\underline{\vec{\nu}}_{+}|\underline{\vec{\lambda}}_{-}\underline{\vec{\mu}}_{-}\underline{\vec{\nu}}_{-}}[\mathfrak{q},q_{1,2,3,4}]=\sum_{k=0}^{\infty}\mathfrak{q}^{k}\mathcal{Z}^{-}_{\,\underline{\vec{\lambda}}_{+}\underline{\vec{\mu}}_{+}\underline{\vec{\nu}}_{+}|\underline{\vec{\lambda}}_{-}\underline{\vec{\mu}}_{-}\underline{\vec{\nu}}_{-}}[k],
\eea
where the $k$-instanton sector is
\bea\label{eq:tetrehedronDTPTvertex-negative}
\,\mathcal{Z}^{-}_{\,\underline{\vec{\lambda}}_{+}\underline{\vec{\mu}}_{+}\underline{\vec{\nu}}_{+}|\underline{\vec{\lambda}}_{-}\underline{\vec{\mu}}_{-}\underline{\vec{\nu}}_{-}}[k]&=\frac{1}{k!}\left(\frac{\sh(-\epsilon_{14,24,34})}{\sh(-\epsilon_{1,2,3,4})}\right)^{k}\oint_{\tilde{\eta}_{0}} \prod_{I=1}^{k}\frac{d\phi_{I}}{2\pi i}\prod_{I=1}^{k}\prod_{a\in\four}\prod_{\alpha=1}^{n_{\bar{a}}}\mathcal{Z}^{\D6_{\bar{a}}\tbar\D2\tbar\D0}_{\DT;\lambda_{+,\bar{a}}^{(\alpha)}\mu_{+,\bar{a}}^{(\alpha)}\nu_{+,\bar{a}}^{(\alpha)}}(\fra_{\bar{a},\alpha},\phi_{I})\\
&\times \prod_{I=1}^{k}\prod_{a\in\four}\prod_{\beta=1}^{m_{\bar{a}}}\mathcal{Z}^{\D6_{\bar{a}}\tbar\D2\tbar\D0}_{\PT;\lambda_{-,\bar{a}}^{(\beta)}\mu_{-,\bar{a}}^{(\beta)}\nu_{-,\bar{a}}^{(\beta)}}(\frb_{\bar{a},\beta},\phi_{I}) \prod_{I<J}^{k}\mathcal{Z}^{\D0\tbar\D0}(\phi_{I},\phi_{J}).
\eea
Choosing the reference vector $\eta=\tilde{\eta}_{0}$ gives the tetrahedron PT counting of the D6-branes and the tetrahedron DT counting of the $\overline{\D6}$-branes.

The DT/PT correspondence is given as follows.
\begin{conjecture}
    The DT/PT correspondence of the tetrahedron (DT|PT) vertex and the (PT|DT) vertex is 
    \bea
\mathcal{Z}^{+}_{\,\underline{\vec{\lambda}}_{+}\underline{\vec{\mu}}_{+}\underline{\vec{\nu}}_{+}|\underline{\vec{\lambda}}_{-}\underline{\vec{\mu}}_{-}\underline{\vec{\nu}}_{-}}[\mathfrak{q},q_{1,2,3,4}]=\text{MF}\left[\prod_{a\in\four}q_{a}^{n_{\bar{a}}-m_{\bar{a}}}\right]\,\mathcal{Z}^{-}_{\,\underline{\vec{\lambda}}_{+}\underline{\vec{\mu}}_{+}\underline{\vec{\nu}}_{+}|\underline{\vec{\lambda}}_{-}\underline{\vec{\mu}}_{-}\underline{\vec{\nu}}_{-}}[\mathfrak{q},q_{1,2,3,4}]
    \eea
If $\prod_{a\in\four}q_{a}^{n_{\bar{a}}-m_{\bar{a}}}=1$, which is $n_{\bar{a}}-m_{\bar{a}}=n_{\bar{b}}-m_{\bar{b}}$ for $a\neq b$, then both sides become the same. 
\end{conjecture}

Note that all of the DT/PT correspondence we have discussed is a reduction of this theorem. For any cases, the two partition functions evaluated using different reference vectors are equal to each other up to factors depending on $\text{MF}[\mu]$. In other words, this factor measures the difference of the two partition functions. Moreover, the parameter $\mu$ is equal to the central charges of the total system of D6, $\overline{\D6}$-branes, where we associated the central charge of the D6$_{\bar{a}}$-branes and $\overline{\D6}_{\bar{a}}$-branes as $q_{a}$ and $q_{a}^{-1}$, respectively.

\section{BPS/CFT correspondence and PT3 \texorpdfstring{$qq$}{qq}-characters}\label{sec:BPS/CFT-PT3qq}
In this section, we will introduce the $qq$-characters\footnote{We do not distinguish the $qq$-characters and quiver W-algebra \cite{Kimura:2015rgi,Kimura:2016dys,Kimura:2017hez,Shiraishi:1995rp,Awata:1996dx,Awata:1995zk,Frenkel:1998ojj, Frenkel:1997CMP} in this paper. Strictly speaking, the quiver W-algebra is the operator formalism of the $qq$-characters.} corresponding to the PT3 counting, which we call the PT3 $qq$-characters. The PT3 $qq$-characters contain the information of the PT3 partition functions and thus it establishes the BPS/CFT correspondence. Basically, we follow the discussion in \cite{Kimura:2023bxy, Kimura:2024xpr,Kimura:2024osv}. See \cite{Nekrasov:2015wsu,Nekrasov:2016ydq,Kimura:2015rgi} for original references (see also \cite{Kimura:2019hnw,Kimura:2022zsm}).

In section~\ref{sec:vertexop-def}, we introduce the vertex operators associated with the D-branes. The free field realizations of the DT and PT contour integral formulas are given in section~\ref{sec:freefield-contourintegral}. The existence of the free field realizations lead us to the concept of the PT3 $qq$-characters. We then give an algebraic derivation of them using the screening charge and the collision limit in section~\ref{sec:PT3qq-screeningcharge}. Finally, the relation with shifted quantum toroidal $\mathfrak{gl}_{1}$ is discussed in section~\ref{sec:QTgl1}.

\subsection{Vertex operators}\label{sec:vertexop-def}
Let us review the vertex operators introduced in \cite{Kimura:2023bxy}. We define
\bea
\bfP_{a}=1-q_{a},\quad \bfP_{abc}=\bfP_{a}\bfP_{b}\bfP_{c},\quad  \bfP_{\four}=\prod_{a\in\four}\bfP_{a},\quad \bfP_{a}^{[n]}=\bfP_{a}|_{q_{a}\rightarrow q_{a}^{n}}
\eea
and the vertex operators are given as follows.
\begin{definition}[\cite{Kimura:2023bxy}]\label{def:vertex-op}
    Let $\{\mathsf{a}_{n}\}_{n\in\mathbb{Z}_{\geq 0}}$ be free bosons obeying the commutation relation
    \bea
   \relax [\mathsf{a}_{n},\mathsf{a}_{m}]=-\frac{1}{n}\mathbf{P}_{\four}^{[n]}\delta_{n+m,0}.
    \eea
    The D0, D2, D6 vertex operators are $\mathsf{A}(x),\mathsf{S}_{a}(x),\mathsf{W}_{\bar{a}}(x)$: 
    \bea
    \mathsf{A}(x)&=\mathsf{a}_{0}(x):\exp\left(\sum_{n\neq 0}\mathsf{a}_{n}x^{-n}\right):,\quad  \mathsf{S}_{a}(x)=\mathsf{s}_{a,0}(x):\exp\left(\sum_{n\neq 0}\mathsf{s}_{a,n}x^{-n}\right):,\\
    \mathsf{W}_{\bar{a}}(x)&=\mathsf{w}_{\bar{a},0}(x):\exp\left(\sum_{n\neq 0}\mathsf{w}_{\bar{a},n}x^{-n}\right):, \quad \mathsf{a}_{n}=\begin{dcases}
       \bfP^{[-n]}_{a}\mathsf{s}_{a,n},\\
       \bfP^{[-n]}_{\bar{a}}\mathsf{w}_{\bar{a},n}.
   \end{dcases}
    \eea
for $a\in\four$, where $\mathsf{a}_{0}(x),\mathsf{s}_{a,0}(x), \mathsf{w}_{\bar{a},0}(x)$ are some zero-modes. We denote the normal ordering symbol by $:\cdot:$.
\end{definition}

We will use the following zero-modes for explicit computations:
\bea\label{eq:zero-modes}
\mathsf{a}_{0}(x)=e^{\mathsf{t}_{0}},\quad \mathsf{s}_{a,0}(x)=x^{-(\log q_{a})^{-1}\mathsf{t}_{0}}e^{-(\log q_{a})^{-1}\tilde{\partial}_{\mathsf{t}}},\quad \mathsf{w}_{\bar{a},0}(x)=x^{-\log q_{a}\tilde{\mathsf{t}}_{0}}e^{-\log q_{a}\log (-q_{a})\tilde{\mathsf{t}}_{0}}e^{-\log q_{a}\partial_{\mathsf{t}}}
\eea
where we introduced two independent sets of zero modes
\bea
\relax [\partial_{\mathsf{t}}, \mathsf{t}_{0}]=[\widetilde{\partial}_{\mathsf{t}}, \tilde{\mathsf{t}}_{0}]=1
\eea
with other commutation relations to be trivial.

\begin{definition}[\cite{Kimura:2023bxy}]\label{def:structurefunct-QA}
  The structure functions are 
    \bea\label{app-eq:struct_funct}
\mathscr{V}_{a}(x)&=\frac{1-q_{a}x}{1-x},\quad 
g_{\bar{a}}(x)=\frac{\prod_{i\neq a}(1-q_{i}x)(1-q_{\bar{a}}x)}{(1-x)\prod_{i\neq a}(1-q_{a}^{-1}q_{i}^{-1}x)},\\
\mathscr{S}_{ab}(x)&=\frac{(1-q_{a}x)(1-q_{b})}{(1-x)(1-q_{a}q_{b}x)},\quad \mathcal{A}_{\mathbb{C}^{4}}(x)=\frac{\prod_{i=1}^{4}(1-q_{i}x)\prod^{4}_{i=1}(1-q_{i}^{-1}x)}{(1-x)^{2}\prod_{i\neq j}(1-q_{i}q_{j}x)}.
  \eea
\end{definition}

\begin{proposition}[\cite{Kimura:2023bxy}]\label{prop:OPE-formula}
Under the explicit zero-modes \eqref{eq:zero-modes}, some of the operator product formulas are
\bea\label{eq:contractions}
\mathsf{A}(x)\mathsf{S}_{a}(x')=g_{\bar{a}}\left(x'/x\right)^{-1}: \mathsf{A}(x)\mathsf{S}_{a}(x'):,&\quad \mathsf{S}_{a}(x')\mathsf{A}(x)=g_{\bar{a}}(q_{a}x/x'):\mathsf{A}(x)\mathsf{S}_{a}(x'):,\\
    \mathsf{A}(x)\mathsf{W}_{\bar{a}}(x')=\mathscr{V}_{a}\left(x'/x\right)^{-1}:\mathsf{A}(x)\mathsf{W}_{\bar{a}}(x'):,&\quad 
    \mathsf{W}_{\bar{a}}(x')\mathsf{A}(x)=q_{a}^{-1}\mathscr{V}_{a}(q_{a}^{-1}x/x'):\mathsf{W}_{\bar{a}}(x')\mathsf{A}(x):,\\
    \mathsf{S}_{a}(x)\mathsf{S}_{b}(x')=\mathscr{S}_{\overline{ab}}(q_{a}x'/x):\mathsf{S}_{a}(x)\mathsf{S}_{b}(x'):,&\quad 
        \mathsf{S}_{b}(x')\mathsf{S}_{a}(x)=\mathscr{S}_{\overline{ab}}(q_{b}x/x'):\mathsf{S}_{a}(x)\mathsf{S}_{b}(x'):,\\
\mathsf{W}_{\bar{a}}(x)\mathsf{S}_{a}(x')=\frac{x'}{1-q_{a}^{-1}x'/x}:\mathsf{W}_{\bar{a}}(x)\mathsf{S}_{a}(x'):,&\quad 
\mathsf{S}_{a}(x')\mathsf{W}_{\bar{a}}(x)=\frac{-q_{a}x}{1-q_{a}x/x'}:\mathsf{W}_{\bar{a}}(x)\mathsf{S}_{a}(x'):.
\eea
\end{proposition}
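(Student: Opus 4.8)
The plan is to compute every operator product in Proposition~\ref{prop:OPE-formula} by the standard free-field contraction method, reducing each to two pieces: an oscillator contraction that resums into one of the structure functions of Definition~\ref{def:structurefunct-QA}, and a zero-mode cocycle factor coming from the explicit modes \eqref{eq:zero-modes}. For two operators $V_1(x)=z_1(x):\exp(\sum_{n\neq 0}c^{(1)}_n x^{-n}):$ and $V_2(x')=z_2(x'):\exp(\sum_{n\neq 0}c^{(2)}_n x'^{-n}):$ built from the bosons of Definition~\ref{def:vertex-op}, one has
\[
V_1(x)V_2(x') = \Big(z_1(x)\,z_2(x')\Big)_{\mathrm{reord}}\,\exp\!\Big(\sum_{n>0}[c^{(1)}_n,c^{(2)}_{-n}]\,(x'/x)^{n}\Big):V_1(x)V_2(x'):,
\]
so the entire content of the proposition is the evaluation of the mode commutators together with the reordering of the zero modes.

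First I would reduce all mode commutators to the fundamental one $[\mathsf{a}_n,\mathsf{a}_m]=-\tfrac1n\mathbf{P}_{\four}^{[n]}\delta_{n+m,0}$ using the defining substitutions $\mathsf{a}_n=\mathbf{P}^{[-n]}_a\mathsf{s}_{a,n}=\mathbf{P}^{[-n]}_{\bar a}\mathsf{w}_{\bar a,n}$. For instance $[\mathsf{a}_n,\mathsf{s}_{a,-n}]=-\tfrac1n\,\mathbf{P}_{\four}^{[n]}/\mathbf{P}_a^{[n]}=-\tfrac1n\prod_{b\neq a}(1-q_b^n)$, with analogous expressions for the $\mathsf{S}_a$--$\mathsf{S}_b$, $\mathsf{A}$--$\mathsf{W}_{\bar a}$, and $\mathsf{W}_{\bar a}$--$\mathsf{S}_a$ pairs. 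Using $\log(1-y)=-\sum_{n>0}y^n/n$, each contraction factor then exponentiates into a ratio of $(1-q_{\mathcal S}\,y)$ terms; imposing the Calabi--Yau constraint $q_1q_2q_3q_4=1$ (so that $q_{\bar a}=q_a^{-1}$ and $q_a^{-1}q_i^{-1}=q_jq_k$) I would recognise these ratios as exactly $g_{\bar a}$, $\mathscr{V}_a$, and $\mathscr{S}_{\overline{ab}}$. For the $\mathsf{W}_{\bar a}$--$\mathsf{S}_a$ pair the contraction collapses to the single factor $(1-q_a^{-1}x'/x)^{-1}$, matching the claimed single pole.

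Second I would treat the zero modes $\mathsf{a}_0,\mathsf{s}_{a,0},\mathsf{w}_{\bar a,0}$ on their own. Using $[\partial_{\mathsf t},\mathsf t_0]=[\tilde\partial_{\mathsf t},\tilde{\mathsf t}_0]=1$ and the explicit forms in \eqref{eq:zero-modes}, the shift operators $e^{-\log q_a\,\partial_{\mathsf t}}$ and $e^{-(\log q_a)^{-1}\tilde\partial_{\mathsf t}}$ act by translating $\mathsf t_0$ and $\tilde{\mathsf t}_0$, which generates the explicit prefactors ($q_a^{-1}$, $x'$, $-q_a x$) and the $q_a$-valued shifts of the spectral argument that distinguish the two orderings of each pair. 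This is precisely the mechanism producing the asymmetry between, for example, $\mathsf{A}(x)\mathsf{W}_{\bar a}(x')$ and $\mathsf{W}_{\bar a}(x')\mathsf{A}(x)$.

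The main obstacle I anticipate is the careful bookkeeping required to reconcile the two orderings of each pair. The oscillator contractions for $V_1(x)V_2(x')$ and $V_2(x')V_1(x)$ are related simply by $x\leftrightarrow x'$, i.e. $y\to 1/y$, whereas the target expressions differ by constants and by argument shifts, such as $g_{\bar a}(x'/x)^{-1}$ versus $g_{\bar a}(q_a x/x')$, or $\mathscr{V}_a(x'/x)^{-1}$ versus $q_a^{-1}\mathscr{V}_a(q_a^{-1}x/x')$. Matching these requires combining the reflection and shift functional identities obeyed by the structure functions with the exact cocycle factors coming from the zero modes; keeping the powers of $q_a$ consistent between the oscillator and zero-mode sectors is the delicate point, while the resummation in the first step is routine once the commutators are in hand.
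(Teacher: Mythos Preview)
Your proposal is correct and follows exactly the standard free-field computation one would carry out; indeed the paper itself does not give a proof of this proposition but simply cites it from \cite{Kimura:2023bxy}, so your outlined contraction-plus-zero-mode argument is precisely what underlies the result. The only point to watch is the one you already flagged: the zero-mode cocycles coming from \eqref{eq:zero-modes} are the sole source of the $q_a$-shifts and the stray prefactors ($x'$, $-q_a x$, $q_a^{-1}$), so once the oscillator sums are matched to $g_{\bar a}$, $\mathscr{V}_a$, $\mathscr{S}_{\overline{ab}}$ via the Calabi--Yau constraint, the rest is purely Baker--Campbell--Hausdorff bookkeeping in the $(\mathsf{t}_0,\partial_{\mathsf t},\tilde{\mathsf t}_0,\tilde\partial_{\mathsf t})$ sector.
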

After using the definitions of the vertex operators in Def.~\ref{def:vertex-op} and the zero-modes in \eqref{eq:zero-modes}, we also have the relations
\bea\label{eq:vertexoprelation}
\mathsf{A}(x)&={:\frac{\mathsf{S}_{a}(x)}{\mathsf{S}_{a}(q_{a}x)}:}=\mathsf{a}_{0}(x){:\frac{\mathsf{W}_{abc}(x)\mathsf{W}_{abc}(q_{ab}x)\mathsf{W}_{abc}(q_{ac}x)\mathsf{W}_{abc}(q_{bc}x)}{\mathsf{W}_{abc}(q_{a}x)\mathsf{W}_{abc}(q_{b}x)\mathsf{W}_{abc}(q_{c}x)\mathsf{W}_{abc}(q_{abc}x)}:},\\
\mathsf{S}_{a}(x)&=\mathsf{s}_{a,0}(x){:\frac{\mathsf{W}_{abc}(x)\mathsf{W}_{abc}(q_{bc}x)}{\mathsf{W}_{abc}(q_{b}x)\mathsf{W}_{abc}(q_{c}x)}:}.
\eea

\subsection{Free field realizations and PT3 \texorpdfstring{$qq$}{qq}-characters}\label{sec:freefield-contourintegral}
The contour integral formula associated with the DT and PT counting in Thm.~\ref{thm:DTvertex-JKresidue} and Def.~\ref{def:PTvertex-JKresidue} have free field realizations. To relate the integrand with the vertex operators, let us rewrite it using the structure functions in Def.~\ref{def:structurefunct-QA}.
\paragraph{Structure functions and contour integral formulas}
Let us first rewrite the contour integral formulas in a more convenient way using the structure functions:
\bea
&\mathcal{Z}^{\D6_{a}\tbar\D0}(\fra,\phi_{I})=q_{4}^{-1/2}\mathscr{V}_{4}\left(\frac{v}{x_{I}}\right),\quad\mathcal{Z}^{\overline{\D6}_{a}\tbar\D0}(\fra,\phi_{I})=q_{4}^{1/2}\mathscr{V}_{4}\left(\frac{v}{x_{I}}\right)^{-1},\quad \mathcal{Z}^{\overline{\D2}_{a}\tbar\D0}(\fra,\phi_{I})=g_{\bar{a}}\left(\frac{v}{x_{I}}\right)^{-1},\\
&\mathcal{Z}^{\D0\tbar\D0}(\phi_{I},\phi_{J})=\mathcal{A}_{\mathbb{C}^{4}}\left(\frac{x_{I}}{x_{J}}\right)^{-1},\quad  \mathcal{G}\coloneqq\frac{\sh(-\epsilon_{14,24,34})}{\sh(-\epsilon_{1,2,3,4})}=\frac{(1-q_{12})(1-q_{13})(1-q_{23})}{(1-q_{1})(1-q_{2})(1-q_{3})(1-q_{123})},
\eea
where $x_{I}=e^{\phi_{I}},\,\,v=e^{\fra}$. Redefining the topological terms as $\fq\rightarrow q_{4}^{1/2}\fq$ and including the $q_{4}^{1/2}$ factor to the contour integrand, the $U(1)$ D6 partition function and the DT partition functions becomes
\bea
\mathcal{Z}_{\bar{4}}^{\D6\tbar\QA}[k]&=\frac{\mathcal{G}^{k}}{k!}\oint_{\eta_{0}}\prod_{I=1}^{k}\frac{dx_{I}}{2\pi i x_{I}} \prod_{I=1}^{k}\mathscr{V}_{4}\left(\frac{v}{x_{I}}\right)\prod_{I<J}\mathcal{A}_{\mathbb{C}^{4}}\left(\frac{x_{I}}{x_{J}}\right)^{-1},\\
\mathcal{Z}^{\DT\tbar\QA}_{\bar{4};\lambda\mu\nu}[k]&=\frac{\mathcal{G}^{k}}{k!}\oint _{\eta_{0}}\prod_{I=1}^{k}\frac{dx_{I}}{2\pi i x_{I}}\prod_{I=1}^{k}\mathcal{Z}^{\D6_{\bar{4}}\tbar\D2\tbar\D0}_{\DT\tbar\QA;\lambda\mu\nu}(v,x_{I})\prod_{I<J}\mathcal{A}_{\mathbb{C}^{4}}\left(\frac{x_{I}}{x_{J}}\right)^{-1},
\eea
where
\bea
\mathcal{Z}^{\D6_{\bar{4}}\tbar\D2\tbar\D0}_{\DT\tbar\QA;\lambda\mu\nu}(v,x_{I})=\prod_{\scube \in s(\vec{Y})}\mathscr{V}_{4}\left(\frac{\chi_{\bar{4},v}(\cube)}{x_{I}}\right)\prod_{\scube \in p_{1}(\vec{Y})}\mathscr{V}_{4}\left(\frac{\chi_{\bar{4},v}(\cube)}{x_{I}}\right)^{-1}\prod_{\scube \in p_{2}(\vec{Y})}\mathscr{V}_{4}\left(\frac{\chi_{\bar{4},v}(\cube)}{x_{I}}\right)^{-2}.
\eea

Similarly, after redefining of the topological term $\fq\rightarrow q_{4}^{1/2} \fq $, the PT partition function becomes 
\bea
\mathcal{Z}^{\PT\tbar\QA}_{\bar{4};\lambda\mu\nu}[k]&=\frac{\mathcal{G}^{k}}{k!}\oint _{\tilde{\eta}_{0}}\prod_{I=1}^{k}\frac{dx_{I}}{2\pi i x_{I}}\prod_{I=1}^{k}\mathcal{Z}^{\D6_{\bar{4}}\tbar\D2\tbar\D0}_{\DT\tbar\QA;\lambda\mu\nu}(v,x_{I})\prod_{I<J}\mathcal{A}_{\mathbb{C}^{4}}\left(\frac{x_{I}}{x_{J}}\right)^{-1}.
\eea
For the PT partition function defined by the conjugate map in section~\ref{sec:PT3counting-conjugate}, we redefine the topological term as $\fq\rightarrow q_{4}^{-1/2}\fq$ and obtain
\bea
\widetilde{\mathcal{Z}}^{\PT\tbar\QA}_{\bar{4};\lambda\mu\nu}[k]&=\frac{\mathcal{G}^{k}}{k!}\oint _{\eta_{0}}\prod_{I=1}^{k}\frac{dx_{I}}{2\pi i x_{I}}\prod_{I=1}^{k}\mathcal{Z}^{\D6_{\bar{4}}\tbar\D2\tbar\D0}_{\PT\tbar\QA;\lambda\mu\nu}(v,x_{I})\prod_{I<J}\mathcal{A}_{\mathbb{C}^{4}}\left(\frac{x_{I}}{x_{J}}\right)^{-1}
\eea
where
\bea
\mathcal{Z}^{\D6_{\bar{4}}\tbar\D2\tbar\D0}_{\PT\tbar\QA;\lambda\mu\nu}(v,x_{I})=\prod_{\scube \in s(\vec{Y})}\mathscr{V}_{4}\left(\frac{q_{4}^{-1}\chi_{\bar{4},v^{-1}}(\cube)^{-1}}{x_{I}}\right)^{-1}\prod_{\scube \in p_{1}(\vec{Y})}\mathscr{V}_{4}\left(\frac{q_{4}^{-1}\chi_{\bar{4},v^{-1}}(\cube)^{-1}}{x_{I}}\right)\prod_{\scube \in p_{2}(\vec{Y})}\mathscr{V}_{4}\left(\frac{q_{4}^{-1}\chi_{\bar{4},v^{-1}}(\cube)^{-1}}{x_{I}}\right)^{2}.
\eea
Due to the modification of the topological terms, after evaluation of the contour integral, we have the relation\footnote{In the JK-residue computation, we used the symmetric notation $\sh(\phi)=e^{\phi/2}-e^{-\phi/2}$, while for the quantum algebraic computation, we are using the index $\mathbb{I}[x]=1-x^{-1}$ to compute. The extra zero-modes come from $\mathbb{I}[x^{-1}]=-x\mathbb{I}[x]$. This difference is known as the Nekrasov--Okounkov twist by the virtual canonical bundle~\cite{Nekrasov:2014nea}.}
\bea
\mathcal{Z}^{\PT\tbar\QA}_{\bar{4};\lambda\mu\nu}[k]=q_{4}^{k}\,\widetilde{\mathcal{Z}}^{\PT\tbar\QA}_{\bar{4};\lambda\mu\nu}[k].
\eea

After evaluating the poles (see Thm.~\ref{thm:tetraJKpoles}, \ref{thm:DTvertex-JKresidue}, \ref{thm:PTvertex-expansion}), we denote the partition functions in terms of plane partitions, DT configurations, PT configurations, GR configurations as
\bea
\sum_{\pi}\fq^{|\pi|}
\begin{dcases}
\mathcal{Z}^{\D6\tbar\QA}_{\bar{4}}[\pi],\quad \pi\in\mathcal{PP},\\
\mathcal{Z}^{\DT\tbar\QA}_{\bar{4};\lambda\mu\nu}[\pi],\quad \pi\in \mathcal{DT}_{\lambda\mu\nu},\\
\mathcal{Z}^{\PT\tbar\QA}_{\bar{4};\lambda\mu\nu}[\pi],\quad \pi\in \mathcal{PT}_{\lambda\mu\nu},\,\mathcal{GR}_{\lambda\mu\nu},\\
\widetilde{\mathcal{Z}}^{\PT\tbar\QA}_{\bar{4};\lambda\mu\nu}[\pi],\quad \pi \in \widetilde{\mathcal{PT}}_{\lambda\mu\nu},\,\widetilde{\mathcal{GR}}_{\lambda\mu\nu}.
\end{dcases}
\eea
Note that when expanding in the GR configurations, we are pairing the light and heavy boxes as discussed in Thm.~\ref{thm:GRvertex-expansion}.

\paragraph{Free field realizations}
Using the vertex operators defined in the previous section, the contour integral formulas can be written in a concise way. The partition functions become
\bea\label{eq:freefield-D6DTPT}
\mathcal{Z}^{\D6\tbar\QA}_{\bar{4}}[k]&=\frac{\mathcal{G}^{k}}{k!}\oint_{\eta_{0}}\prod_{I=1}^{k}\frac{dx_{I}}{2\pi i x_{I}}\left\langle \prod_{I=1}^{k}\mathsf{A}(x_{I})^{-1}\mathsf{W}_{\bar{4}}(v) \right\rangle,\\
\mathcal{Z}^{\DT\tbar\QA}_{\bar{4};\lambda\mu\nu}[k]&=\frac{\mathcal{G}^{k}}{k!}\oint_{\eta_{0}}\prod_{I=1}^{k}\frac{dx_{I}}{2\pi i x_{I}}\left\langle \prod_{I=1}^{k}\mathsf{A}(x_{I})^{-1}\mathsf{H}_{\bar{4};\lambda\mu\nu}(v)\right\rangle,\\
\mathcal{Z}^{\PT\tbar\QA}_{\bar{4};\lambda\mu\nu}[k]&=\frac{\mathcal{G}^{k}}{k!}\oint_{\tilde{\eta}_{0}}\prod_{I=1}^{k}\frac{dx_{I}}{2\pi i x_{I}}\left\langle \prod_{I=1}^{k}\mathsf{A}(x_{I})^{-1}\mathsf{H}_{\bar{4};\lambda\mu\nu}(v)\right\rangle,\\
\widetilde{\mathcal{Z}}^{\PT\tbar\QA}_{\bar{4};\lambda\mu\nu}[k]&=\frac{\mathcal{G}^{k}}{k!}\oint_{\eta_{0}}\prod_{I=1}^{k}\frac{dx_{I}}{2\pi i x_{I}}\left\langle \prod_{I=1}^{k}\mathsf{A}(x_{I})^{-1}\widetilde{\mathsf{H}}_{\bar{4};\lambda\mu\nu}(v)\right\rangle.
\eea
where we defined
\bea\label{eq:highestweight-def}
\mathsf{H}_{\bar{4};\lambda\mu\nu}(v)&\coloneqq {:\frac{\prod\limits_{\scube \in s(\vec{Y})}\mathsf{W}_{\bar{4}}(\chi_{\bar{4},v}(\cube))}{\prod\limits_{\scube \in p_{1}(\vec{Y})}\mathsf{W}_{\bar{4}}(\chi_{\bar{4},v}(\cube))\prod\limits_{\scube \in p_{2}(\vec{Y})}\mathsf{W}_{\bar{4}}(\chi_{\bar{4},v}(\cube))^{2}} :},\\
\widetilde{\mathsf{H}}_{\bar{4};\lambda\mu\nu}(v)&\coloneqq {:\frac{\prod\limits_{\scube \in p_{1}(\vec{Y})}\mathsf{W}_{\bar{4}}(q_{4}^{-1}\chi_{\bar{4},v^{-1}}(\cube)^{-1})\prod\limits_{\scube \in p_{2}(\vec{Y})}\mathsf{W}_{\bar{4}}(q_{4}^{-1}\chi_{\bar{4},v^{-1}}(\cube)^{-1})^{2}}{\prod\limits_{\scube \in s(\vec{Y})}\mathsf{W}_{\bar{4}}(q_{4}^{-1}\chi_{\bar{4},v^{-1}}(\cube)^{-1})} :}.
\eea
The vertex operators $\mathsf{W}_{\bar{4}}(v),\mathsf{H}_{\bar{4};\lambda\mu\nu}(v),\widetilde{\mathsf{H}}_{\bar{4};\lambda\mu\nu}(v)$ correspond to the framing node of the quiver quantum mechanics and it is called the \textbf{highest weight}.

Using the relation \eqref{eq:vertexoprelation} and the fact that $\mathsf{s}_{a,0}(x)$ and $\mathsf{a}_{0}(x)$ commute with each other, instead of using the $\mathsf{W}_{\bar{4}}(x)$ operators, we can also write the contour integral formulas using the $\mathsf{S}_{a}(x)$ operators. For example, for the one-leg case, we can instead use the following as the highest weight:
\bea
{:\frac{\mathsf{W}_{\bar{4}}(v)}{\prod\limits_{(i,j)\in \nu}\mathsf{S}_{3}(vq_{1}^{i-1}q_{2}^{j-1})}:}.
\eea
This is similar to the infinite product expression discussed in section~\ref{sec:infinite-product-reg}. The plane partition with no boundary conditions correspond to $\mathsf{W}_{\bar{4}}(v)$ and adding boxes in the boundaries correspond to taking the infinite product 
\bea
{:\mathsf{W}_{\bar{4}}(v)\prod_{i=1}^{\infty}\mathsf{A}(q_{3}^{i-1}v)^{-1}:}=
{:\mathsf{W}_{\bar{4}}(v)\prod_{i=1}^{\infty}\frac{\mathsf{S}_{3}(q_{3}^{i}v)}{\mathsf{S}_{3}(vq_{3}^{i-1})}:}={:\frac{\mathsf{W}_{\bar{4}}(v)}{\mathsf{S}_{3}(v)}:}.
\eea
The factor $:\mathsf{S}_{a}(v)^{-1}:$ corresponds to the factor $\mathcal{Z}^{\overline{\D2}_{a}\tbar\D0}(\fra,\phi_I)$.

\paragraph{Contour integral expression of $qq$-characters}
The free field realizations given above imply that the partition functions have operator lift ups. Such operators are called the \textbf{$qq$-characters} and they are essential objects in discussing the BPS/CFT correspondence.

Let us be more concrete and focus on the $U(1)$ D6 partition function. We can define an operator
\bea
\mathsf{T}_{\bar{4}}(v)&=\sum_{k=0}^{\infty}\fq^{k}\frac{\mathcal{G}^{k}}{k!}\oint_{\eta_{0}}\prod_{I=1}^{k}\frac{dx_{I}}{2\pi i x_{I}} \prod_{I=1}^{k}\mathsf{A}(x_{I})^{-1}\mathsf{W}_{\bar{4}}(v) \\
&=\sum_{k=0}^{\infty}\fq^{k}\frac{\mathcal{G}^{k}}{k!}\oint_{\eta_{0}}\prod_{I=1}^{k}\frac{dx_{I}}{2\pi i x_{I}} \prod_{I=1}^{k}\mathscr{V}_{4}\left(\frac{v}{x_{I}}\right)\prod_{I<J}\mathcal{A}_{\mathbb{C}^{4}}\left(\frac{x_{I}}{x_{J}}\right)^{-1}:\prod_{I=1}^{k}\mathsf{A}(x_{I})^{-1}\mathsf{W}_{\bar{4}}(v):\\
&=\sum_{\pi\in\mathcal{PP}}\fq^{|\pi|}\mathcal{Z}^{\D6\tbar\QA}_{\bar{4}}[\pi] :\mathsf{W}_{\bar{4}}(v)\prod_{\scube\in\pi}\mathsf{A}^{-1}(\chi_{\bar{4},x}(\cube)):
\eea
where in the last line, we explicitly evaluated the residues at the poles labeled by plane partitions. Note that the vacuum expectation value of this operator $\langle \mathsf{T}_{\bar{4}}(v) \rangle$ simply gives the partition function. This is the D6 $qq$-character originally introduced in \cite{Kimura:2023bxy} to study the BPS/CFT correspondence of the tetrahedron instantons.

Similarly, for the DT partition function, one can introduce the DT3 $qq$-characters \cite{Kimura:2024osv}:
\bea
\mathsf{DT}_{\bar{4};\lambda\mu\nu}(v)&=\sum_{\pi\in\mathcal{DT}_{\lambda\mu\nu}}\fq^{|\pi|}\mathcal{Z}^{\DT\tbar\QA}_{\bar{4};\lambda\mu\nu}[\pi]\Lambda_{\bar{4},\pi}^{\lambda\mu\nu}(v),\quad \Lambda_{\bar{4},\pi}^{\lambda\mu\nu}(v)=\mathsf{H}_{\bar{4};\lambda\mu\nu}(v) \prod_{\scube\in\pi}\mathsf{A}^{-1}(\chi_{\bar{4},v}(\cube)):.
\eea

One would then ask what would happen for the PT case. We may define the following two types of PT $qq$-characters in the contour integral form:
\bea
\mathsf{PT}_{\bar{4};\lambda\mu\nu}(v)&=\sum_{k=0}^{\infty}\frac{\fq^{k}\mathcal{G}^{k}}{k!}\oint_{\tilde{\eta}_{0}}\prod_{I=1}^{k}\frac{dx_{I}}{2\pi i x_{I}}\prod_{I=1}^{k}\mathcal{Z}^{\D6_{\bar{4}}\tbar\D2\tbar\D0}_{\DT\tbar\QA;\lambda\mu\nu}(v,x_{I})\prod_{I<J}\mathcal{A}_{\mathbb{C}^{4}}\left(\frac{x_{I}}{x_{J}}\right)^{-1}:\mathsf{H}_{\bar{4};\lambda\mu\nu}(v)\prod_{I=1}^{k}\mathsf{A}^{-1}(x_{I}):,\\
\widetilde{\mathsf{PT}}_{\bar{4};\lambda\mu\nu}(v)&=\sum_{k=0}^{\infty}\frac{\fq^{k}\mathcal{G}^{k}}{k!}\oint_{\eta_{0}}\prod_{I=1}^{k}\frac{dx_{I}}{2\pi i x_{I}}\prod_{I=1}^{k}\mathcal{Z}^{\D6_{\bar{4}}\tbar\D2\tbar\D0}_{\PT\tbar\QA;\lambda\mu\nu}(v,x_{I})\prod_{I<J}\mathcal{A}_{\mathbb{C}^{4}}\left(\frac{x_{I}}{x_{J}}\right)^{-1}:\widetilde{\mathsf{H}}_{\bar{4};\lambda\mu\nu}(v)\prod_{I=1}^{k}\mathsf{A}^{-1}(x_{I}):.
\eea
The difference with the other two types of $qq$-characters is that second order poles appear in the integrand when the three legs are all nontrivial (see section~\ref{sec:PTthreelegs} and Appendix~\ref{app:sec-PT3vertex-examples}). Due to this property, derivative of the operator part appears, and the expression is not simply expanded by the $\pi\in\mathcal{PT}_{\lambda\mu\nu},\widetilde{\mathcal{PT}}_{\lambda\mu\nu}$ configurations, but rather extra terms appear.

\paragraph{Two-legs PT3 $qq$-characters}
Let us first consider the case when one of the three legs is trivial. For this case, since the poles are always single order (see Prop.~\ref{prop:PTJK-pole}), after evaluating the residue, the PT3 $qq$-characters have the following expansions
\bea
\mathsf{PT}_{\bar{4};\lambda\mu\nu}(v)&=\sum_{\pi\in\mathcal{PT}_{\lambda\mu\nu}}\fq^{|\pi|}\mathcal{Z}^{\PT\tbar\QA}_{\bar{4};\lambda\mu\nu}[\pi]\mathsf{V}^{\lambda\mu\nu}_{\bar{4},\pi}(v),\quad \mathsf{V}^{\lambda\mu\nu}_{\bar{4},\pi}(v)=:\mathsf{H}_{\bar{4};\lambda\mu\nu}(v)\prod_{\scube\in\pi}\mathsf{A}^{-1}(\chi_{\bar{4},v}(\cube)):,\\
\widetilde{\mathsf{PT}}_{\bar{4};\lambda\mu\nu}(v)&=\sum_{\pi\in\widetilde{\mathcal{PT}}_{\lambda\mu\nu}}\fq^{|\pi|}\widetilde{\mathcal{Z}}^{\PT\tbar\QA}_{\bar{4};\lambda\mu\nu}[\pi]\widetilde{\mathsf{V}}^{\lambda\mu\nu}_{\bar{4},\pi}(v),\quad \mathsf{V}^{\lambda\mu\nu}_{\bar{4},\pi}(v)=:\widetilde{\mathsf{H}}_{\bar{4};\lambda\mu\nu}(v)\prod_{\scube\in\pi}\mathsf{A}^{-1}(\chi_{\bar{4},v}(\cube)):.
\eea

\paragraph{Three-legs PT3 $qq$-character}
When all of the three-legs are nontrivial, the expansion of the PT $qq$-character becomes much more complicated because of the existence of derivative operators acting on the vertex operator part.

Let us see this explicitly for the case $\lambda=\mu=\nu=\Bbox$ (see Fig.~\ref{fig:PT3leg-onebox}) up to level three. At level one, the integrand is
\bea
\mathcal{G}\times x_{1}^{-1}\mathcal{Z}^{\D6_{\bar{4}}\tbar\D2\tbar\D0}_{\DT\tbar\QA;\,\Bbox\,\Bbox\,\Bbox}(v,x_{1}):\mathsf{H}_{\bar{4};\,\Bbox\,\Bbox\,\Bbox}(v)\mathsf{A}^{-1}(x_{1}):.
\eea
Note that we have a second-order pole at $x_{1}=v$. The JK-residue is then evaluated as
\bea\label{eq:PT3qq-level1-JK}
&-\mathcal{G}\times \underset{x_{1}=v}{\Res}x_{1}^{-1}\mathcal{Z}^{\D6_{\bar{4}}\tbar\D2\tbar\D0}_{\DT\tbar\QA;\,\Bbox\,\Bbox\,\Bbox}(v,x_{1}):\mathsf{H}_{\bar{4};\,\Bbox\,\Bbox\,\Bbox}(v)\mathsf{A}^{-1}(x_{1}):\\
=&-\mathcal{G}\times \lim_{x_{1}\rightarrow v}\frac{\partial}{\partial x_{1}}\left((x_{1}-v)^{2}x_{1}^{-1}\mathcal{Z}^{\D6_{\bar{4}}\tbar\D2\tbar\D0}_{\DT\tbar\QA;\,\Bbox\,\Bbox\,\Bbox}(v,x_{1}):\mathsf{H}_{\bar{4};\,\Bbox\,\Bbox\,\Bbox}(v)\mathsf{A}^{-1}(x_{1}):\right)\\
=&-\mathcal{G}\lim_{x_{1}\rightarrow v}\frac{\partial }{\partial x_{1}}\left((x_{1}-v)^{2}x_{1}^{-1}\mathcal{Z}^{\D6_{\bar{4}}\tbar\D2\tbar\D0}_{\DT\tbar\QA;\,\Bbox\,\Bbox\,\Bbox}(v,x_{1})\right):\mathsf{H}_{\bar{4};\,\Bbox\,\Bbox\,\Bbox}(v)\mathsf{A}^{-1}(v):\\
&-\mathcal{G}\times \left.\left(\left(1-\frac{v}{x_{1}}\right)^{2}\mathcal{Z}^{\D6_{\bar{4}}\tbar\D2\tbar\D0}_{\DT\tbar\QA;\,\Bbox\,\Bbox\,\Bbox}(v,x_{1})\right)\right|_{x_{1}\rightarrow v}:\mathsf{H}_{\bar{4};\,\Bbox\,\Bbox\,\Bbox}(v)v\partial_{v}\mathsf{A}^{-1}(v):\\
=&\mathcal{Z}_{\bar{4};\,\Bbox\,\Bbox\,\Bbox}^{\PT\tbar\QA}[1]:\mathsf{H}_{\bar{4};\,\Bbox\,\Bbox\,\Bbox}(v)\mathsf{A}^{-1}(v):-(1-q_{4}):\mathsf{H}_{\bar{4};\,\Bbox\,\Bbox\,\Bbox}(v)v\partial_{v}\mathsf{A}^{-1}(v):.
\eea
The existence of the derivative term is the nontrivial part for the PT $qq$-characters.

For the level two, the poles are $(x_{1},x_{2})=(v,v),(v,q_{1,2,3}^{-1}v)$. Let us first consider the pole $(x_{1},x_{2})=(v,v)$. The contour integrand is schematically the same as \eqref{eq:3leg-level2-1} but now the function $f_{0}(\phi_1,\phi_2)$ includes the vertex operator part. Taking the residue at $x_{1}=v$ gives \eqref{eq:3leg-level2-1}, and derivatives of the vertex operators appear because of $\partial_{\phi_1}f_{0}(\phi_1,\phi_2)$. However, after evaluating the residue at $x_{2}=v$, the derivative terms will not appear and the contribution from $x_{1}=x_{2}=v$ is
\bea
\frac{1}{2}\mathcal{Z}_{(0,0)}^{\QA}:\mathsf{H}_{\bar{4};\,\Bbox\,\Bbox\,\Bbox}(v)\mathsf{A}^{-2}(v):
\eea
where we denoted the JK-residue at $(x_{1},x_{2})=(v,v)$ in the quantum algebraic notation as $\mathcal{Z}_{(0,0)}^{\QA}$. Note that the existence of the ultra-heavy box is identified with two vertex operators $\mathsf{A}^{-2}(v)$ at the same position.

The analysis for $(x_{1},x_{2})=(v,q_{1,2,3}^{-1}v)$ is the same and when evaluating the second pole $x_{2}=q_{1,2,3}^{-1}v$ no derivatives appear (see \eqref{eq:3leg-level2-2-def} and \eqref{eq:3leg-level2-2}):
\bea\mathcal{Z}_{(0,-\eps_i)}^{\QA}:\mathsf{H}_{\bar{4};\,\Bbox\,\Bbox\,\Bbox}(v)\mathsf{A}^{-1}(v)\mathsf{A}^{-1}(q_{i}^{-1}v):,\quad i=1,2,3.
\eea
Therefore, for level two, the PT $qq$-character is
\bea\label{eq:PT3qq-level2-JK}
\frac{1}{2}\mathcal{Z}_{(0,0)}^{\QA}:\mathsf{H}_{\bar{4};\,\Bbox\,\Bbox\,\Bbox}(v)\mathsf{A}^{-2}(v):+\sum_{i=1}^{3}\mathcal{Z}_{(0,-\eps_i)}^{\QA}:\mathsf{H}_{\bar{4};\,\Bbox\,\Bbox\,\Bbox}(v)\mathsf{A}^{-1}(v)\mathsf{A}^{-1}(q_{i}^{-1}v):
\eea

For level three, we have $(x_{1},x_{2},x_{3})=(v,v,q_{i}^{-1}v),(v,q_{i}^{-1}v,q_{i}^{-2}v)$ for $i=1,2,3$. After integrating over the variables $(x_{1},x_{2})$, the pole at $x_{3}=q_{i}^{-1}v$ is a simple pole and thus no-derivative terms of vertex operators appear. The PT $qq$-character at level three thus takes the form as
\bea
&\frac{1}{2}\sum_{i=1}^{3}\mathcal{Z}^{\QA}_{(0,0,-\eps_i)}:\mathsf{H}_{\bar{4};\,\Bbox\,\Bbox\,\Bbox}(v)\mathsf{A}^{-2}(v)\mathsf{A}^{-1}(q_{i}^{-1}v):+\sum_{i=1}^{3}\mathcal{Z}^{\QA}_{(0,-\eps_i,-2\eps_i)}:\mathsf{H}_{\bar{4};\,\Bbox\,\Bbox\,\Bbox}(v)\mathsf{A}^{-1}(v)\mathsf{A}^{-1}(q_{i}^{-1}v)\mathsf{A}^{-1}(q_{i}^{-2}v):
\eea

\begin{remark}
For the two-legs case, the vertex operator part includes the information of the poles and has a one-to-one correspondence with the PT/GR configurations. However, for the three-legs case, the general rules when the derivative terms appear are not so clear and we do not know how to identify them with the PT/GR configurations. A naive expectation is that the derivative terms correspond to the unlabelled type III boxes or the existence of the light boxes. In section~\ref{sec:PT3-threelegs-gl1module}, we will see that the Drinfeld currents $K^{\pm}(z)$ do not act diagonally and non-diagonal terms appear. Such non-diagonal terms correspond to the derivatives appearing in the $qq$-character. A detailed description of such non-diagonal properties is left for future work.
\end{remark}

\subsection{PT3 \texorpdfstring{$qq$}{qq}-characters and screening charges}\label{sec:PT3qq-screeningcharge}
One way to characterize the $qq$-characters is to use the commutativity with the screening charge. In this section, we will show the relations between the PT $qq$-characters and the screening charges. Before moving on to the PT $qq$-characters, let us briefly review the derivation of the D6 $qq$-character using the screening charge following \cite{Kimura:2023bxy}. We first introduce some formulas related to the multiplicative delta function.
\begin{definition}
    The multiplicative delta function is defined as
    \bea
    \delta(x)=\sum_{n\in\mathbb{Z}}x^{n}.
    \eea
\end{definition}
\begin{proposition}
    The delta function obeys
    \bea
     \delta(x)=\left[\frac{1}{1-x}\right]_{|x|<1}-\left[\frac{-x^{-1}}{1-x^{-1}}\right]_{|x|>1}.
    \eea
\end{proposition}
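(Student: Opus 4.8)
The statement to prove is the operator identity
\[
\delta(x)=\left[\frac{1}{1-x}\right]_{|x|<1}-\left[\frac{-x^{-1}}{1-x^{-1}}\right]_{|x|>1},
\]
where $\delta(x)=\sum_{n\in\mathbb{Z}}x^{n}$ is the formal multiplicative delta function. The plan is to recognize that the two bracketed terms are nothing but the two different geometric-series expansions of the \emph{same} rational function $1/(1-x)$, performed in complementary annular regions, and that their formal difference reassembles the full bilateral series. This is a purely formal (distributional) identity about Laurent expansions, so the proof reduces to expanding each bracket and recombining.

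First I would expand the first bracket as a power series valid for $|x|<1$:
\[
\left[\frac{1}{1-x}\right]_{|x|<1}=\sum_{n=0}^{\infty}x^{n}.
\]
Next I would expand the second bracket. Writing $\dfrac{-x^{-1}}{1-x^{-1}}=\dfrac{-x^{-1}}{1-x^{-1}}$ and factoring for convergence in the region $|x|>1$ (equivalently $|x^{-1}|<1$), one gets
\[
\left[\frac{-x^{-1}}{1-x^{-1}}\right]_{|x|>1}=-x^{-1}\sum_{m=0}^{\infty}x^{-m}=-\sum_{m=0}^{\infty}x^{-m-1}=-\sum_{n=1}^{\infty}x^{-n},
\]
after relabelling $n=m+1$. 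Subtracting the second expansion from the first then yields
\[
\left[\frac{1}{1-x}\right]_{|x|<1}-\left[\frac{-x^{-1}}{1-x^{-1}}\right]_{|x|>1}=\sum_{n=0}^{\infty}x^{n}+\sum_{n=1}^{\infty}x^{-n}=\sum_{n\in\mathbb{Z}}x^{n}=\delta(x),
\]
which is precisely the claim.

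The key conceptual point, rather than a computational obstacle, is the bookkeeping of the two expansion regions: the same symbol $1/(1-x)$ admits two distinct Laurent expansions depending on whether one works in $|x|<1$ or $|x|>1$, and the identity is really a statement that their formal difference captures the singularity at $x=1$ as a delta function. I would therefore emphasize that the brackets $[\,\cdot\,]_{|x|<1}$ and $[\,\cdot\,]_{|x|>1}$ are expansion prescriptions, not numerical evaluations, so that the manipulations above are identities of formal Laurent series and no convergence issue arises. The only subtlety to state carefully is the sign and index shift in the second term; once that is tracked correctly, the nonnegative powers come entirely from the first bracket and the strictly negative powers entirely from the second, with no overlap and no omission, so their union is exactly $\sum_{n\in\mathbb{Z}}x^{n}$.
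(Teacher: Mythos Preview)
Your proof is correct and is exactly the standard verification: expand each bracket as a geometric series in its indicated region and observe that the nonnegative and strictly negative powers combine to give $\sum_{n\in\mathbb{Z}}x^{n}$. The paper does not supply a proof of this proposition (it is stated as a basic fact with only the remark that the difference of the two expansions yields the delta function), so your argument is precisely what is needed.
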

Namely, the difference between the rational functions of the right hand side evaluated at different analytic regions give the delta function. 

The delta function obeys
\bea\label{eq:delta-reflect}
\delta(x)=\delta(x^{-1})
\eea
and
\bea\label{eq:delta-rational}
f(x)\delta\left(\frac{x}{a}\right)=f(a)\delta\left(\frac{x}{a}\right)
\eea
for a rational function $f(x)$. For later use, we denote the series expansion of a rational function $f(x)$ in $x^{\mp 1}$ as $[f(x)]^{x}_{\pm}$. Sometimes we omit the variable $x$ and shortly write $[\cdots]_{\pm}$.

Let us then define the screening charges.
\begin{definition}
    The screening charges are defined as an infinite sum of the D2-brane vertex operators $\mathsf{S}_{a}(x)$:
    \bea
\mathscr{Q}_{a}(x)=\sum_{k\in\mathbb{Z}}\mathsf{S}_{a}(q_{a}^{k}x),\quad a\in\four.
    \eea
\end{definition}
The D6$_{\bar{4}}$ $qq$-character is a sum of operators generated from $\mathsf{W}_{\bar{4}}(v)$ commuting with the screening charge $\mathscr{Q}_{4}(x)$:
\bea
\mathsf{T}_{\bar{4}}(v)=\mathsf{W}_{\bar{4}}(v)+\cdots,\quad [\mathsf{T}_{\bar{4}}(v),\mathscr{Q}_{4}(x)]=0.
\eea
The operators appearing in the $qq$-character are recursively obtained by the iWeyl reflection
\bea
\mathsf{W}_{\bar{4}}(v)\rightarrow :\mathsf{W}_{\bar{4}}(v)\mathsf{A}^{-1}(v):.
\eea
Physically, this process corresponds to adding instanton corrections. For example, the commutation relation of $\mathsf{W}_{\bar{4}}(v)$ and the screening current $\mathsf{S}_{4}(x)$ is
\bea
\relax[\mathsf{W}_{\bar{4}}(v),\mathsf{S}_{4}(x)]=q_{4}v\delta\left(q_{4}v/x\right):\mathsf{W}_{4}(v)\mathsf{S}_{4}(q_{4}v):.
\eea
After iWeyl reflection, we have
\bea
\relax[:\mathsf{W}_{\bar{4}}(v)\mathsf{A}^{-1}(v):,\mathsf{S}_{4}(x)]=v\delta\left(x/v\right)\frac{\prod_{i=1}^{3}(1-q_{i})}{\prod_{1\leq i<j\leq 3}(1-q_{i}q_{j})}:\mathsf{W}_{\bar{4}}(v)\mathsf{S}_{4}(q_{4}v):+\cdots
\eea
where we used \eqref{eq:vertexoprelation} and only extracted the contribution coming from the pole $x=v$. The combination
\bea
\mathsf{W}_{\bar{4}}(v)-q_{4}\frac{\prod_{1\leq i<j\leq 3}(1-q_{i}q_{j})}{\prod_{i=1}^{3}(1-q_{i})}:\mathsf{W}_{\bar{4}}(v)\mathsf{A}^{-1}(v):
\eea
is then not singular at $x=v,q_{4}v$ when we consider the commutating relation with $\mathsf{S}_{4}(x)$. Note that the coefficient indeed gives the one-instanton contribution. 

Choosing the highest weight to be $\mathsf{H}_{\bar{4};\lambda\mu\nu}(v)$, one can show that the DT $qq$-character $\mathsf{DT}_{\bar{4};\lambda\mu\nu}(v)$ commutes with the screening charge
\bea
\relax [\mathsf{DT}_{\bar{4};\lambda\mu\nu}(v),\mathscr{Q}_{4}(x)]=0.
\eea

The fact that the contour integrand of the PT partition function $\mathcal{Z}^{\PT\tbar\QA}_{\bar{4};\lambda\mu\nu}[k]$ has the same free-field realization but the reference vector is different implies that we need a new screening charge.
\begin{definition}
    A new set of screening charges $\widetilde{\mathscr{Q}}_{a}(x)$ are defined as
    \bea
    \widetilde{\mathscr{Q}}_{a}(x)=\sum_{k\in\mathbb{Z}}\mathsf{S}_{a}(q_{a}^{k}x)^{-1},\quad a\in \four.
    \eea
\end{definition}
In other words, the screening charge $\mathscr{Q}_{a}(x)$ corresponds to taking the reference vector to be $\eta=\eta_{0}$, while the screening charge $\widetilde{\mathscr{Q}}_{a}(x)$ corresponds to taking the reference vector to be $\eta=\tilde{\eta}_{0}$.

Note that the new set of screening charges commute with each other similar to the conventional screening charges.
\begin{proposition}
    When $a\neq b$, the screening charges commute with each other:
    \bea
    \relax [\widetilde{\mathscr{Q}}_{a}(x),\widetilde{\mathscr{Q}}_{b}(x')]=0.
    \eea
\end{proposition}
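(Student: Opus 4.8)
The plan is to reduce the statement $[\widetilde{\mathscr{Q}}_{a}(x),\widetilde{\mathscr{Q}}_{b}(x')]=0$ for $a\neq b$ to a statement about the operator product expansions of the inverse D2 vertex operators $\mathsf{S}_{a}(x)^{-1}$. Since $\widetilde{\mathscr{Q}}_{a}(x)=\sum_{k\in\mathbb{Z}}\mathsf{S}_{a}(q_{a}^{k}x)^{-1}$, the commutator is a double sum $\sum_{k,l\in\mathbb{Z}}[\mathsf{S}_{a}(q_{a}^{k}x)^{-1},\mathsf{S}_{b}(q_{b}^{l}x')^{-1}]$, so it suffices to analyze the single commutator $[\mathsf{S}_{a}(y)^{-1},\mathsf{S}_{b}(y')^{-1}]$ and show that, after summing over the $q_a$- and $q_b$-shifts, the contributions cancel. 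First I would compute the OPE coefficient for $\mathsf{S}_{a}(y)^{-1}\mathsf{S}_{b}(y')^{-1}$ by inverting the contractions in Prop.~\ref{prop:OPE-formula}. From $\mathsf{S}_{a}(x)\mathsf{S}_{b}(x')=\mathscr{S}_{\overline{ab}}(q_{a}x'/x):\mathsf{S}_{a}(x)\mathsf{S}_{b}(x'):$ and its reversed counterpart, the inverse operators satisfy the same relations with the structure function $\mathscr{S}_{\overline{ab}}$ replaced by its reciprocal, since the two-point function of inverse vertex operators is the inverse of the two-point function of the originals.

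The key mechanism is the standard delta-function argument. For each ordering, the ratio of the two OPE coefficients (the one for $\mathsf{S}_{a}(y)^{-1}\mathsf{S}_{b}(y')^{-1}$ versus $\mathsf{S}_{b}(y')^{-1}\mathsf{S}_{a}(y)^{-1}$) is a rational function of $y'/y$, and the commutator is the difference of the two analytic continuations $[\,\cdot\,]_{+}-[\,\cdot\,]_{-}$ of this rational function against the common normal-ordered product. By the proposition on the delta function, such a difference collapses to delta functions supported at the poles of $\mathscr{S}_{\overline{ab}}(\cdot)^{-1}$. Using $f(x)\delta(x/a)=f(a)\delta(x/a)$ from \eqref{eq:delta-rational}, the commutator $[\mathsf{S}_{a}(y)^{-1},\mathsf{S}_{b}(y')^{-1}]$ localizes onto these poles, producing terms of the schematic form $\delta(q_{a}^{?}y'/y)\,:\mathsf{S}_{a}(y)^{-1}\mathsf{S}_{b}(y')^{-1}:$ evaluated at the resonant argument.

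The crucial step is then to perform the double sum over $k$ and $l$. I would substitute $y=q_{a}^{k}x$, $y'=q_{b}^{l}x'$ and observe that the delta-function support forces a relation between the shifted arguments; because $a\neq b$ the structure function $\mathscr{S}_{\overline{ab}}$ has at most simple poles in the relevant variable, and the localized normal-ordered product $:\mathsf{S}_{a}(q_a^k x)^{-1}\mathsf{S}_{b}(q_b^l x')^{-1}:$ evaluated on the pole becomes independent of the summation index in a way that makes consecutive terms in the telescoping sum cancel in pairs. This is the same cancellation that establishes $[\mathscr{Q}_{a}(x),\mathscr{Q}_{b}(x')]=0$ for the conventional screening charges, and I expect it to go through verbatim for the inverse operators because passing from $\mathsf{S}_a$ to $\mathsf{S}_a^{-1}$ replaces every structure function by its reciprocal, which inverts the pole/zero structure but preserves the telescoping pattern.

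The main obstacle will be controlling the zero modes $\mathsf{s}_{a,0}(x)$ in \eqref{eq:zero-modes}: the inverse vertex operators carry inverse zero-mode factors $x^{(\log q_a)^{-1}\mathsf{t}_0}e^{(\log q_a)^{-1}\tilde\partial_{\mathsf t}}$, and I must verify that the zero-mode contributions to the two orderings still differ only by a rational prefactor (so that the delta-function collapse applies) and that the resulting resonance condition is compatible with the $a\neq b$ hypothesis. Since the zero modes for distinct directions $a,b$ commute up to the shift structure already recorded in Prop.~\ref{prop:OPE-formula}, I anticipate these factors reproduce exactly the structure-function prefactors rather than introducing new singularities; confirming this carefully, and checking that no anomalous contribution survives from the $|y'/y|\to 0,\infty$ asymptotics of the summand, is where the real work lies.
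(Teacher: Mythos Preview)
Your overall strategy (reduce to the single commutator and telescope over the $q_a$-, $q_b$-shifts) is right and matches what the paper has in mind when it says the proof is ``similar to the conventional screening charges.'' However, one step in your plan is wrong and would lead you astray if carried out literally.

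You claim that the two-point function of the inverse vertex operators is the \emph{inverse} of that of the originals, so that $\mathscr{S}_{\overline{ab}}$ is replaced by $\mathscr{S}_{\overline{ab}}^{-1}$. This is false. For exponential vertex operators $V=\,:\!e^{\phi}\!:$, $W=\,:\!e^{\psi}\!:$ one has $V^{-1}W^{-1}=e^{\langle(-\phi_+)(-\psi_-)\rangle}:V^{-1}W^{-1}:=e^{\langle\phi_+\psi_-\rangle}:V^{-1}W^{-1}:$, so the contraction factor is \emph{unchanged}. Concretely,
\[
\mathsf{S}_{a}(y)^{-1}\mathsf{S}_{b}(y')^{-1}=\mathscr{S}_{\overline{ab}}(q_{a}y'/y)\,:\mathsf{S}_{a}(y)^{-1}\mathsf{S}_{b}(y')^{-1}:,\qquad
\mathsf{S}_{b}(y')^{-1}\mathsf{S}_{a}(y)^{-1}=\mathscr{S}_{\overline{ab}}(q_{b}y/y')\,:\mathsf{S}_{a}(y)^{-1}\mathsf{S}_{b}(y')^{-1}:,
\]
with the \emph{same} $\mathscr{S}_{\overline{ab}}$ as in Prop.~\ref{prop:OPE-formula}. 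Had you proceeded with $\mathscr{S}_{\overline{ab}}^{-1}$ you would have placed the delta functions at the zeros of $\mathscr{S}_{\overline{ab}}$ rather than its poles, and the telescoping you sketch would not match up.

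Once this is corrected, the argument becomes even cleaner than you anticipate: the commutator $[\mathsf{S}_a(y)^{-1},\mathsf{S}_b(y')^{-1}]$ has exactly the same delta-function support and the same $c$-number prefactors as $[\mathsf{S}_a(y),\mathsf{S}_b(y')]$, with only the normal-ordered product replaced by its inverse. Since the telescoping over $k,l$ depends only on those $c$-number factors and on the shift relation \eqref{eq:vertexoprelation} (which for inverses reads $:\mathsf{S}_a(q_a x)^{-1}/\mathsf{S}_a(x)^{-1}:=\mathsf{A}(x)$), the cancellation is literally the conventional one. Your worry about the zero modes is also unnecessary: the $\mathsf{s}_{a,0}(x)$ in \eqref{eq:zero-modes} are built from $\mathsf{t}_0$ and $\tilde\partial_{\mathsf t}$, which commute, so $\mathsf{s}_{a,0}(x)^{\pm1}$ and $\mathsf{s}_{b,0}(x')^{\pm1}$ commute for all $a,b$ and contribute no additional rational factor.
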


Using the screening charge, the main claim of this section is the following theorem.
\begin{theorem}
When one of the three-legs is trivial, the PT $qq$-character $\mathsf{PT}_{\bar{4};\lambda\mu\nu}(v)$ commutes with the screening charge $\widetilde{\mathscr{Q}}_{4}(x)$:
\bea
\mathsf{PT}_{\bar{4};\lambda\mu\nu}(v)=\mathsf{H}_{\bar{4};\lambda\mu\nu}(v)+\cdots ,\quad [\mathsf{PT}_{\bar{4};\lambda\mu\nu}(v),\widetilde{\mathscr{Q}}_{4}(x)]=0.
\eea
\end{theorem}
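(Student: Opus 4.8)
The plan is to mirror the derivation of the DT $qq$-character reviewed above, replacing the screening current $\mathsf{S}_4(x)$ by its inverse $\mathsf{S}_4(x)^{-1}$, which is the building block of $\widetilde{\mathscr{Q}}_4(x)$. First I would invoke the hypothesis that one leg is trivial: by Prop.~\ref{prop:PTJK-pole} every pole met in the contour integral is simple, so $\mathsf{PT}_{\bar{4};\lambda\mu\nu}(v)$ has the clean expansion
\begin{align}
\mathsf{PT}_{\bar{4};\lambda\mu\nu}(v)=\sum_{\pi\in\mathcal{PT}_{\lambda\mu\nu}}\fq^{|\pi|}\,\mathcal{Z}^{\PT\tbar\QA}_{\bar{4};\lambda\mu\nu}[\pi]\,\mathsf{V}^{\lambda\mu\nu}_{\bar{4},\pi}(v),\qquad \mathsf{V}^{\lambda\mu\nu}_{\bar{4},\pi}(v)=\,:\mathsf{H}_{\bar{4};\lambda\mu\nu}(v)\prod_{\scube\in\pi}\mathsf{A}^{-1}(\chi_{\bar{4},v}(\cube)):
\end{align}
with no derivative corrections. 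The commutator $[\mathsf{PT}_{\bar{4};\lambda\mu\nu}(v),\mathsf{S}_4(y)^{-1}]$ is then a finite sum of normally ordered products weighted by rational functions; after applying the multiplicative delta-function identity only the delta-supported parts survive. It therefore suffices to show that these delta contributions cancel in the sum over $\pi$.

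Second I would carry out the elementary commutator computation. Using the operator product formulas of Prop.~\ref{prop:OPE-formula}, the contractions of $\mathsf{S}_4(y)^{-1}$ with the $\mathsf{W}_{\bar{4}}^{-1}$ factors coming from the convex corners $p_1(\vec{Y}),p_2(\vec{Y})$ in $\mathsf{H}_{\bar{4};\lambda\mu\nu}(v)$, and with the inserted $\mathsf{A}^{-1}$ factors, are rational with simple poles; expanding the two orderings in the two regions of $y$ and subtracting produces multiplicative delta functions $\delta(\chi/y)$. I expect these poles to sit exactly at the positions where a box can be adjoined to or deleted from $\pi$ in the negative ($\mathsf{S}_4^{-1}$) direction, i.e.\ precisely the addable and removable PT boxes identified in Sections~\ref{sec:PToneleg}--\ref{sec:PTtwolegs}. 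This is the mirror image of the DT mechanism: flipping the reference vector to $\tilde{\eta}_0$ trades $\mathsf{S}_4$ for $\mathsf{S}_4^{-1}$, and hence the concave corners for the convex ones.

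Third comes the telescoping cancellation. Writing $\mathsf{A}(x)=\,:\mathsf{S}_4(x)/\mathsf{S}_4(q_4x):$ from \eqref{eq:vertexoprelation}, I would relate the delta term generated when $\mathsf{S}_4(y)^{-1}$ is resonant with an addable box $\cube_{\ast}$ of $\pi$ to the delta term generated by the removable box $\cube_{\ast}$ of $\pi'=\pi\cup\{\cube_{\ast}\}$; the two residual normally ordered operators coincide, so cancellation reduces to the numerical identity
\begin{align}
\fq\,\mathcal{Z}^{\PT\tbar\QA}_{\bar{4};\lambda\mu\nu}[\pi']\,r_{\mathrm{rem}}(\cube_{\ast})+\mathcal{Z}^{\PT\tbar\QA}_{\bar{4};\lambda\mu\nu}[\pi]\,r_{\mathrm{add}}(\cube_{\ast})=0,
\end{align}
where $r_{\mathrm{add}},r_{\mathrm{rem}}$ are the OPE residue factors. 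This is exactly the recursion obeyed by the PT instanton weights under the JK residue, so it follows from the computations behind \eqref{eq:PToneleg-JK-partfunct} and \eqref{eq:PTtwoleg-JK-partfunct}; the sum over $k\in\mathbb{Z}$ in $\widetilde{\mathscr{Q}}_4(x)$ guarantees that every resonant position is reached exactly once.

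The hard part will be matching the instanton-weight ratio with the OPE residue, together with its sign, for general $\pi$. For the one-leg case this is a one-dimensional telescoping and is essentially immediate; for the two-leg case one must track two families of delta functions labelled by the lattice $[m,n]$ and verify the cancellation simultaneously at the corner $[0,0]$ and along the two edges, which requires the explicit bifundamental-like factors in \eqref{eq:PTtwoleg-JK-partfunct}. Finally, the restriction that one leg be trivial is genuinely needed: with three nontrivial legs the second-order poles of Prop.~\ref{prop:PTJK-pole} force colliding $\mathsf{A}^{-1}$ insertions and the derivative operators $v\partial_v\mathsf{A}^{-1}(v)$ seen in \eqref{eq:PT3qq-level1-JK}, which spoil the term-by-term expansion underlying this argument; that case is instead handled by the fusion and collision procedure.
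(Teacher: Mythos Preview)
Your proposal is correct and follows essentially the same route as the paper: expand over $\mathcal{PT}_{\lambda\mu\nu}$ using the simple-pole hypothesis, compute the commutator with $\mathsf{S}_4(x)^{-1}$, identify the delta-supported pieces with addable/removable PT boxes (the paper packages this into a structure function whose zeros sit at $\chi_{\bar 4,v}(\cube)$ for addable and at $q_4\chi_{\bar 4,v}(\cube)$ for removable boxes), and reduce everything to a recursion on the weights $\mathcal{Z}^{\PT\tbar\QA}_{\bar 4;\lambda\mu\nu}[\pi]$.

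One point worth sharpening in your write-up: the numerical identity you display is exactly the recursion relation the paper isolates, but imposing it does \emph{not} make $[\mathsf{PT}_{\bar 4;\lambda\mu\nu}(v),\mathsf{S}_4(x)^{-1}]$ vanish, because the addable and removable deltas sit at $x=\chi$ and $x=q_4\chi$ respectively. What the recursion buys is that the commutator with a single screening current collapses to a total $q_4$-difference $\propto\bigl(\delta(x/\chi)-\delta(x/q_4\chi)\bigr)$ with a common operator coefficient; it is this difference that telescopes to zero under the lattice sum $\sum_{k\in\mathbb{Z}}\mathsf{S}_4(q_4^kx)^{-1}$ defining $\widetilde{\mathscr{Q}}_4(x)$. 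Your sentence about the sum over $k$ hints at this, but as written your cancellation equation reads as if the two delta terms cancel directly for fixed $x$, which they do not.
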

In other words, the PT $qq$-character has the same highest weight with the DT $qq$-character but the screening charge that it commutes with is the unique difference. When the three legs are nontrivial, we do not know how to discuss the commutativity with the screening charge because of the existence of the second-order poles. Instead, we will give a different derivation using the collision procedure in section~\ref{sec:PT3-collisionlimit}.

On the other hand, since the conjugate PT partition function $\widetilde{\mathcal{Z}}^{\PT\tbar\QA}_{\bar{4};\lambda\mu\nu}[k]$ is evaluated using the reference vector $\eta=\eta_{0}$, the conjugate PT $qq$-character $\widetilde{\mathsf{PT}}_{\bar{4};\lambda\mu\nu}(x)$ actually commutes with the conventional screening charge.
\begin{theorem}
    When one of the three legs is trivial, the conjugate PT $qq$-character $\widetilde{\mathsf{PT}}_{\bar{4};\lambda\mu\nu}(v)$ commutes with the screening charge $\mathscr{Q}_{4}(x)$:
    \bea
\widetilde{\mathsf{PT}}_{\bar{4};\lambda\mu\nu}(v)=\widetilde{\mathsf{H}}_{\bar{4};\lambda\mu\nu}(v)+\cdots ,\quad [\widetilde{\mathsf{PT}}_{\bar{4};\lambda\mu\nu}(v),\mathscr{Q}_{4}(x)]=0.
    \eea
\end{theorem}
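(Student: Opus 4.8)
The plan is to prove commutativity $[\widetilde{\mathsf{PT}}_{\bar{4};\lambda\mu\nu}(v),\mathscr{Q}_{4}(x)]=0$ by the same recursive iWeyl-reflection mechanism used for the D6 and DT $qq$-characters, but now tracking the \emph{conjugate} highest weight $\widetilde{\mathsf{H}}_{\bar{4};\lambda\mu\nu}(v)$ and the \emph{conventional} screening charge $\mathscr{Q}_{4}(x)=\sum_{k}\mathsf{S}_{4}(q_4^k x)$. The key structural fact is that $\widetilde{\mathsf{PT}}_{\bar{4};\lambda\mu\nu}(v)$ is built from exactly the integrand $\mathcal{Z}^{\D6_{\bar4}\tbar\D2\tbar\D0}_{\PT\tbar\QA;\lambda\mu\nu}(v,x_I)\prod_{I<J}\mathcal{A}_{\mathbb{C}^4}(x_I/x_J)^{-1}$ evaluated at reference vector $\eta_0$, which is precisely the residue data that the screening charge $\mathscr{Q}_4$ is designed to detect: each commutator $[\mathsf{S}_4(q_4^k x),\,\cdot\,]$ produces multiplicative delta functions $\delta(\chi/x)$ supported exactly at the poles selected by the $\eta_0$ prescription. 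First I would restrict to the case where one leg is trivial (the theorem's hypothesis), so that by Prop.~\ref{prop:PTJK-pole} only simple poles occur and the operator expansion $\widetilde{\mathsf{PT}}_{\bar{4};\lambda\mu\nu}(v)=\sum_{\pi\in\widetilde{\mathcal{PT}}_{\lambda\mu\nu}}\fq^{|\pi|}\widetilde{\mathcal{Z}}^{\PT\tbar\QA}_{\bar4;\lambda\mu\nu}[\pi]\,\widetilde{\mathsf{V}}^{\lambda\mu\nu}_{\bar4,\pi}(v)$ with $\widetilde{\mathsf{V}}^{\lambda\mu\nu}_{\bar4,\pi}(v)={:\widetilde{\mathsf{H}}_{\bar4;\lambda\mu\nu}(v)\prod_{\scube\in\pi}\mathsf{A}^{-1}(\chi_{\bar4,v}(\cube)):}$ is a genuine sum (no derivative terms).

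The computation proceeds term by term over the configurations $\pi$. For each basis vertex operator $\widetilde{\mathsf{V}}^{\lambda\mu\nu}_{\bar4,\pi}(v)$ I would compute $[\widetilde{\mathsf{V}}^{\lambda\mu\nu}_{\bar4,\pi}(v),\mathsf{S}_4(x)]$ using the OPE formulas of Prop.~\ref{prop:OPE-formula}, together with the delta-function identities \eqref{eq:delta-reflect} and \eqref{eq:delta-rational}. Each such commutator is a finite sum of delta functions $\delta(\chi_\ast/x)$ localized at the addable and removable boxes of $\pi$; writing the rational prefactors with the $[\,\cdot\,]_\pm$ conventions, the singular contributions correspond exactly to the addable/removable boxes of the $\widetilde{\mathcal{PT}}_{\lambda\mu\nu}$ configuration. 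The claim to verify is that the coefficients $\widetilde{\mathcal{Z}}^{\PT\tbar\QA}_{\bar4;\lambda\mu\nu}[\pi]$ are tuned precisely so that the delta-function contribution from a box added to $\pi$ cancels against the contribution from the same box removed from the neighboring configuration $\pi'$; this is the conjugate analogue of the telescoping cancellation that makes the D6 and DT characters commute. Concretely, one checks a single ``transition'' relation
\begin{equation}
\widetilde{\mathcal{Z}}^{\PT\tbar\QA}_{\bar4;\lambda\mu\nu}[\pi']\,\mathrm{Res}_{x=\chi}\!\left[\widetilde{\mathsf{V}}_{\bar4,\pi'}\right]+\widetilde{\mathcal{Z}}^{\PT\tbar\QA}_{\bar4;\lambda\mu\nu}[\pi]\,\mathrm{Res}_{x=\chi}\!\left[\widetilde{\mathsf{V}}_{\bar4,\pi}\right]=0
\end{equation}
for each wall $\chi$ where configurations $\pi,\pi'$ differ by one box, and then sums over all walls. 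The ratio of coefficients is fixed by the $\mathcal{A}_{\mathbb{C}^4}$ and $\mathscr{V}_4$ factors, so this reduces to an identity among the structure functions in Def.~\ref{def:structurefunct-QA}.

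The main obstacle will be bookkeeping the \emph{conjugate} assignment: because $\widetilde{\mathsf{H}}_{\bar4;\lambda\mu\nu}$ uses the arguments $q_4^{-1}\chi_{\bar4,v^{-1}}(\cube)^{-1}$ rather than $\chi_{\bar4,v}(\cube)$, the roles of the concave corners $s(\vec Y)$ and the convex corners $p_1(\vec Y)$ are exchanged in the numerator and denominator relative to the DT case \eqref{eq:highestweight-def}, and the contractions $\mathsf{W}_{\bar4}\mathsf{S}_4$ versus $\mathsf{S}_4\mathsf{W}_{\bar4}$ from Prop.~\ref{prop:OPE-formula} carry different zero-mode factors of $q_4$. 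I expect the delicate point is verifying that, after these reflections and the $\fq\to q_4^{-1/2}\fq$ rescaling that defines $\widetilde{\mathcal{Z}}^{\PT\tbar\QA}$, the pole at $x=\chi$ that must cancel is genuinely present on both sides with opposite residue sign — this is exactly the statement that the reference vector $\eta_0$ (rather than $\tilde\eta_0$) selects the correct set of boxes for the conjugate integrand, which was established in Thm.~\ref{thm:conjugate-PT3vertex}. Since the two-legs cancellations only involve first-order poles, no derivative-of-vertex-operator terms arise and the argument closes; the genuinely three-leg case, where second-order poles force derivative terms, is explicitly excluded from the hypothesis and is instead handled by the collision/fusion procedure of section~\ref{sec:PT3-collisionlimit}.
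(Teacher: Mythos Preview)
Your overall strategy matches the paper's: expand $\widetilde{\mathsf{PT}}_{\bar4;\lambda\mu\nu}(v)$ over $\pi\in\widetilde{\mathcal{PT}}_{\lambda\mu\nu}$, compute $[\widetilde{\mathsf{V}}^{\lambda\mu\nu}_{\bar4,\pi}(v),\mathsf{S}_4(x)]$ via the OPEs, obtain delta functions supported at addable and removable boxes (this is Prop.~\ref{prop:PTtwolegs-structure-conjugate}), and check that the coefficients $\widetilde{\mathcal{Z}}^{\PT\tbar\QA}_{\bar4;\lambda\mu\nu}[\pi]$ are tuned so that the total commutator with $\mathscr{Q}_4(x)$ vanishes.

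However, your displayed ``transition relation'' misidentifies the cancellation mechanism. The two residues you want to match are \emph{not} both at $x=\chi$. For the conjugate case the addable-box pole of $\widetilde{\mathscr{PW}}^{\bar4,\lambda\mu\nu}_{\pi,v}(x)^{-1}$ sits at $x=\chi_{\bar4,v}(\cube)$, while the removable-box pole of $\widetilde{\mathscr{PW}}^{\bar4,\lambda\mu\nu}_{\pi+\scube,v}(x)^{-1}$ sits at $x=q_4^{-1}\chi_{\bar4,v}(\cube)$ (Prop.~\ref{prop:PTtwolegs-structure-conjugate}). After using $\mathsf{A}(x)={:}\mathsf{S}_4(x)/\mathsf{S}_4(q_4x){:}$ to identify the operator content of the two terms, what remains is a $q_4$-difference of delta functions $\delta(x/\chi)-\delta(q_4 x/\chi)$, which does \emph{not} vanish pointwise but only after summing over the lattice $\mathscr{Q}_4(x)=\sum_{k}\mathsf{S}_4(q_4^k x)$. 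The condition that makes the coefficients match is therefore the recursion
\[
\frac{\widetilde{\mathcal{Z}}^{\PT\tbar\QA}_{\bar4;\lambda\mu\nu}[\pi+\cube]}{\widetilde{\mathcal{Z}}^{\PT\tbar\QA}_{\bar4;\lambda\mu\nu}[\pi]}
= -\,\frac{\underset{x=\chi_{\bar4,v}(\scube)}{\Res}\,x^{-1}\,\widetilde{\mathscr{PW}}^{\bar4,\lambda\mu\nu}_{\pi,v}(x)^{-1}}{\underset{x=q_4^{-1}\chi_{\bar4,v}(\scube)}{\Res}\,x^{-1}\,\widetilde{\mathscr{PW}}^{\bar4,\lambda\mu\nu}_{\pi+\scube,v}(x)^{-1}},
\]
with the two residues taken at \emph{different} points. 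Your phrase ``sums over all walls'' should be replaced by ``telescopes under the $q_4$-shift of the screening charge''; the sum over configurations alone does not produce zero. Once you correct this, the argument is the one the paper sketches in the paragraph ``Conjugate PT $qq$-character'' of section~\ref{sec:twolegs-PTqq}.
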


\subsubsection{Two-legs PT3 $qq$-characters}\label{sec:twolegs-PTqq}
We will mainly discuss using the PT $qq$-characters and the discussion is parallel for the conjugate PT $qq$-characters. 
Let us focus on the case when one of the three legs is always trivial. Note again that for such case, the PT configurations and the GR configurations are the same. Since we are interested in the relation with the screening charge, we define the structure function\footnote{Although we wrote the structure function generally using $\lambda,\mu,\nu$, note that we are assuming one of them to be trivial. } $\mathscr{PW}^{\bar{4},\lambda\mu\nu}_{\pi,v}(x)$ as the contraction between the screening charge and the vertex operators of the monomial terms
\bea
\widetilde{\mathsf{S}}_{4}(x)^{-1}\mathsf{V}^{\lambda\mu\nu}_{\bar{4},\pi}(v)&=(-q_{4}v)^{-1}\left[\mathscr{PW}^{\bar{4},\lambda\mu\nu}_{\pi,v}(x)^{-1}\right]^{x}_{-}:\wtS_{4}(x)^{-1}\mathsf{V}^{\lambda\mu\nu}_{\bar{4},\pi}(v):,\\
\mathsf{V}^{\lambda\mu\nu}_{\bar{4},\pi}(v)\widetilde{\mathsf{S}}_{4}(x)^{-1}&=(-q_{4}v)^{-1}\left[\mathscr{PW}^{\bar{4},\lambda\mu\nu}_{\pi,v}(x)^{-1}\right]^{x}_{+}:\wtS_{4}(x)^{-1}\mathsf{V}^{\lambda\mu\nu}_{\bar{4},\pi}(v):.
\eea

An interesting property is that the structure function obeys the following property.
\begin{proposition}\label{prop:PTtwolegs-structure}
    When one of the three legs is trivial, the structure function $\mathscr{PW}^{\bar{4},\lambda\mu\nu}_{\pi,v}(x)$ has zeros only at the positions related with the addable and removable boxes of the PT configuration $\pi$:
    \bea
\mathscr{PW}^{\bar{4},\lambda\mu\nu}_{\pi,v}(x)\propto \prod_{\scube\in A(\pi)}\left(1-\chi_{\bar{4},v}(\cube)/x\right) \prod_{\scube\in R(\pi)}\left(1-q_{4}\chi_{\bar{4},v}(\scube)/x\right).
    \eea
\end{proposition}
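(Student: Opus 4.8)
The plan is to reduce the statement to a rational-function identity and prove it by induction on the boxes of $\pi$. First I would write $\mathscr{PW}^{\bar{4},\lambda\mu\nu}_{\pi,v}(x)$ out explicitly from the operator product expansions of Prop.~\ref{prop:OPE-formula}. Since $\mathsf{V}^{\lambda\mu\nu}_{\bar{4},\pi}(v)=\,:\mathsf{H}_{\bar{4};\lambda\mu\nu}(v)\prod_{\scube\in\pi}\mathsf{A}^{-1}(\chi_{\bar{4},v}(\cube)):$ and the contractions factorize, the structure function is a product of an elementary factor for each $\mathsf{W}_{\bar{4}}$ appearing in the highest weight $\mathsf{H}_{\bar{4};\lambda\mu\nu}(v)$ (in the numerator for $\scube\in s(\vec{Y})$, in the denominator for $\scube\in p_{1}(\vec{Y}),p_{2}(\vec{Y})$) times a factor $g_{\bar{4}}(\chi_{\bar{4},v}(\cube)/x)^{\mp 1}$ for each box $\scube\in\pi$, where the $\mathsf{A}$--$\mathsf{S}$ contractions supply the structure function $g_{\bar{4}}$ of Def.~\ref{def:structurefunct-QA}. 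The content of the proposition is then that all interior zeros and poles of this product cancel, leaving only the boundary factors indexed by $A(\pi)$ and $R(\pi)$.

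Because one of the three legs is trivial, the hollow structure of Cond.~\ref{cond:PTrule-twolegs} lies in a slab and $\pi$ decomposes into finitely many columns along the axes, exactly as in the rod decomposition of section~\ref{sec:infinite-product-reg}. I would exploit the same telescoping used there: the product of consecutive $g_{\bar{4}}$-factors along a column collapses to a ratio supported at its two endpoints, so that the net contribution of each column reduces to its exposed faces. Combined with the corner contributions of $\mathsf{H}_{\bar{4};\lambda\mu\nu}(v)$ at the sets $s$, $p_{1}$, $p_{2}$, the interior positions cancel pairwise and only the outermost faces of $\pi$ survive, which are precisely the addable and removable positions. More robustly, I would run an induction on $|\pi|$. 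Adding a single addable box $\scube_{0}$ multiplies $\mathscr{PW}^{\bar{4},\lambda\mu\nu}_{\pi,v}(x)$ by $g_{\bar{4}}(\chi_{\bar{4},v}(\scube_{0})/x)^{\mp 1}$, and one matches the directional zeros and poles of $g_{\bar{4}}$ against the inductive factorization: when a neighbour of $\scube_{0}$ is already present---forced by the melting rule of Cond.~\ref{cond:PTrule-twolegs}---the corresponding factor cancels an existing one, while when a neighbour is absent the factor survives as a new addable-box zero.

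The key local identity to verify is that the factor $(1-\chi_{\bar{4},v}(\scube_{0})/x)$ is deleted from the addable product, since $\scube_{0}$ becomes occupied, while $(1-q_{4}\chi_{\bar{4},v}(\scube_{0})/x)$ is created in the removable product, since $\scube_{0}\in R(\pi\cup\{\scube_{0}\})$. The $q_{4}$-shift in the removable factor originates from the asymmetry between the numerator and denominator of $g_{\bar{4}}$ together with the use of the inverse screening current $\widetilde{\mathsf{S}}_{4}(x)^{-1}$. The minimal configuration, the blue intersection box, and the two asymptotic legs would be handled separately as the base cases of the induction, using the explicit form of $\mathsf{H}_{\bar{4};\lambda\mu\nu}(v)$ rewritten through the $\mathsf{S}$-operators as in section~\ref{sec:freefield-contourintegral}.

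The hard part will be the local bookkeeping at the newly added box in the two-legs geometry: one must show that the poles of $g_{\bar{4}}$, which would otherwise spoil the assertion that $\mathscr{PW}^{\bar{4},\lambda\mu\nu}_{\pi,v}(x)$ carries zeros only, are always cancelled, and that the surviving zeros land exactly on $A(\pi\cup\{\scube_{0}\})$ and $R(\pi\cup\{\scube_{0}\})$ rather than on spurious interior sites. This requires a case analysis over the neighbour configurations permitted by Cond.~\ref{cond:PTrule-twolegs}, and it is precisely here that the hypothesis of a trivial leg is essential: by Prop.~\ref{prop:PTJK-pole} only simple poles occur in the two-legs case, so the telescoping is exact and none of the second-order, derivative-type subtleties of the three-legs case intervene.
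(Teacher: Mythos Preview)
Your proposal goes well beyond what the paper actually does: the paper offers no proof of this proposition. It states the result, remarks that the $q_4$-shift relative to the DT case stems from using $\tilde{\eta}_0$ rather than $\eta_0$, and then simply verifies the claim by direct computation in the two families $(\varnothing,\varnothing,\Box)$ and $(\Box,\Box,\varnothing)$, exhibiting $\mathscr{PW}^{-1}$ for generic $\pi$ in each and reading off the addable and removable poles. So you are not comparing against a proof but against an illustration by example.

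Your inductive strategy is the natural route to an actual argument, and it is sound. One point to tighten: the extra factor acquired by $\mathscr{PW}^{-1}$ when a box $\scube_0$ is added is $g_{\bar{4}}(q_4\chi_{\bar{4},v}(\scube_0)/x)$, not $g_{\bar{4}}(\chi_{\bar{4},v}(\scube_0)/x)^{\mp1}$. This $q_4$-shift in the argument comes from the OPE $\mathsf{S}_4(x')\mathsf{A}(y)=g_{\bar{4}}(q_4 y/x'){:}\cdots{:}$ together with the fact that inverting both vertex operators leaves the contraction factor unchanged; it is exactly what produces the $q_4\chi_{\bar{4},v}(\scube)$ in the removable-box factor rather than the $q_4^{-1}\chi_{\bar{4},v}(\scube)$ familiar from the DT side. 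You can check this against the paper's two-leg formula: passing from $[m,n]$ to $[m+1,n]$ multiplies $\mathscr{PW}^{-1}$ by $\mathscr{S}_{23}(q_4 q_1^{-m-1}v/x)/\mathscr{S}_{23}(q_4 q_1^{-m}v/x)$, which equals $g_{\bar{4}}(q_4 q_1^{-m-1}v/x)$ on the nose. Once this is fixed, the neighbour-by-neighbour cancellation you describe goes through, and your observation that Prop.~\ref{prop:PTJK-pole} guarantees only simple poles in the two-legs case is precisely why the induction closes without derivative corrections.
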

Recall that the zeros of the structure functions appearing at the usual D6 $qq$-characters and DT $qq$-characters are at $x=\chi_{\bar{4},v}(\scube)$ for addable boxes and $x=q_{4}^{-1}\chi_{\bar{4},v}(\scube)$ for removable boxes. The origin of this difference is due to the fact that we are using the reference vector $\tilde{\eta}_{0}$ instead of $\eta_{0}$. 

To make the discussion concrete, let us see explicit examples for the one-leg case and two-legs case.

\paragraph{One-leg case}
Let us consider the case when $(\lambda,\mu,\nu)=(\varnothing,\varnothing,\Bbox)$ (see section~\ref{sec:PToneleg} and \eqref{eq:PToneleg-figure}). The vacuum configuration gives
\bea
\mathscr{PW}^{\bar{4},\varnothing\varnothing\,\Bbox}_{\pi,v}(x)^{-1}=\frac{(1-q_{14}v/x)(1-q_{24}v/x)}{(1-q_{3}^{-1}v/x)}
\eea
and indeed we have a pole corresponding to the addable box at $x=q_{3}^{-1}v$. Generally, when we have $k$-boxes at $x=q_{3}^{-1}v,\ldots, q_{3}^{-k}v$, the structure function is 
\bea\label{eq:oneleg-structurefunct-ex}
\mathscr{PW}^{\bar{4},\varnothing\varnothing\,\Bbox}_{\pi,v}(x)^{-1}=(1-q_{4}v/x)\frac{(1-q_{2}^{-1}q_{3}^{-k-1}v/x)(1-q_{1}^{-1}q_{3}^{-k-1}v/x)}{(1-q_{4}q_{3}^{-k}v/x)(1-q_{3}^{-k-1}v/x)}
\eea
and we have an addable pole at $x=q_{3}^{-k-1}v$ and a removable pole at $x=q_{4}q_{3}^{-k}v$.

For an example of a nontrivial case with $(\lambda,\mu,\nu)=(\varnothing,\varnothing,\Yboxdim{4pt}\yng(2,1))$, the vacuum configuration gives
\bea
\mathscr{PW}^{\bar{4},\varnothing\varnothing\,\Yboxdim{4pt}\yng(2,1)}_{\pi,v}(x)^{-1}=\frac{(1-q_{4}q_{1}^{2}v/x)(1-q_{4}q_{2}^{2}v/x)(1-q_{124}v/x)}{(1-q_{3}^{-1}q_{2}v/x)(1-q_{3}^{-1}q_{1}v/x)}
\eea
and indeed the poles correspond to the positions where can add boxes inside the hollow structure.

\paragraph{Two-legs case}
Let us consider the case $(\lambda,\mu,\nu)=(\Yboxdim{5pt}\yng(1),\yng(1),\varnothing)$ (see section~\ref{sec:PTtwolegs} and \eqref{eq:PTtwolegs-figure}). The structure function for the vacuum configuration is
\bea
\mathscr{PW}^{\bar{4},\Yboxdim{4pt}\yng(1)\,\yng(1)\,\varnothing}_{\pi,v}(x)^{-1}=(1-q_{4}v/x)\mathscr{S}_{23}(q_{4}v/x)\mathscr{S}_{13}(q_{24}v/x)=\frac{(1-q_{12}^{-1}v/x)(1-q_{3}^{-1}v/x)}{(1-v/x)}
\eea
and we have a pole corresponding to the addable box at the origin. For level one with a box at the origin, the structure function is
\bea
\mathscr{PW}^{\bar{4},\Yboxdim{4pt}\yng(1)\,\yng(1)\,\varnothing}_{\pi,v}(x)^{-1}=(1-q_{4}v/x)\mathscr{S}_{23}(q_{14}v/x)\mathscr{S}_{13}(q_{24}v/x)g_{\bar{4}}(q_{4}v/x)=\frac{(1-q_{12}^{-1}v/x)^{2}(1-q_{13,23}^{-1}v/x)}{(1-q_{1,2}^{-1}v/x)(1-q_{4}v/x)}
\eea
and indeed it obeys Prop.~\ref{prop:PTtwolegs-structure}. Generally, for the configuration $\pi=[m,n]$, we have
\bea\label{eq:twoleg-structurefunct-ex}
\mathscr{PW}^{\bar{4},\Yboxdim{4pt}\yng(1)\,\yng(1)\,\varnothing}_{\pi,v}(x)^{-1}&=(1-q_{4}v/x)\mathscr{S}_{23}\left(\frac{q_{4}q_{1}^{-m}v}{x}\right)\mathscr{S}_{13}\left(\frac{q_{4}q_{2}^{-n}v}{x}\right)\\
&=(1-q_{4}v/x)\frac{(1-q_{2,3}^{-1}q_{1}^{-m-1}v/x)(1-q_{1,3}^{-1}q_{2}^{-n-1}v/x)}{(1-q_{1}^{-m-1}v/x)(1-q_{4}q_{1}^{-m}v/x)(1-q_{2}^{-n-1}v/x)(1-q_{4}q_{2}^{-n}v/x)}
\eea
which gives addable poles at $x=q_{1}^{-m-1}v,q_{2}^{-n-1}v$ and removable poles at $x=q_{4}q_{1}^{-m}v,q_{4}q_{2}^{-n}v$.

\paragraph{Commutativity with the screening charge}
Using Prop.~\ref{prop:PTtwolegs-structure}, we can show the commutativity of the PT $qq$-character with the screening charge. We first have
\bea
\,&[\mathsf{V}^{\lambda\mu\nu}_{\bar{4},\pi}(v),\mathsf{S}_{4}(x)^{-1}]\\
=&(q_{4}v)^{-1}\left(\sum_{\scube\in A(\pi)}\underset{x=\chi_{\bar{4},v}(\scube)}{\Res}{x}^{-1} \mathscr{PW}^{\bar{4},\lambda\mu\nu}_{\pi,v}(x)^{-1} \delta\left(\frac{x}{\chi_{\bar{4},v}(\cube)}\right)  :\mathsf{V}^{\lambda\mu\nu}_{\bar{4},\pi}(v)\mathsf{S}_{4}(\chi_{\bar{4},v}(\cube))^{-1}: \right.\\
    &+\left. \sum_{\scube\in R(\pi)}\underset{x=q_{4}\chi_{\bar{4},x}(\scube)}{\Res}{x'}^{-1} \mathscr{PW}^{\bar{4},\lambda\mu\varnothing}_{\pi,v}(x)^{-1} \delta\left(\frac{x}{q_{4}\chi_{\bar{4},v}(\cube)}\right)  :\mathsf{V}^{\lambda\mu\nu}_{\bar{4},\pi}(v)\mathsf{S}_{4}(q_{4}\chi_{\bar{4},v}(\cube))^{-1}: \right).
\eea
The commutativity with the PT $qq$-character is then given as
\bea
\relax &[\mathsf{PT}_{\bar{4};\lambda\mu\nu}(v),\mathsf{S}_{4}(x)^{-1}]\\
=&(q_{4}v)^{-1}\sum_{\pi\in\mathcal{PT}_{\lambda\mu\nu}}\mathcal{Z}^{\PT\tbar\QA}_{\bar{4};\lambda\mu\nu}[\pi]\left(\sum_{\scube\in A(\pi)}\underset{x=\chi_{\bar{4},v}(\scube)}{\Res}{x}^{-1} \mathscr{PW}^{\bar{4},\lambda\mu\nu}_{\pi,v}(x)^{-1} \delta\left(\frac{x}{\chi_{\bar{4},v}(\cube)}\right)  :\mathsf{V}^{\lambda\mu\nu}_{\bar{4},\pi}(v)\mathsf{S}_{4}(\chi_{\bar{4},v}(\cube))^{-1}: \right.\\
    &+\left. \sum_{\scube\in R(\pi)}\underset{x=q_{4}\chi_{\bar{4},x}(\scube)}{\Res}{x'}^{-1} \mathscr{PW}^{\bar{4},\lambda\mu\varnothing}_{\pi,v}(x)^{-1} \delta\left(\frac{x}{q_{4}\chi_{\bar{4},v}(\cube)}\right)  :\mathsf{V}^{\lambda\mu\nu}_{\bar{4},\pi}(v)\mathsf{S}_{4}(q_{4}\chi_{\bar{4},v}(\cube))^{-1}: \right).
\eea
Shifting the second term as $\pi'=\pi-\scube$, the second term is rewritten as
\bea
\sum_{\pi'\in\mathcal{PT}_{\lambda\mu\nu}}\mathcal{Z}^{\PT\tbar\QA}_{\bar{4};\lambda\mu\nu}[\pi'+\cube] \sum_{\scube\in A(\pi')}\underset{x=q_{4}\chi_{\bar{4},v}(\scube)}{\Res}{x}^{-1} \mathscr{PW}^{\bar{4},\lambda\mu\nu}_{\pi'+\scube,v}(x)^{-1} \delta\left(\frac{x}{q_{4}\chi_{\bar{4},v}(\cube)}\right)  :\mathsf{V}^{\lambda\mu\nu}_{\bar{4},\pi'+\scube}(v)\mathsf{S}_{4}(q_{4}\chi_{\bar{4},v}(\cube))^{-1}:.
\eea
Using \eqref{eq:vertexoprelation}, the vertex operator part becomes 
\bea
{:\mathsf{V}^{\lambda\mu\nu}_{\bar{4},\pi'+\scube}(v)\mathsf{S}_{4}(q_{4}\chi_{\bar{4},v}(\cube))^{-1}:}={:\mathsf{V}^{\lambda\mu\nu}_{\bar{4},\pi'}(v)\mathsf{S}_{4}(\chi_{\bar{4},v}(\cube))^{-1}:}.
\eea
We then have
\bea
\relax &[\mathsf{PT}_{\bar{4};\lambda\mu\nu}(v),\mathsf{S}_{4}(x)^{-1}]\\
=&(q_{4}v)^{-1}\sum_{\pi\in\mathcal{PT}_{\lambda\mu\nu}}\mathcal{Z}^{\PT\tbar\QA}_{\bar{4};\lambda\mu\nu}[\pi]\sum_{\scube\in A(\pi)}\underset{x=\chi_{\bar{4},v}(\scube)}{\Res}{x}^{-1} \mathscr{PW}^{\bar{4},\lambda\mu\nu}_{\pi,v}(x)^{-1}  :\mathsf{V}^{\lambda\mu\nu}_{\bar{4},\pi}(v)\mathsf{S}_{4}(\chi_{\bar{4},v}(\cube))^{-1}:\\
&\times \left(\delta\left(\frac{x}{\chi_{\bar{4},v}(\cube)}\right)-\delta\left(\frac{x}{q_{4}\chi_{\bar{4},v}(\cube)}\right)\right)
\eea
which is written in a $q_{4}$-difference form and commutes with the screening charge $\widetilde{\mathscr{Q}}_{4}(x)$ if the PT partition function obeys a recursion relation given in the following proposition. Using the following proposition, we obtain the commutativity.
\begin{proposition}
    When one of the three legs is trivial, the PT partition function obeys the following recursion relation
    \bea
\frac{\mathcal{Z}^{\PT\tbar\QA}_{\bar{4};\lambda\mu\nu}[\pi+\cube]}{\mathcal{Z}^{\PT\tbar\QA}_{\bar{4};\lambda\mu\nu}[\pi]}=-\frac{\underset{x=\chi_{\bar{4},v}(\scube)}{\Res}{x}^{-1} \mathscr{PW}^{\bar{4},\lambda\mu\nu}_{\pi,v}(x)^{-1}}{\underset{x=q_{4}\chi_{\bar{4},v}(\scube)}{\Res}{x}^{-1} \mathscr{PW}^{\bar{4},\lambda\mu\nu}_{\pi+\scube,v}(x)^{-1}}.
    \eea
\end{proposition}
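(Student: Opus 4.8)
The plan is to reduce the recursion to a statement about operator products, using the free field realization of the PT weight. First I would invoke the free field formula \eqref{eq:freefield-D6DTPT} together with Thm.~\ref{thm:PTvertex-expansion}: after the JK residues at the poles labelled by $\pi$ are evaluated (only simple poles occur here by Prop.~\ref{prop:PTJK-pole}, since one leg is trivial), the weight $\mathcal{Z}^{\PT\tbar\QA}_{\bar{4};\lambda\mu\nu}[\pi]$ is, up to the overall $\mathcal{G}^{|\pi|}$ and the sign $(-1)^{|\pi|}$ produced by the $\tilde{\eta}_0$ prescription (cf.~\eqref{eq:JKresidue-sign}), the self-contraction of the monomial $\mathsf{V}^{\lambda\mu\nu}_{\bar{4},\pi}(v)={:\mathsf{H}_{\bar{4};\lambda\mu\nu}(v)\prod_{\scube\in\pi}\mathsf{A}^{-1}(\chi_{\bar{4},v}(\cube)):}$, i.e.\ the product of all pairwise OPE factors of these vertex operators. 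In this picture the left hand side $\mathcal{Z}[\pi+\cube]/\mathcal{Z}[\pi]$ is nothing but the product of the OPE factors created by inserting the single extra operator $\mathsf{A}^{-1}(\chi_{\bar{4},v}(\cube))$ against $\mathsf{H}_{\bar{4};\lambda\mu\nu}(v)$ and against each $\mathsf{A}^{-1}(\chi_{\bar{4},v}(\cube'))$ with $\cube'\in\pi$, evaluated at $x=\chi_{\bar{4},v}(\cube)$.

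The second step is to identify these OPE factors with the residues of $\mathscr{PW}$. The mechanism is the factorization $\mathsf{A}(x)={:\mathsf{S}_4(x)/\mathsf{S}_4(q_4 x):}$ of \eqref{eq:vertexoprelation}, equivalently $\mathsf{A}^{-1}(x)={:\mathsf{S}_4(q_4 x)/\mathsf{S}_4(x):}$. By definition $\mathscr{PW}^{\bar{4},\lambda\mu\nu}_{\pi,v}(x)^{-1}$ is the OPE coefficient of the screening current $\mathsf{S}_4(x)^{-1}$ against $\mathsf{V}^{\lambda\mu\nu}_{\bar{4},\pi}(v)$, so it is assembled from the very same elementary structure functions $g_{\bar{4}},\mathscr{V}_4,\mathscr{S},\mathcal{A}_{\mathbb{C}^4}$ of Def.~\ref{def:structurefunct-QA}. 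Prop.~\ref{prop:PTtwolegs-structure} then guarantees that $\mathscr{PW}^{\bar{4},\lambda\mu\nu}_{\pi,v}(x)$ is a finite product whose zeros sit exactly at $x=\chi_{\bar{4},v}(\cube)$ for addable $\cube$ and at $x=q_4\chi_{\bar{4},v}(\cube)$ for removable $\cube$, so that $\underset{x=\chi_{\bar{4},v}(\scube)}{\Res}x^{-1}\mathscr{PW}^{\bar{4},\lambda\mu\nu}_{\pi,v}(x)^{-1}$ and $\underset{x=q_4\chi_{\bar{4},v}(\scube)}{\Res}x^{-1}\mathscr{PW}^{\bar{4},\lambda\mu\nu}_{\pi+\scube,v}(x)^{-1}$ are well defined. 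Writing $\mathsf{A}^{-1}(\chi_{\bar{4},v}(\cube))$ through $\mathsf{S}_4$ as above, the extra OPE factor from adding the box becomes the ratio of the $\mathsf{S}_4(q_4\chi_{\bar{4},v}(\cube))^{-1}$ contraction to the $\mathsf{S}_4(\chi_{\bar{4},v}(\cube))^{-1}$ contraction; these two contractions are precisely the two residues above once one applies the fusion identity ${:\mathsf{V}^{\lambda\mu\nu}_{\bar{4},\pi+\scube}(v)\mathsf{S}_4(q_4\chi_{\bar{4},v}(\cube))^{-1}:}={:\mathsf{V}^{\lambda\mu\nu}_{\bar{4},\pi}(v)\mathsf{S}_4(\chi_{\bar{4},v}(\cube))^{-1}:}$ already used in the commutativity computation. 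This closes the loop and produces exactly the claimed ratio, the sign $-1$ arising from combining the $\tilde{\eta}_0$ residue sign with the $-q_4 v$ zero-mode prefactors in the definition of $\mathscr{PW}$.

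The hard part will be bookkeeping rather than any deep structural point. I must track the zero-mode prefactors $(-q_4 v)^{-1}$ and the expansion brackets $[\,\cdot\,]^x_{\pm}$ entering the definition of $\mathscr{PW}$, and reconcile the symmetric convention $\sh$ used in the JK computation with the index convention $\mathbb{I}[x]=1-x^{-1}$ of the algebraic side (the origin of the Nekrasov--Okounkov twist and of the factor relating $\mathcal{Z}^{\PT\tbar\QA}$ and $\widetilde{\mathcal{Z}}^{\PT\tbar\QA}$), so that the overall sign and the $q_4$-shift land correctly. I expect the cleanest route is to verify the identity for the empty configuration $\pi=\varnothing$ first, where it is a single OPE of $\mathsf{A}^{-1}$ against $\mathsf{H}$ and reproduces the one-box weights of \eqref{eq:PToneleg-JK-partfunct} and \eqref{eq:PTtwoleg-JK-partfunct}, and then to argue inductively that appending one box multiplies $\mathcal{Z}[\pi]$ by exactly the residue ratio, every new OPE factor being accounted for by the corresponding zero of $\mathscr{PW}$ supplied by Prop.~\ref{prop:PTtwolegs-structure}.
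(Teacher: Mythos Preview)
The paper does not give a detailed proof of this proposition; it is stated immediately after the commutator computation, with the remark that commutativity with the screening charge holds precisely if this recursion does, and is then checked on the one-leg and two-legs examples by explicit substitution of \eqref{eq:oneleg-structurefunct-ex} and \eqref{eq:twoleg-structurefunct-ex}. So you are supplying an argument the paper leaves implicit.

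Your approach is sound and more structural than the paper's direct verification. The factorization $\mathsf{A}^{-1}(y)={:\mathsf{S}_4(q_4y)\mathsf{S}_4(y)^{-1}:}$ is exactly the right bridge between the JK integrand (built from $\mathsf{A}^{-1}$ contractions) and $\mathscr{PW}_\pi$ (built from $\mathsf{S}_4^{-1}$ contractions), and Prop.~\ref{prop:PTtwolegs-structure} guarantees the poles sit where you need them. The fusion identity you quote is precisely what converts the $\mathscr{PW}_\pi(q_4\chi(\cube))$ piece into the residue of $\mathscr{PW}_{\pi+\scube}^{-1}$ at $q_4\chi(\cube)$.

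One point to tighten: your statement that $\mathcal{Z}^{\PT\tbar\QA}_{\bar{4};\lambda\mu\nu}[\pi]$ is ``the self-contraction of $\mathsf{V}^{\lambda\mu\nu}_{\bar{4},\pi}(v)$'' is not literally correct---the OPE product of the $\mathsf{A}^{-1}(x_I)$'s against $\mathsf{H}$ and against each other is the \emph{integrand}, and the weight is its iterated residue. What you actually use (and what is true) is that the ratio $\mathcal{Z}[\pi+\cube]/\mathcal{Z}[\pi]$ equals $(-1)\mathcal{G}$ times the residue at $y=\chi_{\bar{4},v}(\cube)$ of the single extra factor $y^{-1}\mathcal{Z}^{\D6_{\bar 4}\tbar\D2\tbar\D0}_{\DT\tbar\QA;\lambda\mu\nu}(v,y)\prod_{\scube'\in\pi}\mathcal{A}_{\mathbb{C}^4}(\chi_{\bar{4},v}(\cube')/y)^{-1}$. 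Make this explicit; then the $\mathsf{S}_4$ factorization converts that residue into the claimed ratio of $\mathscr{PW}^{-1}$ residues, and the $\mathcal{G}$ together with the residue of $g_{\bar 4}$ at $1$ and the $(-q_4v)^{\pm1}$ prefactors cancel to leave the overall minus sign.
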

Conversely, when one of the legs is trivial, the PT partition function can be bootstrapped using the above recursion formula.

For example, for the one-leg case with $(\lambda,\mu,\nu)=(\varnothing,\varnothing,\Bbox)$, the recursion formula is solved as
\bea
\frac{\mathcal{Z}^{\PT\tbar\QA}_{\bar{4};\varnothing\varnothing\,\Bbox}[k+1]}{\mathcal{Z}^{\PT\tbar\QA}_{\bar{4};\varnothing\varnothing\,\Bbox}[k]}=\frac{1-q_{4}q_{3}^{k+1}}{1-q_{3}^{k+1}},\quad \mathcal{Z}^{\PT\tbar\QA}_{\bar{4};\varnothing\varnothing\,\Bbox}[k]=\prod_{i=1}^{k}\frac{1-q_{4}q_{3}^{i}}{1-q_{3}^{i}}
\eea
where we imposed the initial condition $\mathcal{Z}^{\PT\tbar\QA}_{\bar{4};\varnothing\varnothing\,\Bbox}[0]=1$. This indeed matches with \eqref{eq:PToneleg-JK-partfunct} up to redefinition of the topological term.

For the two-legs case with  $(\lambda,\mu,\nu)=(\Yboxdim{5pt}\yng(1),\yng(1),\varnothing)$, we have
\bea
\frac{\mathcal{Z}^{\PT\tbar\QA}_{\bar{4};\,\Yboxdim{5pt}\yng(1)\,\yng(1)\,\varnothing}[[m,n+1]]}{\mathcal{Z}^{\PT\tbar\QA}_{\bar{4};\,\Yboxdim{5pt}\yng(1)\,\yng(1)\,\varnothing}[[m,n]]}&=\frac{1-q_{4}q_{2}^{n+1}}{1-q_{2}^{n+1}}\frac{\mathscr{S}_{23}(q_{4}q_{1}^{-m}/q_{2}^{-n-1})}{\mathscr{S}_{23}(q_{1}^{-m}/q_{2}^{-n-1})},\\
\frac{\mathcal{Z}^{\PT\tbar\QA}_{\bar{4};\,\Yboxdim{5pt}\yng(1)\,\yng(1)\,\varnothing}[[m+1,n]]}{\mathcal{Z}^{\PT\tbar\QA}_{\bar{4};\,\Yboxdim{5pt}\yng(1)\,\yng(1)\,\varnothing}[[m,n]]}&=\frac{1-q_{4}q_{1}^{m+1}}{1-q_{1}^{m+1}}\frac{\mathscr{S}_{13}(q_{4}q_{2}^{-n}/q_{1}^{-m-1})}{\mathscr{S}_{13}(q_{2}^{-n}/q_{1}^{-m-1})},\\
\frac{\mathcal{Z}^{\PT\tbar\QA}_{\bar{4};\,\Yboxdim{5pt}\yng(1)\,\yng(1)\,\varnothing}[[0,0]]}{\mathcal{Z}^{\PT\tbar\QA}_{\bar{4};\,\Yboxdim{5pt}\yng(1)\,\yng(1)\,\varnothing}[\varnothing]}&=-q_{4}\frac{(1-q_{12})(1-q_{23})(1-q_{13})}{(1-q_{1})(1-q_{2})(1-q_{3})}.
\eea
Imposing the initial condition $\mathcal{Z}^{\PT\tbar\QA}_{\bar{4};\,\Yboxdim{5pt}\yng(1)\,\yng(1)\,\varnothing}[\varnothing]=1$, we can solve the recursion relation:
\bea
\mathcal{Z}^{\PT\tbar\QA}_{\bar{4};\,\Bbox\,\Bbox\,\varnothing}[[m,n]]&=\frac{1-q_{34}}{1-q_{3}}\times \prod_{i=1}^{m}\frac{1-q_{4}q_{1}^{i}}{1-q_{1}^{i}}\prod_{j=1}^{n}\frac{1-q_{4}q_{2}^{j}}{1-q_{2}^{j}} \times  \frac{(1-q_{3}q_{1}^{m+1}/q_{2}^{n})(1-q_{4}q_{1}^{m+1}/q_{2}^{n})}{(1-q_{1}^{m+1}/q_{2}^{n})(1-q_{34}q_{1}^{m+1}/q_{2}^{n})}.
\eea
This indeed matches with \eqref{eq:PTtwoleg-JK-partfunct} up to redefinition of the topological term.

\paragraph{Conjugate PT $qq$-character}
We briefly summarize the formulas and properties related to the conjugate PT $qq$-characters. The structure functions are defined as
\bea
\mathsf{S}_{4}(x)\widetilde{\mathsf{V}}^{\lambda\mu\nu}_{\bar{4},\pi}(v)&=(-q_{4}v)\left[\widetilde{\mathscr{PW}}^{\bar{4},\lambda\mu\nu}_{\pi,v}(x)^{-1}\right]^{x}_{+}:\mathsf{S}_{4}(x)\widetilde{\mathsf{V}}^{\lambda\mu\nu}_{\bar{4},\pi}(v):,\\
\widetilde{\mathsf{V}}^{\lambda\mu\nu}_{\bar{4},\pi}(v)\mathsf{S}_{4}(x)&=(-q_{4}v)\left[\widetilde{\mathscr{PW}}^{\bar{4},\lambda\mu\nu}_{\pi,v}(x)^{-1}\right]^{x}_{-}:\mathsf{S}_{4}(x)\widetilde{\mathsf{V}}^{\lambda\mu\nu}_{\bar{4},\pi}(v):.
\eea
\begin{proposition}\label{prop:PTtwolegs-structure-conjugate}
    When one of the three-legs are trivial, the structure function $\widetilde{\mathscr{PW}}^{\bar{4},\lambda\mu\nu}_{\pi,v}(x)$ has zeros only at the positions related with the addable and removable boxes of the conjugate PT configuration $\pi$:
    \bea
\widetilde{\mathscr{PW}}^{\bar{4},\lambda\mu\nu}_{\pi,v}(x)\propto \prod_{\scube\in A(\pi)}\left(1-\chi_{\bar{4},v}(\cube)/x\right) \prod_{\scube\in R(\pi)}\left(1-q_{4}^{-1}\chi_{\bar{4},v}(\scube)/x\right).
    \eea
\end{proposition}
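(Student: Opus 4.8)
The plan is to compute $\widetilde{\mathscr{PW}}^{\bar{4},\lambda\mu\nu}_{\pi,v}(x)$ directly from its defining contractions and then reduce the claim to the already-established Proposition~\ref{prop:PTtwolegs-structure} via the conjugation map of section~\ref{sec:PT3counting-conjugate}. First I would expand the contraction $\mathsf{S}_{4}(x)\widetilde{\mathsf{V}}^{\lambda\mu\nu}_{\bar{4},\pi}(v)$ into a product of elementary operator product factors using Proposition~\ref{prop:OPE-formula}: each $\mathsf{W}_{\bar{4}}$ appearing in $\widetilde{\mathsf{H}}_{\bar{4};\lambda\mu\nu}(v)$ — located at the conjugated points $q_{4}^{-1}\chi_{\bar{4},v^{-1}}(\cube)^{-1}$ for $\cube\in s(\vec{Y}),p_{1}(\vec{Y}),p_{2}(\vec{Y})$ as recorded in \eqref{eq:highestweight-def} — contributes a factor of the form $x'/(1-q_{4}^{-1}x'/x)$ or its reciprocal, while each $\mathsf{A}^{-1}(\chi_{\bar{4},v}(\cube))$ for $\cube\in\pi$ contributes a $g_{\bar{4}}$-factor. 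Because one leg is trivial, Proposition~\ref{prop:PTJK-pole} guarantees that all boxes of $\pi$ sit at distinct positions with only simple poles, so the resulting $\widetilde{\mathscr{PW}}^{\bar{4},\lambda\mu\nu}_{\pi,v}(x)$ is a genuine rational function whose zeros and poles can be tracked box by box.

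Second, I would observe that $\widetilde{\mathsf{H}}_{\bar{4};\lambda\mu\nu}(v)$ is obtained from $\mathsf{H}_{\bar{4};\lambda\mu\nu}(v)$ by interchanging numerator and denominator and sending the arguments $\chi_{\bar{4},v}(\cube)\mapsto q_{4}^{-1}\chi_{\bar{4},v^{-1}}(\cube)^{-1}$, which is exactly the substitution $q_{a}\mapsto q_{a}^{-1}$ (with the compensating $q_{4}^{-1}$-shift) underlying the conjugation map. I would therefore show that $\widetilde{\mathscr{PW}}^{\bar{4},\lambda\mu\nu}_{\pi,v}(x)$ equals $\mathscr{PW}^{\bar{4},\lambda\mu\nu}_{\bar\pi,v}(x)$ evaluated with all $q_{a}$ replaced by $q_{a}^{-1}$, where $\bar\pi\in\mathcal{PT}_{\lambda\mu\nu}$ is the image of $\pi\in\widetilde{\mathcal{PT}}_{\lambda\mu\nu}$ under the coordinate reversal of Fig.~\ref{fig:PTreverse}. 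Under this identification the addable and removable boxes of $\pi$ are precisely those of $\bar\pi$, so Proposition~\ref{prop:PTtwolegs-structure} immediately yields zeros at $x=\chi_{\bar{4},v}(\cube)$ for $\cube\in A(\pi)$, while the removable factor $(1-q_{4}\chi_{\bar{4},v}(\cube)/x)$ of the original proposition is carried by $q_{4}\mapsto q_{4}^{-1}$ to $(1-q_{4}^{-1}\chi_{\bar{4},v}(\cube)/x)$, which is exactly the stated conjugate formula.

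As an intrinsic cross-check I would verify the cancellations directly, without appealing to the original proposition: the $\mathsf{W}_{\bar{4}}$-contractions arising from the rod decomposition of the boundary legs (section~\ref{sec:infinite-product-reg}) combine telescopically with the $g_{\bar{4}}$-factors of the boxes in $\pi$ so that all numerator and denominator factors cancel except at the frontier of the configuration, i.e.\ at the addable positions and one $q_{4}^{-1}$-shift past each removable position. This is the same mechanism that produces the clean form \eqref{eq:twoleg-structurefunct-ex} in the non-conjugate case, and I would run it to fix the overall proportionality constant, including the prefactor $(-q_{4}v)$ and any residual monomial in $x$.

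The main obstacle will be the bookkeeping of these cancellations: showing that no spurious zeros or poles survive away from $A(\pi)\cup R(\pi)$. Each interior box of $\pi$ produces a $g_{\bar{4}}$-factor carrying several zeros and poles, and one must check, leg by leg and at each convex and concave corner catalogued by $s(\vec{Y}),p_{1}(\vec{Y}),p_{2}(\vec{Y})$, that these telescope against the $\mathsf{W}_{\bar{4}}$-factors of $\widetilde{\mathsf{H}}_{\bar{4};\lambda\mu\nu}(v)$. The triviality of one leg is essential, since it forbids the full-convex corners of $p_{2}(\vec{Y})$ and the doubled $\mathsf{W}_{\bar{4}}$-factors they would produce, and hence the second-order poles that otherwise destroy the clean rational form; I expect the argument to close uniformly once the single-trivial-leg combinatorics of $A(\pi)$ and $R(\pi)$ is matched with that of $\bar\pi$ through the coordinate reversal.
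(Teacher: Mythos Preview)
The paper does not actually prove this proposition; it is stated without argument in the brief ``Conjugate PT $qq$-character'' paragraph as the parallel of Proposition~\ref{prop:PTtwolegs-structure}, which itself was only illustrated through the one- and two-leg examples \eqref{eq:oneleg-structurefunct-ex}--\eqref{eq:twoleg-structurefunct-ex} rather than established in general. Your proposal is therefore already more explicit than what the paper offers.

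That said, your central reduction step has a gap. You claim that $\widetilde{\mathscr{PW}}^{\bar{4},\lambda\mu\nu}_{\pi,v}(x)$ equals $\mathscr{PW}^{\bar{4},\lambda\mu\nu}_{\bar\pi,v}(x)$ under $q_a\mapsto q_a^{-1}$, but the two structure functions are built from contractions with \emph{different} screening currents: the original uses $\mathsf{S}_4(x)^{-1}$ while the conjugate uses $\mathsf{S}_4(x)$. From Proposition~\ref{prop:OPE-formula} the OPE of $\mathsf{S}_4$ with $\mathsf{A}^{-1}$ gives $g_{\bar 4}(q_4\,\cdot/x)^{-1}$, whereas $\mathsf{S}_4^{-1}$ with $\mathsf{A}^{-1}$ gives $g_{\bar 4}(q_4\,\cdot/x)$, so the box contributions are \emph{inverted}, not parameter-flipped. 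Combined with the numerator/denominator swap and $q_4^{-1}$-shift in the arguments of $\widetilde{\mathsf{H}}$ versus $\mathsf{H}$, the net effect on the zero locus is indeed $q_4\to q_4^{-1}$ as you want, but this is a short computation in its own right and does not follow from the conjugation map of section~\ref{sec:PT3counting-conjugate} alone, since that map acts on the JK integrand rather than on the screening-current contraction.

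A more direct route, which is what the paper implicitly relies on, is the remark made just after Proposition~\ref{prop:PTtwolegs-structure}: the DT structure functions, also built from the \emph{standard} $\mathsf{S}_4$, already have zeros at $\chi_{\bar 4,v}(\cube)$ for addable boxes and at $q_4^{-1}\chi_{\bar 4,v}(\cube)$ for removable ones. Since $\widetilde{\mathscr{PW}}$ uses the same $\mathsf{S}_4$ and the same $g_{\bar 4}$-telescoping mechanism, only with conjugate-PT configurations in place of plane partitions, the $q_4^{-1}$ shift is automatic. Your proposed ``intrinsic cross-check'' via direct telescoping is exactly this argument and would suffice on its own, without the detour through Proposition~\ref{prop:PTtwolegs-structure}.
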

After studying the commutativity with the screening charge $\mathscr{Q}_{4}(x)$, we will see that the conjugate PT partition functions obey the following recursion formula.
\begin{proposition}
    When one of the three-legs is trivial, the conjugate PT partition function obeys the recursion formula
    \bea
\frac{\widetilde{\mathcal{Z}}^{\PT\tbar\QA}_{\bar{4};\lambda\mu\nu}[\pi+\cube]}{\widetilde{\mathcal{Z}}^{\PT\tbar\QA}_{\bar{4};\lambda\mu\nu}[\pi]}=-\frac{\underset{x=\chi_{\bar{4},v}(\scube)}{\Res}{x}^{-1} \widetilde{\mathscr{PW}}^{\bar{4},\lambda\mu\nu}_{\pi,v}(x)^{-1}}{\underset{x=q_{4}^{-1}\chi_{\bar{4},v}(\scube)}{\Res}{x}^{-1} \widetilde{\mathscr{PW}}^{\bar{4},\lambda\mu\nu}_{\pi+\scube,v}(x)^{-1}}.
    \eea
\end{proposition}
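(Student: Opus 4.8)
The plan is to mirror the proof of the preceding (non-conjugate) recursion, trading the twisted screening charge $\widetilde{\mathscr{Q}}_{4}(x)$ for the conventional one $\mathscr{Q}_{4}(x)=\sum_{k\in\mathbb{Z}}\mathsf{S}_{4}(q_{4}^{k}x)$ and the structure function $\mathscr{PW}$ for its conjugate $\widetilde{\mathscr{PW}}$. The starting point is the commutativity $[\widetilde{\mathsf{PT}}_{\bar{4};\lambda\mu\nu}(v),\mathscr{Q}_{4}(x)]=0$ established above in the case where one leg is trivial. First I would expand $\widetilde{\mathsf{PT}}_{\bar{4};\lambda\mu\nu}(v)=\sum_{\pi\in\widetilde{\mathcal{PT}}_{\lambda\mu\nu}}\fq^{|\pi|}\widetilde{\mathcal{Z}}^{\PT\tbar\QA}_{\bar{4};\lambda\mu\nu}[\pi]\,\widetilde{\mathsf{V}}^{\lambda\mu\nu}_{\bar{4},\pi}(v)$ and compute, term by term, the commutator of each monomial $\widetilde{\mathsf{V}}^{\lambda\mu\nu}_{\bar{4},\pi}(v)$ with $\mathsf{S}_{4}(x)$. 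Using the two orderings that define $\widetilde{\mathscr{PW}}^{\bar{4},\lambda\mu\nu}_{\pi,v}(x)$ together with the identity $[f]_{+}-[f]_{-}=\sum(\text{residues})\,\delta$, this commutator localizes onto the poles of $\widetilde{\mathscr{PW}}^{\bar{4},\lambda\mu\nu}_{\pi,v}(x)^{-1}$, which by Prop.~\ref{prop:PTtwolegs-structure-conjugate} sit exactly at the addable boxes $x=\chi_{\bar{4},v}(\cube)$, $\cube\in A(\pi)$, and the removable boxes $x=q_{4}^{-1}\chi_{\bar{4},v}(\cube)$, $\cube\in R(\pi)$.

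The key manipulation is then to reindex the removable contribution by $\pi'=\pi-\cube$, so that a removable box of $\pi'+\cube$ becomes an addable box of $\pi'$; its weight is $\widetilde{\mathcal{Z}}^{\PT\tbar\QA}_{\bar{4};\lambda\mu\nu}[\pi'+\cube]$ while the genuine addable contribution of $\pi'$ carries weight $\widetilde{\mathcal{Z}}^{\PT\tbar\QA}_{\bar{4};\lambda\mu\nu}[\pi']$. The normal-ordered vertex operators of the two contributions are identified using the fusion relation $\mathsf{A}^{-1}(x)={:\mathsf{S}_{4}(q_{4}x)\mathsf{S}_{4}(x)^{-1}:}$ that follows from \eqref{eq:vertexoprelation}, which collapses the extra screening factor generated by adjoining the box $\cube$; after this step both contributions share the same operator and differ only in the location of the localizing delta function. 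Schematically the total commutator organizes into a $q_{4}^{-1}$-difference,
\bea
[\widetilde{\mathsf{PT}}_{\bar{4};\lambda\mu\nu}(v),\mathsf{S}_{4}(x)]\sim \sum_{\pi}\sum_{\scube\in A(\pi)}\Big(\widetilde{\mathcal{Z}}[\pi]\,A_{\scube}+\widetilde{\mathcal{Z}}[\pi+\cube]\,R_{\scube}\Big)\Big(\delta(x/\chi_{\bar{4},v}(\cube))-\delta(q_{4}x/\chi_{\bar{4},v}(\cube))\Big):\cdots:,
\eea
where $A_{\scube}$ and $R_{\scube}$ denote respectively the addable residue of $x^{-1}\widetilde{\mathscr{PW}}^{\bar{4},\lambda\mu\nu}_{\pi,v}(x)^{-1}$ at $x=\chi_{\bar{4},v}(\cube)$ and the removable residue of $x^{-1}\widetilde{\mathscr{PW}}^{\bar{4},\lambda\mu\nu}_{\pi+\scube,v}(x)^{-1}$ at $x=q_{4}^{-1}\chi_{\bar{4},v}(\cube)$.

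Because $\mathscr{Q}_{4}(x)$ is the full $q_{4}$-orbit sum, the $q_{4}^{-1}$-difference of delta functions telescopes upon contraction with $\mathscr{Q}_{4}(x)$, and $[\widetilde{\mathsf{PT}}_{\bar{4};\lambda\mu\nu}(v),\mathscr{Q}_{4}(x)]=0$ becomes equivalent to the vanishing of each bracket $\widetilde{\mathcal{Z}}[\pi]\,A_{\scube}+\widetilde{\mathcal{Z}}[\pi+\cube]\,R_{\scube}=0$. Rearranging gives precisely $\widetilde{\mathcal{Z}}^{\PT\tbar\QA}_{\bar{4};\lambda\mu\nu}[\pi+\cube]/\widetilde{\mathcal{Z}}^{\PT\tbar\QA}_{\bar{4};\lambda\mu\nu}[\pi]=-A_{\scube}/R_{\scube}$, the asserted recursion; conversely, imposing the recursion makes each bracket vanish, so the two statements are equivalent. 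As in the non-conjugate case, the argument uses crucially that one leg is trivial, so that all poles are simple (Prop.~\ref{prop:PTJK-pole}); the three-leg case is excluded precisely because the second-order poles would spoil this clean delta-function localization.

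The main obstacle I anticipate is the careful bookkeeping of the scalar prefactors and, above all, the \emph{direction} of the $q_{4}$-shift, which is reversed relative to the non-conjugate case: one must verify the conjugate fusion identity matching $:\mathsf{S}_{4}(q_{4}^{-1}\chi_{\bar{4},v}(\cube))\widetilde{\mathsf{V}}^{\lambda\mu\nu}_{\bar{4},\pi+\scube}(v):$ with the addable operator of $\pi$, while tracking the zero-mode factor $(-q_{4}v)$ in the definition of $\widetilde{\mathscr{PW}}$ and the overall sign. To avoid re-deriving every OPE, I would run an independent cross-check via the conjugation map of section~\ref{sec:PT3counting-conjugate}, namely $q_{a}\to q_{a}^{-1}$ together with the coordinate reversal of Fig.~\ref{fig:PTreverse}, applied to the already-established recursion for $\mathcal{Z}^{\PT\tbar\QA}$; this should carry the non-conjugate relation, whose removable pole sits at $x=q_{4}\chi_{\bar{4},v}(\cube)$, onto the conjugate one with removable pole at $x=q_{4}^{-1}\chi_{\bar{4},v}(\cube)$, thereby confirming the stated formula and pinning down the sign.
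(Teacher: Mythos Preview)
Your proposal is correct and matches the paper's own approach: the paper does not spell out a separate proof but simply states that the conjugate recursion follows from the commutativity $[\widetilde{\mathsf{PT}}_{\bar4;\lambda\mu\nu}(v),\mathscr{Q}_4(x)]=0$ by rerunning the non-conjugate screening-charge argument with $\mathscr{Q}_4$ in place of $\widetilde{\mathscr{Q}}_4$ and $\widetilde{\mathscr{PW}}$ in place of $\mathscr{PW}$, which is exactly the plan you outline. Your flagging of the $q_4$-shift direction in the operator-matching step as the one delicate point is accurate; the relevant collapse is ${:}\mathsf{A}^{-1}(\chi)\mathsf{S}_4(\chi){:}=\mathsf{S}_4(q_4\chi)$ (up to zero-modes), obtained from the same relation $\mathsf{A}^{-1}(x)={:}\mathsf{S}_4(q_4x)\mathsf{S}_4(x)^{-1}{:}$ you cite.
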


\subsubsection{Three-legs PT3 $qq$-characters and collision limit}\label{sec:PT3-collisionlimit}
Compared to the two-legs PT3 $qq$-character, the contraction between the three-legs $qq$-character and the screening charge contains a second order pole. Let us focus on the case $(\lambda,\mu,\nu)=(\Bbox,\Bbox,\Bbox)$:
\bea
\mathsf{H}_{\bar{4};\,\Bbox\,\Bbox\,\Bbox}(v)\mathsf{S}_{4}(x)^{-1}&=x^{-2}\frac{(1-q_{1}x/v)(1-q_{2}x/v)(1-q_{3}x/v)}{(1-x/v)^{2}}:\mathsf{H}_{\bar{4};\,\Bbox\,\Bbox\,\Bbox}(v)\mathsf{S}_{4}(x)^{-1}:,\\
\mathsf{S}_{4}(x)^{-1}\mathsf{H}_{\bar{4};\,\Bbox\,\Bbox\,\Bbox}(v)&=-(q_{4}v)^{-1}\frac{(1-q_{1}^{-1}v/x)(1-q_{2}^{-1}v/x)(1-q_{3}^{-1}v/x)}{(1-v/x)^{2}}:\mathsf{H}_{\bar{4};\,\Bbox\,\Bbox\,\Bbox}(v)\mathsf{S}_{4}(x)^{-1}:.
\eea
The commutation relation then includes derivatives of delta functions due to the second order pole and one needs to deal with them to study the commutativity.

Unfortunately, in the context of quantum algebra, a detailed discussion in this direction is missing. Instead, we will use the collision limit to derive $qq$-characters with higher order poles. The strategy to take the collision is as follows.\footnote{ See Appendix~\ref{app-sec:A1qqcharacter-collisionlimit} for a review of the application to the $A_{1}$ $qq$-character.} Let $\mathsf{T}_{1,2}(x_{1,2})$ be $qq$-characters commuting with the screening charge whose highest weights are $\mathscr{O}_{1,2}(x_{1,2})$, respectively. We further assume that the operator product with the screening charge contains higher order poles at $x_{2}=\alpha x_{1}$. 

The $qq$-character with the highest weight ${:\mathcal{O}_{1}(x_{1})\mathcal{O}_{2}(x_{2}):}$ with generic $x_{1,2}$ can be obtained by studying the commutation relation with the screening charge. The $qq$-character will take the form as
\bea\label{eq:higherrankqq}
:\mathcal{O}_{1}(x_{1})\mathcal{O}_{2}(x_{2}):+\mathfrak{q}\left(\# :\mathcal{O}_{1}(x_{1})\mathsf{A}^{-1}(x_{1})\mathcal{O}_{2}(x_{2}):+\#:\mathcal{O}_{1}(x_{1})\mathsf{A}^{-1}(x_{2})\mathcal{O}_{2}(x_{2}):\right)+\cdots,
\eea
with some coefficients depending on the operators and the spectral parameters. 

A different way to derive this is to study the fusing of the $qq$-characters $\mathsf{T}_{1}(x_{1})\mathsf{T}_{2}(x_{2})$:
\bea\label{eq:higherrankqq2}
\mathsf{f}_{12}\left(\frac{x_{2}}{x_{1}}\right)\mathsf{T}_{1}(x_{1})\mathsf{T}_{2}(x_{2})&={:\mathcal{O}_{1}(x_{1})\mathcal{O}_{2}(x_{2}):}+\cdots ,\quad
\mathcal{O}_{1}(x_{1})\mathcal{O}_{2}(x_{2})=\mathsf{f}_{12}\left(\frac{x_{2}}{x_{1}}\right)^{-1}:\mathcal{O}_{1}(x_{1})\mathcal{O}_{2}(x_{2}):.
\eea
Since $\mathsf{T}_{1}(x_{1}), \mathsf{T}_{2}(x_{2})$ commute with the screening charge, the fused $qq$-character also commutes with it. 

To get the $qq$-character with $:\mathcal{O}_{1}(x_{1})\mathcal{O}_{2}(\alpha x_{1}):$, we simply take the limit $x_{2}\rightarrow\alpha x_{1}$ of the right hand side in \eqref{eq:higherrankqq}, \eqref{eq:higherrankqq2}. Each of the coefficients $\#$ is singular generally, but after combining all the terms, we may get a well-defined operator sum. If so, the new operator sum gives a new $qq$-character.

Let us perform this explicitly. We focus only on the PT $qq$-character and we omit the discussion for the conjugate PT $qq$-character. The highest weight is
\bea
\mathsf{H}_{\bar{4};\,\Bbox\,\Bbox\,\Bbox}(v)&={:\mathsf{H}_{\bar{4};\,\Bbox\,\Bbox\,\varnothing}(v)\frac{\mathsf{W}_{\bar{4}}(q_{3}v)\mathsf{W}_{\bar{4}}(q_{3}q_{2}q_{1}v)}{\mathsf{W}_{\bar{4}}(q_{3}q_{1}v)\mathsf{W}_{\bar{4}}(q_{3}q_{2}v)}:}\simeq :\mathsf{H}_{\bar{4};\,\Bbox\,\Bbox\,\varnothing}(v)\mathsf{S}_{3}(q_{3}v)^{-1}:
\eea
where in the last line we instead used the screening charge expression. Note that the overall zero-modes only differ but the commutation relation does not change. A natural way to decompose the highest weight is $:\mathsf{H}_{\bar{4};\,\Bbox\,\Bbox\,\varnothing}(v_1)\mathsf{S}_{3}(q_{3}v_{2})^{-1}:$. One would then expect that we need to fuse the two-legs PT3 $qq$-character $\mathsf{PT}_{\bar{4};\,\Bbox\,\Bbox\varnothing}(v_{1})$ and the screening charge $\widetilde{\mathscr{Q}_{3}}(q_{3}v_{2})$ and then take the collision limit $v_{2}\rightarrow v_{1}$.

The main claim of this section is the following conjecture.
\begin{conjecture}\label{conj:PT3leg-fusion-total}
     The three-legs PT3 $qq$-character $\mathsf{PT}_{\bar{4};\,\Bbox\,\Bbox\,\Bbox}(v_{1})$ is obtained by the collision limit of the two-legs PT3 $qq$-character $\mathsf{PT}_{\bar{4};\,\Bbox\,\Bbox\,\varnothing}(v_{1})$ and the screening charge $\widetilde{\mathscr{Q}}_{3}(q_{3}v_{2})$ with $v_{2}\rightarrow v_{1}$:
    \bea
    \mathsf{f}\left(\frac{v_{2}}{v_{1}}\right)\mathsf{PT}_{\bar{4};\,\Bbox\,\Bbox\,\varnothing}(v_{1})\widetilde{\mathscr{Q}}_{3}(q_{3}v_{2})\xrightarrow{v_{2}\rightarrow v_{1}} \mathsf{PT}_{\bar{4};\,\Bbox\,\Bbox\,\Bbox}(v_{1}).
    \eea
\end{conjecture}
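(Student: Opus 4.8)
The plan is to verify Conjecture~\ref{conj:PT3leg-fusion-total} by matching both sides order-by-order in the instanton counting parameter $\fq$, using the explicit two-leg PT $qq$-character from section~\ref{sec:twolegs-PTqq} together with the collision mechanism outlined in \eqref{eq:higherrankqq}--\eqref{eq:higherrankqq2}. First I would fix the fusion factor $\mathsf{f}(v_2/v_1)$ by computing the contraction between the highest weight $\mathsf{H}_{\bar{4};\,\Cbox\,\Cbox\varnothing}(v_1)$ of $\mathsf{PT}_{\bar{4};\,\Cbox\,\Cbox\varnothing}(v_1)$ and the vertex operators $\mathsf{S}_3(q_3 v_2)^{-1}$ assembling the screening charge $\widetilde{\mathscr{Q}}_3(q_3 v_2)$. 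Using Prop.~\ref{prop:OPE-formula} and the relation \eqref{eq:vertexoprelation}, the product of zero-modes and the OPE singularities determine $\mathsf{f}$ uniquely (up to the standard normal-ordering ambiguity) as the inverse of the contraction producing $:\mathsf{H}_{\bar{4};\,\Cbox\,\Cbox\varnothing}(v_1)\mathsf{S}_3(q_3 v_2)^{-1}:\,\simeq\,\mathsf{H}_{\bar{4};\,\Cbox\,\Cbox\,\Cbox}(v_1)$ at $v_2\to v_1$, so that the leading ($\fq^0$) term of the left-hand side collapses to the correct three-leg highest weight.

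Next I would establish that the fused object commutes with the screening charge $\widetilde{\mathscr{Q}}_4(x)$ for \emph{generic} $v_2$. This follows cheaply: $\mathsf{PT}_{\bar{4};\,\Cbox\,\Cbox\varnothing}(v_1)$ commutes with $\widetilde{\mathscr{Q}}_4(x)$ by the two-leg theorem of section~\ref{sec:PT3qq-screeningcharge}, and $\widetilde{\mathscr{Q}}_3(q_3 v_2)$ commutes with $\widetilde{\mathscr{Q}}_4(x)$ by Prop.~stated above (commutativity of the new screening charges for $a\neq b$). Since $\mathsf{f}(v_2/v_1)$ is a scalar function, the product $\mathsf{f}(v_2/v_1)\,\mathsf{PT}_{\bar{4};\,\Cbox\,\Cbox\varnothing}(v_1)\,\widetilde{\mathscr{Q}}_3(q_3 v_2)$ commutes with $\widetilde{\mathscr{Q}}_4(x)$ at generic $v_2$. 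The remaining content of the conjecture is that the $v_2\to v_1$ limit is \emph{finite} and reproduces the three-leg operator sum of section~\ref{sec:freefield-contourintegral} term by term, including the derivative vertex operators $v\partial_v \mathsf{A}^{-1}(v)$ appearing in \eqref{eq:PT3qq-level1-JK}.

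The key structural observation making the limit work is that the two-leg expansion is a sum over monomials $\mathsf{V}^{\Cbox\,\Cbox\varnothing}_{\bar{4},\pi}(v_1)$ labeled by $\pi=[m,n]$, and acting with $\widetilde{\mathscr{Q}}_3(q_3 v_2)$ inserts an extra $\mathsf{A}^{-1}$-string along the third axis seeded at $q_3 v_2$. As $v_2\to v_1$ two vertex operators collide at coincident arguments; individually the coefficients $\#$ in \eqref{eq:higherrankqq} develop simple poles from the factor $(1-v_2/v_1)$ in $\mathsf{f}^{-1}$, but the \emph{sum} of the two colliding terms $\#\,{:}\cdots\mathsf{A}^{-1}(x_1)\cdots{:}+\#\,{:}\cdots\mathsf{A}^{-1}(x_2)\cdots{:}$ is finite because the residues cancel, leaving a derivative term $v\partial_v\mathsf{A}^{-1}(v)$ precisely of the type produced by the second-order JK pole. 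I would carry this out explicitly through $\fq^3$: level one produces the single derivative term matching \eqref{eq:PT3qq-level1-JK}; level two produces $\mathsf{A}^{-2}(v)$ (the ultra-heavy box, from the coincident residue $\mathcal{Z}_{(0,0)}$) plus the diagonal strings $\mathsf{A}^{-1}(v)\mathsf{A}^{-1}(q_i^{-1}v)$ matching \eqref{eq:PT3qq-level2-JK}; and level three matches the six-configuration expansion. The coefficients should reproduce $\mathcal{Z}^{\PT\tbar\QA}_{\bar{4};\,\Cbox\,\Cbox\,\Cbox}[\pi]$ after using the two-leg recursion of Prop.~stated in section~\ref{sec:twolegs-PTqq} and the known residue values \eqref{eq:3leglevel1}--\eqref{eq:3leglevel3}.

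The hard part will be controlling the collision limit uniformly in the instanton number rather than level-by-level. Two difficulties stand out. First, I must prove that the cancellation of simple poles in the colliding coefficients is not an accident of low levels but a structural identity: the natural route is to show that the two-leg structure function $\mathscr{PW}^{\bar{4},\Cbox\,\Cbox\varnothing}_{\pi,v}(x)$ of Prop.~\ref{prop:PTtwolegs-structure}, evaluated along the third-axis string inserted by $\widetilde{\mathscr{Q}}_3$, has its spurious $v_2=v_1$ pole exactly matched between the two configurations related by moving a box from $x_2$ to $x_1$, which should follow from the zero/pole structure dictated by addable and removable boxes. Second, the finiteness of the limit is genuinely a statement about derivatives of delta functions in the screening-charge commutator, and translating the JK second-order-pole bookkeeping of section~\ref{sec:PTthreelegs} into the operator derivative $v\partial_v\mathsf{A}^{-1}(v)$ requires a careful treatment of the normal-ordering contractions when two $\mathsf{A}^{-1}$ arguments coincide. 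Because a fully uniform proof seems to require a closed-form handle on these colliding residues that I do not yet have, I would present the result as verified through low instanton number (consistent with the paper calling it a conjecture), with the pole-cancellation identity isolated as the crucial lemma whose general proof is left open.
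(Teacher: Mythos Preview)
Your level-by-level verification strategy and the identification of the derivative term $v\partial_v\mathsf{A}^{-1}(v)$ from colliding simple poles are exactly what the paper does for the positive-$\fq$ sector, and your plan to check through $\fq^3$ in fact goes one level further than the paper (which stops at $\fq^2$). But you are missing one structural step that the paper isolates first.

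The screening charge $\widetilde{\mathscr{Q}}_3(q_3v_2)=\sum_{k\in\mathbb{Z}}\mathsf{S}_3(q_3^{k+1}v_2)^{-1}$ is a bilateral sum, and the paper rewrites it as $\widetilde{\mathscr{Q}}_3^{+}+\widetilde{\mathscr{Q}}_3^{-}$, where the positive half (graded by $\fq^{k\ge0}$) carries $\mathsf{A}^{-1}$-strings descending in $q_3$ and the negative half (graded by $\fq^{-k<0}$) carries $\mathsf{A}$-strings ascending in $q_3$. Your sentence ``acting with $\widetilde{\mathscr{Q}}_3(q_3v_2)$ inserts an extra $\mathsf{A}^{-1}$-string along the third axis'' is only the positive half. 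The paper's Proposition~\ref{prop:PT3leg-negativescreening} shows that the negative half vanishes in the collision limit: the OPE of each two-leg monomial $\mathsf{V}^{\Cbox\,\Cbox\varnothing}_{\bar4,[m,n]}(v_1)$ with $\mathsf{S}_3(q_3^{k+1}v_2)^{-1}$ for $k>0$ is proportional to $(v_2/v_1;q_3)_{k+1}$, which has a zero at $v_2=v_1$. Without this lemma there is no a~priori reason the fused object should even be a formal power series in $\fq$, so you should add it as an early step; the paper then repackages the remaining claim (your level-by-level computation) as a separate Conjecture~\ref{conj:PT3leg-positivescreening} for the positive half.

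One more remark: your commutativity-with-$\widetilde{\mathscr{Q}}_4$ argument is correct but does not do work here. The paper explicitly states (opening of section~\ref{sec:PT3-collisionlimit}) that it does not know how to characterise the three-leg PT $qq$-character via screening-charge commutativity once second-order poles appear, so commutativity cannot be used to \emph{identify} the collision limit with $\mathsf{PT}_{\bar4;\,\Cbox\,\Cbox\,\Cbox}(v_1)$; the identification has to go through the direct match with the JK-residue expressions \eqref{eq:PT3qq-level1-JK}, \eqref{eq:PT3qq-level2-JK}, exactly as you then propose.
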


Let us show this is indeed the case for low levels. We first decompose the screening charge into two parts
\bea
\widetilde{\mathscr{Q}}_{3}(q_{3}v_{2})
&=\sum_{k=0}^{\infty}\mathfrak{q}^{k}{:\mathsf{S}_{3}(q_{3}v_{2})^{-1}\prod_{i=1}^{k}\mathsf{A}^{-1}(q_{3}^{-i+1}v_{2}):}+\sum_{k=1}^{\infty}\mathfrak{q}^{-k}:\mathsf{S}_{3}(q_{3}v_{2})^{-1}\prod_{i=1}^{k}\mathsf{A}(v_{2}q_{3}^{i}):\\
&=\widetilde{\mathscr{Q}}^{+}_{3}(q_{3}v_{2})+\widetilde{\mathscr{Q}}^{-}_{3}(q_{3}v_{2})
\eea
where we included the topological term $\mathfrak{q}$ for convenience.

\paragraph{Negative part}Let us first consider the fusion with the negative part of the screening charge $\widetilde{\mathscr{Q}}_{3}^{-}(q_{3}v_{2})$. An interesting property is that after taking the limit $v_{2}\rightarrow v_{1}$, actually the fusion with the negative part vanishes.
\begin{proposition}\label{prop:PT3leg-negativescreening}
    For each term of the two-legs PT $qq$-character $\mathsf{PT}_{\bar{4};\Bbox\,\Bbox\,\varnothing}(v_{1})$, the fusion with each term of $\widetilde{\mathscr{Q}}_{3}^{-}(v_{2})$ vanishes at the limit $v_{2}\rightarrow v_{1}$:
    \bea
\mathsf{f}\left(\frac{v_{2}}{v_{1}}\right)\mathsf{V}^{\,\Bbox\,\Bbox\,\varnothing}_{\bar{4},[m,n]}(v_{1})\mathsf{S}_{3}(q_{3}v_{2}q_{3}^{k})\xrightarrow {v_{2}\rightarrow v_{1}}0,
    \eea
    for $m,n\in\mathbb{Z}\geq 0$ and $k>0$. Hence, we have
    \bea
    \mathsf{f}\left(\frac{v_{2}}{v_{1}}\right)\mathsf{PT}_{\bar{4};\,\Bbox\,\Bbox\,\varnothing}(v_{1})\widetilde{\mathscr{Q}}^{-}_{3}(q_{3}v_{2})\xrightarrow{v_{2}\rightarrow v_{1}}0.
    \eea
\end{proposition}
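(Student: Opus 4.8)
The plan is to prove the vanishing one monomial at a time and then reassemble the full statement order by order in $\fq$. First I would make the negative part of the screening charge completely explicit. Using the telescoping identity $\mathsf{A}(x)={:}\mathsf{S}_{3}(x)\mathsf{S}_{3}(q_{3}x)^{-1}{:}$ from \eqref{eq:vertexoprelation}, the $k$-th summand of $\widetilde{\mathscr{Q}}^{-}_{3}(q_{3}v_{2})$ collapses to a single screening vertex operator,
\begin{align}
\fq^{-k}{:}\mathsf{S}_{3}(q_{3}v_{2})^{-1}\prod_{i=1}^{k}\mathsf{A}(v_{2}q_{3}^{i}){:}\ =\ \fq^{-k}{:}\mathsf{S}_{3}(q_{3}^{k+1}v_{2})^{-1}{:},\qquad k\geq 1,
\end{align}
so the object to control is $\mathsf{f}(v_{2}/v_{1})\,\mathsf{V}^{\Bbox\Bbox\varnothing}_{\bar{4},[m,n]}(v_{1})\,\mathsf{S}_{3}(q_{3}^{k+1}v_{2})^{-1}$ as $v_{2}\to v_{1}$. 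Writing this contraction as a scalar prefactor $\Phi^{(k)}_{m,n}(v_{2}/v_{1})$ times a normal-ordered product that is manifestly regular at $v_{2}=v_{1}$, the claim reduces to showing $\mathsf{f}(v_{2}/v_{1})\,\Phi^{(k)}_{m,n}(v_{2}/v_{1})\to 0$ for every $m,n\geq 0$ and $k\geq 1$.

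Next I would pin down the zero of the fusion factor. By its definition through \eqref{eq:higherrankqq2} and Conj.~\ref{conj:PT3leg-fusion-total}, $\mathsf{f}(v_{2}/v_{1})$ is the reciprocal of the contraction coefficient of the two highest weights $\mathsf{H}_{\bar{4};\Bbox\Bbox\varnothing}(v_{1})$ and $\mathsf{S}_{3}(q_{3}v_{2})^{-1}$, i.e. the $k=0$ base term. Using the rod presentation $\mathsf{H}_{\bar{4};\Bbox\Bbox\varnothing}(v)\simeq\,{:}\mathsf{W}_{\bar{4}}(v)\mathsf{S}_{1}(v)^{-1}\mathsf{S}_{2}(q_{2}v)^{-1}{:}$ of section~\ref{sec:infinite-product-reg} together with \eqref{eq:contractions}, the only factor of this contraction that is singular at $v_{2}=v_{1}$ arises from the pair $\mathsf{S}_{1}(v_{1})^{-1}\mathsf{S}_{3}(q_{3}v_{2})^{-1}$, which produces $\mathscr{S}_{24}(q_{13}v_{2}/v_{1})$; its denominator contains $(1-q_{24}q_{13}v_{2}/v_{1})=(1-v_{2}/v_{1})$ by the constraint $q_{1}q_{2}q_{3}q_{4}=1$. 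Hence $\mathsf{f}(v_{2}/v_{1})$ carries a simple zero at $v_{2}=v_{1}$ (provided the $\mathsf{W}_{\bar{4}}\mathsf{S}_{3}$ piece supplies no cancelling zero, which is checked below).

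The decisive step is then to show that for $k\geq 1$ the prefactor $\Phi^{(k)}_{m,n}$ is regular at $v_{2}=v_{1}$, so that the zero of $\mathsf{f}$ is uncompensated. This is purely a statement about the location $q_{3}^{k+1}v_{2}$ of the screening: at $v_{2}=v_{1}$ it equals $q_{3}^{k+1}v_{1}$, separated from the origin and from every box of $[m,n]$ (all lying along the $1$- and $2$-axes) by a nontrivial power of $q_{3}$. Concretely the dangerous factor now becomes $(1-q_{24}q_{1}q_{3}^{k+1}v_{2}/v_{1})=(1-q_{3}^{k}v_{2}/v_{1})$, which is nonvanishing at $v_{2}=v_{1}$ precisely when $k\geq 1$; the remaining pieces—namely $\mathscr{S}_{14}(q_{3}^{k+1}v_{2}/v_{1})$ from $\mathsf{S}_{2}(q_{2}v_{1})^{-1}$, and the factors $g_{\bar{3}}(q_{1}^{i}q_{3}^{k+1}v_{2}/v_{1})^{\pm1}$, $g_{\bar{3}}(q_{2}^{j}q_{3}^{k+1}v_{2}/v_{1})^{\pm1}$ from the $\mathsf{A}^{-1}$ insertions—are all finite at $v_{2}=v_{1}$ by the same counting of $q_{3}$-powers. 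Thus $\Phi^{(k)}_{m,n}$ is regular and $\mathsf{f}\,\Phi^{(k)}_{m,n}\to 0$. The last display of the proposition then follows by summation: grading by the total power of $\fq$ (the character contributes $\fq^{m+n+1}$ and the negative screening $\fq^{-k}$) leaves only finitely many $(m,n,k)$ in each degree, each contributing zero.

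The main obstacle is the contraction $\mathsf{W}_{\bar{4}}(v_{1})\mathsf{S}_{3}(q_{3}^{k+1}v_{2})^{-1}$, which is not among the listed OPEs \eqref{eq:contractions} because $\mathsf{S}_{3}$ is not the screening complementary to $\mathsf{W}_{\bar{4}}$. I would compute it by rewriting $\mathsf{S}_{3}$ through $\mathsf{W}_{\bar{4}}$ via \eqref{eq:vertexoprelation} and then applying the $\mathsf{W}_{\bar{4}}\mathsf{W}_{\bar{4}}$ OPE, and verify that the resulting rational factor has no pole at $v_{2}=v_{1}$ for $k\geq 1$ (again because its arguments are separated by powers of $q_{3}$), and—for the $k=0$ base term—contributes no zero cancelling the $(1-v_{2}/v_{1})$ found above. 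Two further bookkeeping points require care: tracking the zero-mode and topological factors hidden in the $\simeq$ of the highest-weight identity, so the limit is literally $0$ rather than merely finite, and confirming that the zero of $\mathsf{f}$ is of order exactly one so no $k\geq 1$ term can accidentally survive. I expect both to be routine once the $\mathsf{W}_{\bar{4}}\mathsf{S}_{3}$ factor is under control.
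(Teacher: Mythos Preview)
Your approach is correct and rests on the same mechanism as the paper's: the scalar OPE prefactor carries a zero at $v_{2}=v_{1}$ that nothing cancels. The paper's proof is a one-liner—direct computation shows the operator product is proportional to $(v_{2}/v_{1};q_{3})_{k+1}$, whose $i=0$ factor $(1-v_{2}/v_{1})$ gives the vanishing. You reach the same conclusion by a slightly longer route: telescoping the negative screening term to $\mathsf{S}_{3}(q_{3}^{k+1}v_{2})^{-1}$, then separating the product into $\mathsf{f}$ (which carries the simple zero) and a remainder $\Phi^{(k)}_{m,n}$ argued regular by $q_{3}$-power counting. The paper's Pochhammer packages all $k+1$ zeros at once, whereas you only isolate the one that matters; both are valid.

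Your flagged obstacle—the $\mathsf{W}_{\bar{4}}\mathsf{S}_{3}^{-1}$ contraction—is indeed routine: computing the mode commutator $[\mathsf{w}_{\bar{4},n},\mathsf{s}_{3,-n}]$ directly gives a factor $\propto (1-q_{4}^{-n})/(1-q_{3}^{n})$, so the OPE is a $q_{3}$-infinite product whose poles and zeros all sit at $v_{2}=q_{3}^{-j}v_{1}$ or $v_{2}=q_{4}q_{3}^{-j}v_{1}$ with $j\geq k+1$, hence regular and nonvanishing at $v_{2}=v_{1}$. Your remark about needing the zero of $\mathsf{f}$ to be of order exactly one is unnecessary: once $\Phi^{(k)}_{m,n}$ is regular, any zero of $\mathsf{f}$ suffices.
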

This is because after direct computation, the operator product is proportional to $(v_{2}/v_{1};q_{3})_{k+1}$ which contains a zero at $v_{2}=v_{1}$.

The fusion with the positive part is computed from
\bea
\mathsf{f}\left(\frac{v_{2}}{v_{1}}\right)\mathsf{PT}_{\bar{4};\,\Bbox\,\Bbox\,\varnothing}(v_{1})\widetilde{\mathscr{Q}}^{+}_{3}(q_{3}v_{2})={:\mathsf{H}_{\bar{4};\,\Bbox\,\Bbox\,\varnothing}(v_1)\mathsf{S}_{3}(q_{3}v_{2})^{-1}:}+\cdots 
\eea
where $\mathsf{f}(x)$ is the inverse of the OPE between $\mathsf{H}_{\bar{4};\,\Bbox\,\Bbox\,\varnothing}(v_1)$ and $\mathsf{S}_{3}(q_{3}v_{2})^{-1}$. After expanding the left hand side, we then take the limit $v_{2}\rightarrow v_{1}$. Let us explicitly perform this computation for low levels.

\paragraph{Level one}
For level one, we have
\bea
&\mathfrak{q}\left(q_{4}\mathscr{V}_{4}\left(q_{4}^{-1}\frac{v_{2}}{v_{1}}\right)^{-1}g_{\bar{1}}\left(q_{1}\frac{v_{2}}{v_{1}}\right)g_{\bar{2}}\left(\frac{v_{2}}{v_{1}}\right) :\mathsf{H}_{\bar{4};\,\Bbox\,\Bbox\,\varnothing}(v_{1})\mathsf{S}_{3}(q_{3}v_{2})^{-1}\mathsf{A}^{-1}(v_{2}): \right.\\
&+\quad \left. \mathcal{Z}^{\PT\tbar\QA}_{\bar{4};\,\Bbox\,\Bbox\,\varnothing}[[0,0]]g_{\bar{3}}\left(q_{3}\frac{v_{2}}{v_{1}}\right)^{-1}:\mathsf{H}_{\bar{4};\,\Bbox\,\Bbox\,\varnothing}(v_{1})\mathsf{A}^{-1}(v_{1})\mathsf{S}_{3}(q_{3}v_{2})^{-1}:\right).
\eea
Denoting $v_{2}=zv_{1}$, the coefficients are singular at $z=1$:
\bea
q_{4}\mathscr{V}_{4}\left(q_{4}^{-1}\frac{v_{2}}{v_{1}}\right)^{-1}g_{\bar{1}}\left(q_{1}\frac{v_{2}}{v_{1}}\right)g_{\bar{2}}\left(\frac{v_{2}}{v_{1}}\right)&=q_{4}\frac{(1-q_{12}z)(1-q_{4}z)(1-q_{3}z)}{(1-z)(1-q_{34}z)(1-q_{3}^{-1}z)},\\
g_{\bar{3}}\left(q_{3}\frac{v_{2}}{v_{1}}\right)^{-1}&=\frac{(1-q_{1,2,4}^{-1}z)(1-q_{3}z)}{(1-z)(1-q_{34,13,23}z)}.
\eea

The operator parts are expanded around $z=1$ as
\bea
:\mathsf{H}_{\bar{4};\,\Bbox\,\Bbox\,\varnothing}(v_{1})\mathsf{S}_{3}(q_{3}v_{2})^{-1}\mathsf{A}^{-1}(v_{2}):&=:\mathsf{H}_{\bar{4};\,\Bbox\,\Bbox\,\varnothing}(v_{1})\mathsf{S}_{3}(q_{3}v_{1})^{-1}\mathsf{A}^{-1}(v_{1}):\\
&\quad -(1-z):\mathsf{H}_{\bar{4};\,\Bbox\,\Bbox\,\varnothing}(v_{1})\partial_{\log v_{2}}\left(\mathsf{S}_{3}(q_{3}v_{2})^{-1}\mathsf{A}^{-1}(v_{2})\right)|_{v_{2}=v_{1}}:,\\
:\mathsf{H}_{\bar{4};\,\Bbox\,\Bbox\,\varnothing}(v_{1})\mathsf{A}^{-1}(v_{1})\mathsf{S}_{3}(q_{3}v_{2})^{-1}:&=:\mathsf{H}_{\bar{4};\,\Bbox\,\Bbox\,\varnothing}(v_{1})\mathsf{A}^{-1}(v_{1})\mathsf{S}_{3}(q_{3}v_{1})^{-1}:\\
&\quad -(1-z):\mathsf{H}_{\bar{4};\,\Bbox\,\Bbox\,\varnothing}(v_{1})\mathsf{A}^{-1}(v_{1})\partial_{\log v_{2}}\mathsf{S}_{3}(q_{3}v_{2})^{-1}|_{v_{2}=v_{1}}:
\eea
where we used
\bea
f(zx)
&=f(x)+(z-1)\partial_{\log x}f(x)+\cdots.
\eea

The terms proportional to $1/(1-z)$ are then
\bea
&\left(q_{4}\mathscr{V}_{4}\left(q_{4}^{-1}\frac{v_{2}}{v_{1}}\right)^{-1}g_{\bar{1}}\left(q_{1}\frac{v_{2}}{v_{1}}\right)g_{\bar{2}}\left(\frac{v_{2}}{v_{1}}\right)+\mathcal{Z}^{\PT\tbar\QA}_{\bar{4};\,\Bbox\,\Bbox\,\varnothing}[[0,0]]g_{\bar{3}}\left(q_{3}\frac{v_{2}}{v_{1}}\right)^{-1}\right):\mathsf{H}_{\bar{4};\,\Bbox\,\Bbox\,\varnothing}(v_{1})\mathsf{A}^{-1}(v_{1})\mathsf{S}_{3}(q_{3}v_{1})^{-1}:\\
\xrightarrow {z\rightarrow 1} &\mathcal{Z}^{\PT\tbar\QA}_{\bar{4};\,\Bbox\,\Bbox\,\Bbox}[1] :\mathsf{H}_{\bar{4};\,\Bbox\,\Bbox\,\varnothing}(v_{1})\mathsf{A}^{-1}(v_{1})\mathsf{S}_{3}(q_{3}v_{1})^{-1}:
\eea
and the singularity at $z=1$ is resolved.

The remaining terms are
\bea
&-(1-z)q_{4}\mathscr{V}_{4}\left(q_{4}^{-1}\frac{v_{2}}{v_{1}}\right)^{-1}g_{\bar{1}}\left(q_{1}\frac{v_{2}}{v_{1}}\right)g_{\bar{2}}\left(\frac{v_{2}}{v_{1}}\right):\mathsf{H}_{\bar{4};\,\Bbox\,\Bbox\,\varnothing}(v_{1})\partial_{\log v_{2}}\left(\mathsf{S}_{3}(q_{3}v_{2})^{-1}\mathsf{A}^{-1}(v_{2})\right)|_{v_{2}=v_{1}}:\\
\xrightarrow{z\rightarrow 1}& -(1-q_{4}):\mathsf{H}_{\bar{4};\,\Bbox\,\Bbox\,\varnothing}(v_{1})\partial_{\log v_{2}}\left(\mathsf{S}_{3}(q_{3}v_{2})^{-1}\mathsf{A}^{-1}(v_{2})\right)|_{v_{2}=v_{1}}:,
\eea
and
\bea
&-(1-z)\mathcal{Z}^{\PT\tbar\QA}_{\bar{4};\,\Bbox\,\Bbox\,\varnothing}[[0,0]]g_{\bar{3}}\left(q_{3}\frac{v_{2}}{v_{1}}\right)^{-1}:\mathsf{H}_{\bar{4};\,\Bbox\,\Bbox\,\varnothing}(v_{1})\mathsf{A}^{-1}(v_{1})\partial_{\log v_{2}}\mathsf{S}_{3}(q_{3}v_{2})^{-1}|_{v_{2}=v_{1}}:\\
\xrightarrow{z\rightarrow 1}&\,\,(1-q_{4}):\mathsf{H}_{\bar{4};\,\Bbox\,\Bbox\,\varnothing}(v_{1})\mathsf{A}^{-1}(v_{1})\partial_{\log v_{2}}\mathsf{S}_{3}(q_{3}v_{2})^{-1}|_{v_{2}=v_{1}}:.
\eea
Combining these terms gives
\bea
-(1-q_{4}):\mathsf{H}_{\bar{4};\,\Bbox\,\Bbox\,\varnothing}(v_{1})\mathsf{S}_{3}(q_{3}v_{1})^{-1}\partial_{\log v_{1}}\mathsf{A}^{-1}(v_{1}):.
\eea


Therefore, at level one we have
\bea
&\mathfrak{q}\left(\mathcal{Z}^{\PT\tbar\QA}_{\bar{4};\,\Bbox\,\Bbox\,\Bbox}[1] :\mathsf{H}_{\bar{4};\,\Bbox\,\Bbox\,\varnothing}(v_{1})\mathsf{A}^{-1}(v_{1})\mathsf{S}_{3}(q_{3}v_{1})^{-1}:-(1-q_{4}):\mathsf{H}_{\bar{4};\,\Bbox\,\Bbox\,\varnothing}(v_{1})\mathsf{S}_{3}(q_{3}v_{1})^{-1}\partial_{\log v_{1}}\mathsf{A}^{-1}(v_{1}):\right)\\
\simeq &\mathfrak{q}\left(\mathcal{Z}^{\PT\tbar\QA}_{\bar{4};\,\Bbox\,\Bbox\,\Bbox}[1] :\mathsf{H}_{\bar{4};\,\Bbox\,\Bbox\,\Bbox}(v_{1})\mathsf{A}^{-1}(v_{1}):-(1-q_{4}):\mathsf{H}_{\bar{4};\,\Bbox\,\Bbox\,\Bbox}(v_{1})v_{1}\partial_{v_{1}}\mathsf{A}^{-1}(v_{1}):\right)
\eea
which matches with \eqref{eq:PT3qq-level1-JK}. In the last line, we simply incorporate the non-essential zero-modes for comparison.

\paragraph{Level two}For level two, the terms related are
\bea
\mathfrak{q}^{2}\times \mathsf{f}\left(\frac{v_{2}}{v_{1}}\right)&\left(\mathcal{Z}^{\PT\tbar\QA}_{\bar{4};\,\Bbox\,\Bbox\,\varnothing}[[0,0]] :\mathsf{H}_{\bar{4};\,\Bbox\,\Bbox\,\varnothing}(v_{1})\mathsf{A}^{-1}(v_{1})::\mathsf{S}_{3}(q_{3}v_{2})^{-1}\mathsf{A}^{-1}(v_{2}):\right.\\
&+\mathcal{Z}^{\PT\tbar\QA}_{\bar{4};\,\Bbox\,\Bbox\,\varnothing}[[1,0]] :\mathsf{H}_{\bar{4};\,\Bbox\,\Bbox\,\varnothing}(v_{1})\mathsf{A}^{-1}(v_{1})\mathsf{A}^{-1}(q_{1}^{-1}v_{1})::\mathsf{S}_{3}(q_{3}v_{2})^{-1}:\\
&+\mathcal{Z}^{\PT\tbar\QA}_{\bar{4};\,\Bbox\,\Bbox\,\varnothing}[[0,1]] :\mathsf{H}_{\bar{4};\,\Bbox\,\Bbox\,\varnothing}(v_{1})\mathsf{A}^{-1}(v_{1})\mathsf{A}^{-1}(q_{2}^{-1}v_{1})::\mathsf{S}_{3}(q_{3}v_{2})^{-1}:\\
&\left.+\mathsf{H}_{\bar{4};\,\Bbox\,\Bbox\,\varnothing}(v_{1}):\mathsf{S}_{3}(q_{3}v_{2})^{-1}\mathsf{A}^{-1}(v_{2})\mathsf{A}^{-1}(q_{3}^{-1}v_{2}):\right).
\eea
Studying the contraction of the vertex operators, one will see that each term are non-singular at $v_{2}\rightarrow v_{1}$. After taking this limit, we obtain
\bea
&\mathfrak{q}^{2}\left(\frac{1}{2}\mathcal{Z}^{\QA}:\mathsf{H}_{\bar{4};\,\Bbox\,\Bbox\,\varnothing}(v_{1})\mathsf{S}_{3}(q_{3}v_{1})^{-1}\mathsf{A}^{-2}(v_{1}):+\sum_{i=1}^{3}\mathcal{Z}^{\QA}_{(0,-\eps_i)}:\mathsf{H}_{\bar{4};\,\Bbox\,\Bbox\,\varnothing}(v_{1})\mathsf{S}_{3}(q_{3}v_{1})^{-1}\mathsf{A}^{-1}(v)\mathsf{A}^{-1}(q_{i}^{-1}v_{1}):\right)\\
\simeq&\mathfrak{q}^{2}\left( \frac{1}{2}\mathcal{Z}_{(0,0)}^{\QA}:\mathsf{H}_{\bar{4};\,\Bbox\,\Bbox\,\Bbox}(v)\mathsf{A}^{-2}(v):+\sum_{i=1}^{3}\mathcal{Z}_{(0,-\eps_i)}^{\QA}:\mathsf{H}_{\bar{4};\,\Bbox\,\Bbox\,\Bbox}(v)\mathsf{A}^{-1}(v)\mathsf{A}^{-1}(q_{i}^{-1}v):\right)
\eea
where $\mathcal{Z}^{\QA}_{(a,b)}$ are the three-legs PT partition functions. This final formula indeed matches with \eqref{eq:PT3qq-level2-JK}. 

One can perform the computation for higher levels and we expect it generally holds.
\begin{conjecture}\label{conj:PT3leg-positivescreening}
    The three-legs PT3 $qq$-character $\mathsf{PT}_{\bar{4};\,\Bbox\,\Bbox\,\Bbox}(v_{1})$ is obtained by the collision limit of the two-legs PT3 $qq$-character $\mathsf{PT}_{\bar{4};\,\Bbox\,\Bbox\,\varnothing}(v_{1})$ and the positive part of the screening charge $\widetilde{\mathscr{Q}}_{3}^{+}(q_{3}v_{2})$ with $v_{2}\rightarrow v_{1}$:
    \bea
    \mathsf{f}\left(\frac{v_{2}}{v_{1}}\right)\mathsf{PT}_{\bar{4};\,\Bbox\,\Bbox\,\varnothing}(v_{1})\widetilde{\mathscr{Q}}^{+}_{3}(q_{3}v_{2})\xrightarrow{v_{2}\rightarrow v_{1}} \mathsf{PT}_{\bar{4};\,\Bbox\,\Bbox\,\Bbox}(v_{1}).
    \eea
\end{conjecture}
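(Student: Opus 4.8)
The plan is to prove the conjecture by an order-by-order expansion in the topological parameter $\fq$, reducing the operator identity to a family of scalar collision limits that can be matched against the three-legs JK-residue data of section~\ref{sec:PTthreelegs}. First I would insert the explicit expansion of the two-legs character, $\mathsf{PT}_{\bar{4};\,\Bbox\,\Bbox\,\varnothing}(v_{1})=\sum_{\pi}\fq^{|\pi|}\mathcal{Z}^{\PT\tbar\QA}_{\bar{4};\,\Bbox\,\Bbox\,\varnothing}[\pi]\,\mathsf{V}^{\,\Bbox\,\Bbox\,\varnothing}_{\bar{4},\pi}(v_{1})$, together with the positive screening decomposition $\widetilde{\mathscr{Q}}^{+}_{3}(q_{3}v_{2})=\sum_{k\ge 0}\fq^{k}{:\mathsf{S}_{3}(q_{3}v_{2})^{-1}\prod_{i=1}^{k}\mathsf{A}^{-1}(q_{3}^{-i+1}v_{2}):}$. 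By Prop.~\ref{prop:PT3leg-negativescreening} the negative part of the screening charge drops out in the limit, so it suffices to analyse $\mathsf{f}(v_{2}/v_{1})\,\mathsf{PT}_{\bar{4};\,\Bbox\,\Bbox\,\varnothing}(v_{1})\,\widetilde{\mathscr{Q}}^{+}_{3}(q_{3}v_{2})$ term by term.

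Next I would normal-order each product $\mathsf{V}^{\,\Bbox\,\Bbox\,\varnothing}_{\bar{4},\pi}(v_{1})\cdot{:\mathsf{S}_{3}(q_{3}v_{2})^{-1}\prod_{i}\mathsf{A}^{-1}(q_{3}^{-i+1}v_{2}):}$ using the contractions of Prop.~\ref{prop:OPE-formula}. Since $\mathsf{f}(v_{2}/v_{1})$ is by definition the inverse of the OPE between $\mathsf{H}_{\bar{4};\,\Bbox\,\Bbox\,\varnothing}(v_{1})$ and $\mathsf{S}_{3}(q_{3}v_{2})^{-1}$, every resulting monomial takes the form ${:\mathsf{H}_{\bar{4};\,\Bbox\,\Bbox\,\varnothing}(v_{1})\mathsf{S}_{3}(q_{3}v_{2})^{-1}\prod\mathsf{A}^{-1}(\cdot):}$, which under the highest-weight identity $\mathsf{H}_{\bar{4};\,\Bbox\,\Bbox\,\Bbox}(v)\simeq{:\mathsf{H}_{\bar{4};\,\Bbox\,\Bbox\,\varnothing}(v)\mathsf{S}_{3}(q_{3}v)^{-1}:}$ and \eqref{eq:vertexoprelation} becomes precisely a candidate building block of $\mathsf{PT}_{\bar{4};\,\Bbox\,\Bbox\,\Bbox}(v_{1})$. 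The scalar coefficients factorize into the two-legs weight $\mathcal{Z}^{\PT\tbar\QA}_{\bar{4};\,\Bbox\,\Bbox\,\varnothing}[\pi]$ times products of $\mathscr{V}_{4}$, $g_{\bar{a}}$ and $\mathscr{S}_{ab}$ evaluated at $z=v_{2}/v_{1}$; these structure functions carry all the singular behaviour at $z=1$.

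The heart of the argument is then the collision limit $z\to 1$. I would split each coefficient into its leading $1/(1-z)$ part and a regular remainder, and expand the operator parts through $f(zx)=f(x)+(z-1)\partial_{\log x}f(x)+\cdots$. Two distinct screening monomials limit to the same operator — the one in which a newly inserted $\mathsf{A}^{-1}(q_{3}^{-i+1}v_{2})$ collides with an $\mathsf{A}^{-1}$ coming from the two-legs cluster — and their leading poles must cancel, the finite leftover reassembling into the genuine three-legs weight $\mathcal{Z}^{\PT\tbar\QA}_{\bar{4};\,\Bbox\,\Bbox\,\Bbox}[\,\cdot\,]$, including the $\frac{1}{2}$ Weyl factors for the doubled ultra-heavy insertions $\mathsf{A}^{-2}(v_{1})$. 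The surviving $(1-z)$-linear corrections, acting via $\partial_{\log v_{2}}$, produce exactly the derivative-of-vertex-operator terms $\partial_{v_{1}}\mathsf{A}^{-1}(v_{1})$ characteristic of the three-legs character. At levels one and two this reproduces \eqref{eq:PT3qq-level1-JK} and \eqref{eq:PT3qq-level2-JK} term by term, which I would take as the base cases.

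The main obstacle is establishing this cancellation uniformly in $\fq$. One must show, at every order $k$, that the only surviving finite terms are indexed by the two pole classes of the three-legs JK computation — configurations with an ultra-heavy box $(0,0,-\eps_{i},\dots)$ and those without $(0,-\eps_{i},\dots,-(k-1)\eps_{i})$ — with precisely the residues and multiplicities of \eqref{eq:3leglevelgeneric}, and that all higher-order poles at $z=1$, which could in principle arise when several insertions collide simultaneously, in fact cancel so that at most first derivatives of $\mathsf{A}^{-1}$ survive. This is essentially a combinatorial bookkeeping problem: it requires matching the stacking of the two-legs configurations $[m,n]$ glued onto the $q_{3}$-column supplied by the screening charge against the three-legs PT/GR box configurations of Cond.~\ref{cond:PTrules}, and tracking exactly which collisions generate a derivative (the unlabelled, ultra-heavy positions). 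A secondary point to verify is that the limit preserves commutativity with $\widetilde{\mathscr{Q}}_{4}(x)$, so that the result is genuinely a $qq$-character; this should follow because both factors commute with $\widetilde{\mathscr{Q}}_{4}(x)$ before the limit, but the second-order poles of Prop.~\ref{prop:PTJK-pole} mean the commutativity has to be recast in the $q_{4}$-difference form of section~\ref{sec:twolegs-PTqq} rather than checked pole-by-pole.
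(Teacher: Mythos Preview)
Your approach is essentially the same as the paper's: expand both factors in $\fq$, normal-order using the contractions of Prop.~\ref{prop:OPE-formula}, split the scalar coefficients into singular and regular parts at $z=v_{2}/v_{1}\to 1$, and Taylor-expand the vertex operators to pick up the derivative terms. The paper carries this out explicitly at levels one and two, obtaining exactly \eqref{eq:PT3qq-level1-JK} and \eqref{eq:PT3qq-level2-JK}, and then states the general result as a \emph{conjecture} with the comment ``One can perform the computation for higher levels and we expect it generally holds.''

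In other words, the ``main obstacle'' you identify --- establishing the pole cancellations and the matching with the three-legs JK residues of \eqref{eq:3leglevelgeneric} uniformly in $\fq$ --- is precisely the step the paper does \emph{not} carry out. Your proposal therefore does not go beyond what the paper already does: you have the same base cases and the same honest gap at general level. The secondary point about commutativity with $\widetilde{\mathscr{Q}}_{4}(x)$ is likewise left open in the paper (see the remark preceding Conj.~\ref{conj:PT3leg-fusion-total} that the three-legs commutativity is ``difficult to confirm due to the second-order poles''), so it would not follow automatically even if the collision limit were established.
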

Combining Conj.~\ref{conj:PT3leg-positivescreening} and Prop.~\ref{prop:PT3leg-negativescreening}, we obtain Conj.~\ref{conj:PT3leg-fusion-total}.

\paragraph{General three-legs PT3 $qq$-characters}
Starting from the minimal plane partition with $(\lambda,\mu,\nu)$, we can decompose it into a minimal plane partition with $(\lambda,\mu,\varnothing)$ and a stack of one-dimensional rods extending in the 3-axis. We denote the coordinates of them as $y_{1}^{\ast},\ldots,y_{|\nu|}^{\ast}$. For the example Fig.~\ref{fig:minimal-pp}, the coordinates are
\bea
vq_{3}^{3},\quad vq_{1}q_{3}^{3},\quad vq_{2}q_{3}^{2},\quad vq_{2}^{2}q_{3}^{2}.
\eea

For general three-legs PT $qq$-characters $\mathsf{PT}_{\bar{4};\lambda\mu\nu}(v)$, we first start from two-legs PT $qq$-characters $\mathsf{PT}_{\bar{4};\lambda\mu\varnothing}(v)$ and fuse $|\nu|$ number of screening charges:
\bea
\mathsf{f}(v,y_{1},\ldots,y_{|\nu|})\mathsf{PT}_{\bar{4};\lambda\mu\varnothing}(v)\widetilde{\mathscr{Q}}_{3}(y_{1})\widetilde{\mathscr{Q}}_{3}(y_{2})\cdots \widetilde{\mathscr{Q}}_{3}(y_{|\nu|}),\quad y_{i}\rightarrow y_{i}^{\ast} 
\eea
where $\mathsf{f}(v,y_{1},\ldots,y_{|\nu|})$ is a suitable function. We note that we have to take the limits in way that for each step the Young diagram condition for $\nu$ is satisfied. For the example in Fig.~\ref{fig:minimal-pp}, we can take the limits in the order
\bea
y_1\rightarrow vq_{3}^{3},\quad y_2\rightarrow  vq_{1}q_{3}^{3},\quad y_3\rightarrow vq_{2}q_{3}^{2},\quad y_4\rightarrow vq_{2}^{2}q_{3}^{2}.
\eea

\subsection{Relation with (shifted) quantum toroidal \texorpdfstring{$\mathfrak{gl}_{1}$}{gl(1)}}\label{sec:QTgl1}
In this section, we discuss the relation between the PT3 $qq$-characters and the quantum toroidal $\mathfrak{gl}_{1}$ \cite{ding1997generalization,miki2007q,FFJMM1,Feigin2011plane,Feigin2011} (see \cite{DIMreview} for a review). The discussion here is the trigonometric version of the algebraic interpretation of the PT counting discussed in \cite{Gaiotto:2020dsq} (see also \cite{Gaberdiel:2017hcn,Gaberdiel:2018nbs}).

\begin{definition}
Let $\mathsf{q}_{1},\mathsf{q}_{2},\mathsf{q}_{3}$ be the deformation parameters with the condition $\sfq_{1}\sfq_{2}\sfq_{3}=1$. The quantum toroidal $\mathfrak{gl}_{1}$, which is denoted $\mathcal{E}$, is generated by three Drinfeld currents 
\begin{equation}
    E(z)=\sum_{m\in\mathbb{Z}}E_{m}z^{-m},\quad F(z)=\sum_{m\in\mathbb{Z}}F_{m}z^{-m},\quad K^{\pm}(z)=\sum_{r\geq 0}K^{\pm}_{\pm r} z^{\mp r}
\end{equation}
where $K_{0}^{\pm}=K^{\pm}$ and central elements 
\begin{equation}
    C,\quad K^{-}=(K^{+})^{-1}.
\end{equation}
The defining relations are
\begin{align}
\begin{split}
    E(z)E(w)=\sfg(z/w)E(w)E(z),&\quad F(z)F(w)=\sfg(z/w)^{-1}F(w)F(z),\\
    K^{\pm}(z)K^{\pm}(w)=K^{\pm}(w)K^{\pm}(z),&\quad K^{-}(z)K^{+}(w)=\frac{\sfg(C^{-1}z/w)}{\sfg(Cz/w)}K^{+}(w)K^{-}(z),\\
    K^{\pm}(C^{(1\mp 1)/2}z)E(w)&=\sfg(z/w)E(w)K^{\pm}(C^{(1\mp1)/2}z),\\
    K^{\pm}(C^{(1\pm 1)/2}z)F(w)&=\mathsf{g}(z/w)^{-1}F(w)K^{\pm}(C^{(1\pm 1)/2}z),\\
    [E(z),F(w)]=\tilde{g}&\left(\delta\left(\frac{Cw}{z}\right)K^{+}(z)-\delta\left(\frac{Cz}{w}\right)K^{-}(w)\right)
\end{split}
\end{align}
where 
\begin{equation}\label{eq:gl1structurefunction}
    \sfg(z)=\frac{\prod_{i=1}^{3}(1-\sfq_{i}z)}{\prod_{i=1}^{3}(1-\sfq_{i}^{-1}z)},\quad \kappa_{r}=\prod_{i=1}^{3}(\sfq_{i}^{r/2}-\sfq_{i}^{-r/2}),
\end{equation}
and $\tilde{g}=1/\kappa_{1}$.
\end{definition}

Additionally, one needs the Serre relations but we did not write the explicit form (see \cite{DIMreview}). 

The coproduct structure is
\begin{align}\label{eq:coproduct}
\begin{split}
\Delta E(z)&=E(z)\otimes 1+K^{-}(C_{1}z)\otimes E(C_{1}z),\\
\Delta F(z)&=F(C_{2}z)\otimes K^{+}(C_{2}z)+1\otimes F(z),\\
\Delta K^{+}(z)&=K^{+}(z)\otimes K^{+}(C_{1}^{-1}z),\\
\Delta K^{-}(z)&=K^{-}(C_{2}^{-1}z)\otimes K^{-}(z),\\
\Delta(X)&=X\otimes X,\quad X=C,K^{-},
\end{split}
\end{align}
where $C_{1}=C\otimes 1$ and $C_{2}=1\otimes C$. Using this coproduct, we can consider tensor product representations.

The representations of the quantum toroidal $\mathfrak{gl}_{1}$ are obtained by determining the values of the central elements $C,K^{-}$. In this paper, we only consider the vertical representations with the central charge $C=1$ \cite{FFJMM1,Feigin2011,Feigin2011plane}. Let us first review two basic representations: the vector representation and the MacMahon representation.

\paragraph{Vector representation}
There are three types of vector representations with central charges $(C,K^{-})=(1,1)$ and the actions of the Drinfeld currents are 
\begin{align}\label{eq:vectorrep}
  \begin{split}
        K^{\pm}(z)[u]^{(c)}_{j}=&\left[\Psi_{[u]^{(c)}_{j}}(z)\right]^{z}_{\pm}[u]^{(c)}_{j}\eqqcolon[S_{c}\left(u\sfq_{c}^{j}/z\right)]^{z}_{\pm}[u]^{(c)}_{j},\\
        E(z)[u]^{(c)}_{j}=&\mathcal{E}\delta\left(u\sfq_{c}^{j}/z\right)[u]^{(c)}_{j+1},\\
        F(z)[u]^{(c)}_{j}=&\mathcal{F}\delta\left(u\sfq_{c}^{j-1}/z\right)[u]^{(c)}_{j-1},\quad c=1,2,3,\quad j\in\mathbb{Z}
   \end{split}
\end{align}
where
\begin{equation}
    \mathcal{E}\mathcal{F}=\Tilde{g}\frac{(1-\sfq_{c+1}^{-1})(1-\sfq_{c-1}^{-1})}{(1-\sfq_{c})},\quad S_{c}(z)=\frac{(1-\sfq_{c-1}z)(1-\sfq_{c+1}z)}{(1-z)(1-\sfq_{c-1}\sfq_{c+1}z)}.
\end{equation}
We denote these representations $\mathcal{V}_{c}(u),\,(c=1,2,3)$. The bases $\{[u]_{j}^{(c)}\}_{j\in\mathbb{Z}}$ are represented by 1d partitions.

\paragraph{MacMahon representation}
MacMahon representations are representations with central charges $(C,K^{-})=(1,K^{1/2})$ where $K\in\mathbb{C}^{\times}$ is a generic parameter. The action of the Drinfeld currents is given as
\bea\label{eq:MacMahonrep}
    K^{\pm}(z)|u,\pi\rangle&=\left[\Psi_{\pi,u}(z)\right]^{z}_{\pm}|u,\pi\rangle,\\
    E(z)|u,\pi\rangle&=\sum_{\scube\in A(\pi)}E(\pi\rightarrow \pi+\cube)\delta\left(\frac{z}{\chi_{u}(\cube)}\right)
    |u,\pi+\cube\rangle,\\
    F(z)|u,\pi\rangle&=\sum_{\scube\in R(\pi)}F(\pi\rightarrow \pi-\cube)\delta\left(\frac{z}{\chi_{u}(\cube)}\right)
    |u,\pi-\cube\rangle,
\eea 
where  
\begin{equation}\label{eq:CartanMacMahon}
    \chi_{u}(\cube)=u\sfq_{1}^{i-1}\sfq_{2}^{j-1}\sfq_{3}^{k-1},\quad  \Psi_{\pi,u}(z)=K^{-1/2}\frac{1-Ku/z}{1-u/z}\prod_{\scube\in\pi}\sfg\left(\frac{\chi_{u}(\cube)}{z}\right)^{-1}.
\end{equation}
 We denote this representation $M(u,K)$. 

 A general property of the charge function $\Psi_{\pi,u}(z)$ is that the pole structure is determined by the addable and removable boxes of the plane partition:
 \bea
\Psi_{\pi,u}(z)\propto \prod_{\scube\in A(\pi)}\frac{1}{(1-\chi_{u}(\cube)/z)}\prod_{\scube\in R(\pi)}\frac{1}{(1-\chi_{u}(\cube)/z)}.
 \eea
 We shortly call the poles corresponding to the addable (removable) boxes \textit{addable (removable) poles}, respectively. The coefficients $E(\pi\rightarrow \pi+\cube),F(\pi\rightarrow \pi-\cube)$ can be \textit{bootstrapped} by imposing the algebraic relations to hold. For example, if we us the $EF$ relation, we obtain the constraint
 \bea
E(\pi\rightarrow \pi+\cube) F(\pi+\cube\rightarrow \pi)=-\tilde{g}\times \Res_{z=\chi_{u}(\scube)}z^{-1}\Psi_{\pi,u}(z).
 \eea

\paragraph{Shifted quantum toroidal algebra}
The series expansion of the Drinfeld currents $K^{\pm}(z)$ above starts from $z^{0}$ and so the Cartan eigenvalues of the vertical representations always have the same number of numerators and denominators. We can relax this condition and consider rational functions with different number of numerators and denominators. Roughly speaking, the currents $K^{\pm}(z)$ are modified to $z^{r_{\pm}}K^{\pm}(z)$, while keeping the defining relations to be the same. Such algebras are called the shifted quantum toroidal algebra (see \cite{Hernandez:2020xih,Noshita:2021dgj,Li:2020rij,Galakhov:2021xum,Galakhov:2021vbo} for recent references). When using the current realizations, derivation of representations can be done similarly to the unshifted case. We will not discuss subtle properties about the mode expansions but simply use this current description.


\subsubsection{DT counting}
When the plane partition has nontrivial boundary conditions, we need to modify the vacuum charge function so that it includes the boundary contributions \cite{Feigin2011plane}.

The vacuum charge function is given as
\begin{equation}
    \Psi^{\DT,\lambda\mu\nu}_{\varnothing,u}(z)=K^{-1/2}\frac{1-Ku/z}{1-u/z}\prod_{\scube\in \mathcal{B}_{\lambda,\mu,\nu}}\sfg\left(\frac{\chi_{u}(\cube)}{z}\right)^{-1}.
\end{equation}
Since the right hand side is an infinite product, one needs to regularize it properly.

As an example, let us consider the situation when the boundary conditions are $(\lambda,\mu,\nu)=(\varnothing,\varnothing,\Bbox)$:
\bea
 \Psi^{\DT,\varnothing\varnothing\,\Bbox}_{\varnothing,u}(z)&=K^{-1/2}\frac{1-Ku/z}{1-u/z}\prod_{i=1}^{\infty}\sfg\left(\frac{u\sfq_{3}^{i-1}}{z}\right)^{-1}=K^{-1/2}\frac{1-Ku/z}{1-u/z}S_{3}\left(\frac{u}{z}\right)^{-1}\\
 &=K^{-1/2}\frac{(1-Ku/z)(1-\sfq_{12}u/z)}{(1-\sfq_{1}u/z)(1-\sfq_{2}u/z)}
\eea
where we used $\sfg(z)=S_{3}(z)/S_{3}(\sfq_{3}z).$ Namely, the semi-infinite sequence of boxes placed at the position $u$ extending in the direction $3$ is understood as inserting the $S_{3}(u/z)^{-1}$ function. Charge functions for situations with boundary boxes extending in other directions are obtained by using the function $S_{c}(z)$.

For the two-legs case $(\lambda,\mu,\nu)=(\Bbox,\Bbox,\varnothing)$, we can insert a rod extending in the $1$-axis at the origin $u$ and a rod extending in $2$-axis at $u\sfq_{2}$ and so the vacuum charge function is 
\bea
\Psi^{\DT,\,\Bbox\,\Bbox\,\varnothing}_{\varnothing,u}(z)&=K^{-1/2}\frac{1-Ku/z}{1-u/z}S_{1}\left(\frac{u}{z}\right)^{-1}S_{2}\left(\frac{u\sfq_{2}}{z}\right)^{-1}=K^{-1/2}\frac{(1-Ku/z)(1-u/z)}{(1-\sfq_{3}u/z)(1-\sfq_{12}u/z)}.
\eea

For the three-legs case $(\lambda,\mu,\nu)=(\Bbox,\Bbox,\Bbox)$, the vacuum charge function is
\bea
\Psi^{\DT,\,\Bbox\,\Bbox\,\Bbox}_{\varnothing,u}(z)&=K^{-1/2}\frac{1-Ku/z}{1-u/z}S_{1}\left(\frac{u}{z}\right)^{-1}S_{2}\left(\frac{u\sfq_{2}}{z}\right)^{-1}S_{3}\left(\frac{u\sfq_{3}}{z}\right)^{-1}\\
&=K^{-1/2}\frac{(1-Ku/z)(1-u/z)^{2}}{(1-\sfq_{12}u/z)(1-\sfq_{13}u/z)(1-\sfq_{23}u/z)}.
\eea

For the general boundary condition $\vec{Y}=(\lambda,\mu,\nu)$, the vacuum charge function is simplified as 
\bea
\Psi^{\DT,\lambda\mu\nu}_{\varnothing,u}(z)=K^{-1/2}\frac{(1-Ku/z)\prod\limits_{\scube\in p_{1}(\vec{Y})}\left(1-\chi_{u}(\cube)/z\right)\prod\limits_{\scube\in p_{2}(\vec{Y})}\left(1-\chi_{u}(\cube)/z\right)^{2}}{\prod\limits_{\scube\in s(\vec{Y})}\left(1-\chi_{u}(\cube)/z\right)}
\eea
where the numbers of the factors in the numerator and the denominators are the same.

The action of the Drinfeld currents are obtained similar to \eqref{eq:MacMahonrep}. Let $\pi$ be the set of boxes we can add to the minimal plane partition and the charge function is generally
\bea
\Psi_{\pi,u}(z)=
\Psi^{\DT,\lambda\mu\nu}_{\varnothing,u}(z)\prod_{\scube\in\pi}\sfg\left(\frac{\chi_{u}(\cube)}{z}\right)^{-1}.
\eea
The poles are the coordinates of the addable and removable boxes of the configuration. The action of the Drinfeld currents are obtained by inserting this charge function to \eqref{eq:MacMahonrep}.

\subsubsection{One-leg and two-legs PT counting}
To construct the module of the PT counting, we introduce a conjugate map on the vacuum charge function:
\bea
\Psi^{\DT,\lambda\mu\nu}_{\varnothing,u}(z)&\longmapsto \Psi^{\PT,\lambda\mu\nu}_{\varnothing,u}(z)=\frac{1}{\Psi^{\DT,\lambda\mu\nu}_{\varnothing,u^{-1}}(z^{-1})},\\
K&\longmapsto K^{-1}.
\eea
In particular, when we have a one-dimensional rod extending in the $c$-axis whose coordinate is $\chi_{u}(\cube)$, the conjugate map acts as 
\bea
S_{c}\left(\chi_{u}(\cube)/z\right)^{-1}\longmapsto S_{c}\left(\chi_{u^{-1}}(\cube)/z^{-1}\right )=S_{c}\left(\sfq_{c}\frac{\chi_{u^{-1}}(\cube)^{-1}}{z}\right).
\eea

For example, for the one-leg and two-legs cases, we have
\bea
\Psi^{\PT,\varnothing\varnothing\,\Bbox}_{\varnothing,u}(z)&=K^{1/2}\frac{(1-\sfq_{1}^{-1}u/z)(1-\sfq_{2}^{-1}u/z)}{(1-uK/z)(1-u\sfq_{3}/z)},\\
\Psi^{\PT,\Bbox\,\Bbox\,\varnothing}_{\varnothing,u}(z)&=K^{1/2}\frac{(1-\sfq_{12}u/z)(1-\sfq_{3}u/z)}{(1-Ku/z)(1-u/z)}
\eea
and for the three-legs case, we have
\bea
\Psi^{\PT,\Bbox\,\Bbox\,\Bbox}_{\varnothing,u}(z)&=K^{1/2}\frac{(1-\sfq_{12}^{-1}u/z)(1-\sfq_{13}^{-1}u/z)(1-\sfq_{23}^{-1}u/z)}{(1-Ku/z)(1-u/z)^{2}}.
\eea
Generally, we have
\bea
\Psi^{\PT,\lambda\mu\nu}_{\varnothing,u}(z)=K^{1/2}\frac{\prod\limits_{\scube\in s(\vec{Y})}\left(1-\chi_{u^{-1}}(\cube)^{-1}/z\right)}{(1-Ku/z)\prod\limits_{\scube\in p_{1}(\vec{Y})}\left(1-\chi_{u^{-1}}(\cube)^{-1}/z\right)\prod\limits_{\scube\in p_{2}(\vec{Y})}\left(1-\chi_{u^{-1}}(\cube)^{-1}/z\right)^{2}}.
\eea

An observation is that there is always a pole at $1-Ku/z$ and using such pole the Drinfeld current $E(z)$ generates a plane partition whose origin is at $Ku$. Another observation is that for the one-leg and two-legs cases, the pole structure is single order similar to the DT counting case, while for the three-legs case, a second order pole exists. We thus need to treat the three-legs case separately.

Since we are interested only in the module structure obtained from the poles except $1-Ku/z$, we remove the contributions with the parameter $K$ and construct the module. The module will not be a module of the quantum toroidal $\mathfrak{gl}_{1}$ but the shifted quantum toroidal $\mathfrak{gl}_{1}$. We will see that the resulting module is related with the PT counting.\footnote{Strictly speaking, the module is related with the conjugate PT counting (see Fig.~\ref{fig:PTreverse}).}

\paragraph{One-leg case}Let us focus on the case $(\varnothing,\varnothing,\Bbox)$. The vacuum charge function is
\bea
\Psi^{\PT,\varnothing\varnothing\,\Bbox}_{\varnothing,u}(z)&=\frac{(1-\sfq_{1}^{-1}u/z)(1-\sfq_{2}^{-1}u/z)}{(1-u\sfq_{3}/z)}
\eea
where we omitted the contribution coming from the pole $1-Ku/z$. The pole at $z=u\sfq_{3}$ means we can add a box $z=u\sfq_{3}$ at level one. 

The charge function at level one is computed as
\bea
\Psi^{\PT,\varnothing\varnothing\,\Bbox}_{\varnothing,u}(z)\sfg\left(\frac{u\sfq_{3}}{z}\right)^{-1}=\frac{(1-\sfq_{1}^{-1}\sfq_{3}u/z)(1-\sfq_{2}^{-1}\sfq_{3}u/z)(1-u/z)}{(1-u\sfq_{3}/z)(1-u\sfq_{3}^{2}/z)}.
\eea
The pole $z=u\sfq_{3}$ corresponds to the removable box while the pole $z=u\sfq_{3}^{2}$ corresponds to the addable box to the configuration. Recursively, the level $k$ configuration has boxes at $u\sfq_{3},u\sfq_{3}^{2},\ldots, u\sfq_{3}^{k}$ and the charge function is
\bea
\Psi^{\PT,\varnothing\varnothing\,\Bbox}_{\varnothing,u}(z)\prod_{i=1}^{k}\sfg\left(\frac{u\sfq_{3}^{i}}{z}\right)^{-1}=\frac{(1-u/z)(1-\sfq_{1}\sfq_{3}^{k+1}u/z)(1-\sfq_{2}\sfq_{3}^{k+1}u/z)}{(1-u\sfq_{3}^{k}/z)(1-u\sfq_{3}^{k+1}/z)}
\eea
where we have an addable box at $z=u\sfq_{3}^{k+1}$ and a removable box at $z=u\sfq_{3}^{k}$. The bases of the module indeed matches with the PT configurations in section~\ref{sec:PTrule-coord}.

Let us construct the action of the Drinfeld currents. Based on the analysis above, the bases are labeled by non-negative integers $\ket{k}\,(k\in\mathbb{Z}_{\geq 0})$:
\bea
E(z)\ket{k}&=\mathcal{E}_{k}\delta\left(\frac{z}{u\sfq_{3}^{k+1}}\right)\ket{k+1},\\
F(z)\ket{k}&=\mathcal{F}_{k}\delta\left(\frac{z}{u\sfq_{3}^{k}}\right)\ket{k-1},\quad \mathcal{F}_{0}=0,\\
K^{\pm}(z)\ket{k}&=\left[\Psi_{k,u}(z)\right]_{\pm}^{z}\ket{k}
\eea
where
\bea
\Psi_{k,u}(z)=\Psi^{\PT,\varnothing\varnothing\,\Bbox}_{\varnothing,u}(z)\prod_{i=1}^{k}\sfg\left(\frac{u\sfq_{3}^{i}}{z}\right)^{-1}.
\eea
Using the $EF$ relation, we obtain the constraint
\bea
\mathcal{E}_{k}\mathcal{F}_{k+1}=\frac{(1-\sfq_{12})(1-\sfq_{3}^{-k-1})}{(1-\sfq_{1})(1-\sfq_{2})},\quad \mathcal{F}_{0}=0.
\eea
Under this constraint, the defining relations are satisfied. For later use, we use
\bea
\mathcal{E}_{k}=\frac{(1-\sfq_{12})}{(1-\sfq_{1})(1-\sfq_{2})},\quad \mathcal{F}_{k}=1-\sfq_{3}^{-k}.
\eea

\paragraph{Two-legs case}Again, let us consider the example $(\Bbox\,\Bbox\,\varnothing)$. The vacuum charge function is
\bea
\Psi^{\PT,\,\Bbox\,\Bbox\,\varnothing}_{\varnothing,u}(z)&=\frac{(1-\sfq_{12}u/z)(1-\sfq_{3}u/z)}{(1-u/z)}
\eea
and the pole corresponds to the box we can add at $z=u$. For level one, we have
\bea
\Psi^{\PT,\,\Bbox\,\Bbox\,\varnothing}_{\varnothing,u}(z)\sfg\left(\frac{u}{z}\right)^{-1}=\frac{(1-\sfq_{12}u/z)^{2}(1-\sfq_{13,23}u/z)}{(1-\sfq_{1,2}u/z)(1-u/z)}.
\eea
The pole at $z=u$ correspond to the removable box and the poles $z=\sfq_{1,2}u$ correspond to the addable boxes. Recursively, one will see that the generic configuration $\pi=[m,n]$ has $k=m+n+1$ boxes at $z=u,u\sfq_{1},u\sfq_{1}^{2},\ldots,u\sfq_{1}^{m},\sfq_{2}u,\ldots,u\sfq_{2}^{n}$. The charge function is computed as
\bea
&\Psi^{\PT,\,\Bbox\,\Bbox\,\varnothing}_{\varnothing,u}(z)\sfg\left(\frac{u}{z}\right)^{-1}\prod_{i=1}^{m}\sfg\left(\frac{u\sfq_{1}^{i}}{z}\right)^{-1}\prod_{j=1}^{n}\sfg\left(\frac{u\sfq_{2}^{j}}{z}\right)^{-1}\\
=&(1-u/z)\frac{(1-\sfq_{2,3}\sfq_{1}^{m+1}u/z)(1-\sfq_{1,3}\sfq_{2}^{n+1}u/z)}{(1-u\sfq_{1}^{m+1}/z)(1-u\sfq_{1}^{m}/z)(1-u\sfq_{2}^{n+1}/z)(1-u\sfq_{2}^{n}/z)}.
\eea

The bases of the module are $\ket{0},\ket{[m,n]}\,\,(m,n\in\mathbb{Z}_{\geq 0})$. The action of the Drinfeld currents $E(z),F(z)$ are
\bea
E(z)\ket{0}&=E(\varnothing\rightarrow [0,0])\delta\left(\frac{z}{u}\right)\ket{[0,0]},\\
E(z)\ket{[m,n]}&=E([m,n]\rightarrow [m+1,n])\delta\left(\frac{z}{u\sfq_{1}^{m+1}}\right)\ket{[m+1,n]}\\
&+E([m,n]\rightarrow [m,n+1])\delta\left(\frac{z}{u\sfq_{2}^{n+1}}\right)\ket{[m,n+1]},\\
F(z)\ket{[0,0]}&=F([0,0]\rightarrow \varnothing)\delta\left(\frac{z}{u}\right)\ket{0},\\
F(z)\ket{[m,n]}&=F([m,n]\rightarrow [m-1,n])\delta\left(\frac{z}{u\sfq_{1}^{m}}\right)\ket{[m-1,n]}\\
&+E([m,n]\rightarrow [m,n-1])\delta\left(\frac{z}{u\sfq_{2}^{n}}\right)\ket{[m,n-1]}.
\eea
For the Cartan part, we have
\bea
 K^{\pm}(z)|0\rangle&=\Psi^{\PT,\Bbox\,\Bbox\,\varnothing}_{\varnothing,u}(z)|\pi\rangle,\quad K^{\pm}(z)|[m,n]\rangle=\Psi_{[m,n],u}(z)|[m,n]\rangle,\\
 \Psi_{[m,n],u}(z)&=\Psi^{\PT,\Bbox\,\Bbox\,\varnothing}_{\varnothing,u}(z)\sfg\left(\frac{u}{z}\right)^{-1}\prod_{i=1}^{m}\sfg\left(\frac{u\sfq_{1}^{i}}{z}\right)^{-1}\prod_{j=1}^{n}\sfg\left(\frac{u\sfq_{2}^{j}}{z}\right)^{-1}.
\eea
The coefficients can be derived by studying the $EF$-relation similarly. 

Generally, when one of the tree-legs is trivial, the Cartan eigenvalue of the Drinfeld currents $K^{\pm}(z)$ take the form
\bea
\Psi^{\PT,\lambda\mu\nu}_{\varnothing,u}(z)\prod_{\scube\in\pi}\sfg\left(\frac{\chi_{u}(\cube)}{z}\right)^{-1}
\eea
for $\pi\in \widetilde{\mathcal{PT}}_{\lambda\mu\nu}$.

\begin{remark}
    A different way to derive the module is to use the coproduct structure and study the tensor product representations. We leave a detailed study of such properties for future work.
\end{remark}

\begin{remark}
We can also construct a module corresponding to the PT counting directly. We take the vacuum charge function as
\bea
\widetilde{\Psi}^{\PT,\lambda\mu\nu}_{\varnothing,u}(z)=\frac{1}{\Psi^{\DT,\lambda\mu\nu}_{\varnothing,u}(z)}.
\eea
For this case, $E(z)$ and $F(z)$ act as annihilation and creation operators, respectively. This kind of property also appears when one studies representations whose central charges have negative levels \cite{Noshita:2022dxv}.

For example, for the one-leg case, we have
\bea
\widetilde{\Psi}^{\PT,\lambda\mu\nu}_{\varnothing,u}(z)=\frac{(1-\sfq_1 u/z)(1-\sfq_2 u/z)}{(1-\sfq_{3}^{-1}u/z)}
\eea
and
\bea
E(z)\ket{k}&=\mathcal{E}_{k}\delta\left(\frac{z}{u\sfq_{3}^{-k}}\right)\ket{k-1},\quad \mathcal{E}_0=0\\
F(z)\ket{k}&=\mathcal{F}_{k}\delta\left(\frac{z}{u\sfq_{3}^{-k-1}}\right)\ket{k+1}.
\eea
From the defining relations, we can see that adding boxes modifies the charge functions by multiplying $\sfg(z)$ instead of $\sfg(z)^{-1}$. In particular, since we have a pole at $z=u\sfq_3^{-1}$, we can add a box there. The charge function then becomes
\bea
\widetilde{\Psi}^{\PT,\varnothing\varnothing\,\Bbox}_{\varnothing,u}(z)\sfg\left(\frac{u\sfq_{3}^{-1}}{z}\right)=\frac{(1-u\sfq_{1}\sfq_{3}^{-1}/z)(1-u\sfq_{2}\sfq_{3}^{-1}/z)(1-u/z)}{(1-\sfq_{3}^{-1}u/z)(1-\sfq_{3}^{-2}u/z)}.
\eea
Generally, we have
\bea
\widetilde{\Psi}^{\PT,\lambda\mu\nu}_{\varnothing,u}(z)\prod_{\scube\in\pi}\sfg\left(\frac{\chi_{u}(\cube)}{z}\right)
\eea
for $\pi \in \mathcal{PT}_{\lambda\mu\nu}$.

\end{remark}

\subsubsection{Three-legs PT counting}\label{sec:PT3-threelegs-gl1module}
Let us move on to the three-legs case. To make the discussion explicit, we focus only on the case $(\lambda,\mu,\nu)=(\Bbox,\Bbox,\Bbox)$. The vacuum charge function is
\bea
\Psi^{\PT,\,\Bbox\,\Bbox\,\Bbox}_{\varnothing,u}(z)&=\frac{(1-\sfq_{12}^{-1}u/z)(1-\sfq_{13}^{-1}u/z)(1-\sfq_{23}^{-1}u/z)}{(1-u/z)^{2}},
\eea
where we omitted the factors related with $K$. To deal with the second order poles, let us introduce formulas related with derivatives of the delta function.

\paragraph{Delta function and its derivatives }
\begin{definition}
    The $n$-th derivative of the delta function is defined as
    \bea
    \delta^{(n)}(x)=\left.\left(\frac{d}{dy}\right)^{n}\delta(y)\right|_{y=x}.
    \eea
\end{definition}
\begin{proposition}
In terms of series expansion, the $n$-th derivative of the delta function is
\bea
\delta^{(n)}(x)&=\sum_{k\geq 0}\frac{k!}{(k-n)!}x^{k-n}+\sum_{k>0}(-1)^{n}\frac{(k+n-1)!}{(k-1)!}x^{-k-n}\\
&=\sum_{k=0}^{\infty}\frac{(k+n)!}{k!}x^{k}+\sum_{k=0}^{\infty}(-1)^{n}\frac{(k+n)!}{k!}x^{-k-n-1}.
\eea
\end{proposition}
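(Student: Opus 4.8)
The plan is to prove the series expansion by differentiating the defining series $\delta(y)=\sum_{m\in\mathbb{Z}}y^{m}$ term by term and then carefully bookkeeping the resulting index ranges. Since $\delta(y)$ is a formal (bi-infinite) Laurent series, the operator $d/dy$ acts as the well-defined linear map $y^{m}\mapsto m\,y^{m-1}$, so there is no convergence issue to address. Iterating $n$ times gives
\bea
\delta^{(n)}(x)=\sum_{m\in\mathbb{Z}}m(m-1)\cdots(m-n+1)\,x^{m-n}=\sum_{m\in\mathbb{Z}}(m)_{n}\,x^{m-n},
\eea
where $(m)_{n}=m(m-1)\cdots(m-n+1)$ denotes the falling factorial, a polynomial in $m$ that is meaningful for every integer $m$.

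First I would establish the first equality by splitting the sum over $m$ at $m=0$. For $m\geq 0$ the falling factorial $(m)_{n}$ vanishes whenever $0\leq m\leq n-1$ (one of the factors $m,m-1,\dots,m-n+1$ is then zero) and equals $k!/(k-n)!$ for $m=k\geq n$, so the nonnegative part reproduces $\sum_{k\geq 0}k!/(k-n)!\,x^{k-n}$ verbatim (the terms with $k<n$ dropping out). For $m<0$, writing $m=-k$ with $k>0$ and using $(-k)(-k-1)\cdots(-k-n+1)=(-1)^{n}k(k+1)\cdots(k+n-1)=(-1)^{n}(k+n-1)!/(k-1)!$ together with $x^{m-n}=x^{-k-n}$ yields the second sum of the first displayed line. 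This establishes the claimed first expression.

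Next I would obtain the second equality by re-indexing each of the two pieces. In the first piece only $k\geq n$ contribute, so the substitution $j=k-n\geq 0$ turns $k!/(k-n)!\,x^{k-n}$ into $(j+n)!/j!\,x^{j}$. In the second piece the substitution $j=k-1\geq 0$ turns $(-1)^{n}(k+n-1)!/(k-1)!\,x^{-k-n}$ into $(-1)^{n}(j+n)!/j!\,x^{-j-n-1}$. Renaming the dummy index back to $k$ in both sums then reproduces the second displayed line.

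There is essentially no analytic obstacle in this statement: the only point requiring genuine care is treating $\delta$ as a formal Laurent series, which legitimizes the term-by-term differentiation, and thereafter keeping track of the vanishing of $(m)_{n}$ on $\{0,\dots,n-1\}$ and the two index shifts. The hard part, such as it is, is purely the clerical bookkeeping of the reflection $m\mapsto -k$, the resulting sign $(-1)^{n}$, and the factorial rewriting; once these are set up correctly, both equalities follow immediately.
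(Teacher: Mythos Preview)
Your proof is correct and is the natural direct computation; the paper states this proposition without proof, so there is nothing further to compare. Your handling of the falling factorial $(m)_n$, the split at $m=0$, the sign $(-1)^n$ from $m=-k$, and the two re-indexings is accurate and complete.
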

In particular, for the first derivative, we have
\bea
\delta^{(1)}(x)=\sum_{k\in\mathbb{Z}}kx^{k-1}=\sum_{k\geq0}(k+1)x^{k}+\sum_{k\geq 1}(-k)x^{-k-1}.
\eea

\begin{proposition}\label{prop:deltaderiv-reflection}
    The reflection property is 
    \bea
    \delta^{(n)}(x^{-1})=(-1)^{n}x^{n+1}\delta^{(n)}(x).
    \eea
    For the $n=1$ case, we have
    \bea
    x^{-1}\delta^{(1)}(x^{-1})=-x\delta^{(1)}(x).
    \eea
\end{proposition}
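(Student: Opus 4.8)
The plan is to work directly with the formal mode expansion of $\delta^{(n)}(x)$ established in the preceding Proposition, which can be repackaged as the single bilateral sum
\bea
\delta^{(n)}(x)=\sum_{k\in\mathbb{Z}}(k)_{n}\,x^{k-n},\qquad (k)_{n}\coloneqq\prod_{i=0}^{n-1}(k-i),
\eea
where $(k)_{n}$ is the falling factorial. One checks this matches the split expansion of the Proposition: for $m=k+n\ge n$ the coefficient is $(k+n)_{n}=(k+n)!/k!$, while for $m=-k-1\le-1$ it is $(-k-1)_{n}=(-1)^{n}(k+n)!/k!$, and the vanishing of $(m)_{n}$ for $m=0,1,\dots,n-1$ automatically removes the low modes. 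First I would substitute $x\mapsto x^{-1}$ termwise, giving $\delta^{(n)}(x^{-1})=\sum_{k}(k)_{n}x^{n-k}$, and reindex by $k=n-1-m$ so that the exponent becomes $x^{m+1}$, matching the exponents in $x^{n+1}\delta^{(n)}(x)=\sum_{m}(m)_{n}x^{m+1}$.

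With this reindexing the entire statement reduces to the purely combinatorial identity
\bea
(n-1-m)_{n}=(-1)^{n}(m)_{n},\qquad m\in\mathbb{Z}.
\eea
I would establish it by inspection of factors: the $n$ factors of $(n-1-m)_{n}$ are $n-1-m,\,n-2-m,\dots,-m$, whereas the $n$ factors of $(-1)^{n}(m)_{n}$ are $-m,\,1-m,\dots,n-1-m$; these are the same $n$ integers listed in opposite order, so the products coincide. Substituting back yields $\delta^{(n)}(x^{-1})=(-1)^{n}\sum_{m}(m)_{n}x^{m+1}=(-1)^{n}x^{n+1}\delta^{(n)}(x)$, which is the claim. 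The $n=0$ case is the base reflection $\delta(x^{-1})=\delta(x)$ of~\eqref{eq:delta-reflect} (consistent since $x\delta(x)=\delta(x)$ by~\eqref{eq:delta-rational}), and the stated $n=1$ formula follows by multiplying $\delta^{(1)}(x^{-1})=-x^{2}\delta^{(1)}(x)$ through by $x^{-1}$.

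A second, more conceptual route is induction on $n$. Differentiating the hypothesis $\delta^{(n)}(x^{-1})=(-1)^{n}x^{n+1}\delta^{(n)}(x)$ in $x$ and using the chain rule $\tfrac{d}{dx}\delta^{(n)}(x^{-1})=-x^{-2}\delta^{(n+1)}(x^{-1})$ produces $\delta^{(n+1)}(x^{-1})$ accompanied by a spurious term proportional to $x^{n+2}\delta^{(n)}(x)$. That term is absorbed using the auxiliary recurrence $(x-1)\delta^{(n+1)}(x)=-(n+1)\delta^{(n)}(x)$, itself a one-line termwise computation from $(k)_{n+1}-(k+1)_{n+1}=(k)_{n}\big[(k-n)-(k+1)\big]=-(n+1)(k)_{n}$. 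Either route is short; I would present the direct one as the main argument and mention the inductive one as a cross-check.

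The only genuine subtlety, and the step I would treat most carefully, is the legitimacy of the termwise inversion $x\mapsto x^{-1}$ and the shift of summation index on a two-sided formal series. Since $\delta$ and its derivatives are handled throughout as formal distributions in $\mathbb{C}[[x,x^{-1}]]$ paired against Laurent polynomials, these operations are valid provided each fixed power of $x$ collects only finitely many contributions; this holds because $k\mapsto n-1-m$ is a bijection of $\mathbb{Z}$ sending each monomial to a single monomial. I would state this convention once explicitly so that the formal manipulations are unambiguous, after which the combinatorial identity is the entire content.
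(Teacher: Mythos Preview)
Your argument is correct. The paper states this proposition without proof, treating it as an immediate consequence of the mode expansion in the preceding proposition; your write-up supplies that missing verification cleanly. The repackaging as the single bilateral sum $\delta^{(n)}(x)=\sum_{k\in\mathbb{Z}}(k)_n\,x^{k-n}$, followed by the reindex $k=n-1-m$ and the falling-factorial identity $(n-1-m)_n=(-1)^n(m)_n$, is exactly the right way to make the reflection manifest, and your check that the two sides list the same $n$ integers $\{-m,1-m,\dots,n-1-m\}$ in opposite order is airtight. The inductive route via $(x-1)\delta^{(n+1)}(x)=-(n+1)\delta^{(n)}(x)$ is a nice independent confirmation and is also correct as stated.
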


\begin{proposition}
    We have the following series expansion for rational functions with higher order poles:
    \bea
    \frac{1}{(1-x)^{n}}&=\sum_{k=0}^{\infty}\frac{(k+n-1)!}{(n-1)!k!}x^{k}=\sum_{k=0}^{\infty}\binom{k+n-1}{n-1}x^{k},\\
    \frac{(-1)^{n}x^{-n}}{(1-x^{-1})^{n}}&=(-1)^{n}x^{-n}\sum_{k=0}^{\infty}\binom{k+n-1}{n-1}x^{-k}=\sum_{k=0}^{\infty}(-1)^{n}\binom{k+n-1}{n-1}x^{-k-n},
    \eea
    where $\binom{n}{m}$ is the binomial coefficient.
\end{proposition}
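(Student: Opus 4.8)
The plan is to establish the first expansion as the negative binomial (generalized geometric) series and then deduce the second from it by the substitution $x\mapsto x^{-1}$. Both expansions should be understood as formal series in the two analytic regions $|x|<1$ and $|x|>1$ respectively, consistent with the notation $[\cdots]^{x}_{\pm}$ introduced earlier; together they furnish the higher-order analogue of the delta-function decomposition $\delta(x)=\left[\frac{1}{1-x}\right]_{|x|<1}-\left[\frac{-x^{-1}}{1-x^{-1}}\right]_{|x|>1}$, which is precisely the input needed to handle the second-order poles appearing in the three-legs PT $qq$-character.

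For the first identity I would argue by induction on $n$. The base case $n=1$ is the geometric series $\frac{1}{1-x}=\sum_{k=0}^{\infty}x^{k}$, matching $\binom{k}{0}=1$. For the inductive step I would use $\frac{d}{dx}\frac{1}{(1-x)^{n}}=\frac{n}{(1-x)^{n+1}}$ and differentiate the assumed series term by term, obtaining
\begin{equation}
\frac{1}{(1-x)^{n+1}}=\sum_{j=0}^{\infty}\frac{j+1}{n}\binom{j+n}{n-1}x^{j}.
\end{equation}
What remains is the elementary identity $\frac{j+1}{n}\binom{j+n}{n-1}=\binom{j+n}{n}$, which follows from $\binom{j+n}{n-1}=\frac{(j+n)!}{(n-1)!\,(j+1)!}$; since $\binom{j+n}{n}=\binom{j+(n+1)-1}{(n+1)-1}$, this closes the induction. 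Equivalently one may invoke the generalized binomial theorem together with $\binom{-n}{k}(-1)^{k}=\binom{k+n-1}{n-1}$, or count the degree-$k$ monomials in $n$ variables by stars and bars.

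For the second identity I would simply replace $x$ by $x^{-1}$ in the first expansion, which is the series valid for $|x|>1$:
\begin{equation}
\frac{1}{(1-x^{-1})^{n}}=\sum_{k=0}^{\infty}\binom{k+n-1}{n-1}x^{-k}.
\end{equation}
Multiplying both sides by $(-1)^{n}x^{-n}$ and absorbing that factor into the exponent of $x$ yields $\frac{(-1)^{n}x^{-n}}{(1-x^{-1})^{n}}=\sum_{k=0}^{\infty}(-1)^{n}\binom{k+n-1}{n-1}x^{-k-n}$, which is the claim.

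Since every step is either a textbook series expansion or a one-line binomial manipulation, there is no genuine analytic difficulty here. The only point requiring care — and the part I would state most explicitly — is the bookkeeping of analytic regions: the two expansions must be assigned to $|x|<1$ and $|x|>1$ respectively, so that when combined as in the delta-function identity above they reproduce the correct residue structure at the higher-order pole. Checking the consistency of this region assignment against the reflection property $x^{-1}\delta^{(1)}(x^{-1})=-x\delta^{(1)}(x)$, together with its $n$-th order analogue, is the natural sanity check I would include.
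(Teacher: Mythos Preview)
Your proof is correct. The paper does not actually prove this proposition; it is stated without proof as a standard identity (the negative binomial series) and is used only as input for the subsequent theorem relating the two expansions to $\delta^{(n-1)}(x)$. Your induction-by-differentiation argument, together with the substitution $x\mapsto x^{-1}$ for the second line, is a complete and appropriate justification, and your remarks on the analytic regions are exactly the way the paper intends these expansions to be read.
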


Using the previous propositions, we obtain the following relation.
\begin{theorem}\label{thm:rat-delta-deriv}
    The rational functions and the delta function are related as
    \bea
     \left[\frac{1}{(1-x)^{n}}\right]_{|x|<1}-\left[\frac{1}{(1-x)^{n}}\right]_{|x|>1}&=\frac{1}{(n-1)!}\delta^{(n-1)}\left(x\right).
    \eea
    In particular, we have
    \bea
\left[\frac{1}{(1-x)^{2}}\right]_{|x|<1}-\left[\frac{1}{(1-x)^{2}}\right]_{|x|>1}&=\delta^{(1)}\left(x\right).
    \eea
\end{theorem}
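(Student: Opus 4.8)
The plan is to prove Theorem~\ref{thm:rat-delta-deriv} by matching Laurent coefficients, leaning entirely on the two explicit series expansions established just above the statement. Writing $f_{n}(x)$ for the left-hand side, the first step is to substitute the region-wise expansions
\bea
\left[\frac{1}{(1-x)^{n}}\right]_{|x|<1}=\sum_{k=0}^{\infty}\binom{k+n-1}{n-1}x^{k},\qquad \left[\frac{1}{(1-x)^{n}}\right]_{|x|>1}=\sum_{k=0}^{\infty}(-1)^{n}\binom{k+n-1}{n-1}x^{-k-n},
\eea
so that the difference becomes $f_{n}(x)=\sum_{k\geq 0}\binom{k+n-1}{n-1}x^{k}+\sum_{k\geq 0}(-1)^{n-1}\binom{k+n-1}{n-1}x^{-k-n}$, where the sign on the negative-power block arises from the overall minus in $f_{n}$.

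The second step is to expand $\frac{1}{(n-1)!}\delta^{(n-1)}(x)$ using the explicit series for $\delta^{(m)}(x)$ with $m=n-1$, and to rewrite the factorial prefactors as binomials via $\frac{(k+n-1)!}{(n-1)!\,k!}=\binom{k+n-1}{n-1}$. This produces exactly the same two blocks of monomials, the positive-power part carrying coefficient $\binom{k+n-1}{n-1}$ and the negative-power part carrying $(-1)^{n-1}\binom{k+n-1}{n-1}$, so the two formal distributions agree term by term. The displayed special case $\delta^{(1)}(x)$ is then just the instance $n=2$.

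Alternatively — and this is the argument I would actually write down, since it sidesteps the binomial bookkeeping — I would proceed by induction on $n$. The base case $n=1$ is precisely the defining Proposition $\delta(x)=\left[\frac{1}{1-x}\right]_{|x|<1}-\left[\frac{1}{1-x}\right]_{|x|>1}$. For the inductive step I would use that term-by-term differentiation of a region-wise formal Laurent series is legitimate, so that $\frac{d}{dx}f_{n}(x)=n\,f_{n+1}(x)$ (because $\frac{d}{dx}(1-x)^{-n}=n(1-x)^{-(n+1)}$ in each region), while on the other side $\frac{d}{dx}\delta^{(n-1)}(x)=\delta^{(n)}(x)$ by the very definition of the derivatives. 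Combining the inductive hypothesis with these two facts gives $n\,f_{n+1}=\frac{1}{(n-1)!}\delta^{(n)}$, i.e. $f_{n+1}=\frac{1}{n!}\delta^{(n)}$, which is the claim at level $n+1$.

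There is no serious obstacle here; the only points requiring care are the signs. The main thing to verify is that the $|x|>1$ expansion of $(1-x)^{-n}$ genuinely carries the factor $(-1)^{n}$, which follows from $(1-x)^{n}=(-1)^{n}x^{n}(1-x^{-1})^{n}$, and that the overall minus in $f_{n}$ turns this into the $(-1)^{n-1}$ appearing in $\delta^{(n-1)}$ — the mod-$2$ matching consistent with the reflection behavior recorded in Prop.~\ref{prop:deltaderiv-reflection}. In the inductive route one must additionally note that differentiating a genuinely two-sided distribution term by term introduces no spurious boundary term, which is immediate for formal Laurent series on each annular region.
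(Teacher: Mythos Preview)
Your first argument (direct coefficient matching via the two preceding propositions) is exactly the paper's intended proof: the paper simply says ``Using the previous propositions, we obtain the following relation'' and states the theorem, so the substitution and binomial identification you carry out is precisely what was meant.

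Your inductive alternative is also correct and is not in the paper. It trades the sign/binomial bookkeeping for a single differentiation identity $\frac{d}{dx}(1-x)^{-n}=n(1-x)^{-(n+1)}$ applied regionwise together with $\frac{d}{dx}\delta^{(n-1)}=\delta^{(n)}$; this is arguably cleaner once the base case $n=1$ is in hand, though the direct approach has the advantage of being self-contained given the two propositions already displayed.
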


Since the reflection property Prop.~\ref{prop:deltaderiv-reflection} involves factors of $x^{n+1}$ appearing in one side, it is convenient to introduce a dressed delta function.
\begin{definition}
    The dressed $n$-th derivative of the delta function is defined as
    \bea
    \tilde{\delta}^{(n)}(x)=x^{\frac{n+1}{2}}\delta^{(n)}(x).
    \eea
    In particular, for the $n=1$ case, we have
    \bea
    \tilde{\delta}^{(1)}(x)=x\delta^{(1)}(x).
    \eea
\end{definition}
Using this dressed version, the reflection property is concisely written as
\bea
\tilde{\delta}^{(n)}(x^{-1})=(-1)^{n}\tilde{\delta}^{(n)}(x).
\eea
Focusing on the $n=1$ case, the relation between $\tilde{\delta}^{(1)}(x)$ and $\delta^{(1)}(x)$ is
\bea\label{eq:logderivdelta-relation}
\tilde{\delta}^{(1)}(x)=x\delta^{(1)}(x)=x\partial_{x}\delta(x)=\partial_{x}(x\delta(x))-\delta(x)=\delta^{(1)}(x)-\delta(x).
\eea
This can be also confirmed as
\bea
\delta^{(1)}(x)-\delta(x)=\sum_{k\in\mathbb{Z}}kx^{k-1}-\sum_{k\in\mathbb{Z}}x^{k}=\sum_{k\in\mathbb{Z}}(k-1)x^{k-1}=\sum_{k\in\mathbb{Z}}kx^{k}=x\delta^{(1)}(x).
\eea

\begin{proposition}
We also have the following relation:
\bea
\delta^{(1)}\left(\frac{z}{u}\right)\delta\left(\frac{w}{z}\right)=\delta^{(1)}\left(\frac{z}{u}\right)\delta\left(\frac{w}{u}\right)+\left(\frac{w}{u}\right)\delta\left(\frac{z}{u}\right)\delta^{(1)}\left(\frac{w}{u}\right).
\eea
Denoting $\tilde{\delta}^{(1)}(x)=x\delta^{(1)}(x)$ and multiplying $z/u$ on both hand sides, we have
\bea
\tilde{\delta}^{(1)}\left(\frac{z}{u}\right)\delta\left(\frac{w}{z}\right)=\tilde{\delta}^{(1)}\left(\frac{z}{u}\right)\delta\left(\frac{w}{u}\right)+\delta\left(\frac{z}{u}\right)\tilde{\delta}^{(1)}\left(\frac{w}{u}\right)
\eea
\end{proposition}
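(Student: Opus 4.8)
The plan is to prove the identity by direct expansion into formal power series, since both the delta function and its derivative have explicit mode expansions. First I would record $\delta(x)=\sum_{n\in\mathbb{Z}}x^{n}$ and $\delta^{(1)}(x)=\sum_{k\in\mathbb{Z}}kx^{k-1}$, and note at the outset that although a product of two bi-infinite delta-series sharing a variable is not automatically well defined, the particular products appearing here are: in $\delta^{(1)}(z/u)\delta(w/z)$ every monomial $z^{a}u^{b}w^{c}$ is produced by a unique pair of summation indices, so the product is a genuine element of $\mathbb{C}[[z^{\pm1},u^{\pm1},w^{\pm1}]]$. This well-definedness is the only real subtlety; everything else is bookkeeping.

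The key computation is to expand the left-hand side. Writing $\delta^{(1)}(z/u)=\sum_{k}k\,z^{k-1}u^{1-k}$ and $\delta(w/z)=\sum_{m}w^{m}z^{-m}$ gives $\delta^{(1)}(z/u)\delta(w/z)=\sum_{k,m}k\,z^{k-1-m}u^{1-k}w^{m}$. Reindexing by the power $j=k-1-m$ of $z$ (so $k=j+1+m$) turns this into $\sum_{j,m}(j+1+m)\,z^{j}u^{-j}(w/u)^{m}$. I would then expand the two right-hand terms the same way: $\delta^{(1)}(z/u)\delta(w/u)=\sum_{j,m}(j+1)\,z^{j}u^{-j}(w/u)^{m}$, and, using $(w/u)\delta^{(1)}(w/u)=\tilde{\delta}^{(1)}(w/u)=\sum_{m}m(w/u)^{m}$, one finds $(w/u)\delta(z/u)\delta^{(1)}(w/u)=\sum_{j,m}m\,z^{j}u^{-j}(w/u)^{m}$. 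Since the coefficients $(j+1)$ and $m$ add to $(j+1+m)$, term-by-term comparison reproduces the left-hand side exactly, establishing the first identity.

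Finally, I would obtain the dressed form by multiplying both sides by $z/u$. On the left and on the first right-hand term this merely converts $\delta^{(1)}(z/u)$ into $\tilde{\delta}^{(1)}(z/u)=(z/u)\delta^{(1)}(z/u)$. For the second right-hand term I would use $(z/u)(w/u)\delta(z/u)\delta^{(1)}(w/u)=(z/u)\delta(z/u)\,\tilde{\delta}^{(1)}(w/u)$ and then invoke the localization rule \eqref{eq:delta-rational} in the form $(z/u)\delta(z/u)=\delta(z/u)$ to reduce it to $\delta(z/u)\tilde{\delta}^{(1)}(w/u)$. This gives precisely the claimed dressed relation $\tilde{\delta}^{(1)}(z/u)\delta(w/z)=\tilde{\delta}^{(1)}(z/u)\delta(w/u)+\delta(z/u)\tilde{\delta}^{(1)}(w/u)$.

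The main obstacle is not any hard estimate but ensuring the formal manipulations are legitimate: I would make explicit that each product of delta-series that appears localizes to a single well-defined triple series with finite coefficients, so that the reindexing and the comparison of coefficients are valid, and that the localization identity $(z/u)\delta(z/u)=\delta(z/u)$ applies to the dressed second term. Given the earlier propositions supplying the mode expansions and the rational-function/delta localization property, the remaining work is routine.
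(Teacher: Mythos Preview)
Your proof is correct and follows essentially the same approach as the paper: both expand the left-hand side as a double series, rewrite $(z/u)^{k-1}(w/z)^{l}$ as $(z/u)^{k-1-l}(w/u)^{l}$, and split the coefficient to identify the two right-hand terms. Your version is somewhat more explicit about the reindexing and about the well-definedness of the formal products, and you spell out the derivation of the dressed identity using $(z/u)\delta(z/u)=\delta(z/u)$, whereas the paper simply states this step in the proposition.
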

\begin{proof}
    Performing the series expansion of the left hand side gives
    \bea
    \delta^{(1)}\left(\frac{z}{u}\right)\delta\left(\frac{w}{z}\right)&=\sum_{k\in\mathbb{Z}}k\left(\frac{z}{u}\right)^{k-1}\sum_{l\in\mathbb{Z}}\left(\frac{w}{z}\right)^{l}\\
    &=\sum_{k,l\in\mathbb{Z}}(k-l)\left(\frac{z}{u}\right)^{k-1-l}\left(\frac{w}{u}\right)^{l}+\sum_{k,l}\left(\frac{z}{u}\right)^{k-1-l}\left(\frac{w}{u}\right)^{l}\\
    &=\delta^{(1)}\left(\frac{z}{u}\right)\delta\left(\frac{w}{u}\right)+\left(\frac{w}{u}\right)\delta\left(\frac{z}{u}\right)\delta^{(1)}\left(\frac{w}{u}\right).
    \eea
\end{proof}

Similar to the property \eqref{eq:delta-rational}, the product of a rational function $f(x)$ and $\delta^{(1)}(x)$ has the following simplication.
\begin{proposition}\label{prop:derivedeltafunct-rational}
    Let $f(x)$ be a rational function, and we have
    \bea
    f(x)\delta^{(1)}(x)&=f(1)\delta^{(1)}(x)-f'(1)\delta(x),\\
    f(x)\tilde{\delta}^{(1)}(x)&=f(1)\tilde{\delta}^{(1)}(x)-f'(1)\delta(x).
    \eea
\end{proposition}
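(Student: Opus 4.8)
The plan is to reduce the statement to two ingredients already available in the excerpt: the multiplication rule \eqref{eq:delta-rational} specialized to $a=1$, which reads $f(x)\delta(x)=f(1)\delta(x)$ for any rational function $f$ regular at $x=1$, and the Leibniz rule for the formal derivative $\partial_x$ acting on a product of such an $f$ with $\delta(x)$. Throughout I treat $\delta(x)$, $\delta^{(1)}(x)$ and $\tilde{\delta}^{(1)}(x)$ as formal distributions, and I assume, as the notation $f(1),f'(1)$ already requires, that $f$ has no pole at $x=1$, so that both $f(x)\delta(x)$ and $f'(x)\delta(x)$ are well defined via \eqref{eq:delta-rational}.

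First I would establish the identity for $\delta^{(1)}$. Since $\delta^{(1)}(x)=\partial_x\delta(x)$, the Leibniz rule gives
\begin{equation}
\partial_x\bigl(f(x)\delta(x)\bigr)=f'(x)\delta(x)+f(x)\delta^{(1)}(x).
\end{equation}
Applying \eqref{eq:delta-rational} at $a=1$ to both $f$ and $f'$ collapses the right-hand side: from $f(x)\delta(x)=f(1)\delta(x)$ one gets $\partial_x\bigl(f(x)\delta(x)\bigr)=f(1)\delta^{(1)}(x)$, while $f'(x)\delta(x)=f'(1)\delta(x)$. Solving the Leibniz identity for $f(x)\delta^{(1)}(x)$ then yields $f(x)\delta^{(1)}(x)=f(1)\delta^{(1)}(x)-f'(1)\delta(x)$, the first claimed relation.

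The second identity follows from the first together with the relation \eqref{eq:logderivdelta-relation}, namely $\tilde{\delta}^{(1)}(x)=\delta^{(1)}(x)-\delta(x)$. Multiplying by $f(x)$ and inserting both the identity just proved for $f(x)\delta^{(1)}(x)$ and the rule $f(x)\delta(x)=f(1)\delta(x)$ gives
\begin{equation}
f(x)\tilde{\delta}^{(1)}(x)=f(1)\delta^{(1)}(x)-f'(1)\delta(x)-f(1)\delta(x).
\end{equation}
Recombining $f(1)\delta^{(1)}(x)-f(1)\delta(x)=f(1)\tilde{\delta}^{(1)}(x)$ by \eqref{eq:logderivdelta-relation} once more produces $f(x)\tilde{\delta}^{(1)}(x)=f(1)\tilde{\delta}^{(1)}(x)-f'(1)\delta(x)$, as desired.

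The main obstacle is one of rigor rather than computation: one must justify that the formal Leibniz rule, and the differentiation of the identity $f(x)\delta(x)=f(1)\delta(x)$, are legitimate operations on these formal distributions, since products of arbitrary formal distributions need not exist. This is guaranteed precisely by the regularity of $f$ at $x=1$, which makes the two one-sided expansions $[f(x)]^{x}_{\pm}$ compatible with the expansions defining $\delta$ and $\delta^{(1)}$ (cf.\ the discussion around \eqref{eq:delta-rational} and Thm.~\ref{thm:rat-delta-deriv}). As an independent cross-check one can instead expand both sides directly in powers of $x^{\mp 1}$ using the explicit series for $\delta^{(1)}(x)$ recorded earlier and match coefficients, which reproduces the same two relations without invoking the Leibniz rule.
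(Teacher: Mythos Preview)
Your proof is correct and essentially identical to the paper's: both derive the first identity via the Leibniz rule applied to $\partial_x(f(x)\delta(x))$ together with $f(x)\delta(x)=f(1)\delta(x)$, and both mention the series-expansion cross-check. The only cosmetic difference is in the second identity: the paper obtains it by multiplying the first identity by $x$ (using $\tilde{\delta}^{(1)}(x)=x\delta^{(1)}(x)$ and $x\delta(x)=\delta(x)$), whereas you use the equivalent relation $\tilde{\delta}^{(1)}(x)=\delta^{(1)}(x)-\delta(x)$; these are interchangeable one-line reductions.
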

\begin{proof}
    Let us prove the first equation. The second one can be obtained by multiplying both hand sides of the first one by $x$. Using the definition of $\delta^{(1)}(x)$, we have
    \bea
    f(x)\delta^{(1)}(x)=f(x)\frac{d}{dx}\delta(x)=\frac{d}{dx}\left(f(x)\delta(x)\right)-f'(x)\delta(x)=f(1)\delta^{(1)}(x)-f'(1)\delta(x).
    \eea

We can also show it explicitly using the series expansion. Assume the Laurent expansion $f(x)=\sum_{n\in\mathbb{Z}}f_{n}x^{n}$. We then have
\bea
f(x)\delta^{(1)}(x)&=\sum_{n\in\mathbb{Z}}f_{n}x^{n}\sum_{m\in\mathbb{Z}}mx^{m-1}=\sum_{m,n\in\mathbb{Z}}f_{n}mx^{m-1+n}\\
&=\sum_{n,m\in\mathbb{Z}}f_{n}(m+1-n)x^{m}\\
&=\left(\sum_{n\in\mathbb{Z}}f_{n}\sum_{m\in\mathbb{Z}}mx^{m-1}\right)-\left(\sum_{n\in\mathbb{Z}}n f_{n}\sum_{m\in\mathbb{Z}}x^{m}\right)\\
&=f(1)\delta^{(1)}(x)-f'(1)\delta(x).
\eea
\end{proof}

\paragraph{Bootstrapping the module} Gaiotto--Rap\v{c}\'{a}k constructed the algebraic action of the Drinfeld currents of the affine Yangian $\mathfrak{gl}_{1}$ by assuming the GR box counting rules and studying the defining relations \cite{Gaiotto:2020dsq} (see also \cite{Gaberdiel:2017hcn,Gaberdiel:2018nbs} for discussion of anti-fundamental modules). We will not attempt to reproduce the whole discussion in the trigonometric language in this paper but give a rough sketch of it for the case $(\lambda,\mu,\nu)=(\Bbox,\Bbox,\Bbox)$. See Appendix~\ref{app-sec:A1collision-module-bootstrap} for a similar discussion on the quantum affine algebra $\mathcal{U}_{\sfq}(\widehat{\mathfrak{sl}}_{2})$.

From level 0 to level one, we have the transition rule
\bea
\ket{\varnothing}\rightarrow \ket{0_L}+\ket{0_H}.
\eea
The action of $K^{\pm}(z)$ on the vacuum is
\bea
K^{\pm}(z)\ket{\varnothing}=\left[\Psi^{\PT,\, \Bbox\,\Bbox\,\Bbox}_{\varnothing,u}(z)\right]_{\pm}\ket{\varnothing}.
\eea
Since the eigenvalue contains a second order pole at $z=u$, corresponding to the addable box, it is natural to expect that the action of $E(z)$ on the vacuum also contains a second order delta function. We thus first start from the following ansatz:
\bea
E(z)\ket{\varnothing}=E(\varnothing\rightarrow 0_{L})\tilde{\delta}^{(1)}\left(\frac{z}{u}\right)\ket{0_{L}}+E(\varnothing\rightarrow 0_{H})\delta\left(\frac{z}{u}\right)\ket{0_{H}}.
\eea
One can instead use $\tilde{\delta}^{(1)}\left(u/z\right)$ in the right hand side but it differs by replacing the coefficient with an extra sign due to the reflection property in Prop.~\ref{prop:deltaderiv-reflection}. This is also one reason we used $\tilde{\delta}^{(1)}(x)$ instead of $\delta^{(1)}(x)$.

Let us then use the $KE$ relation on the vacuum. We first have
\bea
K^{\pm}(z)E(w)\ket{\varnothing}&=E(\varnothing\rightarrow 0_{L})\tilde{\delta}^{(1)}\left(\frac{w}{u}\right)K^{\pm}(z)\ket{0_{L}}+E(\varnothing\rightarrow 0_{H})\delta\left(\frac{w}{u}\right)K^{\pm}(z)\ket{0_{H}}.
\eea
We also have
\bea
\sfg\left(\frac{w}{z}\right)^{-1}E(w)K^{\pm}(z)\ket{\varnothing}&=\sfg\left(\frac{w}{z}\right)^{-1}\Psi^{\PT,\, \Bbox\,\Bbox\,\Bbox}_{\varnothing,u}(z)\left(E(\varnothing\rightarrow 0_{L})\tilde{\delta}^{(1)}\left(\frac{w}{u}\right)\ket{0_{L}}+E(\varnothing\rightarrow 0_{H})\delta\left(\frac{w}{u}\right)\ket{0_{H}}\right)\\
&=E(\varnothing\rightarrow 0_{L})\sfg\left(\frac{u}{z}\right)^{-1}\Psi^{\PT,\, \Bbox\,\Bbox\,\Bbox}_{\varnothing,u}(z)\tilde{\delta}^{(1)}\left(\frac{w}{u}\right)\ket{0_{L}}\\
&-E(\varnothing\rightarrow 0_{L})\left.\partial_{\log w}\sfg\left(\frac{w}{z}\right)^{-1}\right|_{w=u}\Psi^{\PT,\, \Bbox\,\Bbox\,\Bbox}_{\varnothing,u}(z)\delta\left(\frac{w}{u}\right)\ket{0_{L}}\\
&+E(\varnothing\rightarrow 0_{H})\Psi^{\PT,\, \Bbox\,\Bbox\,\Bbox}_{\varnothing,u}(z)\sfg\left(\frac{u}{z}\right)^{-1}\delta\left(\frac{w}{u}\right)\ket{0_{H}}
\eea
where we used
\bea
\,&\sfg\left(\frac{w}{z}\right)^{-1}=\sfg\left(\frac{u}{z}\right)^{-1}+\left.\partial_{w}\sfg\left(\frac{w}{z}\right)^{-1}\right|_{w=u}(w-u)+\ldots,\\
&(w-u)\delta^{(1)}\left(\frac{w}{u}\right)=-u\delta\left(\frac{w}{u}\right).
\eea
Comparing both hand sides gives
\bea\label{eq:shiftgl1-levelone-Cartan}
K^{\pm}(z)\ket{0_{L}}&=\left[\Psi^{\PT,\, \Bbox\,\Bbox\,\Bbox}_{\varnothing,u}(z)\sfg\left(\frac{u}{z}\right)^{-1}\right]_{\pm}\ket{0_{L}},\\
K^{\pm}(z)\ket{0_{H}}&=\left[\Psi^{\PT,\, \Bbox\,\Bbox\,\Bbox}_{\varnothing,u}(z)\sfg\left(\frac{u}{z}\right)^{-1}\right]_{\pm}\ket{0_{H}}-\frac{E(\varnothing\rightarrow 0_{L})}{E(\varnothing\rightarrow 0_{H})}\left[\Psi^{\PT,\, \Bbox\,\Bbox\,\Bbox}_{\varnothing,u}(z)\tilde{\partial}\,\sfg\left(\frac{u}{z}\right)^{-1}\right]_{\pm}\ket{0_{L}},
\eea
where we denoted the derivatives with the log variables as
\bea
\tilde{\partial} f(x)=x\partial_{x}f(x)=\partial_{\log x}f(x).
\eea
Thus, the currents $K^{\pm}(z)$ no more commute with each other but rather form a Jordan block. 

The charge functions are computed as
\bea\label{eq:shiftgl1-level1-eigenvalue}
\Psi^{\PT,\, \Bbox\,\Bbox\,\Bbox}_{\varnothing,u}(z)\sfg\left(\frac{u}{z}\right)^{-1}&=\frac{(1-\sfq_{1}^{-1}u/z)(1-\sfq_{2}^{-1}u/z)(1-\sfq_{3}^{-1}u/z)}{(1-u/z)^{2}},\\
\Psi^{\PT,\, \Bbox\,\Bbox\,\Bbox}_{\varnothing,u}(z)\tilde{\partial}\sfg\left(\frac{u}{z}\right)^{-1}&=\frac{(1-\sfq_{1,2,3}^{-1}u/z)}{(1-u/z)^{2}}\sum_{c=1}^{3}\frac{\sfq_{c} u/z}{1-\sfq_{c} u/z}-\sum_{c=1}^{3}\frac{(1-\sfq_{c-1}^{-1}u/z)(1-\sfq_{c+1}^{-1}u/z)\sfq_{c}^{-1}u/z}{(1-u/z)^{2}}.
\eea
One of the poles at $z=u$ in \eqref{eq:shiftgl1-level1-eigenvalue} is understood as the pole for removing the box at the origin, while the other pole is understood as the pole for adding a box at the origin. Since it is a second-order pole, it is natural to impose the following ansatz
\bea
F(z)\ket{0_{L}}&=F(0_{L}\rightarrow \varnothing)\delta\left(\frac{z}{u}\right)\ket{\varnothing}\\
F(z)\ket{0_{H}}&=\left(F^{(2)}(0_{H}\rightarrow \varnothing)\tilde{\delta}^{(1)}\left(\frac{z}{u}\right)+F^{(1)}(0_{H}\rightarrow \varnothing)\delta\left(\frac{z}{u}\right)\right)\ket{\varnothing}.
\eea
The $KF$ and $EF$ relations will impose constraints on the coefficients.

Let us next consider the transition between level one and level two. Using \eqref{eq:shiftgl1-level1-eigenvalue}, for the heavy box, we have poles at $z=u,\sfq_{1,2,3}u$, while for the light box, we have a pole only at $z=u$. The poles imply that we can add a box at $z=u$ for both configurations and we can add boxes at $z=\sfq_{1,2,3}u$ only for the heavy box. Placing boxes at $z=u$ twice implies the ultra-heavy box and it is compatible with the JK-residue computation explained in \eqref{eq:3leg-level2-1}. 

For the transitions $\ket{0_{H,L}}\rightarrow \ket{0_U}$, the action of $E(z)$ are
\bea
E(z)\ket{0_L}&\sim E(0_L\rightarrow 0_U)\delta\left(\frac{z}{u}\right)\ket{0_U},\\
E(z)\ket{0_H}&\sim \left(E^{(2)}(0_H\rightarrow 0_U)\tilde{\delta}^{(1)}\left(\frac{z}{u}\right)+E^{(1)}(0_H\rightarrow 0_U)\delta\left(\frac{z}{u}\right)\right)\ket{0_U}
\eea
where $\sim $ means that we extracted the terms only related to the states written explicitly. The $KE$ relation on $\ket{0_L}$ gives 
\bea\label{eq:shiftgl1-ultraheavy-eigenvalue}
K^{\pm}(z)\ket{0_U}&=\Psi^{\PT,\, \Bbox\,\Bbox\,\Bbox}_{\varnothing,u}(z)\sfg\left(\frac{u}{z}\right)^{-2}\ket{0_U}\\
&=\frac{(1-\sfq_{1}^{-1}u/z)^{2}(1-\sfq_{2}^{-1}u/z)^{2}(1-\sfq_{3}^{-1}u/z)^{2}}{(1-u/z)^{2}(1-\sfq_{1}u/z)(1-\sfq_{2}u/z)(1-\sfq_{3}u/z)}\ket{0_U}.
\eea
The $KE$ relation on $\ket{0_H}$ gives a nontrivial constraint so that this equation is satisfied. After studying the $EE$ relations on $\ket{0_{L,H}}$, one will see that they are all compatible.

For the action on $F(z)$, we have
\bea
F(z)\ket{0_U}&\sim F(0_U\rightarrow 0_H)\delta\left(\frac{z}{u}\right)\ket{0_H}\\
&+\left(F^{(2)}(0_U\rightarrow 0_L)\tilde{\delta}^{(1)}\left(\frac{z}{u}\right)+F^{(1)}(0_U\rightarrow 0_L)\delta\left(\frac{z}{u}\right)\right)\ket{0_L}.
\eea
The assumption that we have the delta function comes from the fact that the pole $z=u$ at \eqref{eq:shiftgl1-ultraheavy-eigenvalue} corresponds to the removable boxes and it is a second-order pole. The $FF,FK,EF$ relations impose further constraints and one can determine the coefficients.

For the transition\footnote{Although we used the same notation as in the previous sections, note that the $\epsilon_{1,2,3}$ parameters obey $\eps_1+\eps_2+\eps_3=0$ in this section.} $\ket{0_H}\rightarrow \ket{\eps_1}+\ket{\eps_2}+\ket{\eps_3}$, the action of $E(z)$ is similar to the standard rules \eqref{eq:MacMahonrep}:
\bea
E(z)\ket{0_H}\sim E(0_H\rightarrow 0_H+\eps_3)\delta\left(\frac{z}{u\sfq_{3}}\right)\ket{0_H,\eps_3}
\eea
where we focused only on the box $\eps_3$ using the triality. The $KE$ relation gives
\bea\label{eq:shiftgl1-leveltwo-onelegextend-eigenvalue}
K^{\pm}(z)\ket{0_H,\eps_3}&=\Psi^{\PT,\, \Bbox\,\Bbox\,\Bbox}_{\varnothing,u}(z)\sfg\left(\frac{u}{z}\right)^{-1}\sfg\left(\frac{u\sfq_{3}}{z}\right)^{-1}\ket{0_H,\eps_3}\\
&=\frac{(1-\sfq_{12}u/z)(1-\sfq_{1}\sfq_{3}^{2}u/z)(1-\sfq_{2}\sfq_{3}^{2}u/z)}{(1-u/z)(1-\sfq_{3}^{2}u/z)}.
\eea
Note that we cannot add a box on top of the light box because the charge function $\Psi^{\PT,\, \Bbox\,\Bbox\,\Bbox}_{\varnothing,u}(z)\sfg\left(\frac{u}{z}\right)^{-1}$ does not contain a pole at $z=u\sfq_{3}$. 

On the other hand, for the action $F(z)$, we have
\bea
F(z)\ket{0_H,\eps_3}\sim F(0_H+\eps_3\rightarrow 0_L)\delta\left(\frac{z}{u\sfq_{3}}\right)\ket{0_L}.
\eea
Note that the right hand side needs to be $\ket{0_L}$ but not $\ket{0_H}$. This can be deduced from the $EF$ relation on $\ket{0_H}$. The contribution coming from the pole $u\sfq_{3}$ is
\bea
\relax [E(z),F(w)]\ket{0_H}&\sim -F(w)E(z)\ket{0_H}\\
&\sim F(0_H+\eps_3\rightarrow 0_L)E(0_H\rightarrow 0_H+\eps_3)\delta\left(\frac{z}{u\sfq_3}\right)\delta\left(\frac{w}{u\sfq_{3}}\right)\ket{0_L}
\eea
and
\bea
\delta\left(\frac{w}{z}\right)\left(K^{+}(z)-K^{-}(z)\right)\ket{0_H}&\sim-\frac{E(\varnothing\rightarrow 0_{L})}{E(\varnothing\rightarrow 0_{H})}\underset{z=u\sfq_{3}}{\Res}z^{-1}\Psi^{\PT,\, \Bbox\,\Bbox\,\Bbox}_{\varnothing,u}(z)\tilde{\partial}\,\sfg\left(\frac{u}{z}\right)^{-1}\delta\left(\frac{w}{z}\right)\delta\left(\frac{z}{u\sfq_{3}}\right)\ket{0_{L}}.
\eea
Since the pole at $z=u\sfq_{3}$ is only contained in the derivative term $\tilde{\partial} \sfg\left(u/z\right)^{-1}$, using \eqref{eq:shiftgl1-levelone-Cartan} and \eqref{eq:shiftgl1-level1-eigenvalue}, so that the $EF$ relation is compatible, the action of $F(z)$ needs to produce the light box but not the heavy box.

For higher levels, the action of the Drinfeld currents is easy to derive. For example, for the ultra-heavy box, the poles $z=u\sfq_{1,2,3}$ of \eqref{eq:shiftgl1-ultraheavy-eigenvalue} are the positions where one can add boxes. Since it is a simple pole, the action of $E(z)$ follows the standard rule \eqref{eq:MacMahonrep}. If we further add a box at $z=u\sfq_{3}$ on top of the ultra-heavy box, $K^{\pm}(z)$ acts diagonally and the charge function is
\bea
\Psi^{\PT,\, \Bbox\,\Bbox\,\Bbox}_{\varnothing,u}(z)\sfg\left(\frac{u}{z}\right)^{-2}\sfg\left(\frac{u\sfq_{3}}{z}\right)^{-1}=\frac{(1-\sfq_{12}u/z)^{2}(1-\sfq_{13}u/z)(1-\sfq_{23}u/z)(1-\sfq_{1,2}\sfq_{3}^{2}u/z))}{(1-u/z)(1-\sfq_{1}u/z)(1-\sfq_{2}u/z)(1-\sfq_{3}u/z)(1-\sfq_{3}^{2}u/z)}.
\eea
The pole at $z=u\sfq_{3}$ is the removable pole and the pole $z=u\sfq_{3}^{2}$ is an addable pole. All of the poles are single order and so the standard rule is applicable. Note that we still have addable poles at $z=\sfq_{1,2}u$ and we can add boxes there. This is compatible with the JK-residue computation where we can place boxes extending in all the three directions on the ultra-heavy box.

On the other hand, for the configuration $\ket{0_H,\eps_3}$, the eigenvalue \eqref{eq:shiftgl1-leveltwo-onelegextend-eigenvalue} implies that we can only add a box at $z=u\sfq_{3}$. Note that there are no poles at $z=u\sfq_{1,2}$ in this charge function. For the configuration with $k$-boxes extending in the 3-direction, the charge function is
\bea
\Psi^{\PT,\, \Bbox\,\Bbox\,\Bbox}_{\varnothing,u}(z)\prod_{i=1}^{k}\sfg\left(\frac{u\sfq_{3}^{i-1}}{z}\right)^{-1}
\eea
and it contains a pole at $z=u\sfq_{3}^{k}$, implying that we can add a box there.

Summarizing, for higher levels, the bases are
\bea
&\ket{0_{U},\epsilon_{1},\ldots,n_{1}\epsilon_{1},\epsilon_{2},\ldots,n_{2}\epsilon_{2},\epsilon_{3},\ldots,n_{3}\epsilon_{3}},\quad n_{1}+n_{2}+n_{3}=k-2 \,(n_{i}\geq 0),\\
&\ket{0_{H},\epsilon_{i},\ldots,(k-1)\epsilon_{i}},\quad i=1,2,3
\eea
and the action of $K^{\pm}(z)$ gives the charge functions
\bea\label{eq:PT3-three-leg-generic-eigenvalues}
&\Psi^{\PT,\, \Bbox\,\Bbox\,\Bbox}_{\varnothing,u}(z)\sfg\left(\frac{u}{z}\right)^{-2}\prod_{i=1}^{n_{1}}\sfg\left(\frac{u\sfq_{1}^{i-1}}{z}\right)^{-1}\prod_{j=1}^{n_{2}}\sfg\left(\frac{u\sfq_{2}^{j-1}}{z}\right)^{-1}\prod_{k=1}^{n_{3}}\sfg\left(\frac{u\sfq_{3}^{k-1}}{z}\right)^{-1},\\
&\Psi^{\PT,\, \Bbox\,\Bbox\,\Bbox}_{\varnothing,u}(z)\prod_{i=1}^{k-1}\sfg\left(\frac{u\sfq_{c}^{i-1}}{z}\right)^{-1}
\eea
respectively.


\subsubsection{PT3 $q$-characters}
The $qq$-characters have a one-to-one correspondence with the vertical representations of the (shifted) quantum toroidal $\mathfrak{gl}_{1}$. In particular, if we take the Nekrasov--Shatashvili (NS) limit of the $qq$-characters, we will get the $q$-characters \cite{Knight:1995JA,Frenkel:1998ojj} of the (shifted) quantum toroidal $\mathfrak{gl}_1$. From the quantum algebra viewpoint, the $q$-character is defined as the trace of $K^{\pm}(z)$ over the underlying module 
\bea\label{eq:q-character-def}
\Tr\left(\fq^{d}K^{\pm}(z)\right)
\eea
where $d$ is some degree counting operator.

To make the discussion concrete, let us first focus on the MacMahon representation. The degree counting operator simply counts the number of boxes and using \eqref{eq:CartanMacMahon} we obtain
\bea
T^{K}_{\bar{4}}(u)=\Tr_{M(u,K)}\left(\fq^{d}K^{\pm}(z)\right)=\sum_{\pi}\fq^{|\pi|}\Psi_{\pi,u}(z)=\sum_{\pi}\fq^{|\pi|}K^{-1/2}\frac{1-Ku/z}{1-u/z}\prod_{\scube\in\pi}\sfg\left(\frac{\chi_{u}(\cube)}{z}\right)^{-1}.
\eea
For later use, we remove the factors depending on $K$ and define the following 
\bea
T_{\bar{4}}(u)=\sum_{\pi}\sfq^{|\pi|}\frac{1}{1-u/z}\prod_{\scube\in \pi}\sfg\left(\frac{\chi_{u}(\cube)}{z}\right)^{-1}.
\eea
This will be the $q$-character of the shifted quantum toroidal $\mathfrak{gl}_1$.

To relate the $qq$-characters with the $q$-characters, we take the contraction with the screening current $\mathsf{S}_{4}(q_{4}z)$:
\bea
\mathcal{T}_{\bar{4}}(u)&=\bra{0}\mathsf{S}_{4}(q_{4}z)\mathsf{T}_{\bar{4}}(u)\ket{0}=(-q_{4}u)\sum_{\pi\in\mathcal{PP}}\fq^{|\pi|}\mathcal{Z}_{\bar{4}}^{\D6\tbar\QA}[\pi]\left[\frac{1}{(1-u/z)}\prod_{\scube \in\pi}g_{\bar{4}}\left(\frac{\chi_{\bar{4},u}(\cube)}{z}\right)^{-1}\right].
\eea
This is also called the $qq$-character in the context of quantum toroidal algebra. After taking the NS limit $q_{4}\rightarrow 1$, the coefficient becomes trivial $\mathcal{Z}^{\D6\tbar\QA}_{\bar{4}}[\pi]\rightarrow1$ and we obtain
\bea
\mathcal{T}_{\bar{4}}(u)\xrightarrow{q_{4}\rightarrow 1} -u T_{\bar{4}}(u)
\eea
where we used
\bea
\chi_{\bar{4},u}(\cube)\xrightarrow{q_{4}\rightarrow 1} \chi_{u}(\cube),\quad g_{\bar{4}}(x)\xrightarrow{q_{4}\rightarrow 1} \sfg (x).
\eea
Thus, the NS limit of the $qq$-character coincides with the $q$-character.

This discussion can be generalizable to the PT $qq$-characters but we need to be careful depending on the number of legs. 

\paragraph{Two-legs PT $q$-characters}
When at least one of the three-legs is trivial, the module is constructed in a similar way as the MacMahon representation. The Cartan eigenvalue for a PT configuration $\pi\in\widetilde{\mathcal{PT}}_{\lambda\mu\nu}$ is 
\bea
\Psi^{\PT,\,\lambda\mu\nu}_{\varnothing,u}(z)\prod_{\scube\in \pi}\sfg\left(\frac{\chi_{u}(\cube)}{z}\right)^{-1}
\eea
and so the $q$-character is determined by
\bea
PT_{\lambda\mu\nu}(u)=\sum_{\pi\in\widetilde{\mathcal{PT}}_{\lambda\mu\nu}}\sfq^{|\pi|}\Psi^{\PT,\,\lambda\mu\nu}_{\varnothing,u}(z)\prod_{\scube\in\pi}\sfg\left(\frac{\chi_{u}(\cube)}{z}\right)^{-1}.
\eea

One of the main claim of this section is that this $q$-character is obtained from the NS limit of the \textit{conjugate} PT $qq$-character. Following the description above, let us take the contraction with the screening current $\mathsf{S}_{4}(q_{4}z)$:
\bea
\mathcal{PT}_{\lambda\mu\nu}(u)&=\bra{0}\mathsf{S}_{4}(q_{4}z)\widetilde{\mathsf{PT}}_{\bar{4};\lambda\mu\nu}(u)\ket{0}=(-q_{4}u)\sum_{\pi\in\mathcal{PT}_{\lambda\mu\nu}}\fq^{|\pi|}\widetilde{\mathcal{Z}}^{\PT\tbar\QA}_{\bar{4};\lambda\mu\nu}[\pi]\widetilde{\mathscr{PW}}^{\bar{4},\lambda\mu\nu}_{\pi,u}(z)^{-1}.
\eea
Taking the NS-limit, we have
\bea
\widetilde{\mathcal{Z}}^{\PT\tbar\QA}_{\bar{4};\lambda\mu\nu}[\pi]\xrightarrow{q_{4}\rightarrow 1} 1,\quad \widetilde{\mathscr{PW}}^{\bar{4},\lambda\mu\nu}_{\pi,u}(z)^{-1}\xrightarrow{q_{4}\rightarrow 1}\Psi^{\PT,\,\lambda\mu\nu}_{\varnothing,u}(z)\prod_{\scube\in \pi}\sfg\left(\frac{\chi_{u}(\cube)}{z}\right)^{-1}
\eea
which gives
\bea
\mathcal{PT}_{\lambda\mu\nu}(u)\xrightarrow{q_{4}\rightarrow 1}  -u PT_{\lambda\mu\nu}(u).
\eea

\paragraph{Three-legs PT $q$-characters}
For the three-legs case, the discussion will be complicated because of the existence of second order poles. We only focus on the case $(\lambda,\mu,\nu)=(\Bbox,\Bbox,\Bbox)$, but the following discussion can be generalized.

Let us first start from the $q$-character of the module constructed in section~\ref{sec:PT3-threelegs-gl1module}. We still can use the definition given in \eqref{eq:q-character-def}. For level one, we have two states $\ket{0_{L,H}}$ whose action of $K^{\pm}(z)$ is \eqref{eq:shiftgl1-levelone-Cartan}. The action of $K^{\pm}(z)$ may be non-diagonal but after taking the trace, such non-diagonal term will not appear and we obtain
\bea
2\times \Psi^{\PT,\, \Bbox\,\Bbox\,\Bbox}_{\varnothing,u}(z)\sfg\left(\frac{u}{z}\right)^{-1}.
\eea
The multiplicity factor here comes from the two-states degeneracy of the light and heavy boxes and the fact that the diagonal part of the Jordan block is the same.

For level two, the contributions come from the ultra-heavy box $\ket{0_U}$ and $\ket{0_H,\eps_{1,2,3}}$ and thus, we have
\bea
\Psi^{\PT,\, \Bbox\,\Bbox\,\Bbox}_{\varnothing,u}(z)\sfg\left(\frac{u}{z}\right)^{-2}+\sum_{i=1}^{3}\Psi^{\PT,\, \Bbox\,\Bbox\,\Bbox}_{\varnothing,u}(z)\sfg\left(\frac{u}{z}\right)^{-1}\sfg\left(\frac{u\sfq_{i}}{z}\right)^{-1}.
\eea

For higher levels, since the Drinfeld current $K^{\pm}(z)$ acts diagonally, we simply need to take the sum over the eigenvalues of the bases using \eqref{eq:PT3-three-leg-generic-eigenvalues}. 

The $q$-character is then given as
\bea
PT_{\Bbox\,\Bbox\,\Bbox}(u)&=\Psi^{\PT,\, \Bbox\,\Bbox\,\Bbox}_{\varnothing,u}(z)+2\fq\Psi^{\PT,\, \Bbox\,\Bbox\,\Bbox}_{\varnothing,u}(z)\sfg\left(\frac{u}{z}\right)^{-1}\\
&+\fq^{2}\left(\Psi^{\PT,\, \Bbox\,\Bbox\,\Bbox}_{\varnothing,u}(z)\sfg\left(\frac{u}{z}\right)^{-2}+\sum_{i=1}^{3}\Psi^{\PT,\, \Bbox\,\Bbox\,\Bbox}_{\varnothing,u}(z)\sfg\left(\frac{u}{z}\right)^{-1}\sfg\left(\frac{u\sfq_{i}}{z}\right)^{-1}\right)+\cdots
\eea
where we explicitly wrote only up to level two. Strictly speaking, one needs to determine how the degree counting operator acts on the bases of the PT module. For example, whether the ultra-heavy box is counted twice is a fact that needs to be confirmed. We leave all of this for future work.

Let us next move on to the three-legs PT $qq$-character. Again, let us take the contraction of the conjugate PT $qq$-character with the screening current:
\bea
\mathcal{PT}_{\Bbox\,\Bbox\,\Bbox}(u)&=\bra{0}\mathsf{S}_{4}(q_{4}z)\widetilde{\mathsf{PT}}_{\bar{4};\Bbox\,\Bbox\,\Bbox}(u)\ket{0}\\
&=\bra{0}\mathsf{S}_{4}(q_{4}z)\widetilde{\mathsf{H}}_{\bar{4};\Bbox\,\Bbox\,\Bbox}(u)\ket{0}+\fq \left(\widetilde{\mathcal{Z}}^{\PT\tbar\QA}_{\bar{4};\,\Bbox\,\Bbox\,\Bbox}[1]\bra{0}\mathsf{S}_{4}(q_{4}z):\widetilde{\mathsf{H}}_{\bar{4};\Bbox\,\Bbox\,\Bbox}(u)\mathsf{A}^{-1}(u):\ket{0}\right.\\
&\left.+(1-q_{4}^{-1})\bra{0}\mathsf{S}_{4}(q_{4}z):\widetilde{\mathsf{H}}_{\bar{4};\Bbox\,\Bbox\,\Bbox}(u)\partial_{\log u}\mathsf{A}^{-1}(u):\ket{0}\right)\\
&+\fq^{2}\left( \frac{1}{2}\widetilde{\mathcal{Z}}^{\QA}_{(0,0)}\bra{0}\mathsf{S}_{4}(q_{4}z):\widetilde{\mathsf{H}}_{\bar{4};\Bbox\,\Bbox\,\Bbox}(u)\mathsf{A}^{-2}(u):\ket{0}\right.\\
&\left.+\sum_{i=1}^{3}\widetilde{\mathcal{Z}}^{\QA}_{(0,\eps_i)}\bra{0}\mathsf{S}_{4}(q_{4}z):\widetilde{\mathsf{H}}_{\bar{4};\Bbox\,\Bbox\,\Bbox}(u)\mathsf{A}^{-1}(u)\mathsf{A}^{-1}(uq_{i}):\ket{0}\right)+\cdots 
\eea
At the NS-limit (the unrefined limit in section~\ref{sec:unrefined-vertex}) the coefficients become
\bea
\widetilde{\mathcal{Z}}^{\PT\tbar\QA}_{\bar{4};\,\Bbox\,\Bbox\,\Bbox}[1]\rightarrow 2,\quad \widetilde{\mathcal{Z}}^{\QA}_{(0,0)}\rightarrow 2,\quad \widetilde{\mathcal{Z}}^{\QA}_{(0,\eps_i)}\rightarrow 1.
\eea
An interesting property is that the coefficient containing derivatives of the vertex operators vanishes $1-q_{4}^{-1}\rightarrow 0$. Moreover, using
\bea
\bra{0}\mathsf{S}_{4}(q_{4}z)\widetilde{\mathsf{H}}_{\bar{4};\Bbox\,\Bbox\,\Bbox}(u)\ket{0}&\xrightarrow{q_{4}\rightarrow 1} -u\Psi^{\PT,\, \Bbox\,\Bbox\,\Bbox}_{\varnothing,u}(z),\\
\bra{0}\mathsf{S}_{4}(q_{4}z):\widetilde{\mathsf{H}}_{\bar{4};\Bbox\,\Bbox\,\Bbox}(u)\mathsf{A}^{-1}(u):\ket{0}&\xrightarrow{q_{4}\rightarrow 1}-u\Psi^{\PT,\, \Bbox\,\Bbox\,\Bbox}_{\varnothing,u}(z)\sfg\left(\frac{u}{z}\right)^{-1},\\
\bra{0}\mathsf{S}_{4}(q_{4}z):\widetilde{\mathsf{H}}_{\bar{4};\Bbox\,\Bbox\,\Bbox}(u)\mathsf{A}^{-2}(u):\ket{0}&\xrightarrow{q_{4}\rightarrow 1}-u\Psi^{\PT,\, \Bbox\,\Bbox\,\Bbox}_{\varnothing,u}(z)\sfg\left(\frac{u}{z}\right)^{-2},\\
\bra{0}\mathsf{S}_{4}(q_{4}z):\widetilde{\mathsf{H}}_{\bar{4};\Bbox\,\Bbox\,\Bbox}(u)\mathsf{A}^{-1}(u)\mathsf{A}^{-1}(uq_{i}):\ket{0}&\xrightarrow{q_{4}\rightarrow 1}-u\Psi^{\PT,\, \Bbox\,\Bbox\,\Bbox}_{\varnothing,u}(z)\sfg\left(\frac{u}{z}\right)^{-1}\sfg\left(\frac{u\sfq_{i}}{z}\right)^{-1}
\eea
we obtain
\bea
\mathcal{PT}_{\Bbox\,\Bbox\,\Bbox}(u)\rightarrow -u PT_{\Bbox\,\Bbox\,\Bbox}(u).
\eea
The nontrivial part is that after taking the NS-limit, the PT3 partition function (including the Weyl group factor) gives the multiplicity of the eigenspace and that the non-diagonal terms all vanish as expected. The computation for higher levels is similar and one can show exactly that the NS-limit of the $qq$-character indeed becomes the $q$-character of the PT3 module.

\section{Conclusion and discussion}
In the first half of this paper, we showed that even if we start from the same contour integrand, choosing the reference vector $\eta=\eta_0$ gives the DT-side, while choosing $\eta=\tilde{\eta}_0$ gives the PT-side. We also discussed the relation of the PT3 vertices with the known topological vertices and showed that under suitable limits, we can obtain all of them. Moreover, we also discussed natural generalizations of DT, PT counting and propose new DT/PT correspondences.

In the second half of this paper, by following \cite{Kimura:2023bxy,Kimura:2024osv}, we introduced a new $qq$-character, which we call the Pandhariphande--Thomas (PT) $qq$-characters. The monomial terms of them are the PT configurations plus some non-diagonal terms coming from the existence of higher order poles. Since the information of the PT3 partition function is included in the coefficients of the vertex operators, we managed to establish the BPS/CFT correspondence even for the PT-side. We also studied the algebraic structures of the PT3 $qq$-characters.

Let us list some possible directions we hope to address in near future.

\paragraph{Wall-crossing and $qq$-characters}
The fact that changing the reference vector gives both the DT-side and the PT-side is a consequence of the wall-crossing phenomenon. Studying this phenomenon from the quantum mechanics view-point following \cite{Hori:2014tda} would be interesting. In section~\ref{sec:DTPTcorrespondence}, we gave a brief discussion of it but giving a full proof from this viewpoint might be interesting.

Understanding the wall-crossing and the DT/PT correspondence at the level of $q$-characters or $qq$-characters is also interesting. One of the motivation of this work was to understand such a phenomenon. Although we managed to construct both the DT and PT $qq$-characters independently, we do not know how to directly relate them to each other in the sense of DT/PT correspondence for the moment. One strategy to accomplish this is to study the DT/PT correspondence at the level of contour integrals. For example, we can perform a similar discussion of the wall-crossing in section~\ref{sec:DTPTcorrespondence} for the $qq$-characters. For the one-leg one box case, consider the vertex operator valued contour integral
\bea
\mathcal{G}\times \oint_{|z|=1} \frac{dz}{2\pi i z}\mathcal{Z}^{\D6_{\bar{4}}\tbar\D2\tbar\D0}_{
\DT\tbar\QA;\varnothing\varnothing\,\Bbox}(v,zv):\frac{\mathsf{W}_{\bar{4}}(v)}{\mathsf{S}_{3}(v)}\mathsf{A}^{-1}(zv):,
\eea
where the contour is taken to be the unit circle traversed counterclockwise. Since the vertex operator part is holomorphic, the poles only come from the coefficient part. Assuming $|q_{1,2,3}|<1$, we then can perform the residue inside the region $|z|<1$, which becomes
\bea
&\mathcal{G}\times \left(\underset{z=0}{\Res}z^{-1} \mathcal{Z}^{\D6_{\bar{4}}\tbar\D2\tbar\D0}_{
\DT\tbar\QA;\varnothing\varnothing\,\Bbox}(v,zv):\frac{\mathsf{W}_{\bar{4}}(v)}{\mathsf{S}_{3}(v)}\mathsf{A}^{-1}(zv): \right.\\
&\left.+ \underset{z=q_{1}}{\Res}z^{-1} \mathcal{Z}^{\D6_{\bar{4}}\tbar\D2\tbar\D0}_{
\DT\tbar\QA;\varnothing\varnothing\,\Bbox}(v,zv):\frac{\mathsf{W}_{\bar{4}}(v)}{\mathsf{S}_{3}(v)}\mathsf{A}^{-1}(zv):+\underset{z=q_{2}}{\Res}z^{-1} \mathcal{Z}^{\D6_{\bar{4}}\tbar\D2\tbar\D0}_{
\DT\tbar\QA;\varnothing\varnothing\,\Bbox}(v,zv):\frac{\mathsf{W}_{\bar{4}}(v)}{\mathsf{S}_{3}(v)}\mathsf{A}^{-1}(zv):\right)\\
=&\mathcal{G}q_{4}\lim_{z\rightarrow 0}:\frac{\mathsf{W}_{\bar{4}}(v)}{\mathsf{S}_{3}(v)}\mathsf{A}^{-1}(zv):+\mathcal{Z}^{\DT\tbar\QA}_{\bar{4};\varnothing\varnothing\,\Bbox}[(q_{1})]:\frac{\mathsf{W}_{\bar{4}}(v)}{\mathsf{S}_{3}(v)}\mathsf{A}^{-1}(q_{1}v):+\mathcal{Z}^{\DT\tbar\QA}_{\bar{4};\varnothing\varnothing\,\Bbox}[(q_{2})]:\frac{\mathsf{W}_{\bar{4}}(v)}{\mathsf{S}_{3}(v)}\mathsf{A}^{-1}(q_{1}v):.
\eea
The second and third term is the DT3 $qq$-character of level one.

Instead, we can also take the residue at the region $|z|>1$:
\bea
&-\mathcal{G}\times \left(\underset{z=\infty}{\Res}z^{-1} \mathcal{Z}^{\D6_{\bar{4}}\tbar\D2\tbar\D0}_{
\DT\tbar\QA;\varnothing\varnothing\,\Bbox}(v,zv):\frac{\mathsf{W}_{\bar{4}}(v)}{\mathsf{S}_{3}(v)}\mathsf{A}^{-1}(zv): + \underset{z=q_{3}^{-1}}{\Res}z^{-1} \mathcal{Z}^{\D6_{\bar{4}}\tbar\D2\tbar\D0}_{
\DT\tbar\QA;\varnothing\varnothing\,\Bbox}(v,zv):\frac{\mathsf{W}_{\bar{4}}(v)}{\mathsf{S}_{3}(v)}\mathsf{A}^{-1}(zv):\right)\\
=&\,\,\mathcal{G}\lim_{z\rightarrow \infty}:\frac{\mathsf{W}_{\bar{4}}(v)}{\mathsf{S}_{3}(v)}\mathsf{A}^{-1}(zv):+\mathcal{Z}^{\PT\tbar\QA}_{\bar{4};\varnothing\varnothing\,\Bbox}[(q_{3}^{-1})]:\frac{\mathsf{W}_{\bar{4}}(v)}{\mathsf{S}_{3}(v)}\mathsf{A}^{-1}(q_{3}^{-1}v):
\eea
where the second term is the one-instanton level of the PT3 $qq$-character.

Since the contour integral should be the same, we have
\bea
\mathsf{DT}_{\bar{4};\varnothing\varnothing\,\Bbox}(v)[1]-\mathsf{PT}_{\bar{4};\varnothing\varnothing\,\Bbox}(v)[1]=\mathcal{G}\lim_{z\rightarrow \infty}:\frac{\mathsf{W}_{\bar{4}}(v)}{\mathsf{S}_{3}(v)}\mathsf{A}^{-1}(zv):-\mathcal{G}q_{4}\lim_{z\rightarrow 0}:\frac{\mathsf{W}_{\bar{4}}(v)}{\mathsf{S}_{3}(v)}\mathsf{A}^{-1}(zv):
\eea
where we shortly denoted the one-instanton part of the DT, PT $qq$-characters as $\mathsf{DT}_{\bar{4};\varnothing\varnothing\,\Bbox}(v)[1],\mathsf{PT}_{\bar{4};\varnothing\varnothing\,\Bbox}(v)[1]$. How to relate the right hand side with the level one part of the D6 $qq$-character
\bea
(1-q_{4})\mathcal{G}:\mathsf{W}_{\bar{4}}(v)\mathsf{A}^{-1}(v):=-q_{4}\frac{(1-q_{12})(1-q_{23})(1-q_{31})}{(1-q_{1})(1-q_{2})(1-q_{3})}:\mathsf{W}_{\bar{4}}(v)\mathsf{A}^{-1}(v):
\eea
will be the nontrivial part.


\paragraph{PT3 counting and 3d gauge theories}
In a recent paper \cite{Crew:2020psc}, the one-leg PT3 vertex was identified with the vortex partition function of a 3d $\mathcal{N}=4$ ADHM theory with an appropriate vacuum at the infinity. The DT/PT correspondence implies a wall-crossing phenomenon in the vortex partition function. Understanding it from the 3d view point might lead to new dualities.

Since the 3d $\mathcal{N}=4$ ADHM theory appears from the world-volume theory of multiple D2-branes parallel to a D6-brane, it is natural to expect that the full equivariant PT3 vertex appears from intersecting D2-branes inside a D6-brane. More generally, we can include multiple D6-branes and consider a vortex origami system. The vortex partition functions and relations with the PT3 vertex will be discussed in a future work. The higher rank generalizations of the DT/PT correspondence implies a wall-crossing phenomenon of the vortex origami. Studying it is also interesting.

\paragraph{PT4 counting and JK-residue}
A straightforward generalization is a study of the relation between PT4 counting and JK-residue. For this case, solid partitions with nontrivial boundary conditions appear. In particular, we have leg boundary conditions and surface boundary conditions. The leg boundary condition is similar to the PT3 case, but the surface boundary condition gives new type of PT vertices \cite{Bae:2022pif,Bae:2024bpx}. Derivation of the PT4 vertices from the JK-residue method will be discussed in \cite{Kimura-Noshita-PT4}.


\acknowledgments
The authors thank Jiaqun Jiang, Satoshi Nawata, Jiahao Zheng for discussions on the JK-residue formalism and lending the authors a numerical program that helped the authors learn how to perform the JK-residue computation. We also thank Henry Liu for a wonderful \href{https://github.com/liu-henry-hl/boxcounting}{box counting} computer program, where some consistency check was done.
The work of TK was supported by CNRS through MITI interdisciplinary programs, EIPHI Graduate School (No.~ANR-17-EURE-0002) and Bourgogne-Franche-Comté region. GN is supported by JSPS KAKENHI Grant-in-Aid for JSPS fellows Grant No.~JP25KJ0124.

\appendix

\section{Refined and unrefined topological vertices}\label{app:top-vertex-symm-funct}
In this section, we review basic formulas for the refined and unrefined topological vertices. We use the Iqbal-Kozcaz-Vafa (IKV) \cite{Iqbal:2007ii} form, which use the Schur functions, instead of the Awata-Kanno form \cite{Awata:2005fa,Awata:2008ed} using Macdonald polynomials. We note that both forms are known to be essentially equivalent \cite{Awata:2008ed,Awata:2011ce}.

\subsection{Symmetric Functions}
The Macdonald symmetric functions are related as
\bea
Q_{\lambda}(x;q,t)=b_{\lambda}(q,t)P_{\lambda}(x;q,t),\quad  b_{\lambda}(q,t)=\langle P_{\lambda},P_{\lambda}\rangle^{-1},\quad b_{\lambda}(q,t)=\prod_{\Abox\in\lambda}\frac{1-q^{a_{\lambda}(\Abox)}t^{\ell_{\lambda}(\Abox)+1}}{1-q^{a_{\lambda}(\Abox)+1}t^{\ell_{\lambda}(\Abox)}}
\eea

For the unrefined limit $q=t$, we have
\bea
P_{\lambda}(x;t,t)=Q_{\lambda}(x;t,t)=s_{\lambda}(x).
\eea

For the skew Macdonald symmetric functions, they are defined as
\bea
Q_{\lambda/\mu}(x;q,t)&=\sum_{\nu}f^{\lambda}_{\mu\nu}(q,t)Q_{\nu}(x;q,t)\\
P_{\lambda/\mu}(x;q,t)&=\sum_{\nu}f^{\lambda^{\rmT}}_{\mu^{\rmT}\nu^{\rmT}}(q,t) P_{\nu}(x;q,t)
\eea
where we used
\bea
P_{\mu}(q,t)P_{\nu}(q,t)=\sum_{\lambda}f^{\lambda}_{\mu\nu}(q,t)P_{\lambda}(q,t),\quad f^{\lambda}_{\mu\nu}(q,t)=\langle Q_{\mu} ,P_{\mu}P_{\nu}\rangle\\
Q_{\mu}(q,t)Q_{\nu}(q,t)=\sum_{\lambda}f^{\lambda^{\rmT}}_{\mu^{\rmT}\nu^{\rmT}}(t,q)Q_{\lambda}(q,t),\quad f^{\lambda^{\rmT}}_{\mu^{\rmT}\nu^{\rmT}}(q,t)=\langle P_{\mu} ,Q_{\mu}Q_{\nu}\rangle
\eea
The relation between these skew Macdonald symmetric functions is
\bea
Q_{\lambda/\mu}=b_{\lambda}b_{\mu}^{-1}P_{\lambda/\mu}.
\eea
In the limit $q\rightarrow t$, the skew Macdonald functions become the skew Schur functions:
\bea
Q_{\lambda/\mu}(x;t,t)=P_{\lambda/\mu}(x;t,t)=s_{\lambda/\mu}(x).
\eea

\subsection{Formulas}
\paragraph{Young diagrams and formulas}
The Young diagram (partition) is a sequence of non-increasing non-negative integers 
\begin{align}
    \lambda=\{\lambda_{i}\in\mathbb{Z}_{\geq 0}\,|\,\lambda_{1}\geq \lambda_{2}\geq \cdots\}.
\end{align}
We denote the transpose of $\lambda$ as $\lambda^{\rmT}$. The size and norm are defined as 
\begin{align}
    |\lambda|=\sum_{i=1}^{\ell(\lambda)}\lambda_{i},\quad ||\lambda||^{2}=\sum_{i=1}^{\ell(\lambda)}\lambda_{i}^{2},
\end{align}
where $\ell(\lambda)$ is the length of the Young diagram. The arm length, leg length, arm co-length, leg co-length, and content ($c_{\lambda}(i,j)$)) for $(i,j)\in\lambda$ are defined as 
\begin{align}
    a_{\lambda}(i,j)=\lambda_{i}-j,&\quad \ell_{\lambda}(i,j)=\lambda_{j}^{\rmT}-i,\\
    a'_{\lambda}(i,j)=j-1,&\quad \ell'_{\lambda}(i,j)=i-1,\\
    c_{\lambda}(i,j)=j-i.&
\end{align}
The hook length is defined as 
\begin{align}
    h_{\lambda}(i,j)=\lambda_{i}-j+\lambda_{j}^{\rmT}-i+1=a_{\lambda}(i,j)+\ell_{\lambda}(i,j)+1.
\end{align}
We also have the following quantities
\begin{align}
    n(\lambda)=\sum_{i=1}^{\ell(\lambda)}(i-1)\lambda_{i},&\quad \kappa(\lambda)=2\sum_{x\in\lambda}c(x)=2\sum_{(i,j)\in\lambda}(j-i),\quad a'_{\lambda}(x)-\ell'_{\lambda}(x)=c_{\lambda}(x)
\end{align}
satisfying the following properties.
\begin{align}
    \kappa(\lambda)&=\sum_{i}\lambda_{i}(\lambda_{i}+1-2i)=2(n(\lambda^{\rmT})-n(\lambda))=||\lambda||^{2}-||\lambda^{\rmT}||^{2},\\
    n(\lambda)&=\frac{1}{2}\sum_{j=1}^{\lambda_{1}}\lambda_{j}^{\rmT}(\lambda_{j}^{\rmT}-1)=\sum_{x\in\lambda}\ell'_{\lambda}(x)=\sum_{x\in\lambda}\ell_{\lambda}(x),\\
    n(\lambda^{\rmT})&=\frac{1}{2}\sum_{i=1}^{\ell(\lambda)}\lambda_{i}(\lambda_{i}-1)=\sum_{x\in\lambda}a'_{\lambda}(x)=\sum_{x\in\lambda}a_{\lambda}(x),\\
    \sum_{x\in\lambda}h_{\lambda}(x)&=n(\lambda)+n(\lambda^{\rmT})+|\lambda|=\frac{1}{2}(||\lambda||^{2}+||\lambda^{\rmT}||^{2}).
\end{align}

\paragraph{Generating function of plane partitions}
The generating function for the number of generalized skew plane partitions of shape $(\lambda,\mu,\nu)$ is given by
\bea
Z_{\lambda\mu\nu}(q)=\sum_{\pi\in \mathcal{PP}_{\lambda\mu\nu}}q^{|\pi|}.
\eea
The cyclic symmetry is understood as
\bea
Z_{\lambda\mu\nu}(q)=Z_{\mu\nu\lambda}(q)=Z_{\nu\lambda\mu}(q).
\eea
The normalized unrefined topological vertex also obeys this symmetry.

\paragraph{Topological vertex}
The topological vertex is defined as
\bea
C_{\lambda\mu\nu}(q)=q^{\frac{\kappa(\mu)}{2}}\sum_{\eta}s_{\lambda^{\rmT}/\eta}(q^{-\nu-\rho})s_{\mu/\eta}(q^{-\nu^{\rmT}-\rho})
\eea

\paragraph{Refined topological vertex}
The refined topological vertex is defined as
\bea
    C_{\lambda\mu\nu}(t,q)&=\left(\frac{q}{t}\right)^{\frac{||\mu||^{2}+||\nu||^{2}}{2}}t^{\frac{\kappa(\mu)}{2}}t^{\frac{||\nu||^{2}}{2}}\widetilde{Z}_{\nu}(t,q)\sum_{\eta}\left(\frac{q}{t}\right)^{\frac{|\eta|+|\lambda|-|\mu|}{2}}s_{\lambda^{\rmT}/\eta}(t^{-\rho}q^{-\nu})s_{\mu/\eta}(t^{-\nu^{\rmT}}q^{-\rho})\\
    &=q^{\frac{||\mu||^{2}+||\nu||^{2}}{2}}t^{-\frac{||\mu^{\rmT}||^{2}}{2}}\widetilde{Z}_{\nu}(t,q)\sum_{\eta}\left(\frac{q}{t}\right)^{\frac{|\eta|+|\lambda|-|\mu|}{2}}s_{\lambda^{\rmT}/\eta}(t^{-\rho}q^{-\nu})s_{\mu/\eta}(t^{-\nu^{\rmT}}q^{-\rho})
\eea
where 
\bea
     \rho=(-\frac{1}{2},-\frac{3}{2},-\frac{5}{2},\cdots),\quad
    \widetilde{Z}_{\nu}(t,q)&=\prod_{(i,j)\in\nu}(1-t^{\ell_{\nu}(i,j)+1}q^{a_{\nu}(i,j)})^{-1},
\eea
and $s_{\lambda/\eta}(x),\,x=(x_{1},x_{2},\ldots)$ is the skew Schur function.

The unrefined topological vertex is obtained by specializing the parameters $C_{\lambda\mu\nu}(q)=C_{\lambda\mu\nu}(q,q).$

For later use, we also introduce the normalized refined topological vertex, which we denote as $\widetilde{C}_{\lambda\mu\nu}(t,q)$ which after series expansion in $q,t$, the first term will be always 1:
\bea
\widetilde{C}_{\lambda\mu\nu}(t,q)=1+\cdots.
\eea

\paragraph{Macdonald refined topological vertex}
The Macdonald vertex is defined as
\bea
\mathcal{M}_{\lambda\mu\nu}(x,y;q,t)=\prod_{n=0}^{\infty}\left(\prod_{\Abox\in\nu}\frac{1-tq^{n}x^{\ell_{\nu}(\Abox)+1}y^{a_{\nu}(\Abox)}}{1-q^{n}x^{\ell_{\nu}(\Abox)+1}y^{a_{\nu}(\Abox)}}\right)\sum_{\eta}P_{\mu/\eta}(y^{-\rho-1/2}x^{-\nu^{\rmT}};q,t)Q_{\lambda^{\rmT}/\eta}(x^{-\rho+1/2}y^{-\nu};q,t)
\eea
where $P_{\lambda}(x;q,t),Q_{\mu}(x;q,t)$ are skew Macdonald symmetric functions. The normalized Macdonald vertex is denoted as
\bea
\widetilde{\mathcal{M}}_{\lambda\mu\nu}(x,y;q,t)=1+\cdots
\eea
after expanding in series of $x,y$. Note that the $x,y$ are the $t,q$ in the refined topological vertex and $q,t$ are extra parameters.

In the limit $t\rightarrow q$, the Macdonald refined vertex becomes
\bea
\mathcal{M}_{\lambda\mu\nu}(x,y;t,t)&=\prod_{\Abox\in\nu}\frac{1}{(1-x^{\ell_{\nu}(\Abox)+1)}y^{a_{\nu}(\Abox)})}\sum_{\eta}s_{\mu/\eta}(y^{-\rho-1/2}x^{-\nu^{\rmT}})s_{\lambda^{\rmT}/\eta}(x^{-\rho+1/2}y^{-\nu})\\
&=\left(y^{-\frac{|\lambda|}{2}}x^{\frac{|\mu|}{2}}\right)\prod_{\Abox\in\nu}\frac{1}{(1-x^{\ell_{\nu}(\Abox)+1)}y^{a_{\nu}(\Abox)})}\sum_{\eta}\left(\frac{y}{x}\right)^{\frac{|\eta|}{2}}s_{\mu/\eta}(y^{-\rho}x^{-\nu^{\rmT}})s_{\lambda^{\rmT}/\eta}(x^{-\rho}y^{-\nu})\\
&\propto C_{\lambda\mu\nu}(x,y)
\eea

\section{Examples: PT3 partition functions}\label{app:sec-PT3vertex-examples}
In this section, we summarize the results for PT3 counting for the one-leg, two-legs, three-legs cases respectively. For the one-leg and two-legs case, the PT counting rules, GR counting rules, and the JK-residue all coincide so we only list down the results of the JK-residue formalism. 

Additionally, the unrefined topological vertex, refined topological vertex, and the Macdonald refined topological vertex for each configuration are listed. One can also confirm that the PT3 vertex matches with these topological vertices after taking the proper limits respectively (see Thm.~\ref{thm:PT3vertex-unref-corresp}, \ref{thm:PT3vertex-ref-corresp}, \ref{thm:PT3vertex-Macdref-corresp}).
\Yboxdim{4pt}

\subsection{One-leg PT3 counting}
The results for the one-leg PT3 counting for the boundary condition $(\varnothing,\varnothing,\nu)$ with $|\nu|\leq 4$ are summarized.

\subsubsection{$\nu=\{2\}$ (Fig.~\ref{fig:PT3-1leg-ex1})}
\begin{figure}[ht]
    \centering
    \includegraphics[width=1.2cm]{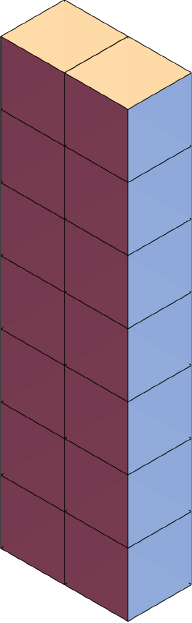}
    \caption{The PT box counting setup for $(\lambda,\mu,\nu)=(\varnothing,\varnothing,\{2\})$}
    \label{fig:PT3-1leg-ex1}
\end{figure}

\paragraph{PT configurations}
\bea
\begin{tabular}{|c|c|}\hline
   \text{instanton number $k$}  &  \text{poles} \\\hline\hline
    1 & $(-\epsilon_{3}+\epsilon_{2})$   \\\hline
    2 &  $(-\epsilon_{3}+\epsilon_{2},-\epsilon_{3}),(-\epsilon_{3}+\epsilon_{2},-2\epsilon_{3}+\epsilon_{2}) $ \\\hline
    3 & $\begin{array}{c}(-\epsilon_{3}+\epsilon_{2},-\epsilon_{3},\epsilon_{2}-2\epsilon_{3})\\
     (\epsilon_{2}-\epsilon_{3},\epsilon_{2}-2\epsilon_{3},\epsilon_{2}-3\epsilon_{3})\end{array}$\\\hline
\end{tabular}
\eea

\paragraph{JK residues}
The framing node contribution is
\bea
\mathcal{Z}^{\D6_{\bar{4}}\tbar\D2\tbar\D0}_{\DT;\,\varnothing\,\varnothing\,\yng(2)}(\fra,\phi_{I})&=\frac{\sh(\fra-\phi_I+\eps_1+2\eps_2)\sh(\phi_I-\fra+\eps_1-\eps_2+\eps_3)\sh(\phi_I-\fra+\eps_{23})}{\sh(\fra-\phi_I+\eps_2-\eps_3)\sh(\phi_I-\fra-\eps_1)\sh(\phi_I-\fra-2\eps_{2})}.
\eea

We list the explicit residues including the Weyl group factor only up to level two.
For level one, we have
\bea
(\epsilon_{2}-\epsilon_{3}),\quad -\frac{\sh\left(2 \epsilon _2\right) \sh\left(\epsilon _1+\epsilon _2\right)
   \sh\left(\epsilon _1+\epsilon _3\right)}{\sh\left(\epsilon _2\right)
   \sh\left(\epsilon _3\right) \sh\left(\epsilon _1-\epsilon _2+\epsilon _3\right)}
\eea
For level two, we have
\bea
(\epsilon_{2}-\epsilon_{3},-\epsilon_{3}),&\qquad \frac{\sh\left(\epsilon _1+\epsilon _2\right) \sh\left(\epsilon _1+2 \epsilon _2\right)}{2
   \sh\left(\epsilon _3\right) \sh\left(\epsilon _3-\epsilon _2\right)}\\
   \left(\epsilon _2-\epsilon _3,\epsilon _2-2 \epsilon _3\right),&\quad-\frac{\sh\left(2
   \epsilon _2\right) \sh\left(\epsilon _1+\epsilon _2\right) \sh\left(\epsilon _1+\epsilon
   _3\right) \sh\left(\epsilon _3-2 \epsilon _2\right) \sh\left(-\epsilon _1-\epsilon
   _2+\epsilon _3\right) \sh\left(\epsilon _1+2 \epsilon _3\right)}{2 \sh\left(\epsilon
   _2\right) \sh\left(\epsilon _3\right) \sh\left(2 \epsilon _3\right)
   \sh\left(\epsilon _3-\epsilon _2\right) \sh\left(\epsilon _1-\epsilon _2+\epsilon
   _3\right) \sh\left(\epsilon _1-\epsilon _2+2 \epsilon _3\right)}
\eea

\paragraph{Unrefined vertex}
\bea\Yboxdim{4pt}
\wtC_{\varnothing\varnothing\,{\yng(1,1)}}(q)=\frac{1}{(1-q)(1-q^{2})}
\eea

\paragraph{Refined vertex}
\bea\Yboxdim{4pt}
\wtC_{\varnothing\varnothing\,{\yng(1,1)}}(t,q)=\frac{1}{(1-t)(1-t^{2})},\quad\wtC_{\varnothing\,{\yng(1,1)}\,\varnothing}(t,q)=\frac{1}{(1-q)(1-q^{2})},\quad \wtC_{{\yng(1,1)}\varnothing\varnothing\,}(t,q)=\frac{1}{(1-t)(1-t^{2})} 
\eea

\paragraph{Macdonald vertex}
\bea\Yboxdim{4pt}
&\wtM_{\varnothing\varnothing\,\yng(1,1)}(x,y;q,t)=\frac{(tx;q)_{\infty}(tx^{2};q)_{\infty}}{(x;q)_{\infty}(x^{2};q)_{\infty}},\quad\wtM_{\varnothing\,\yng(1,1)\,\varnothing}(x,y;q,t)=\frac{1}{(1-y)(1-y^{2})}\\
&\wtM_{{\yng(1,1)}\varnothing\varnothing\,}(x,y;q,t)=\frac{1-tx+qx-qt}{(1-x)(1-x^{2})(1-qt)} 
\eea

\subsubsection{$\nu=\{1,1\}$ (Fig.~\ref{fig:PT3-1leg-ex2})}
\begin{figure}[ht]
    \centering
    \includegraphics[width=1.2cm]{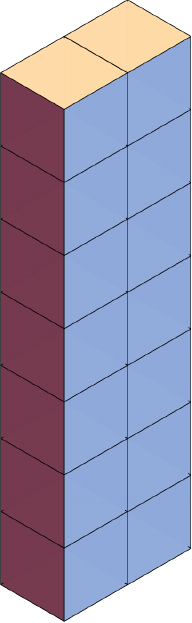}
    \caption{The PT box counting setup for $(\lambda,\mu,\nu)=(\varnothing,\varnothing,\{1,1\})$}
    \label{fig:PT3-1leg-ex2}
\end{figure}
\paragraph{PT configurations}
\bea
\begin{tabular}{|c|c|}\hline
   \text{instanton number $k$}  &  \text{poles} \\\hline\hline
    1 & $(\epsilon_{3}-\epsilon_{1})$   \\\hline
    2 &  $(\epsilon_{3}-\epsilon_{1},\epsilon_{3}),(\epsilon_{3}-\epsilon_{1},2\epsilon_{3}-\epsilon_{1}) $ \\\hline
    3 & $(\epsilon_{3}-\epsilon_{1},\epsilon_{3},-\epsilon_{1}+2\epsilon_{3})$\\
    & $(-\epsilon_{1}+\epsilon_{3},-\epsilon_{1}+2\epsilon_{3},-\epsilon_{1}+3\epsilon_{3})$\\\hline
\end{tabular}
\eea

\paragraph{JK residues}The framing node contribution is
\bea
\mathcal{Z}^{\D6_{\bar{4}}\tbar\D2\tbar\D0}_{\DT;\,\varnothing\,\varnothing\,\yng(1,1)}(\fra,\phi_{I})&=\frac{\sh(\fra-\phi_I+\eps_2+2\eps_1)\sh(\phi_I-\fra+\eps_2-\eps_1+\eps_3)\sh(\phi_I-\fra+\eps_{13})}{\sh(\fra-\phi_I+\eps_1-\eps_3)\sh(\phi_I-\fra-\eps_2)\sh(\phi_I-\fra-2\eps_{1})}.
\eea

The JK-residues are given as follows. For level one,
\bea
(\eps_3-\eps_1),\quad -\frac{\sh\left(2 \epsilon _1\right) \sh\left(\epsilon _1+\epsilon _2\right)
   \sh\left(\epsilon _2+\epsilon _3\right)}{\sh\left(\epsilon _1\right)
   \sh\left(\epsilon _3\right) \sh\left(-\epsilon _1+\epsilon _2+\epsilon _3\right)}.
\eea
For level two, we have
\bea
(-\eps_3,-\eps_3+\eps_1),&\quad \frac{\sh\left(\epsilon _1+\epsilon _2\right) \sh\left(2 \epsilon _1+\epsilon
   _2\right)}{2\sh\left(\epsilon _3\right) \sh\left(\epsilon _3-\epsilon
   _1\right)},\\
 (\eps_1-\eps_3,\eps_1-2\eps_3),  &\quad -\frac{\sh\left(2 \epsilon _1\right) \sh\left(\epsilon _1+\epsilon _2\right)
   \sh\left(\epsilon _3-2 \epsilon _1\right) \sh\left(-\epsilon _1-\epsilon _2+\epsilon
   _3\right) \sh\left(\epsilon _2+\epsilon _3\right) \sh\left(\epsilon _2+2 \epsilon
   _3\right)}{2\sh\left(\epsilon _1\right) \sh\left(\epsilon _3\right) \sh\left(2
   \epsilon _3\right) \sh\left(\epsilon _3-\epsilon _1\right) \sh\left(-\epsilon
   _1+\epsilon _2+\epsilon _3\right) \sh\left(-\epsilon _1+\epsilon _2+2 \epsilon _3\right)}.
\eea

\paragraph{Unrefined vertex}
\bea\Yboxdim{4pt}
\wtC_{\varnothing\varnothing\,{\yng(2)}}(q)=\frac{1}{(1-q)(1-q^{2})}
\eea

\paragraph{Refined vertex}
\bea\Yboxdim{4pt}
\wtC_{\varnothing\varnothing\,{\yng(2)}}(t,q)=\frac{1}{(1-t)(1-qt)},\quad\wtC_{\varnothing\,{\yng(2)}\,\varnothing}(t,q)=\frac{1}{(1-q)(1-q^{2})},\quad \wtC_{{\yng(2)}\varnothing\varnothing\,}(t,q)=\frac{1}{(1-t)(1-t^{2})} 
\eea

\paragraph{Macdonald vertex}
\bea\Yboxdim{4pt}
&\wtM_{\varnothing\varnothing\,\yng(2)}(x,y;q,t)=\frac{(tx;q)_{\infty}(txy;q)_{\infty}}{(x;q)_{\infty}(xy;q)_{\infty}},\quad\wtM_{\varnothing\,\yng(2)\,\varnothing}(x,y;q,t)=\frac{1-ty+qy-qt}{(1-y)(1-y^{2})(1-qt)} \\
&\wtM_{{\yng(2)}\varnothing\varnothing\,}(x,y;q,t)=\frac{1}{(1-x)(1-x^2)}
\eea

\subsubsection{$\nu=\{3\}$ (Fig.~\ref{fig:PT3-1leg-ex3})}
\begin{figure}[ht]
    \centering
    \includegraphics[width=1.2cm]{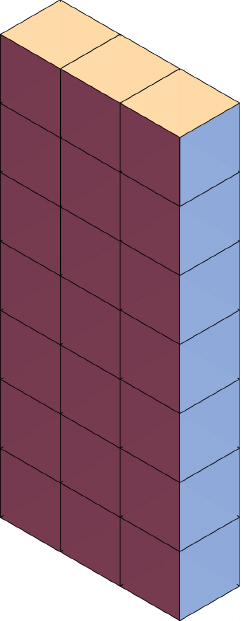}
    \caption{The PT box counting setup for $(\lambda,\mu,\nu)=(\varnothing,\varnothing,\{3\})$}
    \label{fig:PT3-1leg-ex3}
\end{figure}
\paragraph{PT configurations}
\bea
\begin{tabular}{|c|c|}\hline
   \text{instanton number $k$}  &  \text{poles} \\\hline\hline
    1 & $(-\epsilon_{3}+2\epsilon_{2})$   \\\hline
    2 &  $
\begin{array}{c}
 (-\epsilon _3+2\epsilon _2 , -\epsilon _3+\epsilon _2) \\
  (-\epsilon _3+2\epsilon _2 , -2 \epsilon _3+2\epsilon _2) \\
\end{array}$ \\\hline
    3 & $\begin{array}{c}
 (-\epsilon _3 , -\epsilon -3+2\epsilon _2 , -\epsilon _3+\epsilon _2 )\\
 (-\epsilon _3+2\epsilon _2 , -\epsilon -3+\epsilon _2 , -2 \epsilon _3+2\epsilon _2 )\\
 (-\epsilon _3+2\epsilon _2 , -2\epsilon _3+2\epsilon _2 , -3 \epsilon _3+2\epsilon _2 )\\
\end{array}$\\\hline
\end{tabular}
\eea

\paragraph{JK residues}
The framing node contribution is
\bea
\mathcal{Z}^{\D6_{\bar{4}}\tbar\D2\tbar\D0}_{\DT;\,\varnothing\,\varnothing\,\yng(3)}(\fra,\phi_{I})&=\frac{\sh(\fra-\phi_I+\eps_1+3\eps_2)\sh(\phi_I-\fra+\eps_1-2\eps_2+\eps_3)\sh(\phi_I-\fra+\eps_{23})}{\sh(\fra-\phi_I+2\eps_2-\eps_3)\sh(\phi_I-\fra-\eps_1)\sh(\phi_I-\fra-3\eps_2)}
\eea
For level one, we have
\bea
(-\epsilon_{3}+2\epsilon_{2}),\quad -\frac{\sh\left(3 \epsilon _2\right) \sh\left(\epsilon _1+\epsilon _2\right)
   \sh\left(\epsilon _1+\epsilon _3\right)}{\sh\left(\epsilon _2\right)
   \sh\left(\epsilon _3\right) \sh\left(\epsilon _1-2 \epsilon _2+\epsilon _3\right)}
\eea
For level two, we have
\bea
(2\eps_2-\eps_3,\eps_2-\eps_3),&\quad \frac{\sh\left(3 \epsilon _2\right) \sh\left(\epsilon _1+\epsilon _2\right)
   \sh\left(\epsilon _1+2 \epsilon _2\right) \sh\left(\epsilon _1+\epsilon
   _3\right)}{2\sh\left(\epsilon _2\right) \sh\left(\epsilon _3\right)
   \sh\left(\epsilon _1-2 \epsilon _2+\epsilon _3\right) \sh\left(\epsilon _3-\epsilon
   _2\right)}\\
  (2\eps_2-\eps_3,2\eps_2-2\eps_3), &\quad -\frac{\sh\left(3 \epsilon _2\right) \sh\left(\epsilon _1+\epsilon _2\right)
   \sh\left(\epsilon _1+\epsilon _3\right) \sh\left(\epsilon _3-3 \epsilon _2\right)
   \sh\left(-\epsilon _1-\epsilon _2+\epsilon _3\right) \sh\left(\epsilon _1+2 \epsilon
   _3\right)}{2\sh\left(\epsilon _2\right) \sh\left(\epsilon _3\right) \sh\left(2
   \epsilon _3\right) \sh\left(\epsilon _1-2 \epsilon _2+\epsilon _3\right)
   \sh\left(\epsilon _3-\epsilon _2\right) \sh\left(\epsilon _1-2 \epsilon _2+2 \epsilon
   _3\right)}.
\eea

\paragraph{Unrefined vertex}
\bea\Yboxdim{4pt}
\wtC_{\varnothing\varnothing\,{\yng(1,1,1)}}(q)=\frac{1}{(1-q)(1-q^{2})(1-q^{3})}
\eea

\paragraph{Refined vertex}
\bea\Yboxdim{4pt}
&\wtC_{\varnothing\varnothing\,{\yng(1,1,1)}}(t,q)=\frac{1}{(1-t)(1-t^{2})(1-t^{3})},\quad\wtC_{\varnothing\,{\yng(1,1,1)}\,\varnothing}(t,q)=\frac{1}{(1-q)(1-q^{2})(1-q^{3})},\\
&\wtC_{{\yng(1,1,1)}\varnothing\varnothing\,}(t,q)=\frac{1}{(1-t)(1-t^{2})(1-t^{3})} 
\eea

\paragraph{Macdonald refined vertex}
\bea\Yboxdim{4pt}
&\wtM_{\varnothing\varnothing\,\yng(1,1,1)}(x,y;q,t)=\frac{(tx;q)_{\infty}(tx^{2};q)_{\infty}(tx^{3};q)_{\infty}}{(x;q)_{\infty}(x^{2};q)_{\infty}(x^{3};q)_{\infty}},\quad\wtM_{\varnothing\,\yng(1,1,1)\,\varnothing}(x,y;q,t)=\frac{1}{(1-y)(1-y^{2})(1-y^{3})}\\
&\wtM_{{\yng(1,1,1)}\varnothing\varnothing\,}(x,y;q,t)=\frac{\#}{(1-x)(1-x^{2})(1-x^{3})(1-qt)(1-qt^{2})} 
\eea
where the numerator is 
\bea
\#&=1 - q t - q^2 t + q^2 x + q x - q^3 t x - q^2 t x - q t x - t x + q^2 t^2 x + q t^2 x \\
&+ q^2 x^2 + q x^2 - 2 q t x^2 - q^2 t x^2 + q t^2 x^2 + q^2 t^2 x^2 - q^3 t x^2 + q^3 x^3 \\
&- q t x^3 - q^2 t x^3 + t^2 x^3
\eea

\subsubsection{$\nu=\{2,1\}$ (Fig.~\ref{fig:PT3-1leg-ex4})}
\begin{figure}[ht]
    \centering
    \includegraphics[width=1.2cm]{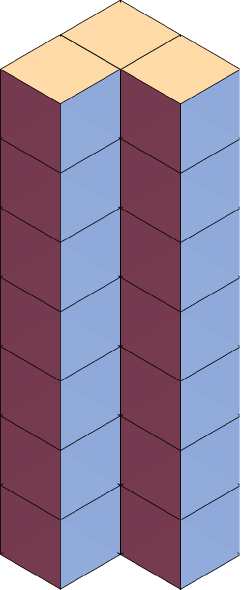}
    \caption{The PT box counting setup for $(\lambda,\mu,\nu)=(\varnothing,\varnothing,\{2,1\})$}
    \label{fig:PT3-1leg-ex4}
\end{figure}
\paragraph{Pole structure}
\bea
\begin{tabular}{|c|c|}\hline
   \text{instanton number $k$}  &  \text{poles} \\\hline\hline
    1 & $\left(
 -\epsilon _3+\epsilon _1),(
 -\epsilon _3+\epsilon _2 
\right)$   \\\hline
    2 &  $
\begin{array}{c}
 (-\epsilon _3+\epsilon _1 , -\epsilon _3+\epsilon _2) \\
 (-\epsilon _3+\epsilon _1 , -2 \epsilon _3+\epsilon _1) \\
 (-\epsilon _3+\epsilon _2 ,- 2 \epsilon _3+\epsilon _2) \\
\end{array}$ \\\hline
    3 & $\begin{array}{c}
 (-\epsilon _3 , -\epsilon _3+\epsilon _1 , -\epsilon _3+\epsilon _2 )\\
 (-\epsilon _3+\epsilon _1 , -\epsilon _3+\epsilon _2 , -2 \epsilon _3+\epsilon _1 )\\
 (-\epsilon _3+\epsilon _1 , -\epsilon _3+\epsilon _2 , -2 \epsilon _3+\epsilon _2 )\\
 (-\epsilon _3+\epsilon _1 , -2 \epsilon _3+\epsilon _1 , -3 \epsilon _3+\epsilon _1 )\\
 (-\epsilon _3+\epsilon _2 , -2 \epsilon _3+\epsilon _2 ,-3 \epsilon _3+\epsilon _2) \\
\end{array}$\\\hline
\end{tabular}
\eea

\paragraph{JK residues}The framing node contribution is 
\bea
\mathcal{Z}^{\D6_{\bar{4}}\tbar\D2\tbar\D0}_{\DT;\,\varnothing\,\varnothing\,\yng(2,1)}(\fra,\phi_{I})&=\frac{\sh(\fra-\phi_I+2\eps_1+\eps_2)\sh(\fra-\phi_I+\eps_1+2\eps_2)\sh(\phi_I-\fra+\eps_3)}{\sh(\fra+\eps_1-\eps_3-\phi_I)\sh(\fra-\phi_I+\eps_2-\eps_3)}\\
&\times \frac{\sh(\phi_I-\fra+\eps_1-\eps_2+\eps_3)\sh(\phi_I-\fra-\eps_1+\eps_{2}+\eps_3)}{\sh(\phi_I-\fra-2\eps_1)\sh(\phi_I-\fra-2\eps_2)\sh(\phi_I-\fra-\eps_{12})}.
\eea

For level one, we have
\bea
(\eps_2-\eps_3),&\quad -\frac{\sh\left(\epsilon _1+\epsilon _2\right) \sh\left(2 \epsilon _2-\epsilon _1\right)
   \sh\left(2 \epsilon _1+\epsilon _3\right)}{\sh\left(\epsilon _2-\epsilon _1\right)
   \sh\left(\epsilon _3\right) \sh\left(2 \epsilon _1-\epsilon _2+\epsilon
   _3\right)}\\
  (\eps_1-\eps_3),&\quad -\frac{\sh\left(\epsilon _2-2 \epsilon _1\right) \sh\left(\epsilon
   _1+\epsilon _2\right) \sh\left(2 \epsilon _2+\epsilon _3\right)}{\sh\left(\epsilon
   _2-\epsilon _1\right) \sh\left(\epsilon _3\right) \sh\left(-\epsilon _1+2 \epsilon
   _2+\epsilon _3\right)}
\eea
For level two, we have
\bea
(\eps_1-\eps_3,\eps_2-\eps_3),&\quad \frac{\sh\left(2 \epsilon _1\right) \sh\left(2 \epsilon _2\right) \sh\left(\epsilon
   _1+\epsilon _3\right) \sh\left(\epsilon _2+\epsilon _3\right) \sh\left(\epsilon
   _1+\epsilon _2\right)^2}{2\sh\left(\epsilon _1\right) \sh\left(\epsilon _2\right)
   \sh\left(\epsilon _3\right)^2 \sh\left(\epsilon _1-\epsilon _2+\epsilon _3\right)
   \sh\left(-\epsilon _1+\epsilon _2+\epsilon _3\right)}\\
  (\eps_2-\eps_3,\eps_2-2\eps_3) &\quad -\frac{\sh\left(2 \epsilon
   _2-\epsilon _1\right) \sh\left(2 \epsilon _1+\epsilon _3\right) \sh\left(\epsilon _1-2
   \epsilon _2+\epsilon _3\right) \sh\left(-\epsilon _1-\epsilon _2+\epsilon _3\right)
   \sh\left(2 \epsilon _1+2 \epsilon _3\right) \sh\left(\epsilon _1+\epsilon
   _2\right)}{2\sh\left(\epsilon _2-\epsilon _1\right) \sh\left(\epsilon _3\right)
   \sh\left(2 \epsilon _3\right) \sh\left(\epsilon _1-\epsilon _2+\epsilon _3\right)
   \sh\left(2 \epsilon _1-\epsilon _2+\epsilon _3\right) \sh\left(2 \epsilon _1-\epsilon
   _2+2 \epsilon _3\right)}\\
  (\eps_1-\eps_3,\eps_1-2\eps_3), &\quad -\frac{\sh\left(\epsilon _2-2 \epsilon _1\right)
   \sh\left(-\epsilon _1-\epsilon _2+\epsilon _3\right) \sh\left(-2 \epsilon _1+\epsilon
   _2+\epsilon _3\right) \sh\left(2 \epsilon _2+\epsilon _3\right) \sh\left(2 \epsilon
   _2+2 \epsilon _3\right) \sh\left(\epsilon _1+\epsilon _2\right)}{2\sh\left(\epsilon
   _2-\epsilon _1\right) \sh\left(\epsilon _3\right) \sh\left(2 \epsilon _3\right)
   \sh\left(-\epsilon _1+\epsilon _2+\epsilon _3\right) \sh\left(-\epsilon _1+2 \epsilon
   _2+\epsilon _3\right) \sh\left(-\epsilon _1+2 \epsilon _2+2 \epsilon _3\right)}
\eea

\paragraph{Unrefined vertex}
\bea\Yboxdim{4pt}
\wtC_{\varnothing\varnothing\,{\yng(2,1)}}(q)=\frac{1}{(1-q)(1-q^{2})(1-q^{3})}
\eea

\paragraph{Refined vertex}
\bea\Yboxdim{4pt}
&\wtC_{\varnothing\varnothing\,{\yng(2,1)}}(t,q)=\frac{1}{(1-t)^{2}(1-qt^{2})},\quad\wtC_{\varnothing\,{\yng(2,1)}\,\varnothing}(t,q)=\frac{1}{(1-q)^{2}(1-q^{3})},\\
&\wtC_{{\yng(2,1)}\varnothing\varnothing\,}(t,q)=\frac{1}{(1-t)^{2}(1-t^{3})} 
\eea

\paragraph{Macdonald refined vertex}
\bea\Yboxdim{4pt}
\wtM_{\varnothing\varnothing\,\yng(2,1)}(x,y;q,t)&=\frac{(tx;q)^{2}_{\infty}(tx^{2}y;q)_{\infty}}{(x;q)^{2}_{\infty}(x^{2}y;q)_{\infty}},\\
\wtM_{{\yng(2,1)}\,\varnothing\varnothing\,}(x,y;q,t)&=\frac{1+x-tx^{2}+qx^{2}-qt^{2}-t^{2}x^{2}+qtx^{2}-qt^{2}x}{(1-x)(1-x^{2})(1-x^{3})(1-qt^{2})} \\
\wtM_{\varnothing\,\yng(2,1)\,\varnothing}(x,y;q,t)&=\frac{1+y-ty^{2}+qy^{2}-qt^{2}-t^{2}y^{2}+qty^{2}-qt^{2}y}{(1-y)(1-y^{2})(1-y^{3})(1-qt^{2})} .
\eea

\subsubsection{$\nu=\{1,1,1\}$ (Fig.~\ref{fig:PT3-1leg-ex5})}
\begin{figure}[ht]
    \centering
    \includegraphics[width=1.2cm]{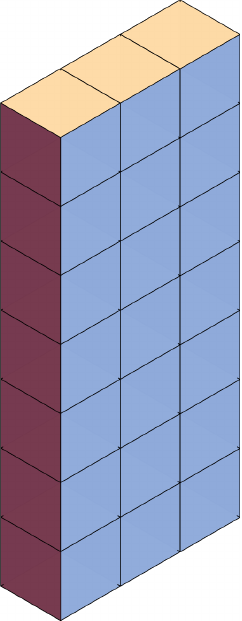}
    \caption{The PT box counting setup for $(\lambda,\mu,\nu)=(\varnothing,\varnothing,\{1,1,1\})$}
    \label{fig:PT3-1leg-ex5}
\end{figure}
\paragraph{Pole structure}
\bea
\begin{tabular}{|c|c|}\hline
   \text{instanton number $k$}  &  \text{poles} \\\hline\hline
    1 & $(-\epsilon_{3}+2\epsilon_{1})$   \\\hline
    2 &  $
\begin{array}{c}
 (-\epsilon _3+2\epsilon _1 , -\epsilon _3+\epsilon _1) \\
  (-\epsilon _3+2\epsilon _1 ,-2 \epsilon _3+2\epsilon _1) \\
\end{array}$ \\\hline
    3 & $\begin{array}{c}
 (-\epsilon _3 , -\epsilon_3+2\epsilon _1 , -\epsilon _3+\epsilon _1 )\\
 (-\epsilon _3+2\epsilon _1 , -\epsilon _3+\epsilon _1 , -2 \epsilon _3+2\epsilon _1 )\\
 (-\epsilon _3+2\epsilon _1 , -2\epsilon _3+2\epsilon _1 , -3 \epsilon _3+2\epsilon _1)\\
\end{array}$\\\hline
\end{tabular}
\eea

\paragraph{JK residues}
The framing node contribution is
\bea
\mathcal{Z}^{\D6_{\bar{4}}\tbar\D2\tbar\D0}_{\DT;\,\varnothing\,\varnothing\,\yng(1,1,1)}(\fra,\phi_{I})&=\frac{\sh(\fra-\phi_I+\eps_2+3\eps_1)\sh(\phi_I-\fra+\eps_2-2\eps_1+\eps_3)\sh(\phi_I-\fra+\eps_{13})}{\sh(\fra-\phi_I+2\eps_1-\eps_3)\sh(\phi_I-\fra-\eps_2)\sh(\phi_I-\fra-3\eps_1)}.
\eea
For level one, we have
\bea
(-\eps_3+2\eps_1)&,\quad -\frac{\sh\left(3 \epsilon _1\right) \sh\left(\epsilon _1+\epsilon _2\right)
   \sh\left(\epsilon _2+\epsilon _3\right)}{\sh\left(\epsilon _1\right)
   \sh\left(\epsilon _3\right) \sh\left(-2 \epsilon _1+\epsilon _2+\epsilon _3\right)}
\eea
For level two, we have
\bea
(2\eps_1-\eps_3,\eps_1-\eps_3),&\quad \frac{\sh\left(3 \epsilon _1\right) \sh\left(\epsilon _1+\epsilon _2\right)
   \sh\left(2 \epsilon _1+\epsilon _2\right) \sh\left(\epsilon _2+\epsilon
   _3\right)}{2\sh\left(\epsilon _1\right) \sh\left(\epsilon _3\right)
   \sh\left(\epsilon _3-\epsilon _1\right) \sh\left(-2 \epsilon _1+\epsilon _2+\epsilon
   _3\right)}\\
 (\eps_1-\eps_3,2\eps_1-\eps_3),  &\quad -\frac{\sh\left(3 \epsilon _1\right) \sh\left(\epsilon _1+\epsilon _2\right)
   \sh\left(\epsilon _3-3 \epsilon _1\right) \sh\left(-\epsilon _1-\epsilon _2+\epsilon
   _3\right) \sh\left(\epsilon _2+\epsilon _3\right) \sh\left(\epsilon _2+2 \epsilon
   _3\right)}{2\sh\left(\epsilon _1\right) \sh\left(\epsilon _3\right) \sh\left(2
   \epsilon _3\right) \sh\left(\epsilon _3-\epsilon _1\right) \sh\left(-2 \epsilon
   _1+\epsilon _2+\epsilon _3\right) \sh\left(-2 \epsilon _1+\epsilon _2+2 \epsilon _3\right)}
\eea

\paragraph{Unrefined vertex}
\bea\Yboxdim{4pt}
\wtC_{\varnothing\varnothing\,{\yng(3)}}(q)=\frac{1}{(1-q)(1-q^{2})(1-q^{3})}
\eea

\paragraph{Refined vertex}
\bea\Yboxdim{4pt}
&\wtC_{\varnothing\varnothing\,{\yng(3)}}(t,q)=\frac{1}{(1-t)(1-tq)(1-tq^2)},\quad\wtC_{\varnothing\,{\yng(3)}\,\varnothing}(t,q)=\frac{1}{(1-q)(1-q^{2})(1-q^{3})},\\
&\wtC_{{\yng(3)}\varnothing\varnothing\,}(t,q)=\frac{1}{(1-t)(1-tq)(1-tq^2)} 
\eea

\paragraph{Macdonald refined vertex}
\bea\Yboxdim{4pt}
&\wtM_{\varnothing\varnothing\,\yng(3)}(x,y;q,t)=\frac{(tx;q)_{\infty}(txy;q)_{\infty}(tx y^2;q)_{\infty}}{(x;q)_{\infty}(xy;q)_{\infty}(xy^2;q)_{\infty}},\quad \wtM_{{\yng(3)}\varnothing\varnothing\,}(x,y;q,t)=\frac{1}{(1-x)(1-x^2)(1-x^3)},\\
&\wtM_{\varnothing\,\yng(3)\,\varnothing}(x,y;q,t)=\frac{\#}{(1-y)(1-y^2)(1-y^3)(1-qt)(1-q^2t)}
\eea
where the numerator is 
\bea
\#&=-q^3ty^2 + q^2t^2y^2 + q^3y^3 - q^2ty^3 + q^3t^2\\
&- q^3ty + q^2t^2y - q^2ty^2 + qt^2y^2 - qty^3 + t^2y^3 \\
&- q^2ty + qt^2y + q^2y^2 - qty^2 - q^2t + q^2y\\
&- qty + qy^2 - ty^2 - qt + qy - ty + 1
\eea

\subsection{Two-legs PT3 counting}
We summarize the formulas for two-legs PT3 counting with the boundary condition $(\lambda,\mu,\varnothing)$ for $3\leq |\lambda|+|\mu| \leq 4$.

\subsubsection{$\lambda=\{2\}, \mu=\{1\}$ (Fig.~\ref{fig:PT3-2leg-ex1})}
\begin{figure}[ht]
    \centering
    \includegraphics[width=5cm]{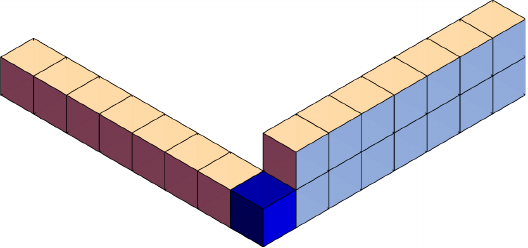}
    \caption{The PT box counting setup for $(\lambda,\mu,\nu)=(\{2\},\{1\},\varnothing)$}
    \label{fig:PT3-2leg-ex1}
\end{figure}
\paragraph{Pole structure}
\bea
\begin{tabular}{|c|c|}\hline
   \text{instanton number $k$}  &  \text{poles} \\\hline\hline
    1 & $(0), (-\epsilon_{1}+\epsilon_{3})$   \\\hline
    2 &  $
\begin{array}{c}
 (0 , -\epsilon _2) \\
  (0,  -\epsilon _1+\epsilon_3) \\
  (-\epsilon_1+\epsilon_3, -2\epsilon_1+\epsilon_3)
\end{array}$ \\\hline
    3 & $\begin{array}{c}
 (0 , -\epsilon _1 , -\epsilon _1+\epsilon_3 )\\
 (0 , -\epsilon _2,  -2\epsilon _2 )\\
 (0,-\epsilon _2, -\epsilon_1+\epsilon_3)\\
 (0,-\epsilon _1+\epsilon_3, -2\epsilon _1+\epsilon_3)\\
 (-\epsilon_1+\epsilon_3, -2\epsilon _1+\epsilon_3, -3\epsilon _1+\epsilon_3)\\
\end{array}$\\\hline
\end{tabular}
\eea

\paragraph{JK residues}
The framing node contribution is
\bea
\mathcal{Z}^{\D6_{\bar{4}}\tbar\D2\tbar\D0}_{\DT;\,\yng(2)\,\yng(1)\,\varnothing}(\fra,\phi_{I})&=\frac{\sh\left(\fra+\epsilon _1+\epsilon _2+\epsilon _3-\phi_I\right) \sh\left(\fra+\epsilon _2+2 \epsilon _3-\phi_I\right)
   \sh\left(-\fra+\epsilon _1+\phi_I\right) }{\sh\left(\fra-\phi_I\right) \sh\left(\fra-\epsilon _1+\epsilon _3-\phi_I\right)
   }\\
   &\times \frac{\sh\left(-\fra+\epsilon _1+\epsilon _2-\epsilon _3+\phi_I\right)
   \sh\left(-\fra+\epsilon _3+\phi_I\right)}{\sh\left(-\fra-\epsilon _1-\epsilon _2+\phi_I\right) \sh\left(-\fra-2 \epsilon _3+\phi_I\right) \sh\left(-\fra-\epsilon
   _2-\epsilon _3+\phi_I\right)}.
\eea
For level one, we have
\bea
(0),&\quad -\frac{\text{sh}\left(\epsilon _1+\epsilon _3\right) \text{sh}\left(-\epsilon _1-\epsilon _2+\epsilon _3\right) \text{sh}\left(\epsilon
   _2+2 \epsilon _3\right)}{\text{sh}\left(\epsilon _2\right) \text{sh}\left(2 \epsilon _3\right) \text{sh}\left(\epsilon _3-\epsilon
   _1\right)}\\
   (\eps_3-\eps_1),&\quad \frac{\text{sh}\left(2 \epsilon _1+\epsilon _2\right) \text{sh}\left(\epsilon _2+\epsilon _3\right) \text{sh}\left(2 \epsilon _3-\epsilon
   _1\right)}{\text{sh}\left(\epsilon _1\right) \text{sh}\left(\epsilon _3-\epsilon _1\right) \text{sh}\left(-2 \epsilon _1-\epsilon
   _2+\epsilon _3\right)}.
\eea
For level two, we have
\bea
(0,-\eps_2),&\quad \frac{\text{sh}\left(\epsilon _1+\epsilon _3\right) \text{sh}\left(-\epsilon _1-\epsilon _2+\epsilon _3\right) \text{sh}\left(\epsilon
   _1-\epsilon _2+\epsilon _3\right) \text{sh}\left(2 \epsilon _2+2 \epsilon _3\right)}{2 \text{sh}\left(\epsilon _2\right)
   \text{sh}\left(2 \epsilon _2\right) \text{sh}\left(2 \epsilon _3\right) \text{sh}\left(-\epsilon _1+\epsilon _2+\epsilon _3\right)}\\
(0,-\eps_1+\eps_3),&\quad -\frac{\text{sh}\left(2 \epsilon _1\right) \text{sh}\left(\epsilon _1+\epsilon _2\right) \text{sh}\left(\epsilon _1+\epsilon _3\right)
   \text{sh}\left(\epsilon _2+\epsilon _3\right){}^2 \text{sh}\left(-\epsilon _1+\epsilon _2+2 \epsilon _3\right)}{2
   \text{sh}\left(\epsilon _1\right){}^2 \text{sh}\left(\epsilon _2\right) \text{sh}\left(\epsilon _3\right) \text{sh}\left(\epsilon _3-2
   \epsilon _1\right) \text{sh}\left(-\epsilon _1+\epsilon _2+\epsilon _3\right)}\\
(-\eps_1+\eps_3,-2\eps_1+\eps_3),&\quad \frac{\text{sh}\left(2 \epsilon _1+\epsilon _2\right) \text{sh}\left(3 \epsilon _1+\epsilon _2\right) \text{sh}\left(\epsilon _2+\epsilon
   _3\right) \text{sh}\left(-\epsilon _1+\epsilon _2+\epsilon _3\right) \text{sh}\left(2 \epsilon _3-2 \epsilon _1\right)
   \text{sh}\left(2 \epsilon _3-\epsilon _1\right)}{2 \text{sh}\left(\epsilon _1\right) \text{sh}\left(2 \epsilon _1\right)
   \text{sh}\left(\epsilon _3-2 \epsilon _1\right) \text{sh}\left(\epsilon _3-\epsilon _1\right) \text{sh}\left(-3 \epsilon _1-\epsilon
   _2+\epsilon _3\right) \text{sh}\left(-2 \epsilon _1-\epsilon _2+\epsilon _3\right)}
\eea

\paragraph{Unrefined vertex}
\bea\Yboxdim{4pt}
\wtC_{\,{\yng(1,1)}\,\,\yng(1)\,\varnothing}(q)=\frac{1-q^{2}+q^{3}}{(1-q)^{2}(1-q^{2})}
\eea

\paragraph{Refined vertex}
\bea\Yboxdim{4pt}
&\wtC_{\,{\yng(1,1)}\,\,\yng(1)\,\varnothing}(t,q)=\frac{1-q+t-t^{2}+qt^{2}}{(1-t)(1-t^2)(1-q)},\quad \wtC_{\varnothing\,{\yng(1,1)}\,\,\yng(1)\,}(t,q)=\frac{1-q^2+q^2t}{(1-t)(1-q)(1-q^2)},\\
&\wtC_{\yng(1)\,\,{\yng(1,1)}\,\varnothing\,}(t,q)=\frac{1-q^2+q^2 t-t+qt}{(1-q)(1-q^2)(1-t)}
\eea

\paragraph{Macdonald refined vertex}
\bea\Yboxdim{4pt}
\wtM_{\yng(1)\,\varnothing\,\yng(1,1)}(x,y;q,t)&=\frac{(tx;q)_{\infty}(tx^{2};q)_{\infty}}{(x;q)_{\infty}(x^{2};q)_{\infty}}\frac{1-x^{2}+x^{2}y}{1-x},\quad \wtM_{\varnothing\,\yng(1,1)\,\yng(1)}(x,y;q,t)=\frac{(tx;q)_{\infty}}{(x;q)_{\infty}}\frac{1-y^{2}+xy^{2}}{(1-y)(1-y^{2})}, \\
\wtM_{{\yng(1,1)}\,\yng(1)\,\varnothing\,}(x,y;q,t)&=\frac{1-y+x-x^{2}-q^{2}+x^{2}y+q^{2}y-tx^{2}+qx^{2}-qtx+q^{2}x^{2}-q^{2}x^{2}y}{(1-y)(1-x)(1-x^{2})(1-q^{2})}.
\eea
For the last case, the series expansion is
\bea
\wtM_{{\yng(1,1)}\,\yng(1)\,\varnothing\,}(x,y;q,t)&=1+\frac{2-qt-q^{2}}{1-q^{2}}x+\left(\frac{1-qt}{1-q^{2}}xy+\frac{2-q-t}{1-q}x^{2}\right)\\
&+\left(\frac{1-qt}{1-q^{2}}xy^{2}+\frac{1-t}{1-q}x^{2}y+\frac{3-t+q-2qt-q^{2}}{1-q^{2}}x^{3}\right)+\cdots
\eea

\subsubsection{$\lambda=\{1,1\}, \mu=\{1\}$ (Fig.~\ref{fig:PT3-2leg-ex2})}
\begin{figure}[ht]
    \centering
    \includegraphics[width=5cm]{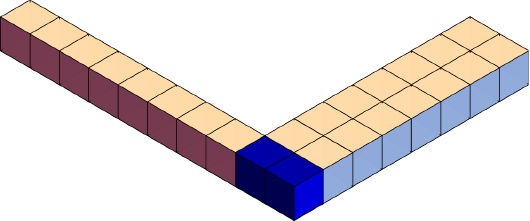}
    \caption{The PT box counting setup for $(\lambda,\mu,\nu)=(\{1,1\},\{1\},\varnothing)$}
    \label{fig:PT3-2leg-ex2}
\end{figure}
\paragraph{Pole structure}
\bea
\begin{tabular}{|c|c|}\hline
   \text{instanton number $k$}  &  \text{poles} \\\hline\hline
    1 & $ (\epsilon_{2})$   \\\hline
    2 &  $
\begin{array}{c}
 (0 , \epsilon _2) \\
  (-\epsilon_1+\epsilon_2,  \epsilon_2) 
\end{array}$ \\\hline
    3 & $\begin{array}{c}
 (0 , -\epsilon _1+\epsilon_2 , \epsilon_2 )\\
 (0 , +\epsilon _2,  -\epsilon _2 )\\
 (-\epsilon_1+\epsilon_2, -2\epsilon _1+\epsilon_2, \epsilon_2)
\end{array}$\\\hline
\end{tabular}
\eea

\paragraph{JK residues}
The framing node contribution is 
\bea
\mathcal{Z}^{\D6_{\bar{4}}\tbar\D2\tbar\D0}_{\DT;\,\yng(1,1)\,\yng(1)\,\varnothing}(\fra,\phi_{I})&=\frac{\text{sh}\left(\fra+\epsilon _1+2 \epsilon _2+\epsilon _3-\phi_I\right) \text{sh}\left(-\fra+\epsilon _1+\epsilon _2+\phi_I\right)
   \text{sh}\left(-\fra-\epsilon _2+\epsilon _3+\phi_I\right)}{\text{sh}\left(\fra+\epsilon _2-\phi_I\right) \text{sh}\left(-\fra-\epsilon _1-2
   \epsilon _2+\phi_I\right) \text{sh}\left(-\fra-\epsilon _3+\phi_I\right)}.
\eea
For the level one, we have
\bea
(\eps_2),&\quad -\frac{\text{sh}\left(\epsilon _1+2 \epsilon _2\right) \text{sh}\left(\epsilon _1+\epsilon _3\right) \text{sh}\left(\epsilon _2+\epsilon
   _3\right)}{\text{sh}\left(\epsilon _1\right) \text{sh}\left(\epsilon _2\right) \text{sh}\left(\epsilon _3-\epsilon _2\right)}.
\eea
For the level two, we have
\bea
(0,\eps_2),&\quad -\frac{\text{sh}\left(\epsilon _1+\epsilon _2\right) \text{sh}\left(\epsilon _1+2 \epsilon _2\right) \text{sh}\left(\epsilon _1+\epsilon
   _3\right) \text{sh}\left(\epsilon _1-\epsilon _2+\epsilon _3\right) \text{sh}\left(\epsilon _2+\epsilon _3\right) \text{sh}\left(2
   \epsilon _2+\epsilon _3\right)}{2 \text{sh}\left(\epsilon _1\right) \text{sh}\left(\epsilon _2\right) \text{sh}\left(2 \epsilon
   _2\right) \text{sh}\left(\epsilon _2-\epsilon _1\right) \text{sh}\left(\epsilon _3\right) \text{sh}\left(\epsilon _3-\epsilon
   _2\right)}\\
(\eps_2,\eps_2-\eps_1),&\quad \frac{\text{sh}\left(2 \epsilon _2\right) \text{sh}\left(\epsilon _1+2 \epsilon _2\right) \text{sh}\left(\epsilon _1+\epsilon _3\right)
   \text{sh}\left(2 \epsilon _1+\epsilon _3\right) \text{sh}\left(\epsilon _2+\epsilon _3\right) \text{sh}\left(-\epsilon _1+\epsilon
   _2+\epsilon _3\right)}{2 \text{sh}\left(\epsilon _1\right) \text{sh}\left(2 \epsilon _1\right) \text{sh}\left(\epsilon _2\right)
   \text{sh}\left(\epsilon _2-\epsilon _1\right) \text{sh}\left(\epsilon _3-\epsilon _2\right) \text{sh}\left(\epsilon _1-\epsilon
   _2+\epsilon _3\right)}
\eea

\paragraph{Unrefined vertex}
\bea\Yboxdim{4pt}
\wtC_{\,{\yng(2)}\,\,\yng(1)\,\varnothing}(q)=\frac{1-q+q^{3}}{(1-q)^{2}(1-q^{2})}
\eea

\paragraph{Refined vertex}
\bea\Yboxdim{4pt}
&\wtC_{\,{\yng(2)}\,\,\yng(1)\,\varnothing}(t,q)=\frac{1-q+qt^{2}}{(1-t)(1-t^2)(1-q)},\quad \wtC_{\varnothing \,{\yng(2)}\,\,\yng(1)\, }(t,q)=\frac{q^3 (-t)+q^3+q^2 t^2-q^2+q t-q+1}{(1-q)(1-q^2)(1-t)},\\
&\wtC_{\yng(1)\,\,{\yng(2)}\,\varnothing\,}(t,q)=\frac{1-t+q^2t}{(1-t)^{2}(1-qt)}
\eea

\paragraph{Macdonald refined vertex}
\bea\Yboxdim{4pt}
\wtM_{\adjustbox{valign=c}{\yng(1)}\,\varnothing
\,\adjustbox{valign=c}{\yng(2)}}(x,y;q,t)&=\frac{(tx;q)_{\infty}(txy;q)_{\infty}}{(x;q)_{\infty}(xy;q)_{\infty}}\frac{1-x+xy^{2}}{1-x},\quad \wtM_{\varnothing\,\adjustbox{valign=c}{\yng(2)}\,\adjustbox{valign=c}{\yng(1)}}(x,y;q,t)=\frac{(tx;q)_{\infty}}{(x;q)_{\infty}}\frac{\#}{(1-qt)(1-y)(1-y^{2})}, \\
\wtM_{{\yng(2)}\,\yng(1)\,\varnothing\,}(x,y;q,t)&=\frac{1-y-q+qy+x^2y-tx^2+qx^2-qx^2y}{(1-x)(1-x^2)(1-y)(1-q)}.
\eea
where
\bea
\#&=1-y-y^{2}+xy-qt+y^{3}-txy+qxy+qty-xy^3+x^2y^2+qty^2\\
&-qtxy+txy^3-qxy^3-qty^3-tx^2y^3+qx^2y^3+qtxy^3-qtx^2y^2
\eea
For the last case, the series expansion is
\bea
\wtM_{{\yng(2)}\,\yng(1)\,\varnothing\,}(x,y;q,t)=1+x+\frac{2-q-t}{1-q}x^{2}+\left(\frac{2-q-t}{1-q}x^{3}+\frac{1-t}{1-q}x^{2}y\right)+\cdots
\eea

\subsubsection{$\lambda=\{3\}, \mu=\{1\}$ (Fig.~\ref{fig:PT3-2leg-ex3})}
\begin{figure}[ht]
    \centering
    \includegraphics[width=5cm]{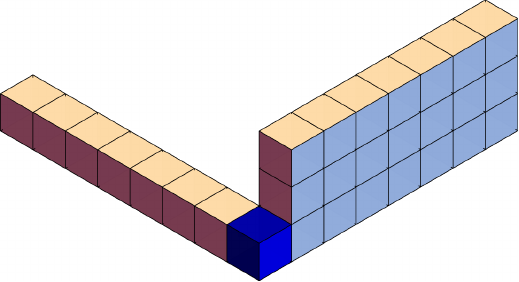}
    \caption{The PT box counting setup for $(\lambda,\mu,\nu)=(\{3\},\{1\},\varnothing)$}
    \label{fig:PT3-2leg-ex3}
\end{figure}
\paragraph{Pole structure}
\bea
\begin{tabular}{|c|c|}\hline
   \text{instanton number $k$}  &  \text{poles} \\\hline\hline
    1 & $ (0), (\epsilon_1-2\epsilon_3)$   \\\hline
    2 &  $
\begin{array}{c}
 (0 , -\epsilon _2) \\
  (0,-\epsilon_1+2\epsilon_3) \\
  (-\epsilon_1+2\epsilon_3,-2\epsilon_1+2\epsilon_3) \\
  (-\epsilon_1+2\epsilon_3,-\epsilon_1+\epsilon_3) 
\end{array}$ \\\hline
    3 & $\begin{array}{c}
( 0 , -\epsilon _2 , -2 \epsilon _2) \\
 (0 , -\epsilon _2 , -\epsilon _1+2 \epsilon _3) \\
 (0 , -\epsilon _1+2 \epsilon _3 , -2 \epsilon _1+2 \epsilon _3) \\
 (0 , -\epsilon _1+2 \epsilon _3 , -\epsilon _1+\epsilon _3 )\\
 (-\epsilon _1+2 \epsilon _3 , -2 \epsilon _1+2 \epsilon _3 , -3 \epsilon _1+2 \epsilon _3 )\\
 (-\epsilon _1+2 \epsilon _3 , -2 \epsilon _1+2 \epsilon _3 , -\epsilon _1+\epsilon _3 )
\end{array}$\\\hline
\end{tabular}
\eea

\paragraph{JK residues}
The framing node contribution is
\bea
\mathcal{Z}^{\D6_{\bar{4}}\tbar\D2\tbar\D0}_{\DT;\,\yng(3)\,\yng(1)\,\varnothing}(\fra,\phi_{I})&=\frac{\text{sh}\left(\fra+\epsilon _1+\epsilon _2+\epsilon _3-\phi_I\right) \text{sh}\left(\fra+\epsilon _2+3 \epsilon _3-\phi_I\right)
   }{\text{sh}\left(\fra-\phi_I\right) \text{sh}\left(\fra-\epsilon _1+2 \epsilon _3-\phi
   _I\right) }\\
   &\times \frac{\text{sh}\left(-\fra+\epsilon _1+\phi_I\right) \text{sh}\left(-\fra+\epsilon _1+\epsilon _2-2 \epsilon _3+\phi_I\right)
   \text{sh}\left(-\fra+\epsilon _3+\phi_I\right)}{\text{sh}\left(-\fra-\epsilon _1-\epsilon _2+\phi_I\right) \text{sh}\left(-\fra-3 \epsilon _3+\phi_I\right)
   \text{sh}\left(-\fra-\epsilon _2-\epsilon _3+\phi_I\right)}
\eea
For level one, we have
\bea
(0),&\quad -\frac{\text{sh}\left(\epsilon _1+\epsilon _3\right) \text{sh}\left(-\epsilon _1-\epsilon _2+2 \epsilon _3\right)
   \text{sh}\left(\epsilon _2+3 \epsilon _3\right)}{\text{sh}\left(\epsilon _2\right) \text{sh}\left(3 \epsilon _3\right)
   \text{sh}\left(2 \epsilon _3-\epsilon _1\right)}\\
   (\eps_1-2\eps_3),&\quad \frac{\text{sh}\left(\epsilon _1+\epsilon _2\right) \text{sh}\left(2 \epsilon _3\right) \text{sh}\left(-2 \epsilon _1-\epsilon
   _2+\epsilon _3\right) \text{sh}\left(\epsilon _2+\epsilon _3\right) \text{sh}\left(3 \epsilon _3-\epsilon
   _1\right)}{\text{sh}\left(\epsilon _1\right) \text{sh}\left(\epsilon _3\right) \text{sh}\left(-\epsilon _1-\epsilon _2+\epsilon
   _3\right) \text{sh}\left(2 \epsilon _3-\epsilon _1\right) \text{sh}\left(-2 \epsilon _1-\epsilon _2+2 \epsilon _3\right)}
\eea
For level two, we have
\bea
(0 , -\epsilon _2),&\quad \frac{\text{sh}\left(\epsilon _1+\epsilon _3\right) \text{sh}\left(\epsilon _1-\epsilon _2+\epsilon _3\right) \text{sh}\left(-\epsilon
   _1-\epsilon _2+2 \epsilon _3\right) \text{sh}\left(2 \epsilon _2+3 \epsilon _3\right)}{2 \text{sh}\left(\epsilon _2\right)
   \text{sh}\left(2 \epsilon _2\right) \text{sh}\left(3 \epsilon _3\right) \text{sh}\left(-\epsilon _1+\epsilon _2+2 \epsilon
   _3\right)} \\
  (0,-\epsilon_1+2\epsilon_3),&\quad -\frac{\text{sh}\left(\epsilon _1+\epsilon _2\right) \text{sh}\left(\epsilon _3-2 \epsilon _1\right) \text{sh}\left(\epsilon
   _1+\epsilon _3\right) \text{sh}\left(\epsilon _2+\epsilon _3\right) \text{sh}\left(\epsilon _2+2 \epsilon _3\right)
   \text{sh}\left(-\epsilon _1+\epsilon _2+3 \epsilon _3\right)}{2 \text{sh}\left(\epsilon _1\right) \text{sh}\left(\epsilon _2\right)
   \text{sh}\left(\epsilon _3\right) \text{sh}\left(\epsilon _3-\epsilon _1\right) \text{sh}\left(2 \epsilon _3-2 \epsilon _1\right)
   \text{sh}\left(-\epsilon _1+\epsilon _2+2 \epsilon _3\right)} \\
  (-\epsilon_1+2\epsilon_3,-2\epsilon_1+2\epsilon_3),&\quad \frac{\text{sh}\left(\epsilon _2+\epsilon
   _3\right) \text{sh}\left(-\epsilon _1+\epsilon _2+\epsilon _3\right) \text{sh}\left(3 \epsilon _3-2 \epsilon _1\right)
   \text{sh}\left(3 \epsilon _3-\epsilon _1\right)}{2  \text{sh}\left(2 \epsilon _3-2 \epsilon _1\right) \text{sh}\left(-3 \epsilon _1-\epsilon _2+2
   \epsilon _3\right) \text{sh}\left(-2 \epsilon _1-\epsilon _2+2 \epsilon _3\right)} \\
&\quad \times \frac{\text{sh}\left(\epsilon _1+\epsilon _2\right) \text{sh}\left(2 \epsilon _1+\epsilon _2\right) \text{sh}\left(2
   \epsilon _3\right) \text{sh}\left(-3 \epsilon _1-\epsilon _2+\epsilon _3\right) }{\text{sh}\left(\epsilon _1\right) \text{sh}\left(2 \epsilon _1\right)
   \text{sh}\left(\epsilon _3\right) \text{sh}\left(\epsilon _3-\epsilon _1\right) \text{sh}\left(-\epsilon _1-\epsilon
   _2+\epsilon _3\right)}\\
  (-\epsilon_1+2\epsilon_3,-\epsilon_1+\epsilon_3),&\quad  \frac{\text{sh}\left(2 \epsilon _1+\epsilon _2\right) \text{sh}\left(\epsilon _2+\epsilon _3\right) \text{sh}\left(\epsilon
   _2+2 \epsilon _3\right) \text{sh}\left(3 \epsilon _3-\epsilon _1\right)}{2 \text{sh}\left(\epsilon _1\right)
   \text{sh}\left(\epsilon _3-\epsilon _1\right){}^2 \text{sh}\left(-2 \epsilon _1-\epsilon _2+2 \epsilon _3\right)}
\eea

\paragraph{Unrefined vertex}
\bea\Yboxdim{4pt}
\wtC_{\,{\yng(1,1,1)}\,\,\yng(1)\,\varnothing}(q)=\frac{1-q^3+q^4}{(1-q)^{2}(1-q^{2})(1-q^3)}
\eea

\paragraph{Refined vertex}
\bea
&\wtC_{\,{\yng(1,1,1)}\,\,\yng(1)\,\varnothing}(t,q)=\frac{1+t-t^3-q+qt^3}{(1-q)(1-t)(1-t^2)(1-t^3)},\quad \wtC_{\,{\varnothing\,\yng(1,1,1)}\,\,\yng(1)\,}(t,q)= \frac{q^3 t-q^3+1}{(1-q)(1-q^2)(1-q^3)(1-t)},\\
&\wtC_{\yng(1)\,\varnothing\,{\yng(1,1,1)}\,\,}(t,q)=\frac{1-t^3+qt^3}{(1-t)^2(1-t^2)(1-t^3)}
\eea

\paragraph{Macdonald refined vertex}

\bea\Yboxdim{4pt}
\wtM_{\adjustbox{valign=c}{\yng(1)}\,\varnothing
\,\adjustbox{valign=c}{\yng(1,1,1)}}(x,y;q,t)&=\frac{(tx;q)_{\infty}(tx^2;q)_{\infty}(tx^3;q)_{\infty}}{(x;q)_{\infty}(x^2;q)_{\infty}(x^3;q)_{\infty}}\frac{1-x^3+x^3y}{1-x},\\
\wtM_{\varnothing\,\adjustbox{valign=c}{\yng(1,1,1)}\,\adjustbox{valign=c}{\yng(1)}}(x,y;q,t)&=\frac{(tx;q)_{\infty}}{(x;q)_{\infty}}\frac{1-y^3+xy^3}{(1-y)(1-y^2)(1-y^3)}, \\
\wtM_{{\yng(1,1,1)}\,\yng(1)\,\varnothing\,}(x,y;q,t)&=\frac{\#}{(1-x)(1-x^2)(1-x^3)(1-y)(1-q^3)(1-qt)}.
\eea
where
\bea
\#&=q^4tx^3y - q^4x^4y + q^3tx^4y - q^4tx^3 + q^4x^4 \\
&- q^3tx^4 - q^3tx^3 + q^2t^2x^3 + q^3x^4 - q^2tx^4 \\
&- q^3x^3y + q^3t^2x - q^3tx^2 + q^2t^2x^2 + q^3x^3 - q^2tx^3 + qt^2x^3\\
&- qtx^4 + t^2x^4 - q^4ty + q^4xy - q^3txy - qtx^3y + qx^4y - tx^4y \\
&+ q^4t - q^4x + q^3tx - q^2tx^2 + qt^2x^2 + q^2x^3 - qx^4 + tx^4 \\
&- q^2tx + q^2x^2 - qtx^2 + qx^3 - tx^3 + q^3y + x^3y - q^3 \\
&- qtx + qx^2 - tx^2 - x^3 + qty - qxy + txy - qt + qx - tx + x - y + 1
\eea

\subsubsection{$\lambda=\{2,1\}, \mu=\{1\}$ (Fig.~\ref{fig:PT3-2leg-ex4})}
\begin{figure}[ht]
    \centering
    \includegraphics[width=5cm]{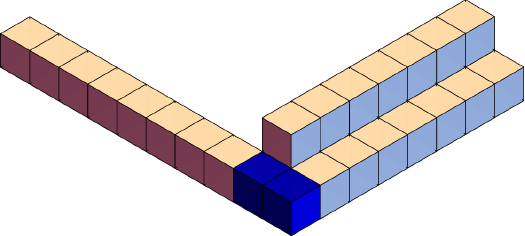}
    \caption{The PT box counting setup for $(\lambda,\mu,\nu)=(\{2,1\},\{1\},\varnothing)$}
    \label{fig:PT3-2leg-ex4}
\end{figure}
\paragraph{Pole structure}
\bea
\begin{tabular}{|c|c|}\hline
   \text{instanton number $k$}  &  \text{poles} \\\hline\hline
    1 & $ (\epsilon_2), (-\epsilon_1+\epsilon_3)$   \\\hline
    2 &  $
\begin{array}{c}
 (0 , \epsilon _2) \\
  (-\epsilon_1+\epsilon_2,\epsilon_2) \\
  (\epsilon_2, -\epsilon_1+\epsilon_3) \\
  (-\epsilon_1+\epsilon_3,-2\epsilon_1+\epsilon_3) 
\end{array}$ \\\hline
    3 & $\begin{array}{c}
( 0 , -\epsilon_1+\epsilon _2 , \epsilon _2) \\
 (0 , -\epsilon _2 , \epsilon _2) \\
 (0 , \epsilon _2 ,  -\epsilon _1+\epsilon _3) \\
 (-\epsilon _1+ \epsilon _2 , -2 \epsilon _1+ \epsilon _2 ,  \epsilon _2 )\\
 (-\epsilon _1+ \epsilon _2 ,  \epsilon _2 , -\epsilon _1+\epsilon _3 )\\
 ( \epsilon _2 , -\epsilon_1+\epsilon _3 , -2\epsilon _1+\epsilon _3 )\\
 (-\epsilon _1+ \epsilon _3 , -2\epsilon_1+ \epsilon _3 , -3\epsilon _1+\epsilon _3 )
\end{array}$\\\hline
\end{tabular}
\eea

\paragraph{JK residues}
The framing node contribution is
\bea
\mathcal{Z}^{\D6_{\bar{4}}\tbar\D2\tbar\D0}_{\DT;\,\yng(2,1)\,\yng(1)\,\varnothing}(\fra,\phi_{I})&=\frac{ \text{sh}\left(-\fra+\epsilon _1+\phi_I\right) \text{sh}\left(-\fra+\epsilon _1+\epsilon _2-\epsilon _3+\phi
   _I\right) \text{sh}\left(-\fra-\epsilon _2+\epsilon _3+\phi_I\right)}{\text{sh}\left(\fra+\epsilon _2-\phi_I\right)
   \text{sh}\left(\fra-\epsilon _1+\epsilon _3-\phi_I\right)}\\
   &\times \frac{\text{sh}\left(\fra+\epsilon _1+2 \epsilon _2+\epsilon _3-\phi_I\right) \text{sh}\left(\fra+\epsilon _2+2 \epsilon _3-\phi
   _I\right)}{ \text{sh}\left(-\fra-\epsilon _1-2 \epsilon _2+\phi_I\right)
   \text{sh}\left(-\fra-2 \epsilon _3+\phi_I\right) \text{sh}\left(-\fra-\epsilon _2-\epsilon _3+\phi_I\right)}.
\eea
For level one, we have
\bea
(\eps_2),&\quad -\frac{\text{sh}\left(\epsilon _1+\epsilon _2\right) \text{sh}\left(2 \epsilon _3\right) \text{sh}\left(\epsilon _1+\epsilon
   _3\right) \text{sh}\left(-\epsilon _1-2 \epsilon _2+\epsilon _3\right) \text{sh}\left(\epsilon _2+\epsilon
   _3\right)}{\text{sh}\left(\epsilon _1\right) \text{sh}\left(\epsilon _2\right) \text{sh}\left(\epsilon _3\right)
   \text{sh}\left(-\epsilon _1-\epsilon _2+\epsilon _3\right) \text{sh}\left(2 \epsilon _3-\epsilon _2\right)}\\
(\eps_3-\eps_1),&\quad \frac{\text{sh}\left(2 \epsilon _1+2 \epsilon _2\right) \text{sh}\left(\epsilon _2+\epsilon _3\right)
   \text{sh}\left(-\epsilon _1-\epsilon _2+2 \epsilon _3\right)}{\text{sh}\left(\epsilon _1\right) \text{sh}\left(-2
   \epsilon _1-2 \epsilon _2+\epsilon _3\right) \text{sh}\left(-\epsilon _1-\epsilon _2+\epsilon _3\right)}.
\eea
For level two, we have
\bea
 (0 , \epsilon _2),&\quad -\frac{\text{sh}\left(\epsilon _1+\epsilon _2\right) \text{sh}\left(\epsilon _1+\epsilon _3\right) \text{sh}\left(-\epsilon
   _1-2 \epsilon _2+\epsilon _3\right) \text{sh}\left(\epsilon _1-\epsilon _2+\epsilon _3\right) \text{sh}\left(2 \epsilon
   _2+\epsilon _3\right) \text{sh}\left(\epsilon _2+2 \epsilon _3\right)}{2 \text{sh}\left(\epsilon _2\right)
   \text{sh}\left(2 \epsilon _2\right) \text{sh}\left(\epsilon _2-\epsilon _1\right) \text{sh}\left(\epsilon _3\right)
   \text{sh}\left(\epsilon _3-\epsilon _1\right) \text{sh}\left(2 \epsilon _3-\epsilon _2\right)} \\
  (-\epsilon_1+\epsilon_2,\epsilon_2),&\quad \frac{\text{sh}\left(-\epsilon _1-2 \epsilon _2+\epsilon
   _3\right) \text{sh}\left(\epsilon _2+\epsilon _3\right) \text{sh}\left(-\epsilon _1+\epsilon _2+\epsilon _3\right)
   \text{sh}\left(\epsilon _1+2 \epsilon _3\right)}{2 \text{sh}\left(\epsilon _3-\epsilon
   _2\right) \text{sh}\left(-\epsilon _1-\epsilon _2+\epsilon _3\right) \text{sh}\left(2 \epsilon _3-\epsilon _2\right)
   \text{sh}\left(\epsilon _1-\epsilon _2+2 \epsilon _3\right)} \\
   &\quad \times \frac{\text{sh}\left(\epsilon _1+\epsilon _2\right) \text{sh}\left(2 \epsilon _3\right) \text{sh}\left(2 \epsilon
   _1+\epsilon _3\right) \text{sh}\left(\epsilon _3-2 \epsilon _2\right) }{\text{sh}\left(\epsilon _1\right) \text{sh}\left(2 \epsilon _1\right)
   \text{sh}\left(\epsilon _2-\epsilon _1\right) \text{sh}\left(\epsilon _3\right) }\\
  (\epsilon_2, -\epsilon_1+\epsilon_3),&\quad -\frac{\text{sh}\left(2 \epsilon _1+\epsilon _2\right) \text{sh}\left(\epsilon _1+2 \epsilon _2\right)
   \text{sh}\left(\epsilon _1+\epsilon _3\right) \text{sh}\left(\epsilon _2+\epsilon _3\right){}^2 \text{sh}\left(2 \epsilon
   _3-\epsilon _1\right)}{2 \text{sh}\left(\epsilon _1\right){}^2 \text{sh}\left(\epsilon _2\right) \text{sh}\left(\epsilon
   _3-\epsilon _1\right) \text{sh}\left(\epsilon _3-\epsilon _2\right) \text{sh}\left(-2 \epsilon _1-\epsilon _2+\epsilon
   _3\right)} \\
  (-\epsilon_1+\epsilon_3,-2\epsilon_1+\epsilon_3) ,&\quad \frac{
   \text{sh}\left(-2 \epsilon _1-\epsilon _2+2 \epsilon _3\right) \text{sh}\left(-\epsilon _1-\epsilon _2+2 \epsilon
   _3\right)}{2 \text{sh}\left(\epsilon _1\right) \text{sh}\left(2 \epsilon _1\right) }\\
   &\times \frac{\text{sh}\left(2 \epsilon _1+2 \epsilon _2\right) \text{sh}\left(3 \epsilon _1+2 \epsilon _2\right)
   \text{sh}\left(\epsilon _2+\epsilon _3\right) \text{sh}\left(-\epsilon _1+\epsilon _2+\epsilon _3\right)}{\text{sh}\left(-3 \epsilon _1-2
   \epsilon _2+\epsilon _3\right) \text{sh}\left(-2 \epsilon _1-2 \epsilon _2+\epsilon _3\right) \text{sh}\left(-2 \epsilon
   _1-\epsilon _2+\epsilon _3\right) \text{sh}\left(-\epsilon _1-\epsilon _2+\epsilon _3\right)}
\eea

\paragraph{Unrefined vertex}
\bea\Yboxdim{4pt}
\wtC_{\,{\yng(2,1)}\,\,\yng(1)\,\varnothing}(q)=\frac{1-q+q^{2}-q^3+q^4}{(1-q)^{3}(1-q^{3})}
\eea
\paragraph{Refined vertex}
\bea
&\wtC_{\,{\yng(2,1)}\,\,\yng(1)\,\varnothing}(t,q)=\frac{1+t^2-t^3-q+qt^3}{(1-q)(1-t)^{2}(1-t^{3})},\quad \wtC_{\varnothing \,{\yng(2,1)}\,\,\yng(1)}(t,q)=\frac{1-q-q^3+q^4+qt-q^4t+q^3t^2}{(1-t)(1-q)^{2}(1-q^3)}\\
&\wtC_{\,{\yng(1)}\,\varnothing \,\yng(2,1)}(t,q)=\frac{1-t+qt-qt^2+q^2t^2}{(1-t)^3 (1-qt^2)}
\eea

\paragraph{Macdonald refined vertex}

\bea\Yboxdim{4pt}
\wtM_{\adjustbox{valign=c}{\yng(1)}\,\varnothing
\,\adjustbox{valign=c}{\yng(2,1)}}(x,y;q,t)&=\frac{(tx;q)^{2}_{\infty}(tx^2y;q)_{\infty}}{(x;q)^{2}_{\infty}(x^2 y;q)_{\infty}}\frac{1-x+xy-x^2y+x^2y^2}{1-x},\\
\wtM_{\varnothing\,\adjustbox{valign=c}{\yng(2,1)}\,\adjustbox{valign=c}{\yng(1)}}(x,y;q,t)&=\frac{(tx;q)_{\infty}}{(x;q)_{\infty}}\frac{\#_{1}}{(1-y)(1-y^2)(1-y^3)(1-qt^2)}, \\
\wtM_{{\yng(2,1)}\,\yng(1)\,\varnothing\,}(x,y;q,t)&=\frac{\#_2}{(1-x)(1-x^2)(1-x^3)(1-y)(1-q)(1-qt^2)}
\eea
where 
\bea
\#_{1}&=-qt^2x^2y^4 + qt^2xy^5 + qtx^2y^5 - t^2x^2y^5 - qt^2x^2y^3 + qt^2xy^4 - qt^2y^5 - qtxy^5 + t^2xy^5\\
&+ qx^2y^5 - tx^2y^5 - qxy^5 + txy^5 - qt^2xy^2 + qt^2y^3 + x^2y^4 - xy^5 - qt^2xy + qt^2y^2 + qtxy^2 \\
&- t^2xy^2 + x^2y^3 - xy^4 + y^5 + qxy^2 - txy^2 - qt^2 + xy^2 - y^3 + xy - y^2 + 1,\\
\#_{2}&=q^2t^2x^4y - q^2t^2x^4 + q^2t^2x^3y - q^2tx^4y - q^2t^2x^3 + qt^3x^3 + q^2tx^4 - qt^2x^4 + t^3x^4 - qt^2x^3y \\
&- q^2x^4y + 2qtx^4y - t^2x^4y + qt^3x^2 + q^2x^4 - 2qtx^4 + t^2x^4 - q^2t^2xy - tx^4y + q^2t^2x - qt^2x^2 + qx^4 - q^2t^2y \\
&+ q^2txy - qx^3y + x^4y + q^2t^2 - q^2tx + qx^3 - tx^3 - x^4 + qt^2y + q^2xy - 2qtxy\\ 
&+ t^2xy + x^3y - qt^2 - q^2x + 2qtx - t^2x - tx^2 + txy - tx + x^2 + qy - xy - q + x - y + 1
\eea

\subsubsection{$\lambda=\{1,1,1\}, \mu=\{1\}$ (Fig.~\ref{fig:PT3-2leg-ex5})}
\begin{figure}[ht]
    \centering
    \includegraphics[width=5cm]{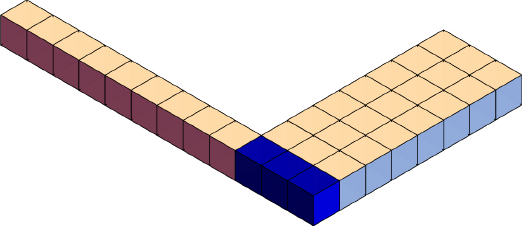}
    \caption{The PT box counting setup for $(\lambda,\mu,\nu)=(\{1,1,1\},\{1\},\varnothing)$}
    \label{fig:PT3-2leg-ex5}
\end{figure}
\paragraph{Pole structure}

\bea
\begin{tabular}{|c|c|}\hline \text{instanton number $k$} & \text{poles} \\\hline\hline 1 & $ (2\epsilon_2)$ \\\hline 2 & $ \begin{array}{c} (-\epsilon_1 + 2\epsilon_2, 2\epsilon_2) \\ (2\epsilon_2, \epsilon_2) \end{array}$ \\\hline 3 & $\begin{array}{c} ( 0 , 2\epsilon _2 , \epsilon _2) \\ (-\epsilon_1 + 2\epsilon_2, -2\epsilon_1 + \epsilon _2 , 2\epsilon _2) \\ (-\epsilon_1 + 2\epsilon_2 , 2\epsilon _2 , \epsilon _2) \end{array}$\\\hline \end{tabular}
\eea

\paragraph{JK residues}
The framing node contribution is
\bea
\mathcal{Z}^{\D6_{\bar{4}}\tbar\D2\tbar\D0}_{\DT;\,\yng(1,1,1)\,\yng(1)\,\varnothing}(\fra,\phi_{I})&=\frac{\text{sh}\left(\fra+\epsilon _1+3 \epsilon _2+\epsilon _3-\phi_I\right) \text{sh}\left(-\fra+\epsilon _1+\epsilon _2+\phi
   _I\right) \text{sh}\left(-\fra-2 \epsilon _2+\epsilon _3+\phi_I\right)}{\text{sh}\left(\fra+2 \epsilon _2-\phi_I\right)
   \text{sh}\left(-\fra-\epsilon _1-3 \epsilon _2+\phi_I\right) \text{sh}\left(-\fra-\epsilon _3+\phi_I\right)}.
\eea
For level one, we have
\bea
(2\eps_2),&\quad -\frac{\text{sh}\left(\epsilon _1+3 \epsilon _2\right) \text{sh}\left(\epsilon _1+\epsilon _3\right) \text{sh}\left(\epsilon
   _2+\epsilon _3\right)}{\text{sh}\left(\epsilon _1\right) \text{sh}\left(\epsilon _2\right) \text{sh}\left(\epsilon _3-2
   \epsilon _2\right)}.
\eea
For level two, we have
\bea
(-\epsilon_1 + 2\epsilon_2, 2\epsilon_2),&\quad \frac{\text{sh}\left(3 \epsilon _2\right) \text{sh}\left(\epsilon _1+3 \epsilon _2\right) \text{sh}\left(\epsilon
   _1+\epsilon _3\right) \text{sh}\left(2 \epsilon _1+\epsilon _3\right) \text{sh}\left(\epsilon _2+\epsilon _3\right)
   \text{sh}\left(-\epsilon _1+\epsilon _2+\epsilon _3\right)}{2 \text{sh}\left(\epsilon _1\right) \text{sh}\left(2 \epsilon
   _1\right) \text{sh}\left(\epsilon _2\right) \text{sh}\left(\epsilon _2-\epsilon _1\right) \text{sh}\left(\epsilon _3-2
   \epsilon _2\right) \text{sh}\left(\epsilon _1-2 \epsilon _2+\epsilon _3\right)} \\
(2\epsilon_2, \epsilon_2),&\quad -\frac{\text{sh}\left(\epsilon _1+2 \epsilon _2\right) \text{sh}\left(\epsilon _1+3 \epsilon _2\right)
   \text{sh}\left(\epsilon _1+\epsilon _3\right) \text{sh}\left(\epsilon _1-\epsilon _2+\epsilon _3\right)
   \text{sh}\left(\epsilon _2+\epsilon _3\right) \text{sh}\left(2 \epsilon _2+\epsilon _3\right)}{2 \text{sh}\left(\epsilon
   _1\right) \text{sh}\left(\epsilon _2\right) \text{sh}\left(2 \epsilon _2\right) \text{sh}\left(\epsilon _2-\epsilon
   _1\right) \text{sh}\left(\epsilon _3-2 \epsilon _2\right) \text{sh}\left(\epsilon _3-\epsilon _2\right)}.
\eea

\paragraph{Unrefined vertex}
\bea\Yboxdim{4pt}
\wtC_{\,{\yng(3)}\,\,\yng(1)\,\varnothing}(q)=\frac{1-q+q^4}{(1-q)^{2}(1-q^2)(1-q^{3})}
\eea
\paragraph{Refined vertex}
\bea\Yboxdim{4pt}
&\wtC_{\,{\yng(3)}\,\,\yng(1)\,\varnothing}(t,q)=\frac{1-q+qt^3}{(1-q)(1-t)(1-t^2)(1-t^3)},\quad \wtC_{\,\yng(1)\,\varnothing\,{\yng(3)}\,}(t,q)=\frac{1-t+q^3t}{(1-t)^2(1-qt)(1-q^2t)},\\
&\wtC_{\varnothing\,{\yng(3)}\,\,\yng(1)}(t,q)=\frac{q^6 t-q^6-q^5 t^2+q^5-q^4 t+q^4+q^3 t^3-q^3 t+q^2 t^2-q^2+q t-q+1}{(1-q)(1-q^2)(1-q^3)(1-t)}.
\eea

\paragraph{Macdonald refined vertex}

\bea\Yboxdim{4pt}
\wtM_{\adjustbox{valign=c}{\yng(1)}\,\varnothing
\,\adjustbox{valign=c}{\yng(3)}}(x,y;q,t)&=\frac{(tx;q)_{\infty}(txy;q)_{\infty}(txy^2;q)_{\infty}}{(x;q)_{\infty}(xy;q)_{\infty}(x^2 y;q)_{\infty}}\frac{1-x+xy^3}{1-x},\\
\wtM_{{\yng(3)}\,\yng(1)\,\varnothing\,}(x,y;q,t)&=\frac{1-y-q+qy+x^3y-tx^3+qx^3-qx^3y}{(1-x)(1-x^2)(1-x^3)(1-y)(1-q)},\\
\wtM_{\varnothing\,\adjustbox{valign=c}{\yng(3)}\,\adjustbox{valign=c}{\yng(1)}}(x,y;q,t)&=\frac{(tx;q)_{\infty}}{(x;q)_{\infty}}\frac{\#}{(1-y)(1-y^2)(1-y^3)(1-qt)(1-q^2t)}, 
\eea
where
\bea
\#&=-q^3t^2x^2y^5 - q^3tx^3y^5 + q^2t^2x^3y^5 + q^3t^2xy^6 + q^3tx^2y^6 - q^2t^2x^2y^6 + q^3x^3y^6 - q^2tx^3y^6\\
& + q^3t^2x^3y^3 - q^3tx^3y^4 + q^2t^2x^3y^4 + q^3tx^2y^5 - q^2t^2x^2y^5 - q^2tx^3y^5 + qt^2x^3y^5 - q^3t^2y^6\\
&- q^3txy^6 + q^2t^2xy^6 - q^3x^2y^6 + 2q^2tx^2y^6 - qt^2x^2y^6 - qtx^3y^6 + t^2x^3y^6 - q^3t^2xy^4 - q^2tx^3y^4 + qt^2x^3y^4\\
&+ q^3t^2y^5 + 2q^2tx^2y^5 - qt^2x^2y^5 + q^2x^3y^5 - qtx^3y^5- 2q^2txy^6 + qt^2xy^6 - q^2x^2y^6 + 2qtx^2y^6\\
&- t^2x^2y^6 + q^3t^2x^2y^2 - q^3t^2xy^3- q^3tx^2y^3 + q^2t^2x^2y^3 - q^2tx^3y^3 + q^3t^2y^4 + q^3txy^4 - q^2t^2xy^4 \\
&+ q^2x^3y^4 - qtx^3y^4 - q^2x^2y^5 + 2qtx^2y^5 + qx^3y^5 - tx^3y^5 + q^2ty^6 + q^2xy^6 - 2qtxy^6 - qx^2y^6\\
&+ tx^2y^6 - q^3tx^2y^2 + q^2t^2x^2y^2 + q^3txy^3 - q^2t^2xy^3 + q^3x^2y^3 - 2q^2tx^2y^3 + qt^2x^2y^3 - qtx^3y^3\\
&+ 2q^2txy^4 - qt^2xy^4 + qx^3y^4 - tx^3y^4 - q^2ty^5 - qx^2y^5 + tx^2y^5 + qty^6 + qxy^6 - txy^6 + q^3t^2xy\\
&- q^3t^2y^2 - 2q^2tx^2y^2 + qt^2x^2y^2 + 2q^2txy^3 - qt^2xy^3 + q^2x^2y^3 - 2qtx^2y^3 + t^2x^2y^3 - q^2ty^4 - q^2xy^4 \\
&+ 2qtxy^4 - qty^5 - x^2y^5 + xy^6 - q^3t^2y - q^3txy + q^2t^2xy + q^2x^2y^2 - 2qtx^2y^2 - q^2xy^3 \\
&+ 2qtxy^3 + qx^2y^3 - tx^2y^3 + x^3y^3 - qty^4 - qxy^4 + txy^4 - y^6 + q^3t^2 - 2q^2txy + qt^2xy\\
&+ q^2ty^2 + qx^2y^2 - tx^2y^2 - qxy^3 + txy^3 - xy^4 + y^5 + q^2ty \\
&+ q^2xy - 2qtxy + qty^2 + x^2y^2 - xy^3 + y^4 - q^2t + qty + qxy - txy - qt + xy - y^2 - y + 1
\eea

\subsubsection{$\lambda=\{2\}, \mu=\{2\}$ (Fig.~\ref{fig:PT3-2leg-ex6})}
\begin{figure}[ht]
    \centering
    \includegraphics[width=5cm]{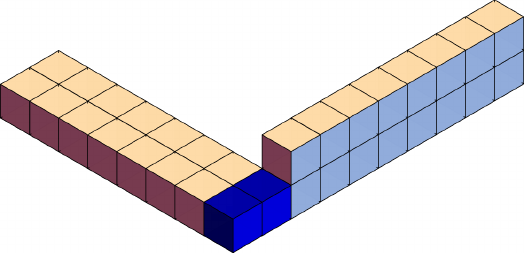}
    \caption{The PT box counting setup for $(\lambda,\mu,\nu)=(\{2\},\{2\},\varnothing)$}
    \label{fig:PT3-2leg-ex6}
\end{figure}
\paragraph{Pole structure}

\bea
\begin{tabular}{|c|c|}\hline \text{instanton number $k$} & \text{poles} \\\hline\hline 1 & $ (\epsilon_1), (-\epsilon_1+\epsilon_3)$ \\\hline 2 & $ \begin{array}{c} (0, \epsilon_1) \\ (\epsilon_1, \epsilon_1 - \epsilon_2) \\ (\epsilon_1, -\epsilon_1 + \epsilon_3) \\ (-\epsilon_1 + \epsilon_3, -2\epsilon_1 + \epsilon_3) \end{array}$ \\\hline 3 & $\begin{array}{c} ( 0 , \epsilon_1 , \epsilon_1 - \epsilon_2) \\ ( 0 , \epsilon_1 , -\epsilon_1 + \epsilon_3) \\ (\epsilon_1, \epsilon_1 - \epsilon_2 , \epsilon_1 - 2\epsilon_2) \\ (\epsilon_1 , \epsilon_1 - \epsilon_2 , -\epsilon_1 + \epsilon_3) \\ (\epsilon_1, -\epsilon_1 + \epsilon_3 , -2\epsilon_1 + \epsilon_3) \\ (-\epsilon_1 + \epsilon_3 , -2\epsilon_1 + \epsilon_3 , -3\epsilon_1 + \epsilon_3) \end{array}$\\\hline \end{tabular}
\eea

\paragraph{JK residues}
The framing node contribution is
\bea
\mathcal{Z}^{\D6_{\bar{4}}\tbar\D2\tbar\D0}_{\DT;\,\yng(2)\,\yng(2)\,\varnothing}(\fra,\phi_{I})&=\frac{\text{sh}\left(\fra+2 \epsilon _1+\epsilon _2+\epsilon _3-\phi_I\right) \text{sh}\left(\fra+\epsilon _2+2 \epsilon _3-\phi
   _I\right) }{\text{sh}\left(\fra+\epsilon _1-\phi_I\right)
   \text{sh}\left(\fra-\epsilon _1+\epsilon _3-\phi_I\right)}\\
   &\times \frac{\text{sh}\left(-\fra+\epsilon _1+\phi_I\right) \text{sh}\left(-\fra+\epsilon _1+\epsilon _2-\epsilon _3+\phi
   _I\right) \text{sh}\left(-\fra-\epsilon _1+\epsilon _3+\phi_I\right)}{ \text{sh}\left(-\fra-2 \epsilon _1-\epsilon _2+\phi_I\right)
   \text{sh}\left(-\fra-2 \epsilon _3+\phi_I\right) \text{sh}\left(-\fra-\epsilon _2-\epsilon _3+\phi_I\right)}
\eea
For level one, we have
\bea
(\eps_1),&\quad -\frac{\text{sh}\left(2 \epsilon _1\right) \text{sh}\left(\epsilon _1+\epsilon _3\right) \text{sh}\left(-2 \epsilon
   _1-\epsilon _2+\epsilon _3\right) \text{sh}\left(\epsilon _2+\epsilon _3\right) \text{sh}\left(-\epsilon _1+\epsilon _2+2
   \epsilon _3\right)}{\text{sh}\left(\epsilon _1\right) \text{sh}\left(\epsilon _2\right) \text{sh}\left(\epsilon _3-2
   \epsilon _1\right) \text{sh}\left(-\epsilon _1+\epsilon _2+\epsilon _3\right) \text{sh}\left(2 \epsilon _3-\epsilon
   _1\right)}\\
(\eps_3-\eps_1),&\quad \frac{\text{sh}\left(3 \epsilon _1+\epsilon _2\right) \text{sh}\left(\epsilon _2+\epsilon _3\right) \text{sh}\left(2
   \epsilon _3-2 \epsilon _1\right)}{\text{sh}\left(\epsilon _1\right) \text{sh}\left(\epsilon _3-2 \epsilon _1\right)
   \text{sh}\left(-3 \epsilon _1-\epsilon _2+\epsilon _3\right)}
\eea
For level two, we have{\small
\bea
 (0, \epsilon_1),&\quad \frac{\text{sh}\left(\epsilon _1+\epsilon _3\right) \text{sh}\left(2 \epsilon _1+\epsilon _3\right) \text{sh}\left(-2
   \epsilon _1-\epsilon _2+\epsilon _3\right) \text{sh}\left(-\epsilon _1-\epsilon _2+\epsilon _3\right)
   \text{sh}\left(\epsilon _2+2 \epsilon _3\right) \text{sh}\left(-\epsilon _1+\epsilon _2+2 \epsilon _3\right)}{2
   \text{sh}\left(\epsilon _2\right) \text{sh}\left(\epsilon _2-\epsilon _1\right) \text{sh}\left(2 \epsilon _3\right)
   \text{sh}\left(\epsilon _3-2 \epsilon _1\right) \text{sh}\left(\epsilon _3-\epsilon _1\right) \text{sh}\left(2 \epsilon
   _3-\epsilon _1\right)} \\ 
 (\epsilon_1, \epsilon_1 - \epsilon_2),&\quad \frac{\text{sh}\left(-2 \epsilon _1-\epsilon _2+\epsilon _3\right) \text{sh}\left(\epsilon _1-\epsilon
   _2+\epsilon _3\right) \text{sh}\left(\epsilon _2+\epsilon _3\right) \text{sh}\left(2 \epsilon _2+\epsilon _3\right)
   \text{sh}\left(-\epsilon _1+2 \epsilon _2+2 \epsilon _3\right)}{2 \text{sh}\left(\epsilon _1\right)
   \text{sh}\left(\epsilon _2\right) \text{sh}\left(2 \epsilon _2\right) } \\
   &\quad \times \frac{\text{sh}\left(2 \epsilon _1\right) \text{sh}\left(\epsilon _2-2 \epsilon _1\right) \text{sh}\left(\epsilon
   _1+\epsilon _3\right) }{\text{sh}\left(\epsilon _2-\epsilon _1\right)
   \text{sh}\left(-2 \epsilon _1+\epsilon _2+\epsilon _3\right) \text{sh}\left(-\epsilon _1+\epsilon _2+\epsilon _3\right)
   \text{sh}\left(-\epsilon _1+2 \epsilon _2+\epsilon _3\right) \text{sh}\left(2 \epsilon _3-\epsilon _1\right)}  \\
 (\epsilon_1, -\epsilon_1 + \epsilon_3),&\quad -\frac{\text{sh}\left(3 \epsilon _1\right) \text{sh}\left(2 \epsilon _1+\epsilon _2\right) \text{sh}\left(\epsilon
   _1+\epsilon _3\right) \text{sh}\left(\epsilon _2+\epsilon _3\right){}^2 \text{sh}\left(-2 \epsilon _1+\epsilon _2+2
   \epsilon _3\right)}{2 \text{sh}\left(\epsilon _1\right){}^2 \text{sh}\left(\epsilon _2\right) \text{sh}\left(\epsilon
   _3-3 \epsilon _1\right) \text{sh}\left(\epsilon _3-\epsilon _1\right) \text{sh}\left(-2 \epsilon _1+\epsilon _2+\epsilon
   _3\right)} \\ 
 (-\epsilon_1 + \epsilon_3, -2\epsilon_1 + \epsilon_3),&\quad \frac{\text{sh}\left(3 \epsilon _1+\epsilon _2\right) \text{sh}\left(4 \epsilon _1+\epsilon _2\right)
   \text{sh}\left(\epsilon _2+\epsilon _3\right) \text{sh}\left(-\epsilon _1+\epsilon _2+\epsilon _3\right) \text{sh}\left(2
   \epsilon _3-3 \epsilon _1\right) \text{sh}\left(2 \epsilon _3-2 \epsilon _1\right)}{2 \text{sh}\left(\epsilon _1\right)
   \text{sh}\left(2 \epsilon _1\right) \text{sh}\left(\epsilon _3-3 \epsilon _1\right) \text{sh}\left(\epsilon _3-2 \epsilon
   _1\right) \text{sh}\left(-4 \epsilon _1-\epsilon _2+\epsilon _3\right) \text{sh}\left(-3 \epsilon _1-\epsilon _2+\epsilon
   _3\right)}
\eea}

\paragraph{Unrefined vertex}
\bea\Yboxdim{4pt}
\wtC_{\,{\yng(1,1)}\,\,\yng(1,1)\,\varnothing}(q)=\frac{1-q^2+q^4}{(1-q)^{2}(1-q^2)^{2}}
\eea
\paragraph{Refined vertex}
\bea\Yboxdim{4pt}
&\wtC_{\,{\yng(1,1)}\,\,\yng(1,1)\,\varnothing}(t,q)=\frac{1-q^2+qt-t^2+q^2t^2}{(1-q)(1-q^2)(1-t)(1-t^2)},\quad \wtC_{\varnothing\,{\yng(1,1)}\,\,\yng(1,1)\,}(t,q)=\frac{1-q^2+q^2t^2}{(1-q)(1-q^2)(1-t)(1-t^2)},\\
&\wtC_{\,{\yng(1,1)}\,\varnothing\,\yng(1,1)\,}(t,q)=\frac{q^2 t^4-q t^5-q t^4+q t^3+q t^2+t^5-t^3-t^2+1}{(1-t)^{2}(1-t^{2})^{2}}
\eea

\paragraph{Macdonald refined vertex}
\bea\Yboxdim{4pt}
\wtM_{\adjustbox{valign=c}{\yng(1,1)}\,\varnothing
\,\adjustbox{valign=c}{\yng(1,1)}}(x,y;q,t)&=\frac{(tx;q)_{\infty}(tx^2;q)_{\infty}}{(x;q)_{\infty}(x^2 ;q)_{\infty}}\frac{\#}{(1-x)(1-x^2)(1-qt)},\\
\wtM_{{\yng(1,1)}\,\yng(1,1)\,\varnothing\,}(x,y;q,t)&=\frac{1-y^2+xy-x^2-q^2+x^2y^2+q^2y^2-tx^2y+qx^2y-qtxy+q^2x^2-q^2x^2y^2}{(1-x)(1-x^2)(1-y)(1-y^2)(1-q^2)},\\
\wtM_{\varnothing\,\adjustbox{valign=c}{\yng(1,1)}\,\adjustbox{valign=c}{\yng(1,1)}}(x,y;q,t)&=\frac{(tx;q)_{\infty}(tx^2;q)_{\infty}}{(x;q)_{\infty}(x^2 ;q)_{\infty}}\frac{1-y^2+x^2y^2}{(1-y)(1-y^2)}, 
\eea
where
\bea
\#&=qtx^5y - qtx^4y^2 + qx^5y^2 - tx^5y^2 - qtx^5 + qtx^4y - qx^5y + tx^5y- qtx^3y - qx^4y \\
&+ tx^4y - x^5y + x^4y^2 + qtx^3 + qx^4 - tx^4+ x^5 - qtx^2y + qx^3y - tx^3y - x^4y + qtx^2\\
&- qx^3 + tx^3 + qx^2y - tx^2y + x^3y - qx^2 + tx^2 - x^3 + x^2y - qt + qx - tx - x^2 + 1.
\eea

\subsubsection{$\lambda=\{2\}, \mu=\{1,1\}$ (Fig.~\ref{fig:PT3-2leg-ex7})}
\begin{figure}[ht]
    \centering
    \includegraphics[width=5cm]{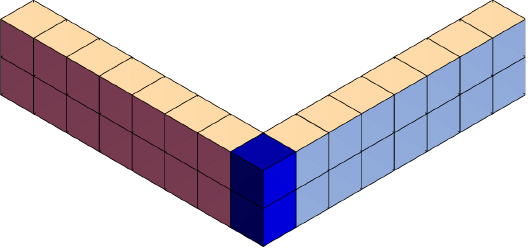}
    \caption{The PT box counting setup for $(\lambda,\mu,\nu)=(\{2\},\{1,1\},\varnothing)$}
    \label{fig:PT3-2leg-ex7}
\end{figure}
\paragraph{Pole structure}

\bea
\begin{tabular}{|c|c|}\hline \text{instanton number $k$} & \text{poles} \\\hline\hline 1 & $ (\epsilon_3)$ \\\hline 2 & $ \begin{array}{c} (0, \epsilon_3) \\ (-\epsilon_1 + \epsilon_3, \epsilon_3) \\ (-\epsilon_2 + \epsilon_3, \epsilon_3) \end{array}$ \\\hline 3 & $\begin{array}{c} ( 0 , -\epsilon_1 + \epsilon_3 , \epsilon_3) \\ ( 0 , -\epsilon_2 + \epsilon_3 , \epsilon_3) \\ (-\epsilon_1 + \epsilon_3 , \epsilon_3 - 2\epsilon_1 , \epsilon_3) \\ (-\epsilon_1 + \epsilon_3 , \epsilon_3 - \epsilon_2 , \epsilon_3) \\ (-\epsilon_2 + \epsilon_3 , -2\epsilon_2 + \epsilon_3 , \epsilon_3) \end{array}$\\\hline \end{tabular}
\eea

\paragraph{JK residues}
The framing node contribution is
\bea
\mathcal{Z}^{\D6_{\bar{4}}\tbar\D2\tbar\D0}_{\DT;\,\yng(2)\,\yng(1,1)\,\varnothing}(\fra,\phi_{I})&=\frac{\text{sh}\left(\fra+\epsilon _1+\epsilon _2+2 \epsilon _3-\phi_I\right) \text{sh}\left(-\fra+\epsilon _1+\epsilon
   _2-\epsilon _3+\phi_I\right) \text{sh}\left(-\fra+\epsilon _3+\phi_I\right)}{\text{sh}\left(\fra+\epsilon _3-\phi_I\right)
   \text{sh}\left(-\fra-\epsilon _1-\epsilon _2+\phi_I\right) \text{sh}\left(-\fra-2 \epsilon _3+\phi_I\right)}
\eea
For level one, we have
\bea
(\eps_3),&\quad \frac{\text{sh}\left(\epsilon _1+\epsilon _2\right){}^2 \text{sh}\left(2 \epsilon _3\right) \text{sh}\left(\epsilon
   _1+\epsilon _3\right) \text{sh}\left(\epsilon _2+\epsilon _3\right)}{\text{sh}\left(\epsilon _1\right)
   \text{sh}\left(\epsilon _2\right) \text{sh}\left(\epsilon _3\right){}^2 \text{sh}\left(-\epsilon _1-\epsilon _2+\epsilon
   _3\right)}.
\eea
For level two, we have
\bea
(0, \epsilon_3),&\quad -\frac{\text{sh}\left(\epsilon _1+\epsilon _2\right) \text{sh}\left(\epsilon _1+\epsilon _3\right) \text{sh}\left(-\epsilon
   _1-\epsilon _2+\epsilon _3\right) \text{sh}\left(\epsilon _2+\epsilon _3\right) \text{sh}\left(\epsilon _1+2 \epsilon
   _3\right) \text{sh}\left(\epsilon _2+2 \epsilon _3\right)}{2 \text{sh}\left(\epsilon _1\right) \text{sh}\left(\epsilon
   _2\right) \text{sh}\left(\epsilon _3\right) \text{sh}\left(2 \epsilon _3\right) \text{sh}\left(\epsilon _3-\epsilon
   _1\right) \text{sh}\left(\epsilon _3-\epsilon _2\right)} \\
(-\epsilon_1 + \epsilon_3, \epsilon_3),&\quad  \frac{\text{sh}\left(\epsilon _1+\epsilon _2\right){}^2 \text{sh}\left(2 \epsilon _1+\epsilon _2\right) \text{sh}\left(2
   \epsilon _3\right) \text{sh}\left(2 \epsilon _1+\epsilon _3\right) \text{sh}\left(\epsilon _2+\epsilon _3\right)
   \text{sh}\left(-\epsilon _1+\epsilon _2+\epsilon _3\right) \text{sh}\left(2 \epsilon _3-\epsilon _1\right)}{2
   \text{sh}\left(\epsilon _1\right) \text{sh}\left(2 \epsilon _1\right) \text{sh}\left(\epsilon _2-\epsilon _1\right)
   \text{sh}\left(\epsilon _3\right){}^2 \text{sh}\left(\epsilon _3-\epsilon _1\right) \text{sh}\left(-2 \epsilon
   _1-\epsilon _2+\epsilon _3\right) \text{sh}\left(-\epsilon _1-\epsilon _2+\epsilon _3\right)}\\ 
(-\epsilon_2 + \epsilon_3, \epsilon_3) ,&\quad -\frac{\text{sh}\left(\epsilon _1+\epsilon _2\right){}^2 \text{sh}\left(\epsilon _1+2 \epsilon _2\right) \text{sh}\left(2
   \epsilon _3\right) \text{sh}\left(\epsilon _1+\epsilon _3\right) \text{sh}\left(\epsilon _1-\epsilon _2+\epsilon
   _3\right) \text{sh}\left(2 \epsilon _2+\epsilon _3\right) \text{sh}\left(2 \epsilon _3-\epsilon _2\right)}{2
   \text{sh}\left(\epsilon _2\right) \text{sh}\left(2 \epsilon _2\right) \text{sh}\left(\epsilon _2-\epsilon _1\right)
   \text{sh}\left(\epsilon _3\right){}^2 \text{sh}\left(-\epsilon _1-2 \epsilon _2+\epsilon _3\right)
   \text{sh}\left(\epsilon _3-\epsilon _2\right) \text{sh}\left(-\epsilon _1-\epsilon _2+\epsilon _3\right)}
\eea

\paragraph{Unrefined vertex}
\bea\Yboxdim{4pt}
\wtC_{\,{\yng(1,1)}\,\,\yng(2)\,\varnothing}(q)=\frac{1-q+2q^3-q^4-q^5+q^6}{(1-q)^{2}(1-q^2)^2}
\eea

\paragraph{Refined vertex}
\bea\Yboxdim{4pt}
&\wtC_{\,{\yng(1,1)}\,\,\yng(2)\,\varnothing}(t,q)=\frac{1-q-q^2+q^3+qt-q^3t+qt^2+q^2t^2-q^3t^2-qt^3+q^3t^3}{(1-q)(1-q^2)(1-t)(1-t^2)},\\
&\wtC_{\varnothing\,{\yng(1,1)}\,\,\yng(2)\,}(t,q)=\frac{1-q-q^2+q^3+qt+q^2t-q^3t-q^4t+q^4t^2}{(1-q)(1-q^2)(1-t)(1-qt)},\\
&\wtC_{\,{\yng(2)}\,\varnothing\,\yng(1,1)\,}(t,q)=\frac{1-t+qt-t^2+qt^2+t^3-qt^3-qt^4+q^2t^4}{(1-t)^{2}(1-t^{2})^{2}}
\eea

\paragraph{Macdonald refined vertex}

\bea\Yboxdim{4pt}
\wtM_{\adjustbox{valign=c}{\yng(2)}\,\varnothing
\,\adjustbox{valign=c}{\yng(1,1)}}(x,y;q,t)&=\frac{(tx;q)_{\infty}(tx^2;q)_{\infty}}{(x;q)_{\infty}(x^2 ;q)_{\infty}}\frac{1-x+xy-x^2+x^2y+x^3-x^3y-x^4y+x^4y^2}{(1-x)(1-x^2)},\\
\wtM_{{\yng(1,1)}\,\yng(2)\,\varnothing\,}(x,y;q,t)&=\frac{\#}{(1-x)(1-x^2)(1-y)(1-y^2)(1-q)(1-q^2)(1-qt)},\\
\wtM_{\varnothing\,\adjustbox{valign=c}{\yng(1,1)}\,\adjustbox{valign=c}{\yng(2)}}(x,y;q,t)&=\frac{(tx;q)_{\infty}(tx y;q)_{\infty}}{(x;q)_{\infty}(x y;q)_{\infty}}\frac{1-y-y^2+xy+y^3+xy^2-xy^3-xy^4+x^2y^4}{(1-y)(1-y^2)}, 
\eea
where
\bea
&\#=-q^4tx^3y^3 + q^4tx^3y^2 - q^3t^2x^3y^2 + q^4tx^2y^3 + q^3tx^3y^3 + q^4tx^3y - q^4tx^2y^2 + q^3tx^3y^2\\
&- q^2t^2x^3y^2 + q^4txy^3 - q^3tx^2y^3 + q^3x^3y^3 + q^2tx^3y^3 - q^4tx^3 + q^3t^2x^3 - q^4tx^2y - q^3tx^3y\\
&- q^4txy^2 + q^3t^2xy^2 + q^3tx^2y^2 - 2q^3x^3y^2 + q^2tx^3y^2 + qt^2x^3y^2 - q^4ty^3 - q^3txy^3 - q^3x^2y^3\\
&- q^2tx^2y^3 - q^2x^3y^3 - qtx^3y^3 + q^4tx^2 - q^2t^3x^2 - q^3tx^3 + 2q^2t^2x^3 - qt^3x^3 - q^4txy + q^3tx^2y \\
&+ q^2t^2x^2y - qt^3x^2y - q^3x^3y - 2q^2tx^3y + 2qt^2x^3y - t^3x^3y + q^4ty^2 - q^3txy^2 + q^2t^2xy^2 + q^3x^2y^2 \\
&+ q^2tx^2y^2 - qtx^3y^2 + t^2x^3y^2 + q^3ty^3 - q^3xy^3 - q^2txy^3 + q^2x^2y^3 + qtx^2y^3 - qx^3y^3 + q^4tx - q^3t^2x\\
&- q^3tx^2 + q^2t^2x^2 + 2q^3x^3 - 2q^2tx^3 + q^4ty + q^3txy + q^3x^2y + qt^2x^2y + 2q^2x^3y - qtx^3y + t^2x^3y \\
&- q^3ty^2+ 2q^3xy^2 - q^2txy^2 - qt^2xy^2 - q^2x^2y^2 - qtx^2y^2 + 2qx^3y^2 - 2tx^3y^2 + q^3y^3 + q^2ty^3 + q^2xy^3 \\
&+ qtxy^3+ qx^2y^3 + x^3y^3 - q^4t + q^3tx - q^2t^2x - q^3x^2 - q^2tx^2 + 2qt^2x^2 - q^3ty + q^3xy + q^2txy \\
&- q^2x^2y - 2qtx^2y + t^2x^2y + qx^3y - q^3y^2 - q^2ty^2 + qtxy^2 - t^2xy^2 - qx^2y^2 - q^2y^3 - qty^3 + qxy^3 \\
&- x^2y^3 + q^3t - 2q^3x + q^2tx+ qt^2x + q^2x^2 - qtx^2 - qx^3 + tx^3 - q^3y - q^2ty - q^2xy - qtxy - tx^2y - x^3y \\
&+ q^2y^2 + qty^2 - 2qxy^2 + 2txy^2+ x^2y^2- qy^3 - xy^3 + q^3 + q^2t - qtx + t^2x + qx^2 - tx^2 + q^2y \\
&+ qty - qxy + x^2y+ qy^2 + y^3 - q^2 - qt + 2qx - 2tx + qy + xy - y^2 - q - y + 1
\eea

\subsubsection{$\lambda=\{1,1\}, \mu=\{1,1\}$ (Fig.~\ref{fig:PT3-2leg-ex8})}
\begin{figure}[ht]
    \centering
    \includegraphics[width=5cm]{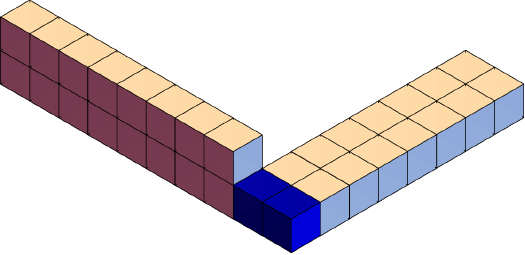}
    \caption{The PT box counting setup for $(\lambda,\mu,\nu)=(\{1,1\},\{1,1\},\varnothing)$}
    \label{fig:PT3-2leg-ex8}
\end{figure}
\paragraph{Pole structure}
\bea
\begin{tabular}{|c|c|}\hline \text{instanton number $k$} & \text{poles} \\\hline\hline 1 & $ (\epsilon_2), (-\epsilon_2+\epsilon_3)$ \\\hline 2 & $ \begin{array}{c} (0, \epsilon_2) \\ (-\epsilon_1+\epsilon_2, \epsilon_2) \\ (\epsilon_2, -\epsilon_2+\epsilon_3) \\ (-\epsilon_2+\epsilon_3, -2\epsilon_2+\epsilon_3) \end{array}$ \\\hline 3 & $\begin{array}{c} ( 0 , \epsilon_2 , \epsilon_3 - \epsilon_2) \\ ( 0 , \epsilon_2 , -\epsilon_1 + \epsilon_2) \\ (\epsilon_2, \epsilon_2 - \epsilon_1 , \epsilon_2 - 2\epsilon_1) \\ (\epsilon_2 , \epsilon_2 - \epsilon_1 , -\epsilon_2 + \epsilon_3) \\ (\epsilon_2, -\epsilon_2 + \epsilon_3 , -2\epsilon_2 + \epsilon_3) \\ (-\epsilon_2 + \epsilon_3 , -2\epsilon_2 + \epsilon_3 , -3\epsilon_2 + \epsilon_3) \end{array}$\\\hline \end{tabular}
\eea

\paragraph{JK residues}
The framing node contribution is
\bea
\mathcal{Z}^{\D6_{\bar{4}}\tbar\D2\tbar\D0}_{\DT;\,\yng(1,1)\,\yng(1,1)\,\varnothing}(\fra,\phi_{I})&=\frac{\text{sh}\left(\fra+\epsilon _1+2 \epsilon _2+\epsilon _3-\phi_I\right) \text{sh}\left(\fra+\epsilon _1+2 \epsilon _3-\phi
   _I\right) }{\text{sh}\left(\fra+\epsilon _2-\phi_I\right)
   \text{sh}\left(\fra-\epsilon _2+\epsilon _3-\phi_I\right)}\\
   &\times \frac{\text{sh}\left(-\fra+\epsilon _2+\phi_I\right) \text{sh}\left(-\fra+\epsilon _1+\epsilon _2-\epsilon _3+\phi
   _I\right) \text{sh}\left(-\fra-\epsilon _2+\epsilon _3+\phi_I\right)}{ \text{sh}\left(-\fra-\epsilon _1-2 \epsilon _2+\phi_I\right)
   \text{sh}\left(-\fra-2 \epsilon _3+\phi_I\right) \text{sh}\left(-\fra-\epsilon _1-\epsilon _3+\phi_I\right)}
\eea
For level one, we have
\bea
(\epsilon_2),&\quad -\frac{\text{sh}\left(2 \epsilon _2\right) \text{sh}\left(\epsilon _1+\epsilon _3\right) \text{sh}\left(-\epsilon _1-2
   \epsilon _2+\epsilon _3\right) \text{sh}\left(\epsilon _2+\epsilon _3\right) \text{sh}\left(\epsilon _1-\epsilon _2+2
   \epsilon _3\right)}{\text{sh}\left(\epsilon _1\right) \text{sh}\left(\epsilon _2\right) \text{sh}\left(\epsilon _3-2
   \epsilon _2\right) \text{sh}\left(\epsilon _1-\epsilon _2+\epsilon _3\right) \text{sh}\left(2 \epsilon _3-\epsilon
   _2\right)} \\
(-\epsilon_2+\epsilon_3),&\quad \frac{\text{sh}\left(\epsilon _1+3 \epsilon _2\right) \text{sh}\left(\epsilon _1+\epsilon _3\right) \text{sh}\left(2
   \epsilon _3-2 \epsilon _2\right)}{\text{sh}\left(\epsilon _2\right) \text{sh}\left(-\epsilon _1-3 \epsilon _2+\epsilon
   _3\right) \text{sh}\left(\epsilon _3-2 \epsilon _2\right)}
\eea
For level two, we have
\bea
(0, \epsilon_2),&\quad -\frac{\text{sh}\left(-\epsilon _1-2 \epsilon _2+\epsilon _3\right) \text{sh}\left(-\epsilon _1-\epsilon _2+\epsilon
   _3\right) }{2
   \text{sh}\left(\epsilon _1\right) \text{sh}\left(\epsilon _2-\epsilon _1\right) \text{sh}\left(2 \epsilon _3\right)
   } \\ 
   &\times \frac{\text{sh}\left(\epsilon _2+\epsilon _3\right) \text{sh}\left(2 \epsilon _2+\epsilon _3\right)
   \text{sh}\left(\epsilon _1+2 \epsilon _3\right) \text{sh}\left(\epsilon _1-\epsilon _2+2 \epsilon _3\right)}{\text{sh}\left(\epsilon _3-2 \epsilon _2\right) \text{sh}\left(\epsilon _3-\epsilon _2\right) \text{sh}\left(2 \epsilon
   _3-\epsilon _2\right)}\\
(-\epsilon_1+\epsilon_2, \epsilon_2),&\quad \frac{\text{sh}\left(2 \epsilon _2\right) \text{sh}\left(2 \epsilon _2-\epsilon _1\right) \text{sh}\left(\epsilon
   _1+\epsilon _3\right) \text{sh}\left(2 \epsilon _1+\epsilon _3\right) \text{sh}\left(-\epsilon _1-2 \epsilon _2+\epsilon
   _3\right) }{2 \text{sh}\left(\epsilon _1\right) \text{sh}\left(2
   \epsilon _1\right) \text{sh}\left(\epsilon _2\right) \text{sh}\left(\epsilon _2-\epsilon _1\right)
   \text{sh}\left(\epsilon _1-2 \epsilon _2+\epsilon _3\right)} \\ 
   &\times\frac{\text{sh}\left(\epsilon _2+\epsilon _3\right) \text{sh}\left(-\epsilon _1+\epsilon _2+\epsilon _3\right)
   \text{sh}\left(2 \epsilon _1-\epsilon _2+2 \epsilon _3\right)}{ \text{sh}\left(\epsilon _1-\epsilon _2+\epsilon _3\right)
   \text{sh}\left(2 \epsilon _1-\epsilon _2+\epsilon _3\right) \text{sh}\left(2 \epsilon _3-\epsilon _2\right)} \\
(\epsilon_2, -\epsilon_2+\epsilon_3),&\quad -\frac{\text{sh}\left(3 \epsilon _2\right) \text{sh}\left(\epsilon _1+2 \epsilon _2\right) \text{sh}\left(\epsilon
   _1+\epsilon _3\right){}^2 \text{sh}\left(\epsilon _2+\epsilon _3\right) \text{sh}\left(\epsilon _1-2 \epsilon _2+2
   \epsilon _3\right)}{2 \text{sh}\left(\epsilon _1\right) \text{sh}\left(\epsilon _2\right){}^2 \text{sh}\left(\epsilon
   _3-3 \epsilon _2\right) \text{sh}\left(\epsilon _1-2 \epsilon _2+\epsilon _3\right) \text{sh}\left(\epsilon _3-\epsilon
   _2\right)} \\ 
(-\epsilon_2+\epsilon_3, -2\epsilon_2+\epsilon_3),&\quad \frac{\text{sh}\left(\epsilon _1+3 \epsilon _2\right) \text{sh}\left(\epsilon _1+4 \epsilon _2\right)
   \text{sh}\left(\epsilon _1+\epsilon _3\right) \text{sh}\left(\epsilon _1-\epsilon _2+\epsilon _3\right) \text{sh}\left(2
   \epsilon _3-3 \epsilon _2\right) \text{sh}\left(2 \epsilon _3-2 \epsilon _2\right)}{2 \text{sh}\left(\epsilon _2\right)
   \text{sh}\left(2 \epsilon _2\right) \text{sh}\left(-\epsilon _1-4 \epsilon _2+\epsilon _3\right) \text{sh}\left(\epsilon
   _3-3 \epsilon _2\right) \text{sh}\left(-\epsilon _1-3 \epsilon _2+\epsilon _3\right) \text{sh}\left(\epsilon _3-2
   \epsilon _2\right)}
\eea

\paragraph{Unrefined vertex}
\bea\Yboxdim{4pt}
\wtC_{\,{\yng(2)}\,\,\yng(2)\,\varnothing}(q)=\frac{1-q^2+q^4}{(1-q)^{2}(1-q^2)^2}
\eea
\paragraph{Refined vertex}
\bea\Yboxdim{4pt}
&\wtC_{\,{\yng(2)}\,\,\yng(2)\,\varnothing}(t,q)=\frac{1-q^2+q^2t^2}{(1-q)(1-q^2)(1-t)(1-t^2)},\\
&\wtC_{\varnothing\,{\yng(2)}\,\,\yng(2)\,}(t,q)=\frac{1-q^2-q^3+q^5+q^2t+q^3t-q^4t-q^5t+q^4t^2}{(1-q)(1-q^2)(1-t)(1-qt)},\\
&\wtC_{\,{\yng(2)}\,\varnothing\,\yng(2)\,}(t,q)=\frac{1-t^2+q^2t^2}{(1-t)^{2}(1-t^{2})(1-qt)}
\eea

\paragraph{Macdonald refined vertex}
\bea\Yboxdim{4pt}
\wtM_{\adjustbox{valign=c}{\yng(2)}\,\varnothing
\,\adjustbox{valign=c}{\yng(2)}}(x,y;q,t)&=\frac{(tx;q)_{\infty}(tx y;q)_{\infty}}{(x;q)_{\infty}(x y ;q)_{\infty}}\frac{1-x^2+x^2y^2}{(1-x)(1-x^2)},\\
\wtM_{{\yng(2)}\,\yng(2)\,\varnothing\,}(x,y;q,t)&=\frac{1-y^2-q^2+x^2y^2+q^2y^2-tx^2y+qx^2y-qtx^2+q^2x^2-q^2x^2y^2}{(1-x)(1-x^2)(1-y)(1-y^2)(1-q^2)},\\
\wtM_{\varnothing\,\adjustbox{valign=c}{\yng(2)}\,\adjustbox{valign=c}{\yng(2)}}(x,y;q,t)&=\frac{(tx;q)_{\infty}(tx y;q)_{\infty}}{(x;q)_{\infty}(x y;q)_{\infty}}\frac{\#}{(1-y)(1-y^2)}, 
\eea
where
\bea
\#&=-qtx^2y^4 + qtxy^5 + qx^2y^5 - tx^2y^5 + qtxy^4 - qty^5 - qxy^5 + txy^5 - qtxy^3- qxy^4\\
&+ txy^4 + x^2y^4 - xy^5 - qtxy^2 + qty^3 + qxy^3 - txy^3 + qy^4 - ty^4 - xy^4 + y^5+ qty^2\\
&+ qxy^2 - txy^2 - qy^3 + ty^3 + xy^3 - qy^2 + ty^2 + xy^2 - y^3 - qt + qy - ty - y^2 + 1
\eea

\subsection{Three-legs PT3 counting}
In this section, we summarize the results for PT3 counting with three-legs for the cases when $|\lambda|+|\mu|+|\nu|\leq 5$. The PT box counting rules \cite{Pandharipande:2007kc,Pandharipande:2007sq}, GR box counting rules \cite{Gaiotto:2020dsq}, JK-residues, unrefined topological vertex, refined topological vertex, and the Macdonald refined topological vertex are summarized. 

For the GR box counting rules, we quote the results in Appendix of \cite{Gaiotto:2020dsq}. See it for a detailed analysis. Compared to their notations, the signs of the $\eps_{a}$-parameters are all flipped as $-\eps_{a}$.  

Note that for the PT box counting rules, we write the labels of the boxes at the red positions as $\ket{0^{a}}$ where $a=-1,0,1,2,3$. For $a=0$, this is understood as the two-state degeneracy, while for $a=-1$, it is understood as the ultra heavy box. For the ultra heavy box, it is counted twice when counting the boxes, while for the two-state degeneracy, it is counted twice when counting the possible configurations. 

To list the PT configurations, Henry Liu's \href{https://github.com/liu-henry-hl/boxcounting}{sage math code} was helpful. We thank him for writing a wonderful computer program.

\subsubsection[Example 1]{ $(\lambda,\mu,\nu)=(\yng(1),\yng(1,1),\yng(1))$ (Fig.~\ref{fig:3leg-ex1})}
\begin{figure}[h]
    \centering
    \includegraphics[width=5cm]{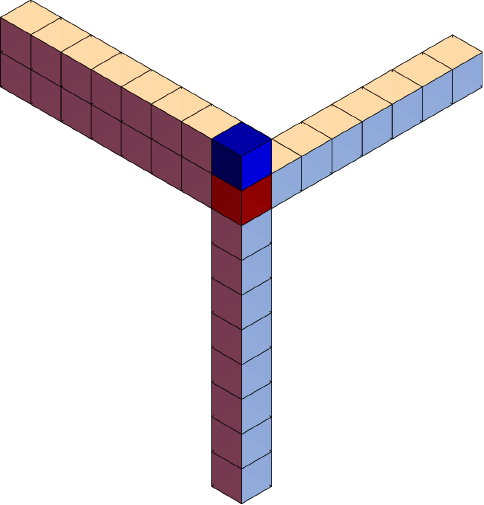}
    \caption{The PT box counting setup for $\lambda=\{1\},\mu=\{1,1\}, \nu=\{1\}$.}
    \label{fig:3leg-ex1}
\end{figure}
Let us study the case $\lambda=\yng(1),\mu=\yng(1,1),\nu=\yng(1)$ (see Fig.~\ref{fig:3leg-ex1}). The red position has the coordinate $0$ and the blue position has $\epsilon_{3}$. The blue position belongs to the 2, 3 directions and thus it is labeled by 1.

\paragraph{PT box-counting rules}
\begin{itemize}
    \item \textbf{Level 1}: 
    \bea
    \ket{\epsilon_3},\quad \ket{0^1}
    \eea
    \item \textbf{Level 2}:
    \bea
    \ket{0^1,-\epsilon_1},\quad \ket{\eps_3,\eps_3-\eps_2},\quad \ket{0^0,\eps_3}
    \eea
    \item \textbf{Level 3}:
    \bea
    \ket{\eps_3,\eps_3-\eps_2,\eps_3-2\eps_2},\quad \ket{\epsilon_3,\epsilon_3-\epsilon_2, 0^{0}},\quad
    \ket{\epsilon_3,0^1,-\epsilon_1},\quad \ket{\epsilon_3,0^3,-\epsilon_3},\quad 
    \ket{0^1,-\eps_1,-2\epsilon_1},\quad \ket{0^{-1},\eps_3}
    \eea
\end{itemize}

\paragraph{GR box-counting rules}
\begin{itemize}
\item \textbf{Level 0 $\rightarrow $ Level 1}:
\bea
\ket{\text{vac}}&\rightarrow \ket{0_{H}}+\ket{\epsilon_{3}}.
\eea

\item \textbf{Level 1 $\rightarrow $ Level 2}: 
\bea
\ket{\epsilon_{3}}&\rightarrow \ket{\epsilon_{3},\epsilon_{3}-\epsilon_{2}}+\ket{\epsilon_{3},0_{H}}+\ket{\epsilon_{3},0_{L}}\\
\ket{0_{H}}&\rightarrow \ket{0_{H},-\epsilon_{1}}+\ket{\epsilon_{3},0_{L}}
\eea



\item \textbf{Level 2 $\rightarrow $ Level 3}:
\bea
|\epsilon_3,\epsilon_3-\epsilon_2\rangle &\rightarrow|\epsilon_3,\epsilon_3-\epsilon_2,0_L\rangle+ |\epsilon_3,\epsilon_3-\epsilon_2,0_H\rangle +|\epsilon_3,\epsilon_3-\epsilon_2,\epsilon_3-2\epsilon_2\rangle  \\ 
|\epsilon_3,0_H\rangle &\rightarrow  |\epsilon_3,0_U\rangle + |\epsilon_3,0_L,\epsilon_3-\epsilon_2\rangle + |\epsilon_3,0_H,\epsilon_3-\epsilon_2\rangle + |\epsilon_3,0_H,-\epsilon_1\rangle +|\epsilon_3,0_H,-\epsilon_3\rangle \\ 
|\epsilon_3,0_L\rangle &\rightarrow |\epsilon_3,0_U\rangle + |\epsilon_3,\epsilon_3-\epsilon_2,0_L\rangle  \\ 
 |0_H, -\epsilon_1\rangle&\rightarrow |0_H, -\epsilon_1,\epsilon_3\rangle + |0_H, -\epsilon_1,-2\epsilon_1\rangle
\eea





\end{itemize}

\paragraph{JK prescription and transition rules}
The framing node contribution is
\bea
\mathcal{Z}^{\D6_{\bar{4}}\tbar\D2\tbar\D0}_{\DT;\,\yng(1)\,\yng(1,1)\,\yng(1)}(\fra,\phi_{I})&=\frac{\sh\left(-\fra+\epsilon _2+\phi_I\right) \sh\left(-\fra+\epsilon
   _1-\epsilon _3+\phi_I\right) \sh\left(-\fra+\epsilon _3+\phi_I\right)}{ \sh\left(-\fra-\epsilon _1-\epsilon _2+\phi_I\right)
   \sh\left(-\fra-\epsilon _2-2 \epsilon _3+\phi_I\right) \sh\left(-\fra-\epsilon _1-\epsilon _3+\phi_I\right)}\\
   &\times \frac{\sh\left(\fra+\epsilon _1+\epsilon _2+\epsilon _3-\phi_I\right) \sh\left(\fra+\epsilon _1+\epsilon
   _2+2 \epsilon _3-\phi_I\right) }{\sh\left(\fra-\phi _I\right)
   \sh\left(\fra+\epsilon _3-\phi_I\right)}.
\eea
Let us first list up the poles giving up non-zero residues picked up from the JK-formalism for each level. Choosing the reference vector to be $ \tilde{\eta}_{0}$, the charge vectors are chosen from 
\bea
-e_{I},\quad e_{I}-e_{J}\,\,(I>J),
\eea
where we used the Weyl invariance and picked up the poles in the order $\phi_1,\phi_2,\ldots$.

\begin{itemize}
\item \textbf{Level 1}: For level one, setting the reference vector to be $\tilde{\eta}_{0}$, the charge vector is $\{-e_{1}\}$ and the poles picked up are
\bea
\phi_{1}=\fra,\quad \phi_{1}=\fra+\epsilon_3.
\eea
The transition rule is
\bea
\boxed{\text{vac}\longrightarrow (0)+(\eps_3).}
\eea
\item \textbf{Level 2}: 
After the JK-formalism, one will see that the following poles give non-zero residues
\bea
(\fra,\fra+\eps_3),\quad (\fra+\eps_3,\fra+\eps_3-\eps_2),\quad (\fra,\fra-\eps_1).
\eea

Let us study the flag structure and the transition rule of the PT configuration $(0,\eps_3)$. The hyperplanes giving this pole are
\bea
-\phi_{1}+\fra=0,\quad \fra+\eps_3-\phi_2=0,\quad -\phi_1+\phi_2-\eps_3=0
\eea
whose charge vectors are
\bea
Q_{\ast}=\left\{\vectxy{-1}{0},\,\vectxy{0}{-1},\, \vectxy{-1}{1}\right\}
\eea
and thus it is a degenerate pole. Let us denote the three charge vectors as $Q_{1},Q_{2},Q_{3}$, respectively. We have three choices of linear independent charge vectors:
\bea
\left\{\vectxy{-1}{0},\,\vectxy{0}{-1}\right\},\quad \left\{\vectxy{-1}{0},\, \vectxy{-1}{1}\right\},\quad\left\{\vectxy{0}{-1},\, \vectxy{-1}{1}\right\}.
\eea
Since the level one configuration is $(0)$ or $(\eps_3)$, we can focus on the following bases
\bea
\mathcal{B}(F^{(1)},Q_{\ast})=\{Q_{1},Q_{2}\},\quad  \mathcal{B}(F^{(2)},Q_{\ast})=\{Q_{1},Q_{3}\},\\
\mathcal{B}(F^{(3)},Q_{\ast})=\{Q_{2},Q_{1}\},\quad \mathcal{B}(F^{(4)},Q_{\ast})=\{Q_{2},Q_{3}\}.
\eea
Note that the flags $F^{(1)},F^{(2)}$ are the same and $F^{(3)},F^{(4)}$ are the same. Thus, we can choose one of each and focus on $F^{(1)},F^{(3)}$. The corresponding $\kappa$ matrices are
\bea
\kappa_{F^{(1)}}=\{Q_{1},Q_{1}+Q_{2}+Q_{3}\}=\{(-1,0),(-2,0)\},\quad \kappa_{F^{(3)}}=\{(0,-1),(-2,0)\}.
\eea
Only the cone of the second case $\kappa_{F^{(3)}}$, the reference vector $\tilde{\eta}_{0}=(-1,-1)$ is included. Thus, the iterative residue is taken as
\bea
\underset{\phi_{1}=\fra}{\Res}\,\,
\underset{\phi_{2}=\fra+\eps_{3}}{\Res}.
\eea
To obtain the PT configuration $(\eps_{3},0)$, the following transition
\bea
(\eps_{3})\longrightarrow (0,\eps_{3})
\eea
is allowed and we can add a box at the origin after adding the box at $\eps_3$. On the other hand, the following transition is not allowed
\bea
(0)\,\,\adjustbox{valign=c}{\tikz[baseline]{
  \draw[->] (0,0) -- (0.6,0);
  \node at (0.3,0) {$\times$};
}}\,\,(0,\eps_{3}). 
\eea

Due to this flag construction, we cannot place the $0$ first to obtain the PT configuration $(0,\eps_3)$. Note also that the pole $\phi_{1}=\fra$ will be a second order pole and thus derivatives of the integrand appears.

A similar analysis can be done for the other poles. Such poles are non-degenerate poles and the residue is easily taken. The transition rule is also easy to determined and it is summarized as
\begin{empheq}[box=\fbox]{align}
\begin{split}
(0)&\longrightarrow (0,-\eps_1)\\
(\eps_3)&\longrightarrow (\eps_3,\eps_3-\eps_2)+(\eps_3,0)
\end{split}
\end{empheq}

\item \textbf{Level 3}: The third level is much more complicated because of the existence of the second order poles. The residue of a second order pole gives the derivative of the integrand which introduces much more terms. 

For the configuration $(\eps_3,\eps_3-\eps_2)$, the poles relevant are
\bea
\frac{1}{\sh(\fra-\phi_3)^{2}\sh(\fra-2\eps_2+\eps_3-\phi_3)}
\eea
and we have a second order pole at the origin and a single pole at $\fra+\eps_3-2\eps_2$. The transition rule is given as
\bea
\boxed{(\eps_3,\eps_3-\eps_2)\longrightarrow (\eps_3,\eps_3-\eps_2,0)+(\eps_3,\eps_3-\eps_2,\eps_3-2\eps_2)}
\eea
For the configuration $(0,-\eps_1)$, the relevant poles are
\bea
\frac{1}{\sh(\fra+\eps_3-\phi_3)\sh(\fra-2\eps_1-\phi_3)}
\eea
and no second order poles appear. The transition rule is simply
\bea
\boxed{(0,-\eps_1)\longrightarrow (0,-\eps_1,\eps_3)+(0,-\eps_1,-2\eps_1)}
\eea
For $(\eps_3,0)$, after explicit computation, one will see that the poles coming from $\phi_{3}=\fra-\eps_{1,3}$ and $\phi_3=\fra+\eps_3-\eps_2$ are all single order poles. Moreover, due to the derivative, the pole at $\phi_3=\fra$ will not vanish as discussed in section~\ref{sec:PTthreelegs}. We thus have
\bea
\boxed{(\eps_3,0)\longrightarrow (\eps_3,0,0)+(\eps_3,0,\eps_3-\eps_2)+(\eps_3,0,-\eps_1)+(\eps_3,0,-\eps_3)}
\eea

\end{itemize}

Taking the unrefined limit gives the contribution coming from $(\eps_3,0,\eps_3-\eps_2)$ to be weight two.

\paragraph{JK-residues}Let us explicitly list down the residues evaluated at the corresponding poles up to level two. Note again that the residues listed here includes the Weyl group factor $k!$ in the denominator. 

For level one, we have
\bea
(0), & \quad -\frac{\sh\left(\epsilon _3-\epsilon _1\right) \sh\left(\epsilon _2+\epsilon
   _3\right) \sh\left(\epsilon _1+\epsilon _2+2 \epsilon _3\right)}{\sh\left(\epsilon
   _1\right) \sh\left(\epsilon _3\right) \sh\left(\epsilon _2+2 \epsilon _3\right)}, \\
(\epsilon _3), &\quad  \frac{\sh\left(\epsilon _1+\epsilon _2\right)^2
   \sh\left(2 \epsilon _3\right) \sh\left(\epsilon _1+\epsilon _3\right)
   \sh\left(\epsilon _2+\epsilon _3\right)}{\sh\left(\epsilon _1\right)
   \sh\left(\epsilon _2\right) \sh\left(\epsilon _3\right)^2 \sh\left(-\epsilon
   _1-\epsilon _2+\epsilon _3\right)} .
   \eea
For level two, we have
\bea
(0,-\epsilon_{1}),&\quad \frac{\sh\left(\epsilon _3-\epsilon _1\right) \sh\left(\epsilon _2+\epsilon _3\right)
   \sh\left(-\epsilon _1+\epsilon _2+\epsilon _3\right) \sh\left(2 \epsilon _1+\epsilon
   _2+2 \epsilon _3\right)}{2 \sh\left(\epsilon _1\right) \sh\left(2 \epsilon _1\right)
   \sh\left(\epsilon _1+\epsilon _3\right) \sh\left(\epsilon _2+2 \epsilon _3\right)},\\
(\epsilon_{3},\epsilon_{3}-\epsilon_{2}),&\quad\frac{\sh\left(\epsilon _1+\epsilon _2\right) \sh\left(\epsilon _1+2 \epsilon
   _2\right)^2 \sh\left(2 \epsilon _3\right) \sh\left(\epsilon _1+\epsilon _3\right)
   \sh\left(\epsilon _1-\epsilon _2+\epsilon _3\right) \sh\left(\epsilon _2+\epsilon
   _3\right) \sh\left(2 \epsilon _3-\epsilon _2\right)}{2 \sh\left(\epsilon _1\right)
   \sh\left(\epsilon _2\right) \sh\left(2 \epsilon _2\right) \sh\left(\epsilon
   _3\right) \sh\left(-\epsilon _1-2 \epsilon _2+\epsilon _3\right) \sh\left(\epsilon
   _3-\epsilon _2\right)^2 \sh\left(-\epsilon _1-\epsilon _2+\epsilon _3\right)},\\
(0,\epsilon_{3}),&\qquad \frac{\ch\left(\epsilon _1+2 \epsilon _3\right) \sh\left(\epsilon _1+\epsilon _2\right)
   \sh\left(\epsilon _2+\epsilon _3\right) \sh\left(\epsilon _1+\epsilon _2+\epsilon
   _3\right)}{4 \sh\left(\epsilon _1\right) \sh\left(\epsilon _3\right)
   \sh\left(\epsilon _3-\epsilon _2\right)}\\
   &\quad-\frac{\ch\left(\epsilon _2\right)
   \sh\left(\epsilon _1+\epsilon _2\right) \sh\left(\epsilon _2+\epsilon _3\right)
   \sh\left(\epsilon _1+2 \epsilon _3\right) \sh\left(\epsilon _1+\epsilon _2+\epsilon
   _3\right)}{2 \sh\left(\epsilon _1\right) \sh\left(\epsilon _2\right)
   \sh\left(\epsilon _3\right) \sh\left(\epsilon _3-\epsilon
   _2\right)}\\
   &\quad -\frac{\ch\left(\epsilon _1\right) \sh\left(\epsilon _1+\epsilon _2\right)
   \sh\left(\epsilon _2+\epsilon _3\right) \sh\left(\epsilon _1+2 \epsilon _3\right)
   \sh\left(\epsilon _1+\epsilon _2+\epsilon _3\right)}{4 \sh\left(\epsilon _1\right)^2
   \sh\left(\epsilon _3\right) \sh\left(\epsilon _3-\epsilon
   _2\right)}\\
   &\quad-\frac{\ch\left(2 \epsilon _3\right) \sh\left(\epsilon _1+\epsilon _2\right)
   \sh\left(\epsilon _2+\epsilon _3\right) \sh\left(\epsilon _1+2 \epsilon _3\right)
   \sh\left(\epsilon _1+\epsilon _2+\epsilon _3\right)}{4 \sh\left(\epsilon _1\right)
   \sh\left(\epsilon _3\right) \sh\left(2 \epsilon _3\right) \sh\left(\epsilon
   _3-\epsilon _2\right)}\\
   &\quad-\frac{\ch\left(\epsilon _1+\epsilon _3\right) \sh\left(\epsilon
   _1+\epsilon _2\right) \sh\left(\epsilon _2+\epsilon _3\right) \sh\left(\epsilon _1+2
   \epsilon _3\right) \sh\left(\epsilon _1+\epsilon _2+\epsilon _3\right)}{2
   \sh\left(\epsilon _1\right) \sh\left(\epsilon _3\right) \sh\left(\epsilon
   _1+\epsilon _3\right) \sh\left(\epsilon _3-\epsilon
   _2\right)}\\
   &\quad+\frac{\ch\left(-\epsilon _1-\epsilon _2+\epsilon _3\right)
   \sh\left(\epsilon _1+\epsilon _2\right) \sh\left(\epsilon _2+\epsilon _3\right)
   \sh\left(\epsilon _1+2 \epsilon _3\right) \sh\left(\epsilon _1+\epsilon _2+\epsilon
   _3\right)}{4 \sh\left(\epsilon _1\right) \sh\left(\epsilon _3\right)
   \sh\left(\epsilon _3-\epsilon _2\right) \sh\left(-\epsilon _1-\epsilon _2+\epsilon
   _3\right)}\\
   &\quad-\frac{\ch\left(\epsilon _3-\epsilon _2\right) \sh\left(\epsilon _1+\epsilon
   _2\right) \sh\left(\epsilon _2+\epsilon _3\right) \sh\left(\epsilon _1+2 \epsilon
   _3\right) \sh\left(\epsilon _1+\epsilon _2+\epsilon _3\right)}{4 \sh\left(\epsilon
   _1\right) \sh\left(\epsilon _3\right) \sh\left(\epsilon _3-\epsilon
   _2\right)^2}\\
   &\quad-\frac{\ch\left(\epsilon _2+\epsilon _3\right) \sh\left(\epsilon
   _1+\epsilon _2\right) \sh\left(\epsilon _1+2 \epsilon _3\right) \sh\left(\epsilon
   _1+\epsilon _2+\epsilon _3\right)}{4 \sh\left(\epsilon _1\right) \sh\left(\epsilon
   _3\right) \sh\left(\epsilon _3-\epsilon _2\right)}\\
   &\quad+\frac{\ch\left(\epsilon _1+\epsilon
   _2+\epsilon _3\right) \sh\left(\epsilon _1+\epsilon _2\right) \sh\left(\epsilon
   _2+\epsilon _3\right) \sh\left(\epsilon _1+2 \epsilon _3\right)}{2 \sh\left(\epsilon
   _1\right) \sh\left(\epsilon _3\right) \sh\left(\epsilon _3-\epsilon _2\right)}.
\eea

Let us denote the residues of the integrand without the Weyl group factor at the poles $\phi_{\ast}$ as $\mathcal{Z}_{\phi_{\ast}}$. Taking the unrefined limit $q_{4}\rightarrow 1$ gives the following weights. Each level matches with the unrefined topological vertex.
\begin{itemize}
    \item \textbf{Level 1}: 
    \bea
    \mathcal{Z}_{(0)}\rightarrow 1,\quad \mathcal{Z}_{(\eps_3)}\rightarrow 1
    \eea
    \item \textbf{Level 2}:
    \bea
    \mathcal{Z}_{(0,-\eps_1)}\rightarrow 1,\quad \mathcal{Z}_{(\eps_3,\eps_3-\eps_2)}\rightarrow 1,\quad \mathcal{Z}_{(0,\eps_3)}\rightarrow 2.
    \eea
    \item \textbf{Level 3}:
    \bea
    \mathcal{Z}_{(0,-\eps_1,\eps_3)}\rightarrow 1,\quad \mathcal{Z}_{(0,-\eps_1,-2\eps_1)}\rightarrow 1,\quad \mathcal{Z}_{(\eps_3,0,0)}\rightarrow 2,\\
    \mathcal{Z}_{(\eps_3,0,\eps_3-\eps_2)}\rightarrow 2,\quad \mathcal{Z}_{(\eps_3,0,-\eps_1)}\rightarrow 1,\quad \mathcal{Z}_{(\eps_3,0,-\eps_3)}\rightarrow 1
    \eea
    The contribution from $(\eps_3,0,0)$ is only one after incorporating the Weyl group factor properly.
\end{itemize}

\paragraph{Unrefined vertex}
\bea
\wtC_{{\Yboxdim{4pt}\yng(1),\yng(2),\yng(1)}}(q)=\frac{1-q+q^{3}-q^{5}+q^6}{(1-q)^{3}(1-q^{2})}=1+2q^{2}+4q^2+7q^3+\cdots
\eea

\paragraph{Refined vertex}
\bea
\wtC_{{\Yboxdim{4pt}\yng(1),\yng(2),\yng(1)}}(t,q)&=-\frac{q^4 t-q^4-q^3 t^3-q^3 t^2+2 q^3 t+q^2 t^3-q^2 t^2-2 q^2 t+2 q^2+q t^2-2 q
   t+t-1}{(q-1)^2 (q+1) (t-1)^2}\\
   &=1+q+t+3qt+2q^2 t+t^2+4qt^2+t^3+\cdots
\eea

\paragraph{Macdonald vertex}
\bea\Yboxdim{4pt}
\wtM_{\yng(1)\,\yng(2)\,\yng(1)}(x,y;q,t)&=\frac{(tx;q)_{\infty}}{(x;q)_{\infty}}\frac{\#}{(1-x)(1-y)(1-y^2)(1-qt)}
\eea
where
\bea
\#&=-qtx^3y^3 - q^2x^2y^4 + qtx^2y^4 + qx^3y^4 - tx^3y^4 + qtx^3y^2 - qtx^2y^3 - qx^3y^3 + tx^3y^3 + 2q^2xy^4 \\
&- qtxy^4 - qx^2y^4 + tx^2y^4 + q^2x^2y^2 - 2qtx^2y^2 - q^2xy^3 + 3qtxy^3 + 2qx^2y^3 - 2tx^2y^3 + x^3y^3\\
&- q^2y^4 + qtx^2y - 2q^2xy^2 + qx^2y^2 - tx^2y^2 - x^3y^2 + q^2y^3 - qty^3 - qxy^3 + txy^3 \\
&+ x^2y^3 - xy^4 + q^2xy - 3qtxy - qx^2y + tx^2y + q^2y^2 + qty^2 + x^2y^2 - 2xy^3 \\
&+ y^4 + qtx - q^2y + qty + qxy - txy - x^2y + 2xy^2 - qt + 2xy - 2y^2 - x + 1
\eea
In the series expansion, we have
\bea
\,&\wtM_{\yng(1)\,\yng(2)\,\yng(1)}(x,y;q,t)=1+\left(\frac{1-t}{1-q}x+\frac{1-q^{2}}{1-qt}y\right)+\left(\frac{(1-t)(1-qt)}{(1-q)(1-q^2)}x^2+\frac{3-2t+2q-3qt}{1-qt}xy\right)\\
&+\left(\frac{(1-t)(1-qt)(1-q^{2}t)}{(1-q)(1-q^{2})(1-q^{3})}x^{3}+\frac{4-4t+t^2-5qt+3qt^2+q^2t}{(1-q)(1-qt)}x^2y+\frac{(1+q)(2-q-t)}{1-qt}xy^{2}\right)+\cdots
\eea


\subsubsection[Example 2]{ $(\lambda,\mu,\nu)=(\yng(1),\yng(1,1,1),\yng(1))$ (Fig.~\ref{fig:3leg-ex2})}
\begin{figure}[h]
    \centering
    \includegraphics[width=5cm]{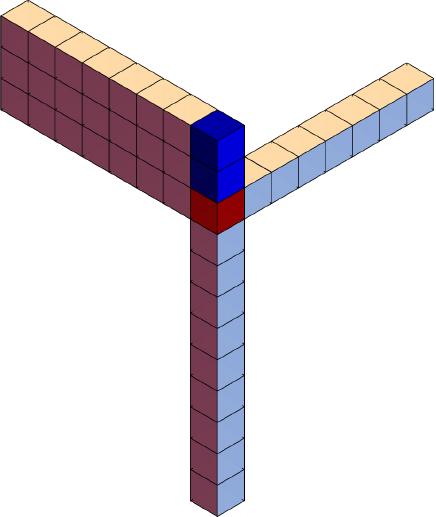}
    \caption{The PT box counting setup for $\lambda=\{1\},\mu=\{1,1,1\}, \nu=\{1\}$.}
    \label{fig:3leg-ex2}
\end{figure}
We have one red position at $0$ and two blue positions at $\epsilon_{3},2\epsilon_{3}$. The two blue positions are intersections of the 2, 3 cylinders and so they are labeled by 1.

\paragraph{PT box-counting rules}
\begin{itemize}
    \item \textbf{Level 1}
    \bea
    \ket{0^1},\quad \ket{2\eps_3}
    \eea
    \item \textbf{Level 2}
    \bea
    \ket{0^1,2\eps_3},\quad \ket{2\eps_3,2\eps_3-\eps_1},\quad \ket{\eps_3,2\eps_3},\quad \ket{0^1,-\eps_1}
    \eea
    \item \textbf{Level 3}
    \bea
    \ket{\eps_3,2\eps_3,2\eps_3-\eps_2},\quad \ket{2\eps_3-2\eps_2, 2\eps_3,2\eps_3-\eps_2},\quad \ket{0^1,2\eps_3,-\eps_1},\\
    \ket{0^0,\eps_3,2\eps_3},\quad\ket{0^1,-\eps_1,-2\eps_1},\quad \ket{0^1,2\eps_3-\eps_2,2\eps_3}
    \eea
    
\end{itemize}

\paragraph{GR box-counting rules}
\begin{itemize}
\item \textbf{Level 0 $\rightarrow $ Level 1}:
\bea
\ket{\text{vac}}&\rightarrow \ket{0_{H}}+\ket{2\epsilon_{3}}
\eea

\item \textbf{Level 1 $\rightarrow $ Level 2}: 
\bea
\ket{2\epsilon_{3}}&\rightarrow \ket{2\epsilon_{3}, 2\epsilon_{3}-\epsilon_{2}}+\ket{2\epsilon_{3},\epsilon_{3}}+\ket{2\epsilon_{3},0_{H}}\\
\ket{0_H}&\rightarrow \ket{0_{H},2\epsilon_{3}}+\ket{0_{H},-\epsilon_{1}}
\eea


\item \textbf{Level 2 $\rightarrow $ Level 3}: 
\bea
|2\epsilon_3,2\epsilon_3-\epsilon_2\rangle &\rightarrow |2\epsilon_3,2\epsilon_3-\epsilon_2,\epsilon_3\rangle + |2\epsilon_3,2\epsilon_3-\epsilon_2,0_H\rangle + |2\epsilon_3,2\epsilon_3-\epsilon_2,2\epsilon_3-2\epsilon_2\rangle,\\ 
    |2\epsilon_3,\epsilon_3\rangle &\rightarrow  |2\epsilon_3,\epsilon_3,2\epsilon_3-\epsilon_2\rangle +|2\epsilon_3,\epsilon_3,0_H\rangle +|2\epsilon_3,\epsilon_3,0_L\rangle,  \\ 
|2\epsilon_3,0_H\rangle&\rightarrow|2\epsilon_3,0_H,2\epsilon_3-\epsilon_2\rangle +|2\epsilon_3,\epsilon_3,0_L\rangle +|2\epsilon_3,0_H,-\epsilon_1\rangle,  \\ 
 |0_H, -\epsilon_1\rangle &\rightarrow |0_H, -\epsilon_1,2\epsilon_3\rangle+|0_H, -\epsilon_1,-2\epsilon_1\rangle.
\eea




\end{itemize}

\paragraph{JK prescription and transition rules}
The framing node contribution is
\bea
\mathcal{Z}^{\D6_{\bar{4}}\tbar\D2\tbar\D0}_{\DT;\,\yng(1)\,\yng(1,1,1)\,\yng(1)}(\fra,\phi_{I})&=\frac{ \sh\left(-\fra+\epsilon _2+\phi _I\right) \sh\left(-\fra+\epsilon _1-2
   \epsilon _3+\phi _I\right) \sh\left(-\fra+\epsilon _3+\phi _I\right)}{ \sh\left(-\fra-\epsilon _1-\epsilon _2+\phi _I\right)
   \sh\left(-\fra-\epsilon _2-3 \epsilon _3+\phi _I\right) \sh\left(-\fra-\epsilon _1-\epsilon _3+\phi
   _I\right)}\\
   &\times\frac{\sh\left(\fra+\epsilon _1+\epsilon _2+\epsilon _3-\phi _I\right) \sh\left(\fra+\epsilon _1+\epsilon
   _2+3 \epsilon _3-\phi _I\right)}{\sh\left(\fra-\phi _I\right)
   \sh\left(\fra+2 \epsilon _3-\phi _I\right)}.
\eea

\begin{itemize}
\item \textbf{Level 1}: For level one, setting the reference vector $\tilde{\eta}_{0}$, the poles picked up are
\bea
\phi_{1}=\fra,\quad \fra+2\eps_{3}
\eea
and thus the transition rule is 
\bea
\boxed{\text{vac}\longrightarrow (0)+(2\eps_3)}
\eea
\item \textbf{Level two}: For level two, the configurations giving non-zero JK-residues are
\bea
(2\eps_3,2\eps_3-\eps_2),\quad (2\eps_3,\eps_3),\quad (2\eps_3,0),\quad (0,-\eps_1).
\eea

For the first two configurations, picking $\phi_{1}=\fra+ 2\eps_3$, the second pole with non-zero residue comes from 
\bea
-\phi_2+\phi_1-\eps_{3,2}=0
\eea
whose corresponding charge vector is $(1,-1)$. Such poles are non-degenerate and the transition rule is just $(2\eps_3)\rightarrow (2\eps_3,2\eps_3-\eps_2)+(2\eps_3,\eps_3) $. A similar discussion is applicable to $(0)\rightarrow (0,-\eps_1)$.

For the pole configuration $(0,2\eps_3)$ coming from the hyperplanes
\bea
\fra-\phi_{1}=0,\quad \fra+2\eps_3-\phi_2=0,
\eea
the charge vectors are $(-1,0),(0,-1)$. The possible bases are $\{(-1,0),(0,-1)\}$ and $\{(0,-1),(-1,0)\}$, and thus starting from either $(0)$ or $(2\eps_3)$, we can obtain the PT configuration $(0,2\eps_3)$.

Summarizing, the transition rules are
\begin{empheq}[box=\fbox]{align}
\begin{split}
(2\eps_3)&\longrightarrow (2\eps_3,2\eps_3-\eps_2)+(2\eps_3,\eps_3)+(2\eps_3,0)\\
(0)&\longrightarrow (0,2\eps_3)+(0,-\eps_1)
\end{split}
\end{empheq}

\item \textbf{Level three}: Let us next consider the level three. The poles giving non-zero JK-residues are
\bea
\,& (0 , -2 \epsilon _1 , -\epsilon _1 ),\quad
 (0 , -\epsilon _1 , 2 \epsilon _3 ),\quad
 (0 , \epsilon _3 , 2 \epsilon _3 ),\\
\,& (0 , 2 \epsilon _3 , 2 \epsilon _3-\epsilon _2),\quad 
 (\epsilon _3 , 2 \epsilon _3 , 2 \epsilon _3-\epsilon _2 ),\quad
 (2 \epsilon _3 , 2 \epsilon _3-2 \epsilon _2 , 2 \epsilon _3-\epsilon _2 ).
\eea
A similar analysis shows that the transition rules are determined as
\begin{empheq}[box=\fbox]{align}
\begin{split}
(2\eps_3,2\eps_3-\eps_2)&\longrightarrow (2\eps_3,2\eps_3-\eps_2,\eps_3)+(2\eps_3,2\eps_3-\eps_2,0)+(2\eps_3,2\eps_3-\eps_2,2\eps_3-2\eps_2) \\
(2\eps_3,\eps_3)&\longrightarrow (2\eps_3,\eps_3,2\eps_3-\eps_2)+(2\eps_3,\eps_3,0) \\
(2\eps_3,0)&\longrightarrow (2\eps_3,0,2\eps_3-\eps_2)+(2\eps_3,0,-\eps_1) \\
(0,-\eps_1)&\longrightarrow (0,-\eps_1,2\eps_3)+(0,-\eps_1,-2\eps_1)
\end{split}
\end{empheq}

For the configuration $(2\eps_3,\eps_3,0)$, the related hyperplanes are
\bea
\fra+2\eps_3-\phi_1=0,\quad -\phi_2+\phi_1-\eps_3=0,\quad \fra-\phi_3=0,\quad -\phi_3+\phi_2-\eps_3=0
\eea
whose corresponding charge vectors are 
\bea
Q_{\ast}=\{(-1,0,0),(1,-1,0),(0,0,-1),(0,1,-1)\}
\eea
and thus it is a degenerate pole. Let us consider the following bases
\bea
\mathcal{B}(F^{(1)},Q_{\ast})=\{(-1,0,0),(1,-1,0),(0,0,-1)\},\quad \mathcal{B}(F^{(2)},Q_{\ast})=\{(-1,0,0),(0,0,-1),(1,-1,0)\}
\eea
whose corresponding $\kappa$ vectors are
\bea
\kappa_{F^{(1)}}=\{(-1,0,0),(0,-1,0),(0,0,-2)\},\quad 
\kappa_{F^{(2)}}=\{(-1,0,0),(0,0,-1),(0,0,-2)\} .
\eea
One can choose other choice of bases, but they will be the same with the above cases. One will see that the reference vector $\tilde{\eta}_{0}$ is only inside the cone obtained from $\kappa_{F^{(1)}}$. The iterated residue is then taken in the way
\bea
\underset{\phi_3=\fra}{\Res}\underset{\phi_2=\fra+\eps_3}{\Res}\underset{\phi_1=\fra+2\eps_3}{\Res}
\eea
but \textit{not}
\bea
\underset{\phi_2=\fra+\eps_3}{\Res}\underset{\phi_3=\fra}{\Res}\underset{\phi_1=\fra+2\eps_3}{\Res}.
\eea
Thus, we do not have the transition between $(2\eps_3,0)\rightarrow (2\eps_3,0,\eps_3)$. Note also that when evaluating this pole, the pole will be a second order pole.

\end{itemize}

\paragraph{JK-residues}
For level one, we have
\bea
 (0),& \quad -\frac{\sh\left(\epsilon _2+\epsilon _3\right) \sh\left(2 \epsilon _3-\epsilon
   _1\right) \sh\left(\epsilon _1+\epsilon _2+3 \epsilon _3\right)}{\sh\left(\epsilon
   _1\right) \sh\left(2 \epsilon _3\right) \sh\left(\epsilon _2+3 \epsilon _3\right)}, \\
 (2 \epsilon _3), &\quad \frac{\sh\left(\epsilon _1+\epsilon _2\right)
   \sh\left(3 \epsilon _3\right) \sh\left(\epsilon _1+\epsilon _3\right)
   \sh\left(-\epsilon _1-\epsilon _2+\epsilon _3\right) \sh\left(\epsilon _2+2 \epsilon
   _3\right)}{\sh\left(\epsilon _2\right) \sh\left(\epsilon _3\right) \sh\left(2
   \epsilon _3\right) \sh\left(\epsilon _3-\epsilon _1\right) \sh\left(-\epsilon
   _1-\epsilon _2+2 \epsilon _3\right)} .
\eea
For level two, we have
\bea
 (0, -\epsilon _1), &\qquad \frac{\sh\left(\epsilon _2+\epsilon _3\right) \sh\left(-\epsilon _1+\epsilon _2+\epsilon
   _3\right) \sh\left(2 \epsilon _3-\epsilon _1\right) \sh\left(2 \epsilon _1+\epsilon
   _2+3 \epsilon _3\right)}{2 \sh\left(\epsilon _1\right) \sh\left(2 \epsilon _1\right)
   \sh\left(\epsilon _1+2 \epsilon _3\right) \sh\left(\epsilon _2+3 \epsilon _3\right)},\\
 (0 ,2 \epsilon _3),&\qquad-\frac{\sh\left(\epsilon _1+\epsilon _2\right) \sh\left(\epsilon _1+\epsilon _3\right)
   \sh\left(\epsilon _3-\epsilon _2\right) \sh\left(\epsilon _2+\epsilon _3\right)
   \sh\left(\epsilon _1+\epsilon _2+2 \epsilon _3\right) \sh\left(\epsilon _1+3 \epsilon
   _3\right)}{2 \sh\left(\epsilon _1\right) \sh\left(\epsilon _2\right)
   \sh\left(\epsilon _3\right)^2 \sh\left(\epsilon _1+2 \epsilon _3\right)
   \sh\left(2 \epsilon _3-\epsilon _2\right)}, \\
 (\epsilon _3 , 2 \epsilon _3 ),&\qquad \frac{\sh\left(\epsilon _1+\epsilon _2\right){}^2 \sh\left(3
   \epsilon _3\right) \sh\left(\epsilon _1+\epsilon _3\right)
   \sh\left(-\epsilon _1-\epsilon _2+\epsilon _3\right)
   \sh\left(\epsilon _2+\epsilon _3\right) \sh\left(\epsilon _1+2
   \epsilon _3\right) \sh\left(\epsilon _2+2 \epsilon _3\right)}{2
   \sh\left(\epsilon _1\right) \sh\left(\epsilon _2\right)
   \sh\left(\epsilon _3\right){}^2 \sh\left(2 \epsilon _3\right)
   \sh\left(\epsilon _3-\epsilon _1\right) \sh\left(\epsilon
   _3-\epsilon _2\right) \sh\left(-\epsilon _1-\epsilon _2+2 \epsilon
   _3\right)},\\
 (2 \epsilon _3 , 2 \epsilon _3-\epsilon _2),&\qquad\frac{\sh\left(\epsilon _1+\epsilon _2\right) \sh\left(\epsilon _1+2 \epsilon _2\right)
   \sh\left(3 \epsilon _3\right) \sh\left(\epsilon _1+\epsilon _3\right)
   \sh\left(-\epsilon _1-2 \epsilon _2+\epsilon _3\right) }{2 \sh\left(\epsilon _2\right) \sh\left(2 \epsilon _2\right)
   \sh\left(\epsilon _3\right) \sh\left(\epsilon _3-\epsilon _1\right)
   }  \\
   &\qquad \times \frac{\sh\left(\epsilon _1-\epsilon
   _2+\epsilon _3\right) \sh\left(\epsilon _2+2 \epsilon _3\right) \sh\left(3 \epsilon
   _3-\epsilon _2\right)}{\sh\left(\epsilon _3-\epsilon _2\right) \sh\left(-\epsilon _1-2 \epsilon _2+2 \epsilon
   _3\right) \sh\left(2 \epsilon _3-\epsilon _2\right) \sh\left(-\epsilon _1-\epsilon
   _2+2 \epsilon _3\right)}.
\eea

The weight factor of each PT configurations are obtained by taking the unrefined limit of the residue. One will see that each configuration except the $(2\eps_3,0,\eps_3)$ contributes as $1$, while the configuration $(2\eps_3,0,\eps_3)$ contributes as $2$.
\begin{itemize}
    \item \textbf{Level 1}: 
    \bea
    \mathcal{Z}_{(0)}\rightarrow 1,\quad \mathcal{Z}_{(2\eps_3)}\rightarrow 1
    \eea
    \item \textbf{Level 2}:
    \bea
    \mathcal{Z}_{(2\eps_3,2\eps_3-\eps_3)}\rightarrow 1,\quad \mathcal{Z}_{(2\eps_3,\eps_3)}\rightarrow 1,\quad \mathcal{Z}_{(2\eps_3,0)}\rightarrow 1,\quad \mathcal{Z}_{(0,-\eps_1)}\rightarrow 1.
    \eea
    \item \textbf{Level 3}:
    \bea
   \,& \mathcal{Z}_{(0 , -2 \epsilon _1 , -\epsilon _1 )}\rightarrow 1,\quad
 \mathcal{Z}_{(0 , -\epsilon _1 , 2 \epsilon _3 )}\rightarrow 1,\quad
 \mathcal{Z}_{(0 , \epsilon _3 , 2 \epsilon _3 )}\rightarrow 2,\\
\,& \mathcal{Z}_{(0 , 2 \epsilon _3 , 2 \epsilon _3-\epsilon _2)}\rightarrow 1,\quad 
 \mathcal{Z}_{(\epsilon _3 , 2 \epsilon _3 , 2 \epsilon _3-\epsilon _2 )}\rightarrow 1,\quad
 \mathcal{Z}_{(2 \epsilon _3 , 2 \epsilon _3-2 \epsilon _2 , 2 \epsilon _3-\epsilon _2 )}\rightarrow 1.
    \eea
\end{itemize}

\paragraph{Unrefined vertex}
\bea
\wtC_{\yng(1),\,\yng(3),\,\yng(1)}(q)=\frac{1-q+q^4-q^7+q^8}{(1-q)^3(1-q^2)(1-q^3)}=1+2q+4q^2+7q^3+\cdots
\eea

\paragraph{Refined vertex}
\bea
\wtC_{\yng(1),\,\yng(3),\,\yng(1)}(t,q)=1+q+t+3qt+t^2+2q^2t+4qt^2+t^3+\cdots
\eea

\paragraph{Macdonald refined vertex}

\bea
\wtM_{\yng(1),\,\yng(3),\,\yng(1)}(x,y;q,t)&=1+\left(\frac{1-t}{1-q}x+\frac{1-q^{3}}{1-q^{2}t}y\right)+\left(\frac{(1-t)(1-qt)}{(1-q)(1-q^2)}x^2+\frac{3-2t+2q-2qt+2q^2-3q^2t}{1-q^2t}xy\right)+\cdots
\eea

\subsubsection[Example 3]{$(\lambda,\mu,\nu)=(\yng(1),\yng(2,1),\yng(1))$ (Fig.~\ref{fig:3leg-ex3})}
\begin{figure}[h]
    \centering
    \includegraphics[width=5cm]{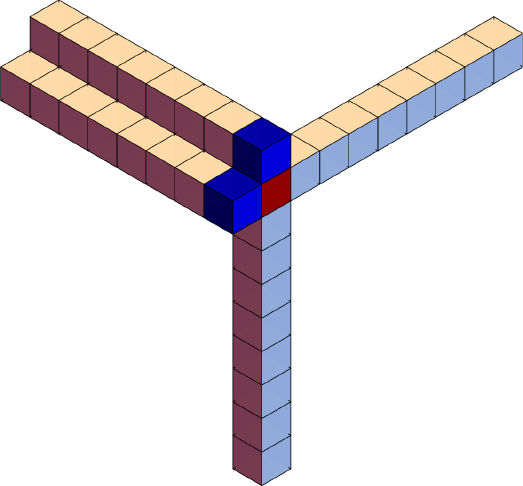}
    \caption{The PT box counting setup for $\lambda=\{1\},\mu=\{2,1\}, \nu=\{1\}$.}
    \label{fig:3leg-ex3}
\end{figure}
For the case $(\lambda,\mu,\nu)=(\yng(1),\yng(2,1),\yng(1))$ in Fig.~\ref{fig:3leg-ex3}, the red position is at $0$ and the blue positions are at $\epsilon_{1,3}$. The blue position $\epsilon_{1}$ has the label 3 and the position $\epsilon_{3}$ has the label $1$.

\paragraph{PT box-counting rules}
\begin{itemize}
    \item \textbf{Level 1}
    \bea
    \ket{\eps_1},\quad \ket{\eps_3}
    \eea
    \item \textbf{Level 2}
    \bea
    \ket{\eps_3,\eps_3-\eps_2},\quad \ket{\eps_1,\eps_3},\quad \ket{0^3,\eps_3},\quad \ket{\eps_1,\eps_1-\eps_2},\quad \ket{0^1,\eps_1}
    \eea
    \item \textbf{Level 3}
    \bea
    \ket{\eps_1,0^1,-\eps_3},\quad \ket{-\eps_3,0^3,\eps_3},\quad \ket{\eps_1,\eps_1-\eps_2,\eps_1-2\eps_2},\quad \ket{\eps_1,\eps_3,\eps_3-\eps_2},\\
    \ket{0^3,\eps_3,\eps_3-\eps_2},\quad \ket{\eps_3-2\eps_2,\eps_3,\eps_3-\eps_2},\quad \ket{\eps_1,\eps_1-\eps_2,0^1},\quad \ket{\eps_1,\eps_1-\eps_2,\eps_3},\quad \ket{\eps_1,0^0,\eps_3}
    \eea

\end{itemize}

\paragraph{GR box-counting rules}

\begin{itemize}
\item \textbf{Level 0 $\rightarrow $ Level 1}:
\bea
\ket{\text{vac}}\rightarrow \ket{\epsilon_{1}}+\ket{\epsilon_{3}}.
\eea

\item \textbf{Level 1 $\rightarrow $ Level 2}:
\bea
|\epsilon_3\rangle &\rightarrow |\epsilon_3,\epsilon_3-\epsilon_2\rangle+ |\epsilon_3,\epsilon_1\rangle+|\epsilon_3,0_H\rangle\\
|\epsilon_1\rangle &\rightarrow  |\epsilon_1,\epsilon_1-\epsilon_2\rangle+ |\epsilon_3,\epsilon_1\rangle+|\epsilon_1,0_H\rangle
\eea

\item\textbf{Level 2 $\rightarrow $ Level 3}
\bea
 |\epsilon_3,\epsilon_3-\epsilon_2\rangle&\rightarrow  |\epsilon_3,\epsilon_3-\epsilon_2,\epsilon_1\rangle + |\epsilon_3,\epsilon_3-\epsilon_2,0_H\rangle +|\epsilon_3,\epsilon_3-\epsilon_2,\epsilon_3-2\epsilon_2\rangle \\ 
 |\epsilon_1,\epsilon_1-\epsilon_2\rangle &\rightarrow |\epsilon_3,\epsilon_1,\epsilon_1-\epsilon_2\rangle + |\epsilon_1,\epsilon_1-\epsilon_2,0_H\rangle +|\epsilon_1,\epsilon_1-\epsilon_2,\epsilon_1-2\epsilon_2\rangle \\ 
|\epsilon_3,\epsilon_1\rangle &\rightarrow |\epsilon_3,\epsilon_1,\epsilon_3-\epsilon_2\rangle+ |\epsilon_3,\epsilon_1,\epsilon_1-\epsilon_2\rangle+|\epsilon_3,\epsilon_1,0_L\rangle+|\epsilon_3,\epsilon_1,0_H\rangle\\ 
|\epsilon_3,0_H\rangle &\rightarrow|\epsilon_3,\epsilon_1,0_L\rangle  + |\epsilon_3,0_H,\epsilon_3-\epsilon_2\rangle +|\epsilon_3,0_H,-\epsilon_3\rangle \\ 
|\epsilon_1,0_H\rangle &\rightarrow|\epsilon_3,\epsilon_1,0_L\rangle + |\epsilon_1,0_H,\epsilon_1-\epsilon_2\rangle +|\epsilon_1,0_H,-\epsilon_1\rangle\eea



\end{itemize}

\paragraph{JK prescription and transition rules}
Let us move on to the JK-prescription. The framing node contribution is
\bea
\mathcal{Z}^{\D6_{\bar{4}}\tbar\D2\tbar\D0}_{\DT;\,\yng(1)\,\yng(2,1)\,\yng(1)}(\fra,\phi_{I})&=\frac{ \sh\left(-\fra+\epsilon _2+\phi _I\right) \sh\left(-\fra+\epsilon
   _1-\epsilon _3+\phi _I\right) \sh\left(-\fra-\epsilon _1+\epsilon _3+\phi
   _I\right)}{
   \sh\left(-\fra-2 \epsilon _1-\epsilon _2+\phi _I\right) \sh\left(-\fra-\epsilon _2-2 \epsilon _3+\phi
   _I\right) \sh\left(-\fra-\epsilon _1-\epsilon _3+\phi _I\right)}\\
   &\times \frac{\sh\left(\fra+2 \epsilon _1+\epsilon _2+\epsilon _3-\phi _I\right) \sh\left(\fra+\epsilon _1+\epsilon
   _2+2 \epsilon _3-\phi _I\right)}{\sh\left(\fra+\epsilon _1-\phi _I\right) \sh\left(\fra+\epsilon _3-\phi _I\right)}.
\eea
Let us set the reference vector to be $\tilde{\eta}_{0}$.

\begin{itemize}
    \item \textbf{Level 1}: The reference vector picks up the poles 
    \bea
    \fra+\eps_1-\phi_1=0,\quad \fra+\eps_3-\phi_{1}=0
    \eea
    giving the configurations
    \bea
    (\eps_1),\quad (\eps_3).
    \eea
    The transition rule is
    \begin{empheq}[box=\fbox]{align}
\begin{split}
\text{vac}&\longrightarrow (\eps_1)+(\eps_3)
\end{split}
\end{empheq}
        
    \item \textbf{Level 2}: Moving on to the level two, the poles picked up come from the charge vectors $-e_{I},e_{I}-e_{J}\,(I,J=1,2)$. The configurations with non-zero JK-residues are 
    \bea
(\eps_3,\eps_3-\eps_2),\quad (\eps_3,\eps_1),\quad (\eps_3,0),\quad (\eps_1,\eps_1-\eps_2),\quad (\eps_1,0).
    \eea
    The transition rules are
\begin{empheq}[box=\fbox]{align}
\begin{split}
(\eps_3)&\longrightarrow (\eps_3,\eps_3-\eps_2)+(\eps_3,\eps_1)+(\eps_3,0)\\
(\eps_1)&\longrightarrow (\eps_1,\eps_1-\eps_2)+(\eps_1,\eps_3)+(\eps_1,0)
\end{split}
\end{empheq}
Starting from $(\eps_3)$, using the hyperplanes $-\phi_{2}+\phi_1=\eps_{1,2,3,4}$, one can extend the crystal in the negative direction. The poles coming from $-\phi_2+\phi_1=\eps_{1,4}$ cancel with the numerator and such contribution does not appear. Note that during this process, no degenerate poles appear. We also can consider taking the pole $-\phi_{2}+\fra+\eps_1=0$. It is not a degenerate pole and we can have the transition $(\eps_3)\rightarrow (\eps_3,\eps_1)$.

Using the symmetry, a similar discussion holds for the configuration $(\eps_1)$. Note here that the configuration $(\eps_3,\eps_1)$ can be obtained from either $(\eps_3)$ or $(\eps_1)$ because it is not a degenerate pole.

    \item \textbf{Level 3}: For the third level, after classification of poles, the non-zero JK residues come from the following configurations:
    \bea
\,&(\epsilon_3,\epsilon_3-\epsilon_2,\epsilon_1),\quad (\epsilon_3,\epsilon_3-\epsilon_2,0),\quad (\epsilon_3,\epsilon_3-\epsilon_2,\epsilon_3-2\epsilon_2)\\
&(\epsilon_3,\epsilon_1,\epsilon_1-\epsilon_2),\quad (\epsilon_1,\epsilon_1-\epsilon_2,0),\quad (\epsilon_1,\epsilon_1-\epsilon_2,\epsilon_1-2\epsilon_2)\\
 &(\epsilon_3,\epsilon_1,0),\,\quad 
 (\epsilon_3,0,-\epsilon_3) ,\quad (\epsilon_1,0,-\epsilon_1)
    \eea
    The transition rules are summarized as
    \begin{empheq}[box=\fbox]{align}
\begin{split}
(\epsilon_3,\epsilon_3-\epsilon_2)&\longrightarrow  (\epsilon_3,\epsilon_3-\epsilon_2,\epsilon_1) + (\epsilon_3,\epsilon_3-\epsilon_2,0) +(\epsilon_3,\epsilon_3-\epsilon_2,\epsilon_3-2\epsilon_2) \\ 
 (\epsilon_1,\epsilon_1-\epsilon_2) &\longrightarrow (\epsilon_3,\epsilon_1,\epsilon_1-\epsilon_2) + (\epsilon_1,\epsilon_1-\epsilon_2,0) +(\epsilon_1,\epsilon_1-\epsilon_2,\epsilon_1-2\epsilon_2) \\ 
(\epsilon_3,\epsilon_1) &\longrightarrow (\epsilon_3,\epsilon_1,\epsilon_3-\epsilon_2)+ (\epsilon_3,\epsilon_1,\epsilon_1-\epsilon_2)+(\epsilon_3,\epsilon_1,0)\\ 
(\epsilon_3,0) &\longrightarrow  (\epsilon_3,0,\epsilon_3-\epsilon_2) +(\epsilon_3,0,-\epsilon_3) \\ 
(\epsilon_1,0) &\longrightarrow  (\epsilon_1,0,\epsilon_1-\epsilon_2) +(\epsilon_1,0,-\epsilon_1)
\end{split}
\end{empheq}

The configuration $(\eps_3,\eps_1,0)$ is the only nontrivial part. The configuration $(\eps_3,\eps_1)$ is obtained from the hyperplanes
\bea
\fra+\eps_1-\phi_1=0,\quad \fra+\eps_3-\phi_2=0.
\eea
The hyperplanes 
\bea
-\phi_3+\phi_1=\eps_1,\quad -\phi_3+\phi_2=\eps_3
\eea
intersect at $\phi_3=0$ and thus it is a second order degenerate pole. 
\end{itemize}

\paragraph{JK-residues}
For level one, we have
\bea
 (\epsilon _1), &\qquad  -\frac{\sh\left(\epsilon _1+\epsilon _2\right)
   \sh\left(\epsilon _3-2 \epsilon _1\right) \sh\left(\epsilon _1+\epsilon _3\right)
   \sh\left(\epsilon _2+\epsilon _3\right) \sh\left(\epsilon _2+2 \epsilon
   _3\right)}{\sh\left(\epsilon _1\right) \sh\left(\epsilon _2\right)
   \sh\left(\epsilon _3\right) \sh\left(\epsilon _3-\epsilon _1\right)
   \sh\left(-\epsilon _1+\epsilon _2+2 \epsilon _3\right)}, \\
 (\epsilon _3), &\qquad  \frac{\sh\left(\epsilon _1+\epsilon _2\right) \sh\left(2
   \epsilon _1+\epsilon _2\right) \sh\left(\epsilon _1+\epsilon _3\right)
   \sh\left(\epsilon _2+\epsilon _3\right) \sh\left(2 \epsilon _3-\epsilon
   _1\right)}{\sh\left(\epsilon _1\right) \sh\left(\epsilon _2\right)
   \sh\left(\epsilon _3\right) \sh\left(\epsilon _3-\epsilon _1\right) \sh\left(-2
   \epsilon _1-\epsilon _2+\epsilon _3\right)}. 
   \eea
For level two, we have
\bea
(0,\epsilon_{1}),&\qquad \frac{\sh\left(\epsilon _1+\epsilon _2\right) \sh\left(\epsilon _3-2 \epsilon _1\right)
   \sh\left(2 \epsilon _1+\epsilon _3\right) \sh\left(\epsilon _2+\epsilon _3\right)
   \sh\left(-\epsilon _1+\epsilon _2+\epsilon _3\right) \sh\left(\epsilon _1+\epsilon
   _2+2 \epsilon _3\right)}{2 \sh\left(\epsilon _1\right) \sh\left(2 \epsilon _1\right)
   \sh\left(\epsilon _2-\epsilon _1\right) \sh\left(\epsilon _3\right)^2
   \sh\left(-\epsilon _1+\epsilon _2+2 \epsilon _3\right)},\\
(0,\epsilon_{3}),&\qquad\frac{\sh\left(\epsilon _1+\epsilon _2\right) \sh\left(-\epsilon _1-\epsilon _2+\epsilon
   _3\right) \sh\left(\epsilon _2+\epsilon _3\right) \sh\left(2 \epsilon _1+\epsilon
   _2+\epsilon _3\right) \sh\left(2 \epsilon _3-\epsilon _1\right) \sh\left(\epsilon _1+2
   \epsilon _3\right)}{2 \sh\left(\epsilon _1\right)^2 \sh\left(\epsilon _3\right)
   \sh\left(2 \epsilon _3\right) \sh\left(\epsilon _3-\epsilon _2\right)
   \sh\left(-2 \epsilon _1-\epsilon _2+\epsilon _3\right)},\\
(\epsilon_{1},\epsilon_{1}-\epsilon_{2}),&\qquad -\frac{\sh\left(\epsilon _1+\epsilon _2\right) \sh\left(\epsilon _3-2 \epsilon _1\right)
   \sh\left(\epsilon _1+\epsilon _3\right) \sh\left(\epsilon _1-\epsilon _2+\epsilon
   _3\right) }{2 \sh\left(\epsilon _2\right) \sh\left(2 \epsilon _2\right)
   \sh\left(\epsilon _2-\epsilon _1\right) \sh\left(\epsilon _3\right)}\\
   &\qquad \times \frac{\sh\left(-2 \epsilon _1+\epsilon _2+\epsilon _3\right) \sh\left(2 \epsilon
   _2+\epsilon _3\right) \sh\left(\epsilon _2+2 \epsilon _3\right) \sh\left(2 \epsilon
   _2+2 \epsilon _3\right)}{
   \sh\left(\epsilon _3-\epsilon _1\right) \sh\left(-\epsilon _1+\epsilon _2+\epsilon
   _3\right) \sh\left(-\epsilon _1+\epsilon _2+2 \epsilon _3\right) \sh\left(-\epsilon
   _1+2 \epsilon _2+2 \epsilon _3\right)}, \\
(\epsilon_{1},\epsilon_{3}),&\qquad-\frac{\sh\left(2 \epsilon _1\right) \sh\left(\epsilon _1+\epsilon _2\right)^3
   \sh\left(2 \epsilon _3\right) \sh\left(\epsilon _1+\epsilon _3\right)^2
   \sh\left(\epsilon _2+\epsilon _3\right)^3}{2 \sh\left(\epsilon _1\right)^3
   \sh\left(\epsilon _2\right)^2 \sh\left(\epsilon _3\right)^3
   \sh\left(-\epsilon _1-\epsilon _2+\epsilon _3\right) \sh\left(-\epsilon _1+\epsilon
   _2+\epsilon _3\right)},\\
(\epsilon_{3},\epsilon_{3}-\epsilon_{2}),&\qquad\frac{\sh\left(2 \epsilon _1+\epsilon _2\right) \sh\left(\epsilon _1+2 \epsilon _2\right)
   \sh\left(2 \epsilon _1+2 \epsilon _2\right) \sh\left(\epsilon _1+\epsilon _3\right)}{2 \sh\left(\epsilon _1\right) \sh\left(\epsilon _2\right)
   \sh\left(2 \epsilon _2\right) \sh\left(\epsilon _3-\epsilon _1\right)}\\
   &\qquad \times \frac{
   \sh\left(\epsilon _1-\epsilon _2+\epsilon _3\right) \sh\left(\epsilon _2+\epsilon
   _3\right) \sh\left(2 \epsilon _3-\epsilon _1\right) \sh\left(-\epsilon _1-\epsilon
   _2+2 \epsilon _3\right)}{
   \sh\left(-2 \epsilon _1-2 \epsilon _2+\epsilon _3\right) \sh\left(\epsilon _3-\epsilon
   _2\right) \sh\left(-2 \epsilon _1-\epsilon _2+\epsilon _3\right) \sh\left(-\epsilon
   _1-\epsilon _2+\epsilon _3\right)}
\eea

Taking the unrefined limit, the nontrivial weight factor appears at the configuration $(\eps_1,\eps_3,0)$ where the second order degenerate pole appears. Only for this configuration, the weight is $2$.
\begin{itemize}
    \item \textbf{Level 1}:
    \bea
\mathcal{Z}_{(\eps_1)}\rightarrow 1,\quad \mathcal{Z}_{(\eps_3)}\rightarrow 1.
    \eea
    \item \textbf{Level 2}:
    \bea
 \mathcal{Z}_{(\eps_3,\eps_3-\eps_2)}\rightarrow 1,\quad \mathcal{Z}_{(\eps_3,\eps_1)}\rightarrow 1,\quad \mathcal{Z}_{(\eps_3,0)}\rightarrow 1,\quad \mathcal{Z}_{(\eps_1,\eps_1-\eps_2)}\rightarrow 1 ,\quad   \mathcal{Z}_{(\eps_1,0)}\rightarrow 1
    \eea
    \item \textbf{Level 3}:
    \bea
    \,&\mathcal{Z}_{(\epsilon_3,\epsilon_3-\epsilon_2,\epsilon_1)}\rightarrow 1,\quad \mathcal{Z}_{(\epsilon_3,\epsilon_3-\epsilon_2,0)}\rightarrow 1,\quad \mathcal{Z}_{(\epsilon_3,\epsilon_3-\epsilon_2,\epsilon_3-2\epsilon_2)}\rightarrow 1,\\
&\mathcal{Z}_{(\epsilon_3,\epsilon_1,\epsilon_1-\epsilon_2)}\rightarrow 1,\quad \mathcal{Z}_{(\epsilon_1,\epsilon_1-\epsilon_2,0)}\rightarrow 1,\quad \mathcal{Z}_{(\epsilon_1,\epsilon_1-\epsilon_2,\epsilon_1-2\epsilon_2)}\rightarrow 1,\\
 &\mathcal{Z}_{(\epsilon_3,\epsilon_1,0)}\rightarrow 2,\,\quad 
 \mathcal{Z}_{(\epsilon_3,0,-\epsilon_3)}\rightarrow 1 ,\quad \mathcal{Z}_{(\epsilon_1,0,-\epsilon_1)}\rightarrow 1.
    \eea
    
\end{itemize}

\paragraph{Unrefined vertex}
\bea
\wtC_{\yng(1),\,\yng(2,1),\,\yng(1)}(q)=\frac{1-2q+3q^2-3q^3+3q^4-3q^5+3q^6-2q^7+q^8}{(1-q)^{4}(1-q^3)}=1+2q+5q^2+10q^3+\cdots
\eea

\paragraph{Refined vertex}
\bea
\wtC_{\yng(1),\,\yng(2,1),\,\yng(1)}(t,q)=1+2t+3qt+2t^2+3q^2t+5qt^2+2t^3+\cdots
\eea

\paragraph{Macdonald refined vertex}

\bea
\wtM_{\yng(1),\,\yng(2,1),\,\yng(1)}(x,y;q,t)&=1+\frac{2(1-t)}{1-q}x\\
&+\left(\frac{(1-t)(2-t+q-2qt)}{(1-q)(1-q^2)}x^2+\frac{(1-t)(3+t-q+qt-q^2-qt^2-2q^2t)}{(1-q)(1-qt^2)}xy\right)+\cdots
\eea

\subsubsection[Example 4]{$(\lambda,\mu,\nu)=(\yng(1,1),\yng(1,1),\yng(1))$ (Fig.~\ref{fig:3leg-ex4})}
\begin{figure}[h]
    \centering
    \includegraphics[width=5cm]{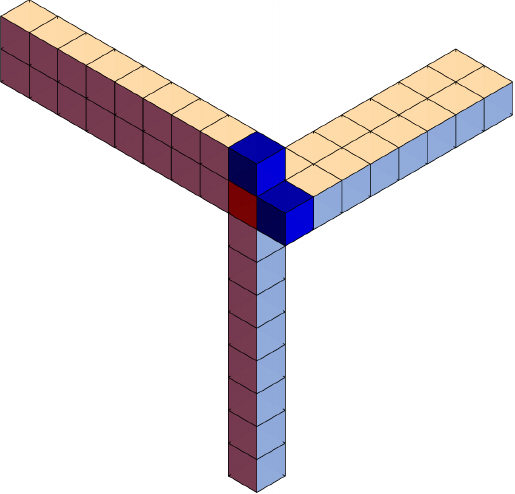}
    \caption{The PT box counting setup for $\lambda=\{1,1\},\mu=\{1,1\}, \nu=\{1\}$.}
    \label{fig:3leg-ex4}
\end{figure}
Let us study the case $\lambda=\yng(1,1),\mu=\yng(1,1),\nu=\yng(1)$ (see Fig.~\ref{fig:3leg-ex4}). The red position has the coordinate $0$ and the blue positions has $\eps_2, \epsilon_{3}$. The blue position $\eps_2$ is labeled 3, and the blue position $\eps_3$ is labeled 1.

\paragraph{PT box-counting rules}
\begin{itemize}
    \item \textbf{Level 1}
    \bea
    \ket{\eps_3},\quad \ket{\eps_2}
    \eea
    \item \textbf{Level 2}
    \bea
\ket{\eps_3,\eps_3-\eps_2},\quad \ket{\eps_3,\eps_2},\quad \ket{\eps_2,\eps_2-\eps_1},\quad \ket{0^3,\eps_3},\quad \ket{0^1,\eps_2}
    \eea
    \item \textbf{Level 3}
    \bea
    \ket{0^1,\eps_2,\eps_2-\eps_1},\quad \ket{\eps_3,\eps_3-\eps_2,\eps_3-2\eps_2},\quad \ket{\eps_3,\eps_2-\eps_1,\eps_2},\quad \ket{\eps_3,\eps_3-\eps_2,\eps_2},\\
    \ket{\eps_2-\eps_1,\eps_2,\eps_2-2\eps_1},\quad \ket{-\eps_3,0^3,\eps_3},\quad \ket{0^3,\eps_3,\eps_3-\eps_2},\quad \ket{0^0,\eps_3,\eps_2}
    \eea

\end{itemize}

\paragraph{GR box-counting rules}
\begin{itemize}
    \item \textbf{Level 0 $\rightarrow $ Level 1}:
    \bea
|\text{vac}\rangle \rightarrow |\epsilon_3\rangle + |\epsilon_2\rangle
\eea
    \item \textbf{Level 1 $\rightarrow $ Level 2}:
    \bea
|\epsilon_3\rangle &\rightarrow |\epsilon_3,\epsilon_3-\epsilon_2\rangle+|\epsilon_3,\epsilon_2\rangle +|\epsilon_3,0_H\rangle  \\
|\epsilon_2\rangle &\rightarrow |\epsilon_2,\epsilon_2-\epsilon_1\rangle +|\epsilon_3,\epsilon_2\rangle+|\epsilon_2,0_H\rangle 
\eea
    \item \textbf{Level 2 $\rightarrow $ Level 3}:
    \bea
|\epsilon_3,\epsilon_3-\epsilon_2\rangle&\rightarrow |\epsilon_3,\epsilon_3-\epsilon_2,\epsilon_3-2\epsilon_2\rangle  +|\epsilon_3,\epsilon_2,\epsilon_3-\epsilon_2\rangle +|\epsilon_3,\epsilon_3-\epsilon_2,0_H\rangle\\ 
 |\epsilon_2,\epsilon_2-\epsilon_1\rangle &\rightarrow |\epsilon_2,\epsilon_2-\epsilon_1,\epsilon_2-2\epsilon_1\rangle + |\epsilon_3,\epsilon_2,\epsilon_2-\epsilon_1\rangle + |\epsilon_2,\epsilon_2-\epsilon_1,0_H\rangle\\ 
|\epsilon_3,\epsilon_2\rangle  &\rightarrow |\epsilon_3,\epsilon_2,\epsilon_3-\epsilon_2\rangle +|\epsilon_3,\epsilon_2,\epsilon_2-\epsilon_1\rangle +|\epsilon_3,\epsilon_2,0_L\rangle  +|\epsilon_3,\epsilon_2,0_H\rangle \\ 
|\epsilon_3,0_H\rangle &\rightarrow|\epsilon_3,0_H,-\epsilon_3\rangle + |\epsilon_3,\epsilon_3-\epsilon_2,0_H\rangle + |\epsilon_3,\epsilon_2,0_L\rangle \\ 
|\epsilon_2,0_H\rangle  &\rightarrow|\epsilon_2,\epsilon_2-\epsilon_1,0_H\rangle + |\epsilon_3,\epsilon_2,0_L\rangle
\eea
\end{itemize}

\paragraph{JK prescription and transition rules}
The framing node contribution is
\bea
\mathcal{Z}^{\D6_{\bar{4}}\tbar\D2\tbar\D0}_{\DT;\,\yng(1,1)\,\yng(1,1)\,\yng(1)}(\fra,\phi_{I})&=\frac{\sh\left(-\fra+\epsilon _2+\phi _I\right) \sh\left(-\fra+\epsilon
   _1-\epsilon _3+\phi _I\right) \sh\left(-\fra-\epsilon _2+\epsilon _3+\phi
   _I\right)}{
   \sh\left(-\fra-\epsilon _1-2 \epsilon _2+\phi _I\right) \sh\left(-\fra-\epsilon _2-2 \epsilon _3+\phi
   _I\right) \sh\left(-\fra-\epsilon _1-\epsilon _3+\phi _I\right)}\\
   &\times \frac{\sh\left(\fra+\epsilon _1+2 \epsilon _2+\epsilon _3-\phi _I\right) \sh\left(\fra+\epsilon _1+\epsilon
   _2+2 \epsilon _3-\phi _I\right) }{\sh\left(\fra+\epsilon _2-\phi _I\right) \sh\left(\fra+\epsilon _3-\phi _I\right)}.
\eea

\begin{itemize}
    \item \textbf{Level 1}: Choosing the reference vector $\tilde{\eta}_0$, the poles picked up are
    \bea
{\fra+\eps_2-\phi_1=0,\quad \fra+\eps_3-\phi_1=0}
    \eea
    and thus the transition rule is
    \bea
    \boxed{\text{vac}\longrightarrow (\eps_2)+(\eps_3)}
    \eea
    \item \textbf{Level 2}: For level two, the poles picked up come from the charge vectors $-e_{I},e_{I}-e_{J}$. The configurations with non-zero JK-residues are
    \bea
    (\eps_3,\eps_3-\eps_2),\quad (\eps_3,\eps_2),\quad  (\eps_3,0),\quad (\eps_2,\eps_2-\eps_1),\quad (\eps_2,0).
    \eea
    Similar to the previous example, no degenerate poles appear during this process and thus the transition rules are given as
    \begin{empheq}[box=\fbox]{align}
\begin{split}
(\eps_3)&\longrightarrow (\eps_3,\eps_3-\eps_2)+(\eps_3,\eps_2)+(\eps_3,0)\\
(\eps_2)&\longrightarrow (\eps_2,\eps_2-\eps_1)+(\eps_3,\eps_2)+(\eps_2,0)
    \end{split}
    \end{empheq}
    \item \textbf{Level 3}: For the third level, the non-zero JK residues come from the configurations
    \bea
(\eps_3,\eps_3-\eps_2,\eps_3-2\eps_2),\quad (\eps_3,\eps_3-\eps_2,\eps_2),\quad (\eps_3,\eps_3-\eps_{2},0),\quad (\eps_3,\eps_2,0),\\
 (\eps_2,\eps_2-\eps_1,\eps_2-2\eps_1),\quad (\eps_2,\eps_2-\eps_1,\eps_3),\quad (\eps_2,\eps_2-\eps_{1},0),\quad (\eps_3,0,-\eps_3).
    \eea
    The transition rules are summarized as
     \begin{empheq}[box=\fbox]{align}
\begin{split}
(\eps_3,\eps_3-\eps_2)&\longrightarrow (\eps_3,\eps_3-\eps_2,\eps_3-2\eps_2)+(\eps_3,\eps_3-\eps_2,\eps_2)+(\eps_3,\eps_3-\eps_{2},0)\\
(\eps_2,\eps_2-\eps_1)&\longrightarrow (\eps_2,\eps_2-\eps_1,\eps_2-2\eps_1)+(\eps_2,\eps_2-\eps_1,\eps_3)+(\eps_2,\eps_2-\eps_{1},0)\\
(\eps_3,\eps_2)&\longrightarrow (\eps_3,\eps_2,\eps_3-\eps_2)+(\eps_3,\eps_2,\eps_2-\eps_1)+ (\eps_3,\eps_2,0)\\
(\eps_3,0)&\longrightarrow (\eps_3,0,-\eps_3)+(\eps_3,0,\eps_3-\eps_2)\\
(\eps_2,0)&\longrightarrow (\eps_2,0,\eps_2-\eps_1)
\end{split}
\end{empheq}

The nontrivial configuration is $(\eps_3,\eps_2,0)$. The hyperplanes related with this pole are
\bea
\fra+\eps_3-\phi_1=0,\quad \fra+\eps_2-\phi_2=0,\quad \phi_2-\phi_3=\eps_2,\quad \phi_1-\phi_3=\eps_3
\eea
and it is a degenerate pole with second order.

\end{itemize}

\paragraph{JK-residues}
For level one, we have
\bea
 (\epsilon _2), &\quad  -\frac{\sh\left(2 \epsilon _2\right) \sh\left(\epsilon
   _1+\epsilon _3\right) \sh\left(-\epsilon _1-\epsilon _2+\epsilon _3\right)
   \sh\left(\epsilon _2+\epsilon _3\right) \sh\left(\epsilon _1+2 \epsilon
   _3\right)}{\sh\left(\epsilon _1\right) \sh\left(\epsilon _2\right) \sh\left(2
   \epsilon _3\right) \sh\left(\epsilon _3-\epsilon _2\right) \sh\left(\epsilon
   _1-\epsilon _2+\epsilon _3\right)}, \\
 (\epsilon _3), &\quad  \frac{\sh\left(\epsilon _1+\epsilon _2\right)
   \sh\left(\epsilon _1+2 \epsilon _2\right) \sh\left(\epsilon _1+\epsilon _3\right)
   \sh\left(\epsilon _2+\epsilon _3\right) \sh\left(2 \epsilon _3-\epsilon
   _2\right)}{\sh\left(\epsilon _1\right) \sh\left(\epsilon _2\right)
   \sh\left(\epsilon _3\right) \sh\left(-\epsilon _1-2 \epsilon _2+\epsilon _3\right)
   \sh\left(\epsilon _3-\epsilon _2\right)} .
\eea
For level two, we have{\small
\bea
 (0, \epsilon _2),&\qquad -\frac{\sh\left(\epsilon _3-\epsilon _1\right) \sh\left(-\epsilon _1-\epsilon _2+\epsilon
   _3\right) \sh\left(\epsilon _2+\epsilon _3\right) \sh\left(2 \epsilon _2+\epsilon
   _3\right) \sh\left(\epsilon _1+2 \epsilon _3\right) \sh\left(\epsilon _1+\epsilon _2+2
   \epsilon _3\right)}{2 \sh\left(\epsilon _1\right) \sh\left(\epsilon _2-\epsilon
   _1\right) \sh\left(\epsilon _3\right) \sh\left(2 \epsilon _3\right)
   \sh\left(\epsilon _3-\epsilon _2\right) \sh\left(\epsilon _2+2 \epsilon _3\right)},\\
 (0 , \epsilon _3 ),&\qquad\frac{\sh\left(\epsilon _1+\epsilon _2\right) \sh\left(-\epsilon _1-\epsilon _2+\epsilon
   _3\right) \sh\left(\epsilon _2+\epsilon _3\right) \sh\left(\epsilon _1+2 \epsilon
   _2+\epsilon _3\right) \sh\left(\epsilon _1+2 \epsilon _3\right) \sh\left(2 \epsilon
   _3-\epsilon _2\right)}{2 \sh\left(\epsilon _1\right) \sh\left(\epsilon _2\right)
   \sh\left(\epsilon _3\right) \sh\left(2 \epsilon _3\right) \sh\left(-\epsilon
   _1-2 \epsilon _2+\epsilon _3\right) \sh\left(\epsilon _3-\epsilon _2\right)},\\
 (\epsilon _2 ,\epsilon _2-\epsilon _1) ,&\qquad{\frac{\sh\left(2 \epsilon _2\right) \sh\left(2 \epsilon _2-\epsilon _1\right)
   \sh\left(\epsilon _1+\epsilon _3\right) \sh\left(2 \epsilon _1+\epsilon _3\right)
   \sh\left(-\epsilon _1-\epsilon _2+\epsilon _3\right) \sh\left(\epsilon _2+\epsilon
   _3\right) \sh\left(-\epsilon _1+\epsilon _2+\epsilon _3\right) \sh\left(2 \epsilon
   _1+2 \epsilon _3\right)}{2 \sh\left(\epsilon _1\right) \sh\left(2 \epsilon _1\right)
   \sh\left(\epsilon _2\right) \sh\left(\epsilon _2-\epsilon _1\right) \sh\left(2
   \epsilon _3\right) \sh\left(\epsilon _1-\epsilon _2+\epsilon _3\right)^2 \sh\left(2
   \epsilon _1-\epsilon _2+\epsilon _3\right)}},\\
 (\epsilon _2 , \epsilon _3 ),&\qquad -\frac{\sh\left(2 \epsilon _2\right)^2 \sh\left(\epsilon _1+\epsilon _2\right)^2
   \sh\left(\epsilon _1+\epsilon _3\right)^3 \sh\left(\epsilon _2+\epsilon
   _3\right)^2 \sh\left(\epsilon _1-\epsilon _2+2 \epsilon _3\right)}{2
   \sh\left(\epsilon _1\right)^2 \sh\left(\epsilon _2\right)^3
   \sh\left(\epsilon _3\right)^2 \sh\left(\epsilon _3-2 \epsilon _2\right)
   \sh\left(\epsilon _1-\epsilon _2+\epsilon _3\right)^2},\\
 (\epsilon _3 , \epsilon _3-\epsilon _2 ),&\qquad\frac{\sh\left(\epsilon _1+2 \epsilon _2\right)^2 \sh\left(\epsilon _1+3 \epsilon
   _2\right) \sh\left(\epsilon _1+\epsilon _3\right) \sh\left(\epsilon _1-\epsilon
   _2+\epsilon _3\right) \sh\left(\epsilon _2+\epsilon _3\right) \sh\left(2 \epsilon _3-2
   \epsilon _2\right) \sh\left(2 \epsilon _3-\epsilon _2\right)}{2 \sh\left(\epsilon
   _1\right) \sh\left(\epsilon _2\right) \sh\left(2 \epsilon _2\right)
   \sh\left(-\epsilon _1-3 \epsilon _2+\epsilon _3\right) \sh\left(\epsilon _3-2 \epsilon
   _2\right) \sh\left(-\epsilon _1-2 \epsilon _2+\epsilon _3\right) \sh\left(\epsilon
   _3-\epsilon _2\right)^2}.
\eea}

The weight factors are given as follows.
\begin{itemize}
    \item \textbf{Level 1}:
    \bea
    \mathcal{Z}_{(\eps_2)}\rightarrow 1,\quad 
    \mathcal{Z}_{(\eps_3)}\rightarrow 1.
    \eea
    \item \textbf{Level 2}:
    \bea
 \mathcal{Z}_{(\eps_3,\eps_3-\eps_2)}\rightarrow 1,\quad \mathcal{Z}_{(\eps_3,\eps_2)}\rightarrow 1,\quad  \mathcal{Z}_{(\eps_3,0)}\rightarrow 1,\quad \mathcal{Z}_{(\eps_2,\eps_2-\eps_1)}\rightarrow 1,\quad \mathcal{Z}_{(\eps_2,0)}\rightarrow 1
    \eea
    \item \textbf{Level 3}:
    \bea
    \mathcal{Z}_{(\eps_3,\eps_3-\eps_2,\eps_3-2\eps_2)}\rightarrow 1,\quad \mathcal{Z}_{(\eps_3,\eps_3-\eps_2,\eps_2)}\rightarrow 1,\quad \mathcal{Z}_{(\eps_3,\eps_3-\eps_{2},0)}\rightarrow 2,\quad \mathcal{Z}_{(\eps_3,\eps_2,0)}\rightarrow 1,\\
 \mathcal{Z}_{(\eps_2,\eps_2-\eps_1,\eps_2-2\eps_1)}\rightarrow 1,\quad \mathcal{Z}_{(\eps_2,\eps_2-\eps_1,\eps_3)}\rightarrow 1,\quad \mathcal{Z}_{(\eps_2,\eps_2-\eps_{1},0)}\rightarrow 1,\quad \mathcal{Z}_{(\eps_3,0,-\eps_3)}\rightarrow 1.
    \eea
\end{itemize}

\paragraph{Unrefined vertex}
\bea
\wtC_{\yng(2),\,\yng(2),\,\yng(1)}(q)=\frac{1-q+q^3-q^4+q^5-q^7+q^8}{(1-q)^{3}(1-q^2)^2}=1+2q+5q^2+9q^3+\cdots
\eea

\paragraph{Refined vertex}
\bea
\wtC_{\yng(2),\,\yng(2),\,\yng(1)}(t,q)=1+2t+2qt+3t^2+q^2t+4qt^2+4t^3+\cdots
\eea

\paragraph{Macdonald refined vertex}

\bea
\wtM_{\yng(2),\,\yng(2),\,\yng(1)}(x,y;q,t)&=1+\frac{2-t+q-2qt}{1-q^{2}}x+\left(\frac{(1-qt)(3-2t-q)}{(1-q)(1-q^2)}x^{2}+2xy\right)+\cdots
\eea

\subsubsection[Example 5]{ $(\lambda,\mu,\nu)=(\yng(1,1),\yng(2),\yng(1))$ (Fig.~\ref{fig:3leg-ex5})}
\begin{figure}[h]
    \centering
    \includegraphics[width=5cm]{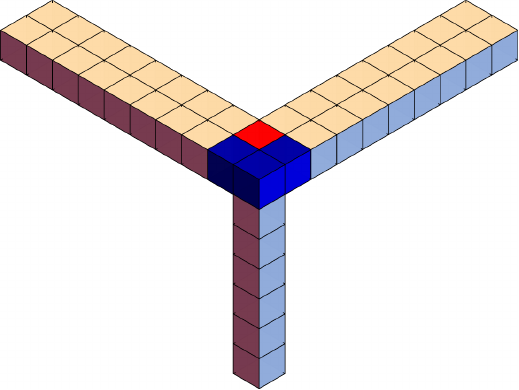}
    \caption{The PT box counting setup for $\lambda=\{1,1\},\mu=\{2\}, \nu=\{1\}$.}
    \label{fig:3leg-ex5}
\end{figure}
Let us study the case $\lambda=\yng(1,1),\mu=\yng(2),\nu=\yng(1)$ (see Fig.~\ref{fig:3leg-ex5}). The red position has the coordinate $0$ and the blue positions has $\epsilon_{1},\eps_2,\eps_1+\eps_2$. The blue positions are all labeled $3$.

\paragraph{PT box-counting rules}
\begin{itemize}
\item \textbf{Level 1}
\bea
\ket{\eps_1+\eps_2},\quad \ket{0^3}
\eea
    \item \textbf{Level 2}
    \bea
    \ket{\eps_1,\eps_1+\eps_2},\quad \ket{\eps_2,\eps_1+\eps_2},\quad\ket{0^3,-\eps_3},\quad \ket{0^3,\eps_1+\eps_2}
    \eea
    \item \textbf{Level 3}
    \bea
\ket{\eps_1,\eps_2,\eps_1+\eps_2},\quad \ket{-\eps_3,0^3,\eps_1+\eps_2},\quad \ket{\eps_1,\eps_1-\eps_2,\eps_1+\eps_2},\quad \ket{-\eps_3,0^3,-2\eps_3},\\
\ket{\eps_1,0^3,\eps_1+\eps_2},\quad  \ket{\eps_1+\eps_2,-\eps_1+\eps_2,\eps_2},\quad \ket{0^3,\eps_1+\eps_2,\eps_2}
    \eea

\end{itemize}

\paragraph{GR box-counting rules}
\begin{itemize}
    \item \textbf{Level 0 $\rightarrow $ Level 1}:
    \begin{eqnarray}
|\text{vac}\rangle \rightarrow |\epsilon_1+\epsilon_2\rangle + |0_H\rangle
\end{eqnarray}
    \item \textbf{Level 1 $\rightarrow $ Level 2}:
    \bea
|\epsilon_1+\epsilon_2\rangle&\rightarrow |\epsilon_1+\epsilon_2,0_H\rangle+ |\epsilon_1+\epsilon_2,\epsilon_1\rangle + |\epsilon_1+\epsilon_2,\epsilon_2\rangle \\
|0_H\rangle &\rightarrow  |\epsilon_1+\epsilon_2,0_H\rangle +|0_H,-\epsilon_3\rangle
\eea
    
    \item \textbf{Level 2 $\rightarrow $ Level 3}:
    \bea
 |\epsilon_1+\epsilon_2,0_H\rangle&\rightarrow|\epsilon_1+\epsilon_2,\epsilon_1,0_H\rangle+  |\epsilon_1+\epsilon_2,\epsilon_2,0_H\rangle +  |\epsilon_1+\epsilon_2,0_H,-\epsilon_3\rangle\\ 
  |\epsilon_1+\epsilon_2,\epsilon_1\rangle &\rightarrow   |\epsilon_1+\epsilon_2,\epsilon_1,\epsilon_2\rangle + |\epsilon_1+\epsilon_2,\epsilon_1,\epsilon_1-\epsilon_2\rangle  +|\epsilon_1+\epsilon_2,\epsilon_1,0_H\rangle\\ 
 |\epsilon_1+\epsilon_2,\epsilon_2\rangle &\rightarrow   |\epsilon_1+\epsilon_2,\epsilon_1,\epsilon_2\rangle +  |\epsilon_1+\epsilon_2,\epsilon_2,\epsilon_2-\epsilon_1\rangle + |\epsilon_1+\epsilon_2,\epsilon_2,0_H\rangle\\ 
|0_H,-\epsilon_3\rangle &\rightarrow |0_H,-\epsilon_3,-2\epsilon_3\rangle + |0_H,-\epsilon_3,\epsilon_1+\epsilon_2\rangle 
\eea
\end{itemize}

\paragraph{JK prescription and transition rules}
The framing node contribution is
\bea
\mathcal{Z}^{\D6_{\bar{4}}\tbar\D2\tbar\D0}_{\DT;\,\yng(1,1)\,\yng(2)\,\yng(1)}(\fra,\phi_{I})&=\frac{ \sh\left(-\fra+\epsilon _1+\phi _I\right) \sh\left(-\fra+\epsilon
   _2+\phi _I\right) \sh\left(-\fra-\epsilon _1-\epsilon _2+\epsilon _3+\phi _I\right)}{ \sh\left(-\fra-2 \epsilon _1-2 \epsilon
   _2+\phi _I\right) \sh\left(-\fra-\epsilon _1-\epsilon _3+\phi _I\right) \sh\left(-\fra-\epsilon
   _2-\epsilon _3+\phi _I\right)}\\
   &\times \frac{\sh\left(\fra+\epsilon _1+\epsilon _2+\epsilon _3-\phi _I\right) \sh\left(\fra+2 \epsilon _1+2
   \epsilon _2+\epsilon _3-\phi _I\right)}{\sh\left(\fra-\phi
   _I\right) \sh\left(\fra+\epsilon _1+\epsilon _2-\phi _I\right)}.
\eea

\begin{itemize}
    \item \textbf{Level 1}: The reference vector $\tilde{\eta}_0$ picks up the poles
    \bea
    \fra-\phi_1=0,\quad \fra+\eps_1+\eps_2-\phi_1=0.
    \eea
    The transition rule is
    \begin{empheq}[box=\fbox]{align}
\begin{split}
\text{vac}&\longrightarrow (0)+(\eps_1+\eps_2)
\end{split}
\end{empheq}
    \item \textbf{Level 2}: The configurations giving non-zero JK residues are
    \bea
    (\eps_1+\eps_2,\eps_1),\quad (\eps_1+\eps_2,\eps_2),\quad (0,-\eps_3).
    \eea
    The transition rule is determined as
    \begin{empheq}[box=\fbox]{align}
\begin{split}
(\eps_1+\eps_2)&\longrightarrow (\eps_1+\eps_2,\eps_1)+(\eps_1+\eps_2,\eps_2)\\
(0)&\longrightarrow (0,-\eps_3)
\end{split}
\end{empheq}
For this case, no configurations are neither degenerate poles nor higher order poles.\footnote{Even so, somehow for the configuration $(0,-\eps_3)$, the weight is $2$ in the unrefined limit. }

We note that naively the JK-prescription also picks up the pole configuration $(0,\eps_1+\eps_2)$, but due to the numerator part $\sh(\phi_2-\phi_1-\eps_1-\eps_2)$ coming from the D0-D0 contribution, the residue is zero. Due to this, somehow the PT, GR box counting rules are not compatible. We do not know how to deal with this mismatch for the moment and leave a detailed analysis for future work. Using our formalism, the relation with the topological vertices and the DT/PT correspondence all seem to holds.

    \item \textbf{Level 3}: For the third level, the nontrivial configurations are
    \bea
   & (\eps_1+\eps_2,\eps_1,\eps_2),\quad (\eps_1+\eps_2,\eps_1,\eps_1-\eps_2),\quad (\eps_1+\eps_2,\eps_1,0),\quad (\eps_1+\eps_2,\eps_2,\eps_2-\eps_1),\\
   & (\eps_1+\eps_2,\eps_2,0),\quad (0,-\eps_3,\eps_1+\eps_2),\quad (0,-\eps_3,-2\eps_3).
    \eea
    The transition rules are determined as
  \begin{empheq}[box=\fbox]{align}
\begin{split}
(\eps_1+\eps_2,\eps_1)&\longrightarrow (\eps_1+\eps_2,\eps_1,\eps_2)+(\eps_1+\eps_2,\eps_1,\eps_1-\eps_2)+(\eps_1+\eps_2,\eps_1,0)\\
(\eps_1+\eps_2,\eps_2)&\longrightarrow (\eps_1+\eps_2,\eps_2,\eps_1)+(\eps_1+\eps_2,\eps_2,\eps_2-\eps_1)+(\eps_1+\eps_2,\eps_2,0)\\
(0,-\eps_3)&\longrightarrow (0,-\eps_3,\eps_1+\eps_2)+(0,-\eps_3,-2\eps_3)
\end{split}
\end{empheq}
All of the configurations are non-degenerate pole configurations. For the configuration $(0,-\eps_3,\eps_1+\eps_2)$, after evaluating the poles at $\phi_1=\fra,\phi_2=\fra-\eps_3$, the pole at $\phi_3=\fra+\eps_{1}+\eps_2$ is not canceled and so such kind of configuration is allowed.

\end{itemize}

\paragraph{JK-residues}
For level one, we have
\bea
(0),&\qquad \frac{\sh\left(-\epsilon _1-\epsilon _2+\epsilon _3\right)
   \sh\left(2 \epsilon _1+2 \epsilon _2+\epsilon
   _3\right)}{\sh\left(2 \epsilon _1+2 \epsilon _2\right)
   \sh\left(\epsilon _3\right)},\\
(\epsilon_{1}+\epsilon_{2}),&\qquad -\frac{\sh\left(2 \epsilon _1+\epsilon _2\right) \sh\left(\epsilon
   _1+2 \epsilon _2\right) \sh\left(\epsilon _3\right)
   \sh\left(\epsilon _1+\epsilon _3\right) \sh\left(\epsilon
   _2+\epsilon _3\right)}{\sh\left(\epsilon _1\right)
   \sh\left(\epsilon _2\right) \sh\left(\epsilon _1+\epsilon
   _2\right) \sh\left(\epsilon _3-\epsilon _1\right)
   \sh\left(\epsilon _3-\epsilon _2\right)}.
\eea
For level two, we have
\bea
(0,-\epsilon_{3}),&\qquad \frac{\sh\left(\epsilon _1+\epsilon _2\right) \sh\left(-\epsilon
   _1-\epsilon _2+\epsilon _3\right)^2 \sh\left(2 \epsilon _1+2 \epsilon
   _2+2 \epsilon _3\right)}{2 \sh\left(2 \epsilon _1+2 \epsilon _2\right)
   \sh\left(\epsilon _3\right) \sh\left(2 \epsilon _3\right)
   \sh\left(\epsilon _1+\epsilon _2+\epsilon _3\right)},\\
(\epsilon_{1},\epsilon_{1}+\epsilon_{2}),&\qquad -\frac{\sh\left(2 \epsilon _1\right) \sh\left(2 \epsilon _1+\epsilon
   _2\right) \sh\left(\epsilon _1+2 \epsilon _2\right)
   \sh\left(\epsilon _1+\epsilon _3\right) \sh\left(\epsilon
   _1-\epsilon _2+\epsilon _3\right) \sh\left(\epsilon _2+\epsilon
   _3\right)^2 \sh\left(2 \epsilon _2+\epsilon _3\right)}{2
   \sh\left(\epsilon _1\right)^2 \sh\left(\epsilon _2\right)
   \sh\left(2 \epsilon _2\right) \sh\left(\epsilon _2-\epsilon
   _1\right) \sh\left(\epsilon _3-\epsilon _1\right)
   \sh\left(\epsilon _3-\epsilon _2\right) \sh\left(-\epsilon
   _1+\epsilon _2+\epsilon _3\right)}, \\
(\epsilon_{2},\epsilon_{1}+\epsilon_{2}),&\qquad \frac{\sh\left(2 \epsilon _2\right) \sh\left(2 \epsilon _1+\epsilon
   _2\right) \sh\left(\epsilon _1+2 \epsilon _2\right)
   \sh\left(\epsilon _1+\epsilon _3\right)^2 \sh\left(2 \epsilon
   _1+\epsilon _3\right) \sh\left(\epsilon _2+\epsilon _3\right)
   \sh\left(-\epsilon _1+\epsilon _2+\epsilon _3\right)}{2
   \sh\left(\epsilon _1\right) \sh\left(2 \epsilon _1\right)
   \sh\left(\epsilon _2\right)^2 \sh\left(\epsilon _2-\epsilon
   _1\right) \sh\left(\epsilon _3-\epsilon _1\right)
   \sh\left(\epsilon _3-\epsilon _2\right) \sh\left(\epsilon
   _1-\epsilon _2+\epsilon _3\right)}.
\eea

The weight factors are summarized as
\begin{itemize}
    \item \textbf{Level 1}:
    \bea
\mathcal{Z}_{(0)}\rightarrow 1,\quad \mathcal{Z}_{(\eps_1+\eps_2)}\rightarrow 1
    \eea
    \item \textbf{Level 2}:
    \bea
\mathcal{Z}_{(\eps_1+\eps_2,\eps_1)}\rightarrow 1,\quad \mathcal{Z}_{(\eps_1+\eps_2,\eps_2)}\rightarrow 1,\quad \mathcal{Z}_{(0,-\eps_3)}\rightarrow 2.
    \eea
    \item \textbf{Level 3}:
    \bea
 & \mathcal{Z}_{(\eps_1+\eps_2,\eps_1,\eps_2)}\rightarrow 1,\quad \mathcal{Z}_{(\eps_1+\eps_2,\eps_1,\eps_1-\eps_2)}\rightarrow 1,\quad \mathcal{Z}_{(\eps_1+\eps_2,\eps_1,0)}\rightarrow 1,\quad \mathcal{Z}_{(\eps_1+\eps_2,\eps_2,\eps_2-\eps_1)}\rightarrow 1,\\
   & \mathcal{Z}_{(\eps_1+\eps_2,\eps_2,0)}\rightarrow 1,\quad \mathcal{Z}_{(0,-\eps_3,\eps_1+\eps_2)}\rightarrow 1,\quad \mathcal{Z}_{(0,-\eps_3,-2\eps_3)}\rightarrow 1.
    \eea
\end{itemize}

\paragraph{Unrefined vertex}
\bea
\wtC_{\yng(2)\,\yng(1,1)\,\yng(1)}(q)&=\frac{1-q-q^2+2q^3-q^5+q^6-q^8+q^9}{(1-q^2)^2(1-q)^3}\\
&=1+2q+4q^2+7q^3+12q^4+\cdots.
\eea
\paragraph{Refined vertex}
\bea
\wtC_{\yng(2)\,\yng(1,1)\,\yng(1)}(t,q)&=1+(q+t)+(3qt+t^2)+(q^2t+5qt^2+t^3)+\cdots.
\eea
\paragraph{Macdonald refined vertex}
\bea
\wtM_{\yng(2)\,\yng(1,1)\,\yng(1)}(x,y;q,t)=1+\left(\frac{1-t}{1-q}x+\frac{1-qt}{1-t^{2}}y\right)+\left(\frac{(1-t)(1-qt)}{(1-q)(1-q^2)}x^2+\frac{(3qt+2q-2t-3)}{(q-1)(t+1)}xy\right)+\cdots
\eea

\subsubsection[Example 6]{$(\lambda,\mu,\nu)=(\yng(2),\yng(1,1),\yng(1))$ (Fig.~\ref{fig:3leg-ex6})}
\begin{figure}[h]
    \centering
    \includegraphics[width=5cm]{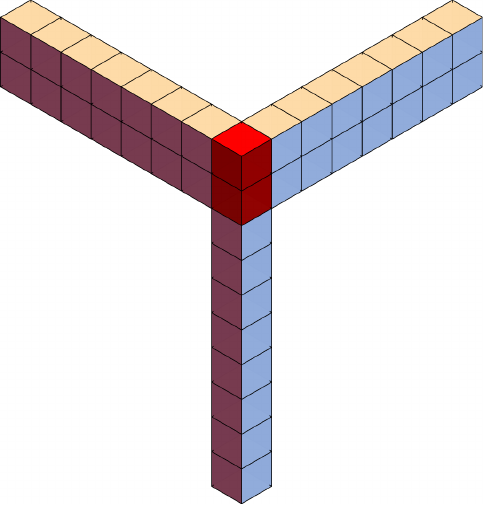}
    \caption{The PT box counting setup for $\lambda=\{2\},\mu=\{1,1\}, \nu=\{1\}$.}
    \label{fig:3leg-ex6}
\end{figure}
For the case $\lambda=\yng(2),\mu=\yng(1,1),\nu=\yng(1)$ (see Fig.~\ref{fig:3leg-ex6}), there are two red positions at $0,\epsilon_{3}$.

\paragraph{PT box-counting rules}
\begin{itemize}
    \item \textbf{Level 1}
    \bea
    \ket{\eps_3^0}
    \eea
    \item \textbf{Level 2}
    \bea
    \ket{\eps_3^0,0^0},\quad \ket{\eps_3^{-1}},\quad \ket{\eps_3^2,\eps_3-\eps_2},\quad \ket{\eps_3^1,\eps_3-\eps_1}
    \eea
    \item \textbf{Level 3}
    \bea
    \ket{0^1,\eps_3^1,\eps_3-\eps_1},\quad \ket{\eps_3^{-1},\eps_3-\eps_1},\quad \ket{0^0,\eps_3^{-1}},\quad \ket{\eps_3^{1},\eps_3-\eps_1,\eps_3-2\eps_1},\\
    \ket{-\eps_3,0^3,\eps_3^3},\quad \ket{0^2,\eps_3^{2},\eps_3-\eps_2},\quad \ket{\eps_3-2\eps_2,\eps_3^{2},\eps_3-\eps_2},\quad \ket{\eps_3^{-1},\eps_3-\eps_2}
    \eea
    
\end{itemize}

\paragraph{GR box-counting rules}
\begin{itemize}
\item \textbf{Level 0 $\rightarrow $ Level 1}:
\bea
\ket{\text{vac}}\rightarrow \ket{\epsilon_{3,H}}+\ket{\epsilon_{3,L}}
\eea 
\item \textbf{Level 1 $\rightarrow $ Level 2}:
\bea
 |\epsilon_{3,H}\rangle &\rightarrow  |\epsilon_{3,U}\rangle + |\epsilon_{3,H},\epsilon_3-\epsilon_1\rangle + |\epsilon_{3,H},\epsilon_3-\epsilon_2\rangle +|\epsilon_{3,L},0_L\rangle +|\epsilon_{3,H},0_H\rangle  \\
|\epsilon_{3,L}\rangle  &\rightarrow |\epsilon_{3,U}\rangle+|\epsilon_{3,L},0_L\rangle
\eea


\item\textbf{Level 2 $\rightarrow $ Level 3}:
\bea
|\epsilon_{3,L},0_L \rangle &\rightarrow |\epsilon_{3,U},0_L \rangle \\ 
|\epsilon_{3,H}, 0_H\rangle  &\rightarrow|\epsilon_{3,H},0_H,-\epsilon_3\rangle +|\epsilon_{3,U},0_L\rangle +|\epsilon_{3,U},0_H\rangle+|\epsilon_{3,H},0_H,\epsilon_3-\epsilon_2\rangle +|\epsilon_{3,H},0_H,\epsilon_3-\epsilon_1\rangle   \\ 
 |\epsilon_{3,H},\epsilon_3-\epsilon_2\rangle &\rightarrow|\epsilon_{3,U},\epsilon_3-\epsilon_2\rangle  + |\epsilon_{3,H},\epsilon_3-\epsilon_2,0_H\rangle +   |\epsilon_{3,H},\epsilon_3-\epsilon_2,\epsilon_3-2\epsilon_2\rangle\\ 
|\epsilon_{3,H},\epsilon_3-\epsilon_1\rangle &\rightarrow|\epsilon_{3,U},\epsilon_3-\epsilon_1\rangle +|\epsilon_{3,H},\epsilon_3-\epsilon_1,0_H\rangle +|\epsilon_{3,H},\epsilon_3-\epsilon_1,\epsilon_3-2\epsilon_1\rangle\\ 
 |\epsilon_{3,U}\rangle  &\rightarrow |\epsilon_{3,U},\epsilon_3-\epsilon_1\rangle+ |\epsilon_{3,U},\epsilon_3-\epsilon_2\rangle+|\epsilon_{3,U},0_L\rangle +|\epsilon_{3,U},0_H\rangle
\eea




\end{itemize}

\paragraph{JK prescription and transition rules}
The framing node contribution is
\bea
\mathcal{Z}^{\D6_{\bar{4}}\tbar\D2\tbar\D0}_{\DT;\,\yng(2)\,\yng(1,1)\,\yng(1)}(\fra,\phi_{I})&=\frac{ \sh\left(-\fra+\epsilon
   _1-\epsilon _3+\phi _I\right) \sh\left(-\fra+\epsilon _2-\epsilon _3+\phi _I\right)
   \sh\left(-\fra+\epsilon _3+\phi _I\right)}{
   \sh\left(-\fra-\epsilon _1-\epsilon _2+\phi _I\right) \sh\left(-\fra-\epsilon _1-2 \epsilon _3+\phi
   _I\right) \sh\left(-\fra-\epsilon _2-2 \epsilon _3+\phi _I\right)}\\
   &\times\frac{\sh\left(\fra+\epsilon _1+\epsilon _2+2 \epsilon _3-\phi _I\right)^2}{\sh\left(\fra+\epsilon _3-\phi _I\right)^2}.
\eea

\begin{itemize}
    \item \textbf{Level 1}: The reference vector $\tilde{\eta}_0$ picks up the pole at $\phi_1=\fra+\eps_3$ and the transition rule is
    \bea
    \boxed{\text{vac}\rightarrow (\eps_3)}.
    \eea
    Note that it is a second order pole. After taking the unrefined limit, this contribution becomes 2.
    \item \textbf{Level 2}: For the second level, the poles come from the charge vectors $-e_{I},e_{I}-e_{J}$. The non-zero JK residue configurations are
    \bea    
    (\eps_3,\eps_3),\quad (\eps_3,0),\quad (\eps_3,\eps_3-\eps_1),\quad (\eps_3,\eps_3-\eps_2).
    \eea    
    The transition rule is determined as
    \begin{empheq}[box=\fbox]{align}
\begin{split}
(\eps_3)\longrightarrow (\eps_3,\eps_3)+(\eps_3,0)+(\eps_3,\eps_3-\eps_1)+(\eps_3,\eps_3-\eps_2)
\end{split}
\end{empheq}
The configuration $(\eps_3,\eps_3)$ is allowed because of a similar discussion as in section~\ref{sec:PTthreelegs}. Other poles are non-degenerate poles coming from the charge vector $e_{1}-e_{2}$ and the transition rules are obvious.

    \item \textbf{Level 3}: The non-zero JK-residue configurations are
    \bea
(\eps_3,\eps_3,0),\quad (\eps_3,\eps_3,\eps_3-\eps_1),\quad (\eps_3,\eps_3,\eps_3-\eps_2),\quad (\eps_3,\eps_3-\eps_1,\eps_3-2\eps_1),\\
(\eps_3,0,\eps_3-\eps_1),\quad (\eps_3,0,\eps_3-\eps_2),\quad (\eps_3,0,-\eps_3),\quad (\eps_3,\eps_3-\eps_2,\eps_3-2\eps_2) .
\eea
The transition rules are determined as
    \begin{empheq}[box=\fbox]{align}
\begin{split}
(\eps_3,\eps_3)&\longrightarrow (\eps_3,\eps_3,0)+(\eps_3,\eps_3,\eps_3-\eps_1)+(\eps_3,\eps_3,\eps_3-\eps_2)\\
(\eps_3,0)&\longrightarrow (\eps_3,0,\eps_3-\eps_1)+(\eps_3,0,\eps_3-\eps_2)+(\eps_3,0,-\eps_3)\\
(\eps_3,\eps_3-\eps_1)&\longrightarrow (\eps_3,\eps_3-\eps_1,0)+(\eps_3,\eps_3-\eps_1,\eps_3-2\eps_1) \\
(\eps_3,\eps_3-\eps_2)&\longrightarrow (\eps_3,\eps_3-\eps_2,0)+(\eps_3,\eps_3-\eps_2,\eps_3-2\eps_2) 
\end{split}
\end{empheq}
The configuration $(\eps_3,\eps_3,0)$ actually has a second order pole at $\phi_3=\fra$.

\end{itemize}

\paragraph{JK-residues}
For level one, we have{\small
\bea
(\epsilon_{3}),&\qquad \frac{\ch\left(\epsilon _1\right) \sh\left(\epsilon _1+\epsilon
   _2\right) \sh\left(\epsilon _1+\epsilon _2+\epsilon _3\right)
   \sh\left(2 \epsilon _3\right)}{2 \sh\left(\epsilon _1\right)
   \sh\left(\epsilon _3\right) \sh\left(-\epsilon _1-\epsilon
   _2+\epsilon _3\right)}+\frac{\ch\left(\epsilon _2\right)
   \sh\left(\epsilon _1+\epsilon _2\right) \sh\left(\epsilon
   _1+\epsilon _2+\epsilon _3\right) \sh\left(2 \epsilon _3\right)}{2
   \sh\left(\epsilon _2\right) \sh\left(\epsilon _3\right)
   \sh\left(-\epsilon _1-\epsilon _2+\epsilon
   _3\right)}\\
   &\qquad +\frac{\ch\left(\epsilon _1+\epsilon _3\right)
   \sh\left(\epsilon _1+\epsilon _2\right) \sh\left(\epsilon
   _1+\epsilon _2+\epsilon _3\right) \sh\left(2 \epsilon _3\right)}{2
   \sh\left(\epsilon _3\right) \sh\left(\epsilon _1+\epsilon
   _3\right) \sh\left(-\epsilon _1-\epsilon _2+\epsilon
   _3\right)}+\frac{\ch\left(\epsilon _2+\epsilon _3\right)
   \sh\left(\epsilon _1+\epsilon _2\right) \sh\left(\epsilon
   _1+\epsilon _2+\epsilon _3\right) \sh\left(2 \epsilon _3\right)}{2
   \sh\left(\epsilon _3\right) \sh\left(-\epsilon _1-\epsilon
   _2+\epsilon _3\right) \sh\left(\epsilon _2+\epsilon
   _3\right)}\\
   &\qquad-\frac{\ch\left(-\epsilon _1-\epsilon _2+\epsilon _3\right)
   \sh\left(\epsilon _1+\epsilon _2\right) \sh\left(\epsilon
   _1+\epsilon _2+\epsilon _3\right) \sh\left(2 \epsilon _3\right)}{2
   \sh\left(\epsilon _3\right) \sh\left(-\epsilon _1-\epsilon
   _2+\epsilon _3\right)^2}-\frac{\ch\left(\epsilon _1+\epsilon
   _2+\epsilon _3\right) \sh\left(\epsilon _1+\epsilon _2\right)
   \sh\left(2 \epsilon _3\right)}{\sh\left(\epsilon _3\right)
   \sh\left(-\epsilon _1-\epsilon _2+\epsilon
   _3\right)}\\
   &\qquad +\frac{\ch\left(2 \epsilon _3\right) \sh\left(\epsilon
   _1+\epsilon _2\right) \sh\left(\epsilon _1+\epsilon _2+\epsilon
   _3\right)}{2 \sh\left(\epsilon _3\right) \sh\left(-\epsilon
   _1-\epsilon _2+\epsilon _3\right)}
\eea}
For level two, we have{\small
\bea
(\epsilon_{3},\epsilon_{3}),&\qquad\frac{\sh\left(\epsilon _1+\epsilon _2\right)^4 \sh\left(2
   \epsilon _3\right)^2 \sh\left(\epsilon _1+\epsilon _3\right)^2
   \sh\left(\epsilon _2+\epsilon _3\right)^2}{\sh\left(\epsilon
   _1\right)^2 \sh\left(\epsilon _2\right)^2 \sh\left(\epsilon
   _3\right)^4 \sh\left(-\epsilon _1-\epsilon _2+\epsilon _3\right)^2},\\
(\epsilon_{3},-\epsilon_{1}+\epsilon_{3}),&\qquad-\frac{\sh\left(\epsilon _1+\epsilon _2\right) \sh\left(2 \epsilon
   _1+\epsilon _2\right) \sh\left(2 \epsilon _3\right)
   \sh\left(-\epsilon _1+\epsilon _2+\epsilon _3\right) \sh\left(2
   \epsilon _1+\epsilon _2+\epsilon _3\right) \sh\left(2 \epsilon
   _3-\epsilon _1\right)}{2 \sh\left(\epsilon _1\right) \sh\left(2
   \epsilon _1\right) \sh\left(\epsilon _3\right) \sh\left(\epsilon
   _3-\epsilon _1\right) \sh\left(-2 \epsilon _1-\epsilon _2+\epsilon
   _3\right) \sh\left(-\epsilon _1-\epsilon _2+\epsilon _3\right)},\\
(\epsilon_{3},-\epsilon_{2}+\epsilon_{3}),&\qquad-\frac{\sh\left(\epsilon _1+\epsilon _2\right) \sh\left(\epsilon
   _1+2 \epsilon _2\right) \sh\left(2 \epsilon _3\right)
   \sh\left(\epsilon _1-\epsilon _2+\epsilon _3\right)
   \sh\left(\epsilon _1+2 \epsilon _2+\epsilon _3\right) \sh\left(2
   \epsilon _3-\epsilon _2\right)}{2 \sh\left(\epsilon _2\right)
   \sh\left(2 \epsilon _2\right) \sh\left(\epsilon _3\right)
   \sh\left(-\epsilon _1-2 \epsilon _2+\epsilon _3\right)
   \sh\left(\epsilon _3-\epsilon _2\right) \sh\left(-\epsilon
   _1-\epsilon _2+\epsilon _3\right)},\\
   (0,\epsilon_{3}),&\qquad-\frac{\ch\left(\epsilon _1\right) \sh\left(\epsilon _1+\epsilon
   _2+2 \epsilon _3\right) \sh\left(\epsilon _1+\epsilon _2+\epsilon
   _3\right)}{4 \sh\left(\epsilon _1\right) \sh\left(\epsilon
   _3\right)}-\frac{\ch\left(\epsilon _2\right) \sh\left(\epsilon
   _1+\epsilon _2+2 \epsilon _3\right) \sh\left(\epsilon _1+\epsilon
   _2+\epsilon _3\right)}{4 \sh\left(\epsilon _2\right)
   \sh\left(\epsilon _3\right)}\\
   &\qquad-\frac{\ch\left(\epsilon _1+\epsilon
   _2\right) \sh\left(\epsilon _1+\epsilon _2+2 \epsilon _3\right)
   \sh\left(\epsilon _1+\epsilon _2+\epsilon _3\right)}{4
   \sh\left(\epsilon _1+\epsilon _2\right) \sh\left(\epsilon
   _3\right)}-\frac{\ch\left(2 \epsilon _3\right) \sh\left(\epsilon
   _1+\epsilon _2+2 \epsilon _3\right) \sh\left(\epsilon _1+\epsilon
   _2+\epsilon _3\right)}{4 \sh\left(\epsilon _3\right) \sh\left(2
   \epsilon _3\right)}\\
   &\qquad +\frac{\ch\left(\epsilon _3-\epsilon _1\right)
   \sh\left(\epsilon _1+\epsilon _2+2 \epsilon _3\right)
   \sh\left(\epsilon _1+\epsilon _2+\epsilon _3\right)}{4
   \sh\left(\epsilon _3\right) \sh\left(\epsilon _3-\epsilon
   _1\right)}-\frac{\ch\left(\epsilon _1+\epsilon _3\right)
   \sh\left(\epsilon _1+\epsilon _2+2 \epsilon _3\right)
   \sh\left(\epsilon _1+\epsilon _2+\epsilon _3\right)}{4
   \sh\left(\epsilon _3\right) \sh\left(\epsilon _1+\epsilon
   _3\right)}\\
   &\qquad +\frac{\ch\left(\epsilon _3-\epsilon _2\right)
   \sh\left(\epsilon _1+\epsilon _2+2 \epsilon _3\right)
   \sh\left(\epsilon _1+\epsilon _2+\epsilon _3\right)}{4
   \sh\left(\epsilon _3\right) \sh\left(\epsilon _3-\epsilon
   _2\right)}+\frac{\ch\left(-\epsilon _1-\epsilon _2+\epsilon _3\right)
   \sh\left(\epsilon _1+\epsilon _2+2 \epsilon _3\right)
   \sh\left(\epsilon _1+\epsilon _2+\epsilon _3\right)}{4
   \sh\left(\epsilon _3\right) \sh\left(-\epsilon _1-\epsilon
   _2+\epsilon _3\right)}\\
   &\qquad -\frac{\ch\left(\epsilon _2+\epsilon _3\right)
   \sh\left(\epsilon _1+\epsilon _2+2 \epsilon _3\right)
   \sh\left(\epsilon _1+\epsilon _2+\epsilon _3\right)}{4
   \sh\left(\epsilon _3\right) \sh\left(\epsilon _2+\epsilon
   _3\right)}-\frac{\ch\left(\epsilon _1+2 \epsilon _3\right)
   \sh\left(\epsilon _1+\epsilon _2+2 \epsilon _3\right)
   \sh\left(\epsilon _1+\epsilon _2+\epsilon _3\right)}{4
   \sh\left(\epsilon _3\right) \sh\left(\epsilon _1+2 \epsilon
   _3\right)}\\
   &\qquad -\frac{\ch\left(\epsilon _2+2 \epsilon _3\right)
   \sh\left(\epsilon _1+\epsilon _2+2 \epsilon _3\right)
   \sh\left(\epsilon _1+\epsilon _2+\epsilon _3\right)}{4
   \sh\left(\epsilon _3\right) \sh\left(\epsilon _2+2 \epsilon
   _3\right)}-\frac{3 \ch\left(\epsilon _3\right) \sh\left(\epsilon
   _1+\epsilon _2+2 \epsilon _3\right) \sh\left(\epsilon _1+\epsilon
   _2+\epsilon _3\right)}{4 \sh\left(\epsilon
   _3\right)^2}\\
   &\qquad +\frac{\ch\left(\epsilon _1+\epsilon _2+2 \epsilon
   _3\right) \sh\left(\epsilon _1+\epsilon _2+\epsilon _3\right)}{2
   \sh\left(\epsilon _3\right)}+\frac{\ch\left(\epsilon _1+\epsilon
   _2+\epsilon _3\right) \sh\left(\epsilon _1+\epsilon _2+2 \epsilon
   _3\right)}{2 \sh\left(\epsilon _3\right)}
\eea}

The weight factors are summarized as follows.
\begin{itemize}
    \item \textbf{Level 1}:
    \bea
    \mathcal{Z}_{(\eps_3)}\rightarrow 2
    \eea
    \item \textbf{Level 2}:
    \bea
\mathcal{Z}_{(\eps_3,\eps_3)}\rightarrow 2,\quad \mathcal{Z}_{(\eps_3,0)}\rightarrow 2,\quad \mathcal{Z}_{(\eps_3,\eps_3-\eps_1)}\rightarrow 1,\quad \mathcal{Z}_{(\eps_3,\eps_3-\eps_2)}\rightarrow 1
    \eea
    The configuration $(\eps_3,\eps_3)$ gives the weight 2 but including the Weyl group factor $1/2$, the contribution to the unrefined vertex is $1$.
    
    \item \textbf{Level 3}:
    \bea
    \mathcal{Z}_{(\eps_3,\eps_3,0)}\rightarrow 4 ,\quad \mathcal{Z}_{(\eps_3,\eps_3,\eps_3-\eps_1)}\rightarrow 2,\quad \mathcal{Z}_{(\eps_3,\eps_3,\eps_3-\eps_2)}\rightarrow 2,\quad \mathcal{Z}_{(\eps_3,\eps_3-\eps_1,\eps_3-2\eps_1)}\rightarrow 1,\\
\mathcal{Z}_{(\eps_3,0,\eps_3-\eps_1)}\rightarrow 1,\quad \mathcal{Z}_{(\eps_3,0,\eps_3-\eps_2)}\rightarrow 1,\quad \mathcal{Z}_{(\eps_3,0,-\eps_3)}\rightarrow 1,\quad \mathcal{Z}_{(\eps_3,\eps_3-\eps_2,\eps_3-2\eps_2)}\rightarrow 1 
    \eea
    Including the Weyl group factor $1/6$, and the multiplicity $3$ for the configurations $(\eps_3,\eps_3,0),\,(\eps_3,\eps_3,\eps_3-\eps_{1,2})$, they give the weight $2,1,1$, respectively in the computation of the unrefined vertex. Thus, at level $3$, they contribute as $9q^{3}$.
\end{itemize}

\paragraph{Unrefined vertex}
\bea
\wtC_{\yng(1,1)\,\yng(2)\,\yng(1)}(q)&=\frac{1-q+q^3-q^4+2q^6-q^7-q^8+q^9}{(1-q)^{3}(1-q^{2})^{2}}\\
&=1+2q  +5q^2+9q^3+16q^4+\cdots
\eea
\paragraph{Refined vertex}
\bea
\wtC_{\yng(1,1)\,\yng(2)\,\yng(1)}(t,q)&=1+(q +t)+(q^2+3qt+t^2)+(4q^2t+4qt^2+t^3)+\cdots
\cdots
\eea
\paragraph{Macdonald refined vertex}

\bea
\wtM_{\yng(1,1)\,\yng(2)\,\yng(1)}(x,y;q,t)&=1+\left(\frac{1-t}{1-q}x+\frac{(1+q)^{2}(1-q)(1-t)}{(1-qt)^{2}}y\right)\\
&+\left(\frac{(1-t)(1-qt)}{(1-q)(1-q^2)}x^2+\frac{(3qt-q+t-3)(q+1)(t-1)}{(1-qt)^2}xy+\frac{(1-q)(1-q^2)}{(1-t)(1-qt)}y^2\right)+\cdots
\eea

\section{\texorpdfstring{$A_{1}$}{A1} \texorpdfstring{$qq$}{qq}-character and collision limit}\label{app-sec:A1qqcharacter-collisionlimit}
In this section, we review the $A_{1}$ $qq$-character associated with the so-called $q$-deformed Virasoro algebra and its collision limit. For the $q$-Virsoro algebra, see the original papers \cite{Shiraishi:1995rp,Awata:1996dx,Awata:1995zk}. For the $qq$-character perspective, see \cite{Nekrasov:2015wsu,Nekrasov:2016ydq}. For the quiver W-algebra perspective, see \cite{Kimura:2015rgi,Kimura:2019hnw,Kimura:2020jxl}.

\subsection{\texorpdfstring{$A_{1}$}{A1} \texorpdfstring{$qq$}{qq}-character and \texorpdfstring{$q$}{q}-Virasoro algebra}
\paragraph{Partition function}The partition function is defined as
\bea
\mathcal{Z}_{k}&=\frac{1}{k!}\left(\frac{\sh(-\epsilon_{12})}{\sh(-\epsilon_{1})\sh(-\epsilon_{2})}\right)^{k}\oint_{\eta_{0}} \prod_{I=1}^{k}\frac{d\phi_{I}}{2\pi i}\prod_{I=1}^{k}\frac{\sh(\phi_{I}-\fra-\epsilon_{1})\sh(-\phi_{I}+\fra+\epsilon_{2})}{\sh(\phi_{I}-\fra)\sh(\fra-\phi_{I}+\epsilon_{12})}\\
&\qquad \times \prod_{I\neq J}\frac{\sh(\phi_{I}-\phi_{J})\sh(\phi_{I}-\phi_{J}-\epsilon_{12})}{\sh(\phi_{I}-\phi_{J}-\epsilon_{1})\sh(\phi_{I}-\phi_{J}-\epsilon_{2})}\\
&=\frac{1}{k!}\left(\frac{(1-q_{12})}{(1-q_{1})(1-q_{2})}\right)^{k}\oint_{\eta_{0}}\prod_{I=1}^{k}\frac{dx_{I}}{2\pi i x_{I}}\prod_{I=1}^{k}\mathscr{S}_{12}\left(\frac{x}{x_{I}}\right)\prod_{I\neq J}\mathscr{S}_{12}\left(\frac{x_{I}}{x_{J}}\right)^{-1}.
\eea
Based on the JK-formalism, the poles picked up are
\bea
\phi_{I}=\fra,\quad \phi_{I}-\phi_{J}=\epsilon_{1,2}.
\eea
Taking $\phi_{1}=\fra$, one can see that the poles $\phi_{2}=\fra+\epsilon_{1,2}$ are canceled by the numerator $\sh(\phi_{I}-\fra-\epsilon_{1})\sh(-\phi_{I}+\fra+\epsilon_{2})$. The pole $\phi_{2}=\fra=\phi_{1}$ is also canceled by the numerator $\sh(\phi_{I}-\phi_{J})$. Thus, for $k\geq 2$, the partition function vanishes. For $k=1$, evaluating the residue, we have
\bea
\mathcal{Z}_{1}=1.
\eea
Therefore, the partition function is
\bea
\mathcal{Z}[\fq]=1+\fq .
\eea

\paragraph{Vertex operators}Let us introduce the following free boson\footnote{We use the same notation for the operators as in the main text but note that the OPE formulas are different.}
\bea
\relax[\mathsf{a}_{n},\mathsf{a}_{m}]=-\frac{1}{n}\delta_{n+m,0}(1-q_{1}^{n})(1-q_{2}^{n})(1+q_{12}^{-n}).
\eea
The root vertex operator is defined as
\bea
\mathsf{A}(x)=\mathsf{a}_{0}(x):\exp\left(\sum_{n\neq 0}\mathsf{a}_{n}x^{-n}\right):.
\eea
We also introduce the following vertex operators 
\bea
\mathsf{Y}(x)=\mathsf{y}_{0}(x):\exp\left(\sum_{n\neq 0}\mathsf{y}_{n}x^{-n}\right):,\quad \mathsf{S}_{a}(x)=\mathsf{s}_{a,0}(x):\exp\left(\sum_{n\neq 0}\mathsf{s}_{a,n}x^{-n}\right):,\quad a=1,2
\eea
where
\bea
\mathsf{y}_{n}=\frac{\mathsf{a}_{n}}{(1+q_{12}^{n})},\quad \mathsf{s}_{a,n}=\frac{\mathsf{a}_{n}}{1-q_{a}^{-n}}.
\eea
The operators $\mathsf{a}_{0}(x),\mathsf{y}_{a}(x),\mathsf{s}_{a,0}(x)$ are zero-modes.

The OPEs are
\bea
\mathsf{Y}(x)\mathsf{S}_{1}(x')&=\frac{1-q_{12}^{-1}x'/x}{1-q_{1}^{-1}x'/x}:\mathsf{Y}(x)\mathsf{S}_{1}(x'):,\quad \mathsf{S}_{1}(x')\mathsf{Y}(x)=q_{2}^{-1}\frac{1-q_{12}x/x'}{1-q_{1}x/x'}:\mathsf{Y}(x)\mathsf{S}_{1}(x'):,\\
\mathsf{Y}(x)\mathsf{S}_{2}(x')&=\frac{1-q_{12}^{-1}x'/x}{1-q_{2}^{-1}x'/x}:\mathsf{Y}(x)\mathsf{S}_{2}(x'):,\quad \mathsf{S}_{2}(x')\mathsf{Y}(x)=q_{1}^{-1}\frac{1-q_{12}x/x'}{1-q_{2}x/x'}:\mathsf{Y}(x)\mathsf{S}_{1}(x'):,\\
\mathsf{A}(x)\mathsf{S}_{1}(x')&=\frac{(1-x'/x)(1-q_{12}^{-1}x'/x)}{(1-q_{2}x'/x)(1-q_{1}^{-1}x'/x)}:\mathsf{A}(x)\mathsf{S}_{1}(x'):,\quad \mathsf{S}_{1}(x')\mathsf{A}(x)=q_{2}^{-2}\frac{(1-x/x')(1-q_{12}x/x')}{(1-q_{2}^{-1}x/x')(1-q_{1}x/x')}:\mathsf{A}(x)\mathsf{S}_{1}(x'):,\\
\mathsf{A}(x)\mathsf{S}_{2}(x')&=\frac{(1-x'/x)(1-q_{12}^{-1}x'/x)}{(1-q_{1}x'/x)(1-q_{2}^{-1}x'/x)}:\mathsf{A}(x)\mathsf{S}_{2}(x'):,\quad \mathsf{S}_{2}(x')\mathsf{A}(x)=q_{1}^{-2}\frac{(1-x/x')(1-q_{12}x/x')}{(1-q_{1}^{-1}x/x')(1-q_{2}x/x')}:\mathsf{A}(x)\mathsf{S}_{2}(x'):,\\
\mathsf{A}(x)\mathsf{Y}(x')&=\mathscr{S}_{12}\left(\frac{x'}{x}\right)^{-1}:\mathsf{A}(x)\mathsf{Y}(x'):,\quad \mathsf{Y}(x')\mathsf{A}(x)=\mathscr{S}_{12}\left(q_{12}^{-1}\frac{x}{x'}\right)^{-1}:\mathsf{A}(x)\mathsf{Y}(x'):
\eea
where the zero-modes are defined as
\bea
\,&\mathsf{a}_{0}(x)=e^{\mathsf{t}_{0}},\quad \mathsf{y}_{0}(x)=e^{\mathsf{t}_{0}/2},\quad \mathsf{s}_{a,0}(x)=x^{\mathsf{s}_{a,0}}e^{\tilde{\mathsf{s}}_{a,0}},\\
&\mathsf{s}_{1,0}=-(\log q_{1})^{-1}\mathsf{t}_{0}+\frac{\log q_{2}}{\log q_{1}},\quad \mathsf{s}_{2,0}=-(\log q_{2})^{-1}\mathsf{t}_{0}+\frac{\log q_{1}}{\log q_{2}},\quad \tilde{\mathsf{s}}_{1,0}=-2\log q_{2}\partial_{\mathsf{t}},\quad \tilde{\mathsf{s}}_{2,0}=-2\log q_{1}\partial_{\mathsf{t}}.
\eea
Under these zero-modes, we also have the following relations:
\bea\label{eq:A1relation}
\mathsf{A}(x)={q_{2}:\frac{\mathsf{S}_{1}(x)}{\mathsf{S}_{1}(q_{1}x)}:}={q_{1}:\frac{\mathsf{S}_{2}(x)}{\mathsf{S}_{2}(q_{2}x)}:}={:\mathsf{Y}(x)\mathsf{Y}(q_{12}^{-1}x):}.
\eea

\paragraph{Free field realization of contour integral} The contour integral formula has the following free field realization
\bea
\mathcal{Z}_{k}=\frac{1}{k!}\left(\frac{(1-q_{12})}{(1-q_{1})(1-q_{2})}\right)^{k}\oint_{\eta_{0}}\prod_{I=1}^{k}\frac{dx_{I}}{2\pi i x_{I}}\left\langle \prod_{I=1}^{k}\mathsf{A}(x_{I})^{-1}\mathsf{Y}(x)\right\rangle.
\eea
It is natural to define the following $qq$-character:
\bea
\mathsf{T}(x)&=\sum_{k=0}^{\infty}\fq^{k}\frac{1}{k!}\left(\frac{(1-q_{12})}{(1-q_{1})(1-q_{2})}\right)^{k}\oint_{\eta_{0}}\prod_{I=1}^{k}\frac{dx_{I}}{2\pi i x_{I}}\prod_{I=1}^{k}\mathsf{A}(x_{I})^{-1}\mathsf{Y}(x)\\
&=\mathsf{Y}(x)+\fq :\mathsf{Y}(x)\mathsf{A}(x)^{-1}:
\eea
where we evaluated the poles in the last line. This operator is the generator of the $q$-Virasoro algebra. Note that we have
\bea
\langle  \mathsf{T}(x)\rangle =\mathcal{Z}[\fq]
\eea
which is the BPS/CFT correspondence.

\paragraph{Commutativity with the screening charge}
The screening charges are defined as
\bea
\mathscr{Q}_{a}(x)=\sum_{k\in \mathbb{Z}}\mathsf{S}_{a}(xq_{a}^{k}),\quad a=1,2.
\eea
Let us focus on the commutativity with $\mathscr{Q}_{2}(x')$.

The OPEs give the following commutation relations:
\bea
\relax[\mathsf{Y}(x),\mathsf{S}_{2}(x')]&=(1-q_{1}^{-1})\delta\left(q_{2}^{-1}\frac{x'}{x}\right):\mathsf{Y}(x)\mathsf{S}_{2}(q_{2}x):,\\
[:\mathsf{Y}(x)\mathsf{A}^{-1}(x):,\mathsf{S}_{2}(x')]&=(1-q_{1})\delta\left(\frac{x'}{x}\right):\mathsf{Y}(x)\mathsf{A}^{-1}(x)\mathsf{S}_{2}(x):.
\eea
Using \eqref{eq:A1relation}, we have
\bea
\relax[\mathsf{T}(x),\mathsf{S}_{2}(x')]=(1-q_{1}^{-1})\left(\delta\left(q_{2}^{-1}x'/x\right)-\delta\left(x'/x\right)\right):\mathsf{Y}(x)\mathsf{S}_{2}(q_{2}x):
\eea
where we set $\fq=1$. The topological term can be included by modifying the zero-modes, but we omit the discussion. Therefore, we finally have
\bea
\relax[\mathsf{T}(x),\mathscr{Q}_{2}(x')]=0.
\eea

\paragraph{Higher rank $qq$-character}
Higher rank $qq$-characters can be obtained by choosing the highest weight to have multiple numbers of the $\mathsf{Y}$-operators. Let us focus on the rank 2 $qq$-character. Starting from $:\mathsf{Y}(x)\mathsf{Y}(x'):$ and studying the commutativity with the screening charge, we can see that the following is the rank 2 $qq$-character:
\bea\label{eq:rank2qqcharacter}
:\mathsf{Y}(x)\mathsf{Y}(x'):+\mathscr{S}_{12}\left(q_{12}^{-1}\frac{x'}{x}\right):\frac{\mathsf{Y}(x)\mathsf{Y}(x')}{\mathsf{A}(x')}:+\mathscr{S}_{12}\left(\frac{x'}{x}\right):\frac{\mathsf{Y}(x)\mathsf{Y}(x')}{\mathsf{A}(x)}:+:\frac{\mathsf{Y}(x)\mathsf{Y}(x')}{\mathsf{A}(x)\mathsf{A}(x')}:.
\eea

Another way to obtain this is to fuse the $A_{1}$ $qq$-characters:
\bea
\mathsf{f}\left(\frac{x_{2}}{x_{1}}\right)\mathsf{T}(x_{1})\mathsf{T}(x_{2}),\quad \mathsf{f}(x)=\exp\left(\sum_{n=1}^{\infty}\frac{1}{n}\frac{(1-q_{1}^{n})(1-q_{2}^{n})}{(1+q_{12}^{n})}x^{n}\right).
\eea
Obviously, since $\mathsf{T}(x)$ commutes with the screening charges, the fused $qq$-character also commutes with them.

In the contour integral formalism, one will simply replace $\mathsf{Y}(x)$ to $:\mathsf{Y}(x_{1})\mathsf{Y}(x_{2}):$ and evaluate the contour integral.

\subsection{Collision limit and higher order poles}

\paragraph{Contour integrals with higher order poles } 

Let us start from a toy model:
\bea
\mathcal{Z}_{k}&=\frac{1}{k!}\left(\frac{\sh(-\epsilon_{12})}{\sh(-\epsilon_{1})\sh(-\epsilon_{2})}\right)^{k}\oint_{\eta_{0}} \prod_{I=1}^{k}\frac{d\phi_{I}}{2\pi i}\prod_{I=1}^{k}\left(\frac{\sh(\phi_{I}-\fra-\epsilon_{1})\sh(-\phi_{I}+\fra+\epsilon_{2})}{\sh(\phi_{I}-\fra)\sh(\fra-\phi_{I}+\epsilon_{12})}\right)^{2}\\
&\qquad \times \prod_{I\neq J}\frac{\sh(\phi_{I}-\phi_{J})\sh(\phi_{I}-\phi_{J}-\epsilon_{12})}{\sh(\phi_{I}-\phi_{J}-\epsilon_{1})\sh(\phi_{I}-\phi_{J}-\epsilon_{2})}\\
&=\frac{1}{k!}\left(\frac{(1-q_{12})}{(1-q_{1})(1-q_{2})}\right)^{k}\oint_{\eta_{0}}\prod_{I=1}^{k}\frac{dx_{I}}{2\pi i x_{I}}\prod_{I=1}^{k}\mathscr{S}_{12}\left(\frac{x}{x_{I}}\right)^{2}\prod_{I\neq J}\mathscr{S}_{12}\left(\frac{x_{I}}{x_{J}}\right)^{-1}.
\eea
We have a second order pole at $\phi_{I}=\fra$. For level 1, the pole is
\bea
\phi_{1}=\fra
\eea
and evaluating the residue gives
\bea
\mathcal{Z}_{1}=\frac{1+q_{1}+q_{2}-6q_{1}q_{2}+q_{1}^{2}q_{2}+q_{1}q_{2}^{2}+q_{12}^{2}}{(1-q_{12})^{2}}\eqqcolon \mathfrak{c}(q_{1},q_{2})
\eea
where this residue is obtained by taking the derivative of the integrand. 

An interesting phenomenon occurs at the second level. Using the JK-residue prescription, we may choose the poles $\phi_{2}=\fra,\,\,\phi_{1}+\epsilon_{1,2}$. Since the level one is a second order pole, one will see that the pole at $\phi_{2}=\fra$ does \textit{not} vanish:
\bea
(\phi_{1},\phi_{2})=(\fra,\fra),\quad \mathcal{Z}_{2}=1.
\eea
Using the terminology of the main text, this corresponds to placing an ultra-heavy box at the origin.

For level $k\geq 3$, one will see that the contour integral vanishes and thus we have
\bea
\mathcal{Z}[\fq]=1+\left(\frac{1+q_{1}+q_{2}-6q_{1}q_{2}+q_{1}^{2}q_{2}+q_{1}q_{2}^{2}+q_{12}^{2}}{(1-q_{12})^{2}}\right)\fq+\fq^{2}.
\eea

Another interesting phenomenon is that at the NS-limit $q_{1}\rightarrow 1$ or $q_{2}\rightarrow 1$, the one-instanton contribution becomes
\bea
\mathcal{Z}_{1}\longrightarrow 2.
\eea
Namely, at level one, we may place the light and heavy boxes at the origin, which gives the two state degeneracy. This is familiar to what happened in the three-legs computation in section~\ref{sec:PTthreelegs} where at the limit\footnote{This limit is the NS-limit of the D6 partition functions.} $q_{4}\rightarrow 1$, the partition function becomes 2, instead of 1.

The free field realization of the contour integral formula is
\bea
\mathcal{Z}_{k}=\frac{1}{k!}\left(\frac{(1-q_{12})}{(1-q_{1})(1-q_{2})}\right)^{k}\oint_{\eta_{0}}\prod_{I=1}^{k}\frac{dx_{I}}{2\pi i x_{I}}\left\langle \prod_{I=1}^{k}\mathsf{A}(x_{I})^{-1}:\mathsf{Y}(x)^{2}:\right\rangle.
\eea
This free field realization makes us want to define the following $qq$-character:
\bea
\mathsf{T}^{(2)}(x)&\coloneqq \sum_{k=0}^{\infty}\fq^{k}\frac{1}{k!}\left(\frac{(1-q_{12})}{(1-q_{1})(1-q_{2})}\right)^{k}\oint_{\eta_{0}}\prod_{I=1}^{k}\frac{dx_{I}}{2\pi i x_{I}} \prod_{I=1}^{k}\mathsf{A}(x_{I})^{-1}:\mathsf{Y}(x)^{2}:\\
&=\sum_{k=0}^{\infty}\frac{\fq^{k}}{k!}\left(\frac{(1-q_{12})}{(1-q_{1})(1-q_{2})}\right)^{k}\oint_{\eta_{0}}\prod_{I=1}^{k}\frac{dx_{I}}{2\pi i x_{I}}\prod_{I=1}^{k}\mathscr{S}_{12}\left(\frac{x}{x_{I}}\right)^{2}\prod_{I\neq J}\mathscr{S}_{12}\left(\frac{x_{I}}{x_{J}}\right)^{-1}:\prod_{I=1}^{k}\mathsf{A}(x_{I})^{-1}\mathsf{Y}(x)^{2}:.
\eea
The contour integrand includes some rational function and additionally a vertex operator term.

Given this contour integral, one would like to evaluate the poles and expand the formula. However, this time we have a second order pole. The existence of this second order pole makes the situation complicated. When there is only a single order pole, the residue of the contour integrand is evaluated by taking the residue of the rational function part and simply inserting the information of the poles to the vertex operator part. This is because after taking the OPE, the vertex operator part is not singular anymore. When there is a second order pole, when taking the residue, we need to take derivatives of the contour integrand. Such derivative also acts on the vertex operator part.

Let us see this explicitly. For level one $k=1$, the residue is performed as
\bea
&\frac{(1-q_{12})}{(1-q_{1})(1-q_{2})}\times \underset{x_{1}=x}{\Res}\left(x_{1}^{-1}\mathscr{S}_{12}\left(\frac{x}{x_{1}}\right)^{2}:\mathsf{Y}(x)^{2}\mathsf{A}^{-1}(x_{1}):\right)\\
=&\frac{(1-q_{12})}{(1-q_{1})(1-q_{2})}\times\lim_{x_{1}\rightarrow x} \frac{\partial}{\partial x_{1}}\left( x_{1}^{-1}(x_{1}-x)^{2}\mathscr{S}_{12}\left(\frac{x}{x_{1}}\right)^{2} :\mathsf{Y}(x)^{2}\mathsf{A}^{-1}(x_{1}): \right)\\
=&\frac{(1-q_{12})}{(1-q_{1})(1-q_{2})}\times \left\{\left.\frac{\partial}{\partial x_{1}}\left(x_{1}^{-1}(x_{1}-x)^{2}\mathscr{S}_{12}\left(\frac{x}{x_{1}}\right)^{2}\right)\right|_{x_{1}=x}:\mathsf{Y}(x)^{2}\mathsf{A}^{-1}(x):\right.\\
&\qquad \left.+\left.\left(x_{1}^{-1}(x_{1}-x)^{2}\mathscr{S}_{12}\left(\frac{x}{x_{1}}\right)^{2}\right)\right|_{x_{1}=x}:\mathsf{Y}(x)^{2}\partial_{x}\mathsf{A}^{-1}(x):\right\}\\
=&\mathfrak{c}(q_{1},q_{2}) :\frac{\mathsf{Y}^{2}(x)}{\mathsf{A}(x)}:+\frac{(1-q_{1})(1-q_{2})}{(1-q_{12})}:\mathsf{Y}(x)^{2}\partial_{\log x}\mathsf{A}^{-1}(x):
\eea
where note that $\partial_{\log x}=x\partial_{x}$.

For level two, schematically, the integrand is
\bea
\frac{(x_{1}-x_{2})^{2}f_{0}(x_{1},x_{2})}{(1-v/x_{1})^{2}(1-v/x_{2})^{2}}
\eea
where $f_{0}(x_{1},x_{2})$ is a product of some rational function and the vertex operator part, which has no poles and zeros at $x_{1}=x_{2}=x$. After taking the residue at $x_{1}=x$:
\bea
\underset{x_{1}=x}{\Res}\frac{(x_{1}-x_{2})^{2}f_{0}(x_{1},x_{2})}{(x_{1}-x)^{2}(x_{2}-x)^{2}}&=\left.\partial_{x_{1}}\left(\frac{(x_{1}-x_{2})^{2}f_{0}(x_{1},x_{2})}{(x_{2}-x)^{2}}\right)\right|_{x_{1}=x}\\
&=\left.\frac{2(x_{1}-x_{2})f_{0}(x_{1},x_{2})}{(x_{2}-x)^{2}}\right|_{x_{1}=x}+\partial_{x_{1}}f_{0}(x_{1},x_{2})\\
&=-\frac{2f_{0}(x,x_{2})}{(x_{2}-x)}+\partial_{x_{1}}f_{0}(x_{1},x_{2})
\eea
Further taking the residue at $x_{2}=x$, no derivative of the vertex operator part appears. After a detailed computation, one obtains
\bea
:\frac{\mathsf{Y}(x)^{2}}{\mathsf{A}^{2}(x)}:
\eea
where the coefficient is simply 1.

The $qq$-character $\mathsf{T}^{(2)}(x)$ is then given as
\bea
\mathsf{T}^{(2)}(x)=:\mathsf{Y}^{2}(x):+\fq\left(\mathfrak{c}(q_{1},q_{2}):\frac{\mathsf{Y}^{2}(x)}{\mathsf{A}(x)}:+\frac{(1-q_{1})(1-q_{2})}{(1-q_{12})}:\mathsf{Y}(x)^{2}\partial_{\log x}\mathsf{A}^{-1}(x):\right)+\fq^{2}:\frac{\mathsf{Y}(x)^{2}}{\mathsf{A}^{2}(x)}:.
\eea
Note that the vacuum expectation value does not coincide with the partition function anymore:
\bea
\langle \mathsf{T}^{(2)}(x) \rangle  \neq \mathcal{Z}[\fq].
\eea
However, the coefficients of the $qq$-character without derivative terms still contain the information of the partition function.

\paragraph{Collision limit of $qq$-characters}
 Let us derive the $qq$-character whose highest weight is the $:\mathsf{Y}^{2}(x):$ by taking the collision limit $x'\rightarrow x$ in \eqref{eq:rank2qqcharacter}. The first and last term is easy since they simply become 
 \bea
:\mathsf{Y}(x)^{2}:,\quad :\frac{\mathsf{Y}(x)^{2}}{\mathsf{A}(x)^{2}}:,
 \eea
respectively. For the second and third terms, since $\mathscr{S}_{12}(z)$ has a pole at $z=1$, naively taking the limit $z\rightarrow 1$ is singular and thus much care is necessary.

Let us see what will happen explicitly for the second and third terms. Let $z=x'/x$ and consider the limit $z=e^{\varepsilon}\rightarrow 1$. The second term is
\bea
\mathscr{S}_{12}(z^{-1}):\frac{\mathsf{Y}(x)\mathsf{Y}(zx)}{\mathsf{A}(zx)}:&=\mathscr{S}_{12}(z^{-1})\left(:\frac{\mathsf{Y}(x)^{2}}{\mathsf{A}(x)}:+(z-1):\mathsf{Y}(x)\partial_{\log x}(\mathsf{Y}(x)\mathsf{A}^{-1}(x)):+\mathcal{O}((z-1)^{2})\right)\\
&=\mathscr{S}_{12}(z^{-1}):\frac{\mathsf{Y}(x)^{2}}{\mathsf{A}(x)}:+z\frac{(1-q_{1}z^{-1})(1-q_{2}z^{-1})}{(1-q_{12}z^{-1})}:\mathsf{Y}(x)\partial_{\log x}(\mathsf{Y}(x)\mathsf{A}^{-1}(x)):+\mathcal{O}(z-1),\\
\mathscr{S}_{12}(z):\frac{\mathsf{Y}(x)\mathsf{Y}(zx)}{\mathsf{A}(x)}:&=\mathscr{S}_{12}(z)\left(:\frac{\mathsf{Y}(x)^{2}}{\mathsf{A}(x)}:+(z-1):\frac{\mathsf{Y}(x)}{\mathsf{A}(x)}\partial_{\log x}\mathsf{Y}(x):+\mathcal{O}((z-1)^{2})\right)\\
&=\mathscr{S}_{12}(z):\frac{\mathsf{Y}(x)^{2}}{\mathsf{A}(x)}:-\frac{(1-q_{1}z)(1-q_{2}z)}{(1-q_{12}z)}:\frac{\mathsf{Y}(x)}{\mathsf{A}(x)}\partial_{\log x}\mathsf{Y}(x):+\mathcal{O}(z-1)
\eea
where we used
\bea
f(zx)&=f(x+\varepsilon x)=f(x)+\varepsilon x\partial_{x}f(x)+\mathcal{O}(\varepsilon^2)\\
&=f(x)+(z-1)\partial_{\log x}f(x)+\mathcal{O}((z-1)^2).
\eea
Taking the limit $z\rightarrow 1$ gives
\bea\label{eq:A1qq-collision}
&\mathscr{S}_{12}(z^{-1}):\frac{\mathsf{Y}(x)\mathsf{Y}(zx)}{\mathsf{A}(zx)}:+\mathscr{S}_{12}(z):\frac{\mathsf{Y}(x)\mathsf{Y}(zx)}{\mathsf{A}(x)}:\\
\xrightarrow{z\rightarrow 1}&\left(\mathscr{S}_{12}(z)+\mathscr{S}_{12}(z^{-1})\right):\frac{\mathsf{Y}(x)^{2}}{\mathsf{A}(x)}:+\frac{(1-q_{1})(1-q_{2})}{(1-q_{12})}:\mathsf{Y}(x)\partial_{\log x}\mathsf{A}^{-1}(x):\\
&=\mathfrak{c}(q_{1},q_{2}):\frac{\mathsf{Y}(x)^{2}}{\mathsf{A}(x)}:+\frac{(1-q_{1})(1-q_{2})}{(1-q_{12})}:\mathsf{Y}(x)\partial_{\log x}\mathsf{A}^{-1}(x),
\eea
where in the last line, we used
\bea\label{eq:A1qq-collision-part}
\mathscr{S}_{12}(z)&\simeq \frac{(1-q_{1})(1-q_{2})}{(1-q_{12})}\frac{1}{1-z}+\frac{q_{1}+q_{2}-3q_{12}+q_{12}^{2}}{(1-q_{12})^{2}}+\cdots,\\
\mathscr{S}_{12}(z^{-1})&\simeq -\frac{(1-q_{1})(1-q_{2})}{(1-q_{12})}\frac{1}{1-z}+\frac{1-3q_{12}+q_{1}^{2}q_{2}+q_{1}q_{2}^{2}}{(1-q_{12})^{2}}+\cdots ,\\
\mathfrak{c}(q_{1},q_{2})&=\,\underset{z=1}{\Res}\frac{\mathscr{S}_{12}(z)}{z-1}+\underset{z=1}{\Res}\frac{\mathscr{S}_{12}(z^{-1})}{z-1}.
\eea

Therefore, we have the $qq$-character
\bea\label{eq:A1qq-collision-part2}
\mathsf{T}^{(2)}(x)&={:\mathsf{Y}^{2}(x):}+\frac{q_{1}+q_{2}-3q_{12}+q_{12}^{2}}{(1-q_{12})^{2}}:\frac{\mathsf{Y}^{2}(x)}{\mathsf{A}(x)}:\\
&+\left(\frac{1-3q_{12}+q_{1}^{2}q_{2}+q_{1}q_{2}^{2}}{(1-q_{12})^{2}}:\frac{\mathsf{Y}^{2}(x)}{\mathsf{A}(x)}:+\frac{(1-q_{1})(1-q_{2})}{(1-q_{12})}:\mathsf{Y}(x)^{2}\partial_{\log x}\mathsf{A}^{-1}(x):\right)+:\frac{\mathsf{Y}(x)^{2}}{\mathsf{A}^{2}(x)}:\\
&={:\mathsf{Y}^{2}(x):}+\left(\mathfrak{c}(q_{1},q_{2}):\frac{\mathsf{Y}^{2}(x)}{\mathsf{A}(x)}:+\frac{(1-q_{1})(1-q_{2})}{(1-q_{12})}:\mathsf{Y}(x)^{2}\partial_{\log x}\mathsf{A}^{-1}(x):\right)+:\frac{\mathsf{Y}(x)^{2}}{\mathsf{A}^{2}(x)}:.
\eea
The topological term is always set $\fq=1$ for convenience.

It is natural to identify the partition functions of each configurations to be
\bea\label{eq:A1-partfunct-identify}
\mathcal{Z}[\varnothing]=1,\quad \mathcal{Z}[\Bbox_{L}]=\frac{q_{1}+q_{2}-3q_{12}+q_{12}^{2}}{(1-q_{12})^{2}},\\
\mathcal{Z}[\Bbox_{H}]=\frac{1-3q_{12}+q_{1}^{2}q_{2}+q_{1}q_{2}^{2}}{(1-q_{12})^{2}},\quad \mathcal{Z}[\Bbox_{U}]=1.
\eea
Note that all of the factors become $1$ at the limit $q_{2}\rightarrow1 $.


\paragraph{Commutation relation with the screening charge}
Compared to the $q$-Virasoro algebra, deriving the $qq$-character $\mathsf{T}^{(2)}(x)$ directly using the commutativity with the screening charge is difficult. In the following, we will only compute the commutativity of the vertex operators with the screening currents where second order poles appear. To obtain the commutativity with the full screening charge, we need to organize the vertex operators where derivative terms appear, but this part is difficult. We leave this for future work.

Again, we focus on the screening charge $\mathscr{Q}_{2}(x)$. We have
\bea
\,&\mathsf{Y}(x)^{2} \mathsf{S}_{2}(x')=\left(\frac{1-q_{12}^{-1}x'/x}{1-q_{2}^{-1}x'/x}\right)^{2}:\mathsf{Y}(x)^{2} \mathsf{S}_{2}(x'):,\quad \mathsf{S}_{2}(x')\mathsf{Y}(x)^{2}=q_{1}^{-2}\left(\frac{1-q_{12}x/x'}{1-q_{2}x/x'}\right)^{2}:\mathsf{Y}(x)^{2}\mathsf{S}_{2}(x'):,\\
&:\mathsf{Y}(x)^{2}\mathsf{A}(x)^{-1}: \mathsf{S}_{2}(x')=\frac{1-q_{1}x'/x}{1-x'/x}\frac{1-q_{12}^{-1}x'/x}{1-q_{2}^{-1}x'/x}:\mathsf{Y}(x)^{2}\mathsf{A}(x)^{-1} \mathsf{S}_{2}(x'):,\\
&\mathsf{S}_{2}(x'):\mathsf{Y}(x)^{2}\mathsf{A}(x)^{-1}: =\frac{1-q_{1}^{-1}x/x'}{1-x/x'}\frac{1-q_{12}x/x'}{1-q_{2}x/x'}:\mathsf{Y}(x)^{2}\mathsf{A}(x)^{-1} \mathsf{S}_{2}(x'):,\\
&:\mathsf{Y}(x)^{2}\partial_{\log x}\mathsf{A}(x)^{-1}: \mathsf{S}_{2}(x')=x\left(\frac{1-q_{12}^{-1}x'/x}{1-q_{2}^{-1}x'/x}\right)^{2}\partial_{x}\left(\frac{(1-q_{1}x'/x)(1-q_{2}^{-1}x'/x)}{(1-x'/x)(1-q_{12}^{-1}x'/x)}\right):\mathsf{Y}(x)^{2}\partial_{\log x}\mathsf{A}(x)^{-1} \mathsf{S}_{2}(x'):,\\
&\mathsf{S}_{2}(x'):\mathsf{Y}(x)^{2}\partial_{\log x}\mathsf{A}(x)^{-1}:=x \left(\frac{1-q_{12}x/x'}{1-q_{2}x/x'}\right)^{2}\partial_{x}\left(\frac{(1-q_{1}^{-1}x/x')(1-q_{2}x/x')}{(1-x/x')(1-q_{12}x/x')}\right) :\mathsf{Y}(x)^{2}\partial_{\log x}\mathsf{A}(x)^{-1} \mathsf{S}_{2}(x'):,  \\
&:\mathsf{Y}(x)^{2}\mathsf{A}(x)^{-2}: \mathsf{S}_{2}(x')=\left(\frac{1-q_{1}x'/x}{1-x'/x}\right)^{2}:\mathsf{Y}(x)^{2}\mathsf{A}(x)^{-2} \mathsf{S}_{2}(x'):,\\
&\mathsf{S}_{2}(x'):\mathsf{Y}(x)^{2}\mathsf{A}(x)^{-2}: =q_{1}^{2}\left(\frac{1-q_{1}^{-1}x/x'}{1-x/x'}\right)^{2}:\mathsf{Y}(x)^{2}\mathsf{A}(x)^{-2} \mathsf{S}_{2}(x'):.
\eea

Let us first consider the commutativity of $:\mathsf{Y}^{2}(x):$ and $\mathsf{S}_{2}(x')$. To deal with the double pole, we start from $:\mathsf{Y}(x)\mathsf{Y}(zx):$ and then take the limit $z\rightarrow 1$. We first have 
\bea
\,&:\mathsf{Y}(x)\mathsf{Y}(xz): \mathsf{S}_{2}(x')=\frac{1-q_{12}^{-1}x'/x}{1-q_{2}^{-1}x'/x}\frac{1-q_{12}^{-1}z^{-1}x'/x}{1-q_{2}^{-1}z^{-1}x'/x}:\mathsf{Y}(x)\mathsf{Y}(xz) \mathsf{S}_{2}(x'):,
\eea
which gives
\bea
\relax[:\mathsf{Y}(x)\mathsf{Y}(xz):, \mathsf{S}_{2}(x')]&=\delta\left(q_{2}^{-1}x'/x\right)\frac{(1-q_{1}^{-1})(1-q_{1}^{-1}z^{-1})}{(1-z^{-1})}:\mathsf{Y}(x)\mathsf{Y}(xz) \mathsf{S}_{2}(q_{2}x):\\
&+\delta\left(q_{2}^{-1}z^{-1}x'/x\right)\frac{(1-q_{1}^{-1})(1-q_{1}^{-1}z)}{(1-z)}:\mathsf{Y}(x)\mathsf{Y}(xz) \mathsf{S}_{2}(q_{2}zx):.
\eea
Expanding around $z=1$ and taking the limit $z\rightarrow 1$, we have
\bea
&\delta\left(\frac{x'}{q_{2}x}\right)\frac{(1-q_{1}^{-1})(1-q_{1}^{-1}z^{-1})}{(1-z^{-1})}:\mathsf{Y}(x)\mathsf{Y}(xz) \mathsf{S}_{2}(q_{2}x):\\
=&(1-q_{1}^{-1})\delta\left(\frac{x'}{q_{2}x}\right)\frac{(1-q_{1}^{-1}z^{-1})}{(1-z^{-1})}\left(:\mathsf{Y}(x)^{2}\mathsf{S}_{2}(q_{2}x):+(z-1):\mathsf{Y}(x)\partial_{\log x}\mathsf{Y}(x)\mathsf{S}_{2}(q_{2}x):\right)\\
= &\delta\left(\frac{x'}{q_{2}x}\right)\left(\frac{(1-q_{1}^{-1})(1-q_{1}^{-1}z^{-1})}{(1-z^{-1})}:\mathsf{Y}(x)^{2}\mathsf{S}_{2}(q_{2}x):+(1-q_{1}^{-1})^{2}:\mathsf{Y}(x)\partial_{\log x}\mathsf{Y}(x)\mathsf{S}_{2}(q_{2}x):\right)
\eea
and 
\bea
&\delta\left(\frac{x'}{zq_{2}x}\right)\frac{(1-q_{1}^{-1})(1-q_{1}^{-1}z)}{(1-z)}:\mathsf{Y}(x)\mathsf{Y}(xz) \mathsf{S}_{2}(q_{2}zx):\\
=&\left(\delta\left(\frac{x'}{q_{2}x}\right)+(z^{-1}-1)\left.\partial_{\log y}\delta(y)\right|_{y=x'/q_{2}x}\right)\frac{(1-q_{1}^{-1})(1-q_{1}^{-1}z)}{(1-z)}\\
&\times \left(:\mathsf{Y}(x)^{2}\mathsf{S}_{2}(q_{2}x):+(z-1):\mathsf{Y}(x)\partial_{\log x}(\mathsf{Y}(x)\mathsf{S}_{2}(q_{2}x)):\right)\\
=&\delta\left(\frac{x'}{q_{2}x}\right)\frac{(1-q_{1}^{-1})(1-q_{1}^{-1}z)}{(1-z)}:\mathsf{Y}(x)^{2}\mathsf{S}_{2}(q_{2}x)-\delta\left(\frac{x'}{q_{2}x}\right)(1-q_{1}^{-1})^{2}x:\mathsf{Y}(x)\partial_{x}(\mathsf{Y}(x)\mathsf{S}_{2}(q_{2}x)):\\
&+(1-q_{1}^{-1})^{2}\partial_{\log y}\delta(y)|_{y=x'/xq_{2}}:\mathsf{Y}(x)^{2}\mathsf{S}_{2}(q_{2}x):.
\eea
Therefore, using
\bea
\frac{(1-q_{1}^{-1}z^{-1})}{1-z^{-1}}+\frac{(1-q_{1}^{-1}z)}{1-z}=\frac{1-q_{1}^{-1}z-z+q_{1}^{-1}}{1-z}\longrightarrow 1+q_{1}^{-1}
\eea
we get
\bea
\relax [\mathsf{Y}(x)^{2},\mathsf{S}_{2}(x')]&=+(1-q_{1}^{-1})^{2}\left(\delta^{(1)}\left(\frac{x'}{q_{2}x}\right)-\delta\left(\frac{x'}{q_{2}x}\right)\right):\mathsf{Y}(x)^{2}\mathsf{S}_{2}(q_{2}x):+\delta\left(\frac{x'}{q_{2}x}\right)(1-q_{1}^{-2}):\mathsf{Y}(x)^{2}\mathsf{S}_{2}(q_{2}x):\\
&-\delta\left(\frac{x'}{q_{2}x}\right)(1-q_{1}^{-1})^{2}:\mathsf{Y}(x)^{2}\left.\partial_{\log y}\mathsf{S}_{2}(y)\right|_{y=q_{2}x}:\\
&=+(1-q_{1}^{-1})^{2}{\delta^{(1)}}\left(\frac{x'}{q_{2}x}\right):\mathsf{Y}(x)^{2}\mathsf{S}_{2}(q_{2}x):+\delta\left(\frac{x'}{q_{2}x}\right)2q_{1}^{-1}(1-q_{1}^{-1}):\mathsf{Y}(x)^{2}\mathsf{S}_{2}(q_{2}x):\\
&-\delta\left(\frac{x'}{q_{2}x}\right)(1-q_{1}^{-1})^{2}:\mathsf{Y}(x)^{2}\left.\partial_{\log y}\mathsf{S}_{2}(y)\right|_{y=q_{2}x}:
\eea
where we used \eqref{eq:logderivdelta-relation}.

A more simple way to derive this relation is to expand the contraction formula around $x'=q_{2}x$ directly:
\bea
\mathsf{Y}(x)^{2} \mathsf{S}_{2}(x')
&\simeq \frac{(1-q_{1}^{-1})^{2}}{(1-q_{2}^{-1}x'/x)^{2}}:\mathsf{Y}(x)^{2}\mathsf{S}_{2}(q_{2}x):-\frac{2q_{1}^{-1}(1-q_{1}^{-1})}{1-q_{2}^{-1}x'/x}:\mathsf{Y}(x)^{2}\mathsf{S}_{2}(q_{2}x):\\
&\qquad -\frac{(1-q_{1}^{-1})^{2}}{(1-q_{2}^{-1}x'/x)}:\mathsf{Y}(x)^{2}\left.\partial_{\log y}\mathsf{S}_{2}(y)\right|_{y=q_{2}x}:.
\eea
Combining with the computation of the different analytic region, the result is
\bea
\relax[\mathsf{Y}^{2}(x),\mathsf{S}_{2}(x')]&=+(1-q_{1}^{-1})^{2}{\delta^{(1)}}\left(\frac{q_{2}^{-1}x'}{x}\right):\mathsf{Y}(x)^{2}\mathsf{S}_{2}(q_{2}x):+2q_{1}^{-1}(1-q_{1}^{-1})\delta\left(\frac{q_{2}^{-1}x'}{x}\right):\mathsf{Y}(x)^{2}\mathsf{S}_{2}(q_{2}x):\\
&-(1-q_{1}^{-1})^{2}\delta\left(\frac{x'}{q_{2}x}\right):\mathsf{Y}(x)^{2}\left.\partial_{\log y}\mathsf{S}_{2}(y)\right|_{y=q_{2}x}:
\eea
where we used Thm.~\ref{thm:rat-delta-deriv}. Indeed the result matches with the previous computation.

Since there are no higher order poles for the commutation coming from $:\mathsf{Y}(x)^{2}\mathsf{A}(x)^{-1}:$, we simply obtain
\bea
\relax[:\mathsf{Y}(x)^{2}\mathsf{A}(x)^{-1}:, \mathsf{S}_{2}(x')]&=\delta\left(\frac{x'}{x}\right)\frac{(1-q_{1})(1-q_{12}^{-1})}{(1-q_{2}^{-1})}:\mathsf{Y}(x)^{2}\mathsf{A}(x)^{-1} \mathsf{S}_{2}(x):\\
&-\delta\left(\frac{x'}{q_{2}x}\right)\frac{(1-q_{12}^{-1})(1-q_{1})}{(1-q_{2}^{-1})}:\mathsf{Y}(x)^{2}\mathsf{A}(x)^{-1} \mathsf{S}_{2}(q_{2}x):.
\eea

The OPE between $:\mathsf{Y}(x)^{2}\partial_{\log x}\mathsf{A}(x)^{-1}:$ and $\mathsf{S}_{2}(x')$ is simplified as
\bea
:\mathsf{Y}(x)^{2}\partial_{\log x}\mathsf{A}(x)^{-1}:\mathsf{S}_{2}(x')&=\frac{(1-q_{1}^{-1})xx'(q_{2}x^{2}+q_{1}q_{2}^{2}x^{2}-2q_{2}xx'-2q_{12}xx'+{x'}^{2}+q_{12}{x'}^{2})}{(x-x')^{2}(q_{2}x-x')^{2}}\\
&\qquad\times :\mathsf{Y}(x)^{2}\partial_{\log x}\mathsf{A}(x)^{-1}\mathsf{S}_{2}(x'):\\
&\simeq \frac{(1-q_{1}^{-1})(1-q_{12})}{(1-q_{2})}\frac{:\mathsf{Y}(x)^{2}\partial_{\log x}\mathsf{A}(x)^{-1}\mathsf{S}_{2}(x):}{(1-x'/x)^{2}}\\
&-\frac{(1-q_{1}^{-1})(1-3q_{2}+q_{12}+q_{1}q_{2}^{2})}{(1-q_{2})^{2}}\frac{:\mathsf{Y}(x)^{2}\partial_{\log x}\mathsf{A}(x)^{-1}\mathsf{S}_{2}(x):}{1-x'/x}\\
&-\frac{(1-q_{1}^{-1})(1-q_{12})}{(1-q_{2})}\frac{:\mathsf{Y}(x)^{2}\partial_{\log x}\mathsf{A}(x)^{-1}\partial_{\log x}\mathsf{S}_{2}(x):}{(1-x'/x)}
\eea
and
\bea
:\mathsf{Y}(x)^{2}\partial_{\log x}\mathsf{A}(x)^{-1}:\mathsf{S}_{2}(x')
&\simeq \frac{(1-q_{1}^{-1})(1-q_{12})}{(1-q_{2})}\frac{:\mathsf{Y}(x)^{2}\partial_{\log x}\mathsf{A}(x)^{-1}\mathsf{S}_{2}(q_{2}x):}{(1-x'/q_{2}x)^{2}}\\
&-\frac{(1-q_{1}^{-1})(1+q_{2}-3q_{12}+q_{1}q_{2}^{2})}{(1-q_{2})^{2}}\frac{:\mathsf{Y}(x)^{2}\partial_{\log x}\mathsf{A}(x)^{-1}\mathsf{S}_{2}(q_{2}x):}{(1-x'/q_{2}x)}\\
&-\frac{(1-q_{1}^{-1})(1-q_{12})}{(1-q_{2})}\frac{:\mathsf{Y}(x)^{2}\partial_{\log x}\mathsf{A}(x)^{-1}\left.\partial_{\log y}\mathsf{S}_{2}(y)\right|_{y=q_{2}x}:}{(1-x'/q_{2}x)}
\eea
where we expanded around $x'=x$ and $x'=q_{2}x$, respectively. We thus have
\bea
\relax&[:\mathsf{Y}(x)^{2}\partial_{\log x}\mathsf{A}(x)^{-1}:,\mathsf{S}_{2}(x')]\\
=&\frac{(1-q_{1}^{-1})(1-q_{12})}{(1-q_{2})}\left(:\mathsf{Y}(x)^{2}\partial_{\log x}\mathsf{A}(x)^{-1}\mathsf{S}_{2}(x):\delta^{(1)}\left(\frac{x'}{x}\right)+:\mathsf{Y}(x)^{2}\partial_{\log x}\mathsf{A}(x)^{-1}\mathsf{S}_{2}(q_{2}x):\delta^{(1)}\left(\frac{x'}{q_{2}x}\right)\right)\\
&-\frac{(1-q_{1}^{-1})}{(1-q_{2})^{2}}\left((1-3q_{2}+q_{12}+q_{1}q_{2}^{2}):\mathsf{Y}(x)^{2}\partial_{\log x}\mathsf{A}(x)^{-1}\mathsf{S}_{2}(x):\delta\left(\frac{x'}{x}\right)\right.\\
&\left.\qquad +(1+q_{2}-3q_{12}+q_{1}q_{2}^{2}):\mathsf{Y}(x)^{2}\partial_{\log x}\mathsf{A}(x)^{-1}\mathsf{S}_{2}(q_{2}x):\delta\left(\frac{x'}{q_{2}x}\right) \right)\\
&-\frac{(1-q_{1}^{-1})(1-q_{12})}{(1-q_{2})}\left(:\mathsf{Y}(x)^{2}\partial_{\log x}\mathsf{A}(x)^{-1}\partial_{\log x}\mathsf{S}_{2}(x):\delta\left(\frac{x'}{x}\right)\right.\\
&\left.\qquad +:\mathsf{Y}(x)^{2}\partial_{\log x}\mathsf{A}(x)^{-1}\left.\partial_{\log y}\mathsf{S}_{2}(y)\right|_{y=q_{2}x}:\delta\left(\frac{x'}{q_{2}x}\right)\right).
\eea

Similarly, we have
\bea
:\mathsf{Y}(x)^{2}\mathsf{A}(x)^{-2}:\mathsf{S}_{2}(x')&\simeq \frac{(1-q_{1})^{2}}{(1-x'/x)^{2}}:\mathsf{Y}(x)^{2}\mathsf{A}(x)^{-2}\mathsf{S}_{2}(x):+\frac{2q_{1}(1-q_{1})}{1-x'/x}:\mathsf{Y}(x)^{2}\mathsf{A}(x)^{-2}\mathsf{S}_{2}(x):\\
&\quad -\frac{(1-q_{1})^{2}}{1-x'/x}:\mathsf{Y}(x)^{2}\mathsf{A}(x)^{-2}\partial_{\log x}\mathsf{S}_{2}(x):
\eea
and
\bea
\relax[:\mathsf{Y}(x)^{2}\mathsf{A}(x)^{-2}:,\mathsf{S}_{2}(x')]&=(1-q_{1})^{2}\delta^{(1)}\left(\frac{x'}{x}\right):\mathsf{Y}(x)^{2}\mathsf{A}(x)^{-2}\mathsf{S}_{2}(x):+2q_{1}(1-q_{1})\delta\left(\frac{x'}{x}\right):\mathsf{Y}(x)^{2}\mathsf{A}(x)^{-2}\mathsf{S}_{2}(x):\\
&-(1-q_{1})^{2}\delta\left(\frac{x'}{x}\right):\mathsf{Y}(x)^{2}\mathsf{A}(x)^{-2}\partial_{\log x}\mathsf{S}_{2}(x):.
\eea

\subsection{Quantum affine algebra}\label{sec:quantum-affine-qq-character}
\subsubsection{Spin $1/2$ representation and $A_{1}$ $qq$-character}
In this section, we review the relation of the $A_{1}$ $qq$-character and the quantum affine $\mathcal{U}_{\sfq}(\widehat{\mathfrak{sl}}_{2})$ algebra. See \cite{Kimura:2020jxl} for a review. We follow the expression for the quantum affine algebra in~\cite{Drinfeld:1986in,Ding-Frenkel-quantumaffine}.
\paragraph{Drinfeld presentation and vertical representation} Let $E(z),F(z),K^{\pm}(z)$ be the Drinfeld currents of the quantum affine $\mathcal{U}_{\sfq}(\widehat{\mathfrak{sl}}_{2})$ algebra:
\bea
E(z)=\sum_{m\in\mathbb{Z}}E_{m}z^{-m},\quad F(z)=\sum_{m\in\mathbb{Z}}F_{m}z^{-m},\quad K^{\pm}(z)=\sum_{r\geq 0}K^{\pm }_{\pm r}z^{\mp r}.
\eea
We use the same notation for the Drinfeld currents as those of the quantum toroidal algebra in section~\ref{sec:QTgl1} because the structure function is only the part that differs. The defining relations are
\bea
E(z)E(w)=G(z/w)E(w)E(z),&\quad F(z)F(w)=G(z/w)^{-1}F(w)F(z),\\
K^{\pm}(z)K^{\pm}(w)=K^{\pm}(w)K^{\pm}(z),&\quad K^{-}(z)K^{+}(w)=K^{+}(w)K^{-}(z),\\
K^{\pm}(z)E(w)=G(z/w)E(w)K^{\pm}(z),&\quad K^{\pm}(z)F(w)=G(z/w)^{-1}F(w)K^{\pm}(z),\\
[E(z),F(w)]&=\frac{1}{\sfq^{1/2}-\sfq^{-1/2}}\left(\delta\left(\frac{w}{z}\right)K^{+}(z)-\delta\left(\frac{z}{w}\right)K^{-}(w)\right)
\eea
where the structure function is defined as
\bea
G(z)=\sfq^{-1}\frac{1-\sfq z}{1-\sfq^{-1}z}
\eea
and additionally we have a central element $C$, but we kept it $C=1$. Namely, we are focusing on the vertical representations of the quantum affine algebra. 

The coproduct structure is 
\bea\label{eq:qaffine-coproduct}
\Delta (E(z))&=E(z)\otimes 1+K^{-}(z)\otimes E(z),\\
\Delta (F(z))&=F(z)\otimes K^{+}(z)+1\otimes F(z),\\
\Delta (K^{\pm}(z))&=K^{\pm}(z)\otimes K^{\pm}(z),
\eea
where we kept the central charge to be $C=1$ again.

Note also that the structure function obeys the condition
\bea
G(z)G(z^{-1})=1.
\eea
The relation with the structure function of the quantum toroidal $\mathfrak{gl}_{1}$ is 
\bea
\mathsf{g}(z)=\prod_{i=1}^{3}\frac{(1-\mathsf{q}_{i}z)}{(1-\mathsf{q}_{i}^{-1}z)}\xrightarrow {\sfq_{2}\rightarrow 0,\,\, \sfq_{3}\rightarrow \infty}\sfq_{1}^{-1}\frac{1-\sfq_{1}z}{1-\sfq_{1}^{-1}z}=G(z)
\eea
where we kept $\sfq_{2}\sfq_{3}=\sfq_{1}^{-1}$ fixed while taking the limit and identified $\sfq_{1}=\sfq$.

The spin $1/2$ representation is obtained by imposing the following highest weight:
\bea
K^{\pm}(z)\ket{0}=\left[\Psi_{\varnothing,u}(z)\right]_{\pm}\ket{0},\quad \Psi_{\varnothing,u}(z)=\sfq^{-1/2}  \frac{1-\sfq u/z}{1-u/z}.
\eea
Since we have a pole at $z=u$, we can add a box at the origin and assume
\bea
E(z)\ket{0}=\mathcal{E}\delta\left(\frac{u}{z}\right)\ket{\Bbox},
\eea
where $\mathcal{E}$ is some coefficient. Using the $EK$-relation, we have
\bea
K^{\pm}(z)\ket{\Bbox}&=\left[\Psi_{\varnothing,u}(z)G\left(\frac{u}{z}\right)^{-1}\right]_{\pm}\ket{\Bbox}\\
&=\left[\sfq^{1/2}\frac{1-\sfq^{-1}u/z}{1-u/z}\right]_{\pm}\ket{\Bbox}.
\eea
We only have a removable pole and no more addable poles and so the growth of the crystal stops here, giving a two-dimensional representation space:
\bea
E(z)\ket{\Bbox}=0.
\eea

The action of $F(z)$ is determined as
\bea
F(z)\ket{\Bbox}&=\mathcal{F}\delta\left(\frac{u}{z}\right)\ket{0},\quad F(z)\ket{0}=0.
\eea
Using the $EF$ relation, we obtain the condition
\bea
\mathcal{E}\mathcal{F}=1.
\eea
We can simply impose $\mathcal{E}=\mathcal{F}=1$.

\paragraph{Relation with the $A_{1}$ $qq$-character}
The monomial terms of the fundamental $qq$-character of $A_{1}$ theory has a one to one correspondence with the bases of the spin $1/2$ representation of the quantum affine algebra. The $A_{1} $ $qq$-character has two terms
\bea
\mathsf{Y}(x),\quad :\mathsf{Y}(x)\mathsf{A}(x)^{-1}:.
\eea
The first term corresponds to the vacuum configuration while the second term corresponds to the configuration with one box at the coordinate $x$. To relate with the representation above, we need to replace $u\rightarrow x$. 

Let us see the relation with the screening charge
\bea
\bra{0}\mathsf{S}_{2}(q_{2}x')\mathsf{Y}(x)\ket{0}=q_{1}^{-1}\frac{1-q_{1}x/x'}{1-x/x'} &\xrightarrow{q_{2}\rightarrow 1} \sfq^{-1}\frac{1-\sfq x/x'}{1-x/x'}\\
\bra{0}\mathsf{S}_{2}(q_{2}x'):\mathsf{Y}(x)\mathsf{A}^{-1}(x):\ket{0}=q_{1}^{+1}\frac{(1-q_{12}^{-1}x/x')}{(1-q_{2}^{-1}x/x')}&\xrightarrow{q_{2}\rightarrow 1} \sfq\frac{1-\sfq^{-1}x/x'}{1-x/x'}.
\eea
After taking the NS limit $q_{2}\rightarrow 1$, the factors coincide with the Cartan eigenvalues $\Psi_{\varnothing,x}(x')$ and $\Psi_{\varnothing,x}(x')G(x'/x)^{-1}$ of $K^{\pm}(z)$ up to trivial zero-modes:
\bea
\bra{0}\mathsf{S}_{2}(q_{2}x')\mathsf{T}(x)\ket{0}\xrightarrow{q_{2}\rightarrow 1} \sfq^{-1/2}\left(\Psi_{\varnothing,x}(x')+\sfq \Psi_{\varnothing,x}(x')G\left(\frac{x}{x'}\right)^{-1}\right)
\eea

This is nothing but the fact that after taking the NS limit, the $A_{1}$ $qq$-character becomes the $q$-character of the spin $1/2$ representation of the quantum affine $\mathfrak{sl}_{2}$ algebra.

\subsubsection{Bootstrapping the module} \label{app-sec:A1collision-module-bootstrap}
We set the highest weight to be
\bea
\widetilde{\Psi}_{\varnothing,u}(z)=\Psi_{\varnothing,u}(z)^{2}=\sfq^{-1}\left(\frac{1-\sfq u/z}{1-u/z}\right)^{2}.
\eea
Let us construct a module with this highest weight 
following the bootstrap method of \cite{Gaiotto:2020dsq}. We want to construct a module with the following transition rules
\bea
\begin{tikzpicture}
\node at (-1.5,0){0};
\node at (0,0) {$\ket{0}$};    
\node at (1.5,0.5){$\ket{\Bbox_L}$};
\node at (1.5,-0.5){$\ket{\Bbox_H}$};
\node at (3,0){$\ket{\Bbox_{U}}$};
\node at (4.5,0){0};
\draw[->] (-1.3,0)--(-.3,0);
\draw[->] (0.2,0.1)--(1,0.5);
\draw[->] (0.2,-0.1)--(1,-0.5);
\draw[->] (1.8,0.5)--(2.5,0.1);
\draw[->] (1.8,-0.5)--(2.5,-0.1);
\draw[->] (3.4,0)--(4.3,0);
\end{tikzpicture}
\eea
where the arrows indicate the action of the operator $E(z)$. The action of the $F(z)$ is the opposite but we omit it.

\paragraph{Vacuum $\leftrightarrow $ Level one}
We first start from the following ansatz:
\bea
E(z)\ket{0}=E(\varnothing\rightarrow \Bbox_{L})\tilde{\delta}^{(1)}\left(\frac{z}{u}\right)\ket{\Bbox_{L}}+E(\varnothing\rightarrow \Bbox_{H})\delta\left(\frac{z}{u}\right)\ket{\Bbox_{H}}.
\eea
Let us then use the $KE$ relation on the vacuum. We first have
\bea
K^{\pm}(z)E(w)\ket{0}&=E(\varnothing\rightarrow \Bbox_{L})\tilde{\delta}^{(1)}\left(\frac{w}{u}\right)K^{\pm}(z)\ket{\Bbox_{L}}+E(\varnothing\rightarrow \Bbox_{H})\delta\left(\frac{w}{u}\right)K^{\pm}(z)\ket{\Bbox_{H}}.
\eea
We also have
\bea
G\left(\frac{w}{z}\right)^{-1}E(w)K^{\pm}(z)\ket{0}&=G\left(\frac{w}{z}\right)^{-1}\widetilde{\Psi}_{\varnothing,u}(z)\left(E(\varnothing\rightarrow \Bbox_{L})\tilde{\delta}^{(1)}\left(\frac{w}{u}\right)\ket{\Bbox_{L}}+E(\varnothing\rightarrow \Bbox_{H})\delta\left(\frac{w}{u}\right)\ket{\Bbox_{H}}\right)\\
&=E(\varnothing\rightarrow \Bbox_{L})G\left(\frac{u}{z}\right)^{-1}\widetilde{\Psi}_{\varnothing,u}(z)\tilde{\delta}^{(1)}\left(\frac{w}{u}\right)\ket{\Bbox_{L}}\\
&-E(\varnothing\rightarrow \Bbox_{L})\left.\partial_{\log w}G\left(\frac{w}{z}\right)^{-1}\right|_{w=u}\widetilde{\Psi}_{\varnothing,u}(z)\delta\left(\frac{w}{u}\right)\ket{\Bbox_{L}}\\
&+E(\varnothing\rightarrow \Bbox_{H})\widetilde{\Psi}_{\varnothing,u}(z)G\left(\frac{u}{z}\right)^{-1}\delta\left(\frac{w}{u}\right)\ket{\Bbox_{H}}
\eea
where we used
\bea
\,&G\left(\frac{w}{z}\right)^{-1}=G\left(\frac{u}{z}\right)^{-1}+\left.\partial_{w}G\left(\frac{w}{z}\right)^{-1}\right|_{w=u}(w-u)+\ldots,\\
&(w-u)\delta^{(1)}\left(\frac{w}{u}\right)=-u\delta\left(\frac{w}{u}\right).
\eea
Comparing both hand sides gives
\bea
K^{\pm}(z)\ket{\Bbox_{L}}&=\left[\widetilde{\Psi}_{\varnothing,u}(z)G\left(\frac{u}{z}\right)^{-1}\right]_{\pm}\ket{\Bbox_{L}},\\
K^{\pm}(z)\ket{\Bbox_{H}}&=\left[\widetilde{\Psi}_{\varnothing,u}(z)G\left(\frac{u}{z}\right)^{-1}\right]_{\pm}\ket{\Bbox_{H}}-\frac{E(\varnothing\rightarrow \Bbox_{L})}{E(\varnothing\rightarrow \Bbox_{H})}\left[\widetilde{\Psi}_{\varnothing,u}(z)u\partial_{u}G\left(\frac{u}{z}\right)^{-1}\right]_{\pm}\ket{\Bbox_{L}}.
\eea
For later use, we will denote the derivatives with the log variables as
\bea
\tilde{\partial} f(x)=x\partial_{x}f(x)=\partial_{\log x}f(x).
\eea

Let us then move on to the action of $F(z)$. We impose the following ansatz
\bea
F(z)\ket{\Bbox_{L}}&=F(\Bbox_{L}\rightarrow \varnothing)\delta\left(\frac{z}{u}\right)\ket{0}\\
F(z)\ket{\Bbox_{H}}&=\left(F^{(2)}(\Bbox_{H}\rightarrow \varnothing)\tilde{\delta}^{(1)}\left(\frac{z}{u}\right)+F^{(1)}(\Bbox_{H}\rightarrow \varnothing)\delta\left(\frac{z}{u}\right)\right)\ket{0}.
\eea

The $KF$ relation on $\ket{\Bbox_L},\ket{\Bbox_H}$ give
\bea
K^{\pm}(z)F(w)\ket{\Bbox_L}&=F(\Bbox_L\rightarrow \varnothing)\widetilde{\Psi}_{\varnothing,u}(z)\delta\left(\frac{w}{u}\right)\ket{0},\\
G\left(\frac{w}{z}\right)F(w)K^{\pm}(z)\ket{\Bbox_L}&=F(\Bbox_L\rightarrow \varnothing)G\left(\frac{w}{z}\right)\widetilde{\Psi}_{\varnothing,u}(z)G\left(\frac{u}{z}\right)^{-1}\delta\left(\frac{w}{u}\right)\ket{0}\\
&=F(\Bbox_L\rightarrow \varnothing)\widetilde{\Psi}_{\varnothing,u}(z)\delta\left(\frac{w}{u}\right)\ket{0}
\eea
and 
\bea
K^{\pm}(z)F(w)\ket{\Bbox_H}&=\widetilde{\Psi}_{\varnothing,u}(z)\left(F^{(2)}(\Bbox_H\rightarrow \varnothing)\tilde{\delta}^{(1)}\left(\frac{w}{u}\right)+F^{(1)}(\Bbox_H\rightarrow \varnothing)\delta\left(\frac{w}{u}\right)\right)\ket{0},\\
G\left(\frac{w}{z}\right)F(w)K^{\pm}(z)\ket{\Bbox_H}&=G\left(\frac{w}{z}\right)F(w)\left(\widetilde{\Psi}_{\varnothing,u}(z)G\left(\frac{u}{z}\right)^{-1}\ket{\Bbox_{H}}\right.\\
&\left.-\frac{E(\varnothing\rightarrow \Bbox_{L})}{E(\varnothing\rightarrow \Bbox_{H})}\widetilde{\Psi}_{\varnothing,u}(z)u\partial_{u}G\left(\frac{u}{z}\right)^{-1}\ket{\Bbox_{L}}\right)\\
&=\left( F^{(2)}(\Bbox_H\rightarrow \varnothing)\widetilde{\Psi}_{\varnothing,u}(z)G\left(\frac{u}{z}\right)^{-1}G\left(\frac{w}{z}\right)\tilde{\delta}^{(1)}\left(\frac{w}{u}\right)\right.\\
&+F^{(1)}(\Bbox_H\rightarrow \varnothing)\widetilde{\Psi}_{\varnothing,u}(z)\delta\left(\frac{w}{u}\right)\\
&\left.-F(\Bbox_L\rightarrow \varnothing)\frac{E(\varnothing\rightarrow \Bbox_{L})}{E(\varnothing\rightarrow \Bbox_{H})}\widetilde{\Psi}_{\varnothing,u}(z)G\left(\frac{u}{z}\right)\tilde{\partial}G\left(\frac{u}{z}\right)^{-1}\delta\left(\frac{w}{u}\right) \right)\ket{0}.
\eea
Using
\bea\label{eq:Gfunctdeltaderiv}
G\left(\frac{w}{z}\right)^{-1}\tilde{\delta}^{(1)}\left(\frac{w}{u}\right)=G\left(\frac{u}{z}\right)^{-1}\tilde{\delta}^{(1)}\left(\frac{w}{u}\right)-\tilde{\partial}G\left(\frac{u}{z}\right)^{-1}\delta\left(\frac{w}{u}\right)
\eea
we obtain
\bea
    E(\varnothing\rightarrow \Bbox_H)F^{(2)}(\Bbox_H\rightarrow \varnothing) =E(\varnothing\rightarrow \Bbox_L)F(\Bbox_L\rightarrow \varnothing).
\eea

For the $EF$ relations on $\ket{0}$, we have
\bea
\,[E(z),F(w)]\ket{0}&=-E(\varnothing\rightarrow \Bbox_L)F(\Bbox_L\rightarrow \varnothing)\tilde{\delta}^{(1)}\left(\frac{z}{u}\right)\delta\left(\frac{w}{u}\right)\ket{0}\\
&-E(\varnothing\rightarrow \Bbox_H)F^{(2)}(\Bbox_H\rightarrow \varnothing)\delta\left(\frac{z}{u}\right)\tilde{\delta}^{(1)}\left(\frac{w}{u}\right)\ket{0}\\
&-E(\varnothing\rightarrow \Bbox_H)F^{(1)}(\Bbox_H\rightarrow \varnothing)\delta\left(\frac{z}{u}\right)\delta\left(\frac{w}{u}\right)\ket{0}
\eea
and
\bea
\frac{1}{[\sfq]}\delta\left(\frac{z}{w}\right)\left(K^{+}(z)-K^{-}(z)\right)\ket{0}&=\frac{1}{[\sfq]}\delta\left(\frac{z}{w}\right)\left([\tilde{\Psi}_{\varnothing,u}(z)]_{+}-[\tilde{\Psi}_{\varnothing,u}(z)]_{-}\right)\ket{0}\\
&=\frac{1}{[\sfq]}\delta\left(\frac{z}{w}\right)\left( A_{0}\tilde{\delta}^{(1)}\left(\frac{z}{u}\right)+A_{1}\delta\left(\frac{z}{u}\right)\right)\ket{0}
\eea
where
\bea
A_{0}=\underset{z=u}{\Res}z^{-1}\left(1-\frac{z}{u}\right)\tilde{\Psi}_{\varnothing,u}(z),\quad A_{1}=\underset{z=u}{\Res}z^{-1}\tilde{\Psi}_{\varnothing,u}(z).
\eea
Thus,
\bea
E(\varnothing\rightarrow \Bbox_L)F(\Bbox_L\rightarrow \varnothing)&=-\frac{A_{0}}{[\sfq]},\\
E(\varnothing\rightarrow \Bbox_H)F^{(2)}(\Bbox_H\rightarrow \varnothing)&=-\frac{A_{0}}{[\sfq]}
E(\varnothing\rightarrow \Bbox_H)F^{(1)}(\Bbox_H\rightarrow \varnothing)&=-\frac{A_{1}}{[\sfq]}.
\eea
Note that the constraint obtained from the $KF$ relation is automatically satisfied under these conditions.

\paragraph{Level one $\leftrightarrow $ Level two}We impose the ansatz
\bea
E(z)\ket{\Bbox_{L}}&=E(\Bbox_L\rightarrow \Bbox_U)\delta\left(\frac{z}{u}\right)\ket{\Bbox_U}\\
E(z)\ket{\Bbox_H}&=\left(E^{(2)}(\Bbox_H\rightarrow \Bbox_U)\tilde{\delta}^{(1)}\left(\frac{z}{u}\right)+E^{(1)}(\Bbox_H\rightarrow \Bbox_U)\delta\left(\frac{z}{u}\right)\right)\ket{\Bbox_U}.
\eea

Let us start from the $EE$-relation:
\bea
\sfq^{-1}\left(1-\sfq w/z\right)E(z)E(w)=(1-\sfq^{-1} w/z)E(w)E(z).
\eea
The left hand side on the vacuum gives 
\bea
&\sfq^{-1}\left(1-\sfq w/z\right)\left(E(\varnothing\rightarrow \Bbox_L)E(\Bbox_L\rightarrow \Bbox_U)\delta\left(\frac{z}{u}\right)\tilde{\delta}^{(1)}\left(\frac{w}{u}\right)\right.+E(\varnothing\rightarrow \Bbox_H)E^{(2)}(\Bbox_H\rightarrow \Bbox_U)\delta\left(\frac{w}{u}\right)\tilde{\delta}^{(1)}\left(\frac{z}{u}\right)\\
&\left.+E(\varnothing\rightarrow \Bbox_H)E^{(1)}(\Bbox_H\rightarrow \Bbox_U)\delta\left(\frac{z}{u}\right)\delta\left(\frac{w}{u}\right)\right)\ket{\Bbox_U}\\
=&\left(-(1-\sfq^{-1})E(\varnothing\rightarrow \Bbox_L)E(\Bbox_L\rightarrow \Bbox_U)\delta\left(\frac{z}{u}\right)\tilde{\delta}^{(1)}\left(\frac{w}{u}\right)\right.+E(\varnothing\rightarrow \Bbox_L)E(\Bbox_L\rightarrow \Bbox_U)\delta\left(\frac{z}{u}\right)\delta\left(\frac{w}{u}\right)\\
&-(1-\sfq^{-1})E(\varnothing\rightarrow \Bbox_H)E^{(2)}(\Bbox_H\rightarrow \Bbox_U)\delta\left(\frac{w}{u}\right)\tilde{\delta}^{(1)}\left(\frac{z}{u}\right)-E(\varnothing\rightarrow \Bbox_H)E^{(2)}(\Bbox_H\rightarrow \Bbox_U)\delta\left(\frac{w}{u}\right)\delta\left(\frac{z}{u}\right)\\
&\left.-(1-\sfq^{-1})E(\varnothing\rightarrow \Bbox_H)E^{(1)}(\Bbox_H\rightarrow \Bbox_U)\delta\left(\frac{w}{u}\right)\delta\left(\frac{z}{u}\right)\right)\ket{\Bbox_U}
\eea
and the right hand side gives
\bea
&\left(1-\sfq^{-1}w/z\right)\left(E(\varnothing\rightarrow \Bbox_L)E(\Bbox_L\rightarrow \Bbox_U)\delta\left(\frac{w}{u}\right)\tilde{\delta}^{(1)}\left(\frac{z}{u}\right)\right.+E(\varnothing\rightarrow \Bbox_H)E^{(2)}(\Bbox_H\rightarrow \Bbox_U)\delta\left(\frac{z}{u}\right)\tilde{\delta}^{(1)}\left(\frac{w}{u}\right)\\
&\left.+E(\varnothing\rightarrow \Bbox_H)E^{(1)}(\Bbox_H\rightarrow \Bbox_U)\delta\left(\frac{z}{u}\right)\delta\left(\frac{w}{u}\right)\right)\ket{\Bbox_U}\\
=&\left(  (1-\sfq^{-1}) E(\varnothing\rightarrow \Bbox_L)E(\Bbox_L\rightarrow \Bbox_U)\delta\left(\frac{w}{u}\right)\tilde{\delta}^{(1)}\left(\frac{z}{u}\right)-\sfq^{-1} E(\varnothing\rightarrow \Bbox_L)E(\Bbox_L\rightarrow \Bbox_U)\delta\left(\frac{u}{z}\right)\left(\frac{w}{u}\right) \right.\\
&+(1-\sfq^{-1}) E(\varnothing\rightarrow \Bbox_H)E^{(2)}(\Bbox_H\rightarrow \Bbox_U)\delta\left(\frac{z}{u}\right)\tilde{\delta}^{(1)}\left(\frac{w}{u}\right)+\sfq^{-1}E(\varnothing\rightarrow \Bbox_H)E^{(2)}(\Bbox_H\rightarrow \Bbox_U)\delta\left(\frac{z}{u}\right)\delta\left(\frac{w}{u}\right)\\
&\left.+(1-\sfq^{-1})E(\varnothing\rightarrow \Bbox_H)E^{(1)}(\Bbox_H\rightarrow \Bbox_U)\delta\left(\frac{z}{u}\right)\delta\left(\frac{w}{u}\right)\right)\ket{\Bbox_U}
\eea
where during the computation we used Prop.~\ref{prop:derivedeltafunct-rational}. Comparing both hand sides, we have
\bea
E(\varnothing\rightarrow \Bbox_H)E^{(2)}(\Bbox_H\rightarrow \Bbox_U)&=-E(\varnothing\rightarrow \Bbox_L)E(\Bbox_L\rightarrow \Bbox_U)\\
E^{(1)}(\Bbox_H\rightarrow \Bbox_U)&=-\frac{1+\sfq^{-1}}{1-\sfq^{-1}}E^{(2)}(\Bbox_H\rightarrow \Bbox_U).
\eea

Let us study the $KE$ relation on $\ket{\Bbox_L},\ket{\Bbox_H}$. The action on $\ket{\Bbox_L}$ is
\bea
K^{\pm}(z)E(w)\ket{\Bbox_L}&=E(\Bbox_L\rightarrow \Bbox_U)\delta\left(\frac{w}{u}\right)K^{\pm}(z)\ket{\Bbox_U},\\
G\left(\frac{w}{z}\right)^{-1}E(w)K^{\pm}(z)\ket{\Bbox_L}&=E(\Bbox_L\rightarrow \Bbox_U)\delta\left(\frac{w}{u}\right)\tilde{\Psi}_{\varnothing,u}(z)G\left(\frac{u}{z}\right)^{-2}\ket{\Bbox_U}
\eea
which gives
\bea\label{eq:A1quiver-ultra-Cartan}
K^{\pm}(z)\ket{\Bbox_U}=\left[\tilde{\Psi}_{\varnothing,u}(z)G\left(\frac{u}{z}\right)^{-2}\right]_{\pm}\ket{\Bbox_U}.
\eea
The action on $\ket{\Bbox_H}$ gives
\bea
K^{\pm}(z)E(w)\ket{\Bbox_H}&=\left(E^{(2)}(\Bbox_H\rightarrow \Bbox_U)\tilde{\delta}^{(1)}\left(\frac{w}{u}\right)+E^{(1)}(\Bbox_H\rightarrow \Bbox_U)\delta\left(\frac{w}{u}\right)\right)K^{\pm}(z)\ket{\Bbox_U},\\
G\left(\frac{w}{z}\right)^{-1}E(w)K^{\pm}(z)\ket{\Bbox_H}&=\left( E^{(2)}(\Bbox_H\rightarrow \Bbox_U) \tilde{\Psi}_{\varnothing,u}(z)G\left(\frac{u}{z}\right)^{-1}G\left(\frac{
w}{z}\right)^{-1}\tilde{\delta}^{(1)}\left(\frac{w}{u}\right)\right.\\
&+E^{(1)}(\Bbox_H\rightarrow \Bbox_U) \tilde{\Psi}_{\varnothing,u}(z)G\left(\frac{u}{z}\right)^{-2}  \delta\left(\frac{w}{u}\right)\\
&\left.-\frac{E(\varnothing\rightarrow \Bbox_L)E(\Bbox_L\rightarrow \Bbox_U)}{E(\varnothing\rightarrow \Bbox_H)} \tilde{\Psi}_{\varnothing,u}(z)G\left(\frac{u}{z}\right)^{-1}\tilde{\partial}G\left(\frac{u}{z}\right)^{-1} \delta\left(\frac{w}{u}\right)   \right)\ket{\Bbox_U}.
\eea
Using \eqref{eq:Gfunctdeltaderiv} and \eqref{eq:A1quiver-ultra-Cartan} gives the constraint
\bea
E^{(2)}(\Bbox_H\rightarrow \Bbox_U)\frac{E(\varnothing\rightarrow \Bbox_H)}{E(\varnothing\rightarrow \Bbox_L)}=-E(\Bbox_L\rightarrow \Bbox_U).
\eea

For the $F(z)$ action, we can impose the following ansatz
\bea
F(z)\ket{\Bbox_U}&=F(\Bbox_U\rightarrow \Bbox_H)\delta\left(\frac{z}{u}\right)\ket{\Bbox_H}\\
&+\left(F^{(2)}(\Bbox_U\rightarrow \Bbox_L)\tilde{\delta}^{(1)}\left(\frac{z}{u}\right)+F^{(1)}(\Bbox_U\rightarrow \Bbox_L)\delta\left(\frac{z}{u}\right)\right)\ket{\Bbox_L}.
\eea
Further constraints will be imposed after studying the $FF$ relation and $KF$ relations but we omit the discussion.

The action of the Drinfeld currents are summarized as
\begin{empheq}[box=\fbox]{align}
\begin{split}
E(z)\ket{0}&=E(\varnothing\rightarrow \Bbox_{L})\tilde{\delta}^{(1)}\left(\frac{z}{u}\right)\ket{\Bbox_{L}}+E(\varnothing\rightarrow \Bbox_{H})\delta\left(\frac{z}{u}\right)\ket{\Bbox_{H}},\\
E(z)\ket{\Bbox_{L}}&=E(\Bbox_L\rightarrow \Bbox_U)\delta\left(\frac{z}{u}\right)\ket{\Bbox_U},\\
E(z)\ket{\Bbox_H}&=\left(E^{(2)}(\Bbox_H\rightarrow \Bbox_U)\tilde{\delta}^{(1)}\left(\frac{z}{u}\right)+E^{(1)}(\Bbox_H\rightarrow \Bbox_U)\delta\left(\frac{z}{u}\right)\right)\ket{\Bbox_U},\\
E(z)\ket{\Bbox_U}&=0,\\
F(z)\ket{0}&=0,\\
F(z)\ket{\Bbox_{L}}&=F(\Bbox_{L}\rightarrow \varnothing)\delta\left(\frac{z}{u}\right)\ket{0},\\
F(z)\ket{\Bbox_{H}}&=\left(F^{(2)}(\Bbox_{H}\rightarrow \varnothing)\tilde{\delta}^{(1)}\left(\frac{z}{u}\right)+F^{(1)}(\Bbox_{H}\rightarrow \varnothing)\delta\left(\frac{z}{u}\right)\right)\ket{0},\\
F(z)\ket{\Bbox_U}&=F(\Bbox_U\rightarrow \Bbox_H)\delta\left(\frac{z}{u}\right)\ket{\Bbox_H}\\
&+\left(F^{(2)}(\Bbox_U\rightarrow \Bbox_L)\tilde{\delta}^{(1)}\left(\frac{z}{u}\right)+F^{(1)}(\Bbox_U\rightarrow \Bbox_L)\delta\left(\frac{z}{u}\right)\right)\ket{\Bbox_L},\\
K^{\pm}(z)\ket{0}&=\tilde{\Psi}_{\varnothing,u}(z)\ket{0},\\
K^{\pm}(z)\ket{\Bbox_{L}}&=\left[\widetilde{\Psi}_{\varnothing,u}(z)G\left(\frac{u}{z}\right)^{-1}\right]_{\pm}\ket{\Bbox_{L}},\\
K^{\pm}(z)\ket{\Bbox_{H}}&=\left[\widetilde{\Psi}_{\varnothing,u}(z)G\left(\frac{u}{z}\right)^{-1}\right]_{\pm}\ket{\Bbox_{H}}-\frac{E(\varnothing\rightarrow \Bbox_{L})}{E(\varnothing\rightarrow \Bbox_{H})}\left[\widetilde{\Psi}_{\varnothing,u}(z)u\partial_{u}G\left(\frac{u}{z}\right)^{-1}\right]_{\pm}\ket{\Bbox_{L}},\\
K^{\pm}(z)\ket{\Bbox_U}&=\tilde{\Psi}_{\varnothing,u}(z)G\left(\frac{u}{z}\right)^{-2}\ket{\Bbox_U}.
\end{split}
\end{empheq}

\subsubsection{Relation with the $qq$-character}
The contraction of $\mathsf{T}^{(2)}(x)$ and the screening current $\mathsf{S}_{2}(q_{2}x')$ give 
\bea
\bra{0}\mathsf{S}_{2}(q_{2}x')\mathsf{Y}(x)^{2}\ket{0}&\xrightarrow{q_{2}\rightarrow 1}\sfq^{-2}\left(\frac{1-\sfq x/x'}{1-x/x'}\right)^{2}=\sfq^{-1}\widetilde{\Psi}_{\varnothing,x}(x'),\\
\bra{0}\mathsf{S}_{2}(q_{2}x'):\mathsf{Y}(x)^{2}\mathsf{A}(x)^{-1}:\ket{0}&\xrightarrow{q_{2}\rightarrow 1}\frac{(1-\sfq^{-1}x/x')(1-\sfq x/x')}{(1-x/x')^{2}}=\widetilde{\Psi}_{\varnothing,x}(x')G\left(\frac{x}{x'}\right)^{-1},\\
\bra{0}\mathsf{S}_{2}(q_{2}x'):\mathsf{Y}(x)^{2}\partial_{\log x}\mathsf{A}(x)^{-1}\ket{0}&\xrightarrow{q_{2}\rightarrow1}\frac{(-1+\sfq^{2})xx'}{\sfq(x-x')^{2}}=\widetilde{\Psi}_{\varnothing,x}(x')x\partial_{x}G\left(\frac{x}{x'}\right)^{-1},\\
\bra{0}\mathsf{S}_{2}(q_{2}x'):\mathsf{Y}(x)^{2}\mathsf{A}(x)^{-2}:\ket{0}&\xrightarrow{q_{2}\rightarrow 1}\sfq^{2}\left(\frac{1-\sfq^{-1}x/x'}{1-x/x'}\right)^{2}=\sfq\widetilde{\Psi}_{\varnothing,x}(x')G\left(\frac{x}{x'}\right)^{-2}
\eea
Using
\bea
\mathfrak{c}(q_{1},q_{2})\xrightarrow{q_{2}\rightarrow 1} 2,\quad \frac{(1-q_{1})(1-q_{2})}{1-q_{12}}\xrightarrow{q_{2}\rightarrow 1}0
\eea
we obtain
\bea
\bra{0}\mathsf{S}_{2}(x')\mathsf{T}^{(2)}(x)\ket{0}\xrightarrow{q_{2}\rightarrow 1} \sfq^{-1}\left(\widetilde{\Psi}_{\varnothing,x}(x')+2\sfq\widetilde{\Psi}_{\varnothing,x}(x')G\left(\frac{x}{x'}\right)^{-1}+\sfq^{2}\widetilde{\Psi}_{\varnothing,x}(x')G\left(\frac{x}{x'}\right)^{-2}\right).
\eea
Namely, we obtain the $q$-character of this representation. The $q$-character structure is intuitive. The first term corresponds to the vacuum state. For the double state degeneracy coming from $\ket{\Bbox_{L,H}}$, the Drinfeld current $K^{\pm}(z)$ does not act diagonally but instead it forms a Jordan block. The eigenvalues of this two-state degeneracy comes from the diagonal components of it and it is the same $\widetilde{\Psi}_{\varnothing,x}(x')G\left(\frac{x}{x'}\right)^{-1}$. Because of this, there is a factor $2$ giving the multiplicity of this eigenspace. The last term is the contribution coming from the ultra-heavy box corresponding to place two boxes at the origin.

In other words, the $q$-character is understood as the trace over the underlying module:
\bea
\Tr\left(\sfq^{d} K^{\pm}(x')\right)
\eea
where $d$ is some degree counting operator, which we omit the discussion. One may generally set $\sfq^{d}$ to $p^{d}$, where $p$ is some generic parameter and it comes from the redefinition of the topological term in the $qq$-character.


We note that the $qq$-character contains much more information than the $q$-character because of the existence of the non-diagonal term. From the expressions \eqref{eq:A1qq-collision}, \eqref{eq:A1qq-collision-part}, \eqref{eq:A1qq-collision-part2}, we can interpret each monomial terms as
\bea
:\mathsf{Y}^{2}(x): &\longleftrightarrow K^{\pm}(z)\ket{0},\\
\frac{q_{1}+q_{2}-3q_{12}+q_{12}^{2}}{(1-q_{12})^{2}}:\frac{\mathsf{Y}^{2}(x)}{\mathsf{A}(x)}: &\longleftrightarrow K^{\pm}(z)\ket{\Bbox_{L}},\\
\frac{1-3q_{12}+q_{1}^{2}q_{2}+q_{1}q_{2}^{2}}{(1-q_{12})^{2}}:\frac{\mathsf{Y}^{2}(x)}{\mathsf{A}(x)}:+\frac{(1-q_{1})(1-q_{2})}{(1-q_{12})}:\mathsf{Y}(x)^{2}\partial_{\log x}\mathsf{A}^{-1}(x): &\longleftrightarrow  K^{\pm}(z)\ket{\Bbox_{H}},\\
:\frac{\mathsf{Y}^{2}(x)}{\mathsf{A}(x)^{2}}:&\longleftrightarrow K^{\pm}(z)\ket{\Bbox_{U}}.
\eea
We decomposed the term $\mathfrak{c}(q_{1,2})$ into the sum of two parts so that the correspondence between the light and heavy boxes can be seen explicitly. This identification is compatible with the one in~\eqref{eq:A1-partfunct-identify}.

\bibliographystyle{utphys}
\bibliography{Worigami}
\end{document}